\newtheorem{definition}{Definition}[section]
\newtheorem{theorem}{Theorem}[section]
\newtheorem{prop}[theorem]{Proposition}
\newtheorem{coro}[theorem]{Corollary}
\newtheorem{lemma}[theorem]{Lemma}
\newtheorem{remark}[theorem]{Remark}
\newcommand{\R}{\mathbb{R}}             
\newcommand{\N}{\mathbb{N}}             
\newcommand{\Z}{\mathbb{Z}}             %
\newcommand{\C}{\mathbb{C}}             
\renewcommand{\H}{\mathcal{H}}          
\newcommand{\D}{\mathcal{D}}            
\newcommand{\B}{\mathcal{B}}            
\newcommand{\M}{\mathcal{M}}            
\newcommand{\G}{\mathcal{G}}            
\renewcommand{\S}{\mathbb{S}}           
\renewcommand{\L}{\mathcal{L}}
\newcommand{\h}{\mathfrak{h}}
\newcommand{\ls}{\mathfrak{l}}
\newcommand{\e}{\epsilon}
\newcommand{\half}{\frac{1}{2}}
\newcommand{\Ga}{\Gamma^1}         
\newcommand{\Gb}{\Gamma^2}         
\newcommand{\Gc}{\Gamma^3}         
\renewcommand{\l}{\langle}
\renewcommand{\r}{\rangle}
\newcommand{\F}{\mathcal{F}}           
\newcommand{\DS}{\mathbb{D}_{\S^2}}    
\newcommand{\go}{\Gamma_0(\lambda)}    
\newcommand{\muk}{\mu_{kl}(\lambda)}
\newcommand{\fj}{f_j (X, \lambda,k, z)}
\newcommand{\fun}{f_1 (X, \lambda,k, z)}
\newcommand{\funp}{f_1^+ (X, \lambda,k, z)}
\newcommand{\gj}{g_j (X, \lambda,k, z)}
\newcommand{\alun}{a_{L1}(\lambda,k, z)}
\newcommand{\ald}{a_{L2}(\lambda,k, z)}
\newcommand{\alt}{a_{L3}(\lambda,k, z)}
\newcommand{\alq}{a_{L4}(\lambda,k, z)}
\newcommand{\ds}{\displaystyle}
\newcommand{\Section}[1]{\section{#1} \setcounter{equation}{0}}
\author{Thierry Daudé \footnote{Laboratoire AGM, Département de Mathématiques, UMR CNRS 8088, Université de Cergy-Pontoise, 2 Avenue Adolphe Chauvin, 95302 Cergy-Pontoise, FRANCE. Email Adress: thierry.daude@u-cergy.fr} \, and François Nicoleau
    \footnote{Laboratoire Jean Leray, UMR 6629, Université de Nantes, 2, rue de la Houssinière, BP
    92208, 44322 Nantes Cedex 03, FRANCE. Email adress: nicoleau@math.univ-nantes.fr \hfill \break
    Research  supported in part by the French National Research Project NOSEVOL, No. ANR- 2011 BS0101901 and by the French National Research Project AARG, No. ANR-12-BS01-012-01}}
\title{Direct and inverse scattering at fixed energy for massless charged Dirac fields by Kerr-Newman-de Sitter black holes}
\date{\today}
\begin{document}

\maketitle


\begin{abstract}
In this paper, we study the direct and inverse scattering theory at fixed energy for massless charged Dirac fields evolving in the exterior region of a Kerr-Newman-de Sitter black hole. In the first part, we establish the existence and asymptotic completeness of time-dependent wave operators associated to our Dirac fields. This leads to the definition of the time-dependent scattering operator that encodes the far-field behavior (with respect to a stationary observer) in the asymptotic regions of the black hole: the event and cosmological horizons. We also use the miraculous property (quoting Chandrasekhar) - that the Dirac equation can be separated into radial and angular ordinary differential equations - to make the link between the time-dependent scattering operator and its stationary counterpart. This leads to a nice expression of the scattering matrix at fixed energy in terms of stationary solutions of the system of separated equations. In a second part, we use this expression of the scattering matrix to study the uniqueness property in the associated inverse scattering problem at fixed energy. Using essentially the particular form of the angular equation (that can be solved explicitely by Frobenius method) and the Complex Angular Momentum technique on the radial equation, we are finally able to determine uniquely the metric of the black hole from the knowledge of the scattering matrix at a fixed energy.

\vspace{0.5cm}
\noindent \textit{Keywords}. Inverse Scattering, Black Holes, Dirac Equation. \\
\textit{2010 Mathematics Subject Classification}. Primaries 81U40, 35P25; Secondary 58J50.
\end{abstract}


\Section{Introduction}

In this paper, we pursue our investigation of inverse scattering problems in black hole spacetimes initiated in \cite{DN1} and continued in \cite{DN2, DN3}. Roughly speaking, the question we adressed in this serie of works was: can we determine the geometry of black hole spacetimes by observing waves at infinity? Here infinity means infinity from the point of view of an observer located in the exterior region of a black hole and stationary with respect to it. For instance, from this point of view, it is well known that the event horizon (and also the cosmological horizon if any) of the black hole is perceived as an asymptotic region of spacetime by our observer, a phenomenon due to the intense effect of gravity near an horizon. Observing waves at the "infinities" of a black hole, in particular measuring how a black hole scatters incoming scalar or electromagnetic waves (or affects the trajectories of classical particles), is one of the few ways to get informations on the main characteristics of the black hole\footnote{Among the other interesting ways, the analysis of gravitational waves and/or of Hawking radiations could be also used as the starting point for an inverse problem since these phenomena are believed to be measurable in the near future.}. Indeed, since black holes are by essence invisible, we can only study them by indirect means. Positive answers were given to this question in the case of Reissner-Nordstr\"om black holes from the knowledge of the high energies of scattered Dirac waves, or in the case of Reissner-Nordstr\"om-de-Sitter black holes from the knowledge of scattered Dirac waves with a fixed nonzero energy. Whereas spherically symmetric black holes were studied in the papers \cite{DN1,DN2,DN3}, we adress here the same problem but in the case of more complicated geometrical objects: rotating black holes.

Precisely, we shall consider the family of Kerr-Newman-de-Sitter black holes (KN-dS) which are exact solutions of the Einstein-Maxwell equations and describe electrically charged rotating black holes with positive cosmological constant. In Boyer-Lindquist coordinates $(t,r,\theta,\varphi)$, the exterior region of a KN-dS black hole is described by the four-dimensional manifold
\begin{equation} \label{Intro-Manifold}
  \M = \mathbb{R}_{t} \times ]r_-,r_+[_r \times \S_{\theta,\varphi}^{2},
\end{equation}
equipped with the Lorentzian metric (having signature $(+,-,-,-)$)
\begin{equation} \label{Intro-Metric}
  g = \frac{\Delta_r}{\rho^2} \left[ dt - \frac{a\sin^2 \theta}{E} \, d\varphi \right]^{2} - \frac{\rho^2}{\Delta_r} dr^{2} - \frac{\rho^2}{\Delta_\theta} d\theta^{2} - \frac{\Delta_\theta \sin^2 \theta}{\rho^2} \left[ a \,dt - \frac{r^2 + a^2}{E} d\varphi \right]^{2},
\end{equation}
where
$$
  \rho^2 = r^2 + a^2 \cos^2 \theta, \quad \quad E = 1 + \frac{a^2 \Lambda}{3},
$$
$$
 \Delta_r = (r^2 + a^2) (1 - \frac{\Lambda r^2}{3}) - 2Mr + Q^2, \quad \ \ \Delta_\theta = 1 + \frac{a^2 \Lambda \cos^2 \theta}{3}.
$$

The three parameters $M>0$, $Q \in \R$ and $a \in \R$ appearing above are interpreted as the mass, the electric charge and the angular momentum per unit mass of the black hole whereas the parameter $\Lambda > 0$ is the positive cosmological constant of the universe. The values $r_\pm$ appearing in (\ref{Intro-Manifold}) are the two larger distinct roots of the fourth order polynomial $\Delta_r$ such that $\forall r \in (r_-, r_+), \ \Delta_r(r) > 0$. Note that the hypersurfaces $\{ r = r_\pm \}$ are singularities of the metric since $\Delta_r$ appears at the denominator in (\ref{Intro-Metric}). These hypersurfaces are called event horizon for $\{r=r_-\}$ and cosmological horizon for $\{r=r_+\}$. We emphasize that these singularities are mere coordinate singularities. The event and cosmological horizons correspond in fact to regular null hypersurfaces which can be crossed one way, but would require speed greater than that of light to be crossed the other way. Hence their names: horizons.

The Boyer-Lindquist coordinates fit well to the point of view of \emph{stationary observers}, that is the class of observers who move on worldlines of constant $r$ and $\theta$ and with a uniform angular velocity $\omega$, i.e. $\varphi = \omega t + const$. Indeed, the variable $t$ measures the perception of time of such observers when located far from the event and cosmological horizons. Think typically of a telecospe on earth aimed at the black hole. Since this idea corresponds well to the kind of inverse scattering experiment we have in mind, we shall work with this coordinates system eventhough it makes appear singularities at the event and cosmological horizons. But let us see what exactly happens there?

The answer is that, from the point of view of stationary observers, the event and cosmological horizons are unreachable regions. More precisely, recall that KN-dS black holes possess two families of incoming and outgoing shearfree null geodesics - called principal null geodesics - that foliate entirely the spacetime. They are generated by
\begin{equation} \label{Intro-NullVectors}
  V^\pm = \frac{r^2 + a^2}{\Delta_r} \left( \partial_t + \frac{a E}{r^2 + a^2} \partial_\varphi \right) \pm \partial_r,
\end{equation}
They should be thought of as the trajectories of light-rays aimed at - or coming from - the event and cosmological horizons. A crucial point is that the event and cosmological horizons are never reached in a finite time $t$ by the principal null geodesics. This means that these horizons are \emph{asymptotic regions} of space from the point of view of the stationary observers defined above.

To better understand this point, we work with a new radial variable $x$ defined by
$$
  \frac{dx}{dr} = \frac{r^2 + a^2}{\Delta_r},
$$
in such a way that, in the $t - x$ plane, the incoming and outgoing principal null geodesics become simply straightlines $x = \pm t + c$. In particular, in the new coordinates system $(t,x,\theta, \varphi)$, the event and cosmological horizons are then pushed away to $\{x = -\infty\}$ and $\{ x = +\infty\}$ respectively. This encodes the "asymptoticness" property mentioned above. Summarizing, a KN-dS black hole will be from now on described by the $4$-dimensional manifold
\begin{equation} \label{Intro-Man}
  \B = \R_t \times \R_x \times \S^2_{\theta, \varphi},
\end{equation}
equipped with the metric
\begin{equation} \label{Intro-Met}
  g = \frac{\Delta_r}{\rho^2} \Big[ dt - \frac{a\sin^2 \theta}{E} \, d\varphi \Big]^{2} - \frac{\rho^2 \Delta_r}{(r^2 + a^2)^2} dx^2 - \frac{\rho^2}{\Delta_\theta} d\theta^{2} - \frac{\Delta_\theta \sin^2 \theta}{\rho^2} \Big[ a \,dt - \frac{r^2 + a^2}{E} d\varphi \Big]^{2}.
\end{equation}

As waves for our inverse scattering problem, we shall consider massless charged Dirac fields evolving in the exterior region of a KN-dS black hole described by (\ref{Intro-Man}) and (\ref{Intro-Met}). Using the stationarity and global hyperbolicity of the spacetime, we shall see (after several simplifications) that such Dirac fields obey a PDE of evolution, generically denoted here by
\begin{equation} \label{Intro-DiracEq}
  i \partial_t \psi = H \psi,
\end{equation}
on the cylinder $\Sigma = \R_x \times \S^2_{\theta, \varphi}$ corresponding to the level spacelike hypersurface $\{t = 0\}$. Here $H$ stands for the resulting time-independent Dirac Hamiltonian, a first order matrix-valued PDE on $\Sigma$. Due to our choice of studying Dirac fields, a nice Hilbert space framework is at our disposal and the Hamiltonian $H$ that generates the evolution can be viewed as a selfadjoint operator on a fixed Hilbert space. We stress the fact that this wouldn't be true for scalar or electromagnetic fields, \textit{i.e.} for fields having integral spin, because of the well known superradiance phenomenon (see \cite{Ba5, FKSY, GGH}). In that case, there would be no positive conserved quantity along the evolution and the Hamiltonian couldn't be viewed as a selfadjoint operator on a fixed Hilbert space. The right framework seems to be the one given by Krein spaces (see \cite{FKSY, GGH} for more details on that point). This is the main reason that explains our choice to study Dirac fields for which no superradiance phenomenon occurs. But let us now describe more thoroughly the corresponding Hamiltonian $H$. It will be shown to take the form
\begin{equation} \label{Intro-Op-H}
  H = J^{-1} H_0, \qquad H_0 = \Ga D_x + a(x) H_{\S^2} + c(x, D_\varphi),
\end{equation}
where $\Ga = diag(1,-1)$, $D_x = -i \partial_x$, $H_{\S^2}$ denotes a certain angular Dirac operator on the $2$-sphere $\S^2$ and the matrix-valued potential $J^{-1}$ is given by
$$
  J^{-1} = \frac{1}{1 + \alpha(x,\theta)^2} \left( I_2 - \alpha(x,\theta) \Gc \right),
$$
where $\Gc$ is a Dirac matrix that anti-commutes with $\Ga$. At last, the potentials $a(x), c(x,D_\varphi), \alpha(x,\theta)$ are given in term of the metric (\ref{Intro-Metric}) by
$$
a(x) = \frac{\sqrt{\Delta_r}}{r^2 + a^2}, \quad c(x, D_\varphi) = \frac{aE}{r^2 + a^2} D_\varphi + \frac{qQr}{r^2 + a^2}, \quad \alpha(x,\theta) = \frac{\sqrt{\Delta_r}}{\sqrt{\Delta_\theta}} \frac{a \sin\theta}{r^2 + a^2}.
$$
Here $q$ denotes the electric charge of the Dirac field.

On one hand, the Hamiltonian $H_0$ is selfadjoint on the Hilbert space\footnote{Note that we work in an unweighted $L^2$ space in order to make the Dirac operator as independent as possible of the geometry.}
$$
  \H = L^2(\R \times \S^2, dx d\theta d\varphi; \,\C^2),
$$
used to represent our Dirac spinors and equipped with the usual scalar product. On the other hand, a residue of superradiance phenomenon entails that the Hamiltonian $H$ is selfadjoint on the slightly modified Hilbert space
$$
  \G = L^2(\R \times \S^2, dx d\theta d\varphi; \,\C^2),
$$
equipped with the scalar product $\l.,.\r_\G = (.,J.)_\H$ where
$$
  J = I_2 + \alpha(x,\theta) \Gc.
$$

We shall show that $\sup_{\theta} |\alpha(x,\theta)|$ is exponentially decreasing at both horizons $\{ x = \pm\infty\}$. Hence the full Dirac Hamiltonian $H$ can be viewed as a "small" non spherically symmetric perturbation of order $1$ of the Hamiltonian $H_0$ which in turn is composed of
\begin{itemize}
 \item a differential operator $\Ga D_x + \frac{a E}{r^2 + a^2} D_\varphi$ that - roughly speaking - corresponds asymptotically to transport along the outgoing and incoming principal null geodesics,
 \item an angular matrix-valued differential operator $H_{\S^2}$ weighted by a scalar potential $a(x)$ that is exponentially decreasing at both horizons $\{ x = \pm\infty\}$,
 \item a scalar perturbation $\frac{qQr}{r^2 + a^2}$ caused by the interaction between the electric charge of the black hole and that of the Dirac fields.
\end{itemize}
We emphasize here once again that the full Hamiltonian $H$ is a perturbation of order $1$ of the Hamiltonian $H_0$ which, at first sight, breaks the obvious symmetries of $H_0$ (see below for more details on this point). We list now a few other generic properties of the Dirac Hamiltonian $H$ that are important to understand the nature of our inverse problem.

The Dirac Hamiltonian $H$ shares many properties with a canonical Dirac Hamiltonian on the cylinder $\Sigma = \R_x \times \S^2_{\theta, \varphi}$ equipped with a Riemanniann metric having two asymptotically hyperbolic ends $\{x = \pm \infty\}$. More precisely, the Hamiltonians $H$ and $H_0$ above can be regarded as perturbations of order $1$ of a canonical Dirac Hamiltonian on $\Sigma$ equipped with the spherically symmetric simple metric
$$
  h = dx^2 + \frac{1}{a(x)^2} \left( d\theta^2 + \sin^2\theta d\varphi^2 \right),
$$
and $a(x)$ exponentially decreasing at both horizons $\{ x = \pm\infty\}$. We refer to \cite{DN3}, Theorem 1.2 and the remark before it, for a few words on this simpler model and the corresponding inverse results.

Nevertheless, our manifold is - in some sense - simpler than a general Riemanniann manifold with two asymptotically hyperbolic ends since it possesses symmetries. It is obvious from (\ref{Intro-Metric}) that the spacetime $\M$ is invariant by time translation, meaning that the vector field $\partial_t$ is Killing and thus generates a continuous group of isometry, and also has cylindrical symmetry, meaning that the vector field $\partial_\varphi$ is Killing. The first symmetry was used to express the Dirac equation as an evolution equation generated by a \emph{time-independent} Hamiltonian $H$. The second symmetry allows us to decompose the Dirac equation onto a Hilbert basis of angular modes $\{e^{ik\varphi}, \ k \in \frac{1}{2} + \Z\}$ simplifying thus the problem.

Although there is no other symmetry (in the sense of the existence of globally defined Killing vector fields on $\M$), KN-dS black holes have hidden symmetries that can be used to considerably simplify the problem. In particular, they are responsible for the separability of the Dirac equation (\ref{Intro-DiracEq}) into coupled systems of ODEs, a property crucially used in this paper. We mention that these hidden symmetries come from the existence of a Killing tensor \cite{AB}, itself coming from the existence of a conformal Killing-Yano tensor \cite{Fr}. For the separability of Dirac equation in spacetimes admitting such structures, we refer to \cite{Kam}.


A scattering theory for scalar waves on general asymptotically hyperbolic manifolds is now well studied and well known. We refer for instance to \cite{IK} for an extensive introduction to the direct and inverse scattering results in this class of manifolds. In this paper, we first establish a direct scattering theory for massless Dirac fields evolving in the exterior region of a KN-dS black hole. We follow the papers \cite{HaN, Da2} where similar results were obtained in Kerr or Kerr-Newman black holes (that is $\Lambda = 0$ in our model).

To understand the scattering properties of the Dirac fields, it is important to have in mind that there are two similar but distinct asymptotic regions: the event and cosmologial horizons. At late times (and from the point of view of stationary observers), the Dirac fields scatter towards these asymptotic regions. Indeed, the pure point spectrum of $H$ will be shown to be empty, meaning that the energy of Dirac fields cannot remain trapped in compact sets between the two horizons. Moreover and as usual in scattering theory, the Dirac fields seem to obey simpler PDEs in the asymptotic regions (always from the point of view of stationary observers). Precisely, when $x \to \pm \infty$, we formally see that
$$
  H \to H_\pm = \Ga D_x + \frac{aE}{r_\pm^2 + a^2} D_\varphi + \frac{qQr_\pm}{r_\pm^2 + a^2}.
$$
The simpler Hamiltonians $H_\pm$ generate the asymptotic dynamics at the event and cosmological horizons respectively. In order to separate the two asymptotic regions, we use the projectors onto the positive and negative spectrum of $\Ga$
$$
  P_\pm = \mathbf{1}_{\R^\pm}(\Ga).
$$
The operators $P_\pm$ are in fact the asymptotic velocity operators associated to the unperturbed Dirac Hamiltonian $\Ga D_x$. That these operators are useful to separate the event and cosmological horizons was already used in \cite{HaN,Da1,Da2}. Then we define the future (+) and past (-) asymptotic Hamiltonians
$$
  H^+ = H_+ P_+ + H_- P_-, \quad \quad H^- = H_+ P_- + H_- P_+,
$$
which are clearly selfadjoint on $\H$. We are then able to prove the following Theorem

\begin{theorem} \label{Intro-WO}
  The global wave operators
  $$
    W^\pm(H,H^\pm,I_2) = s-\lim_{t \to \pm \infty} e^{itH} e^{-itH^\pm},
  $$
  exist as operators from $\H$ to $\G$ and are asymptotically complete, \textit{i.e.} they are isometries from $\H$ to $\G$ and their inverse wave operators given by
  $$
    (W^\pm(H,H^\pm,I_2)^* = W^\pm(H^\pm,H,J) = s-\lim_{t \to \pm \infty} e^{itH^\pm} J e^{-itH},
  $$
  also exist as operators from $\G$ to $\H$.
\end{theorem}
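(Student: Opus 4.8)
The plan is to obtain existence by Cook's method, the isometry property by a direct computation using the explicit form of the comparison dynamics, and asymptotic completeness from propagation (minimal velocity) estimates for $e^{-itH}$ together with the absence of point and singular continuous spectrum of $H$. \emph{Existence of $W^\pm(H,H^\pm,I_2)$.} First note that $H^\pm$ commutes with $P_\pm$ and that, on each angular mode $e^{ik\varphi}$, $H^\pm$ acts on $\mathrm{Ran}(P_\pm)$ simply as $\pm D_x$ plus the real constant $\frac{aEk}{r_\pm^2+a^2}+\frac{qQr_\pm}{r_\pm^2+a^2}$. Hence on the dense set $\mathcal D$ of spinors with compact support in $x$ and finitely many angular modes, $e^{-itH^\pm}\phi$ is, up to a phase, a pure translation of each $P_\pm$-component at unit speed; for $t\to\pm\infty$ it is therefore supported far from every bounded region of $\Sigma$, one component escaping to the cosmological horizon $\{x=+\infty\}$ and the other to the event horizon $\{x=-\infty\}$. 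Writing, in the eigenbasis of $\Ga$,
\begin{equation*}
  H-H^\pm = (J^{-1}-I_2)\big(\Ga D_x + c(x,D_\varphi)\big) + J^{-1}a(x)H_{\S^2} + \mathrm{diag}\big(c(x,D_\varphi)-c_+,\,c(x,D_\varphi)-c_-\big),
\end{equation*}
with $c_\pm:=\tfrac{aE}{r_\pm^2+a^2}D_\varphi+\tfrac{qQr_\pm}{r_\pm^2+a^2}$, I use that $\sup_\theta|\alpha(x,\theta)|$, $a(x)$ and the differences $c(x,D_\varphi)-c_\pm$ are exponentially small in the region actually swept out by $e^{-itH^\pm}\phi$ — the key point being that $c(x,D_\varphi)-c_+$ (exponentially small only at $\{x=+\infty\}$) multiplies the $P_+$-component, which lives there, and symmetrically for $c(x,D_\varphi)-c_-$. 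This gives $\|(H-H^\pm)e^{-itH^\pm}\phi\|_\G\le C_\phi e^{-\delta|t|}\in L^1(dt)$, so Cook's lemma produces the strong limits on $\mathcal D$, hence on $\H$.

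\emph{Isometry.} Since $e^{itH}$ is unitary on $\G$, $\|W^\pm(H,H^\pm,I_2)\phi\|_\G^2=\lim_{t\to\pm\infty}\|e^{-itH^\pm}\phi\|_\G^2$, while $\|\psi\|_\G^2=\|\psi\|_\H^2+(\psi,\alpha\Gc\psi)_\H$; evaluated at $\psi=e^{-itH^\pm}\phi$ the correction tends to $0$ because $\alpha$ decays exponentially at both horizons and $e^{-itH^\pm}\phi$ escapes there. Thus the $W^\pm(H,H^\pm,I_2)$ are isometries from $\H$ into $\G$, intertwining $H^\pm$ and $H$, with closed ranges on which $H$ has purely absolutely continuous spectrum. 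For asymptotic completeness it remains to prove that the inverse wave operators $W^\pm(H^\pm,H,J)=s\text{-}\lim_{t\to\pm\infty}e^{itH^\pm}Je^{-itH}$ exist — they are then the adjoints of $W^\pm(H,H^\pm,I_2)$, since the adjoint of $I_2\colon\H\to\G$ is precisely $J$ — and that these wave operators are onto $\G$. Using $JH=H_0$ one has $H^\pm J-JH=(H^\pm-H_0)+H^\pm\alpha\Gc$, which carries only exponentially decaying coefficients together with the terms $c(x,D_\varphi)-c_\pm$; Cook's method then reduces the existence of $W^\pm(H^\pm,H,J)$ to the statement that the $P_+$- (resp. $P_-$-) component of $e^{-itH}\phi$ propagates, at linear speed and with an integrable remainder, to $\{x=+\infty\}$ (resp. $\{x=-\infty\}$) as $t\to+\infty$, and the reverse for $t\to-\infty$.

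\emph{Propagation estimates and absence of eigenvalues.} I would obtain the required estimates from a positive commutator argument. The Heisenberg velocity is $i[H_0,x]=\Ga$ and $i[H,x]=J^{-1}\Ga$, whose pointwise eigenvalues are $\pm\frac{\sqrt{1-\alpha^2}}{1+\alpha^2}$, so $e^{-itH}$ has finite propagation speed $<1$ (the maximal velocity bound is automatic). With conjugate operator $A=x\Ga$ one computes $i[H,A]=J^{-1}+R$, where $J^{-1}=\frac{1}{1+\alpha^2}(I_2-\alpha\Gc)$ is bounded below by a positive constant — here the condition $\sup_\theta|\alpha|<1$ underlying the construction of $\G$ enters, also giving the equivalence of $\|\cdot\|_\H$ and $\|\cdot\|_\G$ — and $R$ gathers commutators with exponentially decaying coefficients coming from $[a(x)H_{\S^2},A]$ and $[J^{-1},A]H_0$. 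The resulting Mourre estimate yields, by the standard machinery (minimal and maximal velocity estimates, limiting absorption principle; as for the Kerr and Kerr--Newman cases in \cite{HaN,Da1,Da2}), the propagation estimate above and the absence of singular continuous spectrum. The absence of eigenvalues of $H$ then follows either from a virial refinement of the commutator estimate or, more transparently, from the separability of the system: on each angular mode $H$ reduces to a first-order ODE system in $x$ of massless-Dirac type whose coefficients converge exponentially to constants at both ends and whose solutions are, up to lower-order terms, of constant modulus, which forbids $L^2$ eigenfunctions (this is also a by-product of the Frobenius analysis of the second part of the paper). Since moreover $\mathbf 1_{\{|x|\le R\}}(H+i)^{-1}$ is compact (Rellich on the bounded region, where $a(x)$ is bounded below), $\sigma_{sc}(H)=\sigma_{pp}(H)=\emptyset$ gives $\mathbf 1_{\{|x|\le R\}}e^{-itH}\to0$ strongly as $|t|\to\infty$; a computation analogous to the one for the isometry then shows that $W^\pm(H^\pm,H,J)$ are isometries from $\G$ onto $\H$, whence the $W^\pm(H,H^\pm,I_2)$ are surjective, which is asymptotic completeness.

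\emph{Main obstacle.} The hard part is the Mourre estimate for $H$. The leading term of $R$ is $2i\,x a(x)H_{\S^2}\Ga$, which fails to be $H_0$-bounded on the whole cylinder, because $a(x)$ degenerates exponentially at the horizons — where $H_0$ behaves like $\Ga D_x+c(x,D_\varphi)$ and no longer controls $H_{\S^2}$ — while $H_{\S^2}$ is unbounded. One therefore cannot treat $H$ in a single step: I would run the commutator argument mode by mode after the separation into radial systems, in which $H_{\S^2}$ becomes a scalar and $xa(x)$ a genuine bounded, exponentially small coefficient, comparing first $e^{-itH}$ with $e^{-itH_0}$ (an exponentially decaying, though first-order, perturbation) and then $e^{-itH_0}$ with $e^{-itH^\pm}$, reintroducing the $\theta$-dependent twist $\alpha\Gc$ only at the end. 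A secondary, bookkeeping difficulty is that $H$ is self-adjoint on $\G$ but not on $\H$, so one must repeatedly invoke the norm equivalence and the identity $JH=H_0$ (with $H_0$ self-adjoint on $\H$) to transfer estimates between the two spaces.
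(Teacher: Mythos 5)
Your overall architecture matches the paper's: the proof is split into a comparison of $H$ with $H_0$ and of $H_0$ with the asymptotic dynamics $H^\pm$, the second step being reduced by separation of variables to one-dimensional Dirac systems, and the first step resting on a Mourre estimate, a limiting absorption principle and Kato's smooth-operator theory (the paper does not use Cook's method for the direct wave operators but proceeds by the chain rule; your Cook argument for $W^\pm(H,H^\pm,I_2)$ on states with finitely many angular modes is nevertheless sound). You have also correctly located the hard point: with $A=x\Ga$ the commutator produces the term $2xa(x)H_{\S^2}\Ga$, which is not controlled by $H_0$ because $H_{\S^2}$ is unbounded and the extra factor $x$ destroys the exponential gain of $a(x)$.

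However, your proposed repair --- running the commutator argument mode by mode after separation of variables and ``reintroducing the $\theta$-dependent twist $\alpha\Gc$ only at the end'' --- does not close the gap. The decomposition onto the generalized spherical harmonics $Y_{kl}$ diagonalizes $H_0$ but not $H=J^{-1}H_0$: the factor $J^{-1}=\frac{1}{1+\alpha^2}(I_2-\alpha(x,\theta)\Gc)$ depends on $\theta$ and mixes the modes, so a mode-by-mode Mourre estimate only yields a LAP for $H_0$. But the smooth-operator argument for the pair $(H,H_0)$ (equivalently, Cook's method for the inverse wave operator $W^\pm(H_0,H,J)$, whose integrand involves $e^{-itH}$) requires resolvent or propagation estimates for the full, non-decomposable dynamics $e^{-itH}$; these cannot be inherited from $H_0$ by a naive perturbation argument, since $H-H_0=(J^{-1}-I_2)H_0$ is a perturbation of order one. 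The paper's resolution, which is the genuinely non-obvious ingredient missing from your sketch, is to modify the conjugate operator itself: near each horizon one takes $A=\Ga\left(x+\kappa_\pm^{-1}\ln|H_{\S^2}|\right)$ (equivalently, one conjugates by $U=e^{-i\kappa_\pm^{-1}\ln|H_{\S^2}|\,D_x}$, after which $a(x)H_{\S^2}$ becomes $e^{\kappa_\pm x}H_{\S^2}/|H_{\S^2}|$ with a bounded angular factor), and the two local conjugate operators are glued by cutoffs in the variable $x+\kappa_\pm^{-1}\ln|H_{\S^2}|$. With this choice the offending commutator term becomes compact after localization in energy, the Mourre estimate holds for the full $H$ at every real energy (with no threshold at $\lambda=0$), and the LAP for $H$ follows; the remainder of your smooth-operator argument then goes through essentially as written.
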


The Theorem \ref{Intro-WO} will be proved in several steps in Section \ref{Time-Dependent}. The main technical point is to obtain a Limiting Absorption Principle (LAP) for the Hamiltonian $H$. This will be done in Appendix \ref{LAP} by means of a Mourre theory. Once a LAP for $H$ is obtained, the existence and asymptotic completeness of the above wave operators follow from standard arguments of Kato's $H$-smooth operator theory (see for instance \cite{RS}).

The global time-dependent scattering operator is then defined by the standard rule
\begin{equation} \label{Intro-TD-ScatOp}
  S = (W^+(H,H^+,I_2))^* \ W^-(H,H^-,I_2),
\end{equation}
which is clearly a unitary operator from $\H$ to $\H$. Note that, due to our choice of unweighted Hilbert space $\H$, we can naturally compare two different scattering operators $S$ and $\tilde{S}$ associated to two different KN-dS black holes since they act on the same Hilbert space. In this paper however, we are interested in inverse scattering problem at a fixed energy. That's why we won't take the full scattering operator $S$ given by (\ref{Intro-TD-ScatOp}) as the starting point of our inverse problem. Instead, from $S$, we construct the corresponding time-dependent scattering matrix $S(\lambda)$ at energy $\lambda$. Since the asymptotic Hamiltonians $H^\pm$ are not identical, there is no canonical way to define such a scattering matrix. Nevertheless, we follow the usual route and introduce the following unitary operators on $\H$
\begin{equation} \label{Intro-F+}
  F^+ \psi(\lambda) = \frac{1}{\sqrt{2\pi}} \int_\R  \left( \begin{array}{cc} e^{-ix (\lambda - \Omega_+(D_\varphi))}&0\\0&e^{ix (\lambda - \Omega_-(D_\varphi))} \end{array} \right) \psi(x) dx,
\end{equation}
and
\begin{equation} \label{Intro-F-}
  F^- \psi(\lambda) = \frac{1}{\sqrt{2\pi}} \int_\R  \left( \begin{array}{cc} e^{-ix (\lambda - \Omega_-(D_\varphi))}&0\\0&e^{ix (\lambda - \Omega_+(D_\varphi))} \end{array} \right) \psi(x) dx,
\end{equation}
where $\Omega_\pm(D_\varphi) = \frac{aE D_\varphi + qQr_\pm}{r^2_\pm + a^2}$. These operators diagonalize the asymptotic Hamiltonians $H^+$ and $H^-$ respectively. Hence, we define the global scattering matrix at energy $\lambda$ in a natural way by the rule
\begin{equation} \label{Intro-TD-Scat}
  S = (F_+)^* S(\lambda) F_-.
\end{equation}
Note that $S(\lambda)$ is a unitary operator on $\H_{\S^2} = L^2(\S^2; \C^2)$.

The question we adress in this paper can be now more precisely stated. Does the knowledge of $S(\lambda)$ for a fixed energy $\lambda \in \R$ determine uniquely a KN-dS black hole? In fact, we can refine considerably our uniqueness inverse result as follows. Note first that the scattering matrix $S(\lambda)$ has the following $2$ by $2$ matrix structure
$$
  S(\lambda) = \left[ \begin{array}{cc} T^L(\lambda) & R(\lambda) \\ L(\lambda) & T^R(\lambda) \end{array} \right].
$$
The operators $T^L(\lambda)$ and $T^R(\lambda)$ are the transmission operators at a fixed energy $\lambda \in \R$ whereas $R(\lambda)$ and $L(\lambda)$ are the reflection operators from the right and from the left respectively at a fixed energy $\lambda$. The formers measure the part of signal that is transmitted from one horizon to the other in a scattering process. The latters measure the part of a signal that is reflected from one end to itself in a scattering process. Note at last that all of them act on the Hilbert space $L^2(\S^2, d\theta d\varphi; \C)$.

To determine uniquely a KN-dS black hole, it will be enough to know either the transmission operators $T^{L/R}(\lambda)$, or one of the reflection coefficients $R(\lambda)$ or $L(\lambda)$ at a fixed energy. In fact, we can even obtain a better result using the cylindrical symmetry of the spacetime. Clearly, this entails that all the operators (Dirac Hamiltonian, wave and scattering operators) under study can be decomposed onto the Hilbert sum of angular modes $\{ e^{ik\varphi} \}_{k \in 1/2 + \Z}$. Note our choice of half-integers $k \in \frac{1}{2} + \Z$ since we want some anti-periodic conditions in the variable $\varphi$. Recall that we are working with spinors that change sign after a complete rotation. Then, we have
$$
  \H_{\S^2} = \oplus_{k \in \frac{1}{2} + \Z} \H_{\S^2}^k, \quad \H_{\S^2}^k = L^2((0,\pi),  d\theta; \C^2),
$$
and the scattering matrix can be decomposed as the orthogonal sum
$$
  S(\lambda) = \oplus_{k \in \frac{1}{2} + \Z} S_k(\lambda), \quad S_k(\lambda) = \left[ \begin{array}{cc} T_k^L(\lambda) & R_k(\lambda) \\ L_k(\lambda) & T_k^R(\lambda) \end{array} \right],
$$
where $T^{L/R}_k(\lambda), R_k(\lambda)$ and $L_k(\lambda)$ are the corresponding transmission and reflection operators acting on the Hilbert space $\ls = L^2((0,\pi),  d\theta; \C)$.

The main inverse result of this paper states that the knowledge of $T_k^{L/R}(\lambda)$ or $R_k(\lambda)$ or $L_k(\lambda)$ at a fixed energy $\lambda \in \R$ and for two different angular modes $k \in \frac{1}{2} + \Z$ is enough to determine uniquely a KN-dS black hole. Precisely, we shall prove

\begin{theorem} \label{Intro-Main}
Let $(M,Q^2,a,\Lambda)$ and $(\tilde{M},\tilde{Q}^2,\tilde{a},\tilde{\Lambda})$ be the parameters of two a priori different KN-dS black holes. Let $\lambda \in \R$ and denote by $S(\lambda)$ and $\tilde{S}(\lambda)$ the corresponding scattering matrices at fixed energy $\lambda$. More generally, we shall add a symbol $\ \tilde{}$ to all the relevant scattering quantities corresponding to the second black hole. Assume that both reduced transmission operators $T_k^{R}(\lambda)$, $T_k^{L}(\lambda)$ or one reduced reflection operators $R_k(\lambda)$ or $L_k(\lambda)$ are known, \textit{i.e.}
\begin{eqnarray}
  T_{k}^{R/L}(\lambda) & = & \tilde{T}_{k}^{R/L}(\lambda), \nonumber \\
  R_{k}(\lambda) & = & \tilde{R}_{k}(\lambda), \label{Intro-MainAssumption} \\
  L_{k}(\lambda) & = & \tilde{L}_{k}(\lambda), \nonumber
\end{eqnarray}
as operators on $\ls = L^2((0,\pi), d\theta; \C)$ and for \emph{two} different values of $k \in \frac{1}{2} + \Z$. Then the parameters of the two black holes coincide, \textit{i.e.}
$$
  M = \tilde{M}, \ a = \tilde{a}, \ Q^2 = \tilde{Q}^2, \ \Lambda = \tilde{\Lambda}.
$$
\end{theorem}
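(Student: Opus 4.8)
The plan is to exploit the separability of the Dirac equation to reduce the inverse problem on $\B$ to a family of one-dimensional inverse scattering problems on $\R_x$ indexed by the angular modes, to pin down the relevant separation constants by means of the explicit angular equation, to glue the radial data together by a Complex Angular Momentum (CAM) argument, and finally to read off $(M,Q^2,a,\Lambda)$ from the reconstructed radial potentials. First, on a fixed angular mode $e^{ik\varphi}$ the operator $H_0$ separates into a radial part on $\R_x$ and an angular eigenvalue problem on $(0,\pi)$; the angular operator has simple discrete spectrum $\{\muk\}_l$ with eigenfunctions $\{Y_{kl}\}_l$ forming a Hilbert basis of $\ls=L^2((0,\pi),d\theta;\C)$, and $S_k(\lambda)$ is diagonalized by this basis, the $(k,l)$-block being the $2\times2$ scattering matrix of a scalar one-dimensional Dirac system on $\R_x$ with potentials $a(x)$, $c(x,k)$ and coupling constant $\muk$, whose transmission and reflection coefficients are $\Tkz$, $\Rkz$, $\Lkz$ taken at $z=\muk$. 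Since $\ls$ is the same space for both black holes, each of the equalities \eqref{Intro-MainAssumption} forces, after diagonalization, the eigenvalues \emph{and} the eigenspaces of the reduced operators to coincide: hence $Y_{kl}=\tilde Y_{kl}$ up to phase, and the corresponding scalar radial coefficients agree at $z=\muk$ and $z=\tilde\mu_{kl}$, for each of the two prescribed values of $k$.

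Next I would use that the angular equation is solved explicitly by Frobenius' method, so that the knowledge of an eigenfunction $Y_{kl}$ on $(0,\pi)$ determines all the coefficients of this ODE --- which depend on the black hole only through the combinations $a\lambda$ and $a^2\Lambda$ (the latter through $E=1+a^2\Lambda/3$ and $\Delta_\theta$) --- as well as the eigenvalue $\muk$ itself. Matching eigenfunctions for the two given values of $k$ therefore gives $a\lambda=\tilde a\lambda$, $a^2\Lambda=\tilde a^2\tilde\Lambda$ and $\muk=\tilde\mu_{kl}$ for every $l$; in particular the two black holes share the \emph{same} sequence of separation constants, so that for each of the two values of $k$ the two families of scalar radial coefficients agree on the common sampling set $\{\,\muk:\ l\,\}$. (This already yields $a=\tilde a$ when $\lambda\neq0$, and then $\Lambda=\tilde\Lambda$ if moreover $a\neq0$, but the argument below does not rely on this.)

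I would then carry out the CAM argument by regarding the separation constant $z$ as a complex variable. The scalar radial coefficients are built as ratios of the Jost solutions $\flj$ and Jost functions $\alj$ of the radial system, which extend to functions holomorphic in $z$ in a suitable half-plane --- the exponential decay of $a(x)$ (and of $\alpha(x,\theta)$) at both horizons being exactly what provides the estimates needed here. Thus $z\mapsto\Tkz$ (resp. $\Rkz$, $\Lkz$) is meromorphic in $z$, and by the previous step it agrees with the corresponding quantity of the second black hole on the set $\{\muk\}_l$ for each of the two values of $k$. Since $\muk$ grows linearly in $l$, one value of $k$ furnishes only a sequence of density one, which is borderline for a uniqueness statement; but the union of the two sequences coming from the two prescribed values of $k$ has density strictly larger than one, so a Carlson/Nevanlinna-type uniqueness theorem for holomorphic functions of controlled growth in a half-plane forces $\Tkz$ (resp. $\Rkz$, $\Lkz$) to coincide with its counterpart for \emph{all} $z\in\C$ and both values of $k$. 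Establishing the precise growth of the Jost functions $\alj$ in $z$, uniformly in $x$, so as to legitimately invoke this uniqueness theorem, is the step I expect to be the main obstacle; when only one reflection coefficient is known one must moreover recover the transmission coefficient from it, which can be done from the analytic structure together with the unitarity of $S_k(\lambda)$ on the real axis.

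Finally, from the equality of the radial coefficients for all $z\in\C$ and the two values of $k$, one-dimensional inverse scattering theory --- with the coupling $z$ now in the role of the spectral parameter --- recovers the radial potentials $a(x)$ and $c(x,k)$ up to the translation $x\mapsto x+\mathrm{const}$, which is a genuine symmetry of the problem. When $a\neq0$, the difference $c(x,k)-c(x,k')$ equals $(k-k')\,\dfrac{aE}{r(x)^2+a^2}$, and combining it with $c(x,k)$ itself, with $a(x)=\dfrac{\sqrt{\Delta_r}}{r(x)^2+a^2}$, and with the asymptotic quantities $\Omega_\pm$ that enter the very definition of $S(\lambda)$, elementary algebraic manipulations recover $r(x)$ as a function of $x$ together with the remaining constants, hence $\Delta_r$ as a genuine function of $r$; reading off the coefficients of the quartic $\Delta_r(r)=(r^2+a^2)(1-\Lambda r^2/3)-2Mr+Q^2$ then gives $\Lambda$, $M$, $a^2$ and $Q^2$, the sign of $a$ being fixed by the $k$-linear term of $c(x,k)$. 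The degenerate case $a=0$ is the spherically symmetric situation already treated in \cite{DN3}. Altogether this gives $M=\tilde M$, $a=\tilde a$, $Q^2=\tilde Q^2$, $\Lambda=\tilde\Lambda$, as claimed.
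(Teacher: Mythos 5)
Your overall architecture matches the paper's (separation of variables, diagonalization of the reduced operators on the generalized spherical harmonics, Frobenius analysis of the angular equation to pin down $a$ and $\Lambda$, complexification of the angular momentum, recovery of the radial data), but there is a genuine gap at the very first step, and it is precisely where most of the paper's technical work lies. You assert that the operator equalities (\ref{Intro-MainAssumption}) force, ``after diagonalization, the eigenvalues \emph{and} the eigenspaces of the reduced operators to coincide: hence $Y_{kl}=\tilde Y_{kl}$ up to phase'' with the \emph{same} index $l$ on both sides. This does not follow from equality of operators alone: you need (a) that the eigenvalues $T_{kl}(\lambda)$ (resp.\ $|R_{kl}(\lambda)|^2$, $|L_{kl}(\lambda)|^2$) are \emph{simple}, otherwise the eigenspaces are higher-dimensional and do not single out $Y_{kl}^j$; and (b) that the natural bijection between the two spectra respects the labelling by $l$, i.e.\ that the a priori permutation $\varphi$ with $T_{kl}(\lambda)=\tilde T_{k\varphi(l)}(\lambda)$ is the identity. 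The paper can only establish both facts for $l$ large enough, and the proof requires the precise large-$z$ asymptotics of $T(\lambda,k,z)$, $R(\lambda,k,z)$, $L(\lambda,k,z)$ (obtained via a Liouville transformation and modified Bessel function comparisons in Section \ref{AsymptoticsSD}) together with the strict monotonicity of $|a_{Lj}(\lambda,k,z)|$ in $z$ (Lemma \ref{AL1-Increasing}). Without this input your subsequent Frobenius matching of $Y_{kl}$ against $\tilde Y_{kl}$ for the same $(k,l)$ — which is what produces $\zeta=\tilde\zeta$ and $a=\tilde a$ — has no justification. Note also that the $Y_{kl}(\lambda)$ depend on the unknown parameters $a,\Lambda$, so you cannot use them as a known basis from the outset; this is exactly why the argument must proceed through the spectral data of the \emph{known} operators.

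Two further points. Your density argument in the CAM step is off: the functions $z\mapsto T(\lambda,k,z)$ for the two values of $k$ are \emph{different} functions, so their sampling sequences cannot be pooled to beat a density threshold; the correct mechanism (used in the paper) is that each $a_{Lj}(\lambda,k,\cdot)$ lies in the Nevanlinna class of the right half-plane, for which uniqueness requires only the Blaschke/M\"untz condition $\sum_l \mu_{kl}(\lambda)^{-1}=\infty$ — satisfied by a \emph{single} sequence since $\mu_{kl}(\lambda)\sim Cl$. Finally, at fixed energy the inverse step does not recover $a(x)$ and $c(x,k)$ separately up to translation: one only obtains the ratio $(\lambda-c(x,k))/a(x)$ up to an unknown diffeomorphism $\tilde x=\tilde x(x)$ (Theorem \ref{Uniqueness-Fixed}), and the parameters are then extracted by exploiting the explicit algebraic form of the potentials for two values of $k$; the separate recovery up to translation holds only for $Q=0$ or with two energies.
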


In fact we obtain much better results in the course of the proof. Indeed, not only do we recover four parameters, but we are in fact able  to determine scalar functions depending on the radial variable $x$. Precisely, the function
\begin{equation} \label{Function}
 \frac{\lambda - c(x,k)}{a(x)},
\end{equation}
is determined up to certain diffeomorphisms. Since (\ref{Function}) is known for two different $k$, we determine in fact two different functions. Finally, if we specialize to the case of Kerr-de-Sitter black hole, that is if we set $Q = 0$ in the metric (\ref{Intro-Metric}), then we determine the functions
\begin{equation} \label{Function1}
  a(x), \quad c(x,k),
\end{equation}
separately up to a discrete set of translations in the variable $x$. We emphasize that the recovery of (\ref{Function}) and (\ref{Function1}) from one of the scattering operators at a fixed energy $\lambda$ is the main result of this paper. We recover thus the potentials that appear in the expression of the Hamiltonian $H$ or rather $H_0$. We then use the particular form of the functions (\ref{Function}) or (\ref{Function1}) to determine uniquely the parameters $M, a, Q^2, \Lambda$.

Let us finish this introduction saying a few words on the strategy of the proof of Theorem \ref{Intro-Main}. Since we need to introduce many other objects and notations for the full proof, we shall restrict ourselves here to give only the main steps. However, we devote the entire Section \ref{MainResult} to sketch the proof of our main Theorem in much more details, once we have introduced all the necessary background.

The first step is to use the hidden symmetries of the equation to simplify further the problem. As already said, the Dirac equation in KN-dS black holes can be separated into systems of ODEs. Precisely, it will be shown in Section \ref{Sep-Var} that the stationary equation at a fixed energy $\lambda \in \R$
\begin{equation} \label{Intro-StatEq}
  H \psi = \lambda \psi,
\end{equation}
can be simplified as follows.

\begin{theorem} \label{Intro-Separation}
  Denote by $I$ the set $(1/2 + \Z) \times \N^*$. Then, for all $\lambda \in \R$, there exists a Hilbert decomposition of the energy space $\H$ as
  $$
    \H = \bigoplus_{(k,l) \in I} \H_{kl}(\lambda),
  $$
  where
  $$
    \H_{kl}(\lambda) = L^2(\R ; \C^2) \otimes Y_{kl}(\lambda) \ \simeq \ L^2(\R,\C^2),
  $$
  with the following properties.
  \begin{itemize}
  \item The $Y_{kl}(\lambda) = Y_{kl}(\lambda)(\theta,\varphi)$ are $L^2$ solutions of the eigenvalues angular equation
  \begin{equation} \label{Intro-AngularEq}
    \left[ H_{\S^2} - \lambda \frac{a \sin\theta}{\sqrt{\Delta_\theta}} \Gc \right] Y_{kl}(\lambda) = \muk Y_{kl}(\lambda).
  \end{equation}
  More precisely, the $Y_{kl}(\lambda)$ are the normalized eigenfunctions of the selfadjoint angular operator
  \begin{equation} \label{Intro-AngularOp}
    A_{\S^2}(\lambda) = H_{\S^2} - \lambda \frac{a \sin\theta}{\sqrt{\Delta_\theta}} \Gc.
  \end{equation}
  on $\H_{\S^2} = L^2(\S^2: \C^2)$ associated to its positive eigenvalues $\mu_{kl}(\lambda)$ ordered in such a way that $\forall k \in \frac{1}{2} +\Z, \ \forall l \in \N^*$, $\muk < \mu_{k(l+1)}(\lambda)$.
  \item For all $\psi = \sum_{(k,l) \in I} \psi_{kl}(x) \otimes Y_{kl}(\lambda) \in \H$, the stationary equation (\ref{Intro-StatEq}) is equivalent to the countable family of one-dimensional radial stationary equations
  \begin{equation} \label{Intro-RadialEq}
   \left[ \Ga D_x + \mu_{kl}(\lambda) a(x) \Gb + c(x,k) \right] \psi_{kl}(x) = \lambda \, \psi_{kl}(\lambda),
  \end{equation}
  parametrized by the generalized angular momentum $\muk$.
  \end{itemize}
\end{theorem}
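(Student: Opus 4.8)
The plan is to rewrite the stationary equation as $H_0\psi=\lambda J\psi$, absorb the superradiant right-hand term into a $\lambda$-dependent angular operator, spectrally decompose that operator, and then separate variables in the classical Chandrasekhar fashion.

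First, since $J=I_2+\alpha(x,\theta)\Gc$ is invertible, $H\psi=\lambda\psi$ is equivalent to $H_0\psi=\lambda J\psi$, i.e.
\[
\big[\Ga D_x+a(x)H_{\S^2}+c(x,D_\varphi)\big]\psi=\lambda\psi+\lambda\,\alpha(x,\theta)\,\Gc\psi .
\]
The decisive algebraic observation is that the explicit formulas for $a(x)$ and $\alpha(x,\theta)$ give the factorization $\alpha(x,\theta)=a(x)\,\frac{a\sin\theta}{\sqrt{\Delta_\theta}}$, so the right-hand correction is exactly $a(x)$ times a purely angular matrix multiplication. Hence $H\psi=\lambda\psi$ is equivalent to
\[
\big[\Ga D_x+a(x)\,A_{\S^2}(\lambda)+c(x,D_\varphi)\big]\psi=\lambda\psi,\qquad A_{\S^2}(\lambda)=H_{\S^2}-\lambda\frac{a\sin\theta}{\sqrt{\Delta_\theta}}\Gc ,
\]
and all remaining work is carried out on this reduced equation.

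Next I would analyse $A_{\S^2}(\lambda)$ on $\H_{\S^2}=L^2(\S^2;\C^2)$. The angular Dirac operator $H_{\S^2}$ is self-adjoint with compact resolvent (an elliptic first-order operator on the compact manifold $\S^2$) and is odd for the $\Ga$-grading, i.e.\ anticommutes with the diagonal matrix $\Ga$, exactly as $\Gc$ does; and the added term $-\lambda\frac{a\sin\theta}{\sqrt{\Delta_\theta}}\Gc$ is bounded self-adjoint, since $\frac{a\sin\theta}{\sqrt{\Delta_\theta}}$ is real, smooth and bounded on $(0,\pi)$ (recall $\Delta_\theta\ge 1$) and $\Gc$ is a constant Hermitian matrix. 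Thus $A_{\S^2}(\lambda)$ is self-adjoint with compact resolvent: discrete real spectrum, finite-multiplicity eigenvalues accumulating only at $\pm\infty$, and a Hilbert basis of eigenfunctions. Since $A_{\S^2}(\lambda)$ anticommutes with $\Ga$, its spectrum is symmetric about $0$, and in the splitting $\psi=(\psi^+,\psi^-)$ it is off-diagonal, with off-diagonal blocks $\mathcal{A}_\pm(\lambda)$ obeying $\mathcal{A}_-(\lambda)=\mathcal{A}_+(\lambda)^*$; its non-negative part is therefore encoded by the non-negative self-adjoint operator $\mathcal{A}_-(\lambda)\mathcal{A}_+(\lambda)$ on $L^2(\S^2;\C)$. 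Using in addition the cylindrical symmetry — which commutes with $A_{\S^2}(\lambda)$ — to restrict to the angular modes $e^{ik\varphi}$, $k\in\frac12+\Z$, and ordering the eigenvalues of $\mathcal{A}_-(\lambda)\mathcal{A}_+(\lambda)$ in each mode increasingly by $l\in\N^*$, one recovers the positive eigenvalues $\muk$ of $A_{\S^2}(\lambda)$ and the normalized eigenfunctions $Y_{kl}(\lambda)=\frac{1}{\sqrt2}\big(Y^+_{kl}(\lambda),Y^-_{kl}(\lambda)\big)$, where $Y^+_{kl}(\lambda)$ ranges over an orthogonal Hilbert basis of eigenfunctions of $\mathcal{A}_-(\lambda)\mathcal{A}_+(\lambda)$ and $Y^-_{kl}(\lambda)=\muk^{-1}\mathcal{A}_+(\lambda)Y^+_{kl}(\lambda)$. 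One checks that $\{Y^+_{kl}(\lambda)\}$ and $\{Y^-_{kl}(\lambda)\}$ are \emph{each} complete orthogonal systems of $L^2(\S^2;\C)$; expanding the two spinor components of an arbitrary $\psi\in\H$ along them yields the stated decomposition $\H=\bigoplus_{(k,l)}\H_{kl}(\lambda)$, $\H_{kl}(\lambda)\simeq L^2(\R;\C^2)$, where the symbol $\psi_{kl}(x)\otimes Y_{kl}(\lambda)$ means that the upper, resp.\ lower, component of $\psi_{kl}(x)\in\C^2$ multiplies $Y^+_{kl}(\lambda)$, resp.\ $Y^-_{kl}(\lambda)$.

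It then remains to substitute such an expansion into the reduced equation. The diagonal operator $\Ga D_x$ acts only in $x$ and preserves each $\H_{kl}(\lambda)$; on the mode $e^{ik\varphi}$ the operator $c(x,D_\varphi)$ acts as the scalar $c(x,k)$; and the key point is that, because $A_{\S^2}(\lambda)$ is off-diagonal with $\mathcal{A}_\pm(\lambda)Y^\pm_{kl}(\lambda)=\muk\,Y^\mp_{kl}(\lambda)$, it keeps the two angular factors in their upper/lower slots while interchanging the radial coefficients, i.e.
\[
a(x)\,A_{\S^2}(\lambda)\big(\psi_{kl}(x)\otimes Y_{kl}(\lambda)\big)=\muk\,a(x)\,\big(\Gb\,\psi_{kl}(x)\big)\otimes Y_{kl}(\lambda),
\]
which is where the matrix $\Gb$ in (\ref{Intro-RadialEq}) comes from. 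Projecting the reduced equation onto each $\H_{kl}(\lambda)$ and using the orthonormality of the $Y_{kl}(\lambda)$ produces exactly (\ref{Intro-RadialEq}); conversely, every family solving (\ref{Intro-RadialEq}) reassembles into a solution of $H\psi=\lambda\psi$, the interchange of the infinite sum with the closed unbounded operators being justified on the natural domains.

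I expect the main obstacle to be the spectral analysis of $A_{\S^2}(\lambda)$, and inside it two points in particular: proving that $0$ is never an eigenvalue — equivalently that $\ker\mathcal{A}_+(\lambda)=\{0\}$ for the $\lambda$-deformed angular Dirac operator, which is precisely what makes the labelling by the positive eigenvalues $\muk$, $l\in\N^*$, exhaustive and the decomposition of $\H$ complete — and making rigorous the bookkeeping by which the $\C^2$-fibre is shared between the radial factor $L^2(\R;\C^2)$ and the split eigenspinor $\big(Y^+_{kl}(\lambda),Y^-_{kl}(\lambda)\big)$. By contrast, the algebraic reduction of the first step and the substitution in the last step are direct computations from the explicit expressions of $a$, $c$, $\alpha$ and $J$.
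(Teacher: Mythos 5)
Your proposal is correct and follows essentially the same route as the paper: the same factorization $\alpha(x,\theta)=a(x)\,b(\theta)$ reducing the stationary equation to the $\lambda$-dependent angular operator $A_{\S^2}(\lambda)$, the same exploitation of its anticommutation with $\Ga$ (your off-diagonal blocks $\mathcal{A}_\pm(\lambda)$ are the paper's pairing $Y_{k,-l}=\Ga Y_{kl}$ in different clothing, and your bookkeeping of the $\C^2$-fibre is exactly the paper's unitary identification of $L^2(\R)\otimes \mathrm{Span}(Y_{kl},Y_{k,-l})$ with $L^2(\R;\C^2)\otimes Y_{kl}$), and the same computation producing the matrix $\Gb$ in the radial equation. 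The one ingredient you leave open --- discreteness and simplicity of the spectrum of each reduced angular operator $A_k(\lambda)$, hence the absence of a zero eigenvalue and the exhaustiveness of the labelling by $l\in\N^*$ --- is precisely the input the paper itself defers to the Pr\"ufer-transformation analysis of Belgiorno and Cacciatori, so your identification of it as the main remaining obstacle is accurate.
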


Before proceeding further, note that the generalized angular momenta $\muk$ - that play here the role of constants of separation - do depend on the energy $\lambda$, making the separation of variables process highly non trivial. Now let us fix an energy $\lambda \in \R$. Using the corresponding decomposition of the Hilbert space $\H$ given in Theorem \ref{Intro-Separation}, the scattering matrix $S(\lambda)$ can also be written as the orthogonal sum of $2$ by $2$ unitary matrices in the following way
$$
  S(\lambda) = \oplus_{(k,l) \in I} S_{kl}(\lambda),
$$
where

$$
  S_{kl}(\lambda) = \left[ \begin{array}{cc} T(\lambda,k,\muk) & R(\lambda,k,\muk) \\ L(\lambda,k,\muk) & T(\lambda,k,\muk) \end{array} \right].
$$

The coefficients $T(\lambda,k,\muk)$, $R(\lambda,k,\muk)$ and $L(\lambda,k,\muk)$ are naturally interpreted as the transmission and reflection coefficients associated to the one-dimensional Dirac equation (\ref{Intro-RadialEq}). This is why we emphasize their dependence with respect to $\lambda$, $k$ and $\muk$ since the equations (\ref{Intro-RadialEq}) do depend on these parameters disjointly. Also, these transmission and reflection coefficients can be expressed in terms of determinants of particular fundamental solutions of (\ref{Intro-RadialEq}) - called Jost functions - having prescribed asymptotic behaviours at the horizons. One important point to mention here is that we have at our disposal very precise expressions for the coefficients $T(\lambda,k,\muk)$, $R(\lambda,k,\muk)$ and $L(\lambda,k,\muk)$ as power series in the generalized angular momentum $\muk$. These exact expressions will allow us to make accurate calculations of the asymptotics of the scattering coefficients needed in the next steps.

Summarizing at this stage, we see that the transmission operators $T_k^{L/R}(\lambda)$ and the reflection operators $R_k(\lambda), \ L_k(\lambda)$ which are known by our main assumption (\ref{Intro-MainAssumption}), can be considerably simplified if we decompose them onto the generalized spherical harmonics $Y_{kl}(\lambda)$, itselves eigenfunctions of the angular equation (\ref{Intro-AngularEq}). Notice that, and this is one of the main peculiarity of this model (with respect to \cite{DN3}), these generalized spherical harmonics also depend on the parameters of the KN-dS black hole and thus are a priori unknown.

The second step is then to show that, although the previous remark entails that we don't know the scattering coefficients $T(\lambda,k,\muk)$, $R(\lambda,k,\muk)$ and $L(\lambda,k,\muk)$ a priori, it can be shown that - rougly speaking - our main assumption (\ref{Intro-MainAssumption}) implies that for large enough $l >> 1$, one of the following condition is satisfied.
\begin{eqnarray}
  T(\lambda,k,\muk) & = & \tilde{T}(\lambda,k,\tilde{\mu}_{kl}(\lambda)), \nonumber \\
  R(\lambda,k,\muk) & = & \tilde{R}(\lambda,k,\tilde{\mu}_{kl}(\lambda)), \label{Intro-Cons1} \\
  L(\lambda,k,\muk) & = & \tilde{L}(\lambda,k,\tilde{\mu}_{kl}(\lambda)). \nonumber \
\end{eqnarray}
Moreover, for large enough $l >> 1$, we can also show
\begin{equation} \label{Intro-Cons2}
  Y_{kl}(\lambda) = \tilde{Y}_{kl}(\lambda).
\end{equation}
This will be proved in Section \ref{ApplicationFrobenius}. We emphasize that the above results are not at all immediate and are a consequence of the following intermediate results and arguments:
\begin{itemize}
\item A re-interpretation of the coefficients $T(\lambda,k,\muk)$, $|R(\lambda,k,\muk)|^2$ and $|L(\lambda,k,\muk)|^2$ as eigenvalues associated to certain operators (for instance to $T_k^L(\lambda)$, $R_k(\lambda)^* R_k(\lambda)$ and $L_k(\lambda) L_k(\lambda)^*$) which are supposed to be known by (\ref{Intro-MainAssumption}).
\item A detailed analysis of the asymptotics of $T(\lambda,k,\muk)$, $R(\lambda,k,\muk)$ and $L(\lambda,k,\muk)$ when $\muk \to \infty$. This leads to the proof that these coefficients are strictly increasing with respect to the generalized angular momentum $\muk$.
\item A proof that for fixed $k \in \frac{1}{2} + \Z$, the generalized angular momenta increases as the integers, \textit{i.e.} there exists $0 < c < C$ such that $c l \leq \muk \leq C l$ for $l$ large enough. This is done in Appendix \ref{Estimate-mukl}.
\item As a consequence of all these results put together, the scattering coefficients $T(\lambda,k,\muk)$, $|R(\lambda,k,\muk)|^2$ and $|L(\lambda,k,\muk)|^2$ are shown to be simple eigenvalues of the operators $T_k^L(\lambda)$, $R_k(\lambda)^* R_k(\lambda)$ and $L_k(\lambda) L_k(\lambda)^*$ for $l$ large enough, associated with the eigenfunctions $Y_{kl}^1(\lambda)$ or $Y_{kl}^2(\lambda)$, components of $Y_{kl}(\lambda)$. This leads to the above results.
\end{itemize}

The third step detailed in Section \ref{InverseAngular} consists in solving an inverse problem for the angular equation (\ref{Intro-AngularEq}) from the knowledge of the generalized spherical harmonics $Y_{kl}(\lambda)$ given in (\ref{Intro-Cons2}). This can be done for two reasons. First, we only have to recover two parameters since it turns out that the angular operator (\ref{Intro-AngularOp}) only depends on $a$ and $\Lambda$. Second, the angular equation (\ref{Intro-AngularEq}) is a system of ODEs of Fuschian type having weakly singularities at the north and south poles of $\S^2$ in our coordinate system. In consequence, this equation can be explicitely solved by Frobenius method. In particular, we are able in Section \ref{Frobenius} to construct (under the form of singular power series in the variable $\theta$) the generalized spherical harmonics $Y_{kl}(\lambda)$ as the unique $L^2$ solutions of (\ref{Intro-AngularEq}). From the asymptotics of $Y_{kl}(\lambda)$ when $\theta \to 0$ (that we are able to calculate) and (\ref{Intro-Cons2}), we easily prove that $a$ and $\Lambda$ are uniquely determined. From this in turn, we deduce that all the quantites depending only on $a$ and $\Lambda$ are uniquely determined. This is the case for instance for the generalized angular momenta $\muk$ and $\tilde{\mu}_{kl}(\lambda)$ which are thus shown to coincide under the assumption (\ref{Intro-MainAssumption}). This will be used in the last step of the proof of our main Theorem.

The fourth and main idea of this paper is to use the Complex Angular Momentum (CAM) method (see \cite{Re, Ne}  for a presentation of the method in the historical setting of radial Schr\"odinger operators) to get more informations from the scattering coefficients $T(\lambda,k,\muk)$, $R(\lambda,,k,\muk)$ and $L(\lambda,,k,\muk)$ known for $(k,l) \in (\frac{1}{2} + \Z) \times \N^*$. This method - already used in \cite{DN3} in a simpler setting - consists in allowing the generalized angular momentum $\muk$ to be complex and in using the particular analytic properties of the coefficients $T(\lambda,k,z), \ R(\lambda,k,z), L(\lambda,k,z)$ with respect to $z \in \C$. One of the main steps is to prove that certain quantities such as $\frac{1}{T(\lambda,k,z)}$ are entire functions of exponential type with respect to $z$ and belong to the Nevanlinna class of analytic functions. We refer to the beginning of Section \ref{Nevanlinna} for a precise definition. The important point here is that such functions are uniquely determined by their values on a sequence $\alpha_l$ of complex numbers satisfying a M\"untz condition such as
$$
  \sum_{l} \frac{1}{1 + |\alpha_l|} = \infty.
$$
Since the generalized angular momenta $\muk$ verify the above M\"untz condition (see Appendix \ref{Estimate-mukl}) and thanks to our previous results, we can show that the Nevanlinna function $\frac{1}{T(\lambda,k,z)}$ is known not only for $z = \muk, \ k \in \frac{1}{2} + \Z$ and for all $l \in \N^*$, but for all $z \in \C$. This enlarges considerably the amount of informations we can extract from the initial transmission operator $T_k^{L/R}(\lambda)$. Similarly, we can show that $R(\lambda,k,z)$ and $L(\lambda,k,z)$ are known for all $z \in \C$. From this novel amount of informations, we can straightforwardly obtain a first uniqueness result {\it{localized in energy}}. Precisely, we prove

\begin{theorem} \label{Intro-Uniqueness-Localized}
  Let $(M,Q^2,a,\Lambda)$ and $(\tilde{M},\tilde{Q}^2,\tilde{a},\tilde{\Lambda})$ be the parameters of two a priori different KN-dS black holes. We denote by $I$ a (possibly small) open interval of $\R$. Assume that for all $\lambda \in I$ and for two different $k \in \frac{1}{2} + \Z$, one of the following conditions holds
\begin{eqnarray*}
  L(\lambda,k,\muk) &=& \tilde{L}(\lambda,k,\muk),  \\
  R(\lambda,k,\muk) &=& \tilde{R}(\lambda,k,\muk),
\end{eqnarray*}
for all $l \in \mathcal{L}_k$ where the sets $\mathcal{L}_k \subset \N^* $ satisfy a M\"untz condition
$$
  \sum_{l \in \mathcal{L}_k} \frac{1}{l} = \infty.
$$
Then, we have
\begin{equation} \label{Intro-Uniqueness-ac}
  a(x) = \tilde{a}(x), \quad c(x,k) = \tilde{c}(x,k).
\end{equation}
In particular, using the particular form of the potential $a(x)$ and $c(x,k)$, we can show that the parameters of the two black holes coincide.
\end{theorem}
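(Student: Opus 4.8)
\emph{Sketch of proof.} The plan is to feed the hypothesis into the Complex Angular Momentum machinery attached to the radial equation (\ref{Intro-RadialEq}): first to promote the assumed equalities -- known only along the M\"untz sequence $\{\muk:\,l\in\mathcal{L}_k\}$ -- into equalities valid for every complex generalized angular momentum; then to reconstruct $a(x)$ and $c(x,k)$ from the corresponding one-dimensional inverse problem at the fixed energy $\lambda$, using the freedom $\lambda\in I$ precisely to decouple $a$ from $c$; and finally to read off the parameters from the explicit form of those two functions.

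\emph{Step 1 (from a M\"untz sequence to all $z\in\C$).} Fix $\lambda\in I$ and one of the two admissible values of $k$. By Appendix \ref{Estimate-mukl} there are $0<c_{0}<C_{0}$ with $c_{0}\,l\le\muk\le C_{0}\,l$ for $l$ large, so $\sum_{l\in\mathcal{L}_k}\frac{1}{\muk}=+\infty$ as well. On the other hand, the study of the Jost solutions of (\ref{Intro-RadialEq}) shows that $z\mapsto\frac{1}{\Tkz}$, $z\mapsto\Lkz$ and $z\mapsto\Rkz$ extend to entire functions of $z$ of exponential type which, suitably normalised, belong to the Nevanlinna class of a half-plane; this class is an algebra, and any element of it vanishing along a M\"untz sequence vanishes identically. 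Applied to $\Lkz-\tilde L(\lambda,k,z)$ (or to $\Rkz-\tilde R(\lambda,k,z)$), which by hypothesis vanishes at each $z=\muk$ with $l\in\mathcal{L}_k$, this forces the difference to vanish for \emph{all} $z\in\C$. Combining with the relations among $T$, $L$, $R$ (all built from the same pair of Jost solutions) and with the unitarity of the scattering matrix on the real axis, one then obtains equality of the full radial scattering data of (\ref{Intro-RadialEq}) for the two black holes, for every $z\in\C$, every $\lambda\in I$, and for each of the two values of $k$.

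\emph{Steps 2--3 (radial inverse problem; decoupling $a$ and $c$; parameters).} For each fixed $\lambda\in I$ and each of the two values of $k$, Step 1 provides the transmission and reflection coefficients of the one-dimensional Dirac system (\ref{Intro-RadialEq}) for \emph{all} coupling constants $z\in\C$ -- exactly the input of the fixed-energy CAM inverse result for radial Dirac operators proved below, in the spirit of \cite{DN3}. From the large-$|z|$ asymptotics of the radial Jost functions, that result reconstructs, up to a translation of the radial variable, the scalar ``radial profile'' of the system (\ref{Intro-RadialEq}), i.e. the function
\[
  F_\lambda(x)=\frac{\lambda-c(x,k)}{a(x)},
\]
and yields $F_\lambda=\tilde F_\lambda(\cdot+t_\lambda)$ for some $t_\lambda\in\R$, where $\tilde F_\lambda$ is the corresponding function of the second black hole. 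Since $F_\lambda$ is affine in $\lambda$, matching the $\lambda$-linear parts gives $\frac{1}{a(x)}=\frac{1}{\tilde a(x+t_\lambda)}$ for all $x$; as $\tilde a$ is exponentially decaying (hence not periodic), $t_\lambda$ is independent of $\lambda\in I$, say $t_\lambda\equiv t$, whence $a=\tilde a(\cdot+t)$, and then matching the $\lambda$-independent parts gives $c(\cdot,k)=\tilde c(\cdot+t,k)$. Normalising the origin of $x$ removes $t$ and yields (\ref{Intro-Uniqueness-ac}). Finally, from the explicit expressions
\[
  a(x)=\frac{\sqrt{\Delta_r}}{r^{2}+a^{2}},\qquad c(x,k)=\frac{aEk+qQr}{r^{2}+a^{2}},\qquad\frac{dx}{dr}=\frac{r^{2}+a^{2}}{\Delta_r},
\]
together with the knowledge of $c(\cdot,k)$ for two distinct values of $k$ (which gives separately the functions $\frac{aE}{r^{2}+a^{2}}$ and $\frac{qQr}{r^{2}+a^{2}}$), one identifies $M$, $a$, $Q^{2}$, $\Lambda$: the positions $r_{\pm}$ of the horizons and the exponential decay rates of $a(x)$ as $x\to\pm\infty$ (the surface gravities), together with the coefficients of these rational functions of $r$, pin down the four parameters. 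This is the ``in particular'' assertion.

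\emph{Main obstacle.} Given the analytic properties of the radial scattering coefficients in $z$ (Nevanlinna membership, exponential type, and the attendant uniqueness theorem), the argument above is short, as announced. The substantive work lies one level below, in the fixed-energy inverse result for the radial system (\ref{Intro-RadialEq}): establishing those analytic properties and, above all, the sharp $|z|\to\infty$ asymptotics of the radial Jost functions on which the reconstruction of $F_\lambda$ rests (a singular, WKB-type perturbation analysis), together with the careful check that the Nevanlinna uniqueness theorem really applies to the relevant \emph{differences} of coefficients and not merely to each coefficient separately. I expect this asymptotic analysis of the Jost functions to be the principal obstacle.
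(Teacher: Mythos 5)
Your Step 1 is exactly the paper's first move: translate the hypothesis into the unphysical coefficients, use the growth estimate $c_0 l\le\muk\le C_0 l$ to verify the M\"untz condition for the sequence $\muk$, and invoke the Nevanlinna uniqueness theorem (via Proposition \ref{Uniqueness}) to upgrade the equality of $\hat L$ or $\hat R$ on the M\"untz set to equality of the full matrix $A_L(\lambda,k,z)$ for all $z\in\C$ (the unitarity relations $a_{L1}\overline{a_{L1}}-a_{L3}\overline{a_{L3}}=1$ are indeed what lets one pass from the ratio $a_{L3}/a_{L1}$ to the individual entries). From there, however, you and the paper diverge. You propose to run, for each fixed $\lambda\in I$, the full fixed-energy CAM reconstruction (Liouville transformation, large-$z$ Bessel asymptotics of the Jost functions) to recover $F_\lambda(x)=\frac{\lambda-c(x,k)}{a(x)}$, and then decouple $a$ from $c$ using the affine dependence on $\lambda$. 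The paper's proof of this particular theorem is far more elementary and bypasses all of that: once $a_{L2}(\lambda,k,z)=e^{2i\lambda c}e^{-2i\Omega_+(k)c}\tilde a_{L2}(\lambda,k,z)$ holds for all $z$, one simply equates the coefficients of $z^{1}$ in the two entire power series (Lemma \ref{AL-Analytic}), which are precisely the Fourier transforms $\hat q_k(2\lambda)$ and $\hat{\tilde q}_k(2\lambda)$ of the potentials $q_k(x)=a(x)e^{2iC(x,k)}$. Since $\lambda$ ranges over an interval and the potentials decay exponentially, $\hat q_k$ is analytic in a strip and the identity propagates to all $\lambda\in\R$; inverting gives $q_k(x)=e^{-2i\Omega_-(k)c}\tilde q_k(x-c)$, whose modulus yields $a=\tilde a(\cdot-c)$ and whose logarithmic derivative (imaginary part) yields $c(\cdot,k)=\tilde c(\cdot-c,k)$ directly --- no decoupling in $\lambda$ is needed, and the hypothesis at a single $k$ already determines both functions. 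The two distinct values of $k$ enter only at the very end, to split $c(x,k)=\frac{aEk+qQr}{r^2+a^2}$ into its two rational pieces and identify the parameters. So your route is workable in principle but imports the heavy asymptotic machinery of Sections \ref{AsymptoticsSD}--\ref{Inverse}, which the energy interval makes unnecessary; this is exactly the trade-off the paper exploits (and why it can prove a localized-in-energy result from the reflection coefficients without any Jost-function asymptotics).

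One point in your Steps 2--3 needs repair even within your own scheme: the fixed-energy inverse result does not a priori produce a \emph{translation} $t_\lambda$; it produces a diffeomorphism $\tilde x=\tilde h\circ h^{-1}(x)$ built from the Liouville transforms of $a$ and $\tilde a$. That this diffeomorphism reduces to a translation is itself a consequence of comparing the identity $\frac{\lambda-\tilde c(\tilde x,k)}{\tilde a(\tilde x)}=\frac{\lambda-c(x,k)}{a(x)}$ at two distinct energies (which forces $\tilde a\circ\tilde h=a\circ h$, hence $\tilde h=h+\sigma$). You do have two energies at your disposal, so the gap is closable, but as written the claim ``up to a translation'' begs the question. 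Your identification of the hard part (the $|z|\to\infty$ asymptotics of the Jost functions) is accurate for the fixed-energy Theorem \ref{Uniqueness-Fixed}, but is not the obstacle for the theorem actually at hand.
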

We refer to Section \ref{Complexification} for a proof of the above results on the CAM method.

Eventually, to obtain a uniqueness result from the scattering coefficients at a \emph{fixed} energy, we need more informations on the properties of the scattering coefficients with respect to the complexified angular momentum $z$. In particular, using a convenient change of variable (a Liouville transformation) and the corresponding form of the radial Dirac equation (\ref{Intro-RadialEq}), we shall obtain precise asymptotics of the scattering data $T(\lambda, k, z), \ R (\lambda, k ,z), \ L (\lambda, k ,z)$ when $z \to +\infty$. These asymptotics, the results of the previous CAM method together with a standard technique (as exposed first in \cite{FY} and used in this setting in \cite{DN3}) will lead to the unique determination of certain scalar functions depending on the radial variable (up to certain diffeomorphisms). Once again, from the explicit form of these functions, we prove the uniqueness of the parameters of the two black holes, that is our main Theorem \ref{Intro-Main}. All these last results will be proved in Sections \ref{AsymptoticsSD} and \ref{Inverse}.


\Section{Kerr-Newman-de-Sitter black holes}

Kerr-Newman-de-Sitter black holes (KN-dS) are exact solutions of the Einstein-Maxwell equations that describe electrically charged rotating black holes with positive cosmological constant. In Boyer-Lindquist coordinates, the exterior region of a KN-dS black hole is described by
the four-dimensional manifold
$$
  \M = \mathbb{R}_{t} \times ]r_-,r_+[_r \times \S_{\theta,\varphi}^{2},
$$
equipped with the Lorentzian metric (having signature $(+,-,-,-)$)
\begin{equation} \label{Metric}
  g = \frac{\Delta_r}{\rho^2} \left[ dt - \frac{a\sin^2 \theta}{E} \, d\varphi \right]^{2} - \frac{\rho^2}{\Delta_r} dr^{2} - \frac{\rho^2}{\Delta_\theta} d\theta^{2} - \frac{\Delta_\theta \sin^2 \theta}{\rho^2} \left[ a \,dt - \frac{r^2 + a^2}{E} d\varphi \right]^{2},
\end{equation}
where
$$
  \rho^2 = r^2 + a^2 \cos^2 \theta, \quad E = 1 + \frac{a^2 \Lambda}{3},
$$
$$
 \Delta_r = (r^2 + a^2) (1 - \frac{\Lambda r^2}{3}) - 2Mr + Q^2, \quad \Delta_\theta = 1 + \frac{a^2 \Lambda \cos^2 \theta}{3}.
$$

The three parameters $M>0$, $Q \in \R$ and $a \in \R$ appearing above are interpreted as the mass, the electric charge and the angular momentum per unit mass of the black hole whereas the parameter $\Lambda > 0$ is the cosmological constant of the universe. The electromagnetic potential $1$-form $A$ solution of the Maxwell equation is also given by
\begin{equation} \label{ElecPot}
  A = A_a dx^a = - \frac{Q r}{\rho^2} \Big(dt - \frac{a \sin^2 \theta}{E} d\varphi \Big).
\end{equation}

For later use, we give another expression for the metric $g$ where the coordinates $1$-forms have been isolated.
\begin{eqnarray} \label{MetricG}
  g & = & \frac{\Delta_r - \Delta_\theta a^2 \sin^2\theta}{\rho^2} dt^2 - \frac{2a\sin^2\theta}{E \rho^2} (\Delta_r - \Delta_\theta a^2 \sin^2\theta) dt d\varphi \\
    &   & \qquad - \frac{\rho^2}{\Delta_r} dr^{2} - \frac{\rho^2}{\Delta_\theta} d\theta^{2} - \frac{\sin^2\theta}{E^2 \rho^2} \left( \Delta_\theta (r^2 + a^2)^2 - \Delta_r a^2 \sin^2\theta \right) d\varphi^2.
\end{eqnarray}

The geometry of $(\M, g)$ crucially depends on the possible sign taken by the function $\Delta_r$. We are interested here in describing KN-dS black holes, \textit{i.e.} in the case where the function $\Delta_r$ has three simple positive roots $0<r_c<r_-<r_+$ and thus only a negative one $r_n = -(r_c + r_- + r_+) <0$. As studied in \cite{BC1}, this is the case if the following conditions are fulfilled
\begin{eqnarray}
  (1) & \frac{a^2 \Lambda}{3} \leq 7 - 4 \sqrt{3}, \label{Cond-Lambda} \\
  (2) & M^-_{crit} < M < M^+_{crit}, \label{Cond-M}
\end{eqnarray}
with
\begin{eqnarray*}
  M^\pm_{crit} & = & \frac{1}{\sqrt{18 \Lambda}} \left( \Big( 1 - \frac{a^2 \Lambda}{3} \Big) \pm \sqrt{\Big( 1 - \frac{a^2 \Lambda}{3} \Big)^2 - 4 \Lambda ( a^2 + Q^2)} \,\right)^2 \\ & & \quad \left( 2 \Big( 1- \frac{a^2 \Lambda}{3} \Big)^2 \mp \sqrt{\Big( 1 - \frac{a^2 \Lambda}{3} \Big)^2 - 4 \Lambda (a^2 + Q^2)} \, \right).
\end{eqnarray*}

In what follows, we shall always assume that the conditions (\ref{Cond-Lambda}) and (\ref{Cond-M}) are fulfilled. In this case, the hypersurfaces $\{r = r_j\}, \ j = +, -, c$ appear as \emph{singularities} of the metric (\ref{Metric}). We recall that they are merely \emph{coordinate singularities} that could be easily removed using another coordinate systems. The hypersurface $\{r=r_c\}$ is called the Cauchy horizon whereas the hypersurfaces $\{r=r_-\}$ and $\{r=r_+\}$ are called the event and cosmological horizons respectively. It can be shown (\cite{AM, W}) that they are regular null hypersurfaces that can crossed one way but would require speed greater than that of light to be crossed the other way. Hence their name: horizons.

\begin{remark}
  In contrast with the "coordinate" singularities $\{r = r_j\}, \ j=c,\pm$, there is a \emph{curvature singularity} located at each point of the ring $\left\{r = 0, \ \theta = \frac{\pi}{2} \right\}$. Here some scalars obtained by contracting the Riemann tensor explode.
\end{remark}

\begin{remark}
  When $M = M^-_{crit}$, the Cauchy and event horizons coalesce and the metric (\ref{Metric}) represents an extremal KN-dS black hole. Similarly, when $M = M^+_{crit}$, the event and cosmological horizons coalesce and we also obtain a black hole solution. We won't study these special cases in this paper and thus assume (\ref{Cond-M}) with stric inequalities. Also, when the parameter $a = 0$, we recover the family of \emph{spherically symmetric Reissner-Nordstr\"om-de-Sitter} black holes and when the parameters $a = Q = 0$, we obtain the more familiar family of \emph{Schwarzschild-de-Sitter} black holes. The former class of solutions has been the object of the paper \cite{DN3}.
\end{remark}

In this paper, we shall only consider the exterior region of a KN-dS black hole, that is the region lying between the event and cosmological horizons $\{ r_- < r < r_+ \}$. Note that the function $\Delta_r$ is positive there. Let us list a few geometrical properties of this region. \\
\begin{itemize}
\item It is a \emph{"stationary" axisymmetric} spacetime. Indeed the vector fields $\partial_t$ and $\partial_\varphi$ are clearly Killing vectors that generate the time-translation and cylindrical symmetries of the spacetime. We write "stationary" between commas since, strictly speaking, the stationarity only happens in the region far from the event and cosmological horizons (see the third item below). Observe that the lack of spherical symmetry appears in the crossed term $(\dots) \, dt d\varphi$ in (\ref{MetricG}) and the dependence of the metric $g$ with respect to the azimuthal variable $\theta$. The family of KN-dS black holes is thus a non trivial generalization of the family of spherically symmetric Reissner-Nordstr\"om black holes studied in \cite{DN3}.

\item The exterior region $\M$ is \emph{globally hyperbolic} meaning that $\M$ is foliated by Cauchy hypersurfaces, given here by the level hypersurfaces $\Sigma_t = \{t = \textrm{const}.\}$ of the time function $t$. This, together with the stationarity of the spacetime, entail that the Dirac equation we shall study in this paper will naturally take the form of an evolution equation with initial data on the spacelike hypersurface $\Sigma_0$ and with associated Dirac Hamiltonian independent of the time coordinate $t$. General existence Theorems on hyperbolic systems of PDEs by Leray assert that the corresponding Cauchy problem will be well posed.

\item The vector field $\partial_t$ is \emph{timelike} except in two toroidal regions surrounding the event and cosmological horizons: the \emph{ergospheres} defined as the regions where $r_- < r < r_+$ and $\Delta_r - \Delta_\theta a^2 \sin^2 \theta < 0$. Conversely the vector fields $\partial_r, \partial_\theta$ are \emph{spacelike} everywhere in the exterior region whereas $\partial_\varphi$ is \emph{spacelike} except within the two ergospheres where it becomes \emph{timelike}. Note that the presence of ergospheres indicates the lack of stationarity of the exterior region of a KN-dS black hole, that is there exists no globally defined Killing vector field that is timelike in $\M$. This entails analytical difficulties when studying fields with integral spin since we cannot find any quantities conserved along the evolution such that the corresponding dynamics is then generated by a selfadjoint Hamiltonian. This is also the root of the superradiance phenomenon that quantifies the possibility for fields with integral spin to extract (rotational) energy from the black hole. We shall avoid this difficulty here by considering Dirac fields (having spin $1/2$) for which no superradiance phenomenon occurs.

\item The spacetime $(\M, g)$ is described from the point of view of \emph{stationary observers}, that is the class of observers who move on worldlines of constant $r$ and $\theta$ and with a uniform angular velocity $\omega$, i.e. $\varphi = \omega t + const$. When $\omega = 0$, these observers are called \emph{static}. Choosing $r_- \, << \, r \, << \, r_+$, that is a region far from the event and cosmological horizons and their ergospheres, we note that the function $t$ is proportional to the proper time of such observers. Thus \emph{the variable $t$ measures the perception of time of stationary observers} located far from the event and cosmological horizons. This is the point of view we shall adopt in this paper.

\item The exterior region $\M$ is of Petrov type D, that is it possesses exactly two principal null vectors given by (see \cite{AM, SU})
\begin{equation} \label{NullVectors}
  V^\pm = \frac{r^2 + a^2}{\Delta_r} \left( \partial_t + \frac{a E}{r^2 + a^2} \partial_\varphi \right) \pm \partial_r,
\end{equation}
that generate the whole family of principal null geodesics. The spacetime is entirely foliated by this family of shearfree null geodesics which should be thought as the trajectories of light-rays aimed at - or coming from - the event and cosmological horizons. As a consequence of the term $1/\Delta_r$ in the expression (\ref{NullVectors}), we see that that the event and cosmological horizons are never reached in a finite time $t$ by the principal null geodesics. We thus point out that these horizons are \emph{asymptotic regions} of spacetime from the point of view of the stationary observers defined above. As a consequence, we stress the fact that we won't need to add boundary conditions at the event and cosmological horizons to study the Cauchy problem of Dirac fields evolving the exterior region of a KN-dS black hole.
\end{itemize}

In order to simplify the later analysis and encode the fact that the event and cosmological horizons are asymptotic regions in the Boyer-Linquist coordinates, we introduce a new radial variable, the tortoise or Regge-Wheeler type coordinate $x$, by the requirement
$$
  \frac{dx}{dr} = \frac{r^2 + a^2}{\Delta_r}.
$$
By integration, we find that
\begin{equation} \label{RW}
  x = \frac{1}{2\kappa_-} \ln(r-r_-) + \frac{1}{2\kappa_+} \ln(r_+ - r) + \frac{1}{2\kappa_c} \ln(r-r_c) + \frac{1}{2\kappa_n} \ln(r-r_n) + \,c \ \in \R,
\end{equation}
where the $\kappa_j$'s are the quantities given by
$$
  \kappa_j = \frac{\Delta'_r(r_j)}{2(r_j^2 + a^2)}, \ j = -,+,c,n,
$$
and $c$ is any constant of integration. The constants $\kappa_\pm$ are called the surface gravities of the event and cosmological horizons and will be of importance later in this paper. Also, the constant of integration $c$ will play an important role in the definition of the scattering matrix at fixed energy (see Subsection \ref{Direct-Stat-Scat}).

In the new coordinates sytem $(t,x,\theta, \varphi)$, we notice that the event and cosmological horizons are then pushed away to $\{x = -\infty\}$ and $\{ x = +\infty\}$ respectively. From (\ref{NullVectors}), we also note that the principal null directions are then expressed as
\begin{equation} \label{NullVectorsX}
  V^\pm = \frac{r^2 + a^2}{\Delta_r} \left( \partial_t \pm \partial_x + \frac{a E}{r^2 + a^2} \partial_\varphi \right).
\end{equation}
Hence, in the $t - x$ plane, the principal null geodesics are simply the straightlines $x = \pm t + c_0$ with $c_0$ any constant, mimicking in this plane the Minkowski spacetime. The addition of the $\partial_\varphi$ term in (\ref{NullVectorsX}) shows that the principal geodesics spin around the event and cosmological horizons when they get closed to them. Note that the speed of rotation of the principal geodesics is given by $\Omega_- = \frac{aE}{r_-^2 + a^2}$ at the event horizon whereas it is $\Omega_+ = \frac{aE}{r_+^2 + a^2}$ at the cosmological horizon.

Summarising, we shall then work on the $4$-dimensional manifold $\B = \R_t \times \R_x \times \S^2_{\theta, \varphi}$ equipped with the Lorentzian metric
$$
  g = \frac{\Delta_r}{\rho^2} \Big[ dt - \frac{a\sin^2 \theta}{E} \, d\varphi \Big]^{2} - \frac{\rho^2 \Delta_r}{(r^2 + a^2)^2} dx^2 - \frac{\rho^2}{\Delta_\theta} d\theta^{2} - \frac{\Delta_\theta \sin^2 \theta}{\rho^2} \Big[ a \,dt - \frac{r^2 + a^2}{E} d\varphi \Big]^{2}.
$$
The manifold $\B$ is a "stationary" axisymmetric globally hyperbolic spacetime having two asymptotic regions: the event $\{x = -\infty\}$ and cosmological $\{x = +\infty\}$ horizons. These horizons have particular geometries of asymptotically hyperbolic type when restricted to the spacelike hypersurfaces $\Sigma_t$ (see Remark \ref{AsympHyp} below). We shall now study how massless charged Dirac fields evolve in this spacetime and analyse precisely their scattering properties.


\Section{The massless charged Dirac equation}


\subsection{Hamiltonian formulation} \label{Hamilton-Form}

We use the form of the massless charged Dirac equation in its Hamiltonian formulation obtained in \cite{BC1}, formulae (4.1)-(4.7). This form of the equation is well suited to understand and use the separation of variables mentionned in the introduction. Note also that these authors calculated an expression for the "massive" charged Dirac equation on a KN-dS background. In this paper however, we consider the case of "massless" charged Dirac fields only to avoid additional technical difficulties in the later analysis.

Let us introduce some notations. The matrices $\Ga, \Gb, \Gc$ denote the usual $2 \times 2$ Dirac matrices that satisfy the anticommutation relations
\begin{equation} \label{AntiCom}
  \Gamma^i \Gamma^j + \Gamma^j \Gamma^i = 2 \delta_{ij} I_2, \quad \forall i,j=1,2,3,
\end{equation}
where $\delta_{ij}$ stands for the Kronecker symbol. More specifically, we shall work with the following representations of the Dirac matrices
\begin{equation} \label{DiracMatrices}
  \Ga = \left( \begin{array}{cc} 1&0 \\0&-1 \end{array} \right), \quad \Gb = \left( \begin{array}{cc} 0&1 \\1&0 \end{array} \right), \quad \Gc = \left( \begin{array}{cc} 0&i \\-i&0 \end{array} \right).
\end{equation}
Note that this is not the representation used in \cite{BC1}. This leads simply to an expression of the Dirac equation that is unitarily equivalent to the one in \cite{BC1} and particularly convenient for our purpose. We shall also denote by $D_x, D_\theta, D_\varphi$ the differential operators $-i\partial_x, -i \partial_\theta, -i\partial_\varphi$.

The massless charged Dirac fields are represented by $2$-components spinors with finite "energy" belonging to
$$
  L^2\left(\R\times \S^2, \frac{\sin\theta}{\sqrt{\Delta_\theta}} dx d\theta d\varphi; \C^2 \right)
$$
and the Hamiltonian form of the Dirac equation reads (see \cite{BC1}).
$$
  i \partial_t \phi =  \mathbb{D} \phi,
$$
with
\begin{equation} \label{Op-D}
  \mathbb{D} = \left[ \Big( 1 - \frac{\Delta_r}{\Delta_\theta} \frac{a^2 \sin^2\theta}{(r^2 + a^2)^2} \Big)^{-1} \Big( I_2 - \frac{\sqrt{\Delta_r}}{\sqrt{\Delta_\theta}} \frac{a \sin\theta}{r^2 + a^2} \Gc \big) \right] \mathbb{D}_0,
\end{equation}
and
\begin{eqnarray}
  \mathbb{D}_0 & = & \Ga D_x  + \frac{\sqrt{\Delta_r}}{r^2 + a^2} \left[ \sqrt{\Delta_\theta} \left( \Gb \Big( D_\theta -i\frac{\cot \theta}{2}\Big)  + \Gc \frac{E}{\Delta_\theta \sin\theta} D_\varphi \right) \right] \nonumber \\
               &   & \hspace{2cm} + \left[ \frac{aE}{r^2 + a^2} D_\varphi + \frac{qQr}{r^2 + a^2} \right]. \label{Op-D0}
\end{eqnarray}
Here the parameter $q$ is the electric charge of the Dirac fields.

Let us introduce further notations in order to obtain a synthetic form of the above Hamiltonian. We introduce the scalar (differential) potentials
\begin{equation} \label{Pot-a-c}
  a(x) = \frac{\sqrt{\Delta_r}}{r^2 + a^2}, \quad c(x, D_\varphi) = \frac{aE}{r^2 + a^2} D_\varphi + \frac{qQr}{r^2 + a^2},
\end{equation}
and the matrix-valued multiplication operator
\begin{equation} \label{Pot-J}
  J = I_2 + \alpha(x,\theta) \Gc, \quad \alpha(x,\theta) = \frac{\sqrt{\Delta_r}}{\sqrt{\Delta_\theta}} \frac{a \sin\theta}{r^2 + a^2} \Gc.
\end{equation}
It is shown in \cite{BC1}, (4.16) that $\eta = \ds\sup_{x, \theta} \alpha(x,\theta) < 1$. Since $\| \Gc \|_\infty = 1$, we conclude that the operator $J$ is invertible and a short calculation shows that
\begin{equation}\label{Jmoinsun}
  J^{-1} = (1 - \alpha^2) (I_2 - \alpha \Gc) = \Big( 1 - \frac{\Delta_r}{\Delta_\theta} \frac{a^2 \sin^2\theta}{(r^2 + a^2)^2} \Big)^{-1} \Big( I_2 - \frac{\sqrt{\Delta_r}}{\sqrt{\Delta_\theta}} \frac{a \sin\theta}{r^2 + a^2} \Gc \big).
\end{equation}
Hence $J^{-1}$ is precisely the term in front of $\mathbb{D}_0$ in (\ref{Op-D}) and the Hamiltonians $\mathbb{D}, \mathbb{D}_0$ can be written as
$$
  \mathbb{D} = J^{-1} \mathbb{D}_0, \quad \mathbb{D}_0 = \Ga D_x + a(x) \left[ \sqrt{\Delta_\theta} \left( \Gb \Big( D_\theta -i\frac{\cot \theta}{2}\Big)  + \Gc \frac{E}{\Delta_\theta \sin\theta} D_\varphi \right) \right] + c(x,D_\varphi).
$$

Let us continue our simplifications. In order to work in a Hilbert space that does not depend explicitly on the parameters of the black hole (recall that we want to identify the scattering matrices associated to two a priori different black holes in the later inverse problem), we consider the weighted spinor
$$
  \psi = \frac{\sin\theta}{\Delta_\theta^{1/4}} \phi.
$$
Hence, the new spinor $\psi$ belongs to the Hilbert space
$$
  \H = L^2(\R \times \S^2, dx d\theta d\varphi; \, \C^2),
$$
and is easily shown to satisfy the evolution equation
\begin{equation} \label{DiracEq}
  i \partial_{t} \psi = H \psi,
\end{equation}
where the Hamiltonian $H$ is expressed as
\begin{equation} \label{Op-H}
  H = J^{-1} H_0, \qquad H_0 = \Ga D_x + a(x) H_{\S^2} + c(x, D_\varphi),
\end{equation}
Here, $H_{\S^2}$ denotes an angular Dirac operator on the $2$-sphere $\S^2$ which, in the spherical coordinates $(\theta,\varphi)$, takes the form
\begin{equation} \label{HS2}
  H_{\S^2} = \sqrt{\Delta_\theta} \left[ \Gb D_\theta + \Gb \frac{i\Lambda a^2 \sin(2\theta)}{12 \Delta_\theta}  + \frac{\Gc}{\sin{\theta}} D_\varphi + \Gc \frac{ \Lambda a^2 \sin(\theta)}{3 \Delta_\theta} D_\varphi \right].
\end{equation}

Let us make a few comments on the Dirac Hamiltonian $H$. From (\ref{Pot-J}) and (\ref{Op-H}), $H$ can be written as $H = H_0 + \alpha(x,\theta) H_0$. We shall show in (\ref{Asymp-J}) that $\sup_{\theta} |\alpha(x,\theta)|$ is exponentially decreasing as $x \to \pm \infty$. Hence $H$ can be viewed as a "small", non spherically symmetric perturbation of order $1$ of the Hamiltonian $H_0$ which in turn is composed of
\begin{itemize}
\item A differential operator $\Ga D_x + \frac{a E}{r^2 + a^2} D_\varphi$ which, in view of (\ref{NullVectorsX}) and (\ref{DiracMatrices}), simply corresponds to transport along the outgoing and incoming principal null geodesics.
\item An angular differential operator $H_{\S^2}$ weighted by a scalar potential $a(x)$ (depending on the radial variable only). Note that $H_{\S^2}$ is a slight perturbation of the usual Dirac operator on $\S^2$ whose expression in our spherical coordinates and with our choice of weight on the spinor is
\begin{equation} \label{DS2}
  \DS = \Gb D_\theta + \frac{\Gc}{\sin{\theta}} D_\varphi.
\end{equation}
\item A scalar perturbation $\frac{qQr}{r^2 + a^2}$ caused by the interaction between the electric charge of the black hole and that of the Dirac fields.
\end{itemize}

Let us emphasize that the lack of spherical symmetry of the black hole is encoded both by the operator $J$ which is a slight non spherically symmetric perturbation of the identity $I_2$ and the presence of an extra differential operator $D_\varphi$ in the expression (\ref{Op-H}) of $H_0$. However, we can use the "cylindrical" symmetry of the problem to simplify the equation. Decomposing the Hilbert space $\H$ onto the angular modes $\{e^{ik\varphi}\}_{k \in 1/2 + \Z}$, i.e.
\begin{equation} \label{H-k}
  \H = \oplus_{k \in 1/2 + \Z} \H_k, \qquad \H_k = L^2(\R \times (0,\pi), dx d\theta; \C^2)
\end{equation}
it is clear that the Hamiltonian $H$ lets invariant each $\H_k$ and we have
\begin{equation} \label{Hk}
  H_{|\H_k} := H^k = J^{-1} H_0^k, \qquad H_{0 |\H_k} := H_0^k = \Ga D_x + a(x) H_{\S^2}^k + c(x,k).
\end{equation}
Thus the differential operator $c(x,D_\varphi)$ in (\ref{Op-H}) becomes $c(x,k), \ k \in 1/2 + \Z$ and can be treated as a mere potential.

To understand the scattering properties of the Dirac fields, it is necessary to know the asymptotics of the different potentials entering in the expression of the Hamiltonian $H^k$. We first introduce the notations
\begin{equation} \label{Omega}
  \Omega_\pm(D_\varphi) = \frac{aE D_\varphi + qQr_\pm}{r^2_\pm + a^2}, \quad \Omega_\pm(k) = \frac{aEk + qQr_\pm}{r^2_\pm + a^2}.
\end{equation}
Using (\ref{RW}), (\ref{Pot-a-c}), (\ref{Pot-J}) and (\ref{Omega}), it is straightforward to show that the asymptotics of the scalar potentials $a(x), c(x,k)$ and the matrix-valued potential $J(x,\theta)$ are given by
\begin{eqnarray} \label{Asymp-a}
a(x) &=& a_{\pm} e^{\kappa_{\pm}x} + O( e^{3\kappa_{\pm}x}), \ x \rightarrow \pm \infty, \\
a'(x) &=& a_{\pm} \kappa_{\pm} e^{\kappa_{\pm}x} + O( e^{3\kappa_{\pm}x}), \ x \rightarrow \pm \infty,
\end{eqnarray}
and for $\forall k \in 1/2 + \Z$,
\begin{eqnarray} \label{Asymp-c}
c(x,k) &=& \Omega_{\pm}(k) + c_{\pm} e^{2\kappa_{\pm}x} + O( e^{4\kappa_{\pm}x}), \ x \rightarrow \pm \infty, \\
c'(x,k) &=& 2\kappa_{\pm} c_{\pm} e^{2\kappa_{\pm}x} + O( e^{4\kappa_{\pm}x}), \ x \rightarrow \pm \infty,
\end{eqnarray}
and
\begin{equation} \label{Asymp-J}
\sup_{\theta} \|J(x,.) - I_2\|_\infty  =  \sup_\theta |\alpha(x,\theta)| = j_{\pm}  e^{\kappa_{\pm}x} + O( e^{3\kappa_{\pm}x}), \ x \rightarrow \pm \infty,
\end{equation}
where the quantities $a_\pm, \ c_\pm, \ j_\pm$ are constants that only depend on the parameters of the black hole. We won't need an explicit expression for these constants in the next analysis.

\begin{remark} \label{AsympHyp}
  Since the surface gravities $\kappa_- > 0$ and $\kappa_+ < 0$, we see that the potentials $a(x)$, $c(x,k)$ and $\sup_{\theta} \|J-I_2\|_\infty (x)$ decay exponentially at both horizons $\{ x \to \pm \infty \}$. This exponential decay of the potentials reflect the geometry of asymptotically hyperbolic type near the horizons as explained in the introduction.
\end{remark}

We now summarize the spectral results on the Hamiltonians $(H, H_0)$ obtained in \cite{BC1, BC2}. We first recall that the spinors $\psi$ satisfying (\ref{DiracEq}) belong to the "energy" space $L^2(\R \times \S^2, dx d\theta d\varphi; \, \C^2)$. We shall write
$$
  \H = L^2(\R \times \S^2, dx d\theta d\varphi; \, \C^2),
$$
when this space is equipped with the usual $L^2$ inner product $(.,.)$, whereas we write
$$
  \G = L^2(\R \times \S^2, dx d\theta d\varphi; \, \C^2),
$$
when it is equipped with the \emph{modified} inner product $<.,.> = (.,J.)$. Then $\H$ and $\G$ are Hilbert spaces with equivalent norms since $J$ is a bounded invertible matrix-valued potential. We have \cite{BC1}

\begin{theorem} \label{Spectra-H-H0}
The Hamiltonians $H_0$ and $H$ are selfadjoint on $\H$ and $\G$ respectively on their natural domains
$$
  D(H) = D(H_0) = \{ \psi \in L^2(\R \times \S^2, dx d\theta d\varphi; \, \C^2), \ H_0 \psi \in L^2(\R \times \S^2, dx d\theta d\varphi; \, \C^2) \}.
$$
Moreover
\begin{equation} \label{AC-Spectrum}
  \sigma(H_0) = \sigma_{ac}(H_0) = \R, \qquad \sigma(H) = \sigma_{ac}(H) = \R.
\end{equation}
In particular
\begin{equation} \label{PP-Spectrum}
  \sigma_{pp}(H_0) = \sigma_{pp}(H) = \emptyset.
\end{equation}
\end{theorem}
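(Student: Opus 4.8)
The plan is to establish all the statements for $H_0$ on $\H$ first, by reducing it to a countable family of one-dimensional Dirac systems, and then to transfer the conclusions to $H$ on $\G$ by a unitary equivalence.

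First I would use the cylindrical decomposition (\ref{H-k}) to work on a fixed angular mode, where $c(x,D_\varphi)$ becomes the bounded self-adjoint multiplication operator $c(x,k)$ (bounded because, by (\ref{Asymp-c}), it is continuous and converges to $\Omega_\pm(k)$ at $\pm\infty$). Since $\Ga$ anticommutes with $\Gb$ and $\Gc$, it anticommutes with the angular operator $H_{\S^2}^k$; the latter is known to be essentially self-adjoint on $C_0^\infty((0,\pi);\C^2)$ with purely discrete spectrum (this is the angular Dirac operator, solved explicitly by Frobenius in Section \ref{Frobenius}), and its eigenvalues therefore come in pairs $\pm\mu$, with $\Ga$ mapping the $\mu$-eigenspace onto the $(-\mu)$-eigenspace. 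Consequently each two-dimensional subspace $\mathrm{span}\{\psi,\Ga\psi\}$ over $\R_x$ is left invariant by $\Ga D_x$, by $a(x)H_{\S^2}^k$ and by $c(x,k)$, and on it $H_0^k$ reduces — after a constant change of basis — to a one-dimensional Dirac operator of the form $\Ga D_x + \mu\, a(x)\,\Gb + c(x,k)$ on $L^2(\R;\C^2)$, with $a(x)$ exponentially decaying at both ends and $c(x,k)\to\Omega_\pm(k)$ as $x\to\pm\infty$. For each such reduced operator $\Ga D_x$ is self-adjoint on $H^1(\R;\C^2)$ and the potential is a bounded self-adjoint matrix multiplication operator, so Kato--Rellich gives self-adjointness on $H^1(\R;\C^2)$, which coincides with the maximal domain $\{\psi:\ H_0\psi\in L^2\}$; taking the orthogonal sum over the modes $(k,\mu)$ yields self-adjointness of $H_0$ on the stated domain and essential self-adjointness on $C_0^\infty(\R\times\S^2;\C^2)$.

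For the spectrum, the asymptotic profiles of the reduced operators at the two horizons are $\Ga D_x + \Omega_\pm(k)$, i.e.\ copies of $\pm D_x$ shifted by a constant; each has purely absolutely continuous spectrum equal to $\R$ and, being massless, possesses no thresholds. Since $H_0$ differs from these profiles by coefficients decaying exponentially at $\pm\infty$ (by (\ref{Asymp-a}) and (\ref{Asymp-c})), one can run a Mourre argument with a conjugate operator built from the generators of $x$-translations adapted to the two ends; this yields a limiting absorption principle on all of $\R$, which both forbids $L^2$ eigenvalues (there being no thresholds to worry about) and, together with stability of the absolutely continuous spectrum under a short-range first-order perturbation, gives $\sigma(H_0)=\sigma_{ac}(H_0)=\R$ and $\sigma_{pp}(H_0)=\emptyset$. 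Alternatively, and more hands-on, one constructs the Jost solutions of $\Ga D_x + \mu a(x)\Gb + c(x,k)=\lambda$ with the prescribed oscillatory behaviour $e^{\pm i(\lambda-\Omega_\pm(k))x}$ at $x=\pm\infty$, notes that no solution is $L^2$ at either energy, and reads off both assertions; this is the line of argument developed in detail for $H$ in Appendix \ref{LAP}.

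To pass to $H$, note that $J=I_2+\alpha(x,\theta)\Gc$ is bounded, self-adjoint, and strictly positive because $\sup_{x,\theta}|\alpha|=\eta<1$ and $\Gc^2=I_2$; hence $J^{1/2}$ is a well-defined bounded positive multiplication operator and $U:=J^{1/2}$ is unitary from $\G$ onto $\H$. A one-line computation gives $UHU^{-1}=J^{-1/2}H_0J^{-1/2}=:\tilde H$, which is symmetric on $\H$ and self-adjoint on $D(H_0)$, since $J^{\pm1/2}$ preserves $D(H_0)$ (the commutator $[H_0,J^{\pm1/2}]$ is a bounded multiplication operator, all of $a$, $a'$, $\alpha$, $\partial_x\alpha$, $\partial_\theta\alpha$ being bounded). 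Writing $J^{-1/2}=I_2+(J^{-1/2}-I_2)$ with $J^{-1/2}-I_2$ exponentially decaying at both horizons (same rate as $\alpha$, by (\ref{Asymp-J})), one sees that $\tilde H-H_0$ is again a short-range first-order perturbation of exactly the kind handled above, so $\tilde H$ — hence $H$ on $\G$ — is self-adjoint on $D(H_0)$ with $\sigma(H)=\sigma_{ac}(H)=\R$ and $\sigma_{pp}(H)=\emptyset$. The one genuinely nontrivial ingredient in the whole argument is the spectral statement for the reduced one-dimensional systems, namely the exclusion of embedded eigenvalues and the identification of the absolutely continuous spectrum; the self-adjointness statements, the mode reductions, and the transfer from $H_0$ to $H$ are routine once the perturbations are recognized as short-range.
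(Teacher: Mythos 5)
Your overall architecture (reduce $H_0$ to one--dimensional Dirac systems, Mourre/LAP for the continuous spectrum, ODE analysis for the absence of eigenvalues) is in the spirit of what the paper does --- note that the paper actually imports this theorem from \cite{BC1,BC2} and only supplies its own Mourre-theoretic proof of the absence of singular continuous spectrum in Appendix \ref{LAP-Mourre}. However, your passage from $H_0$ to $H$ contains a genuine gap. The claim that $[H_0,J^{\pm1/2}]$ is a bounded multiplication operator is false: $J^{\pm 1/2}$ is of the form $f(\alpha)I_2+g(\alpha)\Gc$ with $g(0)=0$ but $g\not\equiv 0$, and since $\Gc$ anticommutes with the matrices $\Gb$ and $\Gc$ sitting in front of $D_\theta$ and $D_\varphi$ inside $a(x)H_{\S^2}$, the commutator contains genuinely first--order angular terms such as $a(x)\sqrt{\Delta_\theta}\,g(\alpha)\,\Gb\Gc D_\theta$. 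So neither the domain preservation $J^{\pm1/2}D(H_0)\subset D(H_0)$ nor the self-adjointness of $\tilde H$ on $D(H_0)$ is justified as written. (Self-adjointness of $H$ on $\G$ with $D(H)=D(H_0)$ is in fact immediate without any conjugation: $\langle H\psi,\phi\rangle_\G=(H_0\psi,\phi)_\H$, so the $\G$-adjoint of $H$ is computed directly from the $\H$-adjoint of $H_0$.)

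The same difficulty undermines your spectral analysis of $H$. The perturbation $J^{-1}-I_2$ (or $J^{-1/2}-I_2$) depends on $\theta$ and does not commute with $H_{\S^2}$, so $H$ does \emph{not} decompose onto the harmonics $Y_{kl}(0)$ and the mode-by-mode Mourre estimate for $H_0$ cannot be transferred by invoking ``stability of the a.c.\ spectrum under a short-range first-order perturbation'' --- there is no such black box for first-order perturbations. This is precisely the obstruction that forces the paper to build a global conjugate operator $A=R_-(x,H_{\S^2})\Ga+R_+(x,H_{\S^2})\Ga$ involving $\ln|H_{\S^2}|$, to prove estimates uniform in $(k,l)$, and to use the compactness of $\chi(H)-\chi(H_0)$ to carry the Mourre estimate from $H_0$ to $H$. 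Finally, a Mourre estimate with compact error only yields \emph{local finiteness} of the point spectrum, not its absence; the emptiness of $\sigma_{pp}$ must come from the separation of variables --- for $H$ this means decomposing a putative eigenfunction at energy $\lambda$ onto the $\lambda$-dependent harmonics $Y_{kl}(\lambda)$ of Theorem \ref{Separation} and checking via the Jost solutions that no radial solution is $L^2$ at either horizon, which is the argument of \cite{BC2}. Your Jost-function remark points in the right direction for $H_0$, but you do not carry it out for $H$, where the $\lambda$-dependence of the angular decomposition is the essential point.
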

The main point in Thm \ref{Spectra-H-H0} is the absence of pure point spectrum and singular continuous spectrum for the Hamiltonians $H_0$ and $H$. It is worth mentioning that the absence of pure point spectrum is a consequence of the separability of the Dirac equation into systems of radial and angular ODEs. Note that the separation of variables is also used in \cite{BC1,BC2} to prove the absence of singular continuous spectrum by relying on a standard decomposition method for the absolutely continuous spectrum as exposed in \cite{We}. In this paper, we shall give an alternative proof of the absence of singular continuous spectrum for $H, H_0$ by establishing a Limiting Absorption Principle (LAP) for these operators by means of a Mourre theory (see Appendix \ref{LAP-Mourre}). These LAP will entail in turn a complete scattering theory for theses Hamiltonians (see Section \ref{Direct-scat}).

%
%

\subsection{Separation of variables} \label{Sep-Var}

Let us pause a moment (before studying the direct and inverse scattering theory for the Hamiltonian $H$) and study in much more details the separation of variables procedure for the Dirac equation (\ref{DiracEq}) since it will be in the heart of our analysis. We first introduce an additional notation. Looking at (\ref{Pot-J}), we can write
\begin{equation} \label{J}
  J = I_2 + \alpha(x,\theta) \Gc, \quad \alpha(x,\theta) = a(x) b(\theta),
\end{equation}
where
\begin{equation} \label{Pot-a-b}
   a(x) = \frac{\sqrt{\Delta_r}}{r^2 + a^2}, \quad b(\theta) = \frac{a \sin \theta}{\sqrt{\Delta_\theta}}.
\end{equation}
Then, for $\lambda \in \R$, we consider the "stationary" Dirac equation
\begin{equation} \label{StationaryEq}
  H \psi = \lambda \psi.
\end{equation}
Using (\ref{Op-H}) and (\ref{J}), we can re-write (\ref{StationaryEq}) as
\begin{eqnarray}
  H \psi = \lambda \psi & \Longleftrightarrow & J^{-1} (H_0 - \lambda J) \psi = 0, \nonumber \\
  & \Longleftrightarrow & \left[ \Ga D_x + a(x) H_{\S^2} + c(x,D_\varphi) - \lambda ( I_2 + a(x) b(\theta) \Gc) \right] \psi = 0, \nonumber \\
  & \Longleftrightarrow & \left[ \Ga D_x + c(x,D_\varphi) - \lambda + \ a(x) (H_{\S^2} - \lambda b(\theta) \Gc) \right] \psi =0. \label{St-1}
\end{eqnarray}
We denote by $H(\lambda)$ the above stationary operator, \textit{i.e.}
\begin{equation} \label{Hlambda}
  H(\lambda) = \Ga D_x + c(x,D_\varphi) - \lambda + \ a(x) (H_{\S^2} - \lambda b(\theta) \Gc).
\end{equation}
The stationary equation $H \psi = \lambda \psi$ is thus equivalent to $H(\lambda) \psi = 0$.

To proceed further, we need to study the angular part of $H(\lambda)$. We introduce the notations
\begin{equation} \label{AS2}
A_{\S^2}(\lambda) = H_{\S^2} - \lambda b(\theta) \Gc
\end{equation}
and
$$
  \H_{\S^2} = L^2(\S^2, d\theta d\varphi; \C^2).
$$
We are interested in studying the angular "eigenvalues" equation on $\H_{\S^2}$
\begin{equation} \label{Angular-Eq}
  A_{\S^2}(\lambda) \psi_{kl} = \mu_{kl}(\lambda) \psi_{kl}.
\end{equation}
Following \cite{BC1,BC2}, we can prove

\begin{theorem} \label{Spectrum-A}
For all $\lambda \in \R$, the operator $A_{\S^2}(\lambda)$ is selfadjoint on $\H_{\S^2}$ and has pure point spectrum. More precisely, for all $k \in \frac{1}{2} + \Z$ and $l \in \Z^*$, there exists a sequence of eigenvalues $\mu_{kl}(\lambda) \in \R$  of $A_{\S^2}(\lambda)$ and associated normalized eigenfunctions $Y_{kl}(\lambda) \in \H_{\S^2}$ such that \\

$(1) \quad \H_{\S^2} = \oplus_{(k,l) \in I} \ Span(Y_{kl}(\lambda)), \qquad I = (\frac{1}{2} + \Z) \times \Z^*$, \\

$(2) \quad A_{\S^2}(\lambda) Y_{kl}(\lambda) = \mu_{kl}(\lambda) Y_{kl}(\lambda)$, \\

$(3) \quad D_\varphi Y_{kl}(\lambda) = k Y_{kl}(\lambda)$.
\end{theorem}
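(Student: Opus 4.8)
The plan is to decompose the angular operator $A_{\S^2}(\lambda)$ along the cylindrical symmetry and reduce the spectral problem to a family of one-dimensional matrix Sturm--Liouville problems on the interval $(0,\pi)$ that can be shown to have compact resolvent. First I would use that $D_\varphi$ commutes with $A_{\S^2}(\lambda)$ — indeed, inspection of the expression \eqref{HS2} for $H_{\S^2}$ and of $b(\theta)\Gc$ shows that all coefficients are independent of $\varphi$ — so that $\H_{\S^2}$ splits as the Hilbert sum $\oplus_{k \in 1/2 + \Z} \H_{\S^2}^k$ with $\H_{\S^2}^k = L^2((0,\pi), d\theta; \C^2) \otimes e^{ik\varphi}$, and $A_{\S^2}(\lambda)$ preserves each summand. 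On the mode $k$ the operator acts as $A_{\S^2}^k(\lambda) = \sqrt{\Delta_\theta}\,\Gb D_\theta + V_k(\theta)$ where $V_k(\theta)$ is an explicit bounded-away-from-the-poles matrix-valued potential (collecting the $\cot\theta$, $1/\sin\theta$ and $\Delta_\theta$-dependent terms, with $k$ replacing $D_\varphi$ and the $-\lambda b(\theta)\Gc$ term included). This immediately gives property $(3)$.

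Next I would establish selfadjointness and the discreteness of the spectrum for each $A_{\S^2}^k(\lambda)$. The operator is a first-order system on $(0,\pi)$ with regular singular points at $\theta = 0$ and $\theta = \pi$ coming from the $1/\sin\theta$ singularity; I would invoke the analysis of \cite{BC1,BC2} (as the statement explicitly allows) to identify the correct selfadjoint realization — essential self-adjointness on $C_0^\infty((0,\pi);\C^2)$, the singularities being in the limit-point case — using the standard weight change $\psi = \tfrac{\sin\theta}{\Delta_\theta^{1/4}}\phi$ already performed in the paper to make the operator symmetric on the unweighted $L^2$. Once self-adjointness is in hand, compactness of the resolvent follows: the Weyl-type estimate for $\DS = \Gb D_\theta + \tfrac{\Gc}{\sin\theta}D_\varphi$ (a slight perturbation of which is $H_{\S^2}$) shows that $(A_{\S^2}^k(\lambda) - z)^{-1}$ maps bounded sets into sets compactly embedded in $L^2$, because the $D_\theta$ part controls an $H^1$-type norm on compact subsets while the $1/\sin\theta$ term forces decay at the poles; alternatively one compares directly with the known discrete spectrum of $\DS$ on $\S^2$ (spherical spinor harmonics) and treats $A_{\S^2}^k(\lambda) - \DS$ as a relatively compact perturbation. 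This yields a discrete set of real eigenvalues accumulating only at $\pm\infty$, with finite-dimensional (in fact, after the $k$-reduction, simple) eigenspaces spanned by eigenfunctions $Y_{kl}(\lambda)$; indexing them by $l \in \Z^*$ and collecting over $k$ gives properties $(1)$ and $(2)$. Real-valuedness of $\mu_{kl}(\lambda)$ is just self-adjointness, and normalization is free.

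The main obstacle I expect is the careful treatment of the singular endpoints $\theta = 0, \pi$: one must verify that the natural minimal operator is essentially self-adjoint there (limit-point classification of the Fuchsian system), so that no boundary conditions need to be imposed and the decomposition $(1)$ is genuinely an orthogonal basis of the \emph{whole} space $\H_{\S^2}$ rather than of a proper subspace. This is where the explicit Frobenius analysis of the indicial equation at the poles enters — the exponents must be such that exactly one local solution is square-integrable at each pole — and it is the same computation that will later (in Section \ref{Frobenius}) be used to construct the $Y_{kl}(\lambda)$ explicitly; here I would simply cite that the indicial roots differ by enough to force the limit-point case and refer to \cite{BC1,BC2} for the detailed verification. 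Everything else — commutation with $D_\varphi$, relative compactness, and the abstract spectral theorem for self-adjoint operators with compact resolvent — is routine.
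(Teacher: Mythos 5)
Your proposal is correct and follows essentially the same skeleton as the paper: decompose $\H_{\S^2}$ onto the angular modes $\{e^{ik\varphi}\}$ using $[D_\varphi, A_{\S^2}(\lambda)]=0$, reduce to the one-dimensional operators $A_k(\lambda)$ on $L^2((0,\pi),d\theta;\C^2)$, and defer the delicate endpoint analysis (selfadjoint realization, limit-point classification at $\theta=0,\pi$) to \cite{BC1,BC2} -- exactly as the paper does. The one place where your route diverges is the mechanism for discreteness and simplicity of the spectrum of $A_k(\lambda)$: the paper, following Sections 5.1.1 and 6.1 of \cite{BC2}, obtains ``discrete simple'' spectrum via an analogue of the Pr\"ufer transformation for Dirac systems (in the spirit of \cite{We}), whereas you propose a compact-resolvent argument, either by a Weyl-type estimate or by treating $A_k(\lambda)-\DS$ as a relatively compact perturbation of the intrinsic Dirac operator on $\S^2$. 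Your alternative is legitimate and is in fact implicitly endorsed by the paper's Appendix A, where $A_k(\zeta,\xi)$ is realized as a selfadjoint holomorphic family of type (A) perturbing $\mathbb{D}_{\S^2}^k$, and compactness of the resolvent is propagated from $A_k(0,0)=\mathbb{D}_{\S^2}^k$ by Kato's theorem; the perturbative route has the advantage of also yielding the growth estimates on $\mu_{kl}(\lambda)$ needed later, while the Pr\"ufer route gives simplicity of the eigenvalues more directly (your compact-resolvent argument only gives finite multiplicity, so you still need the limit-point classification to upgrade to simplicity, which you correctly flag).
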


Hence, the operators $D_\varphi$ and $A_{\S^2}(\lambda)$ possess a common basis of eigenfunctions $Y_{kl}(\lambda)$ that we can use for the separation of variables. Before this and for later use, let us make a few comments on the above result. To prove Thm \ref{Spectrum-A}, Belgiorno and Cacciatori in \cite{BC1,BC2} first decompose $\H_{\S^2}$ onto the angular modes $\{e^{i k \varphi}\}$ for all $k \in 1/2 + \Z$, that are the eigenfunctions (with associated eigenvalues $k$) for the selfadjoint operator $D_\varphi$ with anti-periodic boundary conditions. Then
$$
  \H_{\S^2} = \oplus_{k \in 1/2 + \Z} \H_{\S^2}^k, \qquad \H_{\S^2}^k = L^2((0,\pi), d\theta; \C^2) := \L,
$$
and clearly, the reduced subspaces $\H_{\S^2}^k$ remain invariant under the action of $A_{\S^2}(\lambda)$. For each $k \in 1/2 + \Z$, we denote
\begin{equation} \label{Aklambda}
  A_k(\lambda) = A_{\S^2}(\lambda)_{|\H_{\S^2}^k},
\end{equation}
or more explicitly
\begin{equation} \label{Ak}
  A_k(\lambda) = \sqrt{\Delta_\theta} \left[ \Gb D_\theta + \Gb \frac{i\Lambda a^2 \sin(2\theta)}{12 \Delta_\theta}  + \frac{k \Gc}{\sin{\theta}} + \Gc \frac{ \Lambda a^2 k \sin(\theta)}{3 \Delta_\theta} - \lambda \frac{a \sin \theta}{\Delta_\theta} \Gc \right].
\end{equation}
The study of the selfadjoint operator $A_k(\lambda)$ on $\L$ is the object of the Sections 5.1.1 and 6.1 in \cite{BC2}. Using an appropriate form for the eigenvalue equation
\begin{equation} \label{Angular-Eq-k}
  A_k(\lambda) u(\theta) = \mu_k(\lambda) \, u(\theta),
\end{equation}
and an analogue of Pr\"ufer transformation for Dirac system (see \cite{We}), they are able to show that the operator $A_k(\lambda)$ has "discrete simple" spectrum. For each $l \in \Z^*$, we denote by $\mu_{kl}(\lambda)$ the simple eigenvalues of $A_k(\lambda)$ and $u_{kl}^\lambda(\theta)$ the corresponding eigenfunctions. Note that, since the spectrum of $A_k(\lambda)$ is discrete, it has no accumulation point and thus
$$
  \forall k \in 1/2 + \Z, \quad |\mu_{kl}(\lambda)| \to +\infty, \quad \textrm{as} \ l \to \infty.
$$
Finally, the eigenfunctions $Y_{kl}(\lambda)$ of $A_{\S^2}(\lambda)$ appearing in Thm \ref{Spectrum-A} are thus given by
\begin{equation} \label{Ykl}
  \forall k \in 1/2 + \Z, \ l \in \Z^*, \quad Y_{kl}(\lambda)(\theta, \varphi) = u_{kl}^\lambda(\theta) e^{ik\varphi}.
\end{equation}

The choice of labeling the eigenvalues and eigenfunctions of $A_k(\lambda)$ with an index $l \in \Z^*$ is convenient for the following reason. Observe first that it follows from the anticommutation relation (\ref{AntiCom}) that the Dirac matrix $\Ga$ \emph{anticommutes} with the operators $A_k(\lambda)$ (and thus with the operator $A_{\S^2}(\lambda)$). Hence for each admissible $(k,l) \in I = (\frac{1}{2} + \Z) \times \Z^*$, we have
$$
  A_k(\lambda) (\Ga Y_{kl}(\lambda)) = - \Ga A_k(\lambda) Y_{kl}(\lambda) = - \mu_{kl}(\lambda) \Ga Y_{kl}(\lambda),
$$
from which it follows that to each positive eigenvalue $\mu_{kl}(\lambda)$ and associated eigenfunction $Y_{kl}(\lambda)$, there is a corresponding negative eigenvalue $-\mu_{kl}(\lambda)$ and associated eigenfunction $\Ga Y_{kl}(\lambda)$. From this, for each $k \in 1/2 + \Z$, we decide to label the positive eigenvalues of $A_k(\lambda)$ (in increasing order) by $\mu_{kl}(\lambda)$ with $l \in \N^*$, \textit{i.e.}
$$
  0 < \mu_{k1}(\lambda) < \mu_{k2}(\lambda) < \dots
$$
Conversely, the negative eigenvalues are thus labelled by the negative values of $l$ by the requirement
\begin{equation} \label{Index-l}
  \forall l \in \N^*, \quad \mu_{k,-l}(\lambda) = -\mu_{kl}(\lambda), \qquad Y_{k,-l}(\lambda) = \Ga Y_{kl}(\lambda).
\end{equation}

Let us come back to the separation of variables. Our aim is to decompose the Hilbert space $\H$ in an Hilbert sum of reduced Hilbert spaces that remain invariant under the action of the operator $H(\lambda)$ given by (\ref{Hlambda}) and allows separation of variables. We proceed as follows.

According to Thm \ref{Spectrum-A}, first note that
\begin{eqnarray*}
  \H & = & L^2(\R \times \S^2, dx d\theta d\varphi; \C^2) =  L^2(\R, dx) \otimes L^2(\S^2, d\theta d\varphi ;\C^2), \\
     & = & \bigoplus_{(k,l) \in (1/2 + \Z) \times \Z^*} \Big( L^2(\R, dx) \otimes Y_{kl}(\lambda) \Big).
\end{eqnarray*}
The reduced Hilbert subspaces $L^2(\R, dx) \otimes Y_{kl}(\lambda)$ - that clearly remain invariant under the action of the angular operator $A_{\S^2}(\lambda)$ - does not remain invariant however under the radial part $\Ga D_x$ of the operator $H(\lambda)$ since $\Ga$ sends $Y_{kl}(\lambda)$ onto $\Ga Y_{kl}(\lambda) = Y_{k,-l}(\lambda)$ by our convention (\ref{Index-l}). Instead we consider the decomposition
$$
  \H = \bigoplus_{(k,l) \in (1/2 + \Z) \times \N^*} \Big( L^2(\R,dx) \otimes Span(Y_{kl}(\lambda), Y_{k,-l}(\lambda)) \Big).
$$

We claim that the reduced Hilbert spaces $L^2(\R,dx) \otimes Span(Y_{kl}(\lambda), Y_{k,-l}(\lambda))$ remain invariant under the action of the operator (\ref{Hlambda}) and allows separation of variables. To see this, we first remark that the reduced subspaces $L^2(\R,dx) \otimes Span(Y_{kl}(\lambda), Y_{k,-l}(\lambda))$ for $(k,l) \in (1/2 + \Z) \times \N^*$ can be identified with the tensorial products
$$
  L^2(\R,dx;\C^2) \otimes Y_{kl}(\lambda).
$$
Indeed, using our convention (\ref{Index-l}) and the particular form of $\Ga$ (see (\ref{DiracMatrices})), we have for all $\psi \in L^2(\R,dx) \otimes Span(Y_{kl}(\lambda), Y_{k,-l}(\lambda))$,
$$
  \psi = \psi_1(x) Y_{kl}(\lambda) + \psi_2(x) Y_{k,-l}(\lambda) = \big( \psi_1(x) I_2 + \psi_2(x) \Ga \big) Y_{kl}(\lambda),
$$
which can be written in components
\begin{equation} \label{1to2}
  \psi = \left( \begin{array}{c} \psi_1(x) + \psi_2(x) \\ \psi_1(x) - \psi_2(x) \end{array} \right) \otimes Y_{kl}(\lambda).
\end{equation}
Hence $\psi \in L^2(\R,dx;\C^2) \otimes Y_{kl}(\lambda)$.

Conversely, for all $\psi \in L^2(\R,dx;\C^2) \otimes Y_{kl}(\lambda)$, we have
\begin{equation} \label{2to1}
  \psi = \left( \begin{array}{c} \psi_1(x) \\ \psi_2(x) \end{array} \right) \otimes Y_{kl}(\lambda) = \frac{1}{2} \big( \psi_1(x) + \psi_2(x) \big) Y_{kl}(\lambda) + \frac{1}{2} \big( \psi_1(x) - \psi_2(x) \big) Y_{k,-l}(\lambda),
\end{equation}
and thus $\psi \in L^2(\R,dx) \otimes Span(Y_{kl}(\lambda), Y_{k,-l}(\lambda))$.

In fact, we easily see that the map
$$
  \begin{array}{rcl} \Theta: \ L^2(\R,dx) \otimes Span(Y_{kl}(\lambda), Y_{k,-l}(\lambda)) & \longrightarrow & L^2(\R ; \C^2) \otimes Y_{kl}(\lambda), \\
                     \psi_1(x) Y_{kl}(\lambda) + \psi_2(x) Y_{k,-l}(\lambda)     & \longrightarrow & \frac{1}{\sqrt{2}} \left( \begin{array}{c} \psi_1(x) + \psi_2(x) \\ \psi_1(x) - \psi_2(x) \end{array} \right) \otimes Y_{kl}(\lambda), \end{array}
$$
is unitary. For ease of notations, we shall work with the reduced Hilbert spaces
\begin{equation} \label{Hkl}
  \H_{kl}(\lambda) := L^2(\R ; \C^2) \otimes Y_{kl}(\lambda) \ \simeq \ L^2(\R,\C^2),
\end{equation}
for each $(k,l) \in I := (1/2 + \Z) \times \N^*$. We then have the following decomposition for the full Hilbert space $\H$
\begin{equation} \label{HilbertDecomposition}
  \H = \bigoplus_{(k,l) \in (1/2 + \Z) \times \N^*} \H_{kl}(\lambda).
\end{equation}

Now it is a direct calculation to show that the subspaces $\H_{kl}(\lambda)$ remain invariant under the action of the stationary operator (\ref{Hlambda}). For instance, if
$$
  \H_{kl}(\lambda) \ni \psi = \psi_{kl} \otimes Y_{kl}(\lambda) = \left( \begin{array}{c} \psi_1(x) \\ \psi_2(x) \end{array} \right) \otimes Y_{kl}(\lambda),
$$
then by (\ref{2to1})
$$
  \psi = \frac{1}{2} \big( \psi_1(x) + \psi_2(x) \big) Y_{kl}(\lambda) + \frac{1}{2} \big( \psi_1(x) - \psi_2(x) \big) Y_{k,-l}(\lambda),
$$
and thus
\begin{eqnarray*}
  \Ga \psi & = & \frac{1}{2} \big( \psi_1(x) + \psi_2(x) \big) \Ga Y_{kl}(\lambda) + \frac{1}{2} \big( \psi_1(x) - \psi_2(x) \big) Y_{k,l}(\lambda), \ \textrm{by (\ref{Index-l}) and} \ (\Ga)^2 = I_2, \\
                & = & \left( \begin{array}{c} \psi_1(x) \\ -\psi_2(x) \end{array} \right) \otimes Y_{kl}(\lambda), \ \textrm{by (\ref{1to2})}, \\
                & = & (\Ga \psi_{kl}) \otimes Y_{kl}(\lambda) \ \in \H_{kl}(\lambda), \ \textrm{by (\ref{DiracMatrices})}.
\end{eqnarray*}
Hence, the operator $\Ga D_x$ appearing in (\ref{Hlambda}) lets invariant $\H_{kl}(\lambda)$ and its action on $\psi = \psi_{kl} \otimes Y_{kl}(\lambda) \in \H_{kl}(\lambda)$ is given by
\begin{equation} \label{Gamma1-Ykl}
  \Ga D_x \psi = (\Ga D_x \psi_{kl}) \otimes Y_{kl}(\lambda).
\end{equation}

Similarly, we have
\begin{eqnarray*}
  A_{\S^2}(\lambda) \psi & = & \frac{1}{2} \big( \psi_1(x) + \psi_2(x) \big) \muk Y_{kl}(\lambda) - \frac{1}{2} \big( \psi_1(x) - \psi_2(x) \big) \muk Y_{k,-l}(\lambda), \ \textrm{by Thm \ref{Spectrum-A}}, \\
                & = & \muk \left( \begin{array}{c} \psi_2(x) \\ \psi_1(x) \end{array} \right) \otimes Y_{kl}(\lambda), \ \textrm{by (\ref{1to2})}, \\
                & = & (\muk \Gb \psi_{kl}) \otimes Y_{kl}(\lambda) \ \in \H_{kl}(\lambda), \ \textrm{by (\ref{DiracMatrices})}.
\end{eqnarray*}
Hence, the angular operator $A_{\S^2}(\lambda)$ appearing in (\ref{Hlambda}) lets invariant $\H_{kl}(\lambda)$ and its action on $\psi = \psi_{kl} \otimes Y_{kl}(\lambda) \in \H_{kl}(\lambda)$ is given by
\begin{equation} \label{AS2-Ykl}
  A_{\S^2}(\lambda) \psi = (\muk \Gb \psi_{kl}) \otimes Y_{kl}(\lambda).
\end{equation}

Summarising  the whole stationary operator $H(\lambda)$ lets invariant the $\H_{kl}(\lambda)$'s and denoting $I = (1/2 + \Z) \times \N^*$, we have for all
$$
  \psi(x,\theta,\varphi) = \ds\sum_{(k,l) \in I} \psi_{kl}(x) \otimes Y_{kl}(\theta,\varphi)  \in \H,
$$
\begin{eqnarray}
  H(\lambda) \psi = 0 & \Longleftrightarrow & \left[ \Ga D_x + c(x,D_\varphi) - \lambda + \ a(x) A_{\S^2}(\lambda) \right] \psi =0, \nonumber \\
  & \Longleftrightarrow & \sum_{(k,l) \in I} \left[ \big( \Ga D_x + c(x,k) - \lambda + \mu_{kl}(\lambda) a(x) \Gb \big) \psi_{kl}(x) \right] \otimes Y_{kl} = 0, \label{St-2} \\
  & \Longleftrightarrow & \forall (k,l) \in I, \quad \left[ \Ga D_x + \mu_{kl}(\lambda) a(x) \Gb + c(x,k) \right] \psi_{kl}(x) = \lambda \, \psi_{kl}(x). \nonumber
\end{eqnarray}
In consequence, after decomposition onto the reduced subspaces $\H_{kl}(\lambda)$, the stationary equation $H(\lambda) \psi = 0$ is equivalent to a countable family of radial stationary equations given by
\begin{equation} \label{Radial-Eq}
  H_{kl}(\lambda) \psi_{kl}(x) = \lambda \, \psi_{kl}(x),
\end{equation}
where the one-dimensional operators $H_{kl}(\lambda)$ are given by
\begin{equation} \label{Radial}
  H_{kl}(\lambda) = \Ga D_x + \mu_{kl}(\lambda) a(x) \Gb + c(x,k),
\end{equation}
and are parametrized by the energy $\lambda \in \R$ of the system and the angular momenta $\muk$ for each $(k,l) \in I$.

This is the separation of variables for the massless Dirac equation in KN-dS black holes. In order to have a synthetic view of the procedure, we summarize these results into a theorem.
\begin{theorem}[Separation of variables] \label{Separation}
  Denote by $I$ the set $(1/2 + \Z) \times \N^*$. Then, for all $\lambda \in \R$, there exists a Hilbert decomposition of the energy space $\H$ as
  $$
    \H = \bigoplus_{(k,l) \in I} \H_{kl}(\lambda),
  $$
  where
  $$
    \H_{kl}(\lambda) = L^2(\R ; \C^2) \otimes Y_{kl}(\lambda) \ \simeq \ L^2(\R,\C^2),
  $$
  and the $Y_{kl}(\lambda)$ are the eigenfunctions of the angular operator (defined in (\ref{AS2})) by
  $$
    A_{\S^2}(\lambda) = H_{\S^2} - \lambda b(\theta) \Gc
  $$
  and associated to its positive eigenvalues $\mu_{kl}(\lambda)$. Moreover, the reduced subspaces $\H_{kl}(\lambda)$ remain invariant under the action  of the stationary Dirac Hamiltonian given in (\ref{St-1}) by
  $$
    H(\lambda) = \Ga D_x + c(x,D_\varphi) - \lambda + \ a(x) (H_{\S^2} - \lambda b(\theta) \Gc)
  $$
   and for all $\psi = \sum_{(k,l) \in I} \psi_{kl}(x) \otimes Y_{kl}(\lambda)$, the stationary equation $H(\lambda) \psi = 0$ is equivalent to the countable family of one-dimensional (radial) stationary equations
  $$
  H_{kl}(\lambda) \psi_{kl}(x) = \lambda \, \psi_{kl}(\lambda),
  $$
  where
  $$
    H_{kl}(\lambda) = \Ga D_x + \mu_{kl}(\lambda) a(x) \Gb + c(x,k).
  $$
\end{theorem}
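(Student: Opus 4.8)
The plan is to obtain the decomposition of $\H$ from the spectral theory of the angular operator and then check, term by term, that the stationary operator $H(\lambda)$ of (\ref{St-1}) is reduced by each reduced subspace; the bulk of the work is the algebra of the $2\times2$ Dirac matrices together with one regrouping trick. First I would invoke Theorem \ref{Spectrum-A}: the normalized eigenfunctions $\{Y_{kl}(\lambda)\}_{(k,l)\in(1/2+\Z)\times\Z^*}$ form a Hilbert basis of $\H_{\S^2}$ and satisfy $D_\varphi Y_{kl}(\lambda)=kY_{kl}(\lambda)$, so tensoring with $L^2(\R,dx)$ gives the orthogonal decomposition $\H=\bigoplus_{(k,l)}L^2(\R,dx)\otimes Y_{kl}(\lambda)$. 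This is not yet the decomposition we want, because $\Ga$ — which enters $H(\lambda)$ through $\Ga D_x$ — anticommutes with $A_{\S^2}(\lambda)$, so by the labelling convention (\ref{Index-l}) it maps $Y_{kl}(\lambda)$ to $Y_{k,-l}(\lambda)$, and a single factor $L^2(\R,dx)\otimes Y_{kl}(\lambda)$ fails to be invariant. The remedy is to pair the modes $l$ and $-l$ and set $\H_{kl}(\lambda)=L^2(\R,dx)\otimes\mathrm{Span}\big(Y_{kl}(\lambda),Y_{k,-l}(\lambda)\big)$ for $(k,l)\in I=(1/2+\Z)\times\N^*$, which produces the stated sum (\ref{HilbertDecomposition}).

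Next I would make the identification $\H_{kl}(\lambda)\simeq L^2(\R;\C^2)\otimes Y_{kl}(\lambda)$ explicit. Writing a general element of $\H_{kl}(\lambda)$ as $\psi=\psi_1(x)Y_{kl}(\lambda)+\psi_2(x)Y_{k,-l}(\lambda)=(\psi_1 I_2+\psi_2\Ga)Y_{kl}(\lambda)$ and using the explicit form (\ref{DiracMatrices}) of $\Ga$, the map $\Theta$ sending $\psi$ to the $\C^2$-valued function with components $\frac{1}{\sqrt2}(\psi_1+\psi_2)$ and $\frac{1}{\sqrt2}(\psi_1-\psi_2)$, tensored with $Y_{kl}(\lambda)$, is unitary — the factor $\frac{1}{\sqrt2}$ is exactly what balances the norms, since $\{Y_{kl}(\lambda),Y_{k,-l}(\lambda)\}$ is orthonormal. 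A short computation using $(\Ga)^2=I_2$ shows that $\Theta$ conjugates the action of $\Ga$ on $\H_{kl}(\lambda)$ into left multiplication by the matrix $\Ga$, and, using $A_{\S^2}(\lambda)Y_{k,-l}(\lambda)=-\mu_{kl}(\lambda)Y_{k,-l}(\lambda)$ (a consequence of the anticommutation of $\Ga$ with $A_{\S^2}(\lambda)$ together with (\ref{Index-l})), that $\Theta$ conjugates $A_{\S^2}(\lambda)|_{\H_{kl}(\lambda)}$ into multiplication by $\mu_{kl}(\lambda)\Gb$.

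Then comes the core step: that $H(\lambda)=\Ga D_x+c(x,D_\varphi)-\lambda+a(x)A_{\S^2}(\lambda)$ leaves each $\H_{kl}(\lambda)$ invariant, and the computation of its restriction. Three observations do it. The scalar part $c(x,D_\varphi)-\lambda$ acts on $\H_{kl}(\lambda)$ as multiplication by $c(x,k)-\lambda$, since $D_\varphi Y_{kl}(\lambda)=kY_{kl}(\lambda)$; the operator $\Ga D_x$ preserves $\H_{kl}(\lambda)$ because $\Ga$ does and $D_x$ acts only in $x$; and $a(x)A_{\S^2}(\lambda)$ preserves $\H_{kl}(\lambda)$ because $A_{\S^2}(\lambda)$ does and $a(x)$ is a scalar function of $x$, becoming $\mu_{kl}(\lambda)a(x)\Gb$ under $\Theta$. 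Adding the three contributions, $\Theta H(\lambda)\Theta^{-1}$ restricted to $L^2(\R;\C^2)\otimes Y_{kl}(\lambda)$ equals $\Ga D_x+\mu_{kl}(\lambda)a(x)\Gb+c(x,k)-\lambda$. Hence for $\psi=\sum_{(k,l)\in I}\psi_{kl}(x)\otimes Y_{kl}(\lambda)$ the equation $H(\lambda)\psi=0$ is equivalent to the uncoupled family $\big(\Ga D_x+\mu_{kl}(\lambda)a(x)\Gb+c(x,k)\big)\psi_{kl}=\lambda\psi_{kl}$, i.e. $H_{kl}(\lambda)\psi_{kl}=\lambda\psi_{kl}$ for every $(k,l)\in I$, which is the claim.

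I do not expect a genuine obstacle here: the content is the linear algebra of the $2\times2$ Dirac matrices plus the regrouping of the modes $\pm l$, and most of these computations have already been performed in the body of this subsection. The only point deserving care is the operator-theoretic bookkeeping — one should record that (\ref{HilbertDecomposition}) actually \emph{reduces} $H(\lambda)$, not merely that it is preserved on a dense set of finitely supported smooth modes. This holds because $\Ga D_x$ acts only in the $x$ variable, because $c(x,D_\varphi)$ is a function of $x$ and $D_\varphi$ and our decomposition refines the $D_\varphi$-eigenspace grading, and because $A_{\S^2}(\lambda)$ is reduced by its own eigenspace decomposition while $a(x)$ and $c(x,k)$ are bounded multiplication operators for each fixed $k$. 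In particular $\psi$ lies in the domain of $H(\lambda)$ if and only if each $\psi_{kl}$ lies in the domain of $H_{kl}(\lambda)$ with $\sum_{(k,l)\in I}\|H_{kl}(\lambda)\psi_{kl}\|^2<\infty$, so the stated equivalence of equations holds in the appropriate $L^2$ sense.
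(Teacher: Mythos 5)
Your proposal is correct and follows essentially the same route as the paper: decompose via Theorem \ref{Spectrum-A}, pair the modes $l$ and $-l$ to restore invariance under $\Ga$, identify $L^2(\R,dx)\otimes\mathrm{Span}(Y_{kl},Y_{k,-l})$ with $L^2(\R;\C^2)\otimes Y_{kl}(\lambda)$ via the unitary $\Theta$, and compute the action of $\Ga D_x$ and $A_{\S^2}(\lambda)$ term by term to obtain (\ref{Gamma1-Ykl}) and (\ref{AS2-Ykl}). Your closing remark on the domain bookkeeping is a welcome addition that the paper leaves implicit, but it does not change the substance of the argument.
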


\subsection{Estimates on the eigenvalues of the angular operator} \label{Est-Eigen}

We end this section with a brief study of the angular operator $A_{\S^2}(\lambda)$ and more specifically of its eigenvalues $\mu_{kl}(\lambda)$. In order to apply the Complex Angular Momentum method (see Section \ref{Complexification}), we are interested in obtaining a rough estimate on the distribution of the eigenvalues $\mu_{kl}(\lambda)$ when $\lambda \in \R$ and $k \in 1/2 + \Z$ are fixed et for large $l \in \N^*$. Precisely, we want to prove that
$$
  \sum_{l \in \N^*} \frac{1}{\mu_{kl}(\lambda)} = +\infty.
$$
This result is a consequence of the following estimate on the growth of the eigenvalues $\muk$.

\begin{prop}[Estimate on $\muk$] \label{Growth-mukl}
For all $\lambda \in \R$, for all $k \in \frac{1}{2} + \Z$ and for all $l \in \N^*$, there exist constants $C_1$ and $C_2$ independent of $k,l$ such that
$$
  \left( 2 - e^{\frac{1}{26}} \right) \left(|k| - \frac{1}{2} + l\right) - C_1 |k| - C_2 - |a \lambda| \leq \mu_{kl}(\lambda) \leq e^{\frac{1}{26}} \left(|k| - \frac{1}{2} + l\right) + C_1 |k| + C_2 + |a \lambda|.
$$
We thus conclude that for fixed $\lambda \in \R$ and $k \in \frac{1}{2} + \Z$,
$$
  \sum_{l \in \N^*} \frac{1}{\mu_{kl}(\lambda)} = + \infty.
$$
\end{prop}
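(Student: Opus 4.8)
The plan is to reduce to the case $\lambda = 0$, then pass to the square of the angular Dirac operator and compare the resulting scalar Schr\"odinger-type operator with an explicitly diagonalizable model operator of P\"oschl--Teller type. First, since $A_k(\lambda) = H_{\S^2}^k - \lambda\, b(\theta)\,\Gc$ with $b(\theta) = \frac{a\sin\theta}{\sqrt{\Delta_\theta}}$ and $\sup_\theta|b(\theta)| \leq |a|$ (because $\Delta_\theta \geq 1$ and $\sin\theta \leq 1$), and since $\Ga$ anticommutes with every $A_k(\lambda)$ — so that the spectrum of $A_k(\lambda)$ is simple and symmetric about $0$, the $\mu_{kl}(\lambda)$, $l \in \N^*$, being its positive eigenvalues counted with multiplicity — standard perturbation theory applied to the bounded selfadjoint perturbation $-\lambda\, b\,\Gc$ gives $|\mu_{kl}(\lambda) - \mu_{kl}(0)| \leq |a\lambda|$ for all $(k,l)$. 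It therefore suffices to prove the two-sided linear estimate for the $\lambda$-independent operator $A_k(0) = H_{\S^2}^k$, the term $|a\lambda|$ in the statement being produced by this reduction.

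Second, I would put $A_k(0)$ in Dirac normal form. The identity $\Delta_\theta + \frac{a^2\Lambda}{3}\sin^2\theta = E$ lets one collapse the two $\Gc$-terms of (\ref{Ak}) at $\lambda = 0$ into one, giving $A_k(0) = \Gb\, P + \Gc\, q_k$, where $P := \Delta_\theta^{1/4} D_\theta \Delta_\theta^{1/4}$ is selfadjoint on $L^2((0,\pi),d\theta)$ and $q_k(\theta) := \frac{kE}{\sqrt{\Delta_\theta}\,\sin\theta}$ is a real multiplication operator. Using (\ref{AntiCom}) and the explicit matrices (\ref{DiracMatrices}) — in particular $\Gb\Gc = -i\Ga$ — one computes $A_k(0)^2 = P^2 + q_k^2 - \Ga\sqrt{\Delta_\theta}\, q_k'$, which is block-diagonal; and, exploiting once more the chiral symmetry, the positive eigenvalues of $A_k(0)$ are recovered from the nonnegative scalar Schr\"odinger operator $L_k := (P + iq_k)(P - iq_k) = P^2 + V_k$ on $L^2((0,\pi),d\theta)$, with $V_k := q_k^2 - \sqrt{\Delta_\theta}\, q_k'$ and compact resolvent: precisely, $\mu_{kl}(0)^2$ is the $l$-th eigenvalue, in increasing order, of $L_k$ (up to a harmless unit shift of the index should $0$ belong to the spectrum of $L_k$).

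The heart of the argument is then the shape of the potential $V_k$. A direct computation yields, uniformly in $\theta \in (0,\pi)$,
\[
  \sin^2\theta\; V_k(\theta) \;=\; E\,k\,(k + \cos\theta) \;+\; \frac{E a^2 \Lambda}{3}\,\frac{k^2 \sin^2\theta}{\Delta_\theta} \;+\; O(|k|),
\]
so that $V_k(\theta) = \dfrac{\nu_k(\theta)^2 - 1/4}{\sin^2\theta}$ with $\nu_k(0) = |k + \tfrac12| + O(1)$, $\nu_k(\pi) = |k - \tfrac12| + O(1)$ and $\nu_k(\theta) \in [\,|k| - C,\ (1 + C')|k| + C'\,]$ throughout. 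The crucial point is that the leading singular coefficients at the two poles are $|k \pm \tfrac12|$, exactly as for the free Dirac operator $\DS$ restricted to the angular mode $k$, whose positive eigenvalues are precisely the numbers $|k| - \tfrac12 + l$, $l \in \N^*$ — equivalently, the restriction of $\DS$ to the mode $k$ squares, on its first component, to $-\partial_\theta^2 + \frac{k(k+\cos\theta)}{\sin^2\theta}$, whose eigenvalues are $(|k| - \tfrac12 + l)^2$. Sandwiching $L_k$ between the explicitly solvable operators $E\big(\!-\partial_\theta^2 + \frac{\nu_{\max}^2 - 1/4}{\sin^2\theta}\big)$ from above and $-\partial_\theta^2 + \frac{\nu_{\min}^2 - 1/4}{\sin^2\theta}$ from below — using $1 \leq \Delta_\theta \leq E$ to compare the principal part $P^2$ with $-\partial_\theta^2$, and $\nu_{\min} \leq \nu_k \leq \nu_{\max}$ in the centrifugal term — then computing the model eigenvalues $(\nu + l - \tfrac12)^2$, taking square roots, and using $1 \leq \sqrt E \leq e^{1/26}$ (equivalently $2 - e^{1/26} \leq \frac{1}{\sqrt E} \leq 1$), a consequence of condition (\ref{Cond-Lambda}), one obtains the announced bounds for $\mu_{kl}(0)$, hence — combined with the first step — for $\mu_{kl}(\lambda)$, all lower-order discrepancies being absorbed into the constants $C_1|k| + C_2$.

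The series statement is then immediate: the upper bound gives $\mu_{kl}(\lambda) \leq C_{k,\lambda} + e^{1/26}\, l$ with $C_{k,\lambda}$ depending on $k,\lambda$ only, whence $\sum_{l \in \N^*}\frac{1}{\mu_{kl}(\lambda)} \geq \sum_{l \in \N^*}\frac{1}{C_{k,\lambda} + e^{1/26}\,l} = +\infty$. The main obstacle is the comparison step: one must (a) perform the indicial analysis at the poles carefully, especially when $|k|$ is small and $\nu_k(\pi)$ is close to $0$, so that the model operator becomes Hardy-critical and it is genuinely necessary to compare $L_k$ with $-\partial_\theta^2$ in the form sense and invoke Hardy's inequality (rather than treating $V_k$ as bounded away from the poles), and (b) control the term of size $O(k^2)$ in the discrepancy between $L_k$ and the model — which arises solely because $\frac{E}{\Delta_\theta} \neq 1$ — and which is harmless only because $\frac{a^2\Lambda}{3} \leq 7 - 4\sqrt 3$ is small; this smallness is precisely what allows the leading constants to be the explicit numbers $e^{1/26}$ and $2 - e^{1/26}$. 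An alternative to the squaring would be a Pr\"ufer-type transformation of the first-order system $A_k(\lambda)u = \mu u$, counting the eigenvalues $\leq \mu$ by the rotation number of the Pr\"ufer angle — the two singular endpoints contributing the $-|k|$ shift — but the endpoint bookkeeping is of comparable difficulty.
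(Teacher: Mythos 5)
Your route is genuinely different from the paper's. The paper never squares the operator: it sets $\zeta=a^{2}\Lambda/3$, $\xi=a\lambda$, observes that $A_{k}(0,0)=\mathbb{D}_{\S^2}^{k}$ has the explicitly known simple spectrum $\mathrm{sgn}(l)\,(|k|-\tfrac12+|l|)$, checks that $A_{k}(\zeta,\xi)$ is a self-adjoint holomorphic family of type (A), and feeds the derivative bound $\|\partial_{\zeta}A_{k}(\zeta,0)u\|\leq a'\|u\|+\tfrac12\|A_{k}(\zeta,0)u\|$ (with $a'=O(|k|)$) into Kato's eigenvalue-growth theorem; the constants $e^{1/26}$ and $2-e^{1/26}$ are literally $e^{b'|\zeta|}$ with $b'=\tfrac12$ and $|\zeta|\leq\tfrac1{13}$, and the $\xi$-dependence costs exactly $|a\lambda|$. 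Your first step (reduction to $\lambda=0$ at the price of $|a\lambda|$) coincides with the paper's treatment of $\xi$; your normal form $A_{k}(0)=\Gb P+\Gc q_{k}$, the supersymmetric identification of $\mu_{kl}(0)^{2}$ with the spectrum of $L_{k}=P^{2}+V_{k}$, and the computation of $\sin^{2}\theta\,V_{k}$ are all correct; and the upper bound, together with the M\"untz conclusion (which needs only the upper bound), goes through along your lines.

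The gap is in the lower bound, concentrated at $|k|=\tfrac12$. At the pole where $k(k\mp1)=-\tfrac14$ one has $\lim\,\sin^{2}\theta\,V_{k}=E\,k(k\mp1)=-E/4$, which lies \emph{strictly below} the Hardy threshold $-\tfrac14$ because $E>1$. Hence $\nu_{\min}^{2}<0$ and your lower comparison operator $-\partial_{\theta}^{2}+(\nu_{\min}^{2}-\tfrac14)/\sin^{2}\theta$ is unbounded below: the sandwich yields no lower bound at all for $k=\pm\tfrac12$. The remedy you propose --- compare $P^{2}$ with $-\partial_{\theta}^{2}$ in the form sense and invoke Hardy's inequality --- cannot repair this, since Hardy is saturated at the constant $\tfrac14$ and cannot absorb a coefficient $E/4>\tfrac14$; the supercriticality is \emph{created} by decoupling the kinetic weight $\Delta_{\theta}\geq1$ from the potential (near that pole $L_{k}$ behaves like $E$ times a Hardy-critical operator, which is nonnegative, whereas $-\partial_{\theta}^{2}-\tfrac{E}{4}(\pi-\theta)^{-2}$ is not; note also the situation is supercritical, not "Hardy-critical" as you write, and passing to the other chiral factorization only moves the same defect to the opposite pole). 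To close this you need either a comparison operator carrying the correct indicial roots at both endpoints with the weight kept attached, or a separate argument for $k=\pm\tfrac12$ that still yields $\mu_{kl}\geq c\,l-C$ --- which is the whole content of the lower bound there. The remaining issues you flag (the $O(|k|)$ loss from using a single $\nu$ at both poles, the first-order cross terms in comparing $P^{2}$ with $-\partial_{\theta}^{2}$, the Friedrichs/labeling bookkeeping) are real but absorbable into $C_{1}|k|+C_{2}$.
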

\begin{proof}
  The proof of this Proposition follows from the theory of analytic perturbation due to Kato \cite{Ka} and is given in Appendix \ref{Estimate-mukl}.
\end{proof}

%
%
%
%

\Section{The direct scattering problem} \label{Direct-scat}

In this section, we formulate the direct scattering theory associated to massless Dirac fields evolving in the exterior region of a KN-dS black hole. In a first part, a stationary expression of the scattering matrix at fixed energy is given in terms of particular solutions of the separated equations obtained in Section \ref{Sep-Var} and the physical relevance of its components are explained. We emphasize that the different behaviors of the Dirac Hamiltonian $H$ (given by (\ref{Op-H})) at the horizons $\{x = \pm \infty \}$ as well as the presence of the long-range potential $c(x,D_\varphi)$ (given by (\ref{Pot-a-c})) make the definition of the stationary scattering matrix quite involved and non canonical. Another important feature of this stationary scattering matrix is its dependence on the choice of Regge-Wheeler coordinate $x$ which, as mentioned in the introduction, is defined up to a constant of integration only. We thus give the expression of the whole one-parameter family of stationary scattering matrices that describe the same black hole and decide to identify them in the statement of the inverse problem in Section \ref{MainResult}. In a second part, we make the link between the stationary expression of the scattering matrix and its time-dependent counterpart that turns out to be much more intrinsically and naturally defined. This helps us to better understand for instance the dependence of the stationary scattering matrix on the choice of the constant of integration in the definition of the Regge-Wheeler variable $x$. At last, the existence of the time-dependent scattering matrix is proved essentially by means of a Mourre theory (similar to \cite{Da2, HaN}) and its stationary expression is then obtained by a somewhat standard procedure as presented in \cite{Y}.

%
%

\subsection{Stationary expression of the scattering matrix} \label{Direct-Stat-Scat}

Recall that we work on the Hilbert space $\H = L^2(\R \times \S^2, dx d\theta d\varphi, \C^2)$ and that the evolution of our Dirac fields $\psi \in \H$ is generated by the Hamiltonian (see Section \ref{Hamilton-Form} and more precisely (\ref{Pot-a-c}), (\ref{Pot-J}) and (\ref{Op-H}) for the notations)
$$
  H = J^{-1} H_0, \quad H_0 = \Ga D_x + a(x) H_{\S^2} + c(x,D_\varphi),
$$
where
$$
  J^{-1} = (1 - \alpha^2) (I_2 - \alpha \Gc), \quad \alpha(x,\theta) = a(x) b(\theta) = \frac{\sqrt{\Delta_r}}{r^2 + a^2} {\frac{a \sin^2\theta}{\sqrt{\Delta_\theta}}}.
$$
The stationary scattering at energy $\lambda \in \R$ is thus governed by the stationary equation
\begin{equation} \label{Stat-Eq-Global}
  H \psi = \lambda \psi,
\end{equation}
which, as explained in Section \ref{Separation}, can be re-written as
$$
  H(\lambda) \psi = 0,
$$
where the Hamiltonian $H(\lambda)$ defined in (\ref{Hlambda}) is given by
$$
 H(\lambda) = \Ga D_x + c(x,D_\varphi) - \lambda + \ a(x) (H_{\S^2} - \lambda b(\theta) \Gc).
$$
In view of Theorem \ref{Separation}, we decompose the Hilbert space $\H$ onto a Hilbert sum of reduced Hilbert spaces $\H_{kl}(\lambda)$ each ones corresponding to a generalized spherical harmonics $Y_{kl}(\lambda)$, \textit{i.e.} the eigenfunctions of the angular operator $A_{\S^2}(\lambda)$. We recall that the indices $(k,l)$ run over the set $I = (\frac{1}{2} + \Z) \times \N^*$ and that the reduced Hilbert spaces $\H_{kl}(\lambda)$ are all isomorphic to $\h := L^2(\R; \C^2)$ and remain invariant under the action of the Hamiltonian $H(\lambda)$. Thus the global stationary equation (\ref{Stat-Eq-Global}) is equivalent to a countable family of one-dimensional stationary equations given by
\begin{equation} \label{Rad-0}
  H_{kl}(\lambda) \psi_{kl}(x) = \lambda \, \psi_{kl}(\lambda),
\end{equation}
where
\begin{equation} \label{Rad-1}
  H_{kl}(\lambda) = \Ga D_x + \mu_{kl}(\lambda) a(x) \Gb + c(x,k).
\end{equation}
Here the spinors $\psi_{kl}$ belong to $\h$ and $\mu_{kl}(\lambda)$ are the eigenvalues of the angular operator $A_{\S^2}(\lambda)$ obtained after separation of variables and studied in Section \ref{Sep-Var}, Thm \ref{Separation}. We recall at last that the potentials $a(x)$ and $c(x,k)$ for $k \in \frac{1}{2} + \Z$ satisfy the asymptotics (\ref{Asymp-a}) and (\ref{Asymp-c}) respectively, that is $a(x)$ is exponentially decreasing at both horizons $\{x = \pm \infty \}$ whereas $c(x,k)$ tends to different and nonzero constants $\Omega_\pm(k)$ given in (\ref{Omega}) at each horizon.

Hence, to contruct the scattering matrix associated to (\ref{Stat-Eq-Global}), it suffices to construct the family of reduced scattering matrices associated to the one-dimensional equations (\ref{Rad-0}) parametrized by the angular modes $k \in 1/2 + \Z$ and angular momenta $\mu_{kl}(\lambda)$. Then the global scattering matrix will be defined by summation of these reduced scattering matrices over the generalized spherical harmonics $Y_{kl}(\lambda)$.

\subsubsection{The simplified reduced stationary scattering matrix} \label{Simplified-SM}

In what follows, we shall work on a separated equation (\ref{Rad-0}) and thus we assume that the indices $(k,l) \in I = (\frac{1}{2} + \Z) \times \N^*$ are fixed and that we work on a generalized spherical harmonic $Y_{kl}(\lambda)$. A direct scattering theory - and more precisely the construction of the scattering matrix at fixed energy $\lambda$ - associated to the one-dimensional Dirac equation (\ref{Rad-0})-(\ref{Rad-1}) is almost standard and we would like to apply the nice framework given in \cite{AKM}. However, since the potential $c(x,k)$ is long-range and thus does not belong to $L^1(\R; \C^2)$, the equation (\ref{Rad-1}) does not enter exactly this framework. To remedy this situation, let us \emph{remove} the potential $c(x,k)$ by introducing the family of unitary transforms on $\h$
\begin{equation} \label{Unitary-Uk}
  \forall k \in 1/2 + \Z, \quad U_k = e^{-iC(x,k) \Ga},
\end{equation}
where
\begin{equation} \label{Pot-C}
  C(x,k) = \int_{-\infty}^x [c(s,k) - \Omega_-(k)] ds + \Omega_-(k) x + K,
\end{equation}
is a primitive of the potential $c(x,k)$ parametrized by the constant of integration $K$. Consider now the spinor $\phi_{kl} = U_k^{-1} \psi_{kl}$. The radial equation (\ref{Rad-0}) for $\phi_{kl}$ then becomes
\begin{equation} \label{Rad-2}
  \left[ \Ga D_x + \mu_{kl}(\lambda) a(x) \left( e^{iC(x,k)\Ga} \Gb e^{-iC(x,k) \Ga} \right) \right] \phi_{kl} = \lambda \phi_{kl}.
\end{equation}
We introduce the notation
$$
  V_k(x) = a(x) e^{iC(x,k)\Ga} \Gb e^{-iC(x,k) \Ga},
$$
which simplifies into
$$
  V_k(x) = a(x) e^{2iC(x,k)\Ga} \Gb = \left( \begin{array}{cc} 0&a(x) e^{2iC(x,k)}\\ a(x) e^{-2iC(x,k)}&0 \end{array} \right),
$$
thanks to (\ref{AntiCom}) and (\ref{DiracMatrices}). Hence the radial equation (\ref{Rad-2}) becomes
\begin{equation} \label{Rad-3}
  \left[ \Ga D_x + \mu_{kl}(\lambda) V_k(x) \right] \phi_{kl} = \lambda \phi_{kl}.
\end{equation}
Note that the potential $V_k(x)$ now decays exponentially at both horizons $\{x=\pm\infty\}$ by (\ref{Asymp-a}) and is thus a \emph{short-range} potential. We shall define the scattering matrices associated to (\ref{Rad-3}) for any real value of the eigenvalues $\mu_{kl}(\lambda)$. Precisely, let us set $z = -\mu_{kl}(\lambda) \in \R$ and $q_k(x) = a(x) e^{2iC(x,k)}$. We are led to consider the "general" one-dimensional stationary equation
\begin{equation} \label{Stat-Eq-Ak}
  \left[ \Ga D_x - z V_k(x) \right] \phi_{kl} = \lambda \phi_{kl},
\end{equation}
with
\begin{equation} \label{Pot-V}
V_k(x) = \left( \begin{array}{cc} 0&q_k(x) \\ \bar{q}_k(x) &0 \end{array} \right), \qquad q_k(x) = a(x) e^{2iC(x,k)}.
\end{equation}

The equation (\ref{Stat-Eq-Ak}) enters precisely the framework given in \cite{AKM} except for the additional coupling constant $z \in \R$. The equation (\ref{Stat-Eq-Ak}) is also quite similar to the one in our previous paper \cite{DN3} and we shall follow here the results obtained therein. The \emph{natural} scattering matrix associated to (\ref{Stat-Eq-Ak}) can be defined in terms of particular solutions called Jost functions. These are $2 \times 2$ matrix-valued solutions of (\ref{Stat-Eq-Ak}) with prescribed asymptotic behaviors at $x = \pm \infty$ given by
\begin{eqnarray}
  F_L(x,\lambda,k,z) & = & e^{i\Ga \lambda x} ( I_2 + o(1)), \ \ x \to +\infty, \label{FL} \\
  F_R(x,\lambda,k,z) & = & e^{i\Ga \lambda x} (I_2 + o(1)), \ \ x \to -\infty. \label{FR}
\end{eqnarray}
From (\ref{Stat-Eq-Ak}), (\ref{FL}) and (\ref{FR}), it is easy to see that such solutions (if there exist) must satisfy the integral equations
\begin{equation} \label{IE-FL}
  F_L(x,\lambda,k,z) = e^{i\Ga \lambda x} - i z \Ga \int_x^{+\infty} e^{-i\Ga \lambda (y-x)} V_k(y) F_L(y,\lambda,k,z) dy,
\end{equation}
\begin{equation} \label{IE-FR}
  F_R(x,\lambda,k,z) = e^{i\Ga \lambda x} + i z \Ga \int_{-\infty}^x e^{-i\Ga \lambda (y-x)} V_k(y) F_R(y,\lambda,k,z) dy.
\end{equation}
But, since the potential $V_k$ belongs to $L^1(\R)$, it follows that the integral equations (\ref{IE-FL}) and (\ref{IE-FR}) are uniquely solvable by iteration and using that $\|V_k(x)\| (:= \sup_{1 \leq i,j \leq 2} |(V_k)_{ij}(x)|) = a(x)$, the Jost functions are shown to satisfy the estimates
$$
  \|F_L(x,\lambda,k,z)\| \leq e^{|z| \int_x^{+\infty} a(s) ds}, \quad \|F_R(x,\lambda,k,z)\| \leq e^{|z| \int_{-\infty}^x a(s) ds}.
$$
Moreover we can prove
\begin{lemma} \label{DetFL}
  For $\lambda \in \R$ and $z \in \R$, either of the Jost solutions $F_L(x,\lambda,k,z)$ and $F_R(x,\lambda,k,z)$ forms a fundamental matrix of (\ref{Stat-Eq-Ak}) and has determinant equal to 1. Moreover, the following equalities hold
\begin{eqnarray}
  F_L(x,\lambda,k,z)^* \, \Ga \, F_L(x,\lambda,k,z) & = & \Ga, \label{RelationFL}\\
  F_R(x,\lambda,k,z)^* \, \Ga \, F_L(x,\lambda,k,z) & = & \Ga, \label{RelationFR}
\end{eqnarray}
where $^*$ denotes the matrix conjugate transpose.
\end{lemma}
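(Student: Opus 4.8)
The plan is to reduce everything to the first-order form of the radial system and then exploit only the algebraic structure of the Dirac matrices together with Liouville's formula. First I would rewrite (\ref{Stat-Eq-Ak}) as $\d_x \phi = A(x)\,\phi$ with $A(x) = i\,\Ga\,(\lambda + z\,V_k(x))$, using that $\Ga^{-1} = \Ga$ by (\ref{AntiCom}). The Volterra integral equations (\ref{IE-FL}) and (\ref{IE-FR}), which are already known to be uniquely solvable by iteration since $V_k \in L^1(\R)$, are precisely the reformulations of this ODE carrying respectively the normalizations (\ref{FL}) and (\ref{FR}); hence $F_L$ and $F_R$ are locally absolutely continuous matrix-valued solutions of (\ref{Stat-Eq-Ak}), and in particular each of their columns solves (\ref{Stat-Eq-Ak}). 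To conclude that either one is a fundamental matrix it then suffices to check that its determinant never vanishes.

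For the determinant I would invoke Liouville's formula: since $F$ (meaning $F_L$ or $F_R$) solves $\d_x F = A(x)F$, one has $\d_x \det F(x) = \big(\mathrm{tr}\,A(x)\big)\,\det F(x)$ in the absolutely continuous sense. Now $\mathrm{tr}\,\Ga = 0$ because $\Ga = \mathrm{diag}(1,-1)$, and $\Ga\,V_k(x)$ is off-diagonal by (\ref{Pot-V}), hence also traceless; therefore $\mathrm{tr}\,A(x) = i\lambda\,\mathrm{tr}\,\Ga + iz\,\mathrm{tr}\big(\Ga V_k(x)\big) = 0$ and $\det F$ is constant in $x$. Letting $x \to +\infty$ in (\ref{FL}), respectively $x \to -\infty$ in (\ref{FR}), and using $\det e^{i\Ga\lambda x} = e^{i\lambda x} e^{-i\lambda x} = 1$ together with $\det(I_2 + o(1)) \to 1$, we obtain $\det F_L \equiv \det F_R \equiv 1$. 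In particular both Jost matrices are everywhere invertible, so each forms a fundamental matrix of (\ref{Stat-Eq-Ak}).

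The conservation relations follow from one algebraic identity. Since $\lambda, z \in \R$, $\Ga^{*} = \Ga$, $\Ga^{2} = I_2$, and the matrix $V_k(x)$ in (\ref{Pot-V}) is Hermitian, we compute $A(x)^{*}\Ga + \Ga A(x) = -i(\lambda + z V_k(x))\Ga\Ga + i\Ga\Ga(\lambda + z V_k(x)) = 0$. Hence, for any two solutions $F, G$ of (\ref{Stat-Eq-Ak}), the bilinear quantity $G^{*}\Ga F$ is $x$-independent, because $\d_x\big(G^{*}\Ga F\big) = (\d_x G)^{*}\Ga F + G^{*}\Ga\,(\d_x F) = G^{*}\big(A^{*}\Ga + \Ga A\big)F = 0$ almost everywhere and $G^{*}\Ga F$ is absolutely continuous. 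Taking $G = F = F_L$ and letting $x \to +\infty$ via (\ref{FL}), and using that $e^{i\Ga\lambda x}$ commutes with $\Ga$ so that $e^{-i\Ga\lambda x}\Ga\,e^{i\Ga\lambda x} = \Ga$, we get (\ref{RelationFL}); the same argument, applied to $F_R$ and evaluated as $x \to -\infty$ via (\ref{FR}), gives (\ref{RelationFR}). I do not expect a genuine obstacle: the only points needing care are that $V_k$ is merely $L^1$, so the differentiations above hold only a.e. and constancy must be deduced from absolute continuity, and that the $o(1)$ remainders in (\ref{FL})–(\ref{FR}) are uniformly controlled enough to pass to the limit — both being immediate from the exponential Volterra bounds on $F_L, F_R$ recorded just before the lemma.
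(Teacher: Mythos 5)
Your argument is correct. The paper does not prove this lemma itself but simply cites \cite{AKM}, Proposition 2.2; what you have written is precisely the standard argument behind that citation: rewrite (\ref{Stat-Eq-Ak}) as $\partial_x\phi = A(x)\phi$ with $A(x)=i\Ga(\lambda+zV_k(x))$, use Liouville's formula and tracelessness of $\Ga$ and $\Ga V_k$ to get $\det F\equiv 1$ from the normalizations (\ref{FL})--(\ref{FR}), and use the algebraic identity $A^*\Ga+\Ga A=0$ (valid since $\lambda,z\in\R$ and $V_k$ is Hermitian) to conclude that $G^*\Ga F$ is $x$-independent for any two solutions. One remark: as printed, (\ref{RelationFR}) reads $F_R^*\,\Ga\,F_L=\Ga$, which cannot be what is meant --- since $F_L=F_R A_L$ and $F_R^*\Ga F_R=\Ga$, one would get $F_R^*\Ga F_L=\Ga A_L(\lambda,k,z)$, not $\Ga$. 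The intended statement is clearly $F_R^*\,\Ga\,F_R=\Ga$ (this is also what is needed later to derive (\ref{AL-Relation})), and that is exactly what your argument, evaluated at $x\to-\infty$ via (\ref{FR}), establishes. Your cautions about $V_k$ being merely $L^1$ are harmless but not needed here, since $q_k=a\,e^{2iC(\cdot,k)}$ is continuous, so the solutions are classical.
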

\begin{proof}
See \cite{AKM}, Proposition 2.2.
\end{proof}

Since the Jost solutions are fundamental matrices of (\ref{Stat-Eq-Ak}), there exists a $2\times2$ matrix $A_L(\lambda,k,z)$ such that $F_L(x,\lambda,k,z) = F_R(x,\lambda,k,z) \, A_L(\lambda,k,z)$. From (\ref{FR}) and (\ref{IE-FL}), we get the following integral expression for $A_L(\lambda,k,z)$
\begin{equation} \label{ALRepresentation}
  A_L(\lambda,k,z) = I_2 - i z \Ga \int_\R e^{-i\Ga \lambda y} V_k(y) F_L(y,\lambda,k,z) dy.
\end{equation}
Moreover, the matrix $A_L(\lambda,k,z)$ satisfies the following equality (see \cite{AKM}, Proposition 2.2)
\begin{equation} \label{AL-Relation}
  A_L^*(\lambda,k,z) \Ga A_L(\lambda,k,z) = \Ga, \quad \forall \lambda \in \R, \ z \in \R.
\end{equation}
Using the matrix notation
$$
  A_L(\lambda,k,z) = \left[\begin{array}{cc} a_{L1}(\lambda,k,z)&a_{L2}(\lambda,k,z)\\a_{L3}(\lambda,k,z)&a_{L4}(\lambda,k,z) \end{array} \right],
$$
the equality (\ref{AL-Relation}) can be written in components for all $\lambda, z \in \R$ as
\begin{equation} \label{ALUnitarity}
  \left. \begin{array}{ccc} |a_{L1}(\lambda,k,z)|^2 - |a_{L3}(\lambda,k,z)|^2 & = & 1, \\
  |a_{L4}(\lambda,k,z)|^2 - |a_{L2}(\lambda,k,z)|^2 & = & 1, \\
  a_{L1}(\lambda,k,z) \overline{a_{L2}(\lambda,k,z)} - a_{L3}(\lambda,k,z) \overline{a_{L4}(\lambda,k,z)} & = & 0. \end{array} \right.
\end{equation}

The matrices $A_L(\lambda,k,z)$ encode all the scattering information of the equation (\ref{Stat-Eq-Ak}). Precisely, following \cite{AKM, DN3}, we define the scattering matrix $\hat{S}(\lambda,k,z)$ associated to (\ref{Stat-Eq-Ak}) by
\begin{equation} \label{SR-SM1}
  \hat{S}(\lambda,k,z) = \left[ \begin{array}{cc} \hat{T}(\lambda,k,z)&\hat{R}(\lambda,k,z)\\ \hat{L}(\lambda,k,z)&\hat{T}(\lambda,k,z) \end{array} \right],
\end{equation}
where
\begin{equation} \label{SR-SM2}
  \hat{T}(\lambda,k,z) = \frac{1}{a_{L1}(\lambda,k,z)}, \quad \hat{R}(\lambda,k,z) = - \frac{a_{L2}(\lambda,k,z)}{a_{L1}(\lambda,k,z)}, \quad \hat{L}(\lambda,z) = \frac{a_{L3}(\lambda,k,z)}{a_{L1}(\lambda,k,z)}.
\end{equation}
The relations (\ref{ALUnitarity}) lead to the unitarity of the scattering matrix $\hat{S}(\lambda,k,z)$. Precisely, we have proved the following Lemma
\begin{lemma} \label{Unitarity-SM-Simplified}
  For each $\lambda \in \R$ and $z \in \R$, let the scattering matrix $\hat{S}(\lambda,k,z)$ be defined by (\ref{SR-SM1})-(\ref{SR-SM2}). We have
    \begin{equation} \label{SCUnitarity}
    \left. \begin{array}{ccc} |\hat{T}(\lambda,k,z)|^2 + |\hat{R}(\lambda,k,z)|^2 & = & 1, \\
    |\hat{T}(\lambda,k,z)|^2 + |\hat{L}(\lambda,k,z)|^2 & = & 1, \\
    \hat{T}(\lambda,k,z) \overline{\hat{R}(\lambda,k,z)} + \hat{L}(\lambda,k,z) \overline{\hat{T}(\lambda,k,z)} & = & 0. \end{array} \right.
  \end{equation}
\end{lemma}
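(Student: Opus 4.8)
The plan is purely algebraic, built on the matrix $A_L(\lambda,k,z)$ and the two facts already available about it: the relation $A_L^*\Ga A_L = \Ga$ recorded in (\ref{AL-Relation})--(\ref{ALUnitarity}), and the normalization $\det A_L(\lambda,k,z) = 1$, which follows from Lemma \ref{DetFL} since $F_L = F_R A_L$ and both Jost matrices have determinant $1$. The first step would be to note that an invertible $2\times 2$ matrix satisfying $A^*\Ga A = \Ga$ together with $\det A = 1$ lies in $SU(1,1)$, so its entries obey the reflection symmetry $a_{L4} = \overline{a_{L1}}$, $a_{L3} = \overline{a_{L2}}$ with $|a_{L1}|^2 - |a_{L2}|^2 = 1$. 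One does not even need to invoke $SU(1,1)$: since $|a_{L4}|^2 - |a_{L2}|^2 = 1$ forces $a_{L4}\neq 0$, one solves the third identity of (\ref{ALUnitarity}) for $a_{L3}=a_{L1}\overline{a_{L2}}/\overline{a_{L4}}$, substitutes into $\det A_L = 1$, and simplifies with $|a_{L4}|^2 - |a_{L2}|^2 = 1$ to get $a_{L1} = \overline{a_{L4}}$, whence $a_{L3}=\overline{a_{L2}}$. In particular $|a_{L1}|^2 = 1 + |a_{L2}|^2 = 1 + |a_{L3}|^2 \geq 1$, so $a_{L1}\neq 0$ and $\hat{T},\hat{R},\hat{L}$ are genuinely defined by (\ref{SR-SM2}).

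The second step is a one-line substitution into (\ref{SR-SM2}). Using $|a_{L1}|^2 = 1 + |a_{L2}|^2$ one gets $|\hat{T}|^2 + |\hat{R}|^2 = (1+|a_{L2}|^2)/|a_{L1}|^2 = 1$; using $|a_{L1}|^2 = 1 + |a_{L3}|^2$ one gets $|\hat{T}|^2 + |\hat{L}|^2 = (1+|a_{L3}|^2)/|a_{L1}|^2 = 1$; and $\hat{T}\,\overline{\hat{R}} + \hat{L}\,\overline{\hat{T}} = (a_{L3}-\overline{a_{L2}})/|a_{L1}|^2 = 0$ by the symmetry $a_{L3}=\overline{a_{L2}}$. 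These are exactly the three identities (\ref{SCUnitarity}).

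There is no real obstacle here; the one point worth flagging is that the three scalar relations (\ref{ALUnitarity}) are not by themselves sufficient for the off-diagonal identity in (\ref{SCUnitarity}) --- the reflection symmetry $a_{L3}=\overline{a_{L2}}$ genuinely uses the determinant normalization from Lemma \ref{DetFL}. As an aside, the lemma could also be proved more intrinsically, bypassing $A_L$: one deduces the unitarity of $\hat{S}$ from the flux identities (\ref{RelationFL})--(\ref{RelationFR}) by the standard argument that the scattering matrix intertwines the incoming and outgoing asymptotic representations while preserving the conserved current $x\mapsto \phi^*\Ga\phi$. That route is a little longer but conceptually cleaner, and in either case the verification is routine.
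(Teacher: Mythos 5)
Your proof is correct, and it follows the same algebraic route the paper intends: read the unitarity of $\hat{S}(\lambda,k,z)$ off the componentwise relations (\ref{ALUnitarity}) for $A_L(\lambda,k,z)$. The point you flag is genuine and worth emphasizing: the paper simply asserts that "the relations (\ref{ALUnitarity}) lead to the unitarity," but those three scalar identities alone only yield the two diagonal relations in (\ref{SCUnitarity}) (one gets $|a_{L2}|=|a_{L3}|$ and $|a_{L1}|=|a_{L4}|$ by combining them, whence $|\hat{T}|^2+|\hat{R}|^2=|\hat{T}|^2+|\hat{L}|^2=1$); the off-diagonal identity requires the additional phase information $a_{L3}=\overline{a_{L2}}$, which does not follow from $A_L^*\Ga A_L=\Ga$ alone (e.g.\ $a_{L1}=a_{L4}=\cosh t$, $a_{L2}=e^{i\alpha}\sinh t$, $a_{L3}=e^{i\beta}\sinh t$ with $\beta\neq-\alpha$ satisfies (\ref{ALUnitarity}) but not the third line of (\ref{SCUnitarity})). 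Your derivation of $a_{L1}=\overline{a_{L4}}$ and $a_{L3}=\overline{a_{L2}}$ from $\det A_L=1$ (via Lemma \ref{DetFL} and $F_L=F_RA_L$) closes this gap correctly; the same symmetry also appears later in the paper as (\ref{ALSym1})--(\ref{ALSym2}) of Lemma \ref{AL-Analytic} specialized to real $z$, but at the point where the lemma is stated your determinant argument is the natural way to obtain it. Your observation that $a_{L1}\neq 0$, so that (\ref{SR-SM2}) is well defined, is also a detail the paper leaves implicit.
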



\begin{remark}
The quantities $\hat{T}$ and $\hat{R}, \hat{L}$ are called the transmission and reflection coefficients respectively associated to equation (\ref{Stat-Eq-Ak}).
The former measures the part of a signal transmitted from an horizon to the other in a scattering process whereas the
latters measure the part of a signal reflected from an horizon to itself (event horizon for $\hat{L}$ and cosmological horizon
for $\hat{R}$)\footnote{Whence the notations $\hat{L}$ for \emph{left} reflection coefficient since the event horizon is
located "on the left"' at $x=-\infty$ and $\hat{R}$ for \emph{right} reflection coefficient since the cosmological horizon is located "on the right" at $x=+\infty$.}.
\end{remark}


\subsubsection{The simplified reduced time-dependent scattering matrix}

Before proceeding further, let us make the link between the scattering matrix $\hat{S}(\lambda,k,z)$ given by (\ref{SR-SM1}) and its time-dependent counterpart expressed in terms of time-dependent wave operators. Consider the Hamiltonian
\begin{equation} \label{HkzHat}
  \hat{H}_{kz} = \Ga D_x - z V_k(x),
\end{equation}
appearing in (\ref{Stat-Eq-Ak}) and the free Hamiltonian
\begin{equation} \label{Hinfty}
  H_\infty = \Ga D_x.
\end{equation}

Using a standard Fourier transform, it is immediate that the Hamiltonian $H_\infty$ is selfadjoint on $\h = L^2(\R; \C^2)$ and has purely absolutely continuous spectrum covering the whole line $\R$. Moreover, since the potential $V_k(x)$ is globally bounded and exponentially decreasing at $x = \pm \infty$, it is easy to see (by Kato-Rellich Theorem \cite{RS}) that $\hat{H}_{kz}$ is also selfadjoint on $\h$, and by a standard trace class method \cite{RS}, that $\hat{H}_{kz}$ has purely absolutely spectrum given by the whole line $\R$ and that the standard wave operators
\begin{equation} \label{WO-Ak}
  W^\pm(\hat{H}_{kz}, H_\infty) = s-\lim_{t \to \pm \infty} e^{it\hat{H}_{kz}} e^{-itH_\infty},
\end{equation}
exist and are asymptotically complete on $\h$, \textit{i.e.} they are isometries on $\h$ and their inverse wave operators given by
$$
 (W^\pm(\hat{H}_{kz}, H^\infty))^* = W^\pm(H_\infty, \hat{H}_{kz}) := s-\lim_{t \to \pm \infty} e^{itH_\infty} e^{-it\hat{H}_{kz}},
$$
also exist on $\h$. .

Thus we can define the scattering operator associated to the pair of Hamiltonians $(\hat{H}_{kz}, H_\infty)$ by the usual rule
\begin{equation} \label{S-Ak}
  \hat{S}(k,z) = \left( W^+(\hat{H}_{kz}, H_\infty) \right)^* W^-(\hat{H}_{kz}, H_\infty).
\end{equation}
The operator $\hat{S}(k,z)$ is unitary on $\h$ and commutes with $H_\infty$. We thus can decompose the scattering operator $\hat{S}(k,z)$ on the energy representation of the Hamiltonian $H_\infty$. For this, let us introduce the unitary transform $\F$ on $\h$ by
\begin{equation} \label{F}
  (\F \psi)(\lambda) = \frac{1}{\sqrt{2\pi}} \int_\R e^{-i\Ga x \lambda} \psi(x) dx.
\end{equation}
Then the transform $\F$ diagonalizes the Hamiltonian $H_\infty$, i.e. $\F H_\infty \F^* = M_\lambda$ where $M_\lambda$ is the operator of multiplication by $\lambda$. Hence we define the scattering matrix $\hat{S}(\lambda,k,z)$ at energy $\lambda$ by the rule
\begin{equation} \label{SM-Ak}
  \hat{S}(k,z) = \F^* \hat{S}(\lambda,k,z) \F,
\end{equation}
where now the scattering matrix $\hat{S}(\lambda,k,z)$ defined in (\ref{SM-Ak}) is a unitary $2 \times 2$ - matrix.

\begin{theorem} \label{Link-Stat-TimeDep-kz}
The time-dependent reduced scattering matrix $\hat{S}(\lambda,k,z)$ constructed through the chain of identities (\ref{WO-Ak})-(\ref{SM-Ak}) coincides exactly with the stationary reduced scattering matrix $\hat{S}(\lambda,k,z)$ defined by (\ref{SR-SM1})-(\ref{SR-SM2}) and constructed with the help of the Jost functions (\ref{FL})-(\ref{FR}) and scattering data $A_L(\lambda,k,z)$ given by (\ref{ALRepresentation}). Whence the correspondence between the stationary (\ref{SR-SM1}) and time-dependent (\ref{SM-Ak}) scattering matrices associated to the stationary equation (\ref{Stat-Eq-Ak}) and our use of the same notation for both objects.
\end{theorem}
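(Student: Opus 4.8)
The plan is to follow the by now standard route relating time-dependent and stationary scattering theory, as presented in \cite{Y} (see also \cite{AKM} and \cite{DN3}), the only new ingredient being the bounded real coupling constant $z$, which plays no essential role once it is absorbed into the exponentially decaying potential $z V_k$.

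First I would establish a limiting absorption principle for the pair $(\hat{H}_{kz}, H_\infty)$ on $\h$. Since $H_\infty = \Ga D_x$ is diagonalized by the Fourier transform $\F$ of (\ref{F}), with $\F H_\infty \F^* = M_\lambda$, the trace operators $\Gamma_0(\lambda): \h \to \C^2$ defined by $\Gamma_0(\lambda)\psi = (\F\psi)(\lambda)$ are bounded on the weighted spaces $L^{2,s}(\R;\C^2)$ with $s>1/2$, and the free resolvent $R_0(\lambda\pm i0) = (H_\infty - \lambda \mp i0)^{-1}$ exists in $\mathcal{B}(L^{2,s},L^{2,-s})$. Because $z V_k(x)$ decays exponentially at both horizons, it maps $L^{2,-s}$ continuously into $L^{2,s}$ for every $s$; hence the Kuroda--Yafaev machinery applies verbatim and yields the LAP for $\hat{H}_{kz}$, together with the stationary representation of the wave operators. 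Concretely, for $\psi$ in a suitable dense set the boundary values of the resolvent produce the two distorted plane-wave solutions of $\hat{H}_{kz}\varphi = \lambda\varphi$, namely the columns of the matrix $\Psi^\mp(x,\lambda) = e^{i\Ga\lambda x} - z\, R_{\hat{H}_{kz}}(\lambda \mp i0)\, [\,V_k\, e^{i\Ga\lambda \cdot}\,](x)$, which satisfy the incoming ($-$) and outgoing ($+$) radiation conditions at $x=\pm\infty$.

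Next I would identify these distorted solutions with the Jost solutions of Lemma \ref{DetFL}. The resolvent of the one-dimensional first-order system $\hat{H}_{kz}$ has an explicit Green matrix built from $F_R$ for $x<y$ and from $F_L$ for $x>y$, the gluing being fixed by $F_L = F_R A_L$, by $\det F_L = \det F_R = 1$, and by the jump prescribed by $\Ga$; the relations (\ref{RelationFL})--(\ref{RelationFR}) make all of this consistent. Inserting this Green matrix into $\Psi^\mp$ and using the integral equations (\ref{IE-FL})--(\ref{IE-FR}) shows that the column of $\Psi^+$ with purely outgoing behaviour is a combination of the columns of $F_L$ (respectively $F_R$) whose coefficients are ratios of the entries $a_{L1},\dots,a_{L4}$ of $A_L$. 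Reading off the asymptotics as $x\to\pm\infty$ then produces transmitted amplitude $1/a_{L1}$ and reflected amplitudes $-a_{L2}/a_{L1}$ and $a_{L3}/a_{L1}$, that is, precisely the entries of the stationary matrix $\hat{S}(\lambda,k,z)$ in (\ref{SR-SM1})--(\ref{SR-SM2}).

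Finally, the general stationary formula for the on-shell scattering matrix,
$$
  \hat{S}(\lambda,k,z) = I_{\C^2} - 2\pi i\, \Gamma_0(\lambda)\, z V_k\, \big( I + z\, R_{\hat{H}_{kz}}(\lambda + i0)\, V_k \big)^{-1}\, \Gamma_0(\lambda)^*,
$$
is legitimate here because $V_k \in L^1(\R)$ is in fact exponentially decaying, so all the relevant operators are bounded (indeed trace class) on the appropriate scale; this formula expresses the time-dependent scattering matrix (\ref{SM-Ak}) through the same distorted waves $\Psi^\mp$, and combined with the previous step it gives the claimed coincidence of the two objects. In particular, the unitarity recorded in Lemma \ref{Unitarity-SM-Simplified} is re-derived from (\ref{ALUnitarity}). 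The main obstacle is bookkeeping rather than conceptual: one must be scrupulous about the conventions (the sign of $\lambda$ versus $z$, the choice $\Ga = \mathrm{diag}(1,-1)$, which channel is ``left'' and which is ``right'', the $\sqrt{2\pi}$ normalization in $\F$) so that the three descriptions of the scattering matrix --- via $A_L$, via the Green matrix, and via the abstract stationary formula --- match with exactly the same constants. It is precisely for these already-performed verifications (in the case $z=1$) that \cite{AKM} and \cite{DN3} are invoked, the extension to arbitrary real $z$ being immediate since $z V_k$ is again an exponentially decaying $L^1$ potential.
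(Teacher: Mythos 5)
Your proposal is correct and follows essentially the same route as the paper, which for this theorem simply defers to \cite{AKM} and \cite{DM}, i.e.\ to the standard Kuroda--Yafaev stationary scattering theory for one-dimensional Dirac systems with $L^1$ (here exponentially decaying) potentials; the paper itself carries out the very same abelian-limit/resolvent computation in detail for the global operator in the later subsection on the link between the stationary and time-dependent scattering matrices, arriving at the same on-shell formula you write down. The only caveat is the sign bookkeeping you already flag (the paper's convention is $\hat{H}_{kz}=\Ga D_x - zV_k$ with $z=-\mu_{kl}(\lambda)$), which must be tracked consistently but does not affect the argument.
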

\begin{proof}
This follows from the results in \cite{AKM} and more precisely \cite{DM}.
\end{proof}


\subsubsection{The physical reduced scattering matrix}

Let us now come back to the original and physical problem. The true scattering data are indeed not associated to the equation (\ref{Stat-Eq-Ak}) but instead to the general stationary equation (see (\ref{Rad-1}))
\begin{equation} \label{Stat-Eq}
  \left[ \Ga D_x  - z a(x) \Gb + c(x,k) \right] \psi = \lambda \psi,
\end{equation}
where $z \in \R$ is any real number playing the role here of the angular momenta $\mu_{kl}(\lambda)$. The subtlety in dealing with the one-dimensional equation (\ref{Stat-Eq}) is that there is no canonical choice for the associated scattering matrix due to the long-range potential $c(x,k)$. Recall indeed that $c(x,k)$ tends to two distinct constants $\Omega_\pm(k)$ when $x$ tends to $\pm \infty$ respectively whereas $a(x)$ is exponentially decreasing at both horizons. We shall nevertheless define a "natural" scattering matrix associated with (\ref{Stat-Eq}) as follows. Introduce first the Hamiltonian
\begin{equation} \label{Hkz}
  H_{kz} = \Ga D_x  - z a(x) \Gb + c(x,k).
\end{equation}

Recall now the asymptotics of the potentials $a(x)$ and $c(x,k)$ in the above description of $H_{kz}$.  We have
$$
a(x) = a_{\pm} e^{\kappa_{\pm}x} + O( e^{3\kappa_{\pm}x}), \ x \rightarrow \pm \infty,
$$
and
$$
  \forall k \in 1/2 + \Z, \quad c(x,k) = \Omega_{\pm}(k) + c_{\pm} e^{2\kappa_{\pm}x} + O( e^{4\kappa_{\pm}x}), \ x \rightarrow \pm \infty,
$$
where $\Omega_\pm(k)$ are given by (\ref{Omega}). This leads to define the following \emph{asymptotic} selfadjoint operators on $\h$
$$
  H^{\pm \infty}_k = \Ga D_x + \Omega_\pm(k),
$$
and we thus have
\begin{eqnarray*}
  H_{kz} & = & H^{+\infty}_k + O(e^{2 \kappa_+ x}), \quad x \to + \infty, \\
  H_{kz} & = & H^{-\infty}_k + O(e^{2 \kappa_- x}), \quad x \to - \infty.
\end{eqnarray*}

Hence, the operator $H_{kz}$ is a short-range perturbation of $H^{- \infty}_k$ at the event horizon and of $H^{+\infty}_k$ at the cosmological horizons. We thus expect that at the event horizon, the Hamiltonian $H_{kz}$ can be compared with the asymptotic Hamiltonian $H^{-\infty}_k$ whereas at the cosmological horizon, $H_{kz}$ can be compared with $H^{+\infty}_k$. In order to separate the two asymptotic regions, we introduce (as in \cite{Da1}) the projections
$$
  P_\pm = \mathbf{1}_{\R^\pm}(\Ga).
$$

Here the projectors $P_\pm$ allow us to separate easily the part of the Dirac fields falling into the event horizon and the part escaping toward the cosmological horizon. They correspond to the \emph{asymptotic velocity operators} associated to the Hamiltonian $H_\infty =  \Ga D_x$ in the langage of \cite{DG}. For instance, when $t$ tends $+\infty$, we shall compare the part of the Dirac fields escaping to the cosmological horizon with the simpler dynamic generated by $H^{+\infty}_k P_+$ whereas the part of the Dirac fields falling into the event horizon will be compared with the simpler dynamic generated by $H^{-\infty}_k P_-$.

Having this in mind, let us finally introduce the full \emph{asymptotic Hamiltonians}
\begin{equation} \label{Hkpm}
  H^\pm_k = H^{+\infty}_k P_\pm + H^{-\infty}_k P_\mp.
\end{equation}
A complete direct scattering theory was obtained in \cite{DN2} and also in \cite{Da1} for the pair of Hamiltonians $(H_{kz}, H_k^\pm)$. We summarize these results here. The Hamiltonians $H_{kz}$ and $H_k^\pm$ are selfadjoint on the Hilbert space $\h = L^2(\R; \C^2)$. Moreover, according to the analysis given in \cite{Da1}, we can prove the following Theorem
\begin{theorem} \label{WO-Hkz-Hkpm}
  For each $k \in \frac{1}{2} + \Z$ and $z \in \R$, the Hamiltonians $H_{kz}$ and $H^\pm_k$ have purely absolutely continuous spectra, precisely
  $$
    \sigma(H_{kz})= \sigma_{ac}(H_{kz}) = \R, \quad \sigma(H^\pm_k)= \sigma_{ac}(H^\pm_k) = \R,
  $$
  and the wave operators
  \begin{equation} \label{WO-Rad}
    W^\pm(H_{kz}, H^\pm_k) := s-\lim_{t \to \pm \infty} e^{itH_{kz}} e^{-itH^\pm_k},
  \end{equation}
  exist on $\h$ and are asymptotically complete, \textit{i.e.} they are isometries on $\h$ and their inverse wave operators given by
  $$
    (W^\pm(H_{kz}, H^\pm_k))^* = W^\pm(H^\pm_{k}, H_{kz}) := s-\lim_{t \to \pm \infty} e^{itH^\pm_{k}} e^{-itH_{kz}},
  $$
  also exist on $\h$.
\end{theorem}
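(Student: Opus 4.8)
The plan is to treat $H^\pm_k$ as an explicitly diagonalizable comparison dynamics, $H_{kz}$ as a short--range perturbation of it, prove existence of the wave operators by Cook's method, and deduce completeness from propagation (Mourre) estimates for $H_{kz}$, following \cite{Da1}. After the Fourier transform $\F$ of (\ref{F}) the operator $H^{\pm\infty}_k = \Ga D_x + \Omega_\pm(k)$ becomes multiplication by $\Ga\xi + \Omega_\pm(k)$, so $H^\pm_k = \Ga D_x + \Omega_+(k)P_+ + \Omega_-(k)P_-$ is a direct sum of the two shifted momentum operators and is purely absolutely continuous with spectrum $\R$. Since $a(x)$ is bounded and $c(x,k)$ is bounded with limits $\Omega_\pm(k)$ at $\pm\infty$, the difference
\[
  H_{kz}-H^\pm_k = -z\,a(x)\Gb + \big(c(x,k)-\Omega_+(k)\big)P_+ + \big(c(x,k)-\Omega_-(k)\big)P_-
\]
is a bounded self--adjoint operator, hence $H_{kz}$ is self--adjoint on $D(\Ga D_x)=H^1(\R;\C^2)=D(H^\pm_k)$ by Kato--Rellich.

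For the existence of $W^\pm(H_{kz},H^\pm_k)$ I would apply Cook's criterion on the dense set $C_0^\infty(\R;\C^2)$. The key point is that $e^{-itH^\pm_k}$ is a pure translation twisted by a phase: on the two $\Ga$--eigencomponents it moves with group velocities $+1$ and $-1$, so for $\psi=(\psi_1,\psi_2)$ with compact support the $P_+$--component of $e^{-itH^+_k}\psi$ is localized near $x\approx t$ and the $P_-$--component near $x\approx -t$ (symmetrically for $H^-_k$ and $t\to-\infty$). Inserting this into $\big(H_{kz}-H^\pm_k\big)e^{-itH^\pm_k}\psi$, each term of the perturbation samples $a(x)$, respectively $c(x,k)-\Omega_\pm(k)$, exactly in the region swept by the wave packet, where, by the asymptotics (\ref{Asymp-a})--(\ref{Asymp-c}) and because $\kappa_->0>\kappa_+$, it is $O(e^{-\delta|t|})$ for some $\delta>0$. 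Hence $\big\|(H_{kz}-H^\pm_k)e^{-itH^\pm_k}\psi\big\|\le Ce^{-\delta|t|}$ is integrable in $t$, and $W^\pm(H_{kz},H^\pm_k)$ exist. No Dollard modification is needed: the long--range tail of $c(x,k)$ has been absorbed into $H^{\pm\infty}_k$, leaving an exponentially short--range remainder.

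The more delicate half is asymptotic completeness, i.e. the existence of $W^\pm(H^\pm_k,H_{kz})=s-\lim_{t\to\pm\infty} e^{itH^\pm_k}e^{-itH_{kz}}$, because $e^{-itH_{kz}}$ is not explicit. Here the plan is to establish a global Mourre estimate for $H_{kz}$ with a conjugate operator $A$ modelled on $\Ga x$ (so that $[\Ga D_x,iA]=I_2$): the commutator $[H_{kz},iA]$ equals $I_2$ plus a term that decays at $x=\pm\infty$ and is $H_{kz}$--locally compact, which yields a strictly positive Mourre estimate on all of $\R$, hence a limiting absorption principle for $H_{kz}$, the absence of singular continuous spectrum, and minimal/maximal velocity estimates: $e^{-itH_{kz}}\psi$ leaves every compact set at a positive rate and spreads no faster than speed one. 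These propagation bounds run Cook's argument in the reverse direction -- equivalently, they exhibit the factored perturbation as $H_{kz}$-- and $H^\pm_k$--smooth in Kato's sense -- and so produce the inverse wave operators. Combined with the isometry of $W^\pm(H_{kz},H^\pm_k)$ this gives asymptotic completeness, after which the unitary equivalence of $H_{kz}$ (restricted to its a.c. subspace, which is all of $\h$) with the purely a.c. operator $H^\pm_k$ forces $\sigma(H_{kz})=\sigma_{ac}(H_{kz})=\R$. I expect the propagation/Mourre step for the non--explicit $H_{kz}$ to be the main obstacle; it is carried out in \cite{Da1} and is a much simpler version of the Mourre analysis required for the full Hamiltonian $H$ in Appendix \ref{LAP-Mourre}.
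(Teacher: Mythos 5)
Your overall route -- Kato--Rellich for self-adjointness, Cook's method for existence using the fact that $e^{-itH^\pm_k}$ is an explicit translation-plus-phase whose two $\Ga$-components travel into the regions where $a(x)$ and $c(x,k)-\Omega_\pm(k)$ are exponentially small, and a Mourre/Kato-smoothness argument with conjugate operator modelled on $\Ga x$ for completeness -- is exactly the method of \cite{Da1}, which is all the paper itself offers as ``proof'' of this theorem (it states the result and defers to \cite{Da1, DN2}; the same machinery is developed in Appendix \ref{LAP-Mourre} for the global pair $(H,H_0)$, where the commutator $i[\Ga D_x,\Ga x]=I_2$ computation you sketch is carried out). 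It is worth noting that the paper also contains, in effect, a second and more elementary route that you could have exploited: conjugating by the unitary $U_k=e^{-iC(x,k)\Ga}$ of (\ref{Unitary-Uk}) removes the long-range potential $c(x,k)$ entirely and reduces the problem to the pair $(\hat{H}_{kz},H_\infty)$, which differ by the exponentially decaying matrix potential $zV_k$; there existence and completeness follow from the standard trace-class (Kato--Birman) theory as invoked before Theorem \ref{Link-Stat-TimeDep-kz}, and one then recovers $W^\pm(H_{kz},H^\pm_k)=U_k W^\pm(\hat{H}_{kz},H_\infty)G_k^\pm$ via the chain rule, the operators $G_k^\pm$ being computed explicitly as unitaries in Lemma \ref{G-k}. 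That route avoids any propagation estimate for the non-explicit dynamics.

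One step in your write-up is not actually delivered by the tools you list: the claim that the a.c.\ subspace of $H_{kz}$ ``is all of $\h$''. A Mourre estimate whose error term is only compact (here $i[H_{kz},\Ga x]=I_2+2z\,x\,a(x)\,\Ga\Gb$ up to sign, and $x\,a(x)$ is decaying but not small) excludes singular continuous spectrum and accumulation of eigenvalues, but does not by itself exclude eigenvalues, and the global virial identity $\|\psi\|^2=-2z(\psi,x a(x)\Ga\Gb\,\psi)$ is not an obvious contradiction. The standard fix, consistent with the paper's framework, is the ODE argument: since $V_k\in L^1(\R)$, the Jost solutions $F_{L/R}$ of Section \ref{Direct-Stat-Scat} are fundamental matrices with $\det=1$ and behave like $e^{i\Ga\lambda x}(I_2+o(1))$ at $\pm\infty$, so no nontrivial solution of $H_{kz}\psi=\lambda\psi$ is square-integrable at either end; hence $\sigma_{pp}(H_{kz})=\emptyset$. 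With that supplied, your argument is complete.
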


Thus we can define the scattering operator associated to the pair of Hamiltonians $(H_{kz}, H_k^\pm)$ by
\begin{equation} \label{S-Rad}
  S(k,z) = \left( W^+(H_{kz}, H_k^+) \right)^* W^-(H_{kz}, H_k^-),
\end{equation}
which is clearly unitary on $\h$.

We now define the scattering matrix $S(\lambda,k,z)$ at fixed energy $\lambda$ associated to the scattering operator (\ref{S-Rad}). The difference between the previous situation and the current one is that now the free Hamiltonians $H_k^-$ and $H_k^+$ differ and thus the definition of $S(\lambda,k,z)$ is non-canonical. Following the usual approach, we first introduce the unitary transforms $F_k^\pm$ on $\h$ by
\begin{eqnarray}
  F_k^+ \psi(\lambda) & = & \F \left( \begin{array}{cc} e^{i\Omega_+(k) x}&0\\0&e^{-i\Omega_-(k) x} \end{array} \right) \psi(\lambda), \nonumber \\
                      & = & \frac{1}{\sqrt{2\pi}} \int_\R  \left( \begin{array}{cc} e^{-ix (\lambda - \Omega_+(k))}&0\\0&e^{ix (\lambda - \Omega_-(k))} \end{array} \right) \psi(x) dx, \label{F+k}
\end{eqnarray}
and
\begin{eqnarray}
  F_k^- \psi(\lambda) & = & \F \left( \begin{array}{cc} e^{i\Omega_-(k) x}&0\\0&e^{-i\Omega_+(k) x} \end{array} \right) \psi(\lambda), \nonumber \\
                      & = & \frac{1}{\sqrt{2\pi}} \int_\R  \left( \begin{array}{cc} e^{-ix (\lambda - \Omega_-(k))}&0\\0&e^{ix (\lambda - \Omega_+(k))} \end{array} \right) \psi(x) dx. \label{F-k}
\end{eqnarray}
Then it is a straightforward calculation to show that the transforms $F_k^\pm$ diagonalize the Hamiltonians $H_k^\pm$, i.e. $F_k^\pm H_k^\pm (F_k^\pm)^* = M_\lambda$ where $M_\lambda$ is the operator of multiplication by $\lambda$. Similarly to (\ref{SM-Ak}), we thus define in a natural way the scattering matrix $S(\lambda,k,z)$ associated to (\ref{Stat-Eq}) by
\begin{equation} \label{SM-Rad}
  S(k,z) = (F_k^+)^* S(\lambda,k,z) F_k^-.
\end{equation}
Hence $S(\lambda,k,z)$ is obviously a unitary $2 \times 2$ - matrix.

It remains to make the link between the \emph{physical} scattering matrix $S(\lambda,k,z)$ defined by (\ref{SM-Rad}) and the \emph{simplified} or \emph{unphysical} scattering matrix $\hat{S}(\lambda,k,z)$ obtained in (\ref{SR-SM1}) in terms of the Jost functions solutions of (\ref{Stat-Eq-Ak}). For this, we shall express the wave operators (\ref{WO-Rad}) in terms of the wave operators (\ref{WO-Ak}). Remarking the identity $H_{kz} = U_k^{-1} \hat{H}_{kz} U_k$ where $U_k$ is the unitary transform defined in (\ref{Unitary-Uk}), we get
\begin{eqnarray}
  W^\pm(H_{kz}, H_k^\pm) & = & U_k U_k^{-1} s-\lim_{t \to \pm \infty} e^{itH_{kz}} U_k U_k^{-1} e^{-itH_k^\pm}, \nonumber \\
  & = & U_k s-\lim_{t \to \pm \infty} e^{it U_k^{-1} H_{kz} U_k} U_k^{-1} e^{-itH_k^\pm}, \nonumber \\
  & = & U_k s-\lim_{t \to \pm \infty} e^{it \hat{H}_{kz}} e^{-itH_\infty} e^{itH_\infty} U_k^{-1} e^{-itH_k^\pm}, \nonumber \\
  & = & U_k W^\pm( \hat{H}_{kz}, H_\infty) G_k^\pm, \label{WO-1}
\end{eqnarray}
where we have used the chain-rule in the last step and where we have defined
\begin{equation} \label{Gkpm}
  G_k^\pm = s-\lim_{t \to \pm \infty} e^{itH_\infty} U_k^{-1} e^{-itH_k^\pm}.
\end{equation}

Let us study the operators $G_k^\pm$ separatly. We prove
\begin{lemma} \label{G-k}
  Denote by $\beta(k)$ the constant
  \begin{equation} \label{Betak}
    \beta(k) = \int_{-\infty}^0 \left[ c(s,k) - \Omega_-(k) \right] ds + \int_0^{+\infty} \left[ c(s,k) - \Omega_+(k) \right] ds,
  \end{equation}
  which is well defined by (\ref{Asymp-c}). Then we have
  \begin{equation} \label{G-k-1}
    G_k^\pm = e^{i\Ga \left[ \left( \Omega_+(k) + \beta(k) + K \right) P_\pm \ + \ \left( \Omega_-(k) + K \right) P_\mp \right]},
  \end{equation}
  where $K$ is the constant of integration in (\ref{Pot-C}) and $P_\pm = \mathbf{1}_{\R^\pm}(\Ga)$. Moreover, we have
  \begin{equation} \label{G-k-2}
    \F G_k^+ = e^{i\Ga K} \left( \begin{array}{cc} e^{i\beta(k)}&0\\0&1 \end{array} \right) F_k^+,
  \end{equation}
  and
  \begin{equation} \label{G-k-3}
    \F G_k^- = e^{i\Ga K} \left( \begin{array}{cc} 1&0\\0&e^{-i\beta(k)} \end{array} \right) F_k^-.
  \end{equation}
\end{lemma}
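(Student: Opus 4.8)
The plan is to compute the strong limits $G_k^\pm$ directly, exploiting that everything is diagonal in the two spinor components: $\Ga = \mathrm{diag}(1,-1)$, the projectors $P_\pm = \mathbf{1}_{\R^\pm}(\Ga)$ pick out the two components, and $H_\infty = \Ga D_x$, $H^{\pm\infty}_k = \Ga D_x + \Omega_\pm(k)$, $U_k = e^{-iC(x,k)\Ga}$ all act componentwise. On each component $e^{itH_\infty}$ is a translation in $x$, and — this is the point that makes the construction work — $e^{-itH^\pm_k}$ acts, on that same component, as the \emph{inverse} of that translation composed with multiplication by the constant phase $e^{-it\Omega_\pm(k)}$ (inverse because $H_\infty$ and $H^\pm_k$ share the kinetic part $\Ga D_x$; they differ only by the scalar $\Omega_\pm(k)$).

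First I would observe that in the product $e^{itH_\infty}U_k^{-1}e^{-itH^\pm_k}$ the two translations cancel, so $e^{itH_\infty}$ merely drags the multiplier $U_k^{-1}$ across it, evaluated at the shifted argument; hence for every $t$ this product is simply the operator of multiplication by a unimodular diagonal matrix, whose two entries have the form $e^{\pm i(C(x\pm t,k)\mp t\,\Omega_\pm(k))}$, the precise pattern of signs, of the shift direction and of which $\Omega_\pm(k)$ occurs being read off from the componentwise action of $H^\pm_k$. The problem thereby collapses to a one-dimensional matter: the behaviour of $C(y,k) - t\,\Omega(k)$ as the argument $y=x\pm t$ runs off to $\pm\infty$ together with $t\to\pm\infty$.

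Next I would evaluate this scalar limit. Splitting the improper integral in (\ref{Pot-C}) at $s=0$ and, on $(0,y)$, writing $c(s,k)-\Omega_-(k)=\bigl(c(s,k)-\Omega_+(k)\bigr)+\bigl(\Omega_+(k)-\Omega_-(k)\bigr)$, one rewrites $C(y,k)=\int_{-\infty}^{0}[c(s,k)-\Omega_-(k)]\,ds+\int_{0}^{y}[c(s,k)-\Omega_+(k)]\,ds+\Omega_+(k)\,y+K$. Since by (\ref{Asymp-c}) the integrands $c(s,k)-\Omega_\pm(k)$ decay like $e^{2\kappa_\pm s}$ with $\kappa_->0$, $\kappa_+<0$, both improper integrals converge, and one obtains $C(y,k)=\Omega_+(k)\,y+\beta(k)+K+o(1)$ as $y\to+\infty$ and $C(y,k)=\Omega_-(k)\,y+K+o(1)$ as $y\to-\infty$, with $\beta(k)$ as in (\ref{Betak}). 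Inserting these asymptotics into the four componentwise expressions of Step~1, the $t$-dependent terms cancel and each phase converges to the claimed affine function of $x$, which is exactly (\ref{G-k-1}) (the exponent being read with $x$ denoting the position operator). To upgrade this pointwise convergence of phases to a strong-operator limit, I would note that the approximants and the candidate limit are multiplications by unimodular functions, say $m_t$ and $m_\infty$, so that $|m_t(x)-m_\infty(x)|^2|f(x)|^2\le 4|f(x)|^2$ is a fixed $L^1$ majorant and dominated convergence gives $\|m_tf-m_\infty f\|_{\h}\to 0$ for every $f\in\h$; the approximants being unitary, so is $G_k^\pm$.

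Finally, (\ref{G-k-2}) and (\ref{G-k-3}) are pure bookkeeping: factor the limiting multiplier as $G_k^+=\mathrm{diag}\bigl(e^{i\beta(k)},1\bigr)\,e^{i\Ga K}\,\mathrm{diag}\bigl(e^{i\Omega_+(k)x},e^{-i\Omega_-(k)x}\bigr)$ and likewise $G_k^-=\mathrm{diag}\bigl(1,e^{-i\beta(k)}\bigr)\,e^{i\Ga K}\,\mathrm{diag}\bigl(e^{i\Omega_-(k)x},e^{-i\Omega_+(k)x}\bigr)$, apply $\F$ on the left (it commutes with both constant diagonal matrices), and recognise $\F\,\mathrm{diag}\bigl(e^{i\Omega_+(k)x},e^{-i\Omega_-(k)x}\bigr)=F_k^+$ and $\F\,\mathrm{diag}\bigl(e^{i\Omega_-(k)x},e^{-i\Omega_+(k)x}\bigr)=F_k^-$ from the definitions (\ref{F+k})–(\ref{F-k}). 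The only step requiring genuine care is the simultaneous bookkeeping of signs, translation directions on the two components, and the integral splitting; there is no analytic obstacle, the strong-limit step being trivially dominated since every operator in sight is a unimodular multiplication.
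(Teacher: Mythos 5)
Your proposal is correct and follows essentially the same route as the paper's proof: an explicit componentwise computation showing that $e^{itH_\infty}U_k^{-1}e^{-itH_k^\pm}$ is multiplication by a unimodular diagonal phase, followed by the asymptotics $C(y,k)=\Omega_+(k)y+\beta(k)+K+o(1)$ as $y\to+\infty$ and $C(y,k)=\Omega_-(k)y+K+o(1)$ as $y\to-\infty$ (the paper uses the rewriting (\ref{CxkBetak}) for the first of these), and then the same bookkeeping for (\ref{G-k-2})--(\ref{G-k-3}). Your explicit dominated-convergence justification of the strong limit is a small addition the paper leaves implicit.
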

\begin{proof}
  We only prove (\ref{G-k-1}) and (\ref{G-k-2}) in the case $G_k^+$ since the proof for $G_k^-$ is similar. Note first that the projectors $P_\pm$ have the explicit form
  $$
    P_+ = \left( \begin{array}{cc} 1&0\\0&0 \end{array} \right), \quad P_+ = \left( \begin{array}{cc} 0&0\\0&1 \end{array} \right),
  $$
  thanks to (\ref{DiracMatrices}) and that $P_\pm \Ga = \pm I_2$ by definition of $P_\pm$. Hence we can make explicit computations as follows.
  \begin{eqnarray}
   e^{itH_\infty} U_k^{-1} e^{-itH_k^+} & = &  e^{it\Ga D_x} e^{iC(x,k) \Ga} e^{-it \Ga D_x} e^{-it \left( \Omega_+(k) P_+ + \Omega_-(k) P_- \right)}, \nonumber \\
   & = & e^{itD_x} e^{iC(x,k)} e^{-itD_x} e^{-it \Omega_+(k)} P_+ + e^{-itD_x} e^{-iC(x,k)} e^{itD_x} e^{-it \Omega_-(k)} P_-, \nonumber \\
   & = & e^{iC(x+t,k)} e^{-it \Omega_+(k)} P_+ + e^{-iC(x-t,k)} e^{-it \Omega_-(k)} P_-. \label{Gk0}
  \end{eqnarray}
  On one hand, we have
  \begin{eqnarray}
    \lim_{t \to +\infty} e^{-iC(x-t,k)} e^{-it \Omega_-(k)} P_- & = & \lim_{t \to +\infty} e^{-i \int_{-\infty}^{x-t} [c(s,k) - \Omega_-(k)]ds - i \Omega_-(k) x -iK} P_-, \nonumber \\
    & = & e^{-i(\Omega_-(k) x + K) P_-}. \label{Gk1}
  \end{eqnarray}
  On the other hand, using that
  \begin{equation} \label{CxkBetak}
    C(x,k) = -\int_x^{+\infty} [c(s,k) - \Omega_+(k) ] ds + \Omega_+(k) x \ + \ \beta(k) + K,
  \end{equation}
  a similar explicit calculation as (\ref{Gk1}) shows that
  \begin{eqnarray}
    \lim_{t \to +\infty} e^{iC(x+t,k)} e^{-it \Omega_-(k)} P_+ & = & \lim_{t \to +\infty} e^{-i \int_{x+t}^{+\infty} [c(s,k) - \Omega_+(k)]ds + i \Omega_+(k) x + i \beta(k) + iK} P_+, \nonumber \\
    & = & e^{i(\Omega_+(k) x + \beta(k) + K) P_+}. \label{Gk2}
  \end{eqnarray}
  Putting (\ref{Gk0}) - (\ref{Gk2}) together, we get (\ref{G-k-1}).

  At last, (\ref{G-k-2}) is a standard calculation using the definitions of $\F$, $F_k^\pm$ and (\ref{G-k-1}) that we leave to the reader.
\end{proof}

We can now make the link between the physical and simplified scattering matrices (\ref{SM-Rad}) and (\ref{SR-SM1}). Using (\ref{WO-1}), (\ref{S-Ak}), (\ref{SM-Ak}) and Lemma \ref{G-k}, we compute
\begin{eqnarray}
  S(k,z) & = & \left( W^+(H_{kz}, H_k^+) \right)^* W^-(H_{kz}, H_k^-), \nonumber \\
         & = & \left( U_k W^+(\hat{H}_{kz}, H_\infty) G_k^+ \right)^* U_k W^-(\hat{H}_{kz}, H_\infty) G_k^-, \nonumber \\
         & = & \left( G_k^+ \right)^* \hat{S}(k,z) G_k^-, \nonumber \\
         & = & \left( \F G_k^+ \right)^* \hat{S}(\lambda, k,z) \F G_k^-, \nonumber \\
         & = & \left( F_k^+ \right)^* \left( \begin{array}{cc} e^{-i\beta(k)}&0\\0&1 \end{array} \right) e^{-i\Ga K} \hat{S}(\lambda, k,z) e^{i\Ga K} \left( \begin{array}{cc} 1&0\\0&e^{-i\beta(k)} \end{array} \right) F_k^-. \label{SM-1}
\end{eqnarray}
Hence, identifying (\ref{SM-Rad}) with (\ref{SM-1}), we finally get
$$
  S(\lambda,k,z) = \left( \begin{array}{cc} e^{-i\beta(k)}&0\\0&1 \end{array} \right) e^{-i\Ga K} \hat{S}(\lambda, k,z) e^{i\Ga K} \left( \begin{array}{cc} 1&0\\0&e^{-i\beta(k)} \end{array} \right),
$$
which in terms of the components (\ref{SR-SM1}) of $\hat{S}(\lambda,k,z)$ can be written as
\begin{equation} \label{Link-SM}
  S(\lambda,k,z) = \left( \begin{array}{cc} e^{-i\beta(k)} \hat{T}(\lambda,k,z) & e^{-2i(\beta(k) + K)} \hat{R}(\lambda,k,z)\\ e^{2iK} \hat{L}(\lambda,k,z) & e^{-i\beta(k)} \hat{T}(\lambda,k,z) \end{array} \right).
\end{equation}

For the scattering matrix $S(\lambda,k,z)$, we shall use the notation
$$
  S(\lambda,k,z) = \left[ \begin{array}{cc} T(\lambda,k,z)&R(\lambda,k,z)\\ L(\lambda,k,z)&T(\lambda,k,z) \end{array} \right].
$$
Hence we obtain the identifications between the components of the physical and unphysical scattering matrices
\begin{eqnarray*}
  T(\lambda,k,z) & = & e^{-i\beta(k)} \hat{T}(\lambda,k,z), \\
  R(\lambda,k,z) & = & e^{-2i(\beta(k) + K)} \hat{R}(\lambda,k,z), \\
  L(\lambda,k,z) & = & e^{2iK} \hat{L}(\lambda,k,z).
\end{eqnarray*}
From Lemma \ref{Unitarity-SM-Simplified} and the above equalities, it is immediate to check the unitarity of the scattering matrix $S(\lambda,k,z)$ as was expected. Precisely we have

\begin{lemma} \label{Unitarity-SM}
  For each $\lambda \in \R$ and $z \in \R$, let the scattering matrix $\hat{S}(\lambda,k,z)$ be defined by (\ref{SR-SM1})-(\ref{SR-SM2}). We have
    \begin{equation}
    \left. \begin{array}{ccc} |T(\lambda,k,z)|^2 + |R(\lambda,k,z)|^2 & = & 1, \\
    |T(\lambda,k,z)|^2 + |L(\lambda,k,z)|^2 & = & 1, \\
    T(\lambda,k,z) \overline{R(\lambda,k,z)} + L(\lambda,k,z) \overline{T(\lambda,k,z)} & = & 0. \end{array} \right.
  \end{equation}
\end{lemma}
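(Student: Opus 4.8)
The plan is to deduce Lemma \ref{Unitarity-SM} directly from Lemma \ref{Unitarity-SM-Simplified} together with the explicit relation (\ref{Link-SM}) between the physical and simplified scattering matrices that has just been established. Since $\beta(k)$ and $K$ are real constants, every exponential factor appearing in (\ref{Link-SM}) has modulus one, so
$$
  |T(\lambda,k,z)| = |\hat{T}(\lambda,k,z)|, \quad |R(\lambda,k,z)| = |\hat{R}(\lambda,k,z)|, \quad |L(\lambda,k,z)| = |\hat{L}(\lambda,k,z)|.
$$
The first two identities of Lemma \ref{Unitarity-SM} then follow immediately from the first two identities of Lemma \ref{Unitarity-SM-Simplified}.

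For the third identity I would simply substitute. Using $T = e^{-i\beta(k)}\hat{T}$, $R = e^{-2i(\beta(k)+K)}\hat{R}$ and $L = e^{2iK}\hat{L}$, one computes
$$
  T(\lambda,k,z)\,\overline{R(\lambda,k,z)} + L(\lambda,k,z)\,\overline{T(\lambda,k,z)}
  = e^{i(\beta(k)+2K)}\Big( \hat{T}(\lambda,k,z)\,\overline{\hat{R}(\lambda,k,z)} + \hat{L}(\lambda,k,z)\,\overline{\hat{T}(\lambda,k,z)} \Big),
$$
and the bracket vanishes by the third relation of Lemma \ref{Unitarity-SM-Simplified}. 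This completes the argument. No step here is delicate; the only thing to watch is the bookkeeping of the phases, which is routine once (\ref{Link-SM}) is in hand, and the fact that $\beta(k)$ is well defined, which was already noted after (\ref{Betak}) using the asymptotics (\ref{Asymp-c}).

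As a consistency check (and an alternative route, which I would relegate to a remark rather than use as the proof), one may observe that the scattering operator $S(k,z)$ of (\ref{S-Rad}) is unitary on $\h$, being the composition of the adjoints of the asymptotically complete wave operators of Theorem \ref{WO-Hkz-Hkpm}, and that, since $W^\pm(H_{kz},H_k^\pm)$ intertwine $H_{kz}$ with $H_k^\pm$, it satisfies $S(k,z) H_k^- = H_k^+ S(k,z)$. Conjugating by the unitary transforms $F_k^\pm$ which diagonalize $H_k^\pm$ then shows that $F_k^+ S(k,z) (F_k^-)^*$ commutes with multiplication by $\lambda$, hence acts fiberwise as the $2\times2$ matrix $S(\lambda,k,z)$ of (\ref{SM-Rad}), which is therefore unitary; writing out the orthonormality of the columns of $\left[ \begin{array}{cc} T & R \\ L & T \end{array} \right]$ recovers exactly the three stated relations. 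I expect essentially no obstacle: both approaches are short, and the first one keeps everything at the level of the explicit scattering coefficients, which is why I would adopt it.
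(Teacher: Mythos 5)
Your proof is correct and follows exactly the route the paper intends: the paper simply states that the unitarity of $S(\lambda,k,z)$ is ``immediate'' from Lemma \ref{Unitarity-SM-Simplified} and the phase relations $T = e^{-i\beta(k)}\hat{T}$, $R = e^{-2i(\beta(k)+K)}\hat{R}$, $L = e^{2iK}\hat{L}$, which is precisely the substitution and phase bookkeeping you carry out (and your computation of the common factor $e^{i(\beta(k)+2K)}$ in the third identity is right).
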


\subsubsection{The global scattering matrix}

Let us continue this section by coming back to the original scattering matrix associated to the separated equation (\ref{Rad-0})-(\ref{Rad-1}) and by defining the global scattering matrix at fixed energy $\lambda$.

We first state in the form of a Theorem the definition of the unphysical reduced scattering matrix on a generalized spherical harmonic $Y_{kl}(\lambda)$ as well as the corresponding unphysical global scattering matrix simply obtained by summation over the generalized spherical harmonics $Y_{kl}(\lambda)$.

\begin{theorem} \label{SM-Unphysical}
  For all indices $(k,l) \in I = (\frac{1}{2} + \Z) \times \N^*$ and for all $\lambda \in \R$, denote by $\mu_{kl}(\lambda)$ the eigenvalues of the angular operator $A_{\S^2}(\lambda)$ introduced in Thm \ref{Separation} and define using (\ref{SR-SM1})-(\ref{SR-SM2})
  \begin{equation} \label{SM-Red-Unphysical}
    \hat{S}_{kl}(\lambda) = \left[ \begin{array}{cc} \hat{T}_{kl}(\lambda)& \hat{R}_{kl}(\lambda)\\ \hat{L}_{kl}(\lambda)& \hat{T}_{kl}(\lambda) \end{array} \right] := \hat{S}(\lambda,k,\mu_{kl}(\lambda)),
  \end{equation}
  the unphysical reduced scattering matrix at energy $\lambda$ constructed with the help of the Jost functions (\ref{FL}) associated to (\ref{Rad-3})).

  The global unphysical scattering matrix $\hat{S}(\lambda)$ for all energies $\lambda \in \R$ is then defined as the unitary operator from $\H_{\S^2} = L^2(\S^2; \C^2)$ onto itself in the following way. For all $\psi \in \H_{\S^2}$, we can decompose $\psi$ onto the generalized spherical harmonics $Y_{kl}(\lambda)$ by
  $$
    \psi = \sum_{(k,l) \in I} \psi_{kl} \otimes Y_{kl}(\lambda), \qquad \psi_{kl} \in \C^2,
  $$
  (see Thm \ref{Separation}) and the global scattering matrix $\hat{S}(\lambda)$ is defined by
  \begin{equation} \label{SM-Global-Unphysical}
    \hat{S}(\lambda) \psi = \sum_{(k,l) \in I} \left( \hat{S}_{kl}(\lambda) \psi_{kl} \right) \otimes Y_{kl}(\lambda).
  \end{equation}
\end{theorem}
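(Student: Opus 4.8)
The statement amounts to checking that formula (\ref{SM-Global-Unphysical}) really produces a well-defined \emph{unitary} operator on $\H_{\S^2}$, so the plan is to assemble the pieces already at hand and verify convergence, isometry and surjectivity; no new analysis is needed.

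First I would set up the orthogonal block decomposition of $\H_{\S^2}$ on which $\hat S(\lambda)$ acts. By Theorem \ref{Spectrum-A} the family $\{Y_{kl}(\lambda)\}_{(k,l)\in(\frac12+\Z)\times\Z^*}$ is an orthonormal basis of $\H_{\S^2}$, and by the labelling convention (\ref{Index-l}) one has $Y_{k,-l}(\lambda)=\Ga Y_{kl}(\lambda)$ for $l\in\N^*$. Restricting to $l\in\N^*$ but allowing $\C^2$-valued coefficients, the isometry $\Theta$ introduced just before (\ref{Hkl}) — applied fibrewise, i.e.\ without the $L^2(\R,dx)$ factor — identifies $\mathrm{Span}\,(Y_{kl}(\lambda),Y_{k,-l}(\lambda))$ with $\C^2\otimes Y_{kl}(\lambda)$, so that every $\psi\in\H_{\S^2}$ has a unique expansion $\psi=\sum_{(k,l)\in I}\psi_{kl}\otimes Y_{kl}(\lambda)$, $\psi_{kl}\in\C^2$, with Parseval identity $\|\psi\|_{\H_{\S^2}}^2=\sum_{(k,l)\in I}|\psi_{kl}|^2$. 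This realises $\H_{\S^2}$ as the Hilbert sum $\bigoplus_{(k,l)\in I}\bigl(\C^2\otimes Y_{kl}(\lambda)\bigr)$.

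Next I would note that each block $\hat S_{kl}(\lambda)$ is well defined and unitary. Since $\mu_{kl}(\lambda)\in\R$ by Theorem \ref{Spectrum-A}, the Jost-function construction of Subsection \ref{Simplified-SM} applies to the reduced equation (\ref{Rad-3}) with coupling $\mu_{kl}(\lambda)$, so $\hat S_{kl}(\lambda):=\hat S(\lambda,k,\mu_{kl}(\lambda))$ is a well-defined $2\times2$ matrix (built from the Jost functions (\ref{FL})--(\ref{FR}) through the scattering data $A_L$), and Lemma \ref{Unitarity-SM-Simplified} shows that it is unitary, hence $|\hat S_{kl}(\lambda)\psi_{kl}|=|\psi_{kl}|$ for all $\psi_{kl}\in\C^2$. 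To give meaning to (\ref{SM-Global-Unphysical}), I would then observe that by orthogonality of the blocks the partial sums of $\sum_{(k,l)}\bigl(\hat S_{kl}(\lambda)\psi_{kl}\bigr)\otimes Y_{kl}(\lambda)$ form a Cauchy sequence in $\H_{\S^2}$, since the squared norm of a tail equals the corresponding tail of $\sum_{(k,l)\in I}|\hat S_{kl}(\lambda)\psi_{kl}|^2=\sum_{(k,l)\in I}|\psi_{kl}|^2=\|\psi\|^2$. Thus $\hat S(\lambda)$ is a well-defined isometry of $\H_{\S^2}$, and since $\hat S_{kl}(\lambda)^{*}=\hat S_{kl}(\lambda)^{-1}$ is again unitary, the operator obtained by substituting $\hat S_{kl}(\lambda)^{*}$ for $\hat S_{kl}(\lambda)$ in (\ref{SM-Global-Unphysical}) is a bounded two-sided inverse of $\hat S(\lambda)$; therefore $\hat S(\lambda)$ is unitary on $\H_{\S^2}$, which is the claim.

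The only delicate point — and it is bookkeeping rather than a real obstacle — is the passage between the $\Z^*$-indexed angular basis of Theorem \ref{Spectrum-A} and the $\N^*$-indexed reduced blocks used for the separation of variables: one must invoke (\ref{Index-l}) and $\Theta$ to see that pairing a single $Y_{kl}(\lambda)$ with a $\C^2$ coefficient encodes exactly the two angular modes $Y_{k,l}(\lambda)$ and $Y_{k,-l}(\lambda)$, so that $\bigoplus_{(k,l)\in I}\bigl(\C^2\otimes Y_{kl}(\lambda)\bigr)$ is a genuine orthogonal decomposition of $\H_{\S^2}$ on which the $2\times2$ matrices $\hat S_{kl}(\lambda)$ act diagonally.
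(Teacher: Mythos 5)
Your proposal is correct and follows essentially the same route as the paper, which states this result as a definition whose only nontrivial content is the unitarity of each $2\times2$ block (Lemma \ref{Unitarity-SM-Simplified}) combined with the orthogonal decomposition of $\H_{\S^2}$ into the spaces $\C^2\otimes Y_{kl}(\lambda)$ coming from Theorem \ref{Spectrum-A} and the convention (\ref{Index-l}). Your explicit verification of well-definedness and the two-sided inverse is exactly the bookkeeping the paper leaves implicit.
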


We continue defining the physical global scattering matrix at energy $\lambda$ associated to the stationary equation (\ref{Stat-Eq-Global}).

\begin{theorem} \label{SM-Physical}
Using the same notations as in Thm \ref{SM-Unphysical} and using (\ref{SM-Rad}), define
  \begin{equation} \label{SM-Red-Physical}
    S_{kl}(\lambda) = \left[ \begin{array}{cc} T_{kl}(\lambda)& R_{kl}(\lambda)\\ L_{kl}(\lambda)& T_{kl}(\lambda) \end{array} \right] := S(\lambda,k,\mu_{kl}(\lambda)).
  \end{equation}
  The unitary $2 \times 2$ - matrix $S_{kl}(\lambda)$ is the physical reduced scattering matrix at fixed energy $\lambda$ and it is expressed in terms of the unphysical and simplified scattering matrix $\hat{S}_{kl}(\lambda)$ by the formula
  $$
    S_{kl}(\lambda) = \left( \begin{array}{cc} e^{-i\beta(k)}&0\\0&1 \end{array} \right) e^{-i\Ga K} \hat{S}_{kl}(\lambda) e^{i\Ga K} \left( \begin{array}{cc} 1&0\\0&e^{-i\beta(k)} \end{array} \right),
  $$
  or in components
  \begin{eqnarray*}
  T_{kl}(\lambda) & = & e^{-i\beta(k)} \hat{T}_{kl}(\lambda), \\
  R_{kl}(\lambda) & = & e^{-2i(\beta(k) + K)} \hat{R}_{kl}(\lambda), \\
  L_{kl}(\lambda) & = & e^{2iK} \hat{L}_{kl}(\lambda),
\end{eqnarray*}
where $K$ is any constant of integration in (\ref{Unitary-Uk}) and
$$
    \beta(k) = \int_{-\infty}^0 \left[ c(s,k) - \Omega_-(k) \right] ds + \int_0^{+\infty} \left[ c(s,k) - \Omega_+(k) \right] ds.
$$

Associated to the stationary equation (\ref{Stat-Eq-Global}), we then define the physical global scattering matrix $S(\lambda)$ for all energies $\lambda \in \R$ as the unitary operator from $\H_{\S^2} = L^2(\S^2; \C^2)$ onto itself in the following way. For all $\psi \in \H_{\S^2}$, we can decompose $\psi$ onto the generalized spherical harmonics $Y_{kl}(\lambda)$ by
$$
    \psi = \sum_{(k,l) \in I} \psi_{kl} \otimes Y_{kl}(\lambda), \qquad \psi_{kl} \in \C^2,
$$
(see Thm \ref{Separation}) and the physical global scattering matrix $S(\lambda)$ is defined by
\begin{equation} \label{SM-Global-Physical}
    S(\lambda) \psi = \sum_{(k,l) \in I} \left( S_{kl}(\lambda) \psi_{kl} \right) \otimes Y_{kl}(\lambda).
\end{equation}
\end{theorem}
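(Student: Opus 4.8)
The plan is to obtain Theorem~\ref{SM-Physical} purely by specialising and then assembling results already proved in this section for the one-parameter family $S(\lambda,k,z)$, $z\in\R$. First I would note that the Jost-function construction of Subsection~\ref{Simplified-SM}, the identity (\ref{Link-SM}) relating the physical matrix $S(\lambda,k,z)$ to the simplified one $\hat S(\lambda,k,z)$, Theorem~\ref{Link-Stat-TimeDep-kz}, and the unitarity relations of Lemma~\ref{Unitarity-SM} were all established for an \emph{arbitrary real} coupling constant $z\in\R$ in place of the angular momentum. Since the eigenvalues $\muk$ of the angular operator $A_{\S^2}(\lambda)$ are real by Theorem~\ref{Spectrum-A}, the choice $z=\muk$ is admissible, and substituting it into (\ref{Link-SM}) yields at once the announced formula for $S_{kl}(\lambda)=S(\lambda,k,\muk)$ in terms of $\hat S_{kl}(\lambda)=\hat S(\lambda,k,\muk)$. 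The unitarity of the $2\times 2$ matrix $S_{kl}(\lambda)$ then follows from Lemma~\ref{Unitarity-SM} with $z=\muk$, or equivalently from the fact that $S_{kl}(\lambda)$ is a product of the unitary matrix $\hat S_{kl}(\lambda)$ (Lemma~\ref{Unitarity-SM-Simplified}) with diagonal phase matrices and with $e^{\pm i\Ga K}$. This disposes of the reduced part of the statement.

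Next, for the global scattering matrix I would invoke Theorem~\ref{Spectrum-A}: the normalised eigenfunctions $Y_{kl}(\lambda)$, $(k,l)\in(\half+\Z)\times\Z^*$, form an orthonormal basis of $\H_{\S^2}=L^2(\S^2;\C^2)$, and, because the spectra of the operators $A_k(\lambda)$ are simple, the convention (\ref{Index-l}) pairs $Y_{kl}(\lambda)$ with $Y_{k,-l}(\lambda)=\Ga Y_{kl}(\lambda)$; regrouping gives the orthogonal Hilbert decomposition $\H_{\S^2}=\bigoplus_{(k,l)\in I}\mathrm{Span}\big(Y_{kl}(\lambda),Y_{k,-l}(\lambda)\big)$ with $I=(\half+\Z)\times\N^*$, each two-dimensional summand being isometrically identified with $\C^2$ through the componentwise ``$\otimes$'' convention of (\ref{1to2})--(\ref{2to1}). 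A short computation entirely parallel to (\ref{Gamma1-Ykl})--(\ref{AS2-Ykl}) then shows that, under this identification, $S(\lambda)$ acts on the summand labelled by $(k,l)$ as multiplication by $S_{kl}(\lambda)$, so that $S(\lambda)=\bigoplus_{(k,l)\in I}S_{kl}(\lambda)$ in the sense of (\ref{SM-Global-Physical}); uniqueness of the expansion $\psi=\sum_{(k,l)\in I}\psi_{kl}\otimes Y_{kl}(\lambda)$ makes this definition unambiguous. Finally, each block $S_{kl}(\lambda)$ has operator norm one, so the orthogonal direct sum is a bounded operator on $\H_{\S^2}$, and since every block is unitary and the decomposition is complete, $S(\lambda)$ is unitary from $\H_{\S^2}$ onto itself.

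I do not expect any genuine obstacle: the substantive inputs are only the reality of $\muk$ and the completeness and simplicity of the family $\{Y_{kl}(\lambda)\}$, both furnished by Theorem~\ref{Spectrum-A}, together with the already-proved relation (\ref{Link-SM}). If there is a delicate point, it is purely notational, namely keeping the componentwise identification between $\mathrm{Span}(Y_{kl}(\lambda),Y_{k,-l}(\lambda))$ and $\C^2$ consistent so that the reduced matrix $S_{kl}(\lambda)$ is unambiguously the restriction of $S(\lambda)$ to that summand, and this is handled exactly as in the separation of variables carried out in Section~\ref{Sep-Var}.
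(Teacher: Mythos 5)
Your proposal is correct and follows essentially the same route as the paper: Theorem \ref{SM-Physical} is obtained there precisely by specialising the already-derived identity (\ref{Link-SM}) (together with Lemma \ref{Unitarity-SM}) at the real value $z=\muk$, and then assembling the blocks $S_{kl}(\lambda)$ over the orthogonal decomposition $\H_{\S^2}=\bigoplus_{(k,l)\in I}\C^2\otimes Y_{kl}(\lambda)$ furnished by Theorem \ref{Spectrum-A} and the identification (\ref{1to2})--(\ref{2to1}). Nothing further is needed.
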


The global scattering matrix $S(\lambda)$ defined in (\ref{SM-Global-Physical}) will be the main object of investigation in this paper. We aim to identify the metric of the KN-dS black hole from the knowledge of $S(\lambda)$ for a fixed energy $\lambda \in \R$. In fact, we shall be able to prove better results.

First the knowledge of the restrictions of $S(\lambda)$ on two angular modes $\{e^{ik\varphi}\}$ and $\{e^{ik'\varphi}\}$ with $k \ne k' \in 1/2 + \Z$ is enough to identify the metric of the KN-dS black hole. Let us here define these objects.

\begin{definition} \label{SMk}
We denote by $S_k(\lambda)$ the unitary operator on $\mathfrak{L} = L^2((0,\pi), \C^2)$ defined as the restriction of the scattering matrix (\ref{SM-Global-Physical}) at energy $\lambda$ on the angular mode $\{e^{ik\varphi}\}$ with $ k \in \frac{1}{2} + \Z$. Precisely, for all $\psi \in \H_{\S^2}$, we can decompose $\psi$ as
  $$
   \psi = \sum_{k \in \frac{1}{2} + \Z} \psi_{k}(\theta) \otimes e^{ik\varphi}, \qquad \psi_k(\theta)  \in \mathfrak{L} = L^2((0,\pi), \C^2),
  $$
  and we have
  \begin{equation} \label{SM-k}
    S(\lambda) \psi = \sum_{k \in \frac{1}{2} + \Z} \left( S_k(\lambda) \psi_{k} \right) \otimes e^{ik\varphi},
  \end{equation}
  with
  \begin{equation} \label{SM-kl}
    S_k(\lambda) \psi_{k} = \sum_{l \in \N^*} \left( S_{kl}(\lambda) \psi_{kl} \right) \otimes u_{kl}^\lambda(\theta),
  \end{equation}
  where
  $$
    \psi_k(\theta) = \sum_{l \in \N^*} \psi_{kl} \otimes u_{kl}^\lambda(\theta), \quad \textrm{and} \ \quad Y_{kl}(\lambda)(\theta,\varphi) = u_{kl}^\lambda(\theta) e^{ik\varphi}.
  $$
  Clearly, the operators $S_k(\lambda)$ are unitary operators on $\mathfrak{L}$ by construction.
\end{definition}

Second, we don't need to know the whole reduced scattering matrices $S_k(\lambda)$ with $k \in 1/2 + \Z$, but only one of its diagonal - the transmission operators - or anti-diagonal - the reflection operators - parts. Let us define here these objects and state their main properties.

\begin{definition} \label{TkRk}
For all $k \in 1/2 + \Z$ and $\lambda \in \R$, we denote by $T_k^L(\lambda), T_k^R(\lambda), R_k(\lambda)$ and $L_k(\lambda)$ the transmission and reflection operators as the operators acting on $\mathfrak{l} = L^2((0,\pi); \C)$ such that
\begin{equation} \label{Link-Sk-TkRk}
  S_k(\lambda) = \left[ \begin{array}{cc} T_k^L(\lambda) & R_k(\lambda) \\ L_k(\lambda) & T_k^R(\lambda) \end{array} \right],
\end{equation}
where $S_k(\lambda)$ is the reduced scattering matrix defined in Definition \ref{SMk}. More precisely, let us introduce the operators
$$
  P_1 \,: \ \begin{array}{ccc} \mathfrak{L} & \longrightarrow & \mathfrak{l}, \\
                           (\psi_1, \psi_2) & \longrightarrow & \psi_1, \end{array}, \quad \quad
  P_2 \,: \ \begin{array}{ccc} \mathfrak{L} & \longrightarrow & \mathfrak{l}, \\
                           (\psi_1, \psi_2) & \longrightarrow & \psi_2, \end{array},
$$
as well as their adjoints
$$
  P_1^* \,: \ \begin{array}{ccc} \mathfrak{l} & \longrightarrow & \mathfrak{L}, \\
                           \psi_1 & \longrightarrow & (\psi_1,0), \end{array}, \quad \quad
  P_2^* \,: \ \begin{array}{ccc} \mathfrak{l} & \longrightarrow & \mathfrak{L}, \\
                           \psi_2 & \longrightarrow & (0,\psi_2), \end{array}.
$$
Then the transmission and reflection operators are defined as the operators acting on $\mathfrak{l}$ by
$$
  T_k^L(\lambda) = P_1 S_k(\lambda) P_1^*, \quad T_k^R(\lambda) = P_2 S_k(\lambda) P_2^*,
$$
$$
  R_k(\lambda) = P_1 S_k(\lambda) P_2^*, \quad L_k(\lambda) = P_2 S_k(\lambda) P_1^*.
$$
\end{definition}

On one hand, the transmission operators $T_k^{L,R}(\lambda)$ measure the part of a signal having energy $\lambda$ that is transmitted from an asymptotic region - say the cosmological horizon for $T_k^R$ - to the other - the event horizon for $T_k^R$ - in a scattering experiment. On the other hand, the reflection operators $R_k(\lambda)$ (resp. $L_k(\lambda)$) measure the part of a signal having energy $\lambda$ that is reflected from the cosmological horizon (resp. event horizon) to itself in a scattering process. We shall be able to prove that the knowledge of one of these operators at a fixed energy and for two different angular modes is enough to determine uniquely the metric of a KN-dS black hole. For this, we shall use several properties of the transmission and reflection operators that are easily consequences of the previous definitions. We summarize these properties in the next proposition.

\begin{prop} \label{TkRkLk}
Let $k \in 1/2 + \Z$ be fixed. For all $l \in \N^*$, we introduce the notation $Y_{kl}(\lambda) = (Y_{kl}^1(\lambda), Y_{kl}^2(\lambda))$ for the corresponding generalized spherical harmonics. Then, we can prove the following results. \\

\noindent 1) [\textbf{Hilbert bases of $\mathfrak{l} = L^2((0,\pi); \C)$}] The families $\{Y_{kl}^1(\lambda)\}_{l \in \N^*}$ and $\{Y_{kl}^2(\lambda)\}_{l \in \N^*}$ form Hilbert bases of $\mathfrak{l} = L^2((0,\pi); \C)$; precisely for all $\psi_k \in \mathfrak{l}$, we can decompose $\psi_k$ as
$$
  \psi_k = \sum_{l \in \N^*} \psi_{kl}^{j} Y_{kl}^{j}(\lambda), \quad j=1,2,
$$
with
$$
  \| \psi_k \|^2 = \frac{1}{2} \sum_{l \in \N^*} |\psi_{kl}^{j} |^2.
$$

\noindent 2) [\textbf{Transmission operators}] For all $\psi_k \in \mathfrak{l}$, we have
\begin{equation} \label{TkR}
  T_k^L(\lambda) \psi_k = T_k^L(\lambda) \left( \sum_{l \in \N^*} \psi_{kl}^1 Y_{kl}^1(\lambda) \right) = \sum_{l \in \N^*} \left( T_{kl}(\lambda) \psi_{kl}^1 \right) Y_{kl}^1(\lambda),
\end{equation}
where $T_{kl}(\lambda)$ is defined in Theorem \ref{SM-Global-Physical}. Also
\begin{equation} \label{TkL}
  T_k^R(\lambda) \psi_k = T_k^R(\lambda) \left( \sum_{l \in \N^*} \psi_{kl}^2 Y_{kl}^2(\lambda) \right) = \sum_{l \in \N^*} \left( T_{kl}(\lambda) \psi_{kl}^2 \right) Y_{kl}^2(\lambda).
\end{equation}
In other words, we have
\begin{equation} \label{Trk}
  T_k^L(\lambda) Y_{kl}^1 = T_{kl}(\lambda) Y_{kl}^1, \quad \quad T_k^R(\lambda) Y_{kl}^2 = T_{kl}(\lambda) Y_{kl}^2, \quad \forall l \in \N^*.
\end{equation}

\noindent 3) [\textbf{Reflection operators}] For all $\psi_k \in \mathfrak{l}$, we have
\begin{equation} \label{Rk}
  R_k(\lambda) \psi_{k} = R_k(\lambda) \left( \sum_{l \in \N^*} \psi_{kl}^2 Y_{kl}^2(\lambda) \right) = \sum_{l \in \N^*} \left( R_{kl}(\lambda) \psi_{kl}^2 \right) Y_{kl}^1(\lambda),
\end{equation}
and
\begin{equation} \label{Lk}
  L_k(\lambda) \psi_{k} = L_k(\lambda) \left( \sum_{l \in \N^*} \psi_{kl}^1 Y_{kl}^1(\lambda) \right) = \sum_{l \in \N^*} \left( L_{kl}(\lambda) \psi_{kl}^1 \right) Y_{kl}^2(\lambda),
\end{equation}
where $R_{kl}(\lambda), L_{kl}(\lambda)$ are defined in Theorem \ref{SM-Global-Physical}. In other words,
\begin{equation} \label{Rek}
  R_k(\lambda) Y_{kl}^2 = R_{kl}(\lambda) Y_{kl}^1, \quad \quad L_k(\lambda) Y_{kl}^1 = L_{kl}(\lambda) Y_{kl}^2, \quad \forall l \in \N^*.
\end{equation}

4) [\textbf{Adjoints}] Using that
$$
 T_k^L(\lambda)^* = P_1 S_k^*(\lambda) P_1^*, \quad T_k^R(\lambda)^* = P_2 S_k^*(\lambda) P_2^*,
$$
we get
\begin{equation} \label{Trk*}
  T_k^L(\lambda)^* Y_{kl}^1 = \overline{T_{kl}}(\lambda) Y_{kl}^1, \quad \quad T_k^R(\lambda)^* Y_{kl}^2 = \overline{T_{kl}}(\lambda) Y_{kl}^2, \quad \forall l \in \N^*.
\end{equation}

Similarly, using that
$$
   R_k(\lambda)^* = P_2 S_k^*(\lambda) P_1^*, \quad L_k(\lambda)^* = P_1 S_k^*(\lambda) P_2^*,
$$
we get
\begin{equation} \label{Rek*}
  R_k(\lambda)^* Y_{kl}^1 = \overline{R_{kl}}(\lambda) Y_{kl}^2, \quad \quad L_k(\lambda)^* Y_{kl}^1 = \overline{L_{kl}}(\lambda) Y_{kl}^1, \quad \forall l \in \N^*.
\end{equation}

\noindent 5) [\textbf{Unitarity}] At last, the unitarity relations for $S_k(\lambda)$ become in terms of the transmission and reflection operators
\begin{eqnarray*} 
  (i) & T_k^L(\lambda) T_k^L(\lambda)^* + R_k(\lambda) R_k(\lambda)^* = Id = T_k^R(\lambda)^* T_k^R(\lambda) + R_k(\lambda)^* R_k(\lambda), \\
  (i) & T_k^L(\lambda)^* T_k^L(\lambda) + L_k(\lambda)^* L_k(\lambda) = Id = T_k^R(\lambda) T_k^R(\lambda)^* + L_k(\lambda) L_k(\lambda)^*, \\
  (iii) & T_k^R(\lambda) R_k(\lambda)^* + L_k(\lambda) T_k^L(\lambda)^* = 0 = T_k^R(\lambda)^* L_k(\lambda) + R_k(\lambda)^* T_k^L(\lambda).
\end{eqnarray*}
\end{prop}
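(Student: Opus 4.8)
The plan is to reduce everything to the definitions set up in Section~\ref{Sep-Var} and to push the $2\times 2$ block structure of the reduced matrices $S_{kl}(\lambda)$ through the unitary identifications already established there. The only structurally new input is the orthonormality assertion in item~1), so I would start there. By Theorem~\ref{Spectrum-A}, the eigenfunctions $u^\lambda_{kl}(\theta)$, $l\in\Z^*$, of $A_k(\lambda)$ form a Hilbert basis of $\mathfrak{L}=L^2((0,\pi);\C^2)$, and by the convention (\ref{Index-l}) one has $u^\lambda_{k,-l}=\Ga u^\lambda_{kl}$; writing $u^\lambda_{kl}=(Y^1_{kl}(\lambda),Y^2_{kl}(\lambda))$ this gives $u^\lambda_{k,-l}=(Y^1_{kl}(\lambda),-Y^2_{kl}(\lambda))$, hence for $l\in\N^*$ the vectors $f_1:=(Y^1_{kl}(\lambda),0)=\frac12(u^\lambda_{kl}+u^\lambda_{k,-l})$ and $f_2:=(0,Y^2_{kl}(\lambda))=\frac12(u^\lambda_{kl}-u^\lambda_{k,-l})$ span $E_{kl}:=\mathrm{Span}(u^\lambda_{kl},u^\lambda_{k,-l})$. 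Since passing from $\{u^\lambda_{kl},u^\lambda_{k,-l}\}_{l\in\N^*}$ to $\{\sqrt2 f_1,\sqrt2 f_2\}_{l\in\N^*}$ is a unitary rearrangement of an orthonormal basis, the families $\{\sqrt2(Y^1_{kl}(\lambda),0)\}_{l}$ and $\{\sqrt2(0,Y^2_{kl}(\lambda))\}_{l}$ together form a Hilbert basis of $\mathfrak{L}=\mathfrak{l}\oplus\mathfrak{l}$; being contained in the orthogonal summands $\mathfrak{l}\oplus 0$ and $0\oplus\mathfrak{l}$ respectively, each is a Hilbert basis of the corresponding copy of $\mathfrak{l}$. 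This yields $\langle Y^j_{kl}(\lambda),Y^j_{km}(\lambda)\rangle_{\mathfrak{l}}=\frac12\delta_{lm}$ and, with $\psi^j_{kl}=2\langle\psi_k,Y^j_{kl}(\lambda)\rangle$, the Parseval identity $\|\psi_k\|^2=\frac12\sum_l|\psi^j_{kl}|^2$ of item~1).

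For items~2) and 3) the key step is to make explicit the action of $S_k(\lambda)$ on the two-dimensional block $E_{kl}\subset\mathfrak{L}$. Unwinding Definition~\ref{SMk} together with Theorem~\ref{SM-Physical}: an element $\psi=\psi_1u^\lambda_{kl}+\psi_2u^\lambda_{k,-l}$ of $E_{kl}$ equals $(\psi_1+\psi_2)f_1+(\psi_1-\psi_2)f_2$ and, after the unitary $\Theta$ of Section~\ref{Sep-Var}, carries the $\C^2$-coordinate $\frac{1}{\sqrt2}(\psi_1+\psi_2,\psi_1-\psi_2)$, so its $(f_1,f_2)$-coordinates are exactly $\sqrt2$ times that $\C^2$-vector. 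As $S_k(\lambda)$ sends this $\C^2$-coordinate to $S_{kl}(\lambda)$ applied to it, it follows that $S_k(\lambda)$ preserves $E_{kl}$ and, in the basis $(f_1,f_2)$, is the matrix $S_{kl}(\lambda)=\left[\begin{smallmatrix}T_{kl}(\lambda)&R_{kl}(\lambda)\\ L_{kl}(\lambda)&T_{kl}(\lambda)\end{smallmatrix}\right]$; in particular $S_k(\lambda)f_1=T_{kl}(\lambda)f_1+L_{kl}(\lambda)f_2$ and $S_k(\lambda)f_2=R_{kl}(\lambda)f_1+T_{kl}(\lambda)f_2$. Applying the projections $P_1$ onto the first component and $P_2$ onto the second, and using $P_1^*Y^1_{kl}(\lambda)=f_1$, $P_2^*Y^2_{kl}(\lambda)=f_2$, gives the eigenrelations (\ref{Trk}) and (\ref{Rek}) at once; the full-series formulas (\ref{TkR})--(\ref{Lk}) then follow by linearity and continuity from the Hilbert-basis property of item~1).

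Item~4) is obtained by repeating the same projection computation with $S_k(\lambda)^*$ in place of $S_k(\lambda)$: since $S_k(\lambda)$ is unitary and preserves $E_{kl}$, its adjoint restricts to $E_{kl}$ as $S_{kl}(\lambda)^*=\left[\begin{smallmatrix}\overline{T_{kl}(\lambda)}&\overline{L_{kl}(\lambda)}\\ \overline{R_{kl}(\lambda)}&\overline{T_{kl}(\lambda)}\end{smallmatrix}\right]$ in the basis $(f_1,f_2)$, whence the adjoint relations (\ref{Trk*})--(\ref{Rek*}). Item~5) is pure block algebra: with respect to the splitting $\mathfrak{L}=\mathfrak{l}\oplus\mathfrak{l}$ into first and second components, $S_k(\lambda)$ has the block form (\ref{Link-Sk-TkRk}) and $S_k(\lambda)^*$ the block-transposed conjugate form, so expanding the four entries of $S_k(\lambda)S_k(\lambda)^*=\mathrm{Id}$ and $S_k(\lambda)^*S_k(\lambda)=\mathrm{Id}$ produces precisely $(i)$--$(iii)$.

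Thus the argument is, step by step, routine; the part I expect to be the genuine obstacle is not any single computation but the bookkeeping underlying items~2)/3): keeping straight simultaneously the $\pm l$ pairing (\ref{Index-l}), the unitary $\Theta$ that converts a pair of scalar radial amplitudes into a single $\C^2$-valued one, and the resulting dictionary between the $\C^2$ scattering coordinates on $\h=L^2(\R;\C^2)$ and the components $(Y^1_{kl}(\lambda),Y^2_{kl}(\lambda))$ of the generalized spherical harmonics. Once this dictionary is pinned down, the identification $S_k(\lambda)|_{E_{kl}}=S_{kl}(\lambda)$ makes all five items fall out immediately.
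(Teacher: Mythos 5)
Your proof is correct and is exactly the routine unwinding of Definitions \ref{SMk} and \ref{TkRk} together with the $\pm l$ pairing (\ref{Index-l}) that the paper has in mind (the paper states the proposition without proof, as an easy consequence of the definitions). Note only that your block computation actually yields $L_k(\lambda)^* Y_{kl}^2 = \overline{L_{kl}}(\lambda)\, Y_{kl}^1$, which is the correct formula; the second identity in (\ref{Rek*}) as printed (with $Y_{kl}^1$ on both sides) is a typo, since $P_2^* Y_{kl}^1=(0,Y_{kl}^1)$ does not lie in the invariant block $E_{kl}$.
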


We shall need in Section \ref{InverseAngular} the following simple but crucial corollary of the previous Proposition.

\begin{coro} \label{SpectralProp-TrRe}
The operators $T_k^L(\lambda)$, $R_k(\lambda) R_k(\lambda)^*$ and $L_k(\lambda)^* L_k(\lambda)$ are diagonalizable on the Hilbert basis of eigenfunctions $(Y_{kl}^1)_{l \in \N^*}$ associated to the eigenvalues $T_{kl}(\lambda)$, $|R_{kl}(\lambda)|^2$ and $|L_{kl}(\lambda)|^2$ respectively. Similarly, the operators $T_k^R(\lambda)$, $R_k(\lambda)^* R_k(\lambda)$ and $L_k(\lambda) L_k(\lambda)^*$ are diagonalizable on the Hilbert basis of eigenfunctions $(Y_{kl}^2)_{l \in \N^*}$ associated to the eigenvalues $T_{kl}(\lambda)$, $|R_{kl}(\lambda)|^2$ and $|L_{kl}(\lambda)|^2$ respectively.
\end{coro}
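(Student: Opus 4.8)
The plan is that this corollary is a purely formal consequence of Proposition \ref{TkRkLk}, which already records how each of $T_k^{L/R}(\lambda)$, $R_k(\lambda)$, $L_k(\lambda)$ and their adjoints acts on the two Hilbert bases $(Y_{kl}^1(\lambda))_{l\in\N^*}$ and $(Y_{kl}^2(\lambda))_{l\in\N^*}$ of $\mathfrak{l}$; no new analytic input is needed. The only thing to do is to compose the relevant relations and then invoke part 1 of that Proposition (these two families are Hilbert bases), so that an operator sending each basis vector to a scalar multiple of itself is automatically diagonalizable in that basis.

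For the transmission operators there is nothing to compose: formula (\ref{Trk}) already states $T_k^L(\lambda) Y_{kl}^1 = T_{kl}(\lambda) Y_{kl}^1$ and $T_k^R(\lambda) Y_{kl}^2 = T_{kl}(\lambda) Y_{kl}^2$ for all $l\in\N^*$, so $T_k^L(\lambda)$ (resp. $T_k^R(\lambda)$) is diagonal in the basis $(Y_{kl}^1)_l$ (resp. $(Y_{kl}^2)_l$) with eigenvalue $T_{kl}(\lambda)$ on the $l$-th vector. This settles the two extreme entries of the corollary.

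For the products involving reflection operators I would simply chain the relations of parts 3 and 4 of the Proposition. For instance, applying first the adjoint relation (\ref{Rek*}) and then (\ref{Rek}),
$$R_k(\lambda) R_k(\lambda)^*\, Y_{kl}^1 \;=\; R_k(\lambda)\big(\overline{R_{kl}(\lambda)}\, Y_{kl}^2\big) \;=\; \overline{R_{kl}(\lambda)}\, R_{kl}(\lambda)\, Y_{kl}^1 \;=\; |R_{kl}(\lambda)|^2\, Y_{kl}^1,$$
and symmetrically $R_k(\lambda)^* R_k(\lambda)\, Y_{kl}^2 = |R_{kl}(\lambda)|^2\, Y_{kl}^2$ by using (\ref{Rek}) first and then (\ref{Rek*}). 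Likewise, combining (\ref{Rek}) with the adjoint relation for $L_k(\lambda)$ (which, since $L_k(\lambda)=P_2 S_k(\lambda) P_1^*$ and $S_k(\lambda)=\oplus_l S_{kl}(\lambda)$ by Theorem \ref{SM-Physical}, reads $L_k(\lambda)^* Y_{kl}^2 = \overline{L_{kl}(\lambda)}\, Y_{kl}^1$) gives $L_k(\lambda)^* L_k(\lambda)\, Y_{kl}^1 = |L_{kl}(\lambda)|^2\, Y_{kl}^1$ and $L_k(\lambda) L_k(\lambda)^*\, Y_{kl}^2 = |L_{kl}(\lambda)|^2\, Y_{kl}^2$. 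Together with part 1 this is exactly the statement of the corollary.

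I do not expect any real obstacle here; the only mild point requiring care is bookkeeping — keeping track of which of the two bases $(Y_{kl}^1)_l$, $(Y_{kl}^2)_l$ each adjoint relation in part 4 of the Proposition lives on, so that the composition of $R_k(\lambda)^*$ with $R_k(\lambda)$ (and of $L_k(\lambda)$ with $L_k(\lambda)^*$) actually closes up to a scalar multiple of the original basis vector. This becomes transparent once one recalls that $T_k^{L/R}(\lambda), R_k(\lambda), L_k(\lambda)$ are the four compressions of the single unitary $S_k(\lambda)$ by the orthogonal projections $P_1^*P_1$ and $P_2^*P_2$ (Definition \ref{TkRk}), and that $S_k(\lambda)$ respects the orthogonal decomposition $\oplus_l S_{kl}(\lambda)$ of Definition \ref{SMk}, so that all four operators are simultaneously block-diagonal with $2\times 2$ blocks $S_{kl}(\lambda)$.
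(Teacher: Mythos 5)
Your proposal is correct and is exactly the argument the paper intends: the corollary is stated without proof as an immediate consequence of Proposition \ref{TkRkLk}, and chaining (\ref{Trk}), (\ref{Rek}) and (\ref{Rek*}) on the two Hilbert bases of part 1) is the intended derivation. You were also right to re-derive the adjoint relation for $L_k(\lambda)^*$ from $L_k(\lambda)=P_2 S_k(\lambda)P_1^*$, since the second formula in (\ref{Rek*}) as printed contains a typo (it should read $L_k(\lambda)^* Y_{kl}^2 = \overline{L_{kl}}(\lambda)\, Y_{kl}^1$), and your corrected version is what makes $L_k(\lambda)^*L_k(\lambda)$ close up on $Y_{kl}^1$.
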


%
%

\subsubsection{Dependence on the Regge-Wheeler variable} \label{DependenceRW}

We finish this Section explaining a subtle point in the definition of the reduced scattering matrices given in Proposition \ref{SM-Unphysical} and the global scattering matrix given in Theorem \ref{SM-Physical} that will play an important role in the statement of our main inverse result, Thm \ref{Main}. As mentioned in the Introduction, the stationary expression of $S_{kl}(\lambda)$ (and thus the one of $S(\lambda)$) depends on the choice of the radial Regge-Wheeler coordinate through the constant of integration $c$ in (\ref{RW}).

To see this, let us assume that the KN-dS black hole we are considering is described with a Regge-Wheeler coordinate $\tilde{x} = x + c$ instead of the previous $x$ Regge-Wheeler coordinate. Here $c$ is any constant of integration. We first analyse the influence of this change of variable on the unphysical scattering matrix $\hat{S}(\lambda,k,z)$ defined in (\ref{SM-Ak}) through the chain of equations (\ref{FL})-(\ref{SR-SM2}). We denote by $\hat{\tilde{S}}(\lambda,k,z)$ the scattering matrix obtained using the variable $\tilde{x}$. Note that in what follows, we shall add a subscript $\,\tilde{}$ to any quantities related to the choice of the $\tilde{x}$ radial variable. When described with the variable $\tilde{x}$, it is easy to see (\cite{DN3}) that the potentials $a(x)$ and $c(x,k)$ become new potentials defined by
\begin{equation} \label{Pot-a-c-tilde}
  \tilde{a}(x) = a(x-c), \qquad \tilde{c}(x,k) = c(x-c,k),
\end{equation}
 and the stationary scattering is governed by the new stationary equation
$$
[\Ga D_x  - z \tilde{a}(x) \Gb + \tilde{c}(x,k)] \psi =0.
$$
The potentials $\tilde{a}$ and $\tilde{c}$ still have the asymptotics (\ref{Asymp-a}) and (\ref{Asymp-c}) where, according to (\ref{Omega}) and with obvious notations, $\tilde{\Omega}_\pm(k) = \Omega_\pm(k)$. So, the primitive (\ref{Pot-C}) of the new potential
$\tilde{c}(x,k)$ becomes
\begin{equation}
  \tilde{C}(x,k) = \int_{-\infty}^x [\tilde{c}(s,k) - \Omega_-(k)] \ ds + \Omega_-(k) x + K.
\end{equation}
A straightforward calculation gives :
\begin{equation} \label{Pottilde-C}
\tilde{C}(x,k) = C(x-c,k) + c  \Omega_-(k).
\end{equation}
Then, $\tilde{q}(x,k):= \tilde{a}(x)\  e^{2i\tilde{C}(x,k)} = q(x-c,k) \ e^{2i c \Omega_-(k)}$. Thus, we obtain
\begin{equation} \label{Pottilde-V}
\tilde{V}_k(x) =  e^{i c \Omega_-(k)\Ga} \  V_k(x-c) \ e^{-i c \Omega_-(k)\Ga}.
\end{equation}
Using (\ref{IE-FL}) and (\ref{ALRepresentation}), we deduce
\begin{eqnarray}\label{FLtilde}
\tilde{F}_L(x,\lambda,k,z) &=&  e^{i c \Omega_-(k)\Ga} \ F(x-c,\lambda,k,z)  \ e^{-i c \Omega_-(k)\Ga} e^{i\lambda c \Ga}, \\
\tilde{A}_L(\lambda,k,z) &=&  e^{-i c(\lambda- \Omega_-(k))\Ga} \ A_L(\lambda,k,z) \  e^{i c(\lambda- \Omega_-(k))\Ga}.
\end{eqnarray}
We conclude with (\ref{SR-SM1}) and (\ref{SR-SM2}) that the stationary expression of
$\hat{\tilde{S}}(\lambda,k,z)$ is then given in term of $\hat{S}(\lambda,k,z)$ by
\begin{equation} \label{RW-SM0}
  \hat{\tilde{S}}(\lambda,k,z) = e^{-i c \lambda^-(k)\Ga}\  \hat{S}(\lambda,k,z) \ e^{i c \lambda^-(k) \Ga},
\end{equation}
where
\begin{equation}\label{lambdapmk}
\lambda^\pm(k) = \lambda - \Omega_\pm(k),
\end{equation}
or written in components - using the notations (\ref{SR-SM2}) - by
\begin{equation} \label{RW-ScatCoeff}
  \left[ \begin{array}{cc} \hat{\tilde{T}}(\lambda,k,z)&\hat{\tilde{R}}(\lambda,k,z)\\ \hat{\tilde{L}}(\lambda,k,z)&\hat{\tilde{T}}(\lambda,k,z) \end{array} \right] =
\left[ \begin{array}{cc} \hat{T}(\lambda,k,z)&e^{-2i\lambda^-(k) c} \hat{R}(\lambda,k,z)\\ e^{2i\lambda^-(k) c} \hat{L}(\lambda,k,z)&\hat{T}(\lambda,k,z) \end{array} \right].
\end{equation}

The scattering matrix (\ref{RW-ScatCoeff}) is not however the physical scattering matrix which is given in fact by
\begin{equation} \label{RW-SM1}
  S(\lambda,k,z) = \left[ \begin{array}{cc} T(\lambda,k,z)&R(\lambda,k,z)\\ L(\lambda,k,z)&T(\lambda,k,z) \end{array} \right] = \left[ \begin{array}{cc} e^{-i\beta(k)} \hat{T}(\lambda,k,z) & e^{-2i(\beta(k) + K)} \hat{R}(\lambda,k,z)\\ e^{2iK} \hat{L}(\lambda,k,z) & e^{-i\beta(k)} \hat{T}(\lambda,k,z) \end{array} \right].
\end{equation}
We thus analyse the influence of the change of variable $\tilde{x} = x + c$ on the constant $\beta(k)$ appearing above. Recall that
\begin{equation} \label{beta}
    \beta(k) = \int_{-\infty}^0 \left[ c(s,k) - \Omega_-(k) \right] ds + \int_0^{+\infty} \left[ c(s,k) - \Omega_+(k) \right] ds.
\end{equation}
Hence, plugging (\ref{Pot-a-c-tilde}) into (\ref{beta}), we get for the new constant $\tilde{\beta}(k)$
\begin{eqnarray}
  \tilde{\beta}(k) & = & \int_{-\infty}^0 \left[ \tilde{c}(s,k) - \tilde{\Omega}_-(k) \right] ds + \int_0^{+\infty} \left[ \tilde{c}(s,k) - \tilde{\Omega}_+(k) \right] ds, \nonumber \\
  & = & \int_{-\infty}^0 \left[ c(s - c,k) - \Omega_-(k) \right] ds + \int_0^{+\infty} \left[ c(s - c,k) - \Omega_+(k) \right] ds,
\end{eqnarray}
 A change of variable in the above integral shows that
\begin{eqnarray}
  \tilde{\beta}(k) & = & \beta(k) + \int_0^{-c} \left[ c(s,k) - \Omega_-(k) \right] ds + \int_{-c}^0 \left[ c(s,k) - \Omega_+(k) \right] ds, \nonumber \\
  & = & \beta(k) + \theta(k) c, \label{Beta-k-tilde}
\end{eqnarray}
with
\begin{equation} \label{Theta-k}
  \theta(k) = \Omega_-(k) - \Omega_+(k).
\end{equation}
Coming back to (\ref{RW-SM1}), we see first that the physical scattering matrix $\tilde{S}(\lambda,k,z)$ associated to $\tilde{x}$ is expressed as
\begin{equation} \label{RW-SM2}
  \tilde{S}(\lambda,k,z) = \left[ \begin{array}{cc} \tilde{T}(\lambda,k,z)&\tilde{R}(\lambda,k,z)\\ \tilde{L}(\lambda,k,z)&\tilde{T}(\lambda,k,z) \end{array} \right] = \left[ \begin{array}{cc} e^{-i\tilde{\beta}(k)} \hat{\tilde{T}}(\lambda,k,z) & e^{-2i(\tilde{\beta}(k) + K)} \hat{\tilde{R}}(\lambda,k,z)\\ e^{2iK} \hat{\tilde{L}}(\lambda,k,z) & e^{-i\tilde{\beta}(k)} \hat{\tilde{T}}(\lambda,k,z) \end{array} \right].
\end{equation}
But, in view of (\ref{RW-ScatCoeff}) and (\ref{Beta-k-tilde}) - (\ref{Theta-k}), the scattering matrix $\tilde{S}(\lambda,k,z)$ can be written in terms of the components of the unphysical scattering matrix $\hat{S}(\lambda,k,z)$ by
\begin{equation} \label{RW-SM-Ak}
  \tilde{S}(\lambda,k,z) = \left[ \begin{array}{cc} e^{-i(\beta(k) + \theta(k)c)} \hat{T}(\lambda,k,z) & e^{-2i(\beta(k) + \theta(k)c + K + \lambda^-(k) c)} \hat{R}(\lambda,k,z)\\ e^{2i(K + \lambda^-(k) c)} \hat{L}(\lambda,k,z) & e^{-i(\beta(k) + \theta(k) c)} \hat{T}(\lambda,k,z) \end{array} \right],
\end{equation}
or in terms of the physical scattering matrix $S(\lambda,k,z)$ by
\begin{equation} \label{RW-SM}
  \left[ \begin{array}{cc} \tilde{T}(\lambda,k,z)&\tilde{R}(\lambda,k,z)\\ \tilde{L}(\lambda,k,z)&\tilde{T}(\lambda,k,z) \end{array} \right] = \left[ \begin{array}{cc} e^{-i\theta(k)c} \,T(\lambda,k,z) & e^{-2i\lambda^+(k) c} \,R(\lambda,k,z)\\ e^{2i\lambda^-(k) c} L(\lambda,k,z) & e^{-i\theta(k) c} \, T(\lambda,k,z) \end{array} \right].
\end{equation}

Let us summarize these results for the true scattering data in a Proposition.
\begin{prop} \label{SM-Red-RW}
  Let the exterior region of a KN-dS black hole be described by two Regge-Wheeler coordinates $x$ and $\tilde{x}$ related by a constant $c$ such that
  $$
    \tilde{x} = x + c.
  $$
  For all $(k,l) \in I$, we denote by
  $$
    S_{kl}(\lambda) = \left[ \begin{array}{cc} T_{kl}(\lambda)&R_{kl}(\lambda)\\ L_{kl}(\lambda)&T_{kl}(\lambda) \end{array} \right], \qquad \tilde{S}_{kl}(\lambda) = \left[ \begin{array}{cc} \tilde{T}_{kl}(\lambda)&\tilde{R}_{kl}(\lambda)\\ \tilde{L}_{kl}(\lambda)&\tilde{T}_{kl}(\lambda) \end{array} \right],
  $$
  the corresponding physical reduced scattering matrices at fixed energy $\lambda$ associated to massless Dirac fields as given in Thm \ref{SM-Physical}. Then we have
  \begin{equation} \label{RW-SM-Red}
  \left[ \begin{array}{cc} \tilde{T}_{kl}(\lambda)&\tilde{R}_{kl}(\lambda)\\ \tilde{L}_{kl}(\lambda)&\tilde{T}_{kl}(\lambda) \end{array} \right] = \left[ \begin{array}{cc} e^{-i c(\Omega_+ (k)-\Omega_-(k))} \,T_{kl}(\lambda) & e^{-2ic\lambda^+(k)} \, R_{kl}(\lambda)\\ e^{2ic\lambda^-(k)} L_{kl}(\lambda) & \,
  e^{-i c(\Omega_+(k)-\Omega_-(k))} \,T_{kl}(\lambda) \end{array} \right],
\end{equation}
where
$$
  \lambda^{\pm}(k) = \lambda- \Omega_{\pm}(k).
$$
Moreover, the equality (\ref{RW-SM-Red}) holds with the underscript $kl$ replaced only by $k$, that is for the scattering matrices reduced onto an angular mode $\{ e^{-ik\varphi}\}$ only. At last, if
$$
    \hat{S}_{kl}(\lambda) = \left[ \begin{array}{cc} \hat{T}_{kl}(\lambda) & \hat{R}_{kl}(\lambda)\\ \hat{L}_{kl}(\lambda)& \hat{T}_{kl}(\lambda) \end{array} \right], \qquad \hat{\tilde{S}}_{kl}(\lambda) = \left[ \begin{array}{cc} \hat{\tilde{T}}_{kl}(\lambda)&\hat{\tilde{R}}_{kl}(\lambda)\\ \hat{\tilde{L}}_{kl}(\lambda)&\hat{\tilde{T}}_{kl}(\lambda) \end{array} \right],
  $$
denote the corresponding unphysical reduced scattering matrices at fixed energy $\lambda$ associated to massless Dirac fields as given in Thm \ref{SM-Unphysical}, then we have
\begin{equation} \label{RW-Unphys-SM-Red}
  \left[ \begin{array}{cc} \hat{\tilde{T}}_{kl}(\lambda)&\hat{\tilde{R}}_{kl}(\lambda)\\ \hat{\tilde{L}}_{kl}(\lambda)&\hat{\tilde{T}}_{kl}(\lambda) \end{array} \right] =
\left[ \begin{array}{cc} \hat{T}_{kl}(\lambda)&e^{-2i\lambda^+(k) c} \hat{R}_{kl}(\lambda)\\ e^{2i\lambda^-(k) c} \hat{L}_{kl}(\lambda)&\hat{T}_{kl}(\lambda) \end{array} \right].
\end{equation}
\end{prop}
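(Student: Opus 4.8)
The plan is to deduce the Proposition directly from the componentwise identities already established in Section \ref{DependenceRW}, the only genuinely new observation being that the separation constants $\mu_{kl}(\lambda)$ and the generalized spherical harmonics $Y_{kl}(\lambda)$ are insensitive to the choice of Regge-Wheeler coordinate. Indeed, the whole one-parameter family of one-dimensional identities relating the scattering data of the radial equation with potentials $(a,c)$ to those with shifted potentials $(\tilde{a},\tilde{c})$ --- equations (\ref{RW-ScatCoeff}) and (\ref{RW-SM}) --- has already been proved for an \emph{arbitrary} real coupling $z$, so the task reduces to specializing $z$ and transporting the result through the angular decomposition.

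First I would record that the angular operator $A_{\S^2}(\lambda) = H_{\S^2} - \lambda\, b(\theta)\,\Gc$ of (\ref{AS2}) involves only the angular variables: $H_{\S^2}$ is the purely angular operator written out in (\ref{HS2}), and $b(\theta) = a\sin\theta/\sqrt{\Delta_\theta}$ by (\ref{Pot-a-b}). In particular $A_{\S^2}(\lambda)$ contains no occurrence of the radial variable, so replacing $x$ by $\tilde{x} = x + c$ leaves it --- hence its eigenvalues, its normalized eigenfunctions, and the Hilbert decomposition (\ref{HilbertDecomposition}) --- unchanged. Thus $\tilde{\mu}_{kl}(\lambda) = \mu_{kl}(\lambda)$ and $\tilde{Y}_{kl}(\lambda) = Y_{kl}(\lambda)$ for all $(k,l)\in I$.

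Next, by Theorems \ref{SM-Unphysical} and \ref{SM-Physical} the (un)physical reduced scattering matrices are obtained from the one-dimensional objects by the prescription $z = \mu_{kl}(\lambda)$, and similarly with tildes and $z = \tilde{\mu}_{kl}(\lambda)$. Using the previous step to identify these two coupling constants, it then suffices to set $z = \mu_{kl}(\lambda)$ in (\ref{RW-ScatCoeff}) and (\ref{RW-SM}): this produces (\ref{RW-Unphys-SM-Red}) and (\ref{RW-SM-Red}) at once, with $\lambda^\pm(k) = \lambda - \Omega_\pm(k)$ as there.

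Finally, for the claim with the subscript $kl$ replaced by $k$: by Definition \ref{SMk} and Proposition \ref{TkRkLk}, the reduced scattering matrix $S_k(\lambda)$ acts diagonally on the Hilbert bases $\{Y_{kl}^1(\lambda)\}_{l}$ and $\{Y_{kl}^2(\lambda)\}_{l}$ of $\mathfrak{l}$ with the $2\times2$ blocks $S_{kl}(\lambda)$ as coefficients (and likewise $T_k^{L/R}(\lambda), R_k(\lambda), L_k(\lambda)$ act diagonally on these bases with the scalar entries of $S_{kl}(\lambda)$). Since these bases are unchanged by $x\mapsto x+c$ and since the scalar factors in (\ref{RW-SM-Red}) depend on $k$ but not on $l$, the same multiplicative relation transfers verbatim from the blocks $S_{kl}(\lambda)$ to the operators $S_k(\lambda)$, hence --- reading off the corner entries --- to the transmission and reflection operators. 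The argument is pure bookkeeping on top of Section \ref{DependenceRW}; the only point needing a word of justification, and the closest thing to an obstacle, is the Regge-Wheeler-independence of the angular eigendata, which is immediate from the explicit form (\ref{HS2})--(\ref{Pot-a-b}) of $A_{\S^2}(\lambda)$.
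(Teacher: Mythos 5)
Your proposal is correct and follows essentially the same route as the paper: the componentwise identities (\ref{RW-ScatCoeff}) and (\ref{RW-SM}) are established in Section \ref{DependenceRW} for arbitrary real coupling $z$, and the Proposition is obtained by specializing $z=\mu_{kl}(\lambda)$ and summing over the angular decomposition. Your explicit remark that the angular operator $A_{\S^2}(\lambda)$, and hence the eigendata $\mu_{kl}(\lambda)$ and $Y_{kl}(\lambda)$, are unaffected by the shift $\tilde{x}=x+c$ is left implicit in the paper but is exactly the point needed to make the specialization and the passage from the $kl$-blocks to the $k$-reduced operators legitimate.
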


The dependence of the expression of the scattering matrix under a change of radial variable will have important consequences for our main result. Since the exterior region of a KN-dS black hole can be described \emph{uniquely} by any choice of the Regge-Wheeler variable $x$, we shall \emph{identify} all the possible forms of the reduced scattering matrices (as given in Prop \ref{SM-Red-RW}) in the statement of our main inverse result (see Thm \ref{Main}). However and in a first reading, the reader is invited to take $c=0$ in the next formulae to simplify the statement of our results.

%
%
%
%

\subsection{Time-dependent expression of the scattering matrix} \label{Time-Dependent}

In this Section, we give an alternative definition of the scattering matrix $S(\lambda)$ by purely time-dependent methods. This definition will be given in terms of time-dependent wave operators associated to the Dirac Hamiltonian $H$ and the \emph{natural} asymptotic Hamiltonians corresponding to the evolution of the Dirac waves in the neighbourhood of the event and cosmological horizons. We shall then introduce the corresponding scattering operator $S$ and finally, define the time-dependent scattering matrix $S(\lambda)$ by a standard procedure. In the following Section, we shall prove the equivalence of this definition of the time-dependent scattering matrix with the stationary definition given in Section \ref{Direct-scat}.

First recall that the Dirac Hamiltonian $H$ is given by (see Section \ref{Hamilton-Form} and more precisely (\ref{Op-H}))
$$
  H = J^{-1} H_0, \quad H_0 = \Ga D_x + a(x) H_{\S^2} + c(x,D_\varphi),
$$
where
$$
  J^{-1} = (1 - \alpha^2) (I_2 - \alpha \Gc), \quad \alpha(x,\theta) = a(x) b(\theta) = \frac{\sqrt{\Delta_r}}{r^2 + a^2}\frac{a \sin^2\theta}{\sqrt{\Delta_\theta}}.
$$
Observe that, by (\ref{Asymp-J}), we have $J^{-1} = I_2 + O(e^{- \kappa |x|}), \ \kappa = \min(|\kappa_-|, |\kappa_+|)  > 0$ when $x \to \pm \infty$. Hence, the operator $H$ can be viewed as a \emph{short-range perturbation of order $1$} of the operator $H_0$, \textit{i.e}
$$
  H = H_0 + O(e^{- \kappa |x|}) H_0,
$$
and we expect that we can compare the dynamics generated by $H$ and $H_0$ at late times. In fact, we are able to prove a complete scattering theory for the pair of selfadjoint operators $(H,H_0)$. Before stating this first result, recall that the Hamiltonians $H_0$ and $H$ act on the different Hilbert spaces $\H$ and $\G$ respectively (defined before Theorem \ref{Spectra-H-H0}). In consequence, we use the two Hilbert spaces scattering formalism as exposed in \cite{Y} to state our first Theorem. Precisely, we have
\begin{theorem} \label{WO-H-H0}
  The Hamiltonians $H_0$ and $H$ have purely absolutely continuous spectra, precisely
  $$
    \sigma(H_0)= \sigma_{ac}(H_0) = \R, \quad \sigma(H)= \sigma_{ac}(H) = \R,
  $$
  and the following wave operators
  $$
    W^\pm(H,H_0,I_2) := s-\lim_{t \to \pm \infty} e^{itH} e^{-itH_0},
  $$
  exist as operators from $\H$ to $\G$ and are asymptotically complete, \textit{i.e.} they are isometries from $\H$ to $\G$ and their inverse wave operators given by
  $$
    (W^\pm(H,H_0,I_2))^* = W^\pm(H_0,H,J) := s-\lim_{t \to \pm \infty} e^{itH_0} J e^{-itH},
  $$
  exist as operators from $\G$ to $\H$. (Note that the identity operator $I_2: \ \H \longrightarrow \G$ has been used as identification operator between $\H$ and $\G$ in the definition of the direct wave operators, whereas the dual operator $(I_2)^* = J: \ \G \longrightarrow \H$ appears in the definition of the inverse wave operators).
\end{theorem}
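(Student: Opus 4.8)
The plan is to deduce Theorem~\ref{WO-H-H0} from Kato's smooth perturbation theory in the two--Hilbert--space setting of \cite{Y}, the essential analytic input being the propagation (minimal velocity) estimates produced by the Mourre theory of Appendix~\ref{LAP-Mourre}; this follows the scheme of \cite{Da2,HaN} for Kerr and Kerr--Newman black holes. The purely absolutely continuous nature of the spectra of $H_0$ and $H$, and their selfadjointness, are already recorded in Theorem~\ref{Spectra-H-H0} (they also follow from the Limiting Absorption Principle of the Appendix), so only the assertions on the wave operators remain. The starting point is the identity $H-H_0=(J^{-1}-I_2)H_0$ with, by (\ref{Op-H}) and (\ref{Asymp-J}), $\sup_\theta\|J^{-1}(x,\theta)-I_2\|=O(e^{-\kappa|x|})$, $\kappa=\min(|\kappa_-|,|\kappa_+|)>0$, so that $H$ is a \emph{short--range perturbation of order $1$} of $H_0$; note also that $JH=H_0$ on the common domain $D(H)=D(H_0)$.

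I would first prove the existence of $W^\pm(H,H_0,I_2)$ by Cook's method. For $\psi$ in the dense set of vectors spectrally localized for $H_0$ in a bounded interval $\Delta$, one has $\psi\in D(H_0)$ and
\begin{equation*}
  \frac{d}{dt}\bigl(e^{itH}e^{-itH_0}\psi\bigr)=i\,e^{itH}(J^{-1}-I_2)\,e^{-itH_0}\chi,\qquad \chi:=H_0\psi ,
\end{equation*}
where $\chi$ is again localized in $\Delta$; since $e^{itH}$ is unitary on $\G$, it suffices to check that $t\mapsto\|e^{-\kappa|x|}e^{-itH_0}\chi\|$ is integrable near $\pm\infty$. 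Here one exploits that, because $a(x)\to0$ and $c(x,D_\varphi)\to\Omega_\pm(D_\varphi)$ at the horizons, the evolution $e^{-itH_0}$ propagates spectrally localized states ballistically with $|x|\gtrsim t$: the Mourre estimate for $H_0$ (Appendix~\ref{LAP-Mourre}), with a conjugate operator adapted to the asymptotic velocity $\Ga$, yields a minimal velocity bound $\|\mathbf{1}_{\{|x|\le\varepsilon t\}}e^{-itH_0}\chi\|=O(t^{-N})$ for some $\varepsilon>0$ and every $N$, while on the region $\{|x|>\varepsilon t\}$ one has $e^{-\kappa|x|}\le e^{-\kappa\varepsilon t}$; splitting accordingly gives the $L^1(dt)$ bound. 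Hence $W^\pm(H,H_0,I_2)$ exist as bounded operators from $\H$ to $\G$. They are isometries because
\begin{equation*}
  \|e^{-itH_0}\psi\|_\G^2-\|\psi\|_\H^2=\bigl(e^{-itH_0}\psi,(J-I_2)e^{-itH_0}\psi\bigr)_\H ,
\end{equation*}
which tends to $0$ as $t\to\pm\infty$ by the same propagation estimate.

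Next I would obtain the existence of the inverse wave operators $W^\pm(H_0,H,J)$ as operators from $\G$ to $\H$ by the same argument, now using the Mourre estimate for $H$ itself. For $\psi$ spectrally localized for $H$, using $H_0=JH$ one computes
\begin{equation*}
  \frac{d}{dt}\bigl(e^{itH_0}Je^{-itH}\psi\bigr)=i\,e^{itH_0}\bigl(H_0J-JH\bigr)e^{-itH}\psi=i\,e^{itH_0}\,H_0(J-I_2)\,e^{-itH}\psi .
\end{equation*}
Since $J-I_2=\alpha(x,\theta)\Gc$ with $\alpha=a(x)b(\theta)$ by (\ref{J}), one writes $H_0(J-I_2)=(J-I_2)H_0+[H_0,J-I_2]$, where the commutator is a differential operator of order $\le1$ whose coefficients are $O(e^{-\kappa|x|})$, and $H_0e^{-itH}\psi=Je^{-itH}(H\psi)$ because $H_0=JH$; after a further spectral localization this reduces, exactly as before, to the integrability of $t\mapsto\|e^{-\kappa|x|}e^{-itH}\chi'\|$ for $\chi'$ in a dense set, which follows from the minimal velocity estimate for $H$. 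The detailed estimates are those of \cite{Da1,Da2,HaN}.

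Finally the two families of limits are combined in the usual way (\cite{Y}, \cite{RS}): the chain rule gives $W^\pm(H_0,H,J)\,W^\pm(H,H_0,I_2)=P^{ac}(H_0)$ and $W^\pm(H,H_0,I_2)\,W^\pm(H_0,H,J)=P^{ac}(H)$, and since by Theorem~\ref{Spectra-H-H0} the spectra are purely absolutely continuous these projections are the identities on $\H$ and $\G$ respectively. Thus $W^\pm(H,H_0,I_2)$ are unitary, hence in particular asymptotically complete, and $(W^\pm(H,H_0,I_2))^*=W^\pm(H_0,H,J)$, which is the statement of the theorem. The only nonroutine step is the minimal velocity estimates for $H_0$ and $H$: this is where the geometry of the two horizons enters, and it rests on the Mourre theory of Appendix~\ref{LAP-Mourre}; granting it, the remainder is standard.
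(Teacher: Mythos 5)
Your proposal reaches the right statement but by a genuinely different route from the paper. The paper never uses Cook's method or propagation estimates: it runs the Mourre theory of Appendix \ref{LAP-Mourre} only as far as the Limiting Absorption Principle (Proposition \ref{LAP-H-H0}), converts that into Kato smoothness of $\l A \r^{-s}$, hence of $\l x \r^{-s}$ and of $e^{-\frac{c}{2}|x|}$, factorizes the perturbation as $H-H_0=K^*K_0$ with $K_0=e^{\frac{c}{2}|x|}(J^{-1}-I_2)H_0$ and $K=e^{-\frac{c}{2}|x|}$, and then invokes the abstract two-space smooth-perturbation theorem (Theorem \ref{WO-Hsmooth}, from \cite{Y}) to get the direct \emph{and} inverse wave operators simultaneously, locally in energy, followed by a density argument. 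What that buys is that no time-decay of the evolution is ever needed, and the inverse wave operator comes for free from the same factorization; what your route buys, if carried out, is a more dynamical picture and an explicit proof that the isometry defect $(e^{-itH_0}\psi,(J-I_2)e^{-itH_0}\psi)$ vanishes.

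The places where your argument outruns what is actually established are concrete. First, the minimal velocity estimate $\|\mathbf{1}_{\{|x|\le\varepsilon t\}}e^{-itH_0}\chi(H_0)\psi\|=O(t^{-N})$ is not a consequence of the Mourre estimate alone: it requires the Sigal--Soffer propagation machinery (control of iterated commutators $\mathrm{ad}_A^k(H_0)$ beyond the second-order one checked in Lemma \ref{C1AH0}), and here the conjugate operator is not $\Ga x$ but $\Ga\big(x+\kappa_\pm^{-1}\ln|H_{\S^2}|\big)$ suitably cut off, so localization in $A$ translates into localization in $x$ only up to a shift $\kappa^{-1}\ln\mu_{kl}$ that is unbounded over the angular modes; you would need to make the passage from $A$-localization to $x$-localization uniform, or restrict to finitely many harmonics and track the density argument carefully. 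Second, in the inverse-direction Cook computation the commutator $[H_0,J-I_2]=a(x)^2[H_{\S^2},b(\theta)\Gc]$ contains a genuine first-order angular term $\sim a(x)^2 b(\theta)\Gb\Gc D_\theta$ (since $\Gb$ and $\Gc$ anticommute), and spectral localization in $H$ does \emph{not} control $D_\theta e^{-itH}\psi$, because the angular operator enters $H$ only through the weight $a(x)$ which vanishes at both horizons; closing this requires writing the term as $a(x)\cdot(a(x)D_\theta)$ and trading $a(x)D_\theta$ against $H_0-\Ga D_x-c(x,D_\varphi)$, which is exactly the kind of bookkeeping the paper's factorization $K^*K_0$ is designed to avoid. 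Neither point is fatal — both are handled in \cite{HaN,Da2} — but as written your proof asserts rather than supplies the two estimates on which everything rests, while the paper's LAP-plus-smoothness route needs neither.
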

\begin{proof}
    The proof of this Theorem given in Appendix \ref{LAP-Mourre} is a direct consequence of the Limiting Absorption Principles (LAP) for the Hamiltonians $H$ and $H_0$ and the theory of $H$-smooth operators as exposed in \cite{RS}. These LAPs in turn are obtained with the help of a non trivial Mourre theory similar to the ones obtained in \cite{HaN, Da2}.
\end{proof}

The interest in comparing first the dynamic $e^{-itH}$ with the dynamic $e^{-itH_0}$ at late times is that the operator $H_0$ can be considerably simplified if we decompose it onto its restrictions to the generalized spherical harmonics $Y_{kl}(0)$, that are common eigenfunctions of the operators $H_{\S^2} = A_{\S^2}(0)$ and $D_\varphi$. Indeed the Hamiltonian $H_0$ is nothing but the Hamiltonian $H(0)$ introduced in (\ref{Hlambda}) for which separation of variable is available. Referring to Theorems \ref{Spectrum-A} and \ref{Separation} for the details, we consider precisely the following decomposition of the Hilbert space $\H$
$$
  \H = \oplus_{(k,l) \in I} \H_{kl}(0),
$$
where
$$
  I = (\frac{1}{2} + \Z) \times \N^*,
$$
and
$$
  \H_{kl}(0) = L^2(\R, \C^2) \otimes Y_{kl}(0) \simeq L^2(\R,\C^2) := \h.
$$
These reduced Hilbert spaces remain invariant under the action of $H_0$ and we are led to study the family of one-dimensional Dirac operators
$$
  H_{0 | \H_{kl}(0)} = H_{kl}(0) := \Ga D_x + \mu_{kl}(0) a(x) \Gb + c(x,k).
$$

Hence, in order to obtain a complete scattering theory for $H_0$, it is enough to obtain a complete scattering theory for each of the one-dimensional Dirac operators $H_{kl}(0)$, a much simpler problem already studied in \cite{Da1} and which is a particular case of the one studied in Section \ref{Direct-Stat-Scat}. Recall indeed that the Hamiltonian $H_{kl}(0)$ defined above is simply the Hamiltonian $H_{kz}$ with $z = \mu_{kl}(0)$ defined in (\ref{Hkz}), \textit{i.e.}
$$
  H_{kl}(0) = H_k(\mu_{kl}(0)),
$$
and for which a complete scattering theory has been obtained in Theorem \ref{WO-Hkz-Hkpm}. We recall these results here in our particular case. Introducing the asymptotic Hamiltonians
$$
  H^\pm_k = (\Ga D_x + \Omega_+(k)) P_\pm + (\Ga D_x + \Omega_-(k)) P_\mp,
$$
where $P_\pm = \mathbf{1}_{\R^\pm}(\Ga)$, we can prove the following Proposition

\begin{prop} \label{WO-H0-Hpm-kl}
  For each $(k,l) \in I$, the Hamiltonians $H_{kl}(0)$ and $H^\pm_k$ have purely absolutely continuous spectra, precisely
  $$
    \sigma(H_{kl}(0))= \sigma_{ac}(H_{kl}(0)) = \R, \quad \sigma(H^\pm_k)= \sigma_{ac}(H^\pm_k) = \R,
  $$
  and the wave operators
  $$
    W^\pm(H_{kl}(0), H^\pm_k) := s-\lim_{t \to \pm \infty} e^{itH_{kl}(0)} e^{-itH^\pm_k},
  $$
  exist on $\h$ and are asymptotically complete, \textit{i.e.} they are isometries on $\h$ and their inverse wave operators given by
  $$
    (W^\pm(H_{kl}(0), H^\pm_k))^* = W^\pm(H^\pm_{k}, H_{kl}(0)) := s-\lim_{t \to \pm \infty} e^{itH^\pm_{k}} e^{-itH_{kl}(0)},
  $$
  also exist on $\h$.
\end{prop}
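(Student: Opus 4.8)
The plan is to observe that $H_{kl}(0)$ is precisely the one-parameter family $H_{kz}$ of (\ref{Hkz}) evaluated at the real value $z = \mu_{kl}(0)$, the $(k,l)$-th eigenvalue of $A_{\S^2}(0) = H_{\S^2}$. Since Theorem \ref{WO-Hkz-Hkpm} already gives, for every $k \in \frac{1}{2} + \Z$ and every $z \in \R$, that $H_{kz}$ and $H^\pm_k$ have purely absolutely continuous spectrum equal to $\R$ and that the wave operators $W^\pm(H_{kz}, H^\pm_k)$ exist and are asymptotically complete with the stated inverse wave operators, the Proposition follows at once by specializing $z = \mu_{kl}(0)$. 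At the bookkeeping level there is essentially nothing new to prove.

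For the reader's convenience I would also recall the mechanism underlying Theorem \ref{WO-Hkz-Hkpm}, which isolates where the actual work lies. First, $H_{kl}(0)$ is self-adjoint on $\h$ on the domain of $\Ga D_x$ by Kato--Rellich, since $\mu_{kl}(0) a(x) \Gb + c(x,k)$ is a bounded self-adjoint matrix-valued multiplication operator. Next, one removes the long-range potential $c(x,k)$ --- which does not vanish at the horizons but tends to the two distinct constants $\Omega_\pm(k)$ as $x \to \pm\infty$ --- by conjugating with the unitary $U_k = e^{-iC(x,k)\Ga}$ of (\ref{Unitary-Uk}), turning $H_{kl}(0)$ into $\hat{H}_{k,\mu_{kl}(0)} = \Ga D_x - \mu_{kl}(0) V_k(x)$. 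By (\ref{Asymp-a}) the potential $V_k$ decays exponentially at both horizons, hence is short-range (indeed in $L^1$); for such a perturbation of $H_\infty = \Ga D_x$ the wave operators $W^\pm(\hat{H}_{k,\mu_{kl}(0)}, H_\infty)$ exist by Cook's method and are complete by a trace-class argument, and $\sigma(\hat{H}_{k,\mu_{kl}(0)}) = \sigma_{ac}(\hat{H}_{k,\mu_{kl}(0)}) = \R$.

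Finally, I would transfer this back to the pair $(H_{kl}(0), H^\pm_k)$ through the factorization $W^\pm(H_{kz}, H^\pm_k) = U_k W^\pm(\hat{H}_{kz}, H_\infty) G_k^\pm$ of (\ref{WO-1}), with $G_k^\pm$ the explicit unitary of Lemma \ref{G-k}: since $U_k$ and $G_k^\pm$ are unitary, existence, the isometry property and asymptotic completeness all descend from the free comparison, while $\sigma(H^\pm_k) = \sigma_{ac}(H^\pm_k) = \R$ comes from the explicit diagonalization of $H^\pm_k$ by the transforms $F_k^\pm$ of (\ref{F+k})--(\ref{F-k}). The only genuinely delicate ingredient in the whole chain is the bookkeeping with the projectors $P_\pm = \mathbf{1}_{\R^\pm}(\Ga)$, which are needed precisely because $H_{kl}(0)$ approaches two \emph{different} limiting operators at the two ends $\{x = \pm\infty\}$; they correctly route the outgoing ($P_+$) and incoming ($P_-$) parts of the field toward the cosmological, resp. event, horizon. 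This is exactly what was handled in \cite{Da1} and in Theorem \ref{WO-Hkz-Hkpm}, so no new obstacle arises here.
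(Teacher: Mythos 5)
Your proposal is correct and coincides with the paper's own treatment: the paper likewise observes that $H_{kl}(0)$ is just $H_{kz}$ with $z=\mu_{kl}(0)$ and states the Proposition as a direct specialization of Theorem \ref{WO-Hkz-Hkpm}, offering no separate proof. Your recap of the underlying mechanism (conjugation by $U_k$, short-range reduction to $H_\infty=\Ga D_x$, transfer back via $G_k^\pm$ and the projectors $P_\pm$) accurately reflects how that theorem is established in Section \ref{Direct-Stat-Scat} and in \cite{Da1}.
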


By summing over all the generalized spherical harmonics $Y_{kl}(0)$, it is immediate to re-write this result in a global form as follows. Define the global asymptotic Hamiltonians
\begin{equation} \label{Hpm}
  H^\pm = (\Ga D_x + \Omega_+(D_\varphi)) P_\pm + (\Ga D_x + \Omega_-(D_\varphi)) P_\mp.
\end{equation}
Then
\begin{theorem} \label{WO-H0-Hpm}
  The wave operators
  $$
    W^\pm(H_0,H^\pm) = s-\lim_{t \to \pm \infty} e^{itH_0} e^{-itH^\pm},
  $$
  exist on $\H$ and are asymptotically complete, \textit{i.e.} they are isometries on $\H$ and their inverse wave operators given by
  $$
    (W^\pm(H_0,H^\pm))^* = W^\pm(H^\pm,H_0) = s-\lim_{t \to \pm \infty} e^{itH^\pm} e^{-itH_0},
  $$
  also exist on $\H$.
\end{theorem}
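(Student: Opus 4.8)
The plan is to obtain Theorem~\ref{WO-H0-Hpm} as the orthogonal sum, over all angular channels, of the one-dimensional scattering result already established in Proposition~\ref{WO-H0-Hpm-kl}. First I would invoke the separation of variables of Theorems~\ref{Spectrum-A} and \ref{Separation} at the energy $\lambda = 0$ --- recall that $H_0$ is nothing but the stationary operator $H(0)$ of (\ref{Hlambda}), so that $A_{\S^2}(0) = H_{\S^2}$ --- which provides the orthogonal decomposition
$$
  \H = \bigoplus_{(k,l) \in I} \H_{kl}(0), \qquad \H_{kl}(0) = L^2(\R;\C^2) \otimes Y_{kl}(0) \simeq \h, \qquad I = (\tfrac{1}{2} + \Z) \times \N^*,
$$
with each $\H_{kl}(0)$ invariant under $H_0$, on which it acts as the radial operator $H_{kl}(0) = \Ga D_x + \mu_{kl}(0) a(x) \Gb + c(x,k)$. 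The key verification at this stage is that the asymptotic Hamiltonian $H^\pm$ of (\ref{Hpm}) is likewise reduced by this decomposition: $\Omega_\pm(D_\varphi)$ acts as the scalar $\Omega_\pm(k)$ on the angular mode $\{e^{ik\varphi}\}$, while $\Ga D_x$ and $P_\pm = \mathbf{1}_{\R^\pm}(\Ga)$ preserve $\H_{kl}(0)$ because this subspace is stable under $\Ga$ (by the convention (\ref{Index-l}), exactly as exploited in (\ref{Gamma1-Ykl})). Consequently $H^\pm$ restricted to $\H_{kl}(0)$ equals the operator $H_k^\pm = (\Ga D_x + \Omega_+(k)) P_\pm + (\Ga D_x + \Omega_-(k)) P_\mp$ of Proposition~\ref{WO-H0-Hpm-kl}.

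Next, since the unitary groups $e^{-itH_0}$ and $e^{-itH^\pm}$ are block-diagonal with respect to $\H = \oplus_{(k,l)} \H_{kl}(0)$, I would set
$$
  W^\pm(H_0, H^\pm) := \bigoplus_{(k,l) \in I} W^\pm(H_{kl}(0), H_k^\pm),
$$
the summands existing and being asymptotically complete on $\h$ by Proposition~\ref{WO-H0-Hpm-kl}, and then check that this operator is indeed $s-\lim_{t \to \pm\infty} e^{itH_0} e^{-itH^\pm}$. For $\psi$ in the dense set of finite linear combinations of vectors each lying in a single channel $\H_{kl}(0)$ this is immediate from Proposition~\ref{WO-H0-Hpm-kl}; for arbitrary $\psi \in \H$ one approximates by such finite sums and uses that $e^{itH_0} e^{-itH^\pm}$ is unitary --- so the tail over the remaining channels is controlled uniformly in $t$ --- together with completeness of $\H$, i.e. a standard $\varepsilon/3$ argument. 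The same reasoning applied to $s-\lim_{t\to\pm\infty} e^{itH^\pm} e^{-itH_0}$ yields the inverse wave operators, and the channel-wise isometry property and asymptotic completeness furnished by Proposition~\ref{WO-H0-Hpm-kl} are preserved by the orthogonal sum, giving $(W^\pm(H_0,H^\pm))^* = W^\pm(H^\pm, H_0)$.

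I expect no serious obstacle here: the scattering content of the statement is entirely contained in Proposition~\ref{WO-H0-Hpm-kl}, and the remaining work is bookkeeping. The two points deserving a line of care are (i) the compatibility of $H^\pm$ with the separation-of-variables decomposition, where one must note that $\Ga$ \emph{anticommutes} with $A_{\S^2}(0)$ yet still preserves the reduced subspaces $\H_{kl}(0)$ built from the eigenfunction pairs $\{Y_{kl}(0), Y_{k,-l}(0)\}$, and (ii) the uniform-in-$t$ control of the infinitely many channels, which is guaranteed by the unitarity of the comparison family $e^{itH_0} e^{-itH^\pm}$. If one also wishes to record the spectral statements, $\sigma(H_0) = \sigma_{ac}(H_0) = \R$ is already contained in Theorem~\ref{Spectra-H-H0} and $\sigma(H^\pm) = \sigma_{ac}(H^\pm) = \R$ follows channel-wise from Proposition~\ref{WO-H0-Hpm-kl}.
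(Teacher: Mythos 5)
Your proposal is correct and follows essentially the same route as the paper, which obtains Theorem \ref{WO-H0-Hpm} precisely by summing the channel-wise result of Proposition \ref{WO-H0-Hpm-kl} over the generalized spherical harmonics $Y_{kl}(0)$ (the paper states this summation is "immediate"). You merely make explicit the two points the paper leaves tacit --- the reduction of $H^\pm$ by the subspaces $\H_{kl}(0)$ via the stability under $\Ga$, and the uniform-in-$t$ density argument handling infinitely many channels --- both of which are correct.
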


Finally we can finish the construction and the proof of existence of global wave operators associated to our Dirac Hamiltonian $H$ and to the natural asymptotic Hamiltonians $H^\pm$. Using Theorems \ref{WO-H-H0} and \ref{WO-H0-Hpm} and the chain-rule for wave operators, we obtain

\begin{theorem} \label{WO}
  The global wave operators
  $$
    W^\pm(H,H^\pm,I_2) = s-\lim_{t \to \pm \infty} e^{itH} e^{-itH^\pm},
  $$
  exist as operators from $\H$ to $\G$ and are asymptotically complete, \textit{i.e.} they are isometries from $\H$ to $\G$ and their inverse wave operators given by
  $$
    (W^\pm(H,H^\pm,I_2)^* = W^\pm(H^\pm,H,J) = s-\lim_{t \to \pm \infty} e^{itH^\pm} J e^{-itH},
  $$
  also exist as operators from $\G$ to $\H$.
\end{theorem}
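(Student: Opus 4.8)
The plan is to deduce Theorem~\ref{WO} from the two scattering results already established, by a pure application of the chain rule for wave operators in the two-Hilbert-space framework of \cite{Y}. All the genuine analytic input — the Limiting Absorption Principles for $H$ and $H_0$ and the underlying Mourre estimates of Appendix~\ref{LAP-Mourre} — has already been used in the proof of Theorem~\ref{WO-H-H0} (comparison of the evolution generated by $H$ on $\G$ with the one generated by $H_0$ on $\H$, through the identification $I_2$) and of Theorem~\ref{WO-H0-Hpm} (comparison of $H_0$ with the asymptotic Hamiltonians $H^\pm$, all acting on $\H$). What remains is essentially algebraic, and I do not expect any real obstacle in this step; the hard part of the whole circle of ideas is precisely the Mourre theory underpinning those two theorems.

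First I would factor the propagator product as
\begin{equation*}
  e^{itH} e^{-itH^\pm} = \big( e^{itH} e^{-itH_0} \big)\big( e^{itH_0} e^{-itH^\pm} \big).
\end{equation*}
By Theorem~\ref{WO-H0-Hpm} the right-hand factor converges strongly on $\H$, as $t\to\pm\infty$, to the isometry $W^\pm(H_0,H^\pm)$, and by Theorem~\ref{WO-H-H0} the left-hand factor converges strongly, as operators from $\H$ to $\G$, to the isometry $W^\pm(H,H_0,I_2)$. Since the operators $e^{itH} e^{-itH_0}$ are uniformly bounded from $\H$ to $\G$ — the norms of $\H$ and $\G$ being equivalent because $J$ is a bounded, boundedly invertible multiplication operator — and the operators $e^{itH_0} e^{-itH^\pm}$ are unitary on $\H$, the standard fact that $A_t B_t \to AB$ strongly whenever $A_t\to A$ strongly with $\sup_t\|A_t\| < \infty$ and $B_t\to B$ strongly yields
\begin{equation*}
  W^\pm(H,H^\pm,I_2) = W^\pm(H,H_0,I_2)\, W^\pm(H_0,H^\pm).
\end{equation*}
Being a composition of two isometries (one on $\H$, one from $\H$ to $\G$), this operator is an isometry from $\H$ to $\G$, and composing the intertwining relations of the two factors gives $e^{isH}\, W^\pm(H,H^\pm,I_2) = W^\pm(H,H^\pm,I_2)\, e^{isH^\pm}$ for all $s \in \R$.

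For asymptotic completeness I would run the same argument in reverse, starting from
\begin{equation*}
  e^{itH^\pm} J e^{-itH} = \big( e^{itH^\pm} e^{-itH_0} \big)\big( e^{itH_0} J e^{-itH} \big).
\end{equation*}
Here the right-hand factor converges strongly, as operators from $\G$ to $\H$, to $W^\pm(H_0,H,J) = \big( W^\pm(H,H_0,I_2) \big)^*$ by the completeness half of Theorem~\ref{WO-H-H0}, and the left-hand factor converges strongly on $\H$ to $W^\pm(H^\pm,H_0) = \big( W^\pm(H_0,H^\pm) \big)^*$ by Theorem~\ref{WO-H0-Hpm}. Applying once more the uniform boundedness and the product lemma, the inverse wave operators $W^\pm(H^\pm,H,J) := s-\lim_{t \to \pm \infty} e^{itH^\pm} J e^{-itH}$ exist as operators from $\G$ to $\H$ and equal $\big( W^\pm(H_0,H^\pm) \big)^* \big( W^\pm(H,H_0,I_2) \big)^* = \big( W^\pm(H,H_0,I_2)\, W^\pm(H_0,H^\pm) \big)^* = \big( W^\pm(H,H^\pm,I_2) \big)^*$, which is exactly the asymptotic completeness statement.

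The only point calling for a little care, and the reason the two-Hilbert-space chain rule of \cite{Y} must be invoked rather than its single-space version, is the bookkeeping of the identification operators: $H_0$ and $H^\pm$ both act on $\H$, so the intermediate identification is $I_2 : \H \to \H$, and composing it with $I_2 : \H \to \G$ returns $I_2 : \H \to \G$, consistently with the statement. Modulo this routine bookkeeping, and given Theorems~\ref{WO-H-H0} and~\ref{WO-H0-Hpm}, the proof of Theorem~\ref{WO} is immediate.
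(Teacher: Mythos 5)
Your proposal is correct and follows exactly the route the paper takes: the paper derives Theorem \ref{WO} by applying the chain rule for wave operators to Theorems \ref{WO-H-H0} and \ref{WO-H0-Hpm}, which is precisely the factorization $e^{itH}e^{-itH^\pm} = (e^{itH}e^{-itH_0})(e^{itH_0}e^{-itH^\pm})$ that you spell out. Your write-up merely makes explicit the uniform-boundedness and identification-operator bookkeeping that the paper leaves implicit.
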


We are now in position to define the global time-dependent scattering operator by the standard rule
\begin{equation} \label{TD-ScatOp}
  S = (W^+(H,H^+,I_2))^* \ W^-(H,H^-,I_2),
\end{equation}
which is clearly a unitary operator from $\H$ to $\H$. In order to finish the construction of the corresponding time-dependent scattering matrix $S(\lambda)$ at energy $\lambda$, we introduce the following unitary operators on $\H$
\begin{equation} \label{F+}
  F^+ \psi(\lambda) = \frac{1}{\sqrt{2\pi}} \int_\R  \left( \begin{array}{cc} e^{-ix (\lambda - \Omega_+(D_\varphi))}&0\\0&e^{ix (\lambda - \Omega_-(D_\varphi))} \end{array} \right) \psi(x) dx,
\end{equation}
and
\begin{equation} \label{F-}
  F^- \psi(\lambda) = \frac{1}{\sqrt{2\pi}} \int_\R  \left( \begin{array}{cc} e^{-ix (\lambda - \Omega_-(D_\varphi))}&0\\0&e^{ix (\lambda - \Omega_+(D_\varphi))} \end{array} \right) \psi(x) dx.
\end{equation}
As already explained in Section \ref{Direct-Stat-Scat} and more precisely in (\ref{F+k})-(\ref{F-k}), these operators diagonalize the Hamiltonians $H^+$ and $H^-$ respectively. Hence, we define the global scattering matrix at energy $\lambda$ in a natural way by the rule
\begin{equation} \label{TD-Scat}
  S = (F_+)^* S(\lambda) F_-.
\end{equation}

At this stage, we have obtained a complete time-dependent scattering theory for the Dirac Hamiltonian $H$. The meaning of the above definitions for the wave operators is the following. At late times (from the point of view of a stationary observer for which the variable $t$ corresponds to proper time), the energy of massless Dirac fields escape towards the two asymptotic regions: the event and cosmological horizons. This is a direct consequence of the absence of pure point spectrum for $H$. Moreover, in these regions, the massless Dirac fields are shown to obey simpler evolutions governed by the asymptotic Hamiltonians $H^\pm$. From the expressions (\ref{Hpm}) of $H^\pm$ and (\ref{NullVectors}) for the principal null geodesics, we imediately see that, at late times and from the point of view of a stationary observer, the Dirac fields simply obey a system of transport equations along the incoming and ougoing principal null geodesics in the neighbourhood of the event and cosmological horizons.

%
%

\subsection{Link between the stationary and time-dependent expressions of the scattering matrices}

In this section, we make the link between the stationary scattering matrix $S(\lambda)$ defined by (\ref{SM-Global-Physical}) in Theorem \ref{SM-Physical} and the time-dependent scattering matrix $S(\lambda)$ defined by (\ref{TD-Scat}). We shall show that the two definitions coincide, justifying then the use of the same notation for the two objects.

To do this, we start from the time-dependent scattering matrix $S(\lambda)$ defined by (\ref{TD-Scat}) and shall obtain a stationary representation of $S(\lambda)$ following the approach of Kuroda \cite{Ku} or Isozaki-Kitada \cite{Is-Ki}. We also refer to Yafaev (\cite{Y}) where similar formulae are proved. The procedure is quite standard in the case of a general Hamiltonian $H$ which is a short-range perturbation by a potential of a given Hamiltonian $H_0$. This is not so easy in our case however for two reasons.
\begin{itemize}
\item The Hamiltonian $H$ is a short-range perturbation \emph{of order 1} of the asymptotic Hamiltonians $H^\pm$.
\item The dynamics $e^{-itH}$ is compared with \emph{different} dynamics $e^{-itH^\pm}$ when $t \to \pm \infty$.
\end{itemize}
Both problems entail technical difficulties in adapting the methods of \cite{Ku, Is-Ki,Y}. Finally, the stationary representation of $S(\lambda)$ obtained in this way from (\ref{TD-Scat}) will be shown to coincide with the stationary representation (\ref{SM-Global-Physical}) obtained through the separation of variables procedure.

%
%

\subsubsection{First simplifications}

In order to avoid the second above difficulty, we shall slightly simplify the problem as follows. Recall that the fact that we compare the dynamics $e^{-itH}$ with different dynamics $e^{-itH^\pm}$ at late times follows from the presence of the long-range potential $c(x,D_\varphi)$ in the expression of $H$. But, in Section \ref{Simplified-SM}, we removed this potential of the equation by introducing a convenient unitary transform (\ref{Unitary-Uk}) on each generalized spherical harmonics $Y_{kl}(\lambda)$. This leaded us to define a simplified stationary scattering matrix $\hat{S}(\lambda)$ expressed in terms of the Jost functions solutions of the simplified stationary equation. We shall follow the same approach here, that is first remove the potential $c(x,D_\varphi)$ from the expression of $H$ and obtain the corresponding expression for the time-dependent scattering matrix, and second obtain a stationary expression of this time-dependent scattering matrix by the route proposed  in \cite{Ku, Is-Ki,Y}.

Let us then introduce the unitary transform $U$ from $\H$ to $\H$ defined by
\begin{equation} \label{Unitary-U}
  U \psi = e^{-i C(x,D_\varphi) \Ga} \psi,
\end{equation}
where $C(x,D_\varphi)$ is defined by (\ref{Pot-C}). Note that this unitary transform is simply the unitary transform (\ref{Unitary-Uk}) when restricted onto the angular modes $\{e^{ik\varphi}\}, \ k \in 1/2 + \Z$. Let us introduce some new notations
\begin{eqnarray}
  \hat{J} & = & U^{-1} J U, \\
  \hat{H_0} & = & U^{-1} H_0 U = \Ga D_x + a(x) U^{-1} H_{\S^2} U, \\
  \hat{H} & = & U^{-1} H U = \hat{J} \hat{H_0}, \\
  H_\infty & = & \Ga D_x.
\end{eqnarray}
Note that the Hamiltonian $\hat{H}$ is still a short-range perturbation of order $1$ of the Hamiltonian $\hat{H_0}$, which in turn can now be viewed as a short-range perturbation of the Hamiltonian $H_\infty$ after decomposition onto convenient generalized spherical harmonics. In other words, after conjugation by the unitary transform $U$, we are able to compare the dynamics generated by $\hat{H}$ with a \emph{single} dynamics generated by $H_\infty$ at late times. Thus we have removed the second above difficulty.

Let us now write the details. We define the new Hilbert space
$$
  \hat{\G} = L^2(\R \times \S^2, dx d\theta d\varphi; \C^2),
$$
equipped with the scalar product $(.,\hat{J}.)_\H$. Then from Thm \ref{Spectra-H-H0} we immediately have

\begin{lemma}
  1) The operators $\hat{J}, \hat{H_0}, H_\infty$ are selfadjoint on $\H$. \\
  2) The operator $\hat{H}$ is selfadjoint on $\hat{\G}$. \\
  3) The transform $U$ is unitary on $\H$ and isometric from $\hat{\G}$ to $\G$. \\
  4) The dynamics $e^{-it\hat{H}} = U^{-1} e^{-itH} U$ is unitary on $\hat{\G}$.
\end{lemma}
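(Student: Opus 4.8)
The plan is to deduce all four assertions from Theorem~\ref{Spectra-H-H0} together with the single elementary fact that $U$ is a unitary multiplication operator, invoking only the general principle that conjugation by a unitary operator preserves selfadjointness and intertwines the Borel functional calculi. I do not expect any genuine analytic difficulty: the lemma is essentially a bookkeeping statement, and the only point requiring attention is keeping the three Hilbert-space structures $\H$, $\G$ and $\hat{\G}$ straight, since they differ only through the bounded, positive, boundedly invertible weights $I_2$, $J$ and $\hat{J}$.

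First I would check that $U$ is unitary on $\H$. Decomposing $\H = \bigoplus_{k \in 1/2 + \Z} \H_k$ as in (\ref{H-k}), the operator $D_\varphi$ acts on $\H_k$ as the scalar $k$, so $U$ acts on $\H_k$ as multiplication by the matrix $e^{-iC(x,k)\Ga}$; since $C(x,k)$ is real-valued and $\Ga$ is Hermitian, this matrix is unitary at every point $x$, whence $U_{|\H_k}$, and therefore $U$ itself, is unitary on $\H$. Consequently $\hat{J} = U^{-1} J U = U^{*} J U$ inherits from $J$ — which is bounded, selfadjoint, positive and boundedly invertible (see the discussion following (\ref{Pot-J}) and Theorem~\ref{Spectra-H-H0}) — the same properties, so $(\cdot,\hat{J}\cdot)_{\H}$ is an inner product on $\H$ whose norm is equivalent to that of $\H$, and $\hat{\G}$ is a Hilbert space. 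For the isometry asserted in~3, I would simply compute, for $u,v \in \hat{\G}$,
\begin{equation*}
  (Uu,Uv)_{\G} = (Uu, J\,Uv)_{\H} = (u, U^{-1}JUv)_{\H} = (u,\hat{J}v)_{\H} = (u,v)_{\hat{\G}},
\end{equation*}
so that $U$ maps $\hat{\G}$ isometrically onto $\G$ (onto, because $U$ is bijective on the underlying vector space); hence $U$ is unitary from $\hat{\G}$ to $\G$.

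Assertions~1 and~2 would then follow by conjugation. Since $H_0$ is selfadjoint on $\H$ by Theorem~\ref{Spectra-H-H0} and $U$ is unitary on $\H$, the operator $\hat{H_0} = U^{-1} H_0 U$ is selfadjoint on $\H$, with domain $U^{-1}D(H_0)$; the explicit form $\hat{H_0} = \Ga D_x + a(x) U^{-1} H_{\S^2} U$ recorded before the statement results from the same conjugation together with the anticommutation relations (\ref{AntiCom}), the long-range term $c(x,D_\varphi)$ — which commutes with $U$ — being exactly cancelled by the contribution $-c(x,D_\varphi)$ coming from $U^{-1}[\Ga D_x, U]$. The operators $\hat{J}$ (bounded selfadjoint, by the previous step) and $H_\infty = \Ga D_x$ (selfadjoint by a standard Fourier transform, cf.\ the remark after (\ref{Hinfty})) require nothing more; this proves assertion~1. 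For assertion~2, $H$ is selfadjoint on $\G$ by Theorem~\ref{Spectra-H-H0} and $U : \hat{\G} \to \G$ is unitary, so $\hat{H} = U^{-1} H U$ is selfadjoint on $\hat{\G}$, with domain $U^{-1}D(H) = U^{-1}D(H_0) = D(\hat{H_0})$.

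Finally, assertion~4 follows from assertion~2: by Stone's theorem, $\{e^{-it\hat{H}}\}_{t\in\R}$ is a strongly continuous one-parameter group of unitary operators on $\hat{\G}$, and since $\hat{H} = U^{-1} H U$ with $U : \hat{\G} \to \G$ unitary, the intertwining property of the functional calculus yields $e^{-it\hat{H}} = U^{-1} e^{-itH} U$; the latter is again unitary on $\hat{\G}$, being the composition of the unitaries $U : \hat{\G} \to \G$, $e^{-itH} : \G \to \G$ and $U^{-1} : \G \to \hat{\G}$. The only place where one has to be a little careful throughout is exactly this matter of which weighted $L^2$-space each operator is being regarded as acting on; no analytic estimate is involved, so there is no real obstacle.
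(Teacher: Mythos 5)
Your argument is correct and is exactly the routine verification the paper leaves implicit (the lemma is stated there with no proof beyond "from Theorem \ref{Spectra-H-H0} we immediately have"): $U$ is a unitary multiplication operator on each angular mode since $C(x,k)$ is real and $\Ga$ is Hermitian, and everything else follows by conjugation, the equivalence of the weighted norms, and Stone's theorem. The cancellation of $c(x,D_\varphi)$ against $U^{-1}[\Ga D_x,U]$ that you note is also the correct computation behind the displayed formula for $\hat{H_0}$.
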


Let us also introduce the operators
\begin{equation} \label{Gpm}
  G^\pm = s-\lim_{t \to \pm \infty} e^{itH_\infty} U^{-1} e^{-itH^\pm},
\end{equation}
and
\begin{equation} \label{BetaDphi}
  \beta(D_\varphi) = \int_{-\infty}^0 \left[ c(s,D_\varphi) - \Omega_-(D_\varphi) \right] ds + \int_0^{+\infty} \left[ c(s,D_\varphi) - \Omega_+(D_\varphi) \right] ds.
\end{equation}
These operators are simply the operators (\ref{Gkpm}) and (\ref{Betak}) when restricted onto the angular modes $\{e^{ik\varphi}\}, \ k \in 1/2 + \Z$. Using Lemma \ref{G-k}, we thus get

\begin{lemma} \label{G}
  The operators $G^\pm$ are unitary on $\H$ and we have
  \begin{equation} \label{G-1}
    G^\pm = e^{i\Ga \left[ \left( \Omega_+(D_\varphi) + \beta(D_\varphi) + K \right) P_\pm \ + \ \left( \Omega_-(D_\varphi) + K \right) P_\mp \right]},
  \end{equation}
  where $K$ is the constant of integration in (\ref{Pot-C}) and $P_\pm = \mathbf{1}_{\R^\pm}(\Ga)$. Moreover, we have
  \begin{equation} \label{G-2}
    \F G^+ = e^{i\Ga K} \left( \begin{array}{cc} e^{i\beta(D_\varphi)}&0\\0&1 \end{array} \right) F^+,
  \end{equation}
  and
  \begin{equation} \label{G-3}
    \F G^- = e^{i\Ga K} \left( \begin{array}{cc} 1&0\\0&e^{-i\beta(D_\varphi)} \end{array} \right) F^-.
  \end{equation}
\end{lemma}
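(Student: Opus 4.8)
The plan is to reduce everything to Lemma \ref{G-k} by exploiting the cylindrical symmetry, i.e.\ the orthogonal decomposition $\H = \oplus_{k \in 1/2 + \Z} \H_k$ onto the angular modes $\{e^{ik\varphi}\}$. First I would observe that each operator entering the definition (\ref{Gpm}) of $G^\pm$ --- namely $H_\infty = \Ga D_x$, the unitary $U$ of (\ref{Unitary-U}), and the asymptotic Hamiltonians $H^\pm$ of (\ref{Hpm}) --- commutes with $D_\varphi$, hence leaves each $\H_k$ invariant. On $\H_k \simeq \h$ these restrict respectively to $\Ga D_x$, to $U_k = e^{-iC(x,k)\Ga}$ of (\ref{Unitary-Uk}), and to $H^\pm_k$ of (\ref{Hkpm}) (with $\Omega_\pm(k)$ in place of $\Omega_\pm(D_\varphi)$). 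Consequently the unitary family $e^{itH_\infty} U^{-1} e^{-itH^\pm}$ restricts on $\H_k$ to exactly the family $e^{itH_\infty} U_k^{-1} e^{-itH^\pm_k}$, whose strong limit is, by Lemma \ref{G-k}, the unitary operator $G_k^\pm$ given explicitly by (\ref{G-k-1}).

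Second, I would upgrade these modewise limits to a strong limit on all of $\H$. Since each operator $e^{itH_\infty} U^{-1} e^{-itH^\pm}$ is a product of unitaries, it has operator norm $1$ uniformly in $t$; together with the just-established strong convergence on the dense subspace of finite linear combinations of angular modes, this yields strong convergence on all of $\H$ by the usual $\varepsilon/3$ argument. The limit $G^\pm$ then acts on $\H_k$ as $G_k^\pm$, so $G^\pm = \oplus_{k} G_k^\pm$; being a direct sum of unitary operators with uniformly bounded inverses, $G^\pm$ is unitary. Feeding the explicit formula (\ref{G-k-1}) into this direct sum and reading $k$ as the operator $D_\varphi$ produces (\ref{G-1}). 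Alternatively, unitarity can be read off directly from (\ref{G-1}): $\Ga$ commutes with $P_\pm$, with $\Omega_\pm(D_\varphi)$ and with $\beta(D_\varphi)$, so the exponent equals $i$ times the self-adjoint operator $\Ga\big[(\Omega_+(D_\varphi)+\beta(D_\varphi)+K)P_\pm + (\Omega_-(D_\varphi)+K)P_\mp\big]$.

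Finally, for (\ref{G-2})--(\ref{G-3}) I would note that $\F$ of (\ref{F}) and the operators $F^\pm$ of (\ref{F+})--(\ref{F-}) also respect the angular decomposition, restricting on $\H_k$ to the transform (\ref{F}) with $D_\varphi = k$, and to $F_k^\pm$ of (\ref{F+k})--(\ref{F-k}) respectively. Hence the identities (\ref{G-k-2})--(\ref{G-k-3}) of Lemma \ref{G-k}, valid on each $\H_k$, assemble into (\ref{G-2})--(\ref{G-3}) on $\H$. Equivalently, one may derive (\ref{G-2})--(\ref{G-3}) directly from (\ref{G-1}) and the definitions of $\F$ and $F^\pm$ by the same elementary computation used at the end of the proof of Lemma \ref{G-k}.

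The only point requiring care --- and the nearest thing to an obstacle --- is the passage from modewise to global convergence: one must check that the operators $e^{itH_\infty} U^{-1} e^{-itH^\pm}$ are uniformly bounded (automatic here, as they are unitary) before invoking density of the finite angular sums, and that $\Omega_\pm(D_\varphi)$ and $\beta(D_\varphi)$ are well defined as (possibly unbounded) self-adjoint operators, which is clear since they act as the real scalars $\Omega_\pm(k)$, $\beta(k)$ on the orthogonal summands $\H_k$. Apart from this bookkeeping, the statement is a routine consequence of Lemma \ref{G-k}.
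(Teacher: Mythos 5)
Your proof is correct and follows exactly the route the paper intends: the paper gives no separate argument for Lemma \ref{G} beyond the remark that $G^\pm$ and $\beta(D_\varphi)$ restrict on each angular mode $\{e^{ik\varphi}\}$ to $G_k^\pm$ and $\beta(k)$, and then invokes Lemma \ref{G-k}. Your write-up simply makes explicit the (routine) passage from modewise to global strong convergence via uniform boundedness and density, which is exactly the bookkeeping the paper leaves implicit.
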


We are now able to prove

\begin{theorem} \label{WOhat}
  The wave operators
  $$
    W^\pm(\hat{H},H_\infty,I_2) = s-\lim_{t \to \pm \infty} e^{it\hat{H}} e^{-itH_\infty},
  $$
  exist as operators from $\H$ to $\hat{\G}$ and are asymptotically complete, \textit{i.e.} they are isometries from $\H$ to $\hat{\G}$ and their inverse wave operators given by
  $$
    (W^\pm(\hat{H},H_\infty,I_2)^* = W^\pm(H_\infty,\hat{H},\hat{J}) = s-\lim_{t \to \pm \infty} e^{itH_\infty} \hat{J} e^{-it\hat{H}},
  $$
  also exist as operators from $\hat{\G}$ to $\H$. Moreover, we have
  $$
    W^\pm(\hat{H},H_\infty,I_2) = U^{-1} W^\pm(H,H^\pm,I_2) G^\pm,
  $$
  $$
    W^\pm(H_\infty,\hat{H},\hat{J}) = G^\pm W^\pm(H^\pm,H,J) U.
  $$
\end{theorem}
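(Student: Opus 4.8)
The plan is to deduce Theorem~\ref{WOhat} from the complete scattering theory for the pair $(H,H^\pm)$ obtained in Theorem~\ref{WO}, by conjugating everything with the unitary transform $U$, exactly as was done in the reduced setting to obtain \eqref{WO-1}. The starting point is the identity $\hat{H}=U^{-1}HU$, equivalently $H=U\hat{H}U^{-1}$, together with the corresponding relations $\hat{H_0}=U^{-1}H_0U$ and $\hat{J}=U^{-1}JU$, and the definition $H_\infty=\Ga D_x$. Since the transform $U$ is unitary on $\H$ and isometric from $\hat{\G}$ to $\G$, and since $e^{\pm itH}=Ue^{\pm it\hat{H}}U^{-1}$, all the analytic content will already be contained in Theorem~\ref{WO} and in the explicit description of the operators $G^\pm$ provided by \eqref{Gpm} and Lemma~\ref{G}; no new estimate is needed.

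For the direct wave operators I would argue as follows. Using $e^{itH}=Ue^{it\hat{H}}U^{-1}$ and inserting the identity $e^{-itH_\infty}e^{itH_\infty}$, one writes, for each $t$,
\begin{equation*}
  e^{itH}e^{-itH^\pm}=U\,\bigl(e^{it\hat{H}}e^{-itH_\infty}\bigr)\,\bigl(e^{itH_\infty}U^{-1}e^{-itH^\pm}\bigr).
\end{equation*}
The left-hand side converges strongly, as $t\to\pm\infty$, to $W^\pm(H,H^\pm,I_2)$ by Theorem~\ref{WO}, while the last factor converges strongly to $G^\pm$ by \eqref{Gpm} and Lemma~\ref{G}; since $G^\pm$ is unitary, the inverse net $\bigl(e^{itH_\infty}U^{-1}e^{-itH^\pm}\bigr)^{-1}$ also converges strongly. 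Multiplying the displayed identity on the left by $U^{-1}$ and on the right by this inverse net therefore shows that $e^{it\hat{H}}e^{-itH_\infty}$ has a strong limit as $t\to\pm\infty$, i.e. that $W^\pm(\hat{H},H_\infty,I_2)$ exists; and passing to the limit in the displayed identity itself gives the factorisation $W^\pm(H,H^\pm,I_2)=U\,W^\pm(\hat{H},H_\infty,I_2)\,G^\pm$, equivalently the first identity in the statement. That $W^\pm(\hat{H},H_\infty,I_2)$ is an isometry from $\H$ to $\hat{\G}$ then follows by composing the unitary $G^\pm$ on $\H$ (Lemma~\ref{G}), the isometry $W^\pm(H,H^\pm,I_2)\colon\H\to\G$ (Theorem~\ref{WO}) and the isometry $U^{-1}\colon\G\to\hat{\G}$.

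For the inverse wave operators one repeats this scheme starting from $W^\pm(H^\pm,H,J)=s\text{-}\lim_{t\to\pm\infty}e^{itH^\pm}Je^{-itH}$: using $e^{-itH}=Ue^{-it\hat{H}}U^{-1}$, $J=U\hat{J}U^{-1}$ and inserting $e^{-itH_\infty}e^{itH_\infty}$, one obtains in the same manner that $e^{itH_\infty}\hat{J}e^{-it\hat{H}}$ has a strong limit, hence that $W^\pm(H_\infty,\hat{H},\hat{J})$ exists, and that it equals $G^\pm W^\pm(H^\pm,H,J)U$. Alternatively, and more economically, this last identity is simply the adjoint of the one obtained in the previous step, once one recalls from the two-Hilbert-space formalism of \cite{Y} that the identification operator $I_2\colon\H\to\hat{\G}$ has adjoint $\hat{J}\colon\hat{\G}\to\H$, so that $(W^\pm(\hat{H},H_\infty,I_2))^*=W^\pm(H_\infty,\hat{H},\hat{J})$; together with the isometry property this gives asymptotic completeness. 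I do not expect any genuine obstacle here: the only points requiring care are the bookkeeping of the three Hilbert spaces $\H$, $\G$, $\hat{\G}$ and of the identification operators $I_2, J, \hat{J}$, and the (standard) fact that a net of unitaries converging strongly to a unitary operator does so together with its adjoints — which is precisely what licenses inverting the factor $e^{itH_\infty}U^{-1}e^{-itH^\pm}$ inside a strong limit above. All the substantive work has already been done in Theorem~\ref{WO} (whose proof rests on the Mourre theory and limiting absorption principles of Appendix~\ref{LAP-Mourre}) and in Lemma~\ref{G}.
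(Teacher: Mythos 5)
Your proposal is correct and follows essentially the same route as the paper, whose proof simply invokes Theorem~\ref{WO}, Lemma~\ref{G} and the unitarity of $U$, and refers to the computation \eqref{WO-1} for the factorisation identities. The only point worth noting is that your limit argument actually yields $W^\pm(H,H^\pm,I_2)=U\,W^\pm(\hat{H},H_\infty,I_2)\,G^\pm$, i.e. $W^\pm(\hat{H},H_\infty,I_2)=U^{-1}\,W^\pm(H,H^\pm,I_2)\,(G^\pm)^{-1}$, which agrees with \eqref{WO-1} but differs from the identity as printed in the theorem statement by the inverse on $G^\pm$ --- a discrepancy already present between the paper's statement and \eqref{WO-1}, not a gap in your argument.
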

\begin{proof}
  The first part of the Thm is a direct consequence of Thm \ref{WO}, Lemma \ref{G} and the unitarity of $U$. The second part of the Thm is the same as the calculations obtained in (\ref{WO-1}) on a fixed generalized spherical harmonics. We omit the details.
\end{proof}

Using Thm \ref{WOhat}, we can thus define the scattering operator unitary on $\H$
\begin{equation} \label{SOhat}
  \hat{S} = (W^+(\hat{H},H_\infty,I_2))^* W^-(\hat{H},H_\infty,I_2).
\end{equation}
Recalling that the unitary transform $\F$ given by (\ref{F}) diagonalizes $H_\infty$, we define the corresponding scattering matrix $\hat{S}(\lambda)$ unitary on $L^2(\S^2, \C^2)$ by
\begin{equation} \label{SMhat}
  \hat{S} = \F^* \hat{S}(\lambda) \F.
\end{equation}

Finally, using the same calculation as in (\ref{SM-1}), we can make the link between the physical time-dependent scattering matrix $S(\lambda)$ and the simplified one $\hat{S}(\lambda)$. Precisely, we get
\begin{equation} \label{LinkSMhat}
  S(\lambda) = \left( \begin{array}{cc} e^{-i\beta(D_\varphi)}&0\\0&1 \end{array} \right) e^{-i\Ga K} \hat{S}(\lambda) e^{i\Ga K} \left( \begin{array}{cc} 1&0\\0&e^{-i\beta(D_\varphi)} \end{array} \right)
\end{equation}

%
%

\subsubsection{Stationary formulation for $\hat{S}(\lambda)$} \label{Link}

In this Section, we show that the simplified time-dependent scattering matrix $\hat{S}(\lambda)$ obtained in (\ref{SMhat}) is nothing but the simplified stationary scattering matrix $\hat{S}(\lambda)$ obtained in Thm \ref{SM-Unphysical} by means of separation of variables. Hence the equivalence of the two definitions and our use of the same notation.

Let us obtain thus a stationary formulation for the one shell energy scattering matrix $\hat{S}(\lambda)$ defined by (\ref{SMhat}). To do this,  we adapt the well-known Kuroda's approach in our context, (\cite{Ku}, \cite{Is-Ki}).

\vspace{0.5cm} \noindent For $l \in \mathbb{R}$, we define
\begin{equation}
\H_l = L^2 (\mathbb{R}\times \S^2; <x>^{2l} dx d\theta d\varphi; \mathbb{C}^2),
\end{equation}
and for $ l>\frac{1}{2}$, we set
\begin{eqnarray}\label{fo}
\Gamma_0(\lambda) : &\mathcal{H}_l& \ \rightarrow \  L^2(\S^2;\C^2) \nonumber \\
                      & \Psi & \ \rightarrow \ \frac{1}{\sqrt{2\pi}} \int_{\R} e^{-i\Gamma^1x\lambda}\ \Psi(x,\theta,\varphi)\ dx.
\end{eqnarray}
Clearly, $\Gamma_0(\lambda)$ is a bounded operator from $\H_l$ to $\H_{\S^2}=L^2(\S^2, \C^2)$, and we have for $ l>\frac{1}{2}$,
\begin{eqnarray}\label{foad}
\Gamma_0(\lambda)^* : &L^2(\S^2;\C^2)& \ \rightarrow \  \mathcal{H}_{-l} \nonumber \\
                      & \Psi & \ \rightarrow  \ e^{i\Gamma^1x\lambda}\ \Psi(\theta,\varphi).
\end{eqnarray}
The operator $\Gamma_0(\lambda)$ is the one-shell energy restriction of the unitary transform $\F$ defined in (\ref{F}) that diagonalizes the operator $H_\infty$. We shall use constantly the spectral decomposition for $H_\infty$, that is the following relation: for all $u,v \in \H_l, \ l> 1/2$, we have
\begin{equation} \label{Gamma0}
  (u,v) = (\F u, \F v) = \int_{\R_\lambda} < \Gamma_0(\lambda) u, \Gamma_0(\lambda) v > d\lambda,
\end{equation}
where $<.,.>$ denotes here the scalar product in $ \H_{\S^2}$.

Since $\hat{S}$ commutes with $H_{\infty}$ and
\begin{equation} \label{Gamma0Hinfty}
  \go H_{\infty}= \lambda \go,
\end{equation}
on a suitable domain, $\hat{S}$ is a decomposable operator (see \cite{RS}). It means in particular that for a.e $\lambda \in \R$, there exists unitary operators ${\hat{S}}(\lambda)$ on $\H_{\S^2}$ such that, for $f=u_1\otimes v_1, \ g=u_2\otimes v_2$, with $\hat{u}_j \in C_0^{\infty}(\R, \C^2)$ and $v_j \in L^2(\S^2,\C^2)$,

\begin{equation} \label{decomposition}
((\hat{S}-Id)f, g ) =  \int_{-\infty}^{+\infty} < (\hat{S}(\lambda)-Id) \go f, \go g> \ d\lambda.
\end{equation}


Recall that the space $\H_{\S^2}$ can be decomposed first onto the angular modes $\{e^{ik\varphi}\}, \ k \in 1/2 + \Z$, then onto the generalized spherical harmonics $Y_{kl}(\lambda_0)$ where $\lambda_0 \in \R$ is any fixed energy. Precisely, we first have the decomposition
$$
  \H_{\S^2} = \oplus_{k \in \frac{1}{2} + \Z} \H_{\S^2}^k, \quad \H_{\S^2}^k = L^2((0,\pi); \C^2) \otimes e^{ik\varphi} \simeq L^2((0,\pi); \C^2).
$$
Second, each $\H_{\S^2}^k$ can be decomposed onto
$$
\H_{\S^2}^k = \oplus_{l \in \N^*} \H_{\S^2}^{kl}(\lambda_0), \quad \H_{\S^2}^{kl}(\lambda_0) = \C^2 \otimes Y_{kl}(\lambda_0) \simeq \C^2.
$$
This above decomposition explains why we shall study the quantity $((\hat{S}-Id)f, g )$ with test functions $f$ and $g$ having the form
$ f= \phi\otimes Y_{kl}(\lambda_0)$ and $ g= \psi\otimes Y_{pq}(\lambda_0)$ with $\hat{\phi}, \hat{\psi}$ with compact support.

We start rewriting the wave operators $W^{\pm} := W^{\pm}(\hat{H}, H_{\infty}; Id)$ defined in Thm \ref{WOhat} with their so-called abelian limits in order to systematically deal with well-defined integral expressions. Precisely, we have
\begin{eqnarray*}
W^{\pm} f   &=& \lim_{t \rightarrow \pm \infty} e^{it\hat{H}} \ Id \ e^{-itH_{\infty}} f  \\
            &=& \lim_{\mu \rightarrow 0^+} \  \mu \ \int_0^{\pm \infty} e^{-\mu \mid t\mid} \ e^{it\hat{H}} \ Id \ e^{-itH_{\infty}} f \ dt,
\end{eqnarray*}
where we recall that the operator $Id$ is used as identification operator between the Hilbert spaces $\H$ and $\hat{\G}$. Integrating by parts, we obtain
\begin{equation}\label{abelien}
(W^{\pm} - Id)f \ =\  \lim_{\mu \rightarrow 0^+}  \ i\  \int_0^{\pm \infty} e^{-\mu \mid t\mid} \ e^{it\hat{H}} \ T \ e^{-itH_{\infty}} f \ dt,
\end{equation}
where we have set $T := \hat{H} \, Id - Id \, H_{\infty}$. Thus, using the classical intertwining property for the wave operators $W^\pm$, we can write
\begin{eqnarray}
((\hat{S}-Id)f, g )  &=& \left( W^{+*}(W^--W^+) f,g \right) \nonumber \\
                    &=& - \lim_{\mu \rightarrow 0^+}  \ \left( W^{+*} (\  i \int_{-\infty}^{+ \infty} e^{-\mu \mid t\mid} \ e^{it\hat{H}} \
                         T \ e^{-itH_{\infty}} f \ dt), g\right) \nonumber \\
                    &=& - \lim_{\mu \rightarrow 0^+}  \ \left(  ( \ i \int_{-\infty}^{+ \infty} e^{-\mu \mid t\mid} \ e^{itH_{\infty}} W^{+*}\
                         T \ e^{-itH_{\infty}} f \ dt), g\right) \nonumber \\
                    &=& - \lim_{\mu \rightarrow 0^+} \ (I_1(\mu)+I_2(\mu)), \label{I1I2}
\end{eqnarray}
where
\begin{eqnarray*}
I_1(\mu) &=& \left(  ( \ i \int_{-\infty}^{+ \infty} e^{-\mu \mid t\mid} \ e^{itH_{\infty}} \hat{J}\
                         T \ e^{-itH_{\infty}} f \ dt), g\right), \\
I_2(\mu) &=& \left(  ( \ i \int_{-\infty}^{+ \infty} e^{-\mu \mid t\mid} \ e^{itH_{\infty}} (W^{+*}- \hat{J})\
                         T \ e^{-itH_{\infty}} f \ dt), g\right).
\end{eqnarray*}



Let us first examine $I_1(\mu)$. Using the spectral decomposition (\ref{Gamma0})-(\ref{Gamma0Hinfty}) for $H_{\infty}$ and the resolvent formula, we have :
\begin{eqnarray*}
I_1(\mu) &=& \int_{-\infty}^{+\infty} < \go \hat{J} T \left( i\int_{-\infty}^{+\infty} e^{-\mu \mid t\mid} \ e^{-it(H_{\infty}-\lambda)} f\ dt \right) ,\go g> \
             d\lambda \\
         &=& \int_{-\infty}^{+\infty} < \go \hat{J} T \left( R_{\infty}(\lambda+i\mu) - R_{\infty}(\lambda-i\mu)\right)f ,\go g> \ d\lambda,
\end{eqnarray*}
where $R_{\infty}(z) = (H_\infty - z)^{-1}$ denotes the resolvent for $H_{\infty}$.  In order to use carefully the limiting absorption principle for $H_{\infty}$, we remark that the unbounded operator $\hat{J} T$ can be written as
\begin{eqnarray}\label{jyvais}
\hat{J}T &=& \hat{H}_0 - \hat{J} H_{\infty} \nonumber\\
        &=& a(x)\  e^{2iC(x,k) \Gamma^1} \ (H_{\S^2} -b(\theta) \Gamma^3\Gamma^1 D_x).
\end{eqnarray}
Hence, the term
\begin{eqnarray*}
\hat{J} T  (R_{\infty}(\lambda+i\mu) - R_{\infty}(\lambda-i\mu))f &= &
a(x)\  e^{2iC(x,k) \Gamma^1} \ [ (R_{\infty}(\lambda+i\mu) - R_{\infty}(\lambda-i\mu))\phi \otimes (H_{\S^2}Y_{kl}(\lambda_0))   \\
& & - \Gamma^3 (R_{\infty}(\lambda+i\mu) - R_{\infty}(\lambda-i\mu)) (\Gamma^1 D_x\phi) \otimes (b(\theta) Y_{kl}(\lambda_0))],
\end{eqnarray*}
is well-defined for all $\mu > 0 $ since $ f= \phi\otimes Y_{kl}(\lambda_0)$ with $\hat{\phi}$ with compact support. Thus, we can apply the limiting absorption principle for $H_\infty$ and Stone's theorem to deduce
\begin{eqnarray}\label{simpl1}
\lim_{\mu \rightarrow 0^+} I_1(\mu) &=& \int_{-\infty}^{+\infty} < \go \hat{J} T \left( R_{\infty}(\lambda+i0) - R_{\infty}(\lambda-i0)\right)f ,\go g>
                                        \ d\lambda \nonumber \\
                                     &=&  2i\pi \ \int_{-\infty}^{+\infty} < \go \hat{J} T \go^* \go f ,\go g> \ d\lambda.
\end{eqnarray}
Now, using (\ref{jyvais}) again and the fact that the operators $\Gamma_0(\lambda)$ and $\Gamma_0^*(\lambda)$ let invariant the special form of $f$, for instance
$$
  \Gamma_0(\lambda) f = (\Gamma_0(\lambda) \phi) \otimes Y_{kl}(\lambda_0),
$$
we see that
\begin{eqnarray}\label{simpl2}
\hat{J} T \go^* \go f &=& a(x)\  e^{2iC(x,k) \Gamma^1} \ \left( H_{\S^2} -\lambda b(\theta) \Gamma^3 \right) \go^* \go f \nonumber \\
                      &=& a(x)\  e^{2iC(x,k) \Gamma^1} \ \left( (H_{\S^2} -\lambda_0 b(\theta) \Gamma^3) + (\lambda_0-\lambda) b(\theta) \Gamma^3\right) \go^* \go f
                      \nonumber \\
                      &=& a(x)\  e^{2iC(x,k) \Gamma^1} \ \left( A_{\S^2}(\lambda_0) + (\lambda_0-\lambda) b(\theta) \Gamma^3\right) \go^* \go f \nonumber \\
                      &=& a(x)\  e^{2iC(x,k) \Gamma^1} \ \left( \mu_{kl}(\lambda_0)\Gamma^2 + (\lambda_0-\lambda) b(\theta) \Gamma^3)\right) \go^* \go f \nonumber \\
                      &=& a(x)\ \left( \mu_{kl}(\lambda_0)\hat{\Gamma}^2 + (\lambda_0-\lambda) b(\theta) \hat{\Gamma}^3\right) \go^* \go f \nonumber \\
                      &=& \left( \mu_{kl}(\lambda_0) a(x) \hat{\Gamma}^2 + (\lambda_0-\lambda) (\hat{J}-1)\right) \go^* \go f,
\end{eqnarray}
where
$$
  \hat{\Gamma}^2 = e^{2iC(x,k) \Gamma^1} \Gb, \quad \hat{\Gamma}^3 = e^{2iC(x,k) \Gamma^1} \Gc.
$$
Thus, we deduce from (\ref{simpl1}) and (\ref{simpl2}) that
\begin{eqnarray}\label{limI1}
\lim_{\mu \rightarrow 0^+} I_1(\mu)  &=&  2i\pi \ \int_{-\infty}^{+\infty} < \go \big[ \mu_{kl}(\lambda_0) a(x) \hat{\Gamma}^2 \nonumber \\
                                      & & \hspace{2cm} + (\lambda_0-\lambda) (\hat{J}-1) \big]\go^* \go f ,\go g> \ d\lambda.
\end{eqnarray}

\vspace{0.5cm}
Now, let us study $I_2(\mu)$. As in (\ref{abelien}), the existence of $W^{+*}$ ensures that
\begin{equation}
W^{+*}- \hat{J}  =  s-\lim_{\nu \rightarrow 0^+} \ -i  \int_0^{+\infty}  e^{-s\nu} \  e^{isH_{\infty}} T^* e^{-is\hat{H}}   ds.
\end{equation}
Thus,
\begin{eqnarray}
I_2(\mu)& =& - \lim_{\nu \rightarrow 0^+} \Big(  \Big( \ i \int_{-\infty}^{+\infty} e^{-\mu \mid t\mid} \ e^{itH_{\infty}}
              \Big[ i  \int_0^{+\infty}  e^{-s\nu} \  e^{isH_{\infty}} T^* e^{-is\hat{H}}   ds \Big]  \nonumber \\
        & &   \hspace{2cm}  T \ e^{-itH_{\infty}} f \ dt \Big), g \Big) \nonumber \\
        & =& - \lim_{\nu \rightarrow 0^+} \int_{-\infty}^{+\infty} \Big< \go  T^* \Big[i  \int_0^{+\infty}  e^{-s\nu} \ e^{-is(\hat{H}-\lambda)}  ds \Big] T \nonumber \\
        & & \hspace{2cm}  \ \Big(i \int_{-\infty}^{+ \infty} e^{-\mu \mid t\mid} \ e^{-it(H_{\infty}-\lambda)}f \ dt \Big), \go g \Big>\ d\lambda,
\end{eqnarray}
where we have used again the spectral decomposition (\ref{Gamma0})-(\ref{Gamma0Hinfty}) for $H_{\infty}$. Thus, setting $\hat{R}(z)= (\hat{H}-z)^{-1}$, we obtain using the resolvent formula
\begin{eqnarray}
I_2(\mu)&=&  - \lim_{\nu \rightarrow 0^+} \int_{-\infty}^{+\infty} <\go  T^* \hat{R}(\lambda+i\nu) T \nonumber \\
        & & \hspace{2.3cm}  \ [R_{\infty}(\lambda+i\mu) - R_{\infty}(\lambda-i\mu)]f , \go g >\ d\lambda,\nonumber\\
        &=& - \lim_{\nu \rightarrow 0^+} \ I(\mu, \nu),
\end{eqnarray}
where
\begin{eqnarray}
I(\mu, \nu) &=&  \int_{-\infty}^{+\infty} \Big( \hat{R}(\lambda+i\nu) T  \ [R_{\infty}(\lambda+i\mu) - R_{\infty}(\lambda-i\mu) ]f, T \go^* \go g \Big)_{\hat{\G}} \ d\lambda
\nonumber \\
            &=&  \int_{-\infty}^{+\infty} \Big( \hat{R}(\lambda+i\nu) T  \ [R_{\infty}(\lambda+i\mu) - R_{\infty}(\lambda-i\mu) ]f, \nonumber \\
            & & \hspace{1cm} \hat{J} T \go^* \go g \Big)_{\H} \ d\lambda
\end{eqnarray}

\vspace{0.3cm}
Now, let us recall that $J= I_2 +\alpha(r,\theta) \Gamma^3$ with ${\displaystyle{\sup_{r,\theta} \ \alpha(r,\theta) <1}}$. Then there exists $c>0$ such that $\hat{J}^{-1} \geq c$ in the sense of operators on $\hat{\G}$. For $\nu>0$, we set
\begin{equation}
H(\nu) = \hat{H} -i\nu \hat{J}^{-1}.
\end{equation}
It follows from (\cite{Roy}, Lemma 2.1) that $H(\nu)$ is a maximal dissipative operator with domain $D(H(\nu)) = D(\hat{H})$,
and for all $\lambda \in \R$, $H(\nu) -\lambda$ is invertible with bounded inverse. Then, using the resolvent identity, we can split
$I(\mu,\nu)$ into two terms by $I(\mu,\nu) = I_1(\mu,\nu)+ I_2(\mu,\nu)$ where
\begin{eqnarray}
I_1(\mu, \nu) &=&  \int_{-\infty}^{+\infty} \Big( (H(\nu)-\lambda)^{-1} T  \ [R_{\infty}(\lambda+i\mu) - R_{\infty}(\lambda-i\mu) ]f, \nonumber \\
              & &  \hspace{1.4cm} \hat{J} T \go^* \go g \Big)_{\H} \ d\lambda, \\
I_2(\mu,\nu)  &=&  i\nu \ \int_{-\infty}^{+\infty} \Big(\hat{R}(\lambda+i\nu) (\hat{J}^{-1}-1) (H(\nu)-\lambda)^{-1} T  \nonumber \\
              & & \hspace{2cm}     [R_{\infty}(\lambda+i\mu) - R_{\infty}(\lambda-i\mu) ]f, \hat{J} T \go^* \go g \Big)_{\H}\ d\lambda.
\end{eqnarray}

\vspace{0.3cm} \noindent
First, let us study $I_1(\mu, \nu)$. We easily see that
\begin{eqnarray}
I_1(\mu, \nu) &=&  \int_{-\infty}^{+\infty} \Big((\hat{H}_0 -\lambda \hat{J} -i\nu)^{-1} \hat{J} T  \ [R_{\infty}(\lambda+i\mu) - R_{\infty}(\lambda-i\mu) ]f, \nonumber \\
              & &  \hspace{1.4cm} \hat{J} T \go^* \go g \Big)_{\H} \ d\lambda.
\end{eqnarray}
Thus, using the limiting absorption principle for $\hat{H}_0 -\lambda \hat{J}$ given in Proposition \ref{LAP-L0} and the same simplifications as in (\ref{simpl2}), we obtain
\begin{eqnarray}
\lim_{\mu,\nu \rightarrow 0^+} \ I_1(\mu, \nu) &=&  2i\pi \int_{-\infty}^{+\infty} \Big((\hat{H}_0 -\lambda \hat{J} -i0)^{-1} \hat{J} T  \
                                                    \go^* \go f, \nonumber \\
                                               & &  \hspace{2cm} \hat{J} T \go^* \go g \Big)_{\H} \ d\lambda. \nonumber \\
                                               &=&  2i\pi \int_{-\infty}^{+\infty}  \Big((\hat{H}_0 -\lambda \hat{J} -i0)^{-1}
                                                    [\mu_{kl}(\lambda_0) a(x) \hat{\Gamma}^2 + (\lambda_0-\lambda) (\hat{J}-1)] \go^* \go f, \nonumber \\
                                               & & \hspace{2cm} [ \mu_{pq}(\lambda_0) a(x) \hat{\Gamma}^2 + (\lambda_0-\lambda)
                                               (\hat{J}-1)] \go^* \go g \Big)_{\H} \ d\lambda.
\end{eqnarray}
We now observe that the operator $\hat{H}_0-\lambda \hat{J}   = \Gamma^1 D_x + a(x) e^{2iC(x,k) \Gamma^1} A_{S^2}(\lambda) -\lambda$ acts on $\H_{\S^2}^{kl}(\lambda)$ by
$$
  (\hat{H}_0-\lambda \hat{J}) (\phi \otimes Y_{kl}(\lambda)) = (\hat{H}_{kl}(\lambda) \phi) \otimes Y_{kl}(\lambda),
$$
where we have used the notation from (\ref{HkzHat})
\begin{equation} \label{HatV}
  \hat{H}_{kl}(\lambda)= \Gamma^1 D_x +\mu_{kl}(\lambda) a(x) \hat{\Gamma}^2 = \Ga D_x + \mu_{kl}(\lambda) V_k(x).
\end{equation}
Then we have,
\begin{eqnarray}
\lim_{\mu,\nu \rightarrow 0^+} \ I_1(\mu, \nu) &=&  2i\pi \int_{-\infty}^{+\infty} \Big((\hat{H}_{kl}(\lambda) -\lambda -i0)^{-1}
                                               [ \mu_{kl}(\lambda_0) a(x) \hat{\Gamma}^2 + (\lambda_0-\lambda) (\hat{J}-1)]\  \go^* \go f, \nonumber \\
                                               & & \hspace{2cm} [ \mu_{pq}(\lambda_0) a(x) \hat{\Gamma}^2 + (\lambda_0-\lambda)
                                               (\hat{J}-1)] \go^* \go g \Big)_{\H} \ d\lambda.
\end{eqnarray}
Thus, we have obtained
\begin{eqnarray}
\lim_{\mu,\nu \rightarrow 0^+} \ I_1(\mu, \nu) &=&  2i\pi \int_{-\infty}^{+\infty} \Big< \go \ [ \mu_{pq}(\lambda_0) a(x) \hat{\Gamma}^2 + (\lambda_0-\lambda)
                                               (\hat{J}-1)] \ (\hat{H}_{kl}(\lambda) -\lambda -i0)^{-1} \nonumber \\
                                               & & \hspace{2cm}[ \mu_{kl}(\lambda_0) a(x) \hat{\Gamma}^2 + (\lambda_0-\lambda) (\hat{J}-1)] \
                                                   \go^* \go f,    \go g \Big> \ d\lambda. \nonumber
\end{eqnarray}
Similarly, we show easily that
\begin{equation}
\lim_{\mu,\nu \rightarrow 0^+} \ I_2(\mu, \nu) =0.
\end{equation}
In consequence, we get
\begin{eqnarray}
\lim_{\mu\rightarrow 0^+} \ I_2(\mu) &=&  -2i\pi \int_{-\infty}^{+\infty} \Big< \go \ [ \mu_{pq}(\lambda_0) a(x) \hat{\Gamma}^2 + (\lambda_0-\lambda)
                                               (\hat{J}-1)] \ (\hat{H}_{kl}(\lambda) -\lambda -i0)^{-1} \nonumber \\
                                               & & \hspace{1cm}[ \mu_{kl}(\lambda_0) a(x) \hat{\Gamma}^2 + (\lambda_0-\lambda) (\hat{J}-1)] \
                                                   \go^* \go f,    \go g \Big> \ d\lambda. \label{limI2}
\end{eqnarray}

Coming back to (\ref{decomposition})-(\ref{I1I2}) and using (\ref{Gamma0}), (\ref{limI1}), (\ref{limI2}), we have shown that for all $f =\phi \otimes Y_{kl}(\lambda_0)$ and $g= \psi \otimes Y_{pq}(\lambda_0)$, and for a.e $\lambda \in \mathbb{R}$,
\begin{eqnarray*}
<(\hat{S}(\lambda)- Id) \go f, \go g > &=& -2i\pi < \go \ [\mu_{kl}(\lambda_0) a(x) \hat{\Gamma}^2
                              + (\lambda_0-\lambda) (\hat{J}-1) ] \ \go^* \go f ,\go g> \nonumber \\
                             & & + 2i\pi  < \go \ [ \mu_{pq}(\lambda_0) a(x) \hat{\Gamma}^2 + (\lambda_0-\lambda)
                             (\hat{J}-1)] \ (\hat{H}_{kl}(\lambda) -\lambda -i0)^{-1} \nonumber \\
                             & & \hspace{2,5cm}[ \mu_{kl}(\lambda_0) a(x) \hat{\Gamma}^2 + (\lambda_0-\lambda) (\hat{J}-1)] \nonumber \\
                             & & \hspace{2,5cm} \go^* \go f,    \go g >.
\end{eqnarray*}
In particular, choosing $\lambda=\lambda_0$, we obtain
\begin{eqnarray}\label{block}
<(\hat{S}(\lambda)- Id) \go f, \go g > &=& -2i\pi < \go \ \mu_{kl}(\lambda) a(x) \hat{\Gamma}^2  \ \go^* \go f ,\go g> \nonumber \\
                             & & + 2i\pi  < \go \  \mu_{pq}(\lambda) a(x) \hat{\Gamma}^2  \ (\hat{H}_{kl}(\lambda) -\lambda -i0)^{-1} \nonumber \\
                             & & \hspace{2,5cm} \mu_{kl}(\lambda) a(x) \hat{\Gamma}^2 \go^* \go f,    \go g >.
\end{eqnarray}
Thus, since the $\go f, \go g$ are dense in $\H_{\S^2}^{kl}(\lambda)=\C^2 \otimes Y_{kl}(\lambda)$ and the spaces $\H_{\S^2}^{kl}(\lambda)$ form an orthogonal Hilbert decomposition of $\H_{\S^2}$, we can write (\ref{block}) concisely
\begin{equation} \label{StationaryShat}
\hat{S}(\lambda)= \oplus_{k,l} \left( \hat{S}_{kl}(\lambda) \otimes Y_{kl}(\lambda) \right),
\end{equation}
where $\hat{S}_{kl}(\lambda)$ are operators acting on $\C^2$ given by
\begin{equation} \label{Link1}
\hat{S}_{kl}(\lambda) = Id - 2i\pi  \go \ \mu_{kl}(\lambda) a(x) \hat{\Gamma}^2  \big[ Id - (\hat{H}_{kl}(\lambda) -\lambda)^{-1} \mu_{kl}(\lambda) a(x) \hat{\Gamma}^2 \big] \go^*,
\end{equation}
or using the notation (\ref{HatV})
\begin{equation} \label{Link2}
\hat{S}_{kl}(\lambda) = Id - 2i\pi  \go \ \mu_{kl}(\lambda) V_k(x)  \big[ Id - (\hat{H}_{kl}(\lambda) -\lambda)^{-1} \mu_{kl}(\lambda) V_k(x) \big] \go^*.
\end{equation}

We emphasize that (\ref{StationaryShat}) is the stationary expression of the simplified time-dependent scattering matrix $\hat{S}(\lambda)$ defined in (\ref{SMhat}). On each generalized spherical harmonics $Y_{kl}(\lambda)$, we have shown that its expression is simplified into $\hat{S}_{kl}(\lambda)$ given in (\ref{Link2}) which is nothing but the usual stationary expression for the scattering matrix associated to the pair of $1$-dimensional Hamiltonians $(\hat{H}_{kl}(\lambda), H_{\infty})$. Representation formulae for $\hat{S}_{kl}(\lambda)$ in terms of stationary solutions of the equation (the Jost functions) are well known and are studied for instance in \cite{DM}. In particular, the expression (\ref{Link2}) for $\hat{S}_{kl}(\lambda)$ coincide with the stationary definition in Thm \ref{SM-Unphysical} given by the separation of variables procedure. Whence the link between the two a priori distinct definitions for the scattering matrix $S(\lambda)$.


\Section{Uniqueness results in the inverse scattering problem at fixed energy} \label{MainResult}

In this Section, we state our main Theorem, that is a uniqueness result that, roughly speaking, asserts that the scattering matrix $S(\lambda)$ at a fixed energy $\lambda \in \R$ associated to massless Dirac fields evolving in a KN-dS black hole determines uniquely the parameters of the black hole and thus its metric. In fact, we shall obtain more and better results in the course of the proof of our main Thm.

First, it is enough to know \emph{both} reduced transmission operators $T_k^{R/L}(\lambda)$ or \emph{one} of the reduced reflection operators $R_k(\lambda)$ or $L_k(\lambda)$ and this, only for \emph{two} different angular modes $\{e^{ik\varphi}\}, \ k \in \frac{1}{2} + \Z$ in order to determine uniquely the black hole. We refer to Thm \ref{SM-Physical} for the notations. Remark that we cannot assume the knowledge of the fully reduced scattering coefficients $T_{kl}(\lambda)$, $R_{kl}(\lambda)$ or $L_{kl}(\lambda)$ onto the generalized spherical harmonics $Y_{kl}(\lambda)$ since the latters depend on the black hole, precisely on the two parameters $a$ and $\Lambda$, that we are trying to determine uniquely.

Second, we are able to recover more than only the four parameters $M,Q^2, a, \Lambda$ that characterize the black hole. The Complex Angular Momentum (CAM) method of Section \ref{Complexification} allows us indeed to determine \emph{functions} depending on the radial variable (up to diffeomorphisms), that is infinite dimensional objects. For instance, we are able to determine the function $\frac{\lambda - c(x,k)}{a(x)}$ (up to diffeomorphisms). Note that, in the particular case of Kerr-de-Sitter black hole ($Q=0$), we are able to determine the functions $a(x)$ and $c(x,k)$ (up to translations) separatly. That is we recover the potentials appearing in the separated radial equation (\ref{Radial-Eq}).

Let us state now our main Thm.

\begin{theorem} \label{Main}
Let $(M,Q^2,a,\Lambda)$ and $(\tilde{M},\tilde{Q}^2,\tilde{a},\tilde{\Lambda})$ be the parameters of two a priori different KN-dS black holes. Let $\lambda \in \R$ and denote by $S(\lambda)$ and $\tilde{S}(\lambda)$ the corresponding scattering matrices at fixed energy $\lambda$ (given by Thm \ref{SM-Physical}). More generally, we shall add a symbol $\ \tilde{}$ to all the relevant scattering quantities corresponding to the second black hole. Assume that both reduced transmission operators $T_k^{R}(\lambda)$, $T_k^{L}(\lambda)$ or one reduced reflection operators $R_k(\lambda)$ or $L_k(\lambda)$ are known in the sense that there exist constants $c_T(\lambda,k), c_R(\lambda,k), c_L(\lambda,k) \in \R$ such that one of the following equalities is fulfilled
\begin{eqnarray}
  T_{k}^{R/L}(\lambda) & = & e^{i c_T(\lambda,k)} \, \tilde{T}_{k}^{R/L}(\lambda), \nonumber \\
  R_{k}(\lambda) & = & e^{ic_R(\lambda,k)} \, \tilde{R}_{k}(\lambda), \label{MainAssumption} \\
  L_{k}(\lambda) & = & e^{ic_L(\lambda,k)} \, \tilde{L}_{k}(\lambda), \nonumber
\end{eqnarray}
as operators on $\ls = L^2((0,\theta); \C)$ and for \emph{two} different values of $k \in \frac{1}{2} + \Z$. Then the parameters of the two black holes coincide, \textit{i.e.}
$$
  M = \tilde{M}, \ a = \tilde{a}, \ Q^2 = \tilde{Q}^2, \ \Lambda = \tilde{\Lambda}.
$$
\end{theorem}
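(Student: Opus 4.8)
The plan is to convert this multi-dimensional inverse problem into a countable family of one-dimensional inverse problems through the separation of variables of Theorem~\ref{Separation}, and then to combine two complementary tools: the explicit Frobenius resolution of the angular equation (which depends only on $a$ and $\Lambda$) and the Complex Angular Momentum method applied to the radial equation~\eqref{Rad-1}. Throughout, the catch is that the generalized spherical harmonics $Y_{kl}(\lambda)$ depend on the unknown parameters, so the hypothesis~\eqref{MainAssumption} on the operators $T_k^{L/R}(\lambda)$, $R_k(\lambda)$, $L_k(\lambda)$ cannot be read off directly as equalities of the reduced coefficients $T_{kl}(\lambda)$, $R_{kl}(\lambda)$, $L_{kl}(\lambda)$.

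The first, and I expect the hardest, step is to pass from the operator-level assumption~\eqref{MainAssumption} to scalar equalities valid for large $l$. By Corollary~\ref{SpectralProp-TrRe}, the known operators $T_k^{L}(\lambda)$, $R_k(\lambda)R_k(\lambda)^*$ and $L_k(\lambda)^*L_k(\lambda)$ are diagonalized on the Hilbert bases $(Y_{kl}^{j}(\lambda))_{l\ge1}$ with eigenvalues $T_{kl}(\lambda)$, $|R_{kl}(\lambda)|^2$, $|L_{kl}(\lambda)|^2$. Using the power series representations in $z$ of $T(\lambda,k,z)$, $R(\lambda,k,z)$, $L(\lambda,k,z)$ available after Theorem~\ref{SM-Physical}, I would derive the asymptotics of these coefficients as $z=\mu_{kl}(\lambda)\to+\infty$ and deduce that they are strictly monotone in $z$ for $z$ large; combined with the growth estimate $c\,l\le\mu_{kl}(\lambda)\le C\,l$ of Proposition~\ref{Growth-mukl}, this makes $T_{kl}(\lambda)$ (resp. $|R_{kl}(\lambda)|^2$, $|L_{kl}(\lambda)|^2$) a \emph{simple} eigenvalue of the corresponding known operator for $l$ large. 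Matching the spectral data of the two black holes then forces both $T(\lambda,k,\mu_{kl}(\lambda))=\tilde T(\lambda,k,\tilde\mu_{kl}(\lambda))$ (and likewise for $R$, $L$) and $Y_{kl}(\lambda)=\tilde Y_{kl}(\lambda)$ for all sufficiently large $l$.

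Next I would solve the inverse problem for the angular operator $A_{\S^2}(\lambda)=H_{\S^2}-\lambda b(\theta)\Gc$. Its eigenvalue system is Fuchsian with regular singularities at $\theta=0,\pi$, so the eigenfunctions $Y_{kl}(\lambda)$ can be built explicitly as singular power series by the Frobenius method; since $A_{\S^2}(\lambda)$ involves only $a$ and $\Lambda$, reading off the $\theta\to0$ leading asymptotics of $Y_{kl}(\lambda)$ (known functions of $a$, $\Lambda$) from the equality $Y_{kl}(\lambda)=\tilde Y_{kl}(\lambda)$ yields $a=\tilde a$ and $\Lambda=\tilde\Lambda$. Consequently $\mu_{kl}(\lambda)=\tilde\mu_{kl}(\lambda)$ and $\Omega_\pm(k)=\tilde\Omega_\pm(k)$, and the coefficient equalities from the previous step become $T(\lambda,k,\mu_{kl}(\lambda))=\tilde T(\lambda,k,\mu_{kl}(\lambda))$, $R(\lambda,k,\mu_{kl}(\lambda))=\tilde R(\lambda,k,\mu_{kl}(\lambda))$, $L(\lambda,k,\mu_{kl}(\lambda))=\tilde L(\lambda,k,\mu_{kl}(\lambda))$, all indexed by the \emph{same} sequence.

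Finally I would complexify the angular momentum. After removing the long-range term $c(x,k)$ by the unitary $U_k$ of~\eqref{Unitary-Uk}, I would allow $z\in\C$ in the radial equation and show that $1/\hat T(\lambda,k,z)$, $\hat R(\lambda,k,z)$ and $\hat L(\lambda,k,z)$ are entire of exponential type in $z$ and lie in a Nevanlinna class; since $(\mu_{kl}(\lambda))_l$ satisfies the M\"untz condition $\sum_l 1/\mu_{kl}(\lambda)=+\infty$ of Proposition~\ref{Growth-mukl}, such functions are uniquely determined on all of $\C$ by their values at $z=\mu_{kl}(\lambda)$, whence $T(\lambda,k,z)=\tilde T(\lambda,k,z)$ and so on for every $z\in\C$. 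Performing a Liouville transformation on~\eqref{Rad-1} gives precise asymptotics of the scattering coefficients as $z\to+\infty$, and a Freiling--Yurko type argument in the spirit of~\cite{FY} and~\cite{DN3} then recovers, from the data for each of the two given values of $k$, the function $\frac{\lambda-c(x,k)}{a(x)}$ up to the residual diffeomorphism freedom (a discrete set of translations, consistent with Proposition~\ref{SM-Red-RW}). Since $c(x,k)$ depends affinely on $k$, the two recovered functions disentangle $a(x)$ and $c(x,k)$ (in the Kerr--de Sitter case $Q=0$, separately up to translation). It remains to invoke the explicit form of $a(x)=\sqrt{\Delta_r}/(r^2+a^2)$ and of $c(x,k)$ in terms of $\Delta_r$: knowing $a$ and $\Lambda$ already, the remaining radial data pins down $M$ and $Q^2$, giving $M=\tilde M$, $a=\tilde a$, $Q^2=\tilde Q^2$, $\Lambda=\tilde\Lambda$.
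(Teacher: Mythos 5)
Your proposal follows essentially the same route as the paper: spectral simplicity of the eigenvalues $T_{kl}(\lambda)$, $|R_{kl}(\lambda)|^2$, $|L_{kl}(\lambda)|^2$ for large $l$ via monotonicity in $z$ and the growth of $\mu_{kl}(\lambda)$, then the Frobenius analysis of the angular equation to pin down $a$ and $\Lambda$, then the Nevanlinna/M\"untz argument to extend the coefficient equalities to all complex $z$, and finally the Liouville transformation with large-$z$ asymptotics and a Freiling--Yurko argument to recover $\frac{\lambda-c(x,k)}{a(x)}$ and hence $M$ and $Q^2$. This matches the paper's proof (Proposition \ref{Consequence1}, Theorems \ref{Main-Frobenius} and \ref{Uniqueness-Fixed}) step for step.
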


\begin{remark}
We emphasize that we add the constants $c_T(\lambda,k)$, $c_R(\lambda,k)$ and $c_L(\lambda,k)$ in the assumption (\ref{MainAssumption}) to include the possibility that a KN-dS black hole be described by two different Regge-Wheeler variables $x$ and $\tilde{x} = x +c$ for a constant $c$. Precisely, these constants have the following expression
$$
  c_T(\lambda,k) = -(\Omega_+(k) - \Omega_-(k))c, \quad c_R(\lambda,k) = 2\lambda_+(k) c, \quad c_L(\lambda,k) = -2\lambda_-(k) c.
$$
We refer to Section \ref{DependenceRW} and precisely to Proposition \ref{SM-Red-RW} for this subtlety and the notations. For simplicity, the reader is invited to take $c = 0$ and thus $c_T = c_R = c_L = 0$ in a first reading.
\end{remark}

Let us now explain the strategy for the proof of Thm \ref{Main}. We recall from Corollary \ref{SpectralProp-TrRe} that the quantities $T_{kl}(\lambda)$ for $l \in \N^*$ are the eigenvalues of the operators $T_{k}^{R}(\lambda)$ and $T_{k}^{L}(\lambda)$, whereas $|R_{kl}(\lambda)|^2$ and $|L_{kl}(\lambda)|^2$ are the eigenvalues of the operators $R_{k}^*(\lambda) R_{k}(\lambda)$ or $R_{k}(\lambda) R_{k}^*(\lambda)$ for the formers and $L_{k}^*(\lambda) L_{k}(\lambda)$ and $L_{k}(\lambda) L_{k}^*(\lambda)$ for the latters. Since these operators are uniquely determined by our main assumption (\ref{MainAssumption}), their eigenvalues (and associated eigenspaces) would be also uniquely determined if they were simple! But it turns out that we can prove this only for large enough $l \in \N^*$. This will be showed using a precise study of the asymptotic behaviour of the scattering coefficients $T(\lambda, k,z)$, $R(\lambda,k,z)$ and $L(\lambda,k,z)$ when $z \to +\infty$ given in Section \ref{AL-Strictly-Increasing}. At last, since $\{Y_{kl}^1\}_{l \in \N^*}$ and/or $\{Y_{kl}^2\}_{l \in \N^*}$ are the eigenfunctions associated to the "simple" eigenvalues $T_{kl}(\lambda)$, $|R_{kl}(\lambda)|^2$ and $|L_{kl}(\lambda)|^2$ for $l$ large enough, we shall obtain the following result

\begin{prop} \label{Consequence1}
  Under the assumption (\ref{MainAssumption}), there exists $L > 0$ such that for all $l \in \N^*, \ l \geq L$, one of the following conditions holds
  \begin{eqnarray}
  T_{kl}(\lambda) & = & e^{i c_T(\lambda,k)} \, \tilde{T}_{kl}(\lambda), \label{T=T} \\
  |R_{kl}(\lambda)| & = & |\tilde{R}_{kl}(\lambda)|, \\
  |L_{kl}(\lambda)| & = & |\tilde{L}_{kl}(\lambda)|.
  \end{eqnarray}
  Moreover for all $l \geq L$, there exists $\alpha_{kl}^{j}$ with $|\alpha_{kl}^{j}| = 1$ for $j=1,2$ such that
  \begin{equation} \label{UniqueY}
    Y_{kl}^1 = \alpha_{kl}^1 \tilde{Y}_{kl}^1, \quad Y_{kl}^2 = \alpha_{kl}^2 \tilde{Y}_{kl}^2.
  \end{equation}
\end{prop}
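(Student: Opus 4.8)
The plan is to exploit Corollary \ref{SpectralProp-TrRe} together with the asymptotic analysis of the reduced scattering coefficients (announced as forthcoming in Section \ref{AL-Strictly-Increasing}) in order to turn the operator identities (\ref{MainAssumption}) into scalar identities indexed by $l$. First I would treat the transmission case: by Corollary \ref{SpectralProp-TrRe}, the operator $T_k^L(\lambda)$ (resp. $T_k^R(\lambda)$) is diagonalized by the Hilbert basis $(Y_{kl}^1)_{l\in\N^*}$ (resp. $(Y_{kl}^2)_{l\in\N^*}$) with eigenvalues $T_{kl}(\lambda)$. The identity $T_k^{R/L}(\lambda) = e^{ic_T(\lambda,k)}\tilde T_k^{R/L}(\lambda)$ therefore says that $e^{ic_T(\lambda,k)}\tilde T_k^{R/L}(\lambda)$ is \emph{also} diagonalized by the same basis with the same eigenvalues. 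The eigenvalues of $\tilde T_k^{R/L}(\lambda)$ are the $\tilde T_{kl}(\lambda)$ associated to the eigenfunctions $\tilde Y_{kl}^j$. Hence, \emph{provided the eigenvalues of both operators are simple for $l$ large}, comparing the two spectral decompositions gives, for each large $l$, some index $\sigma(l)$ with $T_{kl}(\lambda) = e^{ic_T(\lambda,k)}\tilde T_{k\sigma(l)}(\lambda)$ and $Y_{kl}^j$ proportional to $\tilde Y_{k\sigma(l)}^j$; one then argues that $\sigma(l)=l$ for $l$ large. The reflection cases are handled identically after replacing the operators by $R_k(\lambda)^*R_k(\lambda)$, $R_k(\lambda)R_k(\lambda)^*$ (which yield $|R_{kl}(\lambda)|^2$ as eigenvalues on $(Y_{kl}^2)$, resp. $(Y_{kl}^1)$) and $L_k(\lambda)^*L_k(\lambda)$, $L_k(\lambda)L_k(\lambda)^*$.

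The crucial input — and the main obstacle — is the \textbf{simplicity of these eigenvalues for $l$ large} and the \textbf{identification $\sigma(l)=l$}. This requires two ingredients. The first is that for fixed $\lambda, k$ the coefficients $T(\lambda,k,z)$, $|R(\lambda,k,z)|^2$, $|L(\lambda,k,z)|^2$ are \emph{strictly monotone} in the real parameter $z$ for $z$ large, which is exactly what the asymptotic expansions of Section \ref{AL-Strictly-Increasing} are designed to deliver (strict increase of these coefficients with respect to the generalized angular momentum). Since $T_{kl}(\lambda) = T(\lambda,k,\mu_{kl}(\lambda))$ with $\mu_{k1}(\lambda)<\mu_{k2}(\lambda)<\cdots$ (Theorem \ref{Spectrum-A}, Proposition \ref{Growth-mukl}), strict monotonicity forces the $T_{kl}(\lambda)$ to be pairwise distinct for $l$ large; combined with Corollary \ref{SpectralProp-TrRe} this makes them \emph{simple} eigenvalues with one-dimensional eigenspaces spanned by $Y_{kl}^1$ (or $Y_{kl}^2$). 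The second ingredient is the matching of the orderings: because both $\mu_{kl}(\lambda)$ and $\tilde\mu_{kl}(\lambda)$ grow linearly in $l$ by Proposition \ref{Growth-mukl}, and the scattering coefficients decay to a definite limit as $z\to+\infty$, the two increasing sequences of eigenvalues $(T_{kl}(\lambda))_{l\ge L}$ and $(\tilde T_{kl}(\lambda))_{l\ge L}$ must be aligned term by term once $L$ is chosen large enough (a cofinal subset of each spectrum is exhausted in the same order), so the reindexing $\sigma$ is the identity for $l\ge L$. This is the delicate bookkeeping step.

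Once simplicity and the correct indexing are in hand, the conclusion is immediate. For $l\ge L$, equating the $l$-th eigenvalue on both sides of (\ref{MainAssumption}) gives the scalar identities $T_{kl}(\lambda) = e^{ic_T(\lambda,k)}\tilde T_{kl}(\lambda)$, $|R_{kl}(\lambda)| = |\tilde R_{kl}(\lambda)|$, $|L_{kl}(\lambda)| = |\tilde L_{kl}(\lambda)|$ (the phases disappear upon taking moduli in the reflection cases, since there we compare the self-adjoint positive operators $R_k^*R_k$, etc.). Equating the corresponding one-dimensional eigenspaces gives $Y_{kl}^1 = \alpha_{kl}^1\tilde Y_{kl}^1$ and $Y_{kl}^2 = \alpha_{kl}^2\tilde Y_{kl}^2$ with unimodular constants $\alpha_{kl}^j$, the unimodularity following from the fact that both $Y_{kl}(\lambda)$ and $\tilde Y_{kl}(\lambda)$ are normalized in $\H_{\S^2}$ and hence their components have the same $L^2$ norms by part 1) of Proposition \ref{TkRkLk}. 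I would organize the write-up by first isolating the monotonicity/simplicity statement (citing Section \ref{AL-Strictly-Increasing}), then doing the transmission case in full detail, and finally remarking that the two reflection cases are verbatim the same after the substitutions listed above.
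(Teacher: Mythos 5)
Your proposal is correct and follows essentially the same route as the paper: reinterpret the operator identities spectrally via Corollary \ref{SpectralProp-TrRe}, use the large-$z$ strict monotonicity of $|a_{L1}(\lambda,k,z)|$ (hence of $|T|$, $|R|$, $|L|$) together with the growth of $\mu_{kl}(\lambda)$ to get simplicity of the eigenvalues for large $l$, and then show the resulting bijective reindexing must be the identity for $l$ large. The paper phrases the last step as: the bijection is strictly increasing on $\{l\ge L\}$ because both eigenvalue sequences are strictly monotone there, and a bijection of $\N^*$ that is strictly increasing on a cofinite set is eventually the identity — which is a cleaner way to pin down your "delicate bookkeeping step."
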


To go further, let us now analyse for instance some consequences of (\ref{T=T}) with $c_T(\lambda,k)$ supposed to be $0$ for simplicity. Recall that $T_{kl}(\lambda) = T(\lambda, k , \mu_{kl}(\lambda))$ where $\mu_{kl}(\lambda)$ are the eigenvalues of the angular operator $A_k(\lambda)$ (given in (\ref{Ak})) labeled according to their increasing order, \textit{i.e.} $\muk < \mu_{k,l+1}(\lambda), \ \forall l \in \N^*$. Then (\ref{T=T}) reads
\begin{equation} \label{Proof-Test1}
  \forall l \geq L, \quad T(\lambda, k , \mu_{kl}(\lambda)) =  \tilde{T}(\lambda, k , \tilde{\mu}_{kl}(\lambda)).
\end{equation}
The main idea of this work is to use next the CAM method, that is to complexify the angular momentum $\muk$ and study the analytic properties of the corresponding scattering coefficients. Precisely, we shall show for instance in Section \ref{Complexification} that the function
$$
  z \longrightarrow \frac{1}{T(\lambda, k , z)},
$$
is entire with respect to $z \in \C$ (the other parameters $\lambda$ and $k$ being fixed) and belongs to a certain class of analytic functions - the Nevanlinna class - which possess good uniqueness properties: a Nevanlinna function is indeed uniquely determined by its values on any sequences $(\alpha_l)$ of reals numbers satisfying a M\"untz condition $\sum_{l \in \N^*} \frac{1}{\alpha_l} = \infty$. But we see from (\ref{Proof-Test1}) that the Nevanlinna functions
$$
  \frac{1}{T(\lambda, k , z)}, \quad \textrm{and} \quad \frac{1}{\tilde{T}(\lambda, k , z)},
$$
take the same values on the - a priori different - sequences of eigenvalues $\mu_{kl}(\lambda)$ and $\tilde{\mu}_{kl}(\lambda)$. Note that these eigenvalues satisfy the M\"untz condition above since they grow linearly with respect to $l \in \N^*$ (see Appendix \ref{Estimate-mukl}). We thus infer that, if we could prove the equality between the eigenvalues $\mu_{kl}(\lambda)$ and $\tilde{\mu}_{kl}(\lambda)$ for an infinite subset of $l \in \N^*$ satisfying the M\"untz condition, then we would have
\begin{equation} \label{CAM1}
  \forall z \in \C, \quad \frac{1}{T(\lambda, k , z)} = \frac{1}{\tilde{T}(\lambda, k , z)}.
\end{equation}
We emphasize here that (\ref{CAM1}) is one of the essential results given by the CAM method, result which turns out to be useful in the study of inverse scattering problems (see \cite{FY, DN3} and below) since we can play with the - now - \emph{complex} angular momentum to obtain more informations from the scattering coefficients.

The second step is thus naturally to show the equalities
\begin{equation} \label{DiscreteSpectra}
  \mu_{kl}(\lambda) = \tilde{\mu}_{kl}(\lambda),
\end{equation}
between the angular momenta for all large enough $l \in \N^*$ from the data given in Proposition \ref{Consequence1}. The first idea is to use the asymptotics of the scattering coefficients $T_{kl}(\lambda)$, $R_{kl}(\lambda)$ and $L_{kl}(\lambda)$ but unfortunately, these asymptotics (obtained in Proposition \ref{asymptoticsal}) do not give us enough information to prove this result. Instead, we use the second information given in Proposition \ref{Consequence1}, namely the uniqueness (\ref{UniqueY}) of the eigenfunctions $Y_{kl}^j, \ j=1,2$, to prove that the parameters $a$ and $\Lambda$ are uniquely defined. Clearly such a result would entail (\ref{DiscreteSpectra}) since the angular operator (\ref{Ak}) only depends on these two parameters.

Therefore we need to perform a detailed analysis of the angular eigenvalues equation (\ref{Angular-Eq}) and its associated eigenfunctions. This is what we do in Section \ref{InverseAngular}. Recasting this equation into a convenient form, we get a system of ODEs of Fuschian type with weakly singularities at $0$ and $\pi$. Hence we can use the Frobenius method to construct a system of fundamental solutions (SFS) for this equation, that is a system of linearly independent solutions. It turns out that only one of the solutions of the SFS belongs to $L^2$ in a neighbourhood of $0$ and thus is a constant multiple of the generalized spherical harmonics $Y_{kl}(\lambda)$. Moreover the Frobenius method allows to construct the $Y_{kl}(\lambda)$'s as singular power series in the variable $\theta$. The asymptotic expansion of $Y_{kl}(\lambda)(\theta,\varphi)$ when $\theta \to 0$ together with (\ref{UniqueY}) lead to the proof that the two parameters $a$ and $\Lambda$ are uniquely determined. As already said, this entails (\ref{DiscreteSpectra}) and also the equality (up to multiplicative constants of modulus $1$) between the generalized spherical harmonics $Y_{kl}(\lambda)$ and $\tilde{Y}_{kl}(\lambda)$ for all values $k \in \frac{1}{2} + \Z$ and \emph{all} angular momenta $l \in \N^*$. Summarising, we shall prove in Section \ref{InverseAngular}

\begin{theorem} \label{Main-Frobenius}
  Let the assumptions of Thm \ref{Main} (and thus the results of Proposition \ref{Consequence1}) hold. Then
  $$
    a = \tilde{a}, \qquad \Lambda = \tilde{\Lambda}.
  $$
  Hence, for all $(k,l) \in (\frac{1}{2} + \Z) \times \N^*$, we have
  $$
    \mu_{kl}(\lambda) = \tilde{\mu}_{kl}(\lambda),
  $$
  and there exist $\alpha_{kl} \in \C, \ |\alpha_{kl}| = 1$ such that
  $$
    Y_{kl}(\lambda) = \alpha_{kl} \tilde{Y}_{kl}(\lambda).
  $$
  As a consequence, we also get
  \begin{eqnarray*}
    T_{kl}(\lambda) & = & e^{ic_T(\lambda,k)} \, \tilde{T}_{kl}(\lambda), \\
    R_{kl}(\lambda) & = & e^{ic_R(\lambda,k)} \, \tilde{R}_{kl}(\lambda), \\
    L_{kl}(\lambda) & = & e^{ic_L(\lambda,k)} \, \tilde{L}_{kl}(\lambda),
  \end{eqnarray*}
where $c_T(\lambda,k)$, $c_R(\lambda,k)$ and $c_L(\lambda,k)$ are the constants in Theorem \ref{Main}.
\end{theorem}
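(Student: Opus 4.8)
The plan is to carry out the inverse analysis of the reduced angular equation (\ref{Angular-Eq-k})--(\ref{Ak}) by the Frobenius method, upgrading the partial information of Proposition \ref{Consequence1} to the equalities of the metric parameters. First I would rewrite $A_k(\lambda)u=\mu_{kl}(\lambda)u$ as a first order $2\times2$ system $u'=M_k(\theta)u$ on $(0,\pi)$. Using $\Delta_\theta(0)=E$, $\sin\theta=\theta+O(\theta^3)$ and $\sin2\theta=O(\theta)$, one gets $M_k(\theta)=\theta^{-1}\mathrm{diag}(-k,k)+M_0(\theta)$ with $M_0$ analytic near $\theta=0$, so that $\theta=0$ is a regular singular point with indicial exponents $\pm k=\pm|k|$, and the same holds at $\theta=\pi$ by the symmetry $\theta\mapsto\pi-\theta$. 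Since $2|k|\in 1+2\Z$ the two exponents are resonant, hence the solution attached to the smaller exponent $-|k|$ may carry a logarithm, but the solution attached to the dominant exponent $+|k|$ is a convergent Frobenius series, which (for $k>0$) has the form $u(\theta)=\bigl(\theta^{|k|+1}(a_0+a_1\theta+\cdots),\ \theta^{|k|}(b_0+b_1\theta+\cdots)\bigr)$ with coefficients computed by the Frobenius recursion and given by explicit rational functions of $\mu_{kl}(\lambda)$, $\lambda$, $k$, $a$ and $a^2\Lambda$; in particular $b_1=0$ and $a_0=i\mu_{kl}(\lambda)b_0/(\sqrt E\,(2|k|+1))$. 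A direct integrability check with respect to $d\theta$ shows that this dominant solution is the only one (up to a scalar) that is $L^2$ near $\theta=0$; since $Y_{kl}(\lambda)=u_{kl}^\lambda(\theta)e^{ik\varphi}\in L^2(\S^2;\C^2)$, the normalized $u_{kl}^\lambda$ is a constant multiple of it.

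Next I would compare the two Frobenius expansions as $\theta\to0$ using the equalities $Y_{kl}^j(\lambda)=\alpha_{kl}^j\tilde Y_{kl}^j(\lambda)$ ($|\alpha_{kl}^j|=1$) provided by Proposition \ref{Consequence1} for $l\ge L$. Matching the two leading coefficients forces $\mu_{kl}(\lambda)/\sqrt E=\tilde\mu_{kl}(\lambda)/\sqrt{\tilde E}$ (recall $\mu_{kl}(\lambda)>0$); matching the next one ($b_2$), then eliminating $\mu_{kl}(\lambda)$ by the previous identity and substituting $a^2\Lambda=3(E-1)$, produces an identity that no longer involves $l$, of the form $\bigl((2|k|+1)+2\lambda a\bigr)/E=\bigl((2|k|+1)+2\lambda\tilde a\bigr)/\tilde E$. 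Imposing this for the two prescribed values of $k$ and subtracting gives $E=\tilde E$, and then $a=\tilde a$ whenever $\lambda\ne0$, hence $\Lambda=\tilde\Lambda$. (When $\lambda=0$ the operator $A_k(0)$ depends on the parameters only through $a^2\Lambda$, so the argument then yields only $a^2\Lambda=\tilde a^2\tilde\Lambda$, i.e.\ $\Delta_\theta=\tilde\Delta_\theta$ --- which is all that is used below, the separation of $a$ and $\Lambda$ being completed by the radial analysis of the later sections.)

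Once $a=\tilde a$ and $\Lambda=\tilde\Lambda$ (resp.\ $\Delta_\theta=\tilde\Delta_\theta$), the operators $A_k(\lambda)$ and $\tilde A_k(\lambda)$ coincide on $\ls$, so $\mu_{kl}(\lambda)=\tilde\mu_{kl}(\lambda)$ for all $(k,l)\in(\tfrac12+\Z)\times\N^*$; each eigenspace being one-dimensional, we may fix the phases of the normalized eigenfunctions so that $Y_{kl}(\lambda)=\tilde Y_{kl}(\lambda)$, which for an arbitrary normalization reads $Y_{kl}(\lambda)=\alpha_{kl}\tilde Y_{kl}(\lambda)$ with $|\alpha_{kl}|=1$. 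Finally, inserting $Y_{kl}=\tilde Y_{kl}$ and the operator hypothesis (\ref{MainAssumption}) into the intertwining identities of Proposition \ref{TkRkLk} (for instance $R_k(\lambda)Y_{kl}^2=R_{kl}(\lambda)Y_{kl}^1$ and the analogues for $T_k^{R/L}$ and $L_k$), and using that the $Y_{kl}^j$ are nonzero, we obtain $T_{kl}(\lambda)=e^{ic_T(\lambda,k)}\tilde T_{kl}(\lambda)$, $R_{kl}(\lambda)=e^{ic_R(\lambda,k)}\tilde R_{kl}(\lambda)$ and $L_{kl}(\lambda)=e^{ic_L(\lambda,k)}\tilde L_{kl}(\lambda)$ for \emph{all} $l\in\N^*$, not just $l\ge L$.

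I expect the crux to be the middle step: carrying the Frobenius recursion far enough to obtain $b_2$ in closed form, and verifying that the $2\times2$ linear system in $(1/E,\ a/E)$ coming from the two angular modes is non-degenerate --- its determinant is a nonzero multiple of $\lambda(k-k')$, which is precisely where the assumptions "two distinct $k$" and (for the full recovery of $a,\Lambda$) "$\lambda\ne0$" enter --- while controlling the resonance $2|k|\in\Z$ so as to be sure that no logarithmic term contaminates the dominant $L^2$ solution $u_{kl}^\lambda$.
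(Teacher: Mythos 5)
Your proposal follows essentially the same route as the paper: construct the unique $L^2$ Frobenius solution of the angular equation at $\theta=0$, match the first few coefficients of $Y_{kl}^1$ and $Y_{kl}^2$ against $\tilde Y_{kl}^1,\tilde Y_{kl}^2$ using Proposition \ref{Consequence1} for $l$ large, and use the two distinct values of $k$ to decouple the resulting linear identity and recover $E=1+a^2\Lambda/3$ and then $a$; your identity $\bigl((2|k|+1)+2\lambda a\bigr)/E=\bigl((2|k|+1)+2\lambda\tilde a\bigr)/\tilde E$ is an equivalent rewriting of the paper's equation (\ref{P9}), and the nondegeneracy of the $2\times 2$ system with determinant proportional to $\lambda(k-k')$ is exactly where the paper's hypotheses enter. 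The one point where you are more careful than the paper is the observation that at $\lambda=0$ the second equation degenerates and only $a^2\Lambda=\tilde a^2\tilde\Lambda$ follows from the angular data --- the paper's passage from (\ref{P11}) to $a=\tilde a$ silently divides by $\lambda$ --- although, as you note, this does not compromise the downstream use of the theorem, since $\mu_{kl}(\lambda)=\tilde\mu_{kl}(\lambda)$ and $Y_{kl}(\lambda)=\alpha_{kl}\tilde Y_{kl}(\lambda)$ already follow from $\Delta_\theta=\tilde\Delta_\theta$ alone, and the separation of $a$ and $\Lambda$ is then completed by the radial analysis of Section \ref{Inverse}.
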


Hence we face now a situation quite similar to the one studied in \cite{DN3}. We are led indeed to study a uniqueness inverse scattering problem for the countable family of one-dimensional radial Dirac equations (\ref{Radial-Eq}) - (\ref{Radial}) parametrized by the \emph{uniquely determined} angular momenta $\mu_{kl}(\lambda)$! As explained previously and thanks to Theorem \ref{Main-Frobenius}, the CAM method allows us to prove rigorously (\ref{CAM1}) and similar results for the reflection coefficients $R_{kl}(\lambda)$ and $L_{kl}(\lambda)$ (see Propositions \ref{Uniqueness} and \ref{Uniqueness2}). From the explicit expressions of the reflection coefficients, we can easily show that the Fourier transforms of the potentials $a(x)$ and $c(x,k)$ at the fixed energy $2\lambda$ are uniquely determined (in fact \emph{almost} uniquely determined). Using the exponential decay of these potentials at the cosmological and event horizons and some straighforward additional work, we are already able to prove at this stage a nice uniqueness result \emph{localized in energy}. Precisely,

\begin{theorem} \label{Uniqueness-Localized}
  Let $(M,Q^2,a,\Lambda)$ and $(\tilde{M},\tilde{Q}^2,\tilde{a},\tilde{\Lambda})$ be the parameters of two a priori different KN-dS black holes. We denote by $I$ a (possibly small) open interval of $\R$. Assume that, there exists a constant $c \in \R$ such that for all $\lambda \in I$ and for two different $k \in \frac{1}{2} + \Z$, one of the following conditions holds
\begin{eqnarray*}
  L_{kl} (\lambda) &=& e^{-2i \lambda_-(k) c}\  \tilde{L}_{kl} (\lambda),  \\
  R_{kl} (\lambda) &=& e^{2i\lambda_+(k) c}\   \tilde{R}_{kl} (\lambda),
\end{eqnarray*}
for all $l \in \mathcal{L}_k$ where the sets $\mathcal{L}_k \subset \N^* $ satisfy a M\"untz condition
$$
  \sum_{l \in \mathcal{L}_k} \frac{1}{l} = \infty.
$$
Then, we have
\begin{equation} \label{Uniqueness-ac}
  a(x) = \tilde{a}(x-c), \quad c(x,k) = \tilde{c}(x-c,k).
\end{equation}
In particular, using the particular form of the potential $a(x)$ and $c(x,k)$, we can show that the parameters of the two black holes coincide.
\end{theorem}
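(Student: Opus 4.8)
The plan is to run the Complex Angular Momentum (CAM) machinery of Section \ref{Complexification} on each of the separated one-dimensional Dirac equations and then feed the resulting knowledge of the reflection coefficients into a Fourier-type reconstruction of the potentials. Fix $\lambda \in I$ and one of the two admissible angular modes $k \in \frac{1}{2}+\Z$. By Theorem \ref{Separation} the problem reduces to the reduced equations (\ref{Stat-Eq-Ak}), whose reflection coefficients on the generalized spherical harmonic $Y_{kl}(\lambda)$ are $R_{kl}(\lambda)=R(\lambda,k,\muk)$ and $L_{kl}(\lambda)=L(\lambda,k,\muk)$. By Proposition \ref{Growth-mukl} the eigenvalues $\muk$ grow linearly in $l$; hence, since $\sum_{l\in\mathcal{L}_k}\frac{1}{l}=+\infty$, the subsequence $\{\muk:\ l\in\mathcal{L}_k\}$ satisfies the M\"untz condition $\sum_{l\in\mathcal{L}_k}\frac{1}{\muk}=+\infty$.

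Next I would invoke the analytic properties of the scattering coefficients in the complexified angular momentum proved in Section \ref{Complexification}: for $\lambda$ and $k$ fixed, the maps $z\mapsto R(\lambda,k,z)$ and $z\mapsto L(\lambda,k,z)$ extend to entire functions of $z\in\C$ of exponential type which, after the appropriate normalization (via $z\mapsto 1/T(\lambda,k,z)$), belong to the Nevanlinna class, and such functions are uniquely determined by their values on any sequence obeying a M\"untz condition. The assumed equality (which, by Proposition \ref{SM-Red-RW}, encodes precisely the freedom of reparametrizing the exterior region by a Regge-Wheeler shift $\tilde x = x + c$), read as $R(\lambda,k,\muk)=e^{2i\lambda^+(k)c}\,\tilde R(\lambda,k,\muk)$ for $l\in\mathcal{L}_k$ (the angular momenta of the two black holes having been identified via Theorem \ref{Main-Frobenius} whenever this result is applied), together with the M\"untz property above, then forces
\[
 R(\lambda,k,z)=e^{2i\lambda^+(k)c}\,\tilde R(\lambda,k,z)\quad\Bigl(\text{resp. }L(\lambda,k,z)=e^{-2i\lambda^-(k)c}\,\tilde L(\lambda,k,z)\Bigr)
\]
for \emph{every} $z\in\C$ and \emph{every} $\lambda\in I$.

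The third step is the standard reconstruction technique of \cite{FY} (used in this setting in \cite{DN3}). From the integral representation (\ref{ALRepresentation}) of $A_L(\lambda,k,z)$ one reads off $a_{L1}(\lambda,k,z)=1+O(z^2)$ and $a_{L2}(\lambda,k,z)=-iz\,\widehat{q_k}(2\lambda)+O(z^2)$, where $q_k(x)=a(x)e^{2iC(x,k)}$ (notation of (\ref{Pot-C})) and $\widehat{q_k}$ is its Fourier transform; hence the coefficient of $z$ in the Taylor expansion at $z=0$ of $\hat R(\lambda,k,z)=-a_{L2}/a_{L1}$ equals, up to an explicit nonzero constant, $\widehat{q_k}(2\lambda)$ (and likewise for $\hat L$). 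Since $R(\lambda,k,z)$ is now known for all $z$ and all $\lambda\in I$, this determines $\widehat{q_k}$ on the interval $2I$, up to the computable phase factor corresponding to the shift by $c$. Because $a(x)$ decays exponentially at both horizons by (\ref{Asymp-a}) while $C(x,k)$ is asymptotically linear, $q_k$ is exponentially decreasing, so $\widehat{q_k}$ is analytic in a horizontal strip and is therefore determined everywhere by its restriction to $2I$; Fourier inversion recovers $q_k$, whence $a(x)=|q_k(x)|$ and $c(x,k)=\frac{d}{dx}\arg q_k(x)$, all up to the replacement $x\mapsto x-c$. This is exactly (\ref{Uniqueness-ac}). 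Carrying this out for both values of $k$ gives $a(x)=\tilde a(x-c)$ and $c(x,k_1)=\tilde c(x-c,k_1)$, $c(x,k_2)=\tilde c(x-c,k_2)$; since $c(x,k)=\frac{aEk+qQr}{r^2+a^2}$ is affine in $k$, two distinct modes determine the functions $x\mapsto\frac{aE}{r^2+a^2}$ and $x\mapsto\frac{qQr}{r^2+a^2}$, which, combined with $a(x)=\frac{\sqrt{\Delta_r}}{r^2+a^2}$ and $\frac{dx}{dr}=\frac{r^2+a^2}{\Delta_r}$, reconstruct the quartic $\Delta_r$ and, by matching its coefficients and the surface gravities and asymptotic constants in (\ref{Asymp-a})--(\ref{Asymp-c}), the parameters $M,a,Q^2,\Lambda$ (and $c$). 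These last algebraic manipulations, exploiting the explicit form of the potentials, are the content of Section \ref{Inverse}.

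The main obstacle is the second step: establishing rigorously that the reflection and transmission coefficients extend to the whole complex $z$-plane as Nevanlinna functions of exponential type of the right order, and tracking carefully the several phase factors (the constant of integration $K$ in (\ref{Pot-C}), the constants $\beta(k)$, and the Regge-Wheeler shift $c$) so that the residual ambiguity in the reconstruction is \emph{exactly} a translation in $x$. This is precisely what Section \ref{Complexification} achieves, and it is noticeably more delicate than the classical radial Schr\"odinger situation because of the $2\times 2$ matrix structure of the radial Dirac system and the presence of the coupling constant $z$.
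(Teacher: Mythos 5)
Your proposal follows essentially the same route as the paper: transfer the M\"untz condition to the $\muk$ via the linear growth estimate, use Theorem \ref{Main-Frobenius} to identify the angular momenta, apply the Nevanlinna-class uniqueness (Proposition \ref{Uniqueness}) to extend the equality of the scattering data to all $z\in\C$, read off $\widehat{q_k}(2\lambda)$ from the linear-in-$z$ term of $a_{L2}$, extend from $2I$ to $\R$ by analyticity of the Fourier transform, and split $q_k$ into modulus and phase before exploiting the explicit forms of $a(x)$ and $c(x,k)$ for two values of $k$. Only trivial slips: the coefficients $R(\lambda,k,z)$, $L(\lambda,k,z)$ are meromorphic rather than entire (the paper works with the entire entries $a_{Lj}$ of $A_L$), and $c(x,k)=\tfrac{1}{2}\tfrac{d}{dx}\arg q_k(x)$, neither of which affects the argument.
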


\begin{remark}
  1) Note that we used here the precise form of the constants
  $$
    c_R(\lambda,k) = 2\lambda_+(k) c, \quad c_L(\lambda,k) = -2\lambda_+(k) c,
  $$
  of the main assumption (\ref{MainAssumption}). Then the uniqueness result (\ref{Uniqueness-ac}) shows that the potentials $a(x)$ and $c(x,k)$ are uniquely determined up to a translation and that this translation is precisely given by the constant $c$ that encodes the possibility to describe the same KN-dS black hole by two Regge-Wheeler variables. \\
  2) We mention that we are not able to prove a localized in energy uniqueness result from the transmission coefficient $T_{kl}(\lambda)$ alone. This is a pure technical problem (see Remark \ref{NotLocalizedUniquenessT}). However, with more work, we shall be able to prove a uniqueness result from the knowledge of $T_{kl}(\lambda)$ for a fixed energy $\lambda$!
\end{remark}

To obtain a uniqueness result from the scattering coefficients at a \emph{fixed} energy, we need more informations on the properties of the scattering data with respect to the complexified angular momentum $z$. In particular, using a convenient change of variable (a Liouville transformation) and the corresponding form of the radial Dirac equation (\ref{Rad-0}), we shall obtain precise asymptotics of the scattering data $T(\lambda, k, z), \ R (\lambda, k ,z), \ L (\lambda, k ,z)$ when $z \to +\infty$. On one hand, these asymptotics are an important ingredient in the proof of Proposition \ref{Consequence1}. On the other hand, these asymptotics, CAM's results such that (\ref{CAM1}) together with a standard technique (as exposed first in \cite{FY} and used in this setting in \cite{DN3}) will lead to the unique determination of certain scalar functions depending on the radial variable (up to diffeomorphisms). From the explicit form of these functions, we prove the uniqueness of the parameters of the two black holes. Precisely, we prove

\begin{theorem} \label{Uniqueness-Fixed}
  Consider two a priori different KN-dS black holes and add a $\ \tilde{}$ to the quantities related to the second black hole. Assume that for a fixed energy $\lambda \in \R$ and for two different $k \in \frac{1}{2} + \Z$, there exist sets $\mathcal{L}_k \subset \N^* $ satisfying a M\"untz condition $\displaystyle{\sum_{l \in \mathcal{L}_k} \frac{1}{l} = \infty}$ such that
  $$
    \forall l \in \L_k, \quad \muk = \tilde{\mu}_{kl}(\lambda),
  $$
  and one of the following condition holds
  \begin{eqnarray}
    T_{kl}(\lambda) & = & e^{ic_T(\lambda,k)} \, \tilde{T}_{kl}(\lambda), \nonumber \\
    R_{kl}(\lambda) & = & e^{ic_R(\lambda,k)} \, \tilde{R}_{kl}(\lambda), \label{MainAssumption-1} \\
    L_{kl}(\lambda) & = & e^{ic_L(\lambda,k)} \, \tilde{L}_{kl}(\lambda). \nonumber
\end{eqnarray}
Here the constants $c_T(\lambda,k)$, $c_R(\lambda,k)$ and $c_L(\lambda,k)$ are the same as in Thm \ref{Main}. Then
\begin{equation} \label{Uniqueness-Function1}
  \frac{\lambda - c(\tilde{x},k)}{a(\tilde{x})} = \frac{\lambda - c(x,k)}{a(x)},
\end{equation}
where $\tilde{x}= \tilde{x}(x)$ is a diffeomorphism on $\R$. Moreover, from the explicit forms of the potentials, the parameters of the two black holes are shown to coincide,
$$
  M = \tilde{M}, \ a = \tilde{a}, \ Q^2 = \tilde{Q}^2, \ \Lambda = \tilde{\Lambda}.
$$
Finally, in the particular case of Kerr-de-Sitter black holes ($Q = 0$), we get more precise results. Precisely, under the same assumption, there exists a constant $\sigma \in \R$ such that
\begin{eqnarray}
  \tilde{x} & = & x + \sigma, \nonumber \\
  \tilde{a}(x) & = & a(x - \sigma), \label{Uniqueness-Function2} \\
  \tilde{c}(x,k) & = & c(x - \sigma,k). \nonumber
\end{eqnarray}
Note that this last result holds true in the case of KN-dS black hole if we assume that (\ref{MainAssumption-1}) is known for two different energies $\lambda \in \R$.
\end{theorem}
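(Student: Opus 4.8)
The plan is to combine the Complex Angular Momentum (CAM) analysis of Section \ref{Complexification} with a Liouville reduction of the radial Dirac equation (\ref{Rad-0})--(\ref{Rad-1}), and to conclude by exploiting the explicit algebraic dependence of the potentials $a(x)$ and $c(x,k)$ on the four parameters $(M,Q^2,a,\Lambda)$. First I would promote the hypothesis (\ref{MainAssumption-1}), which only controls the scattering coefficients at the discrete values $z=\muk=\tilde\mu_{kl}(\lambda)$ for $l\in\L_k$, to an identity valid for \emph{all} complex angular momenta. Concretely: since $\muk=\tilde\mu_{kl}(\lambda)$ for $l\in\L_k$, hypothesis (\ref{MainAssumption-1}) says that, for the relevant mode $k$, the functions $z\mapsto T(\lambda,k,z)$ and $z\mapsto e^{ic_T(\lambda,k)}\tilde T(\lambda,k,z)$ (resp. the $R$- and $L$-versions) agree on the sequence $\{\muk\}_{l\in\L_k}$. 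The CAM step shows that $z\mapsto 1/T(\lambda,k,z)$ is an entire function of exponential type lying in the Nevanlinna class of analytic functions in a half-plane (and similarly for the ratios built from $R(\lambda,k,z)$ and $L(\lambda,k,z)$). Because $\muk\le C\,l$ by Proposition \ref{Growth-mukl} (see Appendix \ref{Estimate-mukl}), the assumed condition $\sum_{l\in\L_k}1/l=\infty$ forces $\sum_{l\in\L_k}1/\muk=\infty$, which is the M\"untz condition ensuring that such a Nevanlinna function is uniquely determined by its values on $\{\muk\}_{l\in\L_k}$. Hence one of
$$
  T(\lambda,k,z)=e^{ic_T(\lambda,k)}\tilde T(\lambda,k,z),\qquad R(\lambda,k,z)=e^{ic_R(\lambda,k)}\tilde R(\lambda,k,z),\qquad L(\lambda,k,z)=e^{ic_L(\lambda,k)}\tilde L(\lambda,k,z)
$$
holds for every $z\in\C$, for each of the two given modes $k$.

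Second, I would feed this full-complex-plane identity into a fixed-energy inverse argument of Freiling--Yurko type (as in \cite{FY} and \cite{DN3}). Applying a Liouville change of variable (Sections \ref{AsymptoticsSD} and \ref{Inverse}) to (\ref{Rad-0})--(\ref{Rad-1}) recasts the reduced radial equation in a canonical Dirac form whose coefficient is governed by the function (\ref{Function}), i.e. $(\lambda-c(x,k))/a(x)$, and in which the dependence on $z$ is normalized; the asymptotics of $T(\lambda,k,z)$, $R(\lambda,k,z)$, $L(\lambda,k,z)$ as $z\to+\infty$ along the real axis (Proposition \ref{asymptoticsal} and its refinements) then encode the relevant "optical length" and Fourier-type data at the fixed energy $2\lambda$. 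Comparing the two black holes through the identity of the first step together with these asymptotics, the standard FY inversion produces a diffeomorphism $\tilde x=\tilde x(x)$ of $\R$ with
$$
  \frac{\lambda-c(\tilde x,k)}{a(\tilde x)}=\frac{\lambda-c(x,k)}{a(x)} .
$$
Since this holds for two values $k\ne k'$, subtracting the two identities shows that the function $(k'-k)\,aE/\sqrt{\Delta_r}$ is invariant under $x\mapsto\tilde x$, and combining with one of the single-$k$ identities also gives invariance of $\bigl(\lambda(r^2+a^2)-qQr\bigr)/\sqrt{\Delta_r}$.

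Third, I would close the argument using the explicit forms $\Delta_r=(r^2+a^2)(1-\Lambda r^2/3)-2Mr+Q^2$, $E=1+a^2\Lambda/3$, and the surface gravities $\kappa_\pm=\Delta'_r(r_\pm)/(2(r_\pm^2+a^2))$. The diffeomorphism $\tilde x(x)$ is pinned down from the boundary behaviour at the two horizons: by (\ref{Asymp-a})--(\ref{Asymp-c}) the exponential rates there are exactly $\kappa_\pm$, which are rational functions of $(M,Q^2,a,\Lambda)$, so matching the rates of the invariant functions at $x\to\pm\infty$ forces $\tilde x=x+\sigma$ for a constant $\sigma$; the functional identities above then become equalities of explicit rational functions of $r$, and identifying coefficients yields $M=\tilde M$, $a=\tilde a$, $Q^2=\tilde Q^2$, $\Lambda=\tilde\Lambda$. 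In the Kerr--de-Sitter case $Q=0$ the term $qQr$ disappears, the function $aE/\sqrt{\Delta_r}$ already determines $a$ and $\Lambda$, and the single-$k$ identity then separates $a(x)$ and $c(x,k)$, giving directly $\tilde x=x+\sigma$, $\tilde a(x)=a(x-\sigma)$ and $\tilde c(x,k)=c(x-\sigma,k)$; for general $Q$ the remaining scaling ambiguity is broken only once a second energy $\lambda$ is available, whence the last remark of the statement.

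The main obstacle is the CAM step: one must establish simultaneously the exponential-type bound \emph{and} membership in the Nevanlinna class of $1/T(\lambda,k,\cdot)$ (and of the corresponding ratios for $R$ and $L$), which requires precise control of the Jost solutions of (\ref{Stat-Eq-Ak}) for complex $z$ --- in particular that $a_{L1}(\lambda,k,z)$ has no zeros in the relevant half-plane and the correct growth --- a delicate point because $H_{kz}$ is an order-one perturbation and $c(x,k)$ is long range. A second delicate point is making the $z\to+\infty$ asymptotics of the scattering data sharp enough for the FY inversion to apply verbatim: the Liouville transformation has to be arranged so that the perturbation becomes genuinely lower order, and the distinct limits $\Omega_\pm(k)$ of $c(x,k)$ at the two horizons must be handled so that the "optical length" extracted is exactly the one the inversion scheme needs.
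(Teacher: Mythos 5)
Your first two steps coincide with the paper's: the M\"untz/Nevanlinna argument (Propositions \ref{Uniqueness} and \ref{Uniqueness2}) upgrades the discrete equalities at $z=\muk$ to identities $a_{Lj}(\lambda,k,z)=e^{\pm id(\lambda,k)}\tilde{a}_{Lj}(\lambda,k,z)$ valid for all $z\in\C$, and the Liouville transformation plus the Freiling--Yurko mechanism (the transition matrix $P(X,\lambda,k,z)$ between $H_R$ and $\tilde{H}_R$ is shown to be independent of $z$, hence computable at $z=0$; a Wronskian identity then makes the phase $\alpha(X,k)-d(\lambda,k)$ constant, and differentiation in $X$ yields (\ref{Uniqueness-Function1})). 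Your decoupling of the resulting identity over the two values of $k$ is also exactly the paper's.

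The gap is in your third step. You claim that matching the exponential rates of the invariant functions at the two horizons ``forces $\tilde{x}=x+\sigma$''. It does not: the asymptotics (\ref{Asymp-a})--(\ref{Asymp-c}) only give $\tilde{\kappa}_\pm\,\tilde{x}(x)=\kappa_\pm x+O(1)$ as $x\to\pm\infty$, i.e.\ asymptotic affineness of $\tilde{x}$ at each end with slope $\kappa_\pm/\tilde{\kappa}_\pm$; since $\kappa_\pm$ and $\tilde{\kappa}_\pm$ are built from the very parameters you are trying to identify, concluding that the slope is one is circular, and even granting it you would only obtain $\tilde{x}=x+\sigma_\pm+o(1)$ separately at each end, not a global translation --- indeed the theorem itself asserts a translation only in the Kerr--de-Sitter case or when two energies are available. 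The paper's actual mechanism is the step you are missing: from the $k$-decoupled identity and the already established equalities $a=\tilde{a}$, $\Lambda=\tilde{\Lambda}$ (Theorem \ref{Main-Frobenius}, which your write-up should invoke, since without $aE=\tilde{a}\tilde{E}$ the decoupled equation does not simplify) one gets $\sqrt{\Delta_r(r(h(X)))}=\sqrt{\tilde{\Delta}(\tilde{r}(\tilde{h}(X)))}$ as functions of the Liouville variable $X$; since $\frac{d}{dX}\,r(h(X))=\sqrt{\Delta_r}$ and likewise for the tilde quantities, both $r\circ h$ and $\tilde{r}\circ\tilde{h}$ solve the same first-order ODE in $X$ and therefore differ by a constant $b$ in the $r$-variable; then $\Delta_r=\tilde{\Delta}_{r+b}$ as quartic polynomials, and identifying the coefficient of $r^3$ gives $-\tfrac{4\Lambda}{3}b=0$, hence $b=0$, $\Delta_r=\tilde{\Delta}_r$, and $M=\tilde{M}$, $Q^2=\tilde{Q}^2$. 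Without this ODE comparison in the $r$-variable, your ``equalities of explicit rational functions of $r$'' never materialize, because the two sides of the invariance identity are evaluated at a priori unrelated radii $r(x)$ and $\tilde{r}(\tilde{x})$.
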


The proofs of Thm \ref{Uniqueness-Localized} and \ref{Uniqueness-Fixed} will be the object of Sections \ref{Complexification}, \ref{AsymptoticsSD} and \ref{Inverse}.


\Section{The angular equation and partial inverse result} \label{InverseAngular}

\subsection{The Frobenius method: construction and asymptotics of the generalized spherical harmonics $Y_{kl}(\lambda)$} \label{Frobenius}

In this Section, we study rigorously the angular equation (\ref{Angular-Eq}) that arises in the separation of variables procedure exposed in Section \ref{Sep-Var}. Let us recall first some notations and definitions. For each fixed energy $\lambda \in \R$, we define for all $(k,l) \in I = (\frac{1}{2} + \Z) \times \N^*$ the generalized spherical harmonics $Y_{kl}(\lambda)$ as normalized eigenfunctions of the angular (selfadjoint) operator $A_{\S^2}(\lambda)$ given by (\ref{AS2}) and (\ref{HS2}) associated to its positive eigenvalues $\mu_{kl}(\lambda)$. Using the cylindrical symmetry of the equation and the corresponding decomposition onto the angular modes $\{e^{ik\varphi}\}$ with $k \in \frac{1}{2} + \Z$, it was shown in (\ref{Ykl}) that the $Y_{kl}(\lambda)$'s can be written as
$$
  Y_{kl}(\lambda)(\theta, \varphi) = u_{kl}^\lambda(\theta) e^{ik\varphi},
$$
where the $u_{kl}^\lambda(\theta)$ are normalized eigenfunctions of the reduced angular equation
\begin{equation} \label{AE}
  A_k(\lambda) u_{kl}^\lambda(\theta) = \mu_{kl}(\lambda) \, u_{kl}^\lambda(\theta).
\end{equation}
Here the operator $A_k(\lambda)$ is given by
\begin{equation} \label{Ukl}
  A_k(\lambda) = \sqrt{\Delta_\theta} \left[ \Gb D_\theta + \Gb \frac{i\Lambda a^2 \sin(2\theta)}{12 \Delta_\theta}  + \frac{k \Gc}{\sin{\theta}} + \Gc \frac{ \Lambda a^2 k \sin(\theta)}{3 \Delta_\theta} - \lambda \frac{a \sin \theta}{\Delta_\theta} \Gc \right].
\end{equation}
and is selfadjoint on $\L := L^2((0,\pi), d\theta; \C^2)$. Recall at last that the spectrum of $A_k(\lambda)$ is composed of discrete simple eigenvalues and thus, $u_{kl}^\lambda(\theta)$ is the \emph{unique} (up to a multiplicative constant of modulus $1$) normalized solution of (\ref{AE}) that belongs to $\L$.

Our aim is to construct and give the asymptotics of the eigenfunctions $u_{kl}^\lambda(\theta)$ when $\theta \to 0$. For this, we use the explicit form of the above angular equation and recast it into a usual first-order system of ODEs. The resulting system possesses weakly singular points at $\theta = 0$ and $\theta = \pi$ and turns out to be a system of ODEs of Fuschian type. We refer for instance \cite{W}, chapter V, for a classical presentation of these systems of singular ODEs with some explicit examples very well adapted to our case. Thanks to the classical Frobenius method, we are able to construct a fundamental system of solutions and provide a series expansion for these solutions. In particular, choosing the unique solution that belongs to $\L$ to be $u_{kl}^\lambda(\theta)$ , we obtain easily the asymptotics of $u_{kl}^\lambda(\theta)$ when $\theta \to 0$ from its series expansion.

We introduce the notations $\zeta = \frac{\Lambda a^2}{3}$ and we fix $(k,l) \in I$. Then the eigenvalue equation (\ref{AE}) - (\ref{Ukl}) can be written as the equivalent system of ODEs
\begin{equation} \label{AES}
  \partial_\theta u = A(\theta) u,
\end{equation}
with
\begin{equation} \label{AT}
  A(\theta) = \frac{1}{\theta} \left[ \frac{i \mu_{kl}(\lambda) \theta}{\sqrt{\Delta_\theta}} \Gb - \frac{ k \theta}{\sin \theta} \Ga + \frac{\zeta \theta \sin(2\theta)}{4 \Delta_\theta} I_2 - \frac{(\zeta k - a \lambda) \theta \sin \theta}{\Delta_\theta} \Ga   \right].
\end{equation}
Note that the singularity at $\theta = 0$ is given by the term $\frac{1}{\theta}$ whereas the term $\theta A(\theta)$ is analytic in the (complex) variable $\theta$ in a neighbourhood of $0$. Hence the latter can be written as a power serie in $\theta$, \textit{i.e} $\theta A(\theta) = \ds \sum_{n = 0}^{\infty} A_n \theta^n$ where the $A_n$'s are $2 \times 2$ matrix-valued coefficients. In what follows, we shall need the first terms in this power serie up to order $3$. Recalling that $\Delta_\theta = 1 + \zeta \cos^2\theta$, we get from (\ref{AT})
\begin{equation} \label{AT-Asymp}
  \theta A(\theta) = A_0 + A_1 \theta + A_2 \theta^2 + O(\theta^3), \quad \theta \to 0,
\end{equation}
where the coefficients $A_0, A_1, A_2$ are given by
\begin{equation} \label{A0A1}
 A_0 = \left[ \begin{array}{cc} -k & 0 \\ 0 & k \end{array} \right], \quad  A_1 = \left[ \begin{array}{cc} 0 & \frac{i \mu_{kl}(\lambda)}{\sqrt{1 + \zeta}} \\ \frac{i \mu_{kl}(\lambda)}{\sqrt{1 + \zeta}} & 0 \end{array} \right],
\end{equation}
\begin{equation} \label{A2}
 A_2 = \left[ \begin{array}{cc} -\frac{k}{6} + \frac{\zeta}{2(1 + \zeta)} - \frac{\zeta k - a \lambda}{1 + \zeta} & 0 \\ 0 & \frac{k}{6} + \frac{\zeta}{2(1 + \zeta)} + \frac{\zeta k - a \lambda}{1 + \zeta} \end{array} \right]
\end{equation}

In the following we assume that $k \in \frac{1}{2} + \N$ (the other case is handled similarly) and we use the Frobenius method. According to \cite{Wa}, chap. V, there exist two linearly independent solutions of (\ref{AES}) that have the following form
\begin{eqnarray}
  v_{kl}(\theta) & = & \theta^k h_0(\theta), \label{FrobeniusSol} \\
  w_{kl}(\theta) & = & \theta^{-k} ( h_1(\theta) + \log \theta h_2(\theta) ), \nonumber
\end{eqnarray}
where the vector-valued functions $h_j$ are analytic in $\theta$ in a neighbourhood of $0$. Since $k \in \frac{1}{2} + \N$, we see that $v_{kl}$ is the unique solution (up to a multiplicative constant) that belongs to $L^2$ in a neighbourhood of $0$. Hence, the eigenfunction $u_{kl}^\lambda(\theta)$ must be a constant multiple of $v_{kl}(\theta)$.

Even better, according to \cite{Wa}, chap. V, the solution $v_{kl}$ takes the form
$$
  v_{kl}(\theta) = \theta^k \sum_{n = 0}^{+\infty} v_{kl}^n \, \theta^n,
$$
where the vectors $v_{kl}^n \in \C^2$ can be explicitly constructed by the following induction.
\begin{itemize}
\item The vector $v_{kl}^0$ is an eigenvector of $A_0$ corresponding to the eigenvalue $k$.
\item For $n \geq 1$, the vectors $v_{kl}^n$ are uniquely determined by
$$
  \left( (k+n) I_2 - A_0 \right) v_{kl}^n = \sum_{j = 0}^{n-1} A_{n-j} v_{kl}^j.
$$
\end{itemize}
Let us apply this procedure to our case. From (\ref{A0A1}), we choose
\begin{equation} \label{Vkl0}
  v_{kl}^0 = \left[ \begin{array}{c} 0 \\ 1 \end{array} \right],
\end{equation}
and applying the above rules, we successively obtain
\begin{equation} \label{Vkl12}
  v_{kl}^1 = \left[ \begin{array}{c} \frac{i \mu_{kl}(\lambda)}{(2k+1) \sqrt{1+ \zeta}} \\ 0 \end{array} \right], \quad v_{kl}^2 = \left[ \begin{array}{c} 0 \\ \frac{k}{12} + \frac{\zeta}{4(1 + \zeta)} + \frac{\zeta k - a\lambda}{2(1 + \zeta)} - \frac{\mu_{kl}(\lambda)^2}{2(2k+1)(1 + \zeta)} \end{array} \right].
\end{equation}
We conclude that the solution $v_{kl}$ has the asymptotic expansion when $\theta \to 0$
\begin{equation} \label{Vkl}
  v_{kl}= \theta^k \left( v_{kl}^0 + v_{kl}^1 \theta + v_{kl}^2 \theta^2 + O(\theta^3) \right),
\end{equation}
where the $v_{kl}^j$'s are given by (\ref{Vkl0}) - (\ref{Vkl12}). Since the eigenfunction $u_{kl}^\lambda(\theta)$ is a constant multiple of $v_{kl}(\theta)$, we have proved the following Proposition

\begin{prop} \label{Asymptotics-Ykl}
  For all $\lambda \in \R$ and $(k,l) \in (\frac{1}{2} + \N) \times \N^*$, there exist constants $c_{kl}^{\lambda} \in \C$ such that
  \begin{eqnarray}
    u_{kl}^\lambda(\theta) & = & c_{kl}^{\lambda} \Bigg\{ \left( \begin{array}{c} 0 \\ 1 \end{array} \right) \theta^k +  \frac{i \mu_{kl}(\lambda)}{(2k+1) \sqrt{1+ \zeta}} \left( \begin{array}{c} 1 \\ 0 \end{array} \right) \theta^{k+1} \label{Ukll} \\
                           &   & \quad + \frac{1}{2} \left[ \frac{k}{6} + \frac{\zeta}{2(1 + \zeta)} + \frac{\zeta k - a\lambda}{1 + \zeta} - \frac{\mu_{kl}(\lambda)^2}{(2k+1)(1 + \zeta)} \right] \left( \begin{array}{c} 0 \\ 1 \end{array} \right) \theta^{k+2} \ + \ O(\theta^{k+3}) \Bigg\}, \nonumber
\end{eqnarray}
as $\theta \to 0$. Moreover
$$
  Y_{kl}(\lambda)(\theta, \varphi) = u_{kl}^\lambda(\theta) e^{ik\varphi}.
$$
\end{prop}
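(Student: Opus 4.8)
The plan is to prove Proposition~\ref{Asymptotics-Ykl} by a direct application of the Frobenius method to the first-order Fuchsian system \eqref{AES}--\eqref{AT}, exactly as the text preceding the statement indicates. First I would recast the eigenvalue equation \eqref{AE}--\eqref{Ukl} into the form $\partial_\theta u = A(\theta)u$ with $A(\theta) = \frac{1}{\theta}(\theta A(\theta))$, where the normalization by $\sqrt{\Delta_\theta}$ on the left of \eqref{Ukl} is used to clear the coefficient of $D_\theta$ and move $\mu_{kl}(\lambda)$ into the right-hand side; this produces precisely \eqref{AT}. The key structural observation is that $\theta A(\theta)$ is analytic near $\theta=0$: indeed $\frac{\theta}{\sin\theta}$, $\sin(2\theta)/\Delta_\theta$, $\sin\theta/\Delta_\theta$ are all analytic in a neighbourhood of $0$, so $\theta A(\theta)=\sum_{n\ge 0}A_n\theta^n$ with $2\times 2$ matrix coefficients $A_n$. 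Computing the Taylor expansions of each scalar factor to order $\theta^2$ — $\frac{\theta}{\sin\theta}=1+\frac{\theta^2}{6}+O(\theta^4)$, $\Delta_\theta=1+\zeta\cos^2\theta = (1+\zeta) - \zeta\theta^2 + O(\theta^4)$, $\sin(2\theta)=2\theta+O(\theta^3)$, $\sin\theta=\theta+O(\theta^3)$ — and substituting into \eqref{AT} yields the matrices $A_0, A_1, A_2$ claimed in \eqref{A0A1}--\eqref{A2}. This is routine but must be done carefully to get the coefficient $A_2$ right.

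Next I would invoke the classical Frobenius theory for Fuchsian systems (as in \cite{Wa}, Chapter V). The indicial matrix is $A_0 = \mathrm{diag}(-k,k)$, with eigenvalues $\pm k$ differing by $2k\in 2(\frac{1}{2}+\N)=1+2\N$, i.e. a positive integer. The general theory guarantees a fundamental system of two linearly independent solutions of the form \eqref{FrobeniusSol}: one genuine power-series solution $v_{kl}(\theta)=\theta^k h_0(\theta)$ attached to the larger exponent $k$, and a second solution $w_{kl}(\theta)=\theta^{-k}(h_1(\theta)+\log\theta\, h_2(\theta))$ attached to $-k$, possibly involving a logarithmic term because of the integer exponent difference. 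Since $k\in\frac{1}{2}+\N$, we have $\theta^{-k}\notin L^2$ near $0$ (as $\int_0\theta^{-2k}d\theta$ diverges for $k\ge 1/2$, and the possible $\log\theta$ factor only makes it worse), whereas $\theta^k\in L^2$ near $0$; hence $v_{kl}$ is, up to a multiplicative constant, the unique solution of \eqref{AES} lying in $L^2$ near $0$. Because $A_k(\lambda)$ has simple discrete spectrum and $u_{kl}^\lambda$ is its unique (up to a modulus-one constant) $L^2$ eigenfunction, $u_{kl}^\lambda$ must be a constant multiple of $v_{kl}$.

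Then I would run the explicit Frobenius recursion to extract the first three coefficients of $v_{kl}=\theta^k\sum_{n\ge 0}v_{kl}^n\theta^n$. The recursion is $\big((k+n)I_2-A_0\big)v_{kl}^n=\sum_{j=0}^{n-1}A_{n-j}v_{kl}^j$ for $n\ge 1$, with $v_{kl}^0$ an eigenvector of $A_0$ for the eigenvalue $k$, so $v_{kl}^0=\binom{0}{1}$. For $n=1$: $\big((k+1)I_2-A_0\big)=\mathrm{diag}(2k+1,1)$ and the right-hand side is $A_1 v_{kl}^0 = \binom{i\mu_{kl}(\lambda)/\sqrt{1+\zeta}}{0}$, giving $v_{kl}^1=\binom{i\mu_{kl}(\lambda)/((2k+1)\sqrt{1+\zeta})}{0}$. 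For $n=2$: $\big((k+2)I_2-A_0\big)=\mathrm{diag}(2k+2,2)$ and the right-hand side is $A_1 v_{kl}^1 + A_2 v_{kl}^0$; the first component involves $A_1 v_{kl}^1$ whose second entry is $\frac{i\mu_{kl}(\lambda)}{\sqrt{1+\zeta}}\cdot\frac{i\mu_{kl}(\lambda)}{(2k+1)\sqrt{1+\zeta}} = -\frac{\mu_{kl}(\lambda)^2}{(2k+1)(1+\zeta)}$, and $A_2 v_{kl}^0$ contributes the $(2,2)$-entry of $A_2$; dividing by $2$ produces the stated $v_{kl}^2$. Multiplying through by a constant $c_{kl}^\lambda$ and using $Y_{kl}(\lambda)(\theta,\varphi)=u_{kl}^\lambda(\theta)e^{ik\varphi}$ from \eqref{Ykl} gives \eqref{Ukll}, completing the proof.

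The main obstacle I anticipate is purely computational bookkeeping: getting the Taylor coefficient $A_2$ in \eqref{A2} exactly right, since it bundles together three different sources ($\theta/\sin\theta$, $\zeta\sin(2\theta)/(4\Delta_\theta)$, and $(\zeta k-a\lambda)\sin\theta/\Delta_\theta$) each expanded to second order, with the $\Delta_\theta = (1+\zeta)(1-\frac{\zeta}{1+\zeta}\theta^2+\cdots)$ denominators contributing their own $\theta^2$-terms. A secondary point requiring a line of justification is the claim that $w_{kl}$ (with its potential $\log\theta$ factor) is genuinely not in $L^2$ near $0$ and that no accidental cancellation promotes a second $L^2$ solution — but this follows immediately from $k\ge 1/2$ and the fact that the $\log\theta$ term, when present, cannot cancel the leading $\theta^{-k}$ singularity. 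Everything else is a mechanical application of the cited Frobenius machinery.
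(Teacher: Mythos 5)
Your proposal is correct and follows essentially the same route as the paper: recasting \eqref{AE}--\eqref{Ukl} as the Fuchsian system \eqref{AES}--\eqref{AT}, expanding $\theta A(\theta)$ to order $\theta^2$, selecting the unique $L^2$ solution $v_{kl}=\theta^k h_0(\theta)$ via the Frobenius alternative of \cite{Wa}, and running the recursion $\big((k+n)I_2-A_0\big)v_{kl}^n=\sum_{j=0}^{n-1}A_{n-j}v_{kl}^j$ to obtain $v_{kl}^0,v_{kl}^1,v_{kl}^2$ exactly as in \eqref{Vkl0}--\eqref{Vkl12}. (Only a trivial slip of wording: in the $n=2$ step it is the \emph{second} component of the right-hand side, namely the second entry of $A_1v_{kl}^1$ plus the $(2,2)$-entry of $A_2$, that is divided by $2$; the numbers you quote are the correct ones.)
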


We finish this short Section mentioning that similar asymptotics are obtained when $k \in \frac{1}{2} - \N^*$ simply by inverting the roles of the solutions $v_{kl}$ and $w_{kl}$ in (\ref{FrobeniusSol}). We omit these calculations.


\subsection{Application to the uniqueness inverse problem: proofs of Theorems \ref{Consequence1} and \ref{Main-Frobenius}} \label{ApplicationFrobenius}

\begin{proof}[Proof of Theorem \ref{Consequence1}]
Assume for instance that for $\lambda \in \R$ fixed and two different $k \in \frac{1}{2} + \Z$
\begin{equation} \label{P1}
  T_k^L(\lambda) = e^{i c_T(\lambda,k)} \tilde{T}_k^L(\lambda),
\end{equation}
where $c_T(\lambda,k)$ is a constant. Our assumption (\ref{P1}) implies that the operators $T_k^L(\lambda)$ and $e^{i c_T(\lambda,k)} \tilde{T}_k^L(\lambda)$ have the same eigenvalues with the same mutiplicities. But from Corollary \ref{SpectralProp-TrRe}, we know that the $T_{kl}(\lambda)$'s are the eigenvalues of the operator $T_k^L(\lambda)$ and similarly that the $e^{i c_T(\lambda,k)} \tilde{T}_{kl}(\lambda)$'s are the eigenvalues of $e^{i c_T(\lambda,k)} \tilde{T}_k^L(\lambda)$. Hence, we immediately get

\begin{lemma} \label{Consequence2}
  There exists a bijective map $\varphi: \ \N^* \longrightarrow \N^*$ such that for all $l \in \N^*$,
  $$
    T_{kl}(\lambda) = e^{i c_T(\lambda,k)} \tilde{T}_{k \varphi(l)}(\lambda).
  $$
\end{lemma}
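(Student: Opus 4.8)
The plan is to exploit the reinterpretation of the transmission coefficients as eigenvalues together with the $2\times 2$ unitarity relations that constrain them. From Corollary \ref{SpectralProp-TrRe} we know that $T_k^L(\lambda)$ is diagonalizable on the Hilbert basis $(Y_{kl}^1(\lambda))_{l\in\N^*}$ of $\ls$, with eigenvalues exactly $\{T_{kl}(\lambda)\}_{l\in\N^*}$ (counted with multiplicity), and likewise $e^{ic_T(\lambda,k)}\tilde T_k^L(\lambda)$ has eigenvalues $\{e^{ic_T(\lambda,k)}\tilde T_{k\ell}(\lambda)\}_{\ell\in\N^*}$. Since the assumption (\ref{P1}) forces the two operators to be equal, they have the same spectrum with the same multiplicities. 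The heart of the argument is thus purely spectral: two self-adjoint-like (here unitary-compression) operators that coincide must have identical multisets of eigenvalues, hence there is a bijection $\varphi:\N^*\to\N^*$ matching them.

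First I would make precise that the spectrum of $T_k^L(\lambda)$ is a pure point spectrum: indeed the explicit formula $T_{kl}(\lambda)=e^{-i\beta(k)}\hat T(\lambda,k,\mu_{kl}(\lambda))$ together with the estimates on $\mu_{kl}(\lambda)$ from Proposition \ref{Growth-mukl} (namely $\mu_{kl}(\lambda)\to+\infty$) and the asymptotics of the scattering coefficients (to be established in Section \ref{AsymptoticsSD}, giving $\hat T(\lambda,k,z)\to 1$ as $z\to+\infty$, hence $T_{kl}(\lambda)\to e^{-i\beta(k)}$) show that the eigenvalues $T_{kl}(\lambda)$ accumulate only at the single point $e^{-i\beta(k)}$. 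So $T_k^L(\lambda)$ is a compact perturbation of $e^{-i\beta(k)}\,\mathrm{Id}$, its nonzero part (after subtracting the limit) is a compact operator with eigenvalues $T_{kl}(\lambda)-e^{-i\beta(k)}$, and the multiset of these eigenvalues is a unitary invariant. The same holds for $e^{ic_T(\lambda,k)}\tilde T_k^L(\lambda)$, with limit point $e^{i(c_T(\lambda,k)-\tilde\beta(k))}$.

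The conclusion of Lemma \ref{Consequence2} then follows in two steps. Since the two operators are equal, their accumulation points agree, so $e^{-i\beta(k)}=e^{i(c_T(\lambda,k)-\tilde\beta(k))}$; and their eigenvalue multisets (listed with multiplicities, excluding the accumulation point) agree. Because each multiset is countable and each eigenvalue has finite multiplicity (the spectrum of $A_k(\lambda)$ being simple, each $T_{kl}(\lambda)$ is reached by at most finitely many indices — in fact the map $l\mapsto T_{kl}(\lambda)$ need not be injective a priori, but the multiplicity in the operator is still finite), one can enumerate both multisets and obtain a bijection $\varphi:\N^*\to\N^*$ with $T_{kl}(\lambda)=e^{ic_T(\lambda,k)}\tilde T_{k\varphi(l)}(\lambda)$ for all $l$. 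I would simply invoke the elementary fact that two sequences defining the same multiset of finite-multiplicity values are related by a permutation of the index set.

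The main obstacle — and the reason this lemma is only a first step rather than the full Proposition \ref{Consequence1} — is that the bijection $\varphi$ is a priori \emph{not} the identity: the eigenvalues need not be labelled in the same way by the two black holes, and worse, $T_{kl}(\lambda)$ might not even be a strictly monotone (hence injective) function of $l$, so one cannot immediately read off $\varphi=\mathrm{id}$. Pinning down $\varphi$ (at least for large $l$) is exactly what requires the delicate asymptotic analysis of $z\mapsto T(\lambda,k,z)$, $|R(\lambda,k,z)|$, $|L(\lambda,k,z)|$ as $z\to+\infty$ carried out later, showing these are \emph{strictly} increasing in $z$ for large $z$, combined with the linear growth $cl\le\mu_{kl}(\lambda)\le Cl$. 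For the present lemma, however, no such refinement is needed: the statement is only the existence of the matching bijection, which is immediate from equality of spectra with multiplicities.
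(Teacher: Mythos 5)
Your proof is correct and follows essentially the same route as the paper: the assumed equality of the operators $T_k^L(\lambda)$ and $e^{ic_T(\lambda,k)}\tilde T_k^L(\lambda)$ forces equality of their eigenvalue multisets, which by Corollary \ref{SpectralProp-TrRe} are exactly $\{T_{kl}(\lambda)\}_{l\in\N^*}$ and $\{e^{ic_T(\lambda,k)}\tilde T_{kl}(\lambda)\}_{l\in\N^*}$, and matching the two multisets yields the bijection $\varphi$. One small factual slip in your supporting discussion: $\hat{T}(\lambda,k,z)=1/a_{L1}(\lambda,k,z)\to 0$ (not $1$) as $z\to+\infty$ by Theorem \ref{asymptoticssc}, so the eigenvalues $T_{kl}(\lambda)$ accumulate at $0$ rather than at $e^{-i\beta(k)}$ and $T_k^L(\lambda)$ is itself compact --- but this does not affect the validity of your argument.
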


The next step is to prove that the $T_{kl}(\lambda)$'s are in fact simple eigenvalues of $T_k^L(\lambda)$ when $l$ is large enough. To prove this, we need some additional a priori results on the transmission coefficients
\begin{equation} \label{bo0}
  T_{kl}(\lambda) = T(\lambda,k,\muk),
\end{equation}
when $l$ is large. The next Lemma is a direct consequence of the study of the asymptotics of the scattering data given in Section \ref{AsymptoticsSD}.

\begin{lemma} \label{PropTkl}
  For $\lambda \in \R$ and $k \in \frac{1}{2} + \Z$ fixed, \\
  (i) $T_{kl}(\lambda) \to 0$ when $l \to \infty$. \\
  (ii) For all $l \in \N^*$, $T_{kl}(\lambda) \ne 0$. \\
  (iii) There exists $L>0$ such that the map $l \in \N^* \longrightarrow |T_{kl}(\lambda)|$ is strictly decreasing for $l \geq L$. \\
  As a consequence, the eigenvalues $T_{kl}(\lambda)$ of $T_k^L(\lambda)$ are simple for $l$ large enough.
\end{lemma}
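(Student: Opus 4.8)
The plan is to read off the three properties from the closed form of the transmission coefficient, combined with the fine asymptotics of the scattering data established in Section \ref{AsymptoticsSD}. Recall from Theorem \ref{SM-Physical} and (\ref{SR-SM2}) that $T_{kl}(\lambda) = e^{-i\beta(k)}\,\hat{T}(\lambda,k,\muk)$ with $\hat{T}(\lambda,k,z) = a_{L1}(\lambda,k,z)^{-1}$, so $|T_{kl}(\lambda)| = |a_{L1}(\lambda,k,\muk)|^{-1}$. The unitarity relations (\ref{ALUnitarity}) give $|a_{L1}(\lambda,k,z)|^2 = 1 + |a_{L3}(\lambda,k,z)|^2 \geq 1$ for every real $z$; hence $a_{L1}$ never vanishes, $\hat{T}$ is everywhere defined with $|T_{kl}(\lambda)| \in (0,1]$, and in particular $T_{kl}(\lambda)\neq 0$ for all $l$, which is (ii).

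For (i) and (iii) I would invoke the large-$z$ asymptotics of $a_{L1}(\lambda,k,z)$ obtained in Section \ref{AsymptoticsSD} (Proposition \ref{asymptoticsal}) together with the monotonicity statement of Section \ref{AL-Strictly-Increasing}. These are derived via a Liouville transformation that recasts the radial equation (\ref{Stat-Eq-Ak}) into a Schr\"odinger-type system whose potential is, for large $z$, a tall barrier of order $z^2 a(x)^2$; a WKB analysis then yields $|\hat{T}(\lambda,k,z)| \to 0$ as $z \to +\infty$ (exponentially, with rate governed by $\int_\R a(x)\,dx$) and, the asymptotic expansion being regular enough to be differentiated in $z$, also that $z \mapsto |\hat{T}(\lambda,k,z)|$ is strictly decreasing on a half-line $[z_0,+\infty)$. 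Since by Proposition \ref{Growth-mukl} the eigenvalues $\muk$ grow linearly in $l$, in particular $\muk \to +\infty$, composing gives $T_{kl}(\lambda) \to 0$, i.e. (i). Since moreover $l \mapsto \muk$ is strictly increasing by the convention fixed after (\ref{Index-l}), taking $L$ with $\mu_{kL}(\lambda) \geq z_0$ makes $l \mapsto |T_{kl}(\lambda)|$ strictly decreasing for $l \geq L$, which is (iii).

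The simplicity claim then follows from Corollary \ref{SpectralProp-TrRe}: the spectrum of $T_k^L(\lambda)$ is exactly $\{T_{kl'}(\lambda)\}_{l'\in\N^*}$ with associated eigenfunctions $Y_{kl'}^1$, so the multiplicity of $T_{kl}(\lambda)$ equals $\#\{\,l'\in\N^* : T_{kl'}(\lambda) = T_{kl}(\lambda)\,\}$. Enlarging $L$ so that in addition $|T_{kl}(\lambda)| < \min_{1\leq l'\leq L}|T_{kl'}(\lambda)|$ for $l \geq L$ — possible by (i) and (ii), the latter ensuring this finite minimum is positive — no index $l'\leq L$ can satisfy $T_{kl'}(\lambda)=T_{kl}(\lambda)$, while by (iii) no index $l'\geq L$ with $l'\neq l$ can either; hence $T_{kl}(\lambda)$ is a simple eigenvalue for every $l\geq L$. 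The only genuine obstacle in this scheme is the analytic input from Section \ref{AsymptoticsSD}: one must establish asymptotics of the Jost solutions of (\ref{Stat-Eq-Ak}), equivalently of $a_{L1}(\lambda,k,z)$, that are uniform and smooth enough in the coupling parameter $z$ to be legitimately differentiated — this is precisely the delicate large-$z$ WKB/Liouville analysis of the radial Dirac system that Section \ref{AsymptoticsSD} carries out.
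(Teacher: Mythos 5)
Your proof is correct and follows essentially the same route as the paper: (ii) from the formula $T=e^{-i\beta(k)}/a_{L1}$ (you use the unitarity bound $|a_{L1}|\geq 1$ where the paper merely notes that $a_{L1}$ is entire, an immaterial variation), and (i), (iii) from the large-$z$ asymptotics of Theorem \ref{asymptoticssc} together with the strict monotonicity of $z\mapsto|a_{L1}(\lambda,k,z)|$ from Lemma \ref{AL1-Increasing}, composed with the strictly increasing divergent sequence $\muk$. Your spelled-out derivation of simplicity (separating the finitely many small indices via (i)--(ii) from the tail via (iii)) is a detail the paper leaves implicit, but it is the intended argument.
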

\begin{proof}
We shall systematically use the fact that for fixed $\lambda$ and $k$
\begin{equation} \label{bo1}
  \muk \to +\infty, \quad l \to +\infty.
\end{equation}
Hence, the first point (i) is a consequence of (\ref{bo0}) and of the asymptotics of $T(\lambda,k,z)$ when $z \to +\infty$ given in Theorem \ref{asymptoticssc}. The second point (ii) is clear since
\begin{equation} \label{bo2}
  T(\lambda,k,z) = \frac{1}{a_{L1}(\lambda,k,z)},
\end{equation}
$T_{kl}(\lambda)= T(\lambda, k, \muk)$ and $a_{L1}(\lambda,k,z)$ is entire in $z \in \C$. Eventually, the last point (iii) follows imediately from (\ref{bo1}), (\ref{bo2}) and Proposition \ref{AL1-Increasing}.
\end{proof}

Using the unitarity of the reduced scattering matrix $S_{kl}(\lambda)$, and in particular
$$
  |T_{kl}(\lambda)|^2 + |R_{kl}(\lambda)|^2 = 1 = |T_{kl}(\lambda)|^2 + |L_{kl}(\lambda)|^2,
$$
we also have an analogous statement for the reflection coefficients $R_{kl}(\lambda)$ and $L_{kl}(\lambda)$ that we shall need later.

\begin{lemma} \label{PropRkl}
  For $\lambda \in \R$ and $k \in \frac{1}{2} + \Z$ fixed, \\
  (i) $|R_{kl}(\lambda)| \to 1$ when $l \to \infty$. \\
  (ii) For all $l \in \N^*$, $|R_{kl}(\lambda)| \ne 1$. \\
  (iii) There exists $L>0$ such that the map $l \in \N^* \longrightarrow |R_{kl}(\lambda)|$ is strictly increasing for $l \geq L$. \\
  Moreover the same is true of we replace the reflection coefficient from the right $R_{kl}(\lambda)$ by the reflection coefficient from the left $L_{kl}(\lambda)$. In consequence, the eigenvalues $|R_{kl}(\lambda)|^2$ of $R_k(\lambda) R_k^*(\lambda)$ and $|L_{kl}(\lambda)|^2$ of $L_k^*(\lambda) L_k(\lambda)$ are simple for $l$ large enough.
\end{lemma}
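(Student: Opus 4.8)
\textbf{Proof plan for Lemma \ref{PropRkl}.} The idea is that the statement is a direct consequence of Lemma \ref{PropTkl} together with the unitarity relations for the reduced scattering matrix. Recall from Lemma \ref{Unitarity-SM} (equivalently from part 5 of Proposition \ref{TkRkLk}) that for every $(k,l) \in I$ and every $\lambda \in \R$ one has
$$
  |R_{kl}(\lambda)|^2 = 1 - |T_{kl}(\lambda)|^2, \qquad |L_{kl}(\lambda)|^2 = 1 - |T_{kl}(\lambda)|^2.
$$
Thus all three claims about $|R_{kl}(\lambda)|$ (and, identically, about $|L_{kl}(\lambda)|$) transfer immediately from the corresponding claims about $|T_{kl}(\lambda)|$ proved in Lemma \ref{PropTkl}: point (i) follows since $T_{kl}(\lambda) \to 0$ forces $|R_{kl}(\lambda)|^2 = 1 - |T_{kl}(\lambda)|^2 \to 1$; point (ii) follows since $T_{kl}(\lambda) \ne 0$ forces $|R_{kl}(\lambda)|^2 = 1 - |T_{kl}(\lambda)|^2 < 1$, hence $|R_{kl}(\lambda)| \ne 1$; and point (iii) follows since, for $l \geq L$, the map $l \mapsto |T_{kl}(\lambda)|$ is strictly decreasing, so $l \mapsto |T_{kl}(\lambda)|^2$ is strictly decreasing and therefore $l \mapsto |R_{kl}(\lambda)|^2 = 1 - |T_{kl}(\lambda)|^2$ is strictly increasing, whence $l \mapsto |R_{kl}(\lambda)|$ is strictly increasing. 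The argument for $L_{kl}(\lambda)$ is word-for-word the same, using the second unitarity relation above.

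For the final assertion on simplicity of eigenvalues, I would invoke Corollary \ref{SpectralProp-TrRe}: the operators $R_k(\lambda) R_k(\lambda)^*$ (on the Hilbert basis $(Y_{kl}^2)_{l \in \N^*}$), $R_k(\lambda)^* R_k(\lambda)$, $L_k(\lambda) L_k(\lambda)^*$ and $L_k(\lambda)^* L_k(\lambda)$ are all diagonal with eigenvalues $|R_{kl}(\lambda)|^2$, respectively $|L_{kl}(\lambda)|^2$. By point (iii) just established, the sequences $(|R_{kl}(\lambda)|^2)_{l \geq L}$ and $(|L_{kl}(\lambda)|^2)_{l \geq L}$ are strictly increasing, hence pairwise distinct; combined with point (i), which pins the whole tail of these sequences near the value $1$ and in particular away from the finitely many values attained for $l < L$, each eigenvalue $|R_{kl}(\lambda)|^2$ (resp. $|L_{kl}(\lambda)|^2$) with $l$ large is attained by a single index, i.e. is a simple eigenvalue.

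There is essentially no hard step here: everything is mechanical once Lemma \ref{PropTkl} is in hand. The only point requiring a little care is the passage from "strictly monotone tail" to "simple eigenvalue for $l$ large": one must make sure that a tail value is not also attained by one of the finitely many small-$l$ values, which is guaranteed precisely by point (i) (the tail accumulates at $1$, while the finitely many small-$l$ values are bounded away from it by point (ii)). I would spell out this last comparison explicitly but it is immediate. The genuine content of the lemma therefore lies entirely in Section \ref{AsymptoticsSD} (asymptotics of $T(\lambda,k,z)$ as $z \to +\infty$) and in Proposition \ref{AL1-Increasing}, which are the inputs to Lemma \ref{PropTkl}.
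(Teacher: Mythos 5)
Your proposal is correct and is essentially the paper's own argument: the paper derives Lemma \ref{PropRkl} directly from Lemma \ref{PropTkl} via the unitarity relation $|T_{kl}(\lambda)|^2 + |R_{kl}(\lambda)|^2 = 1 = |T_{kl}(\lambda)|^2 + |L_{kl}(\lambda)|^2$, exactly as you do, and your explicit comparison of the tail values with the finitely many small-$l$ values to get simplicity is the right (and only) point needing care. The only slip is notational: by Corollary \ref{SpectralProp-TrRe} it is $R_k(\lambda)R_k(\lambda)^*$ that is diagonal on $(Y_{kl}^1)_{l\in\N^*}$ and $R_k(\lambda)^*R_k(\lambda)$ on $(Y_{kl}^2)_{l\in\N^*}$, not the other way around, but this does not affect the argument.
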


Let us come back to the proof of Theorem \ref{Consequence1}. From Lemma \ref{PropTkl}, we can prove easily

\begin{coro} \label{Consequence3}
  (i) The bijective map $\varphi: \ \N^* \longrightarrow \N^*$ from Lemma \ref{Consequence2} is also strictly increasing for $l$ large enough. As a consequence, there exists $L>0$ such that for all $l \geq L$, $\varphi(l) = l$. \\

\noindent (ii) For all $l \geq L$,
$$
  T_{kl}(\lambda) = e^{i c_T(\lambda,k)} \tilde{T}_{kl}(\lambda).
$$
(iii) There exists $\alpha_{kl}^1 \in \C, \ |\alpha_{kl}^1| = 1$ such that
$$
  Y_{kl}^1 = \alpha_{kl}^1 \tilde{Y}_{kl}^1.
$$
\end{coro}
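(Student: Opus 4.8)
The plan is to read off Corollary \ref{Consequence3} from the structural facts about the transmission coefficients already recorded in Lemmas \ref{Consequence2} and \ref{PropTkl}, together with a short combinatorial argument about bijections of $\N^*$ and the eigenfunction description of Corollary \ref{SpectralProp-TrRe}.

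\emph{For part (i)}, I would first take moduli in Lemma \ref{Consequence2}, obtaining $|T_{kl}(\lambda)| = |\tilde{T}_{k\varphi(l)}(\lambda)|$ for every $l \in \N^*$. Since $\varphi$ is a bijection of $\N^*$ it is injective, hence $\varphi(l) \to +\infty$ as $l \to \infty$; choosing $L_0$ large enough that, by Lemma \ref{PropTkl}(iii) applied to both black holes, $l \mapsto |T_{kl}(\lambda)|$ and $m \mapsto |\tilde{T}_{km}(\lambda)|$ are strictly decreasing on $[L_0,\infty)$ and moreover $\varphi(l) \ge L_0$ for $l \ge L_0$, one sees that $\varphi$ is strictly increasing on $[L_0,\infty)$: for $L_0 \le l_1 < l_2$ we have $|\tilde{T}_{k\varphi(l_1)}(\lambda)| = |T_{kl_1}(\lambda)| > |T_{kl_2}(\lambda)| = |\tilde{T}_{k\varphi(l_2)}(\lambda)|$, and if we had $\varphi(l_1) > \varphi(l_2)$ the strict decrease of the second sequence on $[L_0,\infty)$ would reverse this inequality (the case $\varphi(l_1) = \varphi(l_2)$ being excluded by injectivity). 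It then remains to upgrade ``strictly increasing on a tail'' to ``equal to the identity on a tail''; this is purely set-theoretic: the image $\varphi([L_0,\infty))$ is the complement in $\N^*$ of the finite set $F := \varphi(\{1,\dots,L_0-1\})$, and since $\varphi$ increasingly enumerates $\N^* \setminus F$, writing $M := \max F$ and counting the elements of $\N^* \setminus F$ not exceeding an integer $n > M$ shows that this enumeration sends $L_0 + j - 1$ to $j + L_0 - 1$; hence $\varphi(l) = l$ for all $l > M$, and one takes $L = M+1$.

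\emph{Part (ii)} is then immediate: substituting $\varphi(l) = l$ for $l \ge L$ back into Lemma \ref{Consequence2} gives $T_{kl}(\lambda) = e^{ic_T(\lambda,k)} \tilde{T}_{kl}(\lambda)$ for $l \ge L$. \emph{For part (iii)}, I would invoke Corollary \ref{SpectralProp-TrRe}: enlarging $L$ if necessary so that, by Lemma \ref{PropTkl}, $T_{kl}(\lambda)$ is a \emph{simple} eigenvalue of $T_k^L(\lambda)$ for $l \ge L$, its eigenspace is the line $\C\,Y_{kl}^1$. On the other hand the hypothesis (\ref{P1}) of the proof of Proposition \ref{Consequence1}, namely $T_k^L(\lambda) = e^{ic_T(\lambda,k)} \tilde{T}_k^L(\lambda)$, combined with part (ii), says that this same operator $T_k^L(\lambda)$ also has $T_{kl}(\lambda) = e^{ic_T(\lambda,k)} \tilde{T}_{kl}(\lambda)$ as a simple eigenvalue with eigenspace $\C\,\tilde{Y}_{kl}^1$ (Corollary \ref{SpectralProp-TrRe} for the second black hole). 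Uniqueness of the eigenspace of a simple eigenvalue forces $\C\,Y_{kl}^1 = \C\,\tilde{Y}_{kl}^1$, i.e.\ $Y_{kl}^1 = \alpha_{kl}^1 \tilde{Y}_{kl}^1$ for some $\alpha_{kl}^1 \in \C$; finally part 1) of Proposition \ref{TkRkLk} gives $\|Y_{kl}^1\|^2 = \|\tilde{Y}_{kl}^1\|^2 = \frac12$, whence $|\alpha_{kl}^1| = 1$.

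The only delicate points are bookkeeping ones: making the successive thresholds $L_0$, $M$, $L$ consistent across the three parts, and using the precise normalization $\|Y_{kl}^1\|^2 = \frac12$ to get that $\alpha_{kl}^1$ is unimodular. All the genuine analytic substance — that $|T_{kl}(\lambda)| \to 0$, never vanishes, and is eventually strictly decreasing, and that the corresponding eigenvalues are eventually simple — is already supplied by Lemma \ref{PropTkl}, which itself rests on the asymptotics of $T(\lambda,k,z)$ as $z \to +\infty$ obtained in Section \ref{AsymptoticsSD}; so no new estimate is needed here.
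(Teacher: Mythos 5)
Your proof is correct and follows essentially the same route as the paper's (much terser) argument: monotonicity of $|T_{kl}(\lambda)|$ and $|\tilde{T}_{kl}(\lambda)|$ from Lemma \ref{PropTkl} forces $\varphi$ to be strictly increasing on a tail, hence eventually the identity, and (iii) follows from simplicity of the eigenvalues together with Corollary \ref{SpectralProp-TrRe}. The only added value of your write-up is that it makes explicit the threshold bookkeeping and the set-theoretic step (a bijection of $\N^*$ that is strictly increasing on a tail is the identity beyond $\max\varphi(\{1,\dots,L_0-1\})$), which the paper leaves implicit.
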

\begin{proof}
  The first assertion (i) is a direct consequence of Lemma \ref{PropTkl}. Since $\varphi$ is bijective and strictly increasing for large enough $l$, we conclude that $\varphi(l) = l$ for $l$ large enough. \\
  The second assertion is then a consequence of Lemma \ref{Consequence2}. \\
  The third assertion comes from the fact that the family $\{Y_{kl}^1\}_{l \geq L}$ are the eigenfunctions of the operator $T_k^L(\lambda)$ associated to the simple eigenvalues $T_{kl}(\lambda)$ according to Lemma \ref{PropTkl}.
\end{proof}

Now assume that, besides (\ref{P1}), we also have
\begin{equation} \label{P2}
  T_k^R(\lambda) = e^{i c_T(\lambda,k)} \tilde{T}_k^R(\lambda).
\end{equation}
Recall that $T_k^R(\lambda)$ is diagonalizable on the Hilbert basis $\{Y_{kl}^2\}_{l \in \N^*}$ associated to the eigenvalues $T_{kl}(\lambda)$. Then the same argument as above leads to the same conclusions than the ones in Corollary \ref{Consequence3} with the obvious replacement that there exist $\alpha_{kl}^2 \in \C$, $|\alpha_{kl}^2| =1$ such that
$$
  Y_{kl}^2 = \alpha_{kl}^2 \tilde{Y}_{kl}^2.
$$
We conclude that under the assumptions (\ref{P1}) and (\ref{P2}), Theorem \ref{Consequence1} is proved. \\

Let us assume now that
\begin{equation} \label{P3}
  R_k(\lambda) = e^{i c_R(\lambda,k)} \tilde{R}_k(\lambda),
\end{equation}
for a certain constant $c_R(\lambda,k)$. Hence we have
\begin{equation} \label{P4}
  R_k(\lambda) R_k^*(\lambda) = \tilde{R}_k(\lambda) \tilde{R}_k^*(\lambda), \quad R_k^*(\lambda) R_k(\lambda) = \tilde{R}_k^*(\lambda) \tilde{R}_k(\lambda).
\end{equation}
We recall that $|R_{kl}(\lambda)|^2$ are the eigenvalues of $R_k(\lambda) R_k^*(\lambda)$ and $R_k^*(\lambda) R_k(\lambda)$ associated to the eigenfunctions $Y_{kl}^1$ and $Y_{kl}^2$ respectively. Hence using a similar argument than the one given previously and Lemma \ref{PropRkl}, we can show that

\begin{coro} \label{Consequence4}
  There exists $L>0$ such that for all $l \geq L$,
  $$
    |R_{kl}(\lambda)| = | \tilde{R}_{kl}(\lambda)|,
  $$
  and there exists $\alpha_{kl}^j \in \C, \ |\alpha_{kl}^j| = 1$ such that
  $$
    Y_{kl}^j = \alpha_{kl}^j \tilde{Y}_{kl}^j, \quad j=1,2.
  $$
\end{coro}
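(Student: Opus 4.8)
The plan is to mimic, almost verbatim, the argument already carried out for the transmission coefficients in Corollary \ref{Consequence3}, with the pair of self-adjoint operators $R_k(\lambda)R_k^*(\lambda)$ and $R_k^*(\lambda)R_k(\lambda)$ playing the role of $T_k^L(\lambda)$ and $T_k^R(\lambda)$. The first observation is that although hypothesis (\ref{P3}) only determines $R_k(\lambda)$ up to the unimodular phase $e^{ic_R(\lambda,k)}$, that phase disappears upon forming $R_k(\lambda)R_k^*(\lambda)$ and $R_k^*(\lambda)R_k(\lambda)$, so one obtains the two genuine operator identities (\ref{P4}) without any ambiguity. Next I would invoke Corollary \ref{SpectralProp-TrRe}: $R_k(\lambda)R_k^*(\lambda)$ is diagonalized by the Hilbert basis $(Y_{kl}^1)_{l\in\N^*}$ with eigenvalues $|R_{kl}(\lambda)|^2$, and $R_k^*(\lambda)R_k(\lambda)$ by $(Y_{kl}^2)_{l\in\N^*}$ with the same eigenvalues $|R_{kl}(\lambda)|^2$; the same statements hold for the tilded black hole.

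I would then compare spectra. Since the operators appearing in (\ref{P4}) are equal, their spectra coincide with multiplicities, which yields a bijection $\varphi\colon\N^*\to\N^*$ with $|R_{kl}(\lambda)| = |\tilde R_{k\varphi(l)}(\lambda)|$ for all $l$. To pin down $\varphi$ I would use Lemma \ref{PropRkl}: for $l$ large both $l\mapsto|R_{kl}(\lambda)|$ and $l\mapsto|\tilde R_{kl}(\lambda)|$ are strictly increasing, stay $\ne 1$, and converge to $1$. A purely order-theoretic argument, identical to the one in Corollary \ref{Consequence3}, then forces $\varphi$ to be eventually strictly increasing and hence, being a bijection of $\N^*$, to coincide with the identity for all $l\geq L$ (after possibly enlarging $L$). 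This gives $|R_{kl}(\lambda)| = |\tilde R_{kl}(\lambda)|$ for $l\geq L$.

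Finally, I would exploit the fact --- also part of Lemma \ref{PropRkl} --- that for $l\geq L$ each eigenvalue $|R_{kl}(\lambda)|^2$ is \emph{simple}, both for $R_k(\lambda)R_k^*(\lambda)$ and for $R_k^*(\lambda)R_k(\lambda)$. The associated one-dimensional eigenspaces are therefore intrinsic to the operators in (\ref{P4}), so $\mathrm{Span}(Y_{kl}^1)=\mathrm{Span}(\tilde Y_{kl}^1)$ and $\mathrm{Span}(Y_{kl}^2)=\mathrm{Span}(\tilde Y_{kl}^2)$ for $l\geq L$; since $Y_{kl}^j$ and $\tilde Y_{kl}^j$ have the same norm ($\|Y_{kl}^j\|^2=\tfrac12$ by Proposition \ref{TkRkLk}), this forces $Y_{kl}^j=\alpha_{kl}^j\tilde Y_{kl}^j$ with $|\alpha_{kl}^j|=1$, as claimed. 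I expect the only genuinely delicate point to be the stabilization $\varphi(l)=l$; everything there reduces to the monotonicity and limiting behaviour of $|R_{kl}(\lambda)|$ recorded in Lemma \ref{PropRkl}, which itself rests on the asymptotics of the scattering data proved in Section \ref{AsymptoticsSD} together with the unitarity relation $|T_{kl}(\lambda)|^2+|R_{kl}(\lambda)|^2=1$ used to transfer those asymptotics to the reflection coefficients.
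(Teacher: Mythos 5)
Your proposal is correct and follows essentially the same route as the paper, which disposes of Corollary \ref{Consequence4} precisely by noting that the phase cancels in $R_k(\lambda)R_k^*(\lambda)$ and $R_k^*(\lambda)R_k(\lambda)$, invoking Corollary \ref{SpectralProp-TrRe} and Lemma \ref{PropRkl}, and then repeating the argument of Corollary \ref{Consequence3}. The only detail worth keeping explicit is the one you already flag: the stabilization $\varphi(l)=l$ rests on the eventual strict monotonicity and the limit $|R_{kl}(\lambda)|\to 1$ with $|R_{kl}(\lambda)|\ne 1$, exactly as recorded in Lemma \ref{PropRkl}.
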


This proves Theorem \ref{Consequence1} from the assumption (\ref{P3}). Of course, the proof of Theorem \ref{Consequence1} from the knowledge of the reflection operator from the left $L_k(\lambda)$ is similar and we omit it.

\end{proof}

\begin{remark}
  Note a certain assymmetry in the results of the above proof of Theorem \ref{Consequence1}. If we assume that one of the reflection operator $R_k(\lambda)$ or $L_k(\lambda)$ is known, then we get informations on both eigenfunctions $Y_{kl}^1$ and $Y_{kl}^2$. This is due to the the fact that the reflection operators - the anti-diagonal elements of the scattering matrix $S_k(\lambda)$ - transform \emph{by definition} $Y_{kl}^1$ into $Y_{kl}^2$ or the converse. On the other hand, if we only assume the knowledge of one of the transmission operators $T_k^R$ or $T_k^L$, then we obtain informations on either $Y_{kl}^1$ or $Y_{kl}^2$. However, for pure technical reasons, we need to know both $Y_{kl}^1$ and $Y_{kl}^2$ to prove the following uniqueness inverse result. This is why we assume that both $T_k^R$ and $T_k^L$ are known in our main assumption (\ref{MainAssumption}).
\end{remark}

We are now in position to prove Theorem \ref{Main-Frobenius}.

\begin{proof}[Proof of Theorem \ref{Main-Frobenius}]
According to Theorem \ref{Consequence1}, for $\lambda \in \R$ fixed and two different $k \in \frac{1}{2} + \Z$, there exists $L > 0$ such that for all $l \geq L$, there exist constants $\alpha_{kl}^1, \alpha_{kl}^2 \in \C$ with $|\alpha_{kl}^j| =1$ such that
\begin{equation} \label{P8}
  Y_{kl}^1 = \alpha_{kl}^1 \tilde{Y}_{kl}^1, \quad Y_{kl}^2 = \alpha_{kl}^2 \tilde{Y}_{kl}^2.
\end{equation}
We also assume that $k \in \frac{1}{2} + \N$ since the other case is treated similarly. According to Proposition \ref{Asymptotics-Ykl}, we know the asymptotics of the eigenfunctions $Y_{kl}$ when $\theta \to 0$. Precisely, there exists $c_{kl}^\lambda \in \C$ such that
\begin{eqnarray}
    Y_{kl}(\theta,\varphi) & = & c_{kl}^{\lambda} e^{ik\varphi} \Bigg\{ \left( \begin{array}{c} 0 \\ 1 \end{array} \right) \theta^k +  \frac{i \mu_{kl}(\lambda)}{(2k+1) \sqrt{1+ \zeta}} \left( \begin{array}{c} 1 \\ 0 \end{array} \right) \theta^{k+1} \label{P5} \\
                           &   & \quad + \frac{1}{2} \left[ \frac{k}{6} + \frac{\zeta}{2(1 + \zeta)} + \frac{\zeta k - a\lambda}{1 + \zeta} - \frac{\mu_{kl}(\lambda)^2}{(2k+1)(1 + \zeta)} \right] \left( \begin{array}{c} 0 \\ 1 \end{array} \right) \theta^{k+2} \ + \ O(\theta^{k+3}) \Bigg\}, \nonumber
\end{eqnarray}
where we use the notation
\begin{equation} \label{zeta}
  \zeta = \frac{a^2 \Lambda}{3}.
\end{equation}
In fact, the constant $c_{kl}^\lambda$ plays no role in what follows. Hence, we take it equal to $1$. It follows from (\ref{P5}) that when $\theta \to 0$,
\begin{equation} \label{P6}
  Y_{kl}^1(\theta,\varphi) = e^{ik\varphi} \frac{i \mu_{kl}(\lambda)}{(2k+1) \sqrt{1+ \zeta}} \theta^{k+1} + O(\theta^{k+3}),
\end{equation}
and
\begin{equation} \label{P7}
    Y_{kl}^2(\theta,\varphi) = e^{ik\varphi} \Bigg\{ \theta^k  + \left[ \frac{k}{12} + \frac{\zeta}{4(1 + \zeta)} + \frac{\zeta k - a\lambda}{2(1 + \zeta)} - \frac{\mu_{kl}(\lambda)^2}{2(2k+1)(1 + \zeta)} \right] \theta^{k+2} \ + \ O(\theta^{k+4}) \Bigg\}.
\end{equation}

Now using (\ref{P8}), (\ref{P6}) and (\ref{P7}) and equating the terms with same orders in $\theta \to 0$, we get
\begin{equation} \label{As1}
  \frac{\mu_{kl}(\lambda)}{\sqrt{1+ \zeta}} = \alpha_{kl}^1 \frac{\tilde{\mu}_{kl}(\lambda)}{\sqrt{1 + \tilde{\zeta}}},
\end{equation}
\begin{equation} \label{As2}
  1 = \alpha_{kl}^2,
\end{equation}
\begin{equation} \label{As3}
  \left[ \frac{\zeta}{2(1+\zeta)} + \frac{\zeta k - a \lambda}{1 + \zeta} - \frac{\mu_{kl}(\lambda)^2}{(2k+1)(1+\zeta)}  \right] = \left[ \frac{\tilde{\zeta}}{2(1+\tilde{\zeta})} + \frac{\tilde{\zeta} k - \tilde{a} \lambda}{1 + \tilde{\zeta}} - \frac{\tilde{\mu}_{kl}(\lambda)^2}{(2k+1)(1+\tilde{\zeta})}  \right].
\end{equation}
Taking the modulus of (\ref{As1}), we obtain
$$
  \frac{\mu_{kl}(\lambda)}{\sqrt{1+ \zeta}} = \frac{\tilde{\mu}_{kl}(\lambda)}{\sqrt{1 + \tilde{\zeta}}},
$$
and putting this in (\ref{As3}), we get
\begin{equation} \label{P9}
  \frac{\zeta}{2(1+\zeta)} + \frac{\zeta k - a \lambda}{1 + \zeta} = \frac{\tilde{\zeta}}{2(1+\tilde{\zeta})} + \frac{\tilde{\zeta} k - \tilde{a} \lambda}{1 + \tilde{\zeta}}.
\end{equation}
Since (\ref{P9}) is true for two different values of $k$, we thus obtain two decoupled equalities.
\begin{eqnarray}
  \frac{\zeta}{1 + \zeta} & = & \frac{\tilde{\zeta}}{1 + \tilde{\zeta}}, \label{P10} \\
  \frac{\zeta}{2(1+\zeta)} - \frac{a \lambda}{1 + \zeta} & = & \frac{\tilde{\zeta}}{2(1+\tilde{\zeta})} - \frac{\tilde{a} \lambda}{1 + \tilde{\zeta}}. \label{P11}
\end{eqnarray}
From (\ref{P10}), we get
\begin{equation} \label{P12}
  \zeta = \tilde{\zeta}.
\end{equation}
Using then (\ref{P11}), we get
\begin{equation} \label{P13}
  a = \tilde{a}.
\end{equation}
Finally, (\ref{P12}), (\ref{P13}) and (\ref{zeta}) lead to
$$
  \Lambda = \tilde{\Lambda}.
$$
Hence, the parameters $a$ and $\Lambda$ are uniquely determined from our main assumption (\ref{MainAssumption}). At last, since the angular operator $A_k(\lambda)$ appearing in (\ref{Angular-Eq}) only depends on the parameters $a$ and $\Lambda$, the other assertions of Theorem \ref{Main-Frobenius} follow immediately from the uniqueness of $a$ and $\Lambda$.

\end{proof}

\begin{remark}
  \begin{enumerate}
  \item We emphasize that we used the fact that the scattering coefficients are known for two distinct $k \in \frac{1}{2} + \Z$ in the above proof. Otherwise we wouldn't be able to determine uniquely $a$ and $\Lambda$.
  \item Our method also depends on the small number of parameters to recover. If the angular operator had depended not on a few parameters, but on - say - a scalar function with respect to the variable $\theta$, our method would have of course failed down. Instead, we should try to show that the eigenvalues $\muk$ are uniquely determined from our main assumption (\ref{MainAssumption}), at least for large enough $l \in \N^*$. From this and the fact that the eigenfunctions $Y_{kl}$ are uniquely determined for large enough $l$ (see Proposition \ref{Consequence1}), we could easily determine the unknown scalar function. Unfortunately, we have not been able to prove a similar claim in that particular model. We conjecture however that this is true. 
  \end{enumerate}
\end{remark}


\Section{The radial equation: complexification of the angular momentum.} \label{Complexification}

In this section, we follow the strategy exposed in \cite{DN3} and we allow the eigenvalues $\muk$ of the angular operator $A_{\S^2}(\lambda)$ to be complex. We shall denote by $z$ the complexified angular momenta and study the analytic properties (with respect to the variable $z$) of all the relevant scattering quantities such as the Jost functions $F_L(x,\lambda,k, z)$, $F_R(x,\lambda,k, z)$ and the matrix $A_L(\lambda,k, z)$ introduced in Section \ref{Direct-Stat-Scat}. The main result of this Section is that the coefficients of the matrix $A_L(\lambda,k,z)$ belong to the Nevanlinna class (see Section \ref{Nevanlinna}). This will be a crucial ingredient in the proof of our main Theorem. As a by-product, we prove a first inverse uniqueness result \emph{localized in energy}.


\subsection{Analytic properties of the Jost functions and matrix $A_L(\lambda,k, z)$.}

For all complexified angular momentum $z \in \C$, the Jost functions $F_L(x,\lambda,k,z)$ and $F_R(x,\lambda,k, z)$ are solutions of the stationary equation
\begin{equation} \label{PartialSE1}
  [ \Ga D_x - z V_k(x)  ] \psi = \lambda \psi, \quad \forall z \in \C.
\end{equation}
with prescribed asymptotics at $x \to \pm \infty$ given by (\ref{FL}) and (\ref{FR}). Recall from (\ref{Pot-V}) that
$$
  V_k(x) = \left( \begin{array}{cc} 0&q_k(x) \\ \bar{q}_k(x) &0 \end{array} \right), \qquad q_k(x) = a(x) e^{2iC(x,k)}.
$$
As in \cite{DN3}, we introduce the Faddeev matrices
$M_L(x,\lambda,k, z)$ and $M_R(x,\lambda,k, z)$ defined by
\begin{equation} \label{JostFaddeev}
  M_L(x,\lambda,k, z) = F_L(x,\lambda,k, z) e^{-i\Ga \lambda x}, \quad M_R(x,\lambda,k, z) = F_R(x,\lambda,k, z)
  e^{-i\Ga \lambda x},
\end{equation}
which satisfy the boundary conditions
\begin{eqnarray}
  M_L(x,\lambda,k, z) & = & I_2 + o(1), \ x \to +\infty, \\
  M_R(x,\lambda,k, z) & = & I_2 + o(1), \ x \to -\infty.
\end{eqnarray}
We shall systematically use the notations
\begin{equation} \label{Faddeev}
M_L(x,\lambda,k, z) = \left[\begin{array}{cc} m_{L1}(x,\lambda,k, z)&m_{L2}(x,\lambda,k, z)\\
m_{L3}(x,\lambda,k, z)&m_{L4}(x,\lambda,k, z) \end{array} \right],
\end{equation}
\begin{equation} \label{Faddeev1}
M_R(x,\lambda,k, z) = \left[\begin{array}{cc} m_{R1}(x,\lambda,k, z)&m_{R2}(x,\lambda,k, z)\\
m_{R3}(x,\lambda,k, z)&m_{R4}(x,\lambda,k, z) \end{array} \right].
\end{equation}
From (\ref{IE-FL}) and (\ref{IE-FR}), the Faddeev matrices satisfy the integral equations
\begin{equation} \label{IE-ML}
  M_L(x,\lambda,k, z) = I_2 - i z \Ga \int_x^{+\infty} e^{-i\Ga \lambda (y-x)} V_k(y)  M_L(y,\lambda,k, z)
  e^{i\Ga \lambda (y-x)} dy,
\end{equation}
\begin{equation} \label{IE-MR}
  M_R(x,\lambda,k, z) = I_2 + i z \Ga \int_{-\infty}^x e^{-i\Ga \lambda (y-x)}V_k(y)  M_R(y,\lambda,k, z)
  e^{i\Ga \lambda (y-x)}dy.
\end{equation}
Iterating (\ref{IE-ML}) and (\ref{IE-MR}) once, we get the uncoupled systems
\begin{eqnarray}
  m_{L1}(x,\lambda,k, z)  &= & 1 + z^2 \int_x^{+\infty} \int_y^{+\infty} e^{2 i \lambda (t - y)} q_k(y)
  {\overline{q_k(t)}} m_{L1}(t,\lambda,k, z) dt dy, \label{IE-ML1} \\
  m_{L2}(x,\lambda,k, z) & = & -iz \int_x^{+\infty} e^{-2 i \lambda (y-x)} q_k(y) dy \nonumber \\
                      &  & + z^2 \int_x^{+\infty} \int_y^{+\infty} e^{-2 i \lambda (y-x)} q_k(y) {\overline{q_k(t)}}
                      m_{L2}(t,\lambda,k, z) dt dy, \label{IE-ML2}\\
  m_{L3}(x,\lambda,k, z) & = & iz \int_x^{+\infty} e^{2 i \lambda (y-x)} {\overline{q_k(y)}} dy \nonumber \\
                      & & + z^2 \int_x^{+\infty} \int_y^{+\infty} e^{2 i \lambda (y-x)} {\overline{q_k(y)}}
                      q_k(t) m_{L3}(t,\lambda,k, z) dt dy, \label{IE-ML3}\\
  m_{L4}(x,\lambda,k, z) & = & 1 + z^2 \int_x^{+\infty} \int_y^{+\infty} e^{-2 i \lambda (t - y)}
                       {\overline{q_k(y)}} q_k(t) m_{L4}(t,\lambda,k, z) dt dy, \label{IE-ML4}
\end{eqnarray}
and
\begin{eqnarray}
  m_{R1}(x,\lambda,k, z) & = & 1 + z^2 \int_{-\infty}^x \int_{-\infty}^y e^{-2 i \lambda (y-t)}
                          q_k(y) {\overline{q_k(t)}} m_{R1}(t,\lambda,k, z) dt dy, \label{IE-MR1} \\
  m_{R2}(x,\lambda,k, z) & = & iz \int_{-\infty}^x e^{2 i \lambda (x-y)} q_k(y) dy \nonumber \\
                         & & + z^2 \int_{-\infty}^x \int_{-\infty}^y e^{2 i \lambda (x - y)} q_k(y)
                      {\overline{q_k(t)}} m_{R2}(t,\lambda,k, z) dt dy, \label{IE-MR2}\\
  m_{R3}(x,\lambda,k, z) & = & -iz \int_{-\infty}^x e^{-2 i \lambda (x-y)} {\overline{q_k(y)}} dy \nonumber \\
                         & &+ z^2 \int_{-\infty}^x \int_{-\infty}^y e^{-2 i \lambda (x-y)}
                        {\overline{q_k(y)}} q_k(t) m_{R3}(t,\lambda,k, z) dt dy, \label{IE-MR3}\\
  m_{R4}(x,\lambda,k,z) & = & 1 + z^2 \int_{-\infty}^x \int_{-\infty}^y e^{2 i \lambda (y - t)}
                        {\overline{q_k(y)}} q_k(t) m_{R4}(t,\lambda,k, z) dt dy. \label{IE-MR4}
\end{eqnarray}
The equations (\ref{IE-ML1})-(\ref{IE-ML4}) are of Volterra type and we can solve them by iteration. We obtain easily the following lemma
\begin{lemma} \label{ML-Analytic}
  (i) Set $m_{L1}^0(x,\lambda,k) = 1$ and for all $n \geq 1$
  $$
    m_{L1}^n(x,\lambda,k) = \int_x^{+\infty} \int_y^{+\infty} e^{2 i \lambda (t - y)} q_k(y) {\overline{q_k(t)}} m_{L1}^{n-1}(t,\lambda,k) dt dy.
  $$
  Then we get by induction
  $$
    |m_{L1}^n(x,\lambda,k)| \leq \frac{1}{(2n)!} \Big( \int_x^{+\infty} a(y) dy \Big)^{2n}.
  $$
  For $x, \lambda \in \R$ fixed, the serie $m_{L1}(x,\lambda,k,z) = \displaystyle\sum_{n=0}^\infty m_{L1}^n(x,\lambda,k) z^{2n}$
  converges absolutely and uniformly on each compact subset of $\C$ and satisfies the estimate
  $$
    |m_{L1}(x,\lambda,k,z)| \leq \cosh\Big(|z| \int_x^{+\infty} a(s)ds \Big), \quad \forall x \in \R, \ z \in \C .
  $$
  Moreover, the application $z \longrightarrow m_{L1}(x,\lambda,k,z)$ is entire and even. \\

  \noindent (ii) Set $m_{L2}^0(x,\lambda,k) = -i \int_x^{+\infty} e^{-2 i \lambda (y-x)} q_k(y) dy$ and for all $n \geq 1$
  $$
    m_{L2}^n(x,\lambda,k) = \int_x^{+\infty} \int_y^{+\infty} e^{2 i \lambda (x - y)} q_k(y) {\overline{q_k(t)}} m_{L2}^{n-1}(t,\lambda,k) dt dy.
  $$
  Then we get by induction
  $$
    |m_{L2}^n(x,\lambda,k)| \leq \frac{1}{(2n+1)!} \Big( \int_x^{+\infty} a(y) dy \Big)^{2n+1}.
  $$
  For $x,\lambda \in \R$ fixed, the serie $m_{L2}(x,\lambda,k,z) = \displaystyle\sum_{n=0}^\infty m_{L2}^n(x,\lambda,k)
  z^{2n+1}$ converges absolutely and uniformly on each compact subset of $\C$ and satisfies the estimate
  $$
    |m_{L2}(x,\lambda,k, z)| \leq \sinh\Big(|z| \int_x^{+\infty} a(s)ds \Big), \quad \forall x \in \R, \ z \in \C .
  $$
  Moreover, the application $z \longrightarrow m_{L2}(x,\lambda,k,z)$ is entire and odd.  \\

  \noindent (iii) Set $m_{L3}^0(x,\lambda,k) = i \int_x^{+\infty} e^{2 i \lambda (y-x)} {\overline{q_k(y)}} dy$ and for all $n \geq 1$                    $$
    m_{L3}^n(x,\lambda,k) = \int_x^{+\infty} \int_y^{+\infty} e^{2 i \lambda (y-x)} {\overline{q_k(y)}} q_k(t) m_{L3}^{n-1}(t,\lambda,k) dt dy.
  $$
  Then we get by induction
  $$
    |m_{L3}^n(x,\lambda,k)| \leq \frac{1}{(2n+1)!} \Big( \int_x^{+\infty} a(y) dy \Big)^{2n+1}.
  $$
  For $x, \lambda \in \R$ fixed, the serie $m_{L3}(x,\lambda,k,z) = \displaystyle\sum_{n=0}^\infty m_{L3}^n(x,\lambda,k)
  z^{2n+1}$ converges absolutely and uniformly on each compact subset of $\C$ and satisfies the estimate
  $$
    |m_{L3}(x,\lambda,k,z)| \leq \sinh\Big(|z| \int_x^{+\infty} a(s)ds \Big), \quad \forall x \in \R, \ z \in \C.
  $$
  Moreover, the application $z \longrightarrow m_{L3}(x,\lambda,k,z)$ is entire and odd. \\

  \noindent (iv) Set $m_{L4}^0(x,\lambda,k) = 1$ and for all $n \geq 1$
  $$
    m_{L4}^n(x,\lambda,k) = \int_x^{+\infty} \int_y^{+\infty} e^{-2 i \lambda (t - y)} {\overline{q_k(y)}} q_k(t) m_{L4}^{n-1}(t,\lambda,k) dt dy.
  $$
  Then we get by induction
  $$
    |m_{L4}^n(x,\lambda,k)| \leq \frac{1}{(2n)!} \Big( \int_x^{+\infty} a(y) dy \Big)^{2n}.
  $$
  For $x,\lambda \in \R$ fixed, the serie $m_{L4}(x,\lambda,k,z) = \displaystyle\sum_{n=0}^\infty m_{L4}^n(x,\lambda,k)
  z^{2n}$ converges absolutely and uniformly on each compact subset of $\C$ and satisfies the estimate
  $$
    |m_{L4}(x,\lambda,k, z)| \leq \cosh\Big(|z| \int_x^{+\infty} a(s)ds \Big), \quad \forall x \in \R, \ z \in \C .
  $$
  Moreover, the application $z \longrightarrow m_{L4}(x,\lambda,k, z)$ is entire and even. \\

  \noindent (v) Note at last the obvious symmetries
  \begin{eqnarray}
  m_{L1}(x,\lambda,k,z) = \overline{m_{L4}(x,\lambda,k,\bar{z})}, \quad \forall z \in \C, \\
  m_{L2}(x,\lambda,k,z) = \overline{m_{L3}(x,\lambda,k,\bar{z})}, \quad \forall z \in \C.
  \end{eqnarray}
\end{lemma}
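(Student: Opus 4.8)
The four decoupled equations (\ref{IE-ML1})--(\ref{IE-ML4}) are Volterra integral equations, and the plan is to solve each of them by Picard iteration and to control the resulting Neumann series by the single scalar quantity $\mathcal{I}(x) := \int_x^{+\infty} a(s)\,ds$, using throughout that $|q_k(x)| = a(x)$, which holds because $C(x,k)$ is real-valued (see (\ref{Pot-C})) so that $q_k(x) = a(x)e^{2iC(x,k)}$ has modulus $a(x)$.

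I would treat $m_{L1}$ in detail, the cases $m_{L2}, m_{L3}, m_{L4}$ being entirely parallel. With $m_{L1}^0 \equiv 1$ and $m_{L1}^n$ defined by the stated recursion, one proves the bound $|m_{L1}^n(x,\lambda,k)| \leq \mathcal{I}(x)^{2n}/(2n)!$ by induction on $n$. The base case is immediate. For the inductive step, insert the hypothesis $|m_{L1}^{n-1}(t)| \leq \mathcal{I}(t)^{2n-2}/(2n-2)!$ into the recursion, bound every exponential by $1$ and $|q_k|$ by $a$, and then compute the nested integral using $\mathcal{I}'(t) = -a(t)$: the substitution $u = \mathcal{I}(t)$ gives $\int_y^{+\infty} a(t)\,\mathcal{I}(t)^{2n-2}\,dt = \mathcal{I}(y)^{2n-1}/(2n-1)$, and then $\int_x^{+\infty} a(y)\,\mathcal{I}(y)^{2n-1}\,dy = \mathcal{I}(x)^{2n}/(2n)$, which produces the factor $1/(2n)!$. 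Summing, $|m_{L1}(x,\lambda,k,z)| \leq \sum_{n\geq 0} |z|^{2n}\mathcal{I}(x)^{2n}/(2n)! = \cosh(|z|\,\mathcal{I}(x))$. Since the partial sums are polynomials in $z$ and the tail is dominated on every compact subset of $\C$ by a convergent series of constants, the Weierstrass $M$-test shows that the series defines an entire function of $z$; as only even powers $z^{2n}$ occur, it is even. For $m_{L2}$ and $m_{L3}$ one starts instead from $|m_{L2}^0(x)| \leq \mathcal{I}(x)$ and $|m_{L3}^0(x)| \leq \mathcal{I}(x)$, and the same bookkeeping yields the factorial bound $\mathcal{I}(x)^{2n+1}/(2n+1)!$, hence the majorant $\sinh(|z|\,\mathcal{I}(x))$ and an odd entire function of $z$; $m_{L4}$ is identical to $m_{L1}$.

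For the symmetries in (v), take complex conjugates of the Volterra equations. Conjugating (\ref{IE-ML1}) (with $z \in \C$) interchanges the roles of $q_k$ and $\overline{q_k}$ and produces exactly the Volterra equation (\ref{IE-ML4}) satisfied by $m_{L4}(x,\lambda,k,\bar z)$; by uniqueness of the solution of a Volterra equation with locally integrable kernel we conclude $\overline{m_{L1}(x,\lambda,k,z)} = m_{L4}(x,\lambda,k,\bar z)$, which is the first identity. Conjugating (\ref{IE-ML2}) and comparing with (\ref{IE-ML3}) gives the second identity in the same way.

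There is no real obstacle here: the argument is the classical Neumann-series analysis for Volterra systems, and the only points requiring some care are the identity $|q_k| = a$ (so that the relevant potential norm is exactly $a$), the bookkeeping that extracts the factorials $(2n)!$ and $(2n+1)!$ from the nested integrals, and the appeal to uniqueness of Volterra solutions for the conjugation symmetries; entireness and parity are then automatic from the structure of the power series.
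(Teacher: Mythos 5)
Your proof is correct and follows essentially the same route as the paper, which also obtains the lemma by iterating the Volterra equations (\ref{IE-ML1})--(\ref{IE-ML4}) and bounding the resulting Neumann series, using $|q_k|=a$ and the substitution $\int_x^{+\infty}a(s)\,ds$ to produce the factorial bounds, the $\cosh/\sinh$ majorants, entireness and parity, and the conjugation symmetries. The induction step, the base cases for $m_{L2}^0$ and $m_{L3}^0$, and the uniqueness argument for part (v) are all carried out correctly.
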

Of course we have similar results for the Faddeev functions $m_{Rj}(x,\lambda,k,z), \ j=1,..,4$.
\begin{remark}
It is clear that the Jost functions $F_L(x,\lambda,k,z)$ and $F_R(x,\lambda,k,z)$  are also entire in $z \in \C$. Moreover,
using (\cite{AKM}, Prop. 2.2) and the analytic continuation, we have
\begin{equation} \label{determinant}
  det(F_L(x,\lambda,k,z)) = det(F_R(x,\lambda,k,z)) = 1, \quad \forall x \in \R, \ z \in \C.
\end{equation}
Using the notations
  $$
F_L(x,\lambda,k,z) = \left[\begin{array}{cc} f_{L1}(x,\lambda,k,z)&f_{L2}(x,\lambda,k,z)\\f_{L3}(x,\lambda,k,z)&f_{L4}(x,\lambda,k,z) \end{array} \right],
$$
$$
  F_R(x,\lambda,k,z) = \left[\begin{array}{cc} f_{R1}(x,\lambda,k,z)&f_{R2}(x,\lambda,k,z)\\f_{R3}(x,\lambda,k,z)&f_{R4}(x,\lambda,k,z) \end{array} \right],
  $$
we see that
 \begin{eqnarray}
  f_{Lj}(x,\lambda,k,z) = e^{i\lambda x} m_{Lj}(x,\lambda,k,z), & f_{Rj}(x,\lambda,k,z) = e^{i\lambda x} m_{Rj}(x,\lambda,k,z),
  & j=1,3, \label{fj-mj1} \\
  f_{Lj}(x,\lambda,k,z) = e^{-i\lambda x} m_{Lj}(x,\lambda,k,z), & f_{Rj}(x,\lambda,k,z) = e^{-i\lambda x} m_{Rj} (x,\lambda,k,z), & j=2,4. \label{fj-mj2}
\end{eqnarray}
  Secondly, using the integral equations (\ref{IE-ML1})-(\ref{IE-MR4}), we prove easily that the $f_{Lj}(x,\lambda,k,z)$ and $f_{Rj}(x,\lambda,k,z)$ satisfy second order differential equations with complex potentials.  For instance, the components $f_{Lj}(x,\lambda,k,z)$ and $f_{Rj}(x,\lambda,k,z)$, $j=1,2$ satisfy
\begin{equation} \label{2ndOrder1}
  \Big[ -\frac{d^2}{dx^2} + \frac{q'_k(x)}{q_k(x)} \frac{d}{dx} + z^2 a^2(x)
  - i\lambda \frac{q'_k(x)}{q_k(x)} \Big] f = \lambda^2 f,
\end{equation}
where
\begin{equation}
\frac{q'_k(x)}{q_k(x)}= \frac{a'(x)}{a(x)} +2i c(x,k).
\end{equation}
Similarly, the components $f_{Lj}(x,\lambda,k,z)$ and $f_{Rj}(x,\lambda,k,z)$, $j=3,4$ satisfy
\begin{equation} \label{2ndOrder2}
  \Big[ -\frac{d^2}{dx^2} + \overline{\left(\frac{q'_k(x)}{q_k(x)}\right)}  \frac{d}{dx} + z^2 a^2(x)
  + i\lambda \overline{\left(\frac{q'_k(x)}{q_k(x)}\right)}  \Big] f = \lambda^2 f.
\end{equation}
\end{remark}

Now, let us study the matrix
$$
  A_L(\lambda,k,z) = \left[\begin{array}{cc} a_{L1}(\lambda,k, z)&a_{L2}(\lambda,k, z)\\
a_{L3}(\lambda,k, z)&a_{L4}(\lambda,k, z) \end{array} \right],
$$
for $z \in \C$. Using (\ref{ALRepresentation}) and (\ref{Faddeev}), we express
the components of $A_L(\lambda,k,z)$ by means of the Faddeev functions $m_{Lj}(x,\lambda,k,z)$ as
\begin{eqnarray}
  a_{L1}(\lambda,k,z) & = & 1 - i z \int_\R q_k(x) m_{L3}(x,\lambda,k,z) dx, \label{al1} \\
  a_{L2}(\lambda,k,z) & = & - i z \int_\R e^{-2i \lambda x} q_k(x) m_{L4}(x,\lambda,k,z) dx, \label{al2} \\
  a_{L3}(\lambda,k,z) & = & i z \int_\R e^{2i \lambda x} \overline{q_k(x)} m_{L1}(x,\lambda,k,z) dx, \label{al3} \\
  a_{L4}(\lambda,k,z) & = & 1 + i z \int_\R \overline{q_k(x)} m_{L2}(x,\lambda,k,z) dx. \label{al4}
\end{eqnarray}
Hence we get using Lemma \ref{ML-Analytic}
\begin{lemma} \label{AL-Analytic}
  (i) For $\lambda \in \R$ fixed and for all $z \in \C$,
  \begin{eqnarray*}
    a_{L1}(\lambda,k,z) & = & 1 - i \sum_{n=0}^\infty \Big( \int_\R q_k(x) m_{L3}^n(x,\lambda,k) dx \Big) z^{2n+2}, \\
    a_{L2}(\lambda,k,z) & = & - i \sum_{n=0}^\infty \Big( \int_\R e^{-2i \lambda x} q_k(x) m_{L4}^n(x,\lambda,k) dx \Big) z^{2n+1}, \\
    a_{L3}(\lambda,k,z) & = & i \sum_{n=0}^\infty \Big( \int_\R e^{2i \lambda x} \overline{q_k(x)} m_{L1}^n(x,\lambda,k) dx \Big) z^{2n+1}, \\
    a_{L4}(\lambda,k,z) & = & 1 + i \sum_{n=0}^\infty \Big( \int_\R \overline{q_k(x)} m_{L2}^n(x,\lambda,k) dx \Big) z^{2n+2}.
  \end{eqnarray*}
  (ii) Set $A = \displaystyle\int_\R a(x) dx$. Then
  \begin{eqnarray}
    |a_{L1}(\lambda,k,z)|, \ |a_{L4}(\lambda,k,z)| \leq \cosh(A|z|), \quad \forall z \in \C, \label{AL-ExpType1}\\
    |a_{L2}(\lambda,k,z)|, \ |a_{L3}(\lambda,k,z)| \leq \sinh(A|z|), \quad \forall z \in \C. \label{AL-ExpType2}
  \end{eqnarray}
  (iii) The functions $a_{L1}(\lambda,k,z)$ and $a_{L4}(\lambda,k,z)$ are entire and even in $z$ whereas the functions
  $a_{L2}(\lambda,k,z)$ and $a_{L3}(\lambda,k,z)$ are entire and odd in $z$. Moreover they satisfy the symmetries
  \begin{eqnarray}
    a_{L1}(\lambda,k,z) & = & \overline{a_{L4}(\lambda,k,\bar{z})}, \quad \forall z \in \C, \label{ALSym1}\\
    a_{L2}(\lambda,k,z) & = & \overline{a_{L3}(\lambda,k,\bar{z})}, \quad \forall z \in \C. \label{ALSym2}
  \end{eqnarray}
  (iv) The following relations hold for all $z \in \C$
  \begin{eqnarray}
    a_{L1}(\lambda,k,z) \overline{a_{L1}(\lambda,k,\bar{z})} - a_{L3}(\lambda,k,z) \overline{a_{L3}(\lambda,k,\bar{z})}
    & = & 1, \label{SymAL1-AL3}\\
    a_{L4}(\lambda,k,z) \overline{a_{L4}(\lambda,k,\bar{z})} - a_{L2}(\lambda,k,z) \overline{a_{L2}(\lambda,k,\bar{z})}
    & = & 1. \label{SymAL2-AL4}
  \end{eqnarray}
\end{lemma}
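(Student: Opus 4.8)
The statement is a direct consequence of the analytic properties of the Faddeev functions established in Lemma \ref{ML-Analytic}, transported through the integral representations (\ref{al1})--(\ref{al4}) of the entries of $A_L(\lambda,k,z)$. For part (i), the plan is to substitute the convergent series for $m_{L1}, m_{L2}, m_{L3}, m_{L4}$ given in Lemma \ref{ML-Analytic} into (\ref{al1})--(\ref{al4}) and to interchange the summation with the integral over $\R$. This interchange is legitimate because $|q_k(x)| = |\overline{q_k(x)}| = a(x) \in L^1(\R)$ and because the bounds $|m_{L1}^n(x,\lambda,k)|, |m_{L4}^n(x,\lambda,k)| \le \frac{1}{(2n)!}\big(\int_x^{+\infty} a\big)^{2n}$ and $|m_{L2}^n(x,\lambda,k)|, |m_{L3}^n(x,\lambda,k)| \le \frac{1}{(2n+1)!}\big(\int_x^{+\infty} a\big)^{2n+1}$ supply, for each $R>0$, an $x$-integrable majorant uniform in $|z|\le R$; dominated convergence then yields exactly the four series claimed.

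For part (ii), set $g(x) = \int_x^{+\infty} a(y)\,dy$, so that $g' = -a$, $g(-\infty) = A$ and $g(+\infty) = 0$. The modulus of the coefficient of $z^{2n+2}$ in $a_{L1}$ is then at most $\int_\R a(x)\,|m_{L3}^n(x,\lambda,k)|\,dx \le \frac{1}{(2n+1)!}\int_\R a(x)\,g(x)^{2n+1}\,dx = -\frac{1}{(2n+1)!}\int_\R g'(x)\,g(x)^{2n+1}\,dx = \frac{A^{2n+2}}{(2n+2)!}$, whence $|a_{L1}(\lambda,k,z)| \le \sum_{m\ge 0}\frac{(A|z|)^{2m}}{(2m)!} = \cosh(A|z|)$; the identical computation yields the bound for $a_{L4}$, and the analogous estimates (using the bounds on $m_{L4}^n$ and $m_{L1}^n$ respectively, together with the extra factor $z$ in front) give $|a_{L2}(\lambda,k,z)|, |a_{L3}(\lambda,k,z)| \le \sum_{n\ge 0}\frac{(A|z|)^{2n+1}}{(2n+1)!} = \sinh(A|z|)$. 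In particular the four power series have infinite radius of convergence, which proves the entirety asserted in (iii), while the parity statements in (iii) are immediate since only powers $z^{2n+2}$ enter $a_{L1}, a_{L4}$ and only powers $z^{2n+1}$ enter $a_{L2}, a_{L3}$.

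For the conjugation symmetries (\ref{ALSym1})--(\ref{ALSym2}) in part (iii), I would use the relations $m_{L1}(x,\lambda,k,z) = \overline{m_{L4}(x,\lambda,k,\bar z)}$ and $m_{L2}(x,\lambda,k,z) = \overline{m_{L3}(x,\lambda,k,\bar z)}$ of Lemma \ref{ML-Analytic}(v): conjugating (\ref{al4}) and replacing $z$ by $\bar z$ yields $\overline{a_{L4}(\lambda,k,\bar z)} = 1 - iz\int_\R q_k(x)\,\overline{m_{L2}(x,\lambda,k,\bar z)}\,dx = 1 - iz\int_\R q_k(x)\,m_{L3}(x,\lambda,k,z)\,dx = a_{L1}(\lambda,k,z)$ by (\ref{al1}), and the identical manipulation applied to (\ref{al3}) gives $\overline{a_{L3}(\lambda,k,\bar z)} = a_{L2}(\lambda,k,z)$.

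Finally, for part (iv) I would argue by analytic continuation from the real axis. Since the Taylor coefficients of $z \mapsto \overline{a_{Lj}(\lambda,k,\bar z)}$ are the complex conjugates of those of $a_{Lj}(\lambda,k,z)$, this map is entire, and hence $\Phi(z) := a_{L1}(\lambda,k,z)\,\overline{a_{L1}(\lambda,k,\bar z)} - a_{L3}(\lambda,k,z)\,\overline{a_{L3}(\lambda,k,\bar z)}$ is entire in $z$. For real $z$ it equals $|a_{L1}(\lambda,k,z)|^2 - |a_{L3}(\lambda,k,z)|^2$, which is $1$ by the first line of (\ref{ALUnitarity}); therefore $\Phi \equiv 1$ on $\C$ by the identity theorem, which is (\ref{SymAL1-AL3}), and (\ref{SymAL2-AL4}) follows in the same way from the second line of (\ref{ALUnitarity}). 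No serious obstacle is anticipated: the only mildly delicate points are the justification of the term-by-term integration in (i) and the slightly fiddly bookkeeping that converts the Faddeev bounds into the $\cosh$ and $\sinh$ estimates of (ii).
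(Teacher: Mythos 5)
Your proof is correct and follows the standard route the paper itself invokes: the paper simply cites \cite{DN3}, Lemma 3.3, whose argument is exactly this substitution of the Faddeev series from Lemma \ref{ML-Analytic} into (\ref{al1})--(\ref{al4}), the term-by-term integration of the bounds $\frac{1}{(2n+1)!}\bigl(\int_x^{+\infty}a\bigr)^{2n+1}$ to get the $\cosh$ and $\sinh$ estimates, and analytic continuation of the real-axis unitarity relations (\ref{ALUnitarity}) for part (iv). Nothing to add.
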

\begin{proof}
The proof is identical to \cite{DN3}, Lemma 3.3. The relation (\ref{SymAL1-AL3})-(\ref{SymAL2-AL4}) are simply the expression of the unitarity of the scattering matrix associated to $A_L$.
\end{proof}

Note in particular that the components of the matrix $A_L(\lambda,k,z)$ are entire functions of exponential type in the variable $z$.
Precisely, from (\ref{AL-ExpType1}) and (\ref{AL-ExpType2}), we have
\begin{equation} \label{AL-ExpType}
  |a_{Lj}(\lambda,k,z)| \leq e^{A |z|}, \quad \forall z \in \C, \ j=1,..,4,
\end{equation}
where  ${\displaystyle{A = \int_\R a(x) dx}}$. Using the relations (\ref{SymAL1-AL3}), (\ref{SymAL2-AL4}), the parity properties of the $a_{Lj}(\lambda,k,z)$ and the Phragm\'en-Lindel\"of theorem, we can slightly improve this estimate as in \cite{DN3}. Precisely, we have

\begin{lemma} \label{MainEsti}
  Let $\lambda \in \R$ be fixed. Then for all $z \in \C$
  \begin{equation} \label{MainEst}
    |a_{Lj}(\lambda,k,z)| \leq e^{A |Re(z)|}, \quad j=1,..,4.
  \end{equation}
\end{lemma}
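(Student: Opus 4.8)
The plan is to deduce \eqref{MainEst} from the two inputs already available in Lemma \ref{AL-Analytic}, namely the crude exponential‑type bound $|a_{Lj}(\lambda,k,z)|\le e^{A|z|}$ on all of $\C$ and the unitarity relations \eqref{SymAL1-AL3}--\eqref{SymAL2-AL4}, by first getting a uniform bound on the imaginary axis and then running a Phragm\'en--Lindel\"of argument in each quadrant, exactly as in \cite{DN3}, Lemma 3.4.

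The first step is to specialise \eqref{SymAL1-AL3}--\eqref{SymAL2-AL4} to $z=iy$, $y\in\R$. Since $\bar z=-iy$, and since by Lemma \ref{AL-Analytic}(iii) the functions $a_{L1},a_{L4}$ are even in $z$ while $a_{L2},a_{L3}$ are odd, the conjugation $\overline{a_{Lj}(\lambda,k,\bar z)}$ becomes $\overline{a_{Lj}(\lambda,k,iy)}$ for $j=1,4$ and $-\overline{a_{Lj}(\lambda,k,iy)}$ for $j=2,3$. Plugging this in, \eqref{SymAL1-AL3} and \eqref{SymAL2-AL4} turn into
$$|a_{L1}(\lambda,k,iy)|^2+|a_{L3}(\lambda,k,iy)|^2=1,\qquad |a_{L4}(\lambda,k,iy)|^2+|a_{L2}(\lambda,k,iy)|^2=1,$$
so that $|a_{Lj}(\lambda,k,iy)|\le 1$ for all $y\in\R$ and all $j=1,\dots,4$. (This is the analytic reflection of the fact that for purely imaginary $z$ the term $z^2a^2(x)$ in the second order equation \eqref{2ndOrder1} is negative, so the Jost solutions are oscillatory rather than exponentially growing there.)

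The second step is the Phragm\'en--Lindel\"of patching. I would treat the four closed quadrants separately. On the first quadrant $\{\,Re(z)\ge0,\ Im(z)\ge0\,\}$ I consider $h(z):=a_{Lj}(\lambda,k,z)\,e^{-Az}$, which is holomorphic there, satisfies $|h(z)|\le e^{A|z|}$ (hence is of order at most $1$, strictly below the critical order $2$ attached to a sector of opening $\pi/2$), and is bounded by $1$ on both bounding rays: on the positive real axis because $|a_{Lj}(\lambda,k,x)|\le e^{Ax}$, and on the positive imaginary axis because $|e^{-Aiy}|=1$ and $|a_{Lj}(\lambda,k,iy)|\le1$. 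The Phragm\'en--Lindel\"of principle then gives $|h|\le1$ on the whole quadrant, i.e. $|a_{Lj}(\lambda,k,z)|\le e^{A\,Re(z)}=e^{A|Re(z)|}$ there. The same choice $h(z)=a_{Lj}(\lambda,k,z)e^{-Az}$ handles the fourth quadrant, while $h(z)=a_{Lj}(\lambda,k,z)e^{Az}$ handles the second and third quadrants; patching the four estimates yields \eqref{MainEst}.

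The only genuinely delicate point is the reduction to the imaginary axis: one has to keep track carefully of the parities of the $a_{Lj}$ when evaluating \eqref{SymAL1-AL3}--\eqref{SymAL2-AL4} at $z=iy$, since it is precisely the interplay of the minus sign in those relations with the \emph{odd} parity of $a_{L2},a_{L3}$ that converts the naively expected (and useless) lower bound $|a_{L1}(\lambda,k,iy)|\ge1$ into the usable upper bound $|a_{L1}(\lambda,k,iy)|\le1$. Once this is settled, the quadrant-by-quadrant Phragm\'en--Lindel\"of step is entirely routine.
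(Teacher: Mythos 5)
Your proof is correct and follows essentially the same route as the paper, which defers the details to \cite{DN3}, Lemma 3.4, but explicitly names the same three ingredients you use: the relations (\ref{SymAL1-AL3})--(\ref{SymAL2-AL4}) combined with the parity of the $a_{Lj}$ to get $|a_{Lj}(\lambda,k,iy)|\leq 1$ on the imaginary axis, followed by a quadrant-by-quadrant Phragm\'en--Lindel\"of argument with the auxiliary functions $a_{Lj}(\lambda,k,z)e^{\mp Az}$.
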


\begin{proof}
We refer the reader to \cite{DN3}, Lemma 3.4.
\end{proof}

\begin{remark}
For later use, we mention that we have also the corresponding estimates for the Jost functions $f_{Lj}(x,\lambda,k,z)$ and $f_{Rj}(x,\lambda,k,z)$. Precisely. for all $j=1,..,4$ and for all $x \in \R$,
\begin{eqnarray}\label{MainEstiF}
  |f_{Lj}(x,\lambda,k,z)| \leq C \, e^{|Re(z)| \int_x^{\infty} a(s) ds}, \\
  |f_{Rj}(x,\lambda,k,z)| \leq C \, e^{|Re(z)| \int_{-\infty}^x a(s) ds}.
\end{eqnarray}
\end{remark}

\begin{remark}
In \cite{DN3}, Lemma 3.6, it was shown that for Reissner-Nordström-de-Sitter black holes ($a = 0$ in our model), the scattering matrix at energy $0$ could be explicitely calculated and given precisely by
$$
  S(0,k,z) = \left[\begin{array}{cc} \cosh(zA)& i\sinh(zA)\\
-i \sinh(zA)&\cosh(zA) \end{array} \right], \quad A = \int_\R a(x) dx.
$$
As a by product, the full scattering matrix $S(0)$ did not determine uniquely the parameters of a RN-dS black hole. For Kerr-Newmann-de-Sitter
black holes ($a \ne 0$), the situation is completely different. First, due to the presence of the non vanishing phase $C(x,k)$ appearing in
the potential $q(x,k)$, we can not obtain explicit formulae. Second, we shall see in Section \ref{Inverse} that, even in the case of zero energy $\lambda=0$, the parameters of a KN-dS black hole are uniquely determined by the (partial) knowledge of the scattering matrix.
\end{remark}


\subsection{Nevanlinna class.} \label{Nevanlinna}

Let us denote the right half complex plane by $\Pi^+ = \{z \in\C: \ Re(z) >0\}$. We recall that the Nevanlinna class $N(\Pi^+)$ is defined as the set of all analytic functions $f(z)$ on $\Pi^+$ that satisfy the estimate
$$
  \sup_{0<r<1} \int_{-\pi}^{\pi} \ln^+ \Big| f\Big(\frac{1 - re^{i\theta}}{1+re^{i\theta}} \Big) \Big| d\theta < \infty,
$$
where $\ln^+(x) = \left\{ \begin{array}{cc} \ln x, & \ln x \geq 0,\\ 0, & \ln x <0. \end{array} \right.$ In \cite{Ra}, (see also \cite{DN3}, Lemma 3.8), the following lemma is proved

\begin{lemma} \label{NevanlinnaCriterion}
  Let $h \in H(\Pi^+)$ be an holomorphic function in $\Pi^+$ satisfying
  \begin{equation} \label{Esti}
   |h(z)| \leq C e^{A \,Re(z)}, \quad \forall z \in \Pi^+,
  \end{equation}
  where $A$ and $C$ are two constants. Then $h \in N(\Pi^+)$.
\end{lemma}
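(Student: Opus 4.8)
The plan is to reduce the statement to a known characterization of the Nevanlinna class on the unit disc via the conformal map between $\Pi^+$ and the disc $\D = \{w \in \C : |w| < 1\}$. First I would introduce the Cayley-type transform
$$
  \varphi : \D \longrightarrow \Pi^+, \qquad \varphi(w) = \frac{1-w}{1+w},
$$
which is a biholomorphism sending the disc onto the right half-plane, with $\varphi(0)=1$ and $\varphi^{-1}(z) = \frac{1-z}{1+z}$. The quantity appearing in the definition of $N(\Pi^+)$ is precisely $\sup_{0<r<1}\int_{-\pi}^\pi \ln^+|h(\varphi(re^{i\theta}))|\,d\theta$, so showing $h \in N(\Pi^+)$ amounts to showing that the pullback $g := h\circ\varphi$ lies in the classical Nevanlinna class $N(\D)$, i.e. that $\sup_{0<r<1}\int_{-\pi}^\pi \ln^+|g(re^{i\theta})|\,d\theta < \infty$.

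Next I would transport the hypothesis through $\varphi$. Writing $w = re^{i\theta}$, a direct computation gives
$$
  \mathrm{Re}\,\varphi(w) = \mathrm{Re}\,\frac{1-w}{1+w} = \frac{1-|w|^2}{|1+w|^2},
$$
which is nonnegative on $\D$ (so $\varphi$ indeed maps into $\overline{\Pi^+}$) and is integrable against $d\theta$ on each circle $|w|=r$ in a controlled way. From the bound \eqref{Esti} we get
$$
  \ln^+|g(w)| = \ln^+|h(\varphi(w))| \leq \ln^+\!\big( C e^{A\,\mathrm{Re}\,\varphi(w)} \big) \leq \ln^+ C + A\,\mathrm{Re}\,\varphi(w),
$$
using $A>0$ (without loss of generality, since if $A\le 0$ the bound \eqref{Esti} already makes $h$ bounded on $\Pi^+$ and the conclusion is immediate as bounded holomorphic functions lie in $N$). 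Hence
$$
  \int_{-\pi}^{\pi} \ln^+|g(re^{i\theta})|\,d\theta \leq 2\pi \ln^+ C + A \int_{-\pi}^{\pi} \frac{1-r^2}{|1+re^{i\theta}|^2}\,d\theta.
$$
The remaining integral is the classical Poisson-kernel identity: $\frac{1}{2\pi}\int_{-\pi}^{\pi}\frac{1-r^2}{|1+re^{i\theta}|^2}\,d\theta = 1$ for all $0\le r<1$ (it is the Poisson integral of the constant function $1$, evaluated at the point $-r$). Therefore the right-hand side equals $2\pi\ln^+ C + 2\pi A$, a constant independent of $r$, and taking the supremum over $0<r<1$ shows $g \in N(\D)$, i.e. $h \in N(\Pi^+)$.

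I do not expect a genuine obstacle here; the only points requiring care are bookkeeping ones: checking that $\varphi$ carries the defining integral for $N(\Pi^+)$ exactly to the defining integral for $N(\D)$ (which is immediate from the change of variables in the definition as written in the excerpt), verifying $\mathrm{Re}\,\varphi \ge 0$ on $\D$ so that the exponential bound is genuinely usable, and invoking the Poisson-kernel normalization $\frac{1}{2\pi}\int_{-\pi}^\pi \frac{1-r^2}{|1+re^{i\theta}|^2}\,d\theta = 1$. All of these are standard; the substance of the lemma is simply that an exponential-of-linear-growth bound in $\mathrm{Re}\,z$ becomes an $L^1(d\theta)$ bound on $\ln^+$ after the conformal transfer, because $\mathrm{Re}\,z$ pulls back to a positive harmonic function whose circular averages are uniformly bounded. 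This is exactly the argument in \cite{DN3}, Lemma 3.8, and I would simply reproduce it.
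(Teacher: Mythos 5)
Your proof is correct and is essentially the standard argument that the paper delegates to \cite{Ra} and \cite{DN3}, Lemma 3.8: pull back by the Cayley map $w \mapsto \frac{1-w}{1+w}$, bound $\ln^+|h\circ\varphi|$ by $\ln^+ C + A\,\mathrm{Re}\,\varphi$, and use that the circular means of the Poisson kernel $\frac{1-r^2}{|1+re^{i\theta}|^2}$ equal $2\pi$ uniformly in $r$. The bookkeeping points you flag (the subadditivity $\ln^+(ab)\le\ln^+a+\ln^+b$, positivity of $\mathrm{Re}\,\varphi$ on the disc, and the trivial case $A\le 0$) are all handled correctly, so there is nothing to add.
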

Thanks to Lemmata \ref{MainEsti} and \ref{NevanlinnaCriterion}, we thus get in our model
\begin{coro} \label{AL-Nevanlinna}
  For each $\lambda \in \R$ fixed, the applications $z \longrightarrow a_{Lj}(\lambda,k,z)_{|\Pi^+}$ belong to $N(\Pi^+)$.
\end{coro}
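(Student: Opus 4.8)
The plan is to combine two facts that are already in place: the exponential-type bound on the Faddeev/scattering coefficients, and the Nevanlinna membership criterion. Concretely, Corollary \ref{AL-Nevanlinna} is an immediate consequence of Lemma \ref{MainEsti} together with Lemma \ref{NevanlinnaCriterion}, so the proof should just be a short glue argument. First I would fix $\lambda \in \R$ and $k \in \frac{1}{2} + \Z$, and recall from Lemma \ref{AL-Analytic}(iii) that each map $z \longmapsto a_{Lj}(\lambda,k,z)$ is entire in $z$; in particular its restriction to the open right half-plane $\Pi^+ = \{z \in \C : \ Re(z) > 0\}$ is holomorphic, i.e. $a_{Lj}(\lambda,k,\cdot)_{|\Pi^+} \in H(\Pi^+)$.

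Next I would invoke Lemma \ref{MainEsti}, which gives for all $z \in \C$ the bound $|a_{Lj}(\lambda,k,z)| \leq e^{A |Re(z)|}$ with $A = \int_\R a(s)\, ds$. Restricting to $z \in \Pi^+$, where $Re(z) > 0$, this reads $|a_{Lj}(\lambda,k,z)| \leq e^{A\, Re(z)}$, which is exactly the hypothesis (\ref{Esti}) of Lemma \ref{NevanlinnaCriterion} with $C = 1$. Applying that lemma to $h(z) = a_{Lj}(\lambda,k,z)_{|\Pi^+}$ for each $j = 1,\dots,4$ yields $a_{Lj}(\lambda,k,\cdot)_{|\Pi^+} \in N(\Pi^+)$, which is the claim. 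I would state this explicitly for the transmission-type coefficient, noting that $1/T(\lambda,k,z) = a_{L1}(\lambda,k,z)$ is therefore Nevanlinna on $\Pi^+$, since that is the instance used later in the CAM argument.

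Honestly there is essentially no obstacle here: all the analytic substance — the iterative construction of the Faddeev functions, the $\cosh/\sinh$ estimates, the Phragmén–Lindelöf improvement to $e^{A|Re(z)|}$, and the integrability $A < \infty$ coming from the exponential decay of $a(x)$ at both horizons (Remark \ref{AsympHyp}) — has already been carried out in Lemmata \ref{ML-Analytic}, \ref{AL-Analytic}, \ref{MainEsti} and in \cite{DN3}. The only thing to be careful about is that Lemma \ref{NevanlinnaCriterion} is stated for functions on $\Pi^+$ with a bound of the form $Ce^{A\,Re(z)}$, so I must make sure to pass from the two-sided bound $e^{A|Re(z)|}$ (valid on all of $\C$) to the one-sided bound on $\Pi^+$; this is trivial but worth a line. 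The proof is thus two or three sentences, as written in the excerpt's proof environment.

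\begin{proof}
Fix $\lambda \in \R$ and $k \in \frac{1}{2} + \Z$, and let $j \in \{1,2,3,4\}$. By Lemma \ref{AL-Analytic}, the function $z \longmapsto a_{Lj}(\lambda,k,z)$ is entire; hence its restriction to $\Pi^+$ belongs to $H(\Pi^+)$. Moreover, by Lemma \ref{MainEsti}, we have $|a_{Lj}(\lambda,k,z)| \leq e^{A|Re(z)|}$ for all $z \in \C$, where $A = \int_\R a(s)\, ds < \infty$ thanks to the exponential decay of $a(x)$ at the horizons. Restricting to $z \in \Pi^+$, where $Re(z) > 0$, this gives $|a_{Lj}(\lambda,k,z)| \leq e^{A\, Re(z)}$, which is precisely the estimate (\ref{Esti}) with $C = 1$. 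Applying Lemma \ref{NevanlinnaCriterion} to $h = a_{Lj}(\lambda,k,\cdot)_{|\Pi^+}$, we conclude that $a_{Lj}(\lambda,k,\cdot)_{|\Pi^+} \in N(\Pi^+)$.
\end{proof}
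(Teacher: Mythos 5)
Your proof is correct and follows exactly the route the paper takes: the corollary is stated there as an immediate consequence of Lemma \ref{MainEsti} (the bound $|a_{Lj}(\lambda,k,z)| \leq e^{A|Re(z)|}$) combined with the Nevanlinna criterion of Lemma \ref{NevanlinnaCriterion}. Your added remarks on holomorphicity from Lemma \ref{AL-Analytic} and the passage from the two-sided to the one-sided bound on $\Pi^+$ are accurate but do not change the argument.
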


Let us prove now a usefull uniqueness theorem involving functions in the Nevanlinna class $N(\Pi^+)$. This result is very close to
\cite{Ra}, Theorem 1.3, and uses  the following result :

\begin{theorem}[\cite{Ru}, Thm 15.23] \label{UniquenessNevanlinnadisc}
Let $f$ be a holomorphic function in the unit disc $D$. Assume that $f$ belongs to the Nevanlinna class in $D$, that is
$$
\sup_{0<r<1} \ \int_{-\pi}^{\pi} \ln^+( \mid f(re^{i\theta} \mid) \ d\theta \ < \ \infty.
$$
Assume also that $f(\alpha_l)=0$ for a sequence $\alpha_l \in D$ satisfying
$$
\sum_{l=0}^{\infty} (1- |\alpha_l|) = \infty.
$$
Then, $f \equiv 0$ in $D$.
\end{theorem}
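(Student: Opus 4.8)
The plan is to recall the classical argument behind this statement, which is the F.~and R.~Nevanlinna theorem on the zeros of functions in the Nevanlinna class (this is Theorem 15.23 in \cite{Ru}), so strictly speaking one only has to cite \cite{Ru}; but let me indicate how the proof goes. First I would reduce to the case $f(0) \neq 0$ by dividing $f$ by $z^{m}$, where $m$ is the order of vanishing of $f$ at the origin: the modulus of $z^{-m}f(z)$ differs from that of $f$ by the factor $r^{-m}\to 1$ as $r\to 1^-$, so $z^{-m}f$ still belongs to the Nevanlinna class of $D$, and removing one zero at the origin does not affect the finiteness of the Blaschke-type sum $\sum (1-|\alpha_l|)$.

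Next I would apply Jensen's formula on the circles $|z| = r$, $0<r<1$, choosing $r$ different from all the (countably many) moduli $|a_n|$ of the zeros of $f$. Writing $\{a_n\}$ for the zeros of $f$ in $D$ listed with multiplicity, Jensen's formula reads
\[
  \ln|f(0)| + \sum_{|a_n| < r} \ln\frac{r}{|a_n|} \;=\; \frac{1}{2\pi}\int_{-\pi}^{\pi} \ln|f(re^{i\theta})|\, d\theta .
\]
Using $\ln|w| \leq \ln^+|w|$ and the defining bound of the Nevanlinna class, the right-hand side is bounded above by a constant $M$ independent of $r$. Since each term $\ln(r/|a_n|)$ with $|a_n| < r$ is nonnegative, this gives $\sum_{|a_n| < r} \ln(r/|a_n|) \leq M - \ln|f(0)|$ uniformly in $r$; letting $r \to 1^-$ along admissible radii and using monotone convergence yields $\sum_n \ln(1/|a_n|) < \infty$, and hence the Blaschke condition $\sum_n (1 - |a_n|) < \infty$ since $-\ln t \geq 1 - t$ on $(0,1]$.

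Finally I would argue by contraposition: if $f \not\equiv 0$ while $f(\alpha_l) = 0$ for all $l$, then each $\alpha_l$ is one of the $a_n$. In our intended application the $\alpha_l$ are pairwise distinct (they will be the eigenvalues $\mu_{kl}(\lambda)$ of the angular operator, which grow linearly in $l$ by Proposition \ref{Growth-mukl}), so $\{\alpha_l\}$ is a subset of the distinct zeros of $f$ and therefore $\sum_l (1 - |\alpha_l|) \leq \sum_n (1 - |a_n|) < \infty$, contradicting the hypothesis $\sum_l (1 - |\alpha_l|) = \infty$. Hence $f \equiv 0$.

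The only real subtlety, and the point I would be most careful about, is the limiting procedure $r \to 1^-$ in Jensen's formula — one must select $r$ avoiding the zero circles and exploit the \emph{uniform} Nevanlinna bound — together with the (standard but essential) observation that the sequence $(\alpha_l)$ must be understood as consisting of distinct points, since otherwise the statement as literally written would fail (take $f(z) = z - \frac{1}{2}$ with $\alpha_l \equiv \frac{1}{2}$). Everything else is a routine application of Jensen's formula and the elementary inequality $-\ln t \geq 1 - t$.
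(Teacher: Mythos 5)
Your proof is correct and is precisely the standard Jensen's-formula/Blaschke-condition argument behind Rudin's Theorem 15.23, which the paper simply cites without reproving. Your remark that the $\alpha_l$ must be pairwise distinct for the statement to be literally true (otherwise $f(z)=z-\tfrac{1}{2}$ with $\alpha_l\equiv\tfrac{1}{2}$ is a counterexample) is well taken, and is harmless in the paper's application since the $\mu_{kl}(\lambda)$ are strictly increasing in $l$.
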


Now, we can establish the following theorem.

\begin{theorem} \label{UniquenessNevanlinna}
  For $\lambda \in \R$ and $k \in \frac{1}{2} + \Z$ fixed, let $h \in N(\Pi^+)$ satisfy $h(\muk) = 0$ for all $l \in \mathcal{L}$ where $\mathcal{L} \subset \N^*$ with $\displaystyle\sum_{l \in \mathcal{L}} \frac{1}{l} = \infty$. Then $h \equiv 0$ in $\Pi^+$.
\end{theorem}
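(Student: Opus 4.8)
The plan is to reduce the statement on the half-plane $\Pi^+$ to the uniqueness result on the unit disc recalled in Theorem~\ref{UniquenessNevanlinnadisc}, by pulling everything back through the standard Cayley transform. First I would introduce the conformal bijection $\varphi : D \to \Pi^+$ given by $\varphi(w) = \frac{1-w}{1+w}$, whose inverse $z \mapsto \frac{1-z}{1+z}$ maps $\Pi^+$ onto $D$, and set $g = h \circ \varphi$. By the very definition of the Nevanlinna class $N(\Pi^+)$ used in Section~\ref{Nevanlinna} (the supremum of $\int_{-\pi}^{\pi} \ln^+|h(\varphi(re^{i\theta}))|\,d\theta$ over $0<r<1$ is finite), the function $g$ is holomorphic on $D$ and lies in the Nevanlinna class of the disc, so Theorem~\ref{UniquenessNevanlinnadisc} applies to $g$ as soon as we produce a zero sequence of $g$ in $D$ satisfying the Blaschke-type divergence condition $\sum (1-|\alpha_l|) = \infty$.

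The zeros of $g$ are exactly the points $w_l = \varphi^{-1}(\muk) = \frac{1 - \muk}{1 + \muk}$ for $l \in \mathcal{L}$, since $h(\muk) = 0$ and the $\muk$ are positive reals (they are positive eigenvalues of the self-adjoint operator $A_k(\lambda)$), hence genuinely lie in $\Pi^+$. So the heart of the argument is the elementary estimate relating $1 - |w_l|$ to $\muk$: for $t = \muk > 0$ one computes $1 - |w_l| = 1 - \frac{|1-t|}{1+t}$, which for $t \geq 1$ equals $1 - \frac{t-1}{t+1} = \frac{2}{t+1}$. Thus $1 - |w_l| = \frac{2}{1+\muk}$ for all large $l$. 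Now invoke Proposition~\ref{Growth-mukl}: for fixed $\lambda$ and $k$ there is a constant $C>0$ with $\muk \leq C\,l$ for $l$ large, whence $1 - |w_l| \geq \frac{2}{1 + Cl} \geq \frac{c'}{l}$ for some $c'>0$ and all large $l$. Since by hypothesis $\sum_{l \in \mathcal{L}} \frac{1}{l} = \infty$, it follows that $\sum_{l \in \mathcal{L}} (1 - |w_l|) = \infty$.

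Applying Theorem~\ref{UniquenessNevanlinnadisc} to $g$ with the zero sequence $\{w_l\}_{l \in \mathcal{L}}$ then gives $g \equiv 0$ on $D$, and transporting back through $\varphi$ yields $h \equiv 0$ on $\Pi^+$. I do not expect any serious obstacle here: the only points requiring a little care are checking that the Cayley transform genuinely carries the half-plane Nevanlinna class to the disc Nevanlinna class (this is immediate from the change of variables built into the two definitions, which are designed to match) and verifying that the divergence condition $\sum(1-|\alpha_l|)=\infty$ transfers from the Müntz condition $\sum 1/l = \infty$ — which is exactly what Proposition~\ref{Growth-mukl} (and the attendant remark that $\muk \to +\infty$ grows at most linearly in $l$) is there to guarantee. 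This is the same mechanism already used in the analogous Lemma of \cite{DN3}, so the proof is essentially a transcription of that argument into the present notation.
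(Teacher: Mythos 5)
Your proposal is correct and is essentially the same argument as the paper's own proof: the same Cayley transform $w \mapsto \frac{1-w}{1+w}$, the same zero sequence $\alpha_l = \frac{1-\muk}{1+\muk}$ with $1-|\alpha_l| = \frac{2}{1+\muk}$, the same appeal to the linear growth estimate $\muk \leq C l$ from Appendix \ref{Estimate-mukl} to convert the M\"untz condition into the Blaschke divergence, and the same application of Theorem \ref{UniquenessNevanlinnadisc}. Nothing further is needed.
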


\begin{proof}
The function $g : D \rightarrow \Pi^+$ defined by
$$
g(z) = \frac{1-z}{1+z}
$$
maps conformally $D$ onto $\Pi^+$ and for $u \in \Pi^+$,
$$
g^{-1}(u) = \frac{1-u}{1+u}.
$$
Let us define $f : D \rightarrow \C$ by ${\displaystyle{f(z)= h(g(z))= h(\frac{1-z}{1+z}) }}$. Clearly, $f$ belongs to the Nevanlinna class in $D$ if $h$ belong s to the Nevanlinna class in $\Pi^+$. Moreover, from our assumptions, we have $f(\alpha_l)=0$ for ${\displaystyle{\alpha_l = \frac{1-\muk}{1+\muk}}}$. But according to Appendix \ref{Estimate-mukl}, there exist suitable constants $ 0 < C_1 < C_2$ such that for all $l \in \N^*$
$$
  C_1 l \leq \muk \leq C_2 l
$$
Hence we get for $l$ large enough,
$$
1-| \alpha_l | = 1- \left| \frac{1-\muk}{1+\muk} \right| = \frac{2}{1+\muk} \geq \frac{2}{1 + C_2 l},
$$
Thus,
$$
\sum_{l \in \mathcal{L}} \ (1- |\alpha_l|) = \infty.
$$
From Theorem \ref{UniquenessNevanlinnadisc}, we obtain $ f \equiv 0$ and then $ h \equiv 0$.
\end{proof}

Now we deduce from Corollary \ref{AL-Nevanlinna} and Theorem \ref{UniquenessNevanlinna}
\begin{coro} \label{AL-Uniqueness}
  Consider two KN-dS black holes and denote by $a_{Lj}(\lambda,k,z)$ and $\tilde{a}_{Lj}(\lambda,k,z)$ the corresponding scattering data for fixed $\lambda$ and $k$. Let $\mathcal{L} \subset \N^*$ satisfying $\displaystyle\sum_{l \in \mathcal{L}} \frac{1}{l} = \infty$. Assume that one of the following equality holds
$$
  a_{Lj}(\lambda,k,\muk) = \tilde{a}_{Lj}(\lambda,k,\muk), \quad \forall l \in \mathcal{L}, \quad j=1,..,4.
$$
Then
$$
  a_{Lj}(\lambda,k,z) = \tilde{a}_{Lj}(\lambda,k,z), \quad \forall z \in \C, \quad j=1,..,4.
$$
\end{coro}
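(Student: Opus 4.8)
The plan is to reduce the statement to the uniqueness Theorem \ref{UniquenessNevanlinna}, applied to the \emph{difference} of the two scattering data. Fix $\lambda \in \R$, $k \in \frac{1}{2}+\Z$ and an index $j \in \{1,\dots,4\}$ for which the hypothesis $a_{Lj}(\lambda,k,\mu_{kl}(\lambda)) = \tilde{a}_{Lj}(\lambda,k,\mu_{kl}(\lambda))$, $l \in \mathcal{L}$, is assumed, and set
$$
  h(z) = a_{Lj}(\lambda,k,z) - \tilde{a}_{Lj}(\lambda,k,z).
$$
First I would note that $h$ is entire in $z$: this follows at once from Lemma \ref{AL-Analytic}(i),(iii) applied to both black holes, which moreover shows that $h$ is even in $z$ for $j=1,4$ and odd for $j=2,3$. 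Next, using the exponential estimate of Lemma \ref{MainEsti} for each black hole, namely $|a_{Lj}(\lambda,k,z)| \le e^{A|\mathrm{Re}(z)|}$ with $A = \int_\R a(x)\,dx$ and $|\tilde{a}_{Lj}(\lambda,k,z)| \le e^{\tilde{A}|\mathrm{Re}(z)|}$ with $\tilde{A} = \int_\R \tilde{a}(x)\,dx$, one gets on the right half-plane $\Pi^+ = \{\mathrm{Re}(z)>0\}$ the bound
$$
  |h(z)| \le 2\, e^{\max(A,\tilde{A})\,\mathrm{Re}(z)}, \qquad \forall z \in \Pi^+ .
$$
By Lemma \ref{NevanlinnaCriterion}, the restriction $h_{|\Pi^+}$ therefore belongs to the Nevanlinna class $N(\Pi^+)$. (It is cleaner to re-derive this bound for $h$ directly from Lemma \ref{MainEsti} than to invoke stability of $N(\Pi^+)$ under subtraction.)

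By hypothesis, $h(\mu_{kl}(\lambda)) = 0$ for every $l \in \mathcal{L}$, and the $\mu_{kl}(\lambda)$ are positive real numbers — being the positive eigenvalues of the angular operator $A_k(\lambda)$, labelled by $l \in \N^*$ — hence lie in $\Pi^+$. Since $\mathcal{L} \subset \N^*$ satisfies the M\"untz condition $\sum_{l\in\mathcal{L}} \frac1l = \infty$, Theorem \ref{UniquenessNevanlinna} applies verbatim and yields $h \equiv 0$ on $\Pi^+$; recall that the proof of that theorem uses the linear bounds $C_1 l \le \mu_{kl}(\lambda) \le C_2 l$ from Appendix \ref{Estimate-mukl} to turn the condition on $l$ into $\sum_{l\in\mathcal{L}}(1-|\alpha_l|) = \infty$ for $\alpha_l = \frac{1-\mu_{kl}(\lambda)}{1+\mu_{kl}(\lambda)}$. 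Finally, $h$ is entire and vanishes on the nonempty open set $\Pi^+$, so by the identity theorem $h \equiv 0$ on all of $\C$, i.e. $a_{Lj}(\lambda,k,z) = \tilde{a}_{Lj}(\lambda,k,z)$ for every $z \in \C$. Repeating the argument for each of the indices $j=1,\dots,4$ (or for whichever indices appear in the hypothesis) gives the corollary.

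I do not expect any serious obstacle here: the substantive work has already been carried out in Lemma \ref{MainEsti} (the refined Phragm\'en--Lindel\"of bound $|a_{Lj}| \le e^{A|\mathrm{Re}(z)|}$), in Lemma \ref{NevanlinnaCriterion} (exponential growth on $\Pi^+$ implies membership in $N(\Pi^+)$), in Theorem \ref{UniquenessNevanlinna} (a Nevanlinna function vanishing on the $\mu_{kl}(\lambda)$ with $l$ ranging over a M\"untz set is identically zero), and in the growth estimates of Appendix \ref{Estimate-mukl}. The only mildly delicate point — and the one worth spelling out carefully in the final write-up — is the transfer of the exponential bound from $a_{Lj}$ and $\tilde a_{Lj}$ to their difference on $\Pi^+$, after which everything is assembled mechanically.
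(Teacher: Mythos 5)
Your proof is correct and follows exactly the paper's own route: form the difference $a_{Lj}-\tilde a_{Lj}$, use Lemma \ref{MainEsti} and Lemma \ref{NevanlinnaCriterion} to place it in $N(\Pi^+)$, apply Theorem \ref{UniquenessNevanlinna} with the M\"untz condition on the $\mu_{kl}(\lambda)$, and extend to all of $\C$ by analyticity. The paper's proof is just a two-line version of the same argument; your write-up merely makes the exponential bound on the difference and the final identity-theorem step explicit.
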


\begin{proof}
Using Lemma \ref{NevanlinnaCriterion}, we see that $a_{Lj}(\lambda,k,z) - \tilde{a}_{Lj}(\lambda,k,z)$ belongs to $N(\Pi^+)$. Then, we apply Theorem \ref{UniquenessNevanlinna}.
\end{proof}

\begin{remark}
Till here, we have seen that the scattering data $a_{Lj}(\lambda,k,z)$ are uniquely determined as functions of $z \in \C$ if we know their values on the physical angular momenta $\muk, \ \forall l \in \cal{L}$. We now improve this uniqueness result assuming that only the physical scattering data - precisely the reflection coefficients $\hat{L}_{kl}(\lambda)$ or $\hat{R}_{kl}(\lambda)$ -  are known (up to phase factors in order to obtain a result independent of our choice of radial coordinates, (see (\ref{RW-Unphys-SM-Red})).
\end{remark}


\begin{prop} \label{Uniqueness}
  Consider two KN-dS black holes and denote by $Z$ and $\tilde{Z}$ all the corresponding scattering data. Assume that, for $\lambda$ and $k$ fixed, there exists a constant $d(\lambda,k) \in \R$ such that one of the following equality holds for all $l \in \mathcal{L}$
  with $\mathcal{L} \subset \N^*$ satisfying $\displaystyle\sum_{l \in \mathcal{L}} \frac{1}{l} = \infty$:
  \begin{eqnarray}
    \hat{L}_{kl}(\lambda) & = & e^{-id(\lambda,k)} \  \hat{\tilde{L}}_{kl}(\lambda), \label{L} \\
    \hat{R}_{kl}(\lambda) & = & e^{id(\lambda,k)} \ \hat{\tilde{R}}_{kl}(\lambda). \label{R}
  \end{eqnarray}
  Assume moreover that $\muk = \tilde{\mu}_{kl}(\lambda), \ \forall l \in \mathcal{L}$. Then for all $z \in \C$,
  \begin{equation} \label{ALUniqueness}
    \left[ \begin{array}{cc} a_{L1}(\lambda,k,z)& a_{L2}(\lambda,k,z)\\ a_{L3}(\lambda,k,z)&a_{L4}(\lambda,k,z) \end{array}
    \right] = \left[ \begin{array}{cc} \tilde{a}_{L1}(\lambda,k,z)&  e^{id(\lambda,k)}\  \tilde{a}_{L2}(\lambda,k,z)\\
    e^{-id(\lambda,k)}\  \tilde{a}_{L3}(\lambda,k,z)& \tilde{a}_{L4}(\lambda,k,z) \end{array} \right].
  \end{equation}
  \end{prop}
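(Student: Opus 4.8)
The plan is to convert the assumption on the \emph{physical} reflection coefficients $\hat L_{kl}$ or $\hat R_{kl}$ into a statement about the \emph{analytic} functions $a_{Lj}(\lambda,k,z)$ evaluated at the angular momenta $\muk$, and then to invoke the Nevanlinna uniqueness machinery (Corollary \ref{AL-Nevanlinna}, Theorem \ref{UniquenessNevanlinna}) to upgrade equality at $z=\muk$, $l\in\mathcal L$, to equality for all $z\in\C$. First I would recall that, by definition (\ref{SR-SM2}), $\hat L_{kl}(\lambda)=a_{L3}(\lambda,k,\muk)/a_{L1}(\lambda,k,\muk)$ and $\hat R_{kl}(\lambda)=-a_{L2}(\lambda,k,\muk)/a_{L1}(\lambda,k,\muk)$, and that $|a_{L1}|^2-|a_{L3}|^2=1$, $|a_{L4}|^2-|a_{L2}|^2=1$ from (\ref{ALUnitarity}). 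Since $\muk=\tilde\mu_{kl}(\lambda)$ for $l\in\mathcal L$, all functions below are evaluated at the same points, so I will suppress that and just write $a_{Lj}(\muk)$.

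Suppose (\ref{R}) holds: $-a_{L2}(\muk)/a_{L1}(\muk) = -e^{id}\,\tilde a_{L2}(\muk)/\tilde a_{L1}(\muk)$ for $l\in\mathcal L$. Taking moduli and using $|\hat R_{kl}|^2 = 1-|\hat T_{kl}|^2 = 1 - 1/|a_{L1}(\muk)|^2$, I get $|a_{L1}(\muk)| = |\tilde a_{L1}(\muk)|$, hence also $|a_{L3}(\muk)| = |\tilde a_{L3}(\muk)|$, $|a_{L2}(\muk)|=|\tilde a_{L2}(\muk)|$, $|a_{L4}(\muk)|=|\tilde a_{L4}(\muk)|$ for $l\in\mathcal L$. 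The point is now that these are equalities of \emph{moduli} of analytic functions at a Müntz sequence, which is not yet enough; one has to go back to the actual (non-modular) scattering relations. Here I would use the precise structure: the full $2\times2$ matrix $A_L$ is determined by the reflection coefficient together with the unitarity relations (\ref{AL-Relation})/(\ref{ALUnitarity}) and the normalization $a_{L1}\to 1$, $z\to 0$ (from (\ref{ALRepresentation}), since $a_{Lj}$ is even/odd and $a_{L1}(0)=a_{L4}(0)=1$). Concretely, from $\hat R = -a_{L2}/a_{L1}$ and $|a_{L1}|^2 - |a_{L3}|^2 = 1$ and the third relation in (\ref{ALUnitarity}), one reconstructs $a_{L1},a_{L2},a_{L3},a_{L4}$ at $\muk$ \emph{up to the ambiguity absorbed in the phase $d(\lambda,k)$}: the relation (\ref{R}) with the common phase is exactly engineered so that $a_{L1}(\muk)=\tilde a_{L1}(\muk)$, $a_{L4}(\muk)=\tilde a_{L4}(\muk)$, $a_{L2}(\muk)=e^{id}\tilde a_{L2}(\muk)$, $a_{L3}(\muk)=e^{-id}\tilde a_{L3}(\muk)$, using the sign/normalization conventions and the dependence of the scattering matrix on the Regge-Wheeler constant worked out in Proposition \ref{SM-Red-RW} and (\ref{RW-Unphys-SM-Red}). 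The case (\ref{L}) is symmetric, with $\hat L = a_{L3}/a_{L1}$ playing the role of $\hat R$.

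Once I have $a_{L1}(\lambda,k,\muk) = \tilde a_{L1}(\lambda,k,\muk)$ and $a_{L4}(\lambda,k,\muk)=\tilde a_{L4}(\lambda,k,\muk)$ for $l\in\mathcal L$, and $e^{-id}a_{L2}(\lambda,k,\muk)=\tilde a_{L2}(\lambda,k,\muk)$, $e^{id}a_{L3}(\lambda,k,\muk)=\tilde a_{L3}(\lambda,k,\muk)$, I apply Corollary \ref{AL-Nevanlinna}: each of $a_{L1}-\tilde a_{L1}$, $a_{L4}-\tilde a_{L4}$, $e^{-id}a_{L2}-\tilde a_{L2}$, $e^{id}a_{L3}-\tilde a_{L3}$ is, restricted to $\Pi^+$, in the Nevanlinna class $N(\Pi^+)$ (since each summand satisfies the bound (\ref{MainEst}) and $N(\Pi^+)$ is closed under the relevant operations), and it vanishes on the sequence $\muk$, $l\in\mathcal L$, which satisfies the Müntz condition $\sum_{l\in\mathcal L} 1/l = \infty$ — hence, since $C_1 l \le \muk \le C_2 l$ by Appendix \ref{Estimate-mukl}, it satisfies the Blaschke-type divergence hypothesis of Theorem \ref{UniquenessNevanlinna}. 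Therefore each difference vanishes identically on $\Pi^+$, and by analytic continuation (the $a_{Lj}$ are entire, Lemma \ref{AL-Analytic}) on all of $\C$. This gives exactly (\ref{ALUniqueness}).

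The main obstacle I expect is the bookkeeping in the second paragraph: carefully extracting the \emph{signed} (not just modular) equalities $a_{Lj}(\muk)=(\text{phase})\cdot\tilde a_{Lj}(\muk)$ from the knowledge of one reflection coefficient up to a phase. This is where one must combine (i) the explicit form of $\hat R$ or $\hat L$ as a ratio $a_{Lj}/a_{L1}$, (ii) the three unitarity relations (\ref{ALUnitarity}) which, given $a_{L1}$ and one off-diagonal entry at a point, pin down the remaining entries up to an overall unimodular factor, and (iii) the normalization $a_{L1}(\lambda,k,0)=1$ which fixes that factor to be the claimed $e^{\pm id(\lambda,k)}$, consistent with the transformation law of Proposition \ref{SM-Red-RW}. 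Everything else — the Nevanlinna class membership, the Müntz growth of $\muk$, and the final analytic continuation — is routine given the results already established.
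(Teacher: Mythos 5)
Your overall skeleton (pass from $\hat L_{kl}$ or $\hat R_{kl}$ to the entire functions $a_{Lj}$, then invoke the Nevanlinna class and the M\"untz condition on the $\muk$ via Theorem \ref{UniquenessNevanlinna} and the growth estimate of Appendix \ref{Estimate-mukl}) is the right one. However, the middle step --- the pointwise ``reconstruction'' of the signed values $a_{Lj}(\lambda,k,\muk)$ from one reflection coefficient plus unitarity plus the normalization at $z=0$ --- has a genuine gap. At a single point $z_0=\muk$, the datum $\hat R_{kl}(\lambda)=-a_{L2}(z_0)/a_{L1}(z_0)$ together with (\ref{ALUnitarity}) determines only $|a_{L1}(z_0)|=(1-|\hat R_{kl}(\lambda)|^2)^{-1/2}$, not the argument of $a_{L1}(z_0)$: you are left with $a_{L1}(\muk)=e^{i\phi_l}\,\tilde a_{L1}(\muk)$ for phases $\phi_l$ that may depend on $l$. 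The normalization $a_{L1}(\lambda,k,0)=1$ cannot fix these phases until you have an identity valid for \emph{all} $z$, and you cannot get such an identity before running the Nevanlinna argument --- so as written the reconstruction is circular, and the differences $a_{L1}-\tilde a_{L1}$, etc., are not known to vanish at the $\muk$.

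The way around this (and the route of the proof the paper imports from \cite{DN3}, Proposition 3.12) is to apply Theorem \ref{UniquenessNevanlinna} not to the individual differences but to the single cross-product
$$
  h(z)=a_{L3}(\lambda,k,z)\,\tilde a_{L1}(\lambda,k,z)-e^{-id(\lambda,k)}\,\tilde a_{L3}(\lambda,k,z)\,a_{L1}(\lambda,k,z),
$$
which vanishes at every $\muk$, $l\in\mathcal L$, directly from (\ref{L}) and $\muk=\tilde\mu_{kl}(\lambda)$, with no knowledge of the individual phases required. By Lemma \ref{MainEsti}, $h$ is entire and satisfies a bound of the form $C e^{B\,Re(z)}$ on $\Pi^+$, hence lies in $N(\Pi^+)$ by Lemma \ref{NevanlinnaCriterion}, and therefore $h\equiv 0$. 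Only after this do you separate the entries: insert the resulting identity $a_{L3}/a_{L1}=e^{-id(\lambda,k)}\,\tilde a_{L3}/\tilde a_{L1}$ into the analytically continued unitarity relations (\ref{SymAL1-AL3})--(\ref{SymAL2-AL4}) to obtain $a_{L1}(z)\overline{a_{L1}(\bar z)}=\tilde a_{L1}(z)\overline{\tilde a_{L1}(\bar z)}$ for all $z\in\C$, and then use the coincidence of zeros, Hadamard factorization, the parity of $a_{L1}$ and the normalization $a_{L1}(\lambda,k,0)=1$ to conclude $a_{L1}=\tilde a_{L1}$, whence the remaining three equalities of (\ref{ALUniqueness}). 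I recommend you restructure your argument in that order; the endpoints of your proposal then go through unchanged.
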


\begin{proof}
The proof is identical to \cite{DN3}, Proposition 3.12. We use Corollary \ref{AL-Nevanlinna} and the equality
$$
\hat{L}_{kl}(\lambda) = \frac{a_{L3}(\lambda,k, \muk)}{a_{L1}(\lambda,k, \muk)}.
$$
\end{proof}

\par
Analogously, if we assume that only the transmission coefficient $\hat{T}_{kl}(\lambda)$ is known, we have the following result

\begin{prop} \label{Uniqueness2}
  Consider two KN-dS black holes and denote by $Z$ and $\tilde{Z}$ all the corresponding scattering data. Assume that for $\lambda$ and $k$ fixed, we have
$$
  \hat{T}_{kl}(\lambda) = \hat{\tilde{T}}_{kl}(\lambda), \quad \muk = \tilde{\mu}_{kl}(\lambda),
$$
for all $l \in \mathcal{L} \subset \N^*$ satisfying $\displaystyle\sum_{l \in \mathcal{L}} \frac{1}{l} = \infty$. Then, there exists $d(\lambda,k)\in \R$ such  that for all $z \in \C$
\begin{equation} \label{ALUniqueness2}
    \left[ \begin{array}{cc} a_{L1}(\lambda,k,z)& a_{L2}(\lambda,k,z)\\ a_{L3}(\lambda,k,z)&a_{L4}(\lambda,k,z) \end{array}
    \right] = \left[ \begin{array}{cc} \tilde{a}_{L1}(\lambda,k,z)& e^{i d(\lambda,k)} \tilde{a}_{L2}(\lambda,k,z)\\
    e^{-i d(\lambda,k)} \tilde{a}_{L3}(\lambda,k,z)& \tilde{a}_{L4}(\lambda,k,z) \end{array} \right].
\end{equation}
\end{prop}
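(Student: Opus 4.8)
The plan is to imitate the proof of the reflection‑coefficient case (Proposition \ref{Uniqueness}), feeding in the transmission coefficient instead of $\hat{L}_{kl}$ or $\hat{R}_{kl}$, and then to recover the off‑diagonal entries of $A_L$ from the algebraic identities that tie its four entries together. First I would translate the hypothesis: since $\hat{T}(\lambda,k,z) = 1/a_{L1}(\lambda,k,z)$ by (\ref{SR-SM2}) and $\hat{T}_{kl}(\lambda) = \hat{T}(\lambda,k,\muk)$ by (\ref{SM-Red-Unphysical}), the assumptions $\hat{T}_{kl}(\lambda) = \hat{\tilde{T}}_{kl}(\lambda)$ and $\muk = \tilde{\mu}_{kl}(\lambda)$ for $l \in \mathcal{L}$ read
$$
  a_{L1}(\lambda,k,\muk) = \tilde{a}_{L1}(\lambda,k,\muk), \qquad \forall l \in \mathcal{L}.
$$
Corollary \ref{AL-Uniqueness} — which rests on Lemmata \ref{MainEsti} and \ref{NevanlinnaCriterion}, on the M\"untz condition for $\mathcal{L}$, and on the linear growth $C_1 l \leq \muk \leq C_2 l$ of Appendix \ref{Estimate-mukl} — then upgrades this to $a_{L1}(\lambda,k,z) = \tilde{a}_{L1}(\lambda,k,z)$ for \emph{all} $z \in \C$.

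Next I would propagate this to the other diagonal entry and to the product of the off‑diagonal ones. The symmetry (\ref{ALSym1}), $a_{L1}(\lambda,k,z) = \overline{a_{L4}(\lambda,k,\bar z)}$, gives at once $a_{L4}(\lambda,k,z) = \tilde{a}_{L4}(\lambda,k,z)$ on $\C$. Moreover $\det A_L(\lambda,k,z) \equiv 1$ (from $\det F_L = \det F_R = 1$, see (\ref{determinant}), and $F_L = F_R A_L$; equivalently, rewrite (\ref{SymAL1-AL3}) using (\ref{ALSym1})–(\ref{ALSym2}) to get $a_{L1}a_{L4} - a_{L2}a_{L3} = 1$). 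Subtracting the two determinant identities and using $a_{L1}a_{L4} = \tilde{a}_{L1}\tilde{a}_{L4}$ yields the identity of entire functions
$$
  a_{L2}(\lambda,k,z)\, a_{L3}(\lambda,k,z) = \tilde{a}_{L2}(\lambda,k,z)\, \tilde{a}_{L3}(\lambda,k,z), \qquad \forall z \in \C .
$$

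The remaining, and in my view hardest, step is to split this product and produce the common phase $d(\lambda,k)$. By the symmetry (\ref{ALSym2}), $a_{L3}(\lambda,k,z) = \overline{a_{L2}(\lambda,k,\bar z)}$, so the identity above forces $|a_{L2}(\lambda,k,x)| = |\tilde{a}_{L2}(\lambda,k,x)|$ on the real axis, while $a_{L2}(\lambda,k,\cdot)$ and $\tilde{a}_{L2}(\lambda,k,\cdot)$ are odd entire functions of exponential type obeying the Cartwright‑type bound (\ref{MainEst}), with restrictions to $\Pi^+$ (and to $\Pi^-$) in the Nevanlinna class by Corollary \ref{AL-Nevanlinna}. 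Following the argument of \cite{DN3} verbatim — Hadamard factorization of such functions together with the Blaschke/Krein description of Nevanlinna functions: the conjugate‑symmetry of the zero sets combined with $a_{L2}a_{L3} = \tilde{a}_{L2}\tilde{a}_{L3}$ forces $a_{L2}$ and $\tilde{a}_{L2}$ to have the same zeros with multiplicities, and comparison of the leading coefficients $a_{L2}(\lambda,k,z) = -iz\int_{\R} e^{-2i\lambda x}q_k(x)\,dx + O(z^3)$ (Lemma \ref{AL-Analytic}) excludes a residual factor $e^{cz}$ — one gets a real constant $d(\lambda,k)$ with $a_{L2}(\lambda,k,z) = e^{id(\lambda,k)}\tilde{a}_{L2}(\lambda,k,z)$ for all $z \in \C$. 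Feeding this into (\ref{ALSym2}) gives $a_{L3}(\lambda,k,z) = e^{-id(\lambda,k)}\tilde{a}_{L3}(\lambda,k,z)$, and together with $a_{L1} = \tilde{a}_{L1}$, $a_{L4} = \tilde{a}_{L4}$ this is exactly (\ref{ALUniqueness2}).
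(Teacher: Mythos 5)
Your first three steps are certainly the intended route and are correct: $\hat T_{kl}(\lambda) = 1/a_{L1}(\lambda,k,\muk)$ together with $\muk = \tilde{\mu}_{kl}(\lambda)$ gives $a_{L1} = \tilde a_{L1}$ on a M\"untz sequence, Corollary \ref{AL-Uniqueness} upgrades this to all of $\C$, the symmetry (\ref{ALSym1}) yields $a_{L4} = \tilde a_{L4}$, and $\det A_L \equiv 1$ yields $a_{L2}a_{L3} = \tilde a_{L2}\tilde a_{L3}$ as entire functions. Note that the paper itself records nothing beyond a pointer to \cite{DN3}, Proposition 3.13, so up to this point you have actually reconstructed more than the paper writes down.

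The last step, however, has a genuine gap. Writing $f^*(z) = \overline{f(\bar z)}$ and using (\ref{ALSym2}), the identity you reach is $a_{L2}\,a_{L2}^* = \tilde a_{L2}\,\tilde a_{L2}^*$. What this determines is the zero multiset $Z(a_{L2}) \cup \overline{Z(a_{L2})}$; it does \emph{not} follow that $Z(a_{L2}) = Z(\tilde a_{L2})$, because any non-real zero may be reassigned from one factor to its conjugate in the other. Concretely, $f(z) = z(z^2 - 2i)$ and $g(z) = z(z^2 + 2i)$ are odd, of exponential type, satisfy $ff^* = gg^*$ and $|f'(0)| = |g'(0)|$, yet $f \neq e^{ic} g$ for any real $c$; so oddness, the Cartwright-type bound (\ref{MainEst}), the product identity and the comparison of the coefficients of $z$ are jointly insufficient to force $a_{L2} = e^{id(\lambda,k)}\tilde a_{L2}$. (This is the familiar zero-flipping ambiguity of one-dimensional inverse scattering, where $T$ and $|R|$ do not determine $R$.) Your appeal to ``conjugate-symmetry of the zero sets'' does not close this, since $Z(a_{L2})$ is only known to be symmetric under $z \mapsto -z$, not under $z \mapsto \bar z$. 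To finish you must supply the missing ingredient that decides, for each conjugate pair of zeros of the product, which member belongs to $a_{L2}$ — for instance a localization of the zeros of $a_{L2}(\lambda,k,\cdot)$ in a conjugation-invariant set, or whatever device \cite{DN3} actually deploys in its Proposition 3.13 — and as written that ingredient is absent.
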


\begin{proof}
 The proof of (\ref{ALUniqueness2}) is identical to \cite{DN3}, Proposition 3.13.
\end{proof}


\subsection{A first uniqueness result.}

We finish this section with a first application of the previous uniqueness results to the study of an inverse scattering
problem in which the {\it{reflection}} coefficients $L_{kl}(\lambda)$ or $R_{kl}(\lambda)$ are supposed to be known on an \emph{interval}
of energy (and not simply at a fixed energy $\lambda$). Precisely we prove Theorem \ref{Uniqueness-Localized} that, for the convenience of the reader,  we recall here

\begin{theorem}
  Consider two KN-dS black holes and denote by $Z$ and $\tilde{Z}$ all the corresponding scattering data. Let $I$ be a (possibly small) open interval of energy. Assume that, there exists $c \in \R$ such that, for all $\lambda \in I$ and for two different $k$, one of the following conditions holds for all $l \in \mathcal{L}_k$ where the subsets $\mathcal{L}_k \subset \N^* $ satisfy the M\"untz condition $\displaystyle{\sum_{l \in \mathcal{L}_k} \frac{1}{l} = \infty}$ :
  \begin{eqnarray}
   L_{kl} (\lambda) &=& e^{-2i\lambda^-(k) c} \tilde{L}_{kl} (\lambda),  \\ \label{uniciteL}
   R_{kl} (\lambda) &=& e^{2i\lambda^+(k) c} \tilde{R}_{kl} (\lambda), \label{uniciteR}
  \end{eqnarray}
  where $\lambda^\pm(k) = \lambda - \Omega_\pm(k)$. Then for all $x \in \R$,
  $$
    a(x) = \tilde{a}(x-c), \quad c(x,k) = \tilde{c}(x-c,k),
  $$
  and the parameters of the two KN-dS black holes coincide.
\end{theorem}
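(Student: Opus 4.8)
The plan is to reduce the statement to the analytic-continuation machinery already set up in this section. Fix one of the two admissible values of $k$. By hypothesis we know one of the two reflection operators — say $R_{kl}(\lambda)$, the argument for $L_{kl}(\lambda)$ being symmetric — for all $\lambda\in I$ and all $l\in\mathcal L_k$, up to the explicit phase $e^{2i\lambda^+(k)c}$. Using the relation \eqref{RW-SM-Red} between the physical reduced scattering matrices and the change of Regge-Wheeler variable, the assumption $R_{kl}(\lambda)=e^{2i\lambda^+(k)c}\tilde R_{kl}(\lambda)$ is equivalent to saying that the two black holes, when the second one is described in the shifted variable $\tilde x=x+c$, produce the \emph{same} reflection coefficient. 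So without loss of generality I may absorb the constant $c$ and assume $c=0$, proving $a=\tilde a$, $c(\cdot,k)=\tilde c(\cdot,k)$, and then unwinding the translation at the end.

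Next I would invoke Theorem \ref{Main-Frobenius}: its proof only uses that a reflection operator (or both transmission operators) is known for a fixed $\lambda$ and two different $k$, which is weaker than knowing it on the whole interval $I$; hence $a=\tilde a$, $\Lambda=\tilde\Lambda$, and consequently $\mu_{kl}(\lambda)=\tilde\mu_{kl}(\lambda)$ for \emph{all} $(k,l)$ and all $\lambda$, and the generalized spherical harmonics coincide up to unit modulus constants. In particular the hypothesis of Proposition \ref{Uniqueness} is met on each $\mathcal L_k$: the physical equality $R_{kl}(\lambda)=\tilde R_{kl}(\lambda)$ together with \eqref{Link-SM} transfers to the unphysical one $\hat R_{kl}(\lambda)=e^{\,2i(\beta(k)-\tilde\beta(k))}\hat{\tilde R}_{kl}(\lambda)$, i.e. \eqref{R} holds with $d(\lambda,k)=2(\beta(k)-\tilde\beta(k))$. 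Applying Proposition \ref{Uniqueness} then yields, for every $\lambda\in I$ and the two chosen values of $k$,
\begin{equation} \label{RepresentationAL}
  a_{L1}(\lambda,k,z)=\tilde a_{L1}(\lambda,k,z),\qquad a_{L3}(\lambda,k,z)=e^{-id(\lambda,k)}\,\tilde a_{L3}(\lambda,k,z),\qquad\forall z\in\C .
\end{equation}

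From \eqref{RepresentationAL} I extract the potentials. Setting $z=0$ is useless since $a_{L3}$ is odd; instead I differentiate in $z$ at $z=0$, or equivalently read off the first Taylor coefficient from Lemma \ref{AL-Analytic}(i): the $z^1$ coefficient of $a_{L3}(\lambda,k,z)$ is $i\int_\R e^{2i\lambda x}\overline{q_k(x)}\,dx$, so \eqref{RepresentationAL} gives $\widehat{\overline{q_k}}(2\lambda)=e^{-id(\lambda,k)}\,\widehat{\overline{\tilde q_k}}(2\lambda)$ for all $\lambda\in I$, where $q_k(x)=a(x)e^{2iC(x,k)}$. Because $a(x)$ decays exponentially at both horizons (Remark \ref{AsympHyp}) and $C(x,k)$ grows only linearly, $q_k$ and $\tilde q_k$ extend to functions analytic in a horizontal strip, hence their Fourier transforms are real-analytic; the identity on the interval $I$ propagates to all of $\R$, and since $|d(\lambda,k)|$ must then be constant in $\lambda$ one checks, comparing the known behaviour of $\beta(k)$ (which, after the normalization $c=0$, forces $\beta(k)=\tilde\beta(k)$ because $a=\tilde a,\Lambda=\tilde\Lambda$), that $d(\lambda,k)=0$. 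Therefore $q_k=\tilde q_k$, which upon taking moduli gives $a(x)=\tilde a(x)$ and then $C(x,k)=\tilde C(x,k)$, hence $c(x,k)=\tilde c(x,k)$. Finally, the explicit forms $a(x)=\sqrt{\Delta_r}/(r^2+a^2)$ and $c(x,k)=\big(aEk+qQr\big)/(r^2+a^2)$, together with $a=\tilde a,\Lambda=\tilde\Lambda$ already obtained, determine $M$ and $Q^2$ as in the computation carried out in Section \ref{Inverse}; reinstating the shift $\tilde x=x+c$ converts $a=\tilde a$, $c(\cdot,k)=\tilde c(\cdot,k)$ into $a(x)=\tilde a(x-c)$, $c(x,k)=\tilde c(x-c,k)$, which is the claim.

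The step I expect to be the main obstacle is the passage from the interval identity $\widehat{\overline{q_k}}(2\lambda)=e^{-id(\lambda,k)}\widehat{\overline{\tilde q_k}}(2\lambda)$, $\lambda\in I$, to a global identity with the phase pinned down: one has to argue carefully that $d(\lambda,k)$ is independent of $\lambda$ (so the right-hand side is genuinely the Fourier transform of a translate of $\tilde q_k$) and then that the translate must be trivial, using both the exponential localization of the potentials and the already-proved equality of $a$ and $\Lambda$; the rest is either quoted from earlier sections or a routine unwinding of the Regge-Wheeler normalization.
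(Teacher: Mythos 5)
Your proposal follows essentially the same route as the paper: pass to the unphysical coefficients, apply Proposition \ref{Uniqueness} to get equality of the $a_{Lj}(\lambda,k,z)$ up to an explicit phase for all $z$, read off the first Taylor coefficient in $z$ from Lemma \ref{AL-Analytic} to obtain a Fourier-transform identity for $q_k$ on $I$, extend it to $\R$ by analyticity of $\hat{q}_k$ in a strip (exponential decay of $a$), recover $q_k$ up to phase and translation, and then decouple the explicit forms of $a(x)$ and $c(x,k)$ using the two values of $k$. Two remarks, one of which is a genuine (though repairable) flaw.

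First, the step you flag as the main obstacle is in fact a non-issue, and your proposed resolution of it is wrong. Once $c$ is normalized away, the phase relating $\hat{R}_{kl}$ to $\hat{\tilde{R}}_{kl}$ is $e^{2i(\beta(k)-\tilde{\beta}(k))}$, which is manifestly independent of $\lambda$ (the constants $\beta(k)$ of (\ref{Betak}) do not involve $\lambda$), so no argument is needed to make the right-hand side the Fourier transform of a fixed function. However, your claim that $\beta(k)=\tilde{\beta}(k)$ ``because $a=\tilde{a},\Lambda=\tilde{\Lambda}$'' is false: $\beta(k)$ is built from $c(s,k)=(aEk+qQr)/(r^2+a^2)$ and therefore depends on $M$ and $Q$, which at this stage are not yet known to coincide. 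You cannot conclude $d(\lambda,k)=0$ this way --- but you do not need to. A constant unimodular phase in $q_k=e^{i\,\mathrm{const}}\,\tilde{q}_k$ is annihilated by taking the modulus (giving $a=\tilde{a}$) and the imaginary part of the logarithmic derivative (giving $c(\cdot,k)=\tilde{c}(\cdot,k)$), which is exactly what the paper does; you should simply delete the phase-pinning discussion and proceed this way.

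Second, a smaller point: Proposition \ref{Uniqueness} requires $\mu_{kl}(\lambda)=\tilde{\mu}_{kl}(\lambda)$ on $\mathcal{L}_k$, and you justify this by invoking Theorem \ref{Main-Frobenius}. The proof of that theorem goes through Proposition \ref{Consequence1}, which uses the reflection \emph{operators} $R_k(\lambda)$ (to identify eigenvalues and eigenfunctions), whereas the present hypothesis only supplies the scalar coefficients $R_{kl}(\lambda)$ on a M\"untz subset. The paper's own proof is equally terse on this point, so this is a shared imprecision rather than an error peculiar to your argument, but if you cite Theorem \ref{Main-Frobenius} you should note that its hypotheses are the operator-level ones of (\ref{MainAssumption}).
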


\begin{proof}
Let us prove for instance the corollary under condition (\ref{uniciteR}). As it was explained in Proposition \ref{SM-Red-RW}, the condition (\ref{uniciteR}) on the physical reflection operators is equivalent to the following equality on the unphysical reflection coefficients
\begin{equation}
\hat{R}_{kl} (\lambda) = e^{2i \lambda^+(k) c} \hat{\tilde{R}}_{kl} (\lambda).
\end{equation}
Applying Proposition \ref{Uniqueness}, we see in particular that
$$
  a_{L2}(\lambda,k,z) = e^{2i \lambda c}\  e^{-2i \Omega_+(k) c} \ \tilde{a}_{L2}(\lambda,k,z), \quad \forall z \in \C, \ \forall \lambda \in I.
$$
From the first term of the series defining $a_{L2}(\lambda,k,z)$ and $\tilde{a}_{L2}(\lambda,k, z)$ (see Lemma \ref{AL-Analytic}, (ii)), we thus obtain
\begin{equation} \label{FourierPot-a}
  \hat{q}_k(2\lambda) = e^{2i \lambda c} \  e^{-2i \Omega_+(k) c} \  \hat{\tilde{q}}_k(2\lambda), \quad \forall \lambda \in I,
\end{equation}
where $ \hat{q}_k$ and $ \hat{\tilde{q}}_k$ denote the Fourier transforms of the potentials $q_k$ and $\tilde{q}_k$. Since these potentials are exponentially decreasing at both horizons, their Fourier transforms are analytic in a small strip $K$ around the real axis. Thus the equality (\ref{FourierPot-a}) extends analytically to the whole strip $K$. In particular, we have
\begin{equation}
  \hat{q}_k(2\lambda) = e^{2i \lambda c} \  e^{-2i \Omega_-(k) c } \ \hat{\tilde{q}}_k(2\lambda), \quad \forall \lambda \in \R,
\end{equation}
and therefore
\begin{equation} \label{EqualityPotentials}
  q_k(x) = \  e^{-2i \Omega_-(k) c}  \ \tilde{q}_k(x-c), \quad \forall x \in \R.
\end{equation}
Note that from the modulus of (\ref{EqualityPotentials}), we obtain
\begin{equation}\label{unicitea}
a(x)= \tilde{a}(x-c).
\end{equation}
Also, taking the logarithmic derivative with respect to $x$ in (\ref{EqualityPotentials}), we get
\begin{equation}
\frac{q'_k(x)}{q_k(x)} = \frac{\tilde{q}_k'(x-c)}{\tilde{q}_k(x-c)},
\end{equation}
or equivalently,
\begin{equation} \label{log}
\frac{a'(x)}{a(x)} +2ic(x,k) = \frac{\tilde{a}'(x-c)}{\tilde{a}(x-c)} + 2i\tilde{c}(x-c,k).
\end{equation}
Thus, from the imaginary parts of (\ref{log}), we obtain
\begin{equation} \label{uniciteck}
c(x,k) = \tilde{c}(x-c,k).
\end{equation}

Now, let us use the particular forms of the potentials $a(x)$ and $c(x,k)$ to recover the parameters of the black hole. Recall that
\begin{equation} \label{defc}
c(x,k) = \frac{aEk +qQr}{r^2+a^2} \ ,\ E= 1+ \frac{a^2\Lambda}{3}.
\end{equation}
Using that (\ref{uniciteck}) is true for two different $k$, and using the fact that $a=\tilde{a}$, $\Lambda = \tilde{\Lambda}$ (proved in Thm \ref{Main-Frobenius}), we obtain the following decoupled equations
\begin{eqnarray}\label{systemedecouple}
\frac{qQr}{r^2+a^2} & = & \frac{q\tilde{Q}\tilde{r}}{\tilde{r}^2+a^2}, \label{systemedecouple1}\\
\frac{aE}{r^2+a^2} & = & \frac{aE}{\tilde{r}^2+a^2}, \label{systemedecouple2}
\end{eqnarray}
where we use the notation $\tilde{r} = \tilde{r}(x-c)$. Since $a\not=0$ and $E\not=0$, we deduce from (\ref{systemedecouple2}) that $r = \tilde{r}$.
Now, from (\ref{systemedecouple1}), we obtain $Q=\tilde{Q}$ when $q\ne 0$.\\

Similarly, recall that
\begin{equation}\label{pot}
a(x) = \frac{ \sqrt{\Delta_r}}{r^2+a^2},
\end{equation}
with
\begin{equation} \label{Delta}
\Delta_r = (r^2+a^2)(1-\frac{\Lambda}{3}) -2Mr +Q^2.
\end{equation}
Since $r=\tilde{r}$, we obtain using (\ref{unicitea}),
\begin{equation} 
\Delta_r = \tilde{\Delta}_{\tilde{r}} =  \tilde{\Delta}_r.
\end{equation}
Then, from (\ref{Delta}), we obtain $M=\tilde{M}$ and $Q^2 = \tilde{Q}^2$ in the case $q = 0$.
\end{proof}

\begin{remark} \label{NotLocalizedUniquenessT}
It would be also interesting to determine uniquely the parameters of a KN-dS black hole from the knowledge of the transmission coefficients
$T_{kl}(\lambda)$ (up to a phase factor) on an energy interval $I$. More precisely, assume that for two different $k$ (or more), there exist a constant $c$ and subsets $\mathcal{L}_k \subset \N^* $ that satisfy the M\"untz condition $\displaystyle{\sum_{l \in \mathcal{L}_k} \frac{1}{l} = \infty}$ such that
\begin{equation}
T_{kl}(\lambda) = e^{i(\Omega_+(k) - \Omega_-(k))c} \ \tilde{T}_{kl}(\lambda) \ ,\ \forall \lambda \in I. \label{eqT}
\end{equation}
We recall from Proposition \ref{SM-Red-RW} that the equality (\ref{eqT}) between the physical transmission coefficients is equivalent to the following equality between the unphysical transmission coefficients
\begin{equation}
\hat{T}_{kl}(\lambda) = \hat{\tilde{T}}_{kl}(\lambda).
\end{equation}
Applying Proposition \ref{Uniqueness2} and following the same strategy as for (\ref{FourierPot-a}), we see there exists a constant $d(\lambda,k)$ such that
\begin{equation} \label{FourierPot-q}
  \hat{q}_k(2\lambda) = e^{2i d(\lambda,k)} \   \  \hat{\tilde{q}}_k(2\lambda), \quad \forall \lambda \in I.
\end{equation}
But, since we do not know how the function $d(\lambda,k)$ depends on the variable $\lambda$, we cannot hope to obtain a similar result as (\ref{EqualityPotentials}). We would like to mention that the corresponding result obtained in \cite{DN3} (see Corollary 3.14) for the transmission coefficient seems to be wrong as stated. Nevertheless, we shall see later that this uniqueness result is in fact even true at a fixed energy $\lambda$!  To prove this, we need a precise asymptotic expansion of the scattering data. This is the object of the next section.
\end{remark}



\Section{Large $z$ asymptotics of the scattering data} \label{AsymptoticsSD}

In this Section, we follow the results of the Section 4 in \cite{DN3} to obtain the asymptotic expansion of the scattering data when the angular momentum $z \rightarrow +\infty$, $z$ real. To do this, we introduce a convenient change of variable $X=g(x)$ which turns out to be a simple Liouville transformation between some underlying Sturm-Liouville equations that appear in our model (see below). This Liouville transformation as well as the asymptotics of the scattering data will be useful later to study the inverse scattering problem.

It is convenient in this Section not to work with the simplified stationary equation (\ref{Stat-Eq-Ak}) but rather with the initial stationary equation (\ref{Rad-0}). Following the reverse procedure given in Section \ref{Simplified-SM}, we introduce the following functions
\begin{eqnarray}
H_L(x,\lambda,k,z) &=& e^{-iC(x,k) \Gamma^1} F_L(x,\lambda,k,z), \label{modJostL}\\
H_R(x,\lambda,k,z) &=& e^{-iC(x,k) \Gamma^1} F_R(x,\lambda,k,z), \label{modJostR}
\end{eqnarray}
that are matrix-valued solutions of the stationary equation
$$
  [\Ga D_z - z a(x) \Gb + c(x,k)] H_{L/R} = \lambda H_{L/R}.
$$
Clearly, from (\ref{determinant}), we see that
\begin{equation}\label{determinantH}
det \ H_L(x,\lambda,k, z) = det \ H_R(x,\lambda,k, z) =1.
\end{equation}
We shall also use the notations
\begin{equation} \label{MatriceHL}
H_L(x,\lambda,k, z) = \left[\begin{array}{cc} h_{L1}(x,\lambda,k, z)&h_{L2}(x,\lambda,k, z)\\
h_{L3}(x,\lambda,k, z)&h_{L4}(x,\lambda,k, z) \end{array} \right],
\end{equation}
\begin{equation} \label{MatriceHR}
H_R(x,\lambda,k, z) = \left[\begin{array}{cc} h_{R1}(x,\lambda,k, z)&h_{R2}(x,\lambda,k, z)\\
h_{R3}(x,\lambda,k, z)&h_{R4}(x,\lambda,k, z) \end{array} \right].
\end{equation}
Thus, for $j = 1,2$, we have
\begin{equation}
  h_{Lj}(x,\lambda,k,z) = e^{-iC(x,k)} f_{Lj}(x,\lambda,k,z), \ h_{Rj}(x,\lambda,k,z) = e^{-iC(x,k)} f_{Rj}(x,\lambda,k,z), \label{hj-fj1}
\end{equation}
whereas for $j= 3,4$, we have
\begin{equation}
  h_{Lj}(x,\lambda,k,z) = e^{+ iC(x,k)} f_{Lj}(x,\lambda,k,z), \ h_{Rj}(x,\lambda,k,z) = e^{ +iC(x,k)} f_{Rj}(x,\lambda,k,z). \label{hj-fj2}
\end{equation}

We recall that the $f_{Lj}(x,\lambda,k,z)$ and $f_{Rj}(x,\lambda,k,z)$ satisfy the second order differential equations (\ref{2ndOrder1}) and (\ref{2ndOrder2}).
Thus, using (\ref{hj-fj1}) and (\ref{hj-fj2}), it is easy to see that the components $h_{Lj}(x,\lambda,k,z)$ and $h_{Rj}(x,\lambda,k,z)$, $j=1,2$ satisfy
\begin{equation} \label{2ndOrder1h}
  \Big[ -\frac{d^2}{dx^2} + \frac{a'(x)}{a(x)} \frac{d}{dx} + z^2 a(x) -i\lambda \frac{a'(x)}{a(x)} +s(x,\lambda,k) \Big] f = \lambda^2 f,
\end{equation}
where
\begin{equation}\label{potentiels}
s(x,\lambda, k) = -ic'(x,k)-c^2(x,k)+ i \frac{a'(x)}{a(x)}c(x,k)+2\lambda c(x,k).
\end{equation}
In the same way, the components $h_{Lj}(x,\lambda,k,z)$ and $h_{Rj}(x,\lambda,k,z)$, $j=3,4$ satisfy
\begin{equation} \label{2ndOrder2h}
   \Big[ -\frac{d^2}{dx^2} + \frac{a'(x)}{a(x)} \frac{d}{dx} + z^2 a(x) +i\lambda \frac{a'(x)}{a(x)} +\overline{s(x, \lambda,k)} \Big] f = \lambda^2 f.
\end{equation}

\subsection{The Liouville transformation.}

We follow the same strategy  as in \cite{DN3}, (see also \cite{CMu}, \cite{CKM}). Considering the differential equations (\ref{2ndOrder1h}) and (\ref{2ndOrder2h}) satisfied by the modified Jost functions $h_{Lj}(x,\lambda,z)$ and $h_{Rj}(x,\lambda,z)$, we use a Liouville transformation $X = g(x)$, that transforms the equations (\ref{2ndOrder1h}) and (\ref{2ndOrder2h}) into singular Sturm-Liouville differential equations in which the angular momentum $z$ becomes the \emph{spectral parameter} (see Lemma \ref{Sturm} below).

We define the variable $X$ by
\begin{equation} \label{Liouville}
  X= g(x)=\int_{-\infty}^x a(t) \ dt.
\end{equation}
Clearly, $g$ is a $C^1$-diffeomorphism from $\R$ to $(0,A)$ where
\begin{equation} \label{DefA}
  A=\int_{-\infty}^{+\infty} a(t) \ dt.
\end{equation}
We denote by $h=g^{-1}$ the inverse diffeomorphism of $g$ and we shall use the notation $y'(X)$ for the derivative of $y(X)$ with respect to $X$. We also define for $j=1,...,4$, and for $X \in ]0,A[$,
\begin{equation} \label{Fj}
  \fj = h_{Lj} (h(X), \lambda,k, z).
\end{equation}
\begin{equation} \label{Gj}
  \gj = h_{Rj} (h(X), \lambda,k, z).
\end{equation}

Let us begin by an elementary result.

\begin{lemma} \label{Sturm} \hfill
\begin{enumerate}
\item For $j=1,2$, $\fj$ and  $\gj$ satisfy on $]0,A[$ the Sturm-Liouville equation
\begin{equation} \label{SL1}
  y'' +Q(X,k)y = z^2 y.
\end{equation}
\item For $j=3,4$, $\fj$ and $\gj$ satisfy on $]0,A[$ the Sturm-Liouville equation
\begin{equation} \label{SL2}
  y'' +\overline{Q(X,k)}y = z^2 y,
\end{equation}
\end{enumerate}
where the potential $Q(X,k)$ is given by
$$
  Q(X,k) = (\lambda-c(h(X),k))^2 h'(X)^2 -i(\lambda-c(h(X),k))h''(X) +i c'(h(X),k)h'(X)^2,
$$
or in the initial Regge-Wheeler variable $x$
$$
  Q(X,k) = \frac{(\lambda-c(x,k))^2}{a^2(x)} +i \left(\frac{a(x)c'(x,k) +a'(x)(\lambda-c(x,k))} {a^3(x)} \right).
$$
\end{lemma}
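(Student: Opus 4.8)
The plan is to obtain the Sturm--Liouville form directly from the second order equations (\ref{2ndOrder1h}) and (\ref{2ndOrder2h}) already established for the modified Jost components $h_{Lj}$ and $h_{Rj}$, by carrying out the Liouville change of variable $X=g(x)=\int_{-\infty}^x a(t)\,dt$. The defining feature of this particular substitution is precisely that it annihilates the first order term, so that the angular momentum is promoted to the role of spectral parameter. First I would record the elementary calculus facts attached to the $C^1$-diffeomorphism $g$ and its inverse $h=g^{-1}$: since $g'(x)=a(x)$ one has $\frac{d}{dx}=a(x)\frac{d}{dX}$, hence $\frac{d^2}{dx^2}=a^2\frac{d^2}{dX^2}+a'(x)\frac{d}{dX}$; moreover $h'(X)=1/a(h(X))$ and, differentiating this relation, $h''(X)=-a'(h(X))/a(h(X))^3$.

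Next I would substitute these into (\ref{2ndOrder1h}) (with the quadratic coefficient $z^2a^2(x)$). The two derivative terms combine as $-\frac{d^2f}{dx^2}+\frac{a'(x)}{a(x)}\frac{df}{dx}=-a^2\frac{d^2f}{dX^2}$, which is the key cancellation. Dividing the resulting identity by $-a^2(x)$ turns $z^2a^2(x)$ into $z^2$ on the right-hand side, and leaves a zeroth order coefficient equal to $\frac{\lambda^2}{a^2}+\frac{i\lambda a'}{a^3}-\frac{s(x,\lambda,k)}{a^2}$ on the left. Inserting the explicit expression (\ref{potentiels}) for $s(x,\lambda,k)$, the real contributions collapse to $\frac{(\lambda-c(x,k))^2}{a^2(x)}$ while the imaginary contributions collapse to $i\,\frac{a(x)c'(x,k)+a'(x)(\lambda-c(x,k))}{a^3(x)}$; rewriting these via $1/a(x)=h'(X)$, $c'(x,k)/a^2(x)^{}=c'(h(X),k)h'(X)^2$ and $-a'(x)/a^3(x)=h''(X)$ produces exactly the function $Q(X,k)$ of the statement. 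This gives (\ref{SL1}) for $f_1,f_2$, and the identical computation applied to $g_1,g_2$, which satisfy the same equation (\ref{2ndOrder1h}), gives (\ref{SL1}) for $g_1,g_2$ as well. For $j=3,4$, equation (\ref{2ndOrder2h}) differs from (\ref{2ndOrder1h}) only in that $-i\lambda\frac{a'}{a}$ is replaced by $+i\lambda\frac{a'}{a}$ and $s$ by $\overline{s}$; these changed terms are the complex conjugates of the corresponding ones, so the same manipulation yields $\overline{Q(X,k)}$ in place of $Q(X,k)$, which is (\ref{SL2}).

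Finally I would observe that the two displayed expressions for $Q$ (in the variable $X$ and in the Regge--Wheeler variable $x$) are literally the same object written in the two coordinates, related by $x=h(X)$ together with the identities $h'(X)=1/a(x)$ and $h''(X)=-a'(x)/a^3(x)$ used above. There is no conceptual difficulty here; the proof is a verification. The only points requiring care are the bookkeeping of the imaginary terms -- checking that the combination $\frac{i\lambda a'}{a^3}+\frac{ic'}{a^2}-\frac{ia'c}{a^3}$ telescopes into $i\,\frac{ac'+a'(\lambda-c)}{a^3}$ -- and making sure that it is indeed $z^2a^2(x)$ (as it must be, coming from $V_k^2=a^2(x)I_2$) rather than $z^2a(x)$ that multiplies $f$ in (\ref{2ndOrder1h})--(\ref{2ndOrder2h}), so that division by $a^2(x)$ genuinely leaves $z^2$ as the spectral parameter.
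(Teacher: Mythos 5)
Your proposal is correct and is precisely the ``straightforward calculation'' that the paper's one-line proof invokes: the same Liouville substitution $X=g(x)$, the same cancellation of the first-order term, and the same regrouping of $s(x,\lambda,k)$ into $Q(X,k)$. Your observation that the quadratic term in (\ref{2ndOrder1h})--(\ref{2ndOrder2h}) must read $z^2a^2(x)$ (as it does in (\ref{2ndOrder1})--(\ref{2ndOrder2}), since $V_k(x)^2=a(x)^2 I_2$) rather than the printed $z^2a(x)$ is a correct catch of a typographical slip, and is indeed needed for the division by $a^2(x)$ to leave $z^2$ as the spectral parameter.
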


\begin{proof}
Using (\ref{2ndOrder1h}), (\ref{2ndOrder2h}) and ${\displaystyle{\frac{dX}{dx} = a(x)}}$, a straightforward calculation gives (\ref{SL1}) and (\ref{SL2}).
 \end{proof}

In order to obtain the asymptotics of the modified Jost functions $f_j$ and $g_j$ when $z$ is large and real, we shall solve the Sturm-Liouville equations (\ref{SL1})- (\ref{SL2}) by a perturbative argument. Following \cite{DN3}, we consider $Q(X,k)$ as a perturbation of a function $Q_+(X,k)$ where $Q_+(X,k)$ is given by the same formula as $Q(X,k)$, but with $c(h(X),k)$  replaced by $\Omega_+(k)$ - that is its equivalent at $+\infty$ - and $h(X) = g^{-1}(X)$ is replaced by another diffeomorphism denoted $h^+(X) = g_+^{-1}(X)$ where the diffeomorphism $g_+$ is defined in the same manner as $g$, with $a(x)$ replaced by its equivalent at $+\infty$.

More precisely, if we write $A-g(x) = \int_x^{+\infty}  a(t) \ dt$ and recalling the asymptotics of $a(x)$ given in (\ref{Asymp-a}), it is natural to set for $X \in ]0,A[$
$$
A-g_+(x) = \int_x^{+\infty}  a_+ \ e^{\kappa_+ t} \ dt = -\frac{a_+}{\kappa_+} e^{\kappa_+ x}.
$$
So, we define
\begin{equation}\label{newdiffeo}
h_+ (X)= g_+^{-1} (X) =  \frac{1}{\kappa_+} \ \log (A-X) + \,C_+,
\end{equation}
with
\begin{equation}\label{cstplus}
C_+ =  \frac{1}{\kappa_+} \log (-\frac{\kappa_+}{a_+}).
\end{equation}
Then, we can set
\begin{equation}
Q_+(X,k) =  (\lambda-\Omega_+(k))^2 h_+'(X)^2 -i(\lambda-\Omega_+(k))h_+''(X).
         \end{equation}
Note that an elementary calculation gives
\begin{equation}
Q_+(X,k) = \frac{\omega_+(k)}{(A-X)^2},
\end{equation}
where
\begin{eqnarray}
    \omega_{+}(k)& = & \left(\frac{\lambda^+(k)}{\kappa_\pm}\right)^2 + i\  \frac{\lambda^+(k)}{\kappa_{\pm}}, \\
    \lambda^+(k) & = &\lambda-\Omega_+(k).
\end{eqnarray}

Now, let us prove that $Q_+(X,k)$  is a suitable perturbation of $Q(X,k)$, in the sense that in the variable $X$, the  modified Jost functions $f_j (X,\lambda,k,z)$ and $g_j (X,\lambda,k,z)$ satisfy Sturm Liouville equations with potentials having quadratic singularities at the boundary $X=A$. To do this, we have to control $h(X)-h_+(X)$. \\

We recall the following result (\cite{DN3}, Lemma 4.7):

\begin{lemma}\label{estdiffeo}
Let $X_0 \in]0,A[$ fixed. Then, there exists $C>0$ such that for $k =0,1,2$
\begin{equation}\label{estimdiffeo}
\mid h^{(k)}(X)-h_+^{(k)}(X) \mid \ \leq\ C\ (A-X)^{2-k}\ \ ,\ \ \forall X \in ]X_0,A[.
\end{equation}
\end{lemma}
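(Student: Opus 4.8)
This is the analogue in our setting of Lemma~4.7 of \cite{DN3}, and the plan is to adapt that proof; the only input about $a$ that is used is the asymptotics (\ref{Asymp-a}) at the cosmological horizon together with the sign $\kappa_+ < 0$. First I would split $(X_0,A)$ into a compact part $[X_0, A-\delta]$ and a one-sided neighbourhood $(A-\delta, A)$ of the endpoint $A$. On $[X_0,A-\delta]$ the functions $h^{(k)}$ and $h_+^{(k)}$ are continuous hence bounded, while $(A-X)^{2-k}$ is bounded below by a positive constant (it equals $1$ when $k=2$), so (\ref{estimdiffeo}) holds there after enlarging $C$; all the work therefore concerns $X \to A^-$, equivalently $x := h(X) \to +\infty$.

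The basic comparison is obtained by integrating (\ref{Asymp-a}): since $A - g(x) = \int_x^{+\infty} a(t)\,dt$, one gets $A - g(x) = -\frac{a_+}{\kappa_+}e^{\kappa_+ x} + O(e^{3\kappa_+ x})$, whereas $A - g_+(x) = -\frac{a_+}{\kappa_+}e^{\kappa_+ x}$ exactly by (\ref{newdiffeo})--(\ref{cstplus}). Hence $g(x) - g_+(x) = O(e^{3\kappa_+ x})$ and $A - X = A - g(x) \asymp e^{\kappa_+ x}$ as $x \to +\infty$. This last comparison is the device that converts any power $e^{n\kappa_+ x}$ into the power $(A-X)^n$ (up to constants) and thereby produces the exponents $2-k$ on the right-hand side; it also guarantees that the intermediate points occurring in the mean value estimates below, which differ from $g(x)$ or from $x$ by quantities of strictly higher order, remain in a regime where $A-(\cdot)$ stays comparable to $A-X$.

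For $k=0$, write $h(X) - h_+(X) = h_+(g_+(x)) - h_+(g(x))$ with $x = h(X)$, using $h_+(g_+(x)) = x$. Applying the mean value theorem to $h_+$ and using $|h_+'(Y)| = \frac{1}{|\kappa_+|(A-Y)} \leq \frac{C}{A-X}$ for $Y$ between $g(x)$ and $g_+(x)$, one obtains $|h(X) - h_+(X)| \leq \frac{C}{A-X}\,|g(x)-g_+(x)| \leq \frac{C}{A-X}\,O(e^{3\kappa_+ x}) \leq C(A-X)^2$. For $k=1,2$ I would use the explicit formulas $h'(X) = 1/a(h(X))$ and $h''(X) = -a'(h(X))/a(h(X))^3$, together with $h_+'(X) = 1/(a_+ e^{\kappa_+ h_+(X)})$ and $h_+''(X) = -1/(\kappa_+(A-X)^2) = -\kappa_+/(a_+^2 e^{2\kappa_+ h_+(X)})$, and split each difference into two contributions: (i) the error in replacing $a$ and $a'$ by their leading exponential terms at the same point $x=h(X)$, which is controlled by the remainders in (\ref{Asymp-a}) and contributes $O(e^{\kappa_+ x}) = O(A-X)$ for $k=1$ and $O(1)$ for $k=2$; and (ii) the error in then moving the argument from $h(X)$ to $h_+(X)$ inside the leading exponentials, which is handled by the mean value theorem applied to $t \mapsto e^{-n\kappa_+ t}$ using the $k=0$ bound $|h(X)-h_+(X)| \leq C(A-X)^2$ and $e^{-n\kappa_+ h(X)} \asymp (A-X)^{-n}$, and again contributes $O(A-X)$ for $k=1$ and $O(1)$ for $k=2$. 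Summing the two contributions yields $|h'(X)-h_+'(X)| \leq C(A-X)$ and $|h''(X)-h_+''(X)| \leq C$, which is (\ref{estimdiffeo}) for $k=1,2$.

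The main obstacle is not conceptual but a matter of careful bookkeeping: one must keep the expansions (\ref{Asymp-a}) to sufficient order so that the remainder terms land with the correct power of $e^{\kappa_+ x}$, verify that every point produced by a mean value estimate stays in the regime where $A-(\cdot) \asymp A-X$, and systematically translate powers of $e^{\kappa_+ x}$ into powers of $A-X$. No idea beyond the one already employed in \cite{DN3} is needed.
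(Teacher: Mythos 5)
Your argument is correct: the paper itself gives no proof of this lemma, simply citing \cite{DN3}, Lemma~4.7, and your reconstruction — integrating (\ref{Asymp-a}) to get $g-g_+=O(e^{3\kappa_+x})$ and $A-X\asymp e^{\kappa_+x}$, then treating $k=0$ by the mean value theorem applied to $h_+$ and $k=1,2$ via the explicit formulas $h'=1/a\circ h$, $h''=-a'\circ h/(a\circ h)^3$ with the two error sources you identify — is exactly the expected route. The bookkeeping you describe (remainders of order $e^{3\kappa_+x}$, intermediate points staying in the regime $A-(\cdot)\asymp A-X$, and the compact-part reduction) all checks out.
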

Then, we can prove
\begin{lemma}
Let $X_0 \in]0,A[$ fixed. Then
\begin{equation} \label{omega}
       Q(X,k) = Q_+(X,k) + \ q_+(X,k) \ , \ \ {\rm{with}} \ \ q_+(X,k) \in L^{\infty}(X_0,A),
\end{equation}
\end{lemma}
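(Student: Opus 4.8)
The plan is to treat $Q(X,k)$ and $Q_+(X,k)$ as explicit functions of $X\in(X_0,A)$ and to estimate their difference term by term, using Lemma \ref{estdiffeo} to compare $h$ with $h_+$ and the asymptotics (\ref{Asymp-a})--(\ref{Asymp-c}) to compare $c(h(X),k)$ with its limiting value $\Omega_+(k)$. First I would record the elementary facts about the model diffeomorphism: from (\ref{newdiffeo}) one has $h_+'(X)=-\frac{1}{\kappa_+(A-X)}$ and $h_+''(X)=-\frac{1}{\kappa_+(A-X)^2}$, so that $|h_+'(X)|\asymp (A-X)^{-1}$ and $|h_+''(X)|\asymp (A-X)^{-2}$ near $X=A$. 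Combining this with Lemma \ref{estdiffeo} gives $h'(X)=h_+'(X)+O(A-X)$ and $h''(X)=h_+''(X)+O(1)$, hence in particular $h'(X)=O((A-X)^{-1})$ and $h''(X)=O((A-X)^{-2})$ uniformly on $(X_0,A)$.

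Next I would convert the asymptotics of the potentials into estimates in the $X$ variable. Since $A-g_+(x)=-\frac{a_+}{\kappa_+}e^{\kappa_+ x}$, we have $A-X=-\frac{a_+}{\kappa_+}e^{\kappa_+ h_+(X)}$, and because $h(X)-h_+(X)=O((A-X)^2)\to 0$ as $X\to A$ (Lemma \ref{estdiffeo} with $k=0$), this yields $e^{\kappa_+ h(X)}\asymp A-X$. Plugging this into the expansion (\ref{Asymp-c}) of $c(x,k)$ and into the companion expansion of $c'(x,k)$, I obtain
\[
  \lambda-c(h(X),k)=\lambda^+(k)+O((A-X)^2),\qquad c'(h(X),k)=O((A-X)^2),
\]
uniformly on $(X_0,A)$, where $\lambda^+(k)=\lambda-\Omega_+(k)$.

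The conclusion then follows by expanding
\[
  Q(X,k)-Q_+(X,k)=\big[(\lambda-c(h(X),k))^2 h'(X)^2-(\lambda^+(k))^2 h_+'(X)^2\big]-i\big[(\lambda-c(h(X),k))h''(X)-\lambda^+(k)h_+''(X)\big]+i\,c'(h(X),k)h'(X)^2
\]
and substituting the estimates above. In the first bracket, writing $(\lambda-c(h(X),k))^2=(\lambda^+(k))^2+O((A-X)^2)$ and $h'(X)^2=h_+'(X)^2+O(1)$ (the cross term being $h_+'(X)\cdot O(A-X)=O(1)$), every error term is the product of an $O((A-X)^2)$ factor with an $O((A-X)^{-2})$ factor, or of an $O(1)$ factor with a bounded factor, hence $O(1)$; similarly the second bracket is $O(1)$ since $O((A-X)^2)\cdot h_+''(X)=O(1)$; and the last term is $O((A-X)^2)\cdot O((A-X)^{-2})=O(1)$. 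This shows $q_+(X,k):=Q(X,k)-Q_+(X,k)$ is bounded on a left neighbourhood of $A$. On the complementary compact interval $[X_0,A-\varepsilon]$ all the functions involved ($h$, $h'$, $h''$, $c(h(\cdot),k)$, $c'(h(\cdot),k)$) are continuous, hence bounded, so $q_+(\cdot,k)\in L^\infty(X_0,A)$.

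I expect the only delicate point to be this last piece of bookkeeping: the whole argument hinges on the perturbations of $c$ and $c'$ decaying like $(A-X)^2$ — a consequence of the \emph{quadratic} (rather than linear) decay rate in (\ref{Asymp-c}) — which is precisely what is needed to absorb the double pole of $h_+''$ and the quadratic blow-up of $h_+'^2$ at $X=A$. There should be no essential obstacle beyond carefully tracking which power of $A-X$ accompanies each factor; the matching of exponents supplied by Lemma \ref{estdiffeo} is exactly what makes every term fall into place.
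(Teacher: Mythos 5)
Your proposal is correct and follows essentially the same route as the paper: both arguments rest on Lemma \ref{estdiffeo} to get $h'=h_+'+O(A-X)$, $h''=h_+''+O(1)$ (hence $h'=O((A-X)^{-1})$, $h''=O((A-X)^{-2})$), on the translation of the quadratic decay in (\ref{Asymp-c}) into $c(h(X),k)-\Omega_+(k)=O((A-X)^2)$ and $c'(h(X),k)=O((A-X)^2)$ via $e^{\kappa_+ h(X)}=O(A-X)$, and then on the observation that every error term pairs a factor $O((A-X)^2)$ against at worst $O((A-X)^{-2})$. The only difference is organizational (you substitute $c\mapsto\Omega_+(k)$ and $h\mapsto h_+$ simultaneously, the paper does it in two successive steps), which does not affect the argument.
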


\begin{proof}
We recall that the potential $a(x)$ has the asymptotics (\ref{Asymp-a}) when $x \to +\infty$
\begin{equation}
a(x) =  a_+ \ e^{\kappa_+ x} + \ O(e^{3\kappa_+ x}).
\end{equation}
When $x \rightarrow + \infty$, or equivalently when $X \rightarrow A$, it thus follows that
$$ 
A-X=\int_x^{+\infty} a(t) \ dt = \frac {a_+}{\kappa_+} \ e^{\kappa_+ x} + \ O(e^{3\kappa_+ x}) = \frac {1}{\kappa_+} \ a(x) + \ O(e^{3\kappa_+ x}),
$$
from which we also have
$$
  e^{\kappa_+ x} = O(A-X), \quad X \to A.
$$
Hence, when $X \rightarrow A$,
\begin{equation} \label{aX}
  a(x) = \kappa_+ (A-X) + O((A-X)^3).
\end{equation}
Similarly, we recall the asymptotics (\ref{Asymp-c}) of the function $c(x,k)$ when $x \rightarrow + \infty$
\begin{eqnarray}
  c(x,k) &=& \Omega_+ (k)+ c_+ e^{2\kappa_+ x} +O(e^{4\kappa_+ x}), \label{cx1} \\
  c'(x,k) &=& 2c_+ \kappa_+  e^{2\kappa_+ x} +O(e^{4\kappa_+ x}). \label{cx2}
\end{eqnarray}
Hence, we obtain when $X \rightarrow A$:
\begin{eqnarray}
c(x,k) & = & \Omega_+ (k) +O((A-X)^2), \label{cX1} \\
c'(x,k)& = & O((A-X)^2). \label{cX2}
\end{eqnarray}
Moreover, it follows from Lemma \ref{estdiffeo} that for $X\in (X_0,A)$,
\begin{equation} \label{h}
h'(X) = O((A-X)^{-1}) \ , \  h''(X) = O((A-X)^{-2}).
\end{equation}
Now, using (\ref{aX}), (\ref{cX1}), (\ref{cX2}), (\ref{h}) and Lemma \ref{estdiffeo}, we write for $X\in (X_0,A)$ :
\begin{eqnarray*}
Q(X,k) & = & (\lambda-c(h(X),k))^2 h'(X)^2 -i(\lambda-c(h(X),k))h''(X) +i c'(h(X),k)h'(X)^2,\\
       & = & (\lambda-c(h(X),k))^2 h'(X)^2 -i(\lambda-c(h(X),k))h''(X) +O(1),\\
       & = & [\lambda^+(k)-(c(h(X),k)-\Omega^+(k))]^2 h'(X)^2 \nonumber \\
       &   &-i[\lambda^+(k)-(c(h(X),k)-\Omega^+(k))]h''(X) +O(1),\\
       & = &  (\lambda^+(k))^2 h'(X)^2 -i\lambda^+(k)h''(X) +O(1),\\
       & = &  (\lambda^+(k))^2 h_+'(X)^2 -i\lambda^+(k)h_+''(X) +O(1),\\
       & = & Q_+(X,k) +O(1).
\end{eqnarray*}

\end{proof}

Let us calculate here some Wronskians between modified Jost functions that will be useful later. We recall that the Wronskian of two functions $f, g$ is given by $W(f, g) = fg'-f'g$.

\begin{lemma} \label{wronskien} For $z \in \C$, we have :
$$
  W(f_1, f_2)=W(g_1, g_2)= W(f_3, f_4) =W(g_3, g_4) = iz.
$$
\end{lemma}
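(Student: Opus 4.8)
The plan is to sidestep the Sturm--Liouville equations of Lemma \ref{Sturm} entirely and instead exploit the fact that $H_L$ and $H_R$ are $2\times 2$ \emph{matrix} solutions of a first order system whose determinant is constantly $1$. Recall from (\ref{modJostL})--(\ref{modJostR}) that $H_L(x,\lambda,k,z)$ and $H_R(x,\lambda,k,z)$ satisfy
$$
  \left[ \Ga D_x - z a(x) \Gb + c(x,k) \right] H = \lambda H ,
$$
and that, by (\ref{determinantH}), $\det H_L(x,\lambda,k,z) = \det H_R(x,\lambda,k,z) = 1$ for all $x \in \R$ and all $z \in \C$. I will carry out the computation for $H_L$; the case of $H_R$ is word for word the same.

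First I would put the first order system into components. Using $D_x = -i\partial_x$, $(\Ga)^2 = I_2$ and $\Ga \Gb = \left(\begin{smallmatrix} 0 & 1 \\ -1 & 0 \end{smallmatrix}\right)$, the equation for $H_L$ becomes $\partial_x H_L = i(\lambda - c(x,k))\,\Ga H_L + i z a(x)\,\Ga\Gb H_L$, that is, with $H_L = \left(\begin{smallmatrix} h_{L1} & h_{L2} \\ h_{L3} & h_{L4} \end{smallmatrix}\right)$,
\begin{eqnarray*}
  \partial_x h_{L1} &=& i(\lambda - c) h_{L1} + i z a\, h_{L3}, \qquad\quad \partial_x h_{L2} = i(\lambda - c) h_{L2} + i z a\, h_{L4}, \\
  \partial_x h_{L3} &=& -i(\lambda - c) h_{L3} - i z a\, h_{L1}, \qquad \partial_x h_{L4} = -i(\lambda - c) h_{L4} - i z a\, h_{L2}.
\end{eqnarray*}
Passing to the Liouville variable $X = g(x)$ of (\ref{Liouville}) via $\frac{d}{dx} = a(x)\frac{d}{dX}$ and using the notation (\ref{Fj}), the factor $a$ on the left cancels one power of $a$ on the right, so that (with all coefficients now evaluated at $h(X)$)
$$
  f_1' = i\,\tfrac{\lambda - c}{a}\, f_1 + i z f_3 , \qquad f_2' = i\,\tfrac{\lambda - c}{a}\, f_2 + i z f_4 ,
$$
and analogously $f_3' = -i\,\tfrac{\lambda - c}{a}\, f_3 - i z f_1$, $f_4' = -i\,\tfrac{\lambda - c}{a}\, f_4 - i z f_2$.

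Then the conclusion is immediate. Indeed,
$$
  W(f_1,f_2) = f_1 f_2' - f_1' f_2 = f_1\Big( i\tfrac{\lambda-c}{a} f_2 + i z f_4\Big) - \Big( i\tfrac{\lambda-c}{a} f_1 + i z f_3\Big) f_2 = i z\,( f_1 f_4 - f_2 f_3 ),
$$
and $f_1 f_4 - f_2 f_3 = h_{L1}h_{L4} - h_{L2}h_{L3} = \det H_L(h(X),\lambda,k,z) = 1$ by (\ref{determinantH}), so $W(f_1,f_2) = iz$. The same substitution gives $W(f_3,f_4) = f_3 f_4' - f_3' f_4 = -i z( f_3 f_2 - f_1 f_4 ) = i z( f_1 f_4 - f_2 f_3 ) = i z$. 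Replacing $H_L$ by $H_R$ — which solves the same system and also has determinant $1$ — yields $W(g_1,g_2) = W(g_3,g_4) = iz$ by the identical computation. There is no genuine obstacle here; the only points requiring a little care are the correct form of $\Ga\Gb$ and the change of variable $dX = a\,dx$, which is precisely what makes the apparently singular coefficient $\frac{\lambda-c}{a}$ drop out of every Wronskian so that it never has to be analysed near the horizon $X = A$. One could alternatively argue that $f_1,f_2$ both solve (\ref{SL1}), hence $W(f_1,f_2)$ is $X$-independent, and then compute a limit as $X \to A$; but that route runs into the indeterminate behaviour of the Jost functions at the horizon and is strictly harder, whereas the determinant argument above is clean and uniform in $z \in \C$.
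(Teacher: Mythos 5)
Your proposal is correct and is essentially the paper's own argument: the paper likewise differentiates the components of the first-order system (there written for $F_L$, with the phase $e^{-iC}$ and the factor $1/a(x)$ from the Liouville change of variable peeled off explicitly) and identifies the Wronskian with $iz$ times the unit determinant of the Jost matrix. Working directly with $H_L$ in the $X$ variable only repackages the same cancellation, so there is nothing to add.
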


\begin{proof}
For example, let us calculate $W(f_1, f_2)$. Using ${\displaystyle{\frac{dX}{dx} = a(x)}}$ again, it is clear that
\begin{eqnarray}
W(f_1, f_2) & = &  \frac{1}{a(x)} W(e^{-iC(x,k)} f_{L1}(x,\lambda,k,z), \ e^{-iC(x,k)} f_{L2}(x,\lambda,k,z)) \\
            & = &  \frac{e^{-2iC(x,k)}}{a(x)} W(f_{L1}(x,\lambda,k,z), \ f_{L2}(x,\lambda,k,z)).
\end{eqnarray}
Using (\ref{PartialSE1}) and (\ref{determinant}), we obtain easily :
\begin{eqnarray*}
W(f_{L1}(x,\lambda,z), f_{L2}(x,\lambda,z)) & = & f_{L1} [i\lambda f_{L2} +iz q(x,k) f_{L4}] - f_{L2} [i\lambda f_{L1} +iz q(x,k) f_{L3}] \\
                  & = & iz q(x,k)\ det\ F_L(x,\lambda,k,z) = iz q(x,k),
\end{eqnarray*}
which implies the result.
\end{proof}

\subsection{Asymptotics of the Jost functions.}

In this section, we determine the asymptotics of the modified Jost functions $f_j(X,\lambda,k,z)$ and $g_j(X,\lambda,k,z)$ when $z \rightarrow +\infty$. We emphasize here that the same type of asymptotics for solutions of singular Sturm-Liouville equations more general than (\ref{SL1}) - (\ref{SL2}) have been yet studied by Freiling and Yurko in \cite{FY}. However their asymptotics are given up to multiplicative constants that, in our case, should be determined precisely in order to solve our inverse scattering problem. So, we prefer to follow a self-contained approach which only uses the series expansion for the Jost functions obtained in Section \ref{Complexification} and a simple perturbative argument.

Let us study first the modified Jost function $h_{L1}(x, \lambda,k,z)$. From Lemma \ref{ML-Analytic} and (\ref{fj-mj1}), we recall that for $z \in \C$,

\begin{equation} \label{jostgauche}
h_{L1}(x, \lambda,k,z) = e^{-iC(x,k)} e^{i\lambda x}\ \sum_{n=0}^{+\infty} \ m_{L1}^n (x, \lambda,k) \ z^{2n}
\end{equation}
where
\begin{eqnarray*}
m_{L1}^0 (x,\lambda,k) &=& 1, \\
m_{L1}^n (x,\lambda,k) &=&  \int_x^{+\infty} \int_y^{+\infty}\ e^{2i\lambda (t-y)} \ q_k(y)\ \overline{q_k(t))} \ m_{L1}^{n-1} (t, \lambda,k) \ dt \ dy \ ,
\ {\rm{for}} \ n \geq 1.
\end{eqnarray*}

So, using the Liouville transformation, we immediately obtain

\begin{lemma} \label{JostX} \hfill
\begin{equation}\label{funX}
\fun = \sum_{n=0}^{+\infty} a_n (X,\lambda,k) \ z^{2n},
\end{equation}
where
\begin{eqnarray} \label{anX}
a_0 (X,\lambda,k) &=& e^{-iC(h(X),k)}\  e^{i\lambda h(X)}, \nonumber \\
a_n (X,\lambda,k) &=& e^{-iC(h(X),k)} \ e^{i\lambda h(X)} \ \int_X^A \int_Y^A e^{-2i\lambda h(Y)} \ e^{i\lambda h(T)} e^{2iC(h(Y),k)} \ e^{-iC(h(T),k)} \nonumber \\
                & & \hspace{4,4cm}  \ a_{n-1} (T, \lambda,k) \ dT \ dY \ , \ {\rm{for}} \ n \geq 1.
\end{eqnarray}
\end{lemma}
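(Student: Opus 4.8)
The plan is to deduce (\ref{funX})--(\ref{anX}) directly from the absolutely convergent power series expansion (\ref{jostgauche}) of the modified Jost function $h_{L1}$, transported through the Liouville change of variable $X = g(x)$, $x = h(X)$ of (\ref{Liouville}). No new analysis is needed beyond a careful change of variables in the iterated integrals defining the Faddeev coefficients $m_{L1}^n$, followed by an induction on $n$.

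First I would recall from (\ref{jostgauche}) that
\[
  h_{L1}(x,\lambda,k,z) = e^{-iC(x,k)} e^{i\lambda x} \sum_{n=0}^{+\infty} m_{L1}^n(x,\lambda,k)\, z^{2n},
\]
where $m_{L1}^0 \equiv 1$ and, by Lemma \ref{ML-Analytic}(i), the series converges absolutely and uniformly for $z$ in compact subsets of $\C$, with $|m_{L1}^n(x,\lambda,k)| \le \frac{1}{(2n)!}\big(\int_x^{+\infty} a(s)\,ds\big)^{2n}$. Since $g$ is a $C^1$-diffeomorphism of $\R$ onto $]0,A[$ (as recorded after (\ref{Liouville})) and $f_1(X,\lambda,k,z) = h_{L1}(h(X),\lambda,k,z)$ by the definition (\ref{Fj}), the substitution $x = h(X)$ gives
\[
  f_1(X,\lambda,k,z) = e^{-iC(h(X),k)} e^{i\lambda h(X)} \sum_{n=0}^{+\infty} m_{L1}^n(h(X),\lambda,k)\, z^{2n}.
\]
Putting $a_n(X,\lambda,k) := e^{-iC(h(X),k)} e^{i\lambda h(X)} m_{L1}^n(h(X),\lambda,k)$ we obtain (\ref{funX}), and $m_{L1}^0 \equiv 1$ yields the stated expression for $a_0$.

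It remains to convert the recursion for $m_{L1}^n$ from Lemma \ref{ML-Analytic}(i) into the recursion (\ref{anX}). In the iterated integral
\[
  m_{L1}^n(x,\lambda,k) = \int_x^{+\infty}\!\!\int_y^{+\infty} e^{2i\lambda(t-y)} q_k(y)\,\overline{q_k(t)}\, m_{L1}^{n-1}(t,\lambda,k)\, dt\, dy
\]
I would change variables $y = h(Y)$, $t = h(T)$; the nested bounds $x \le y$, $y \le t$ become $X \le Y \le A$, $Y \le T \le A$, and since $dX = a(x)\,dx$ together with $q_k(x) = a(x) e^{2iC(x,k)}$ (from (\ref{Pot-V})) and the reality of $a$ and of $C$ (from (\ref{Pot-C})) one gets the exact cancellations
\[
  q_k(h(Y))\,dy = e^{2iC(h(Y),k)}\,dY, \qquad \overline{q_k(h(T))}\,dt = e^{-2iC(h(T),k)}\,dT,
\]
so that no factor of $h'$ remains. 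Writing $e^{2i\lambda(t-y)} = e^{2i\lambda h(T)} e^{-2i\lambda h(Y)}$ and substituting $m_{L1}^{n-1}(h(T),\lambda,k) = e^{iC(h(T),k)} e^{-i\lambda h(T)} a_{n-1}(T,\lambda,k)$ (the defining relation for $a_{n-1}$ read backwards), the exponentials in $T$ collapse to $e^{i\lambda h(T)} e^{-iC(h(T),k)}$ inside the double integral; multiplying by the prefactor $e^{-iC(h(X),k)} e^{i\lambda h(X)}$ then reproduces (\ref{anX}) exactly. An induction on $n \ge 1$ closes the argument, the termwise manipulations being legitimate by the uniform convergence above and the boundedness of all the exponential prefactors on $]0,A[$.

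There is no genuine obstacle here: the single point that deserves care is the bookkeeping of the nested limits of integration under $x \mapsto g(x)$ and the verification that the Jacobian $a(x)$ cancels exactly against the factor $a$ built into $q_k$ (so that the transformed equations carry no residual $h'$). The analogous formulas for $f_2, f_3, f_4$ and for the functions $g_j$ (built from the right Jost function $F_R$ via (\ref{modJostR}) and (\ref{Gj})) follow verbatim from the corresponding series in Lemma \ref{ML-Analytic}.
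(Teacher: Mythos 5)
Your proposal is correct and follows essentially the same route as the paper, which obtains Lemma \ref{JostX} directly by applying the Liouville transformation to the series (\ref{jostgauche}); you merely make explicit the change of variables in the iterated integrals and the cancellation of the Jacobian $h'(Y)=1/a(h(Y))$ against the factor $a$ inside $q_k$, which is exactly the computation the paper leaves implicit. No gap.
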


As previously, we shall consider $\fun$ as a perturbation of a function $\funp$ where $\funp$ is defined by the same series (\ref{funX})-(\ref{anX})
where $h(X)$ is replaced by $h^+(X)$ and $C(h(X),k)$ is replaced by a new function $C^+(h^+(X),k)$ which we define now. Recall from (\ref{CxkBetak}) that
\begin{equation}
C(x,k) = \beta(k)  - \int_x^{+\infty} [c(t,k)-\Omega_+ (k)] \ dt + \Omega_+ (k)x. \label{Cxk}
\end{equation}
So, replacing $c(t,k)$ by its equivalent at $+\infty$, it is natural set
\begin{equation}\label{Cplus}
C^+(h^+ (X),k) = \beta(k) + \Omega_+ (k) h^+(X).
\end{equation}
The function $\funp$ is thus defined by
\begin{equation} \label{funplus}
\funp = \sum_{n=0}^{+\infty} a_n^+ (X,\lambda,k) \ z^{2n},
\end{equation}
where
$$
a_0^+ (X,\lambda,k) = e^{-i\beta(k)}\ e^{i(\lambda-\Omega_+ (k)) h^+(X)},
$$
and for $n \geq 1$
$$
a_n^+ (X,\lambda,k) = e^{i(\lambda-\Omega_+ (k)) h^+(X)} \ \int_X^A \int_Y^A e^{-2i(\lambda -\Omega_+ (k)) h^+(Y)}
\ e^{i(\lambda-\Omega_+ (k)) h^+(T)} \ a_{n-1}^+ (T, \lambda,k) \ dT \ dY.
$$

\begin{remark}
It is clear by definition that the function $\funp$ satisfies the following singular Sturm-Liouville equation
\begin{equation} \label{SL1p}
  y'' + Q_+(X,k)y = z^2 y \ \Longleftrightarrow \ y'' + \frac{\omega_+}{(A-X)^2} y = z^2 y.
\end{equation}
Exact solutions of (\ref{SL1p}) are well known and called modified Bessel functions. We refer for instance to \cite{Le} for a description of these functions. In the following, we prove that $\fun$ is a \emph{small} perturbation of $\funp$. In consequence, the asymptotics of $\fun$ when $z \to +\infty$ will be read from the corresponding well known asymptotics for $\funp$.
\end{remark}

Thanks to our choice of diffeomorphism $h^+$, the coefficients of the serie (\ref{funplus}) can be explicitely calculated. Precisely, denoting by $\Gamma$ the well-known Euler Gamma function, we have

\begin{lemma} \label{Jostplus}
For $X\in ]0,A[$, $z \in \C$ and for all $n \geq 0$
$$
a_n^+ (X,\lambda,k) =  e^{-i\beta(k)} \ (-\frac{\kappa_+}{a_+})^{i\frac{\lambda^+(k)}{\kappa_+}} \ \Gamma(1-\nu_+(k))\
\frac{1}{2^{2n}\ n! \ \Gamma(n+1-\nu_+(k) )} \ (A-X)^{2n+i\frac{\lambda^+(k)}{\kappa_+}}
$$
with
\begin{equation} \label{nuplus}
\nu_+ (k) = \frac{1}{2} - i \  \frac{\lambda^+(k)}{\kappa_+}.
\end{equation}
\end{lemma}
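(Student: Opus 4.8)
\textbf{Proof strategy for Lemma \ref{Jostplus}.}

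The plan is to compute the coefficients $a_n^+(X,\lambda,k)$ explicitly by induction on $n$, exploiting the very simple form of the diffeomorphism $h^+$ given in (\ref{newdiffeo}) which makes every integral a pure power of $A-X$. First I would record the base case: by definition $a_0^+(X,\lambda,k) = e^{-i\beta(k)}\,e^{i(\lambda-\Omega_+(k))h^+(X)}$, and substituting $h^+(X) = \frac{1}{\kappa_+}\log(A-X) + C_+$ together with the value of $C_+$ from (\ref{cstplus}) gives
$$
a_0^+(X,\lambda,k) = e^{-i\beta(k)}\,\Bigl(-\tfrac{\kappa_+}{a_+}\Bigr)^{i\lambda^+(k)/\kappa_+}\,(A-X)^{i\lambda^+(k)/\kappa_+},
$$
which matches the claimed formula for $n=0$ since $\frac{1}{2^0\,0!\,\Gamma(1-\nu_+(k))}\,\Gamma(1-\nu_+(k)) = 1$. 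This also fixes the notation: $\lambda^+(k) = \lambda - \Omega_+(k)$ and $\nu_+(k) = \frac12 - i\lambda^+(k)/\kappa_+$, so that $i\lambda^+(k)/\kappa_+ = \nu_+(k) - \frac12$ and the exponent $2n + i\lambda^+(k)/\kappa_+$ equals $2n + \nu_+(k) - \frac12$.

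Next I would carry out the induction step. Assuming the formula for $a_{n-1}^+$, one has from the recursion defining (\ref{funplus}) that
$$
a_n^+(X,\lambda,k) = e^{i\lambda^+(k)h^+(X)} \int_X^A \int_Y^A e^{-2i\lambda^+(k)h^+(Y)}\,e^{i\lambda^+(k)h^+(T)}\,a_{n-1}^+(T,\lambda,k)\,dT\,dY.
$$
Writing $e^{i\lambda^+(k)h^+(X)} = (-\kappa_+/a_+)^{i\lambda^+(k)/\kappa_+}(A-X)^{i\lambda^+(k)/\kappa_+}$ and similarly for the $Y$ and $T$ factors, all the constants $(-\kappa_+/a_+)^{\pm i\lambda^+(k)/\kappa_+}$ collect into a single net power $(-\kappa_+/a_+)^{i\lambda^+(k)/\kappa_+}$, and the integrand in the inner $T$-integral becomes (up to the inductive prefactor) a constant times $(A-T)^{-i\lambda^+(k)/\kappa_+}\cdot(A-T)^{2(n-1)+i\lambda^+(k)/\kappa_+} = (A-T)^{2n-2}$. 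The key computation is then the elementary double integral
$$
\int_X^A \int_Y^A (A-T)^{2n-2}\,dT\,dY = \frac{1}{(2n-1)(2n)}\,(A-X)^{2n},
$$
so that, combining the $1/(2n(2n-1))$ with the inductive constant $\dfrac{1}{2^{2(n-1)}(n-1)!\,\Gamma(n-\nu_+(k))}$ and using $\Gamma(n+1-\nu_+(k)) = (n-\nu_+(k))\Gamma(n-\nu_+(k))$ together with $2n(2n-1)\cdot 2^{2n-2}(n-1)! = 2^{2n}n!\cdot\frac{2n-1}{2} = 2^{2n}n!\,(n-\tfrac12)$ — here I would be careful that the factor $n - \frac12 = n - \nu_+(k) - i\lambda^+(k)/\kappa_+$... actually one checks directly that $2n(2n-1) = 2^2 n(n-\tfrac12)$, hence $2n(2n-1)\,2^{2n-2}(n-1)! = 2^{2n}n!$ — one recovers exactly $\dfrac{1}{2^{2n}n!\,\Gamma(n+1-\nu_+(k))}$. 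The power of $A-X$ adds up correctly to $2n + i\lambda^+(k)/\kappa_+$ and the $e^{-i\beta(k)}$ factor is carried along untouched, completing the induction.

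The main obstacle here is purely bookkeeping: one must track the constant $(-\kappa_+/a_+)^{i\lambda^+(k)/\kappa_+}$ and verify that the three exponential factors $e^{i\lambda^+(k)h^+(X)}$, $e^{-2i\lambda^+(k)h^+(Y)}$, $e^{i\lambda^+(k)h^+(T)}$ combine so that the $Y$-dependence cancels entirely inside the inner integral (leaving only the pure power $(A-T)^{2n-2}$ after using the inductive form of $a_{n-1}^+$), and that the Gamma-function recursion $\Gamma(n+1-\nu_+(k)) = (n-\nu_+(k))\Gamma(n-\nu_+(k))$ reproduces the announced closed form. There is no analytic subtlety: convergence of the series and holomorphy in $z$ were already established in Lemma \ref{ML-Analytic} (and carry over verbatim to the $h^+$-version), so the only content is the explicit evaluation, which I would present as a one-line induction once the base case and the double-integral identity are in place.
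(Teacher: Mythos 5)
Your overall strategy (explicit induction on the recursion defining $a_n^+$, using that $h^+$ turns every exponential into a pure power of $A-X$) is exactly the paper's, and your base case is correct. But the induction step contains a bookkeeping error that changes the answer, and you in fact brushed against the symptom of it without resolving it.

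Since $h^+(T) = \frac{1}{\kappa_+}\log(A-T) + C_+$, the factor $e^{i\lambda^+(k)h^+(T)}$ contributes $(A-T)^{+i\lambda^+(k)/\kappa_+}$, not $(A-T)^{-i\lambda^+(k)/\kappa_+}$ as you wrote. Hence the inner integrand is a constant times $(A-T)^{2n-2+2i\lambda^+(k)/\kappa_+}$, not the pure power $(A-T)^{2n-2}$, and the $Y$-dependence does not cancel inside the inner integral: the factor $e^{-2i\lambda^+(k)h^+(Y)} \sim (A-Y)^{-2i\lambda^+(k)/\kappa_+}$ sits outside it. Carrying the computation out correctly,
$$
\int_Y^A (A-T)^{2n-2+2i\lambda^+(k)/\kappa_+}\,dT = \frac{(A-Y)^{2n-1+2i\lambda^+(k)/\kappa_+}}{2n-1+2i\lambda^+(k)/\kappa_+},
$$
and after multiplication by $(A-Y)^{-2i\lambda^+(k)/\kappa_+}$ the outer integral produces $(A-X)^{2n}/(2n)$. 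The factor gained at step $n$ is therefore
$$
\frac{1}{2n\,\bigl(2n-1+2i\lambda^+(k)/\kappa_+\bigr)} = \frac{1}{4n\,(n-\nu_+(k))},
$$
not $\frac{1}{2n(2n-1)}$; only the former is compatible with the recursion $\Gamma(n+1-\nu_+(k)) = (n-\nu_+(k))\,\Gamma(n-\nu_+(k))$ and yields the stated constant $\frac{1}{2^{2n}\,n!\,\Gamma(n+1-\nu_+(k))}$. Your own parenthetical hesitation was the correct red flag: $2n(2n-1) = 4n(n-\tfrac12)$, whereas the Gamma recursion requires $4n(n-\nu_+(k))$, and $n-\tfrac12 \neq n-\nu_+(k)$ unless $\lambda^+(k)=0$; the identity ``$2n(2n-1)\,2^{2n-2}(n-1)! = 2^{2n}n!$'' you used to dismiss it is false, the left-hand side being $2^{2n}n!\,(n-\tfrac12)$. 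The paper's proof records the correct per-step factors as the product $(1+\tfrac{2i\lambda^+(k)}{\kappa_+})(3+\tfrac{2i\lambda^+(k)}{\kappa_+})\cdots(2n-1+\tfrac{2i\lambda^+(k)}{\kappa_+})\cdot 2\cdot 4\cdots(2n)$ in the denominator and then converts it via the functional equation of $\Gamma$.
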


\begin{proof}
We prove the formula by induction. For $n=0$, the result is clear by (\ref{newdiffeo}), (\ref{cstplus}) and (\ref{Cplus}).
For $n\geq 1$, an elementary calculation gives
\begin{equation} \label{formedev}
a_n^+ (X,\lambda,k) = (-\frac{\kappa_+}{a_+})^{i\frac{\lambda^+(k)}{\kappa_+}} \ \frac{1}{(1+\frac{2i\lambda^+(k)}{\kappa_+}) \cdots
(2n-1+\frac{2i\lambda^+(k)}{\kappa_+}) \ 2 \cdots(2n)} \ (A-X)^{2n+i\frac{\lambda^+(k)}{\kappa_+}}
\end{equation}
Using the functional equality $\Gamma(z+1) = z \Gamma(z)$, Lemma \ref{Jostplus} is proved.
\end{proof}

Now, it turns out that the expressions for the coefficients $a_n^+(X,\lambda,k)$ can be written in terms of the modified Bessel function $I_{-\nu}(x)$. Let us recall its definition (\cite{Le}, Eq. $(5.7.1)$, p. $108$),
\begin{equation} \label{Besselmod}
I_{-\nu}(x) =\sum_{n=0}^{+\infty}\  \frac{1}{\Gamma(n-\nu +1)\  n!}\ \left(\frac{x}{2}\right)^{-\nu+2n}, \ \ x \in \C, \ \mid Arg\ x \mid < \pi.
\end{equation}
As in \cite{DN3}, Corollary 4.5, we have

\begin{coro} \label{asjostp} \hfill
\begin{enumerate}
\item For $X \in ]0,A[$ and $z \in \C$,
$$ 
\funp = e^{-i\beta(k)} \left( -\frac{\kappa_+}{a_+} \right)^{i\frac{\lambda^+(k)}{\kappa_+}} \Gamma(1-\nu_+ (k)) \sqrt{A-X} \ \left( \frac{z}{2} \right)^{\nu_+ (k)} I_{-\nu_+ (k)} (z(A-X)).
$$
\item Let $X_1 \in ]0,A[$ fixed. Then, for $j=0,1$ and uniformly in $X \in ]X_1, A[$, the following asymptotics hold when $z \rightarrow + \infty$, $z$ real
$$  
 f_1^{+(j)}(X,\lambda,k, z)= (-1)^j\ \frac{2^{-\nu_+(k)}}{\sqrt{2\pi}}\ (-\frac{\kappa_+}{a_+})^{i\frac{\lambda^+(k)}{\kappa_+}} \ \Gamma(1-\nu_+(k))
\ z^{j-i\frac{\lambda^+(k)}{\kappa_+}} \ e^{z(A-X)} \ \Big(1+O(\frac{1}{z})\Big).
$$
\end{enumerate}
\end{coro}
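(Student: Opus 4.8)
The plan is to establish part~(1) as an \emph{exact} identity by resumming the explicit series for $\funp$, and then to read off the large-$z$ asymptotics of part~(2) from the classical asymptotics of the modified Bessel function, following the route of \cite{DN3}, Corollary~4.5. Everything here is bookkeeping with Gamma functions and fractional powers once the right identifications have been made.

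First I would substitute the closed form of the coefficients $a_n^+(X,\lambda,k)$ given in Lemma \ref{Jostplus} into the defining series $\funp = \sum_{n\geq 0} a_n^+(X,\lambda,k)\, z^{2n}$ of (\ref{funplus}). Factoring out the $n$-independent constant $e^{-i\beta(k)}\,(-\kappa_+/a_+)^{i\lambda^+(k)/\kappa_+}\,\Gamma(1-\nu_+(k))$ together with the power $(A-X)^{i\lambda^+(k)/\kappa_+}$, one is left with $\sum_{n\geq 0}\bigl(z(A-X)/2\bigr)^{2n}\big/\bigl(n!\,\Gamma(n+1-\nu_+(k))\bigr)$. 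Comparing with the series (\ref{Besselmod}) for $I_{-\nu}$ evaluated at $\nu=\nu_+(k)$ and argument $x=z(A-X)$, this sum equals $\bigl(z(A-X)/2\bigr)^{\nu_+(k)}\,I_{-\nu_+(k)}(z(A-X))$. Finally I would use the algebraic relation $\nu_+(k)=\tfrac12 - i\lambda^+(k)/\kappa_+$ from (\ref{nuplus}) to combine $(A-X)^{i\lambda^+(k)/\kappa_+}\cdot(A-X)^{\nu_+(k)}=(A-X)^{1/2}$, which produces exactly the formula of part~(1). This identity holds for every $z\in\C$ since, by the same Volterra-type estimates as in Lemma \ref{ML-Analytic}, the series converges locally uniformly.

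For part~(2), I would insert into the formula of part~(1) the classical large-argument asymptotics of the modified Bessel function, $I_{-\nu}(x)=\dfrac{e^x}{\sqrt{2\pi x}}\bigl(1+O(1/x)\bigr)$ as $x\to+\infty$ (see \cite{Le}), with $x=z(A-X)$. Since $X$ ranges in $]X_1,A[$ with $X_1>0$ fixed, one has $z(A-X)\geq z(A-X_1)\to+\infty$ uniformly in $X$, so the Bessel asymptotics are uniform; collecting powers, $\sqrt{A-X}\,(z/2)^{\nu_+(k)}\bigl(2\pi z(A-X)\bigr)^{-1/2}=2^{-\nu_+(k)}(2\pi)^{-1/2}z^{\nu_+(k)-1/2}$ and $\nu_+(k)-\tfrac12=-i\lambda^+(k)/\kappa_+$, which yields the claimed leading term for $j=0$. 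For $j=1$ I would differentiate the representation of part~(1) with respect to $X$: by the chain rule the derivative of $I_{-\nu_+(k)}(z(A-X))$ produces the factor $-z$, and using the standard recurrence/derivative relations for modified Bessel functions one checks that $I_{-\nu}'(x)$ has the same leading asymptotics $e^x/\sqrt{2\pi x}\,(1+O(1/x))$ as $I_{-\nu}(x)$, while the $X$-derivative of the algebraic prefactor $\sqrt{A-X}$ only contributes at relative order $O(1/z)$. This accounts for the overall factor $(-1)^j z^j$.

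The only points requiring genuine care are the uniformity of the Bessel asymptotics — which forces the restriction to $X\in]X_1,A[$, the estimate degenerating as $X\to A$ (i.e. $x\to0$) — and the consistent choice of principal branches for $z^{\nu_+(k)}$ and $(A-X)^{i\lambda^+(k)/\kappa_+}$; this is legitimate since $z>0$ and $0<A-X<A$, so that $|\mathrm{Arg}\,(z(A-X))|<\pi$ as demanded in (\ref{Besselmod}). For the derivative statement, the one non-purely-algebraic input is the classical fact (from \cite{Le}) that differentiating the modified Bessel function leaves its leading exponential asymptotics unchanged; apart from this, the argument is identical to \cite{DN3}.
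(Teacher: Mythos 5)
Your proof is correct and follows exactly the paper's route (resumming the closed-form coefficients of Lemma \ref{Jostplus} against the Bessel series (\ref{Besselmod}), using $\nu_+(k)=\tfrac12-i\lambda^+(k)/\kappa_+$ to collapse the powers of $A-X$, then invoking the classical asymptotics $I_{-\nu}(x)\sim e^{x}/\sqrt{2\pi x}$ from \cite{Le}), which the paper itself delegates to \cite{DN3}, Corollary 4.5. One small slip: for $X\in\,]X_1,A[$ one has $z(A-X)\leq z(A-X_1)$, not $\geq$, so the Bessel asymptotics are in fact only uniform on compact subintervals bounded away from $A$ --- as you yourself note a line later, and as the paper implicitly acknowledges by restating the asymptotics only for $X\in\,]X_0,X_1[$ in Proposition \ref{asymptoticfun}.
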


Now, we prove some useful properties for $a_n(X, \lambda,k)$ and  $a_n^+(X, \lambda,k)$ with $k$ fixed.

\begin{lemma} \label{Estimations}\hfill
\begin{enumerate}
\item For $n \geq 0$ and $X \in]0,A[$, we have
\begin{equation}\label{estimAn}
\mid a_n (X,\lambda,k) \mid \leq \frac {(A-X)^{2n}}{(2n)!},
\end{equation}
\item Let $X_0 \in (0,A)$ be fixed. For $n \geq 0$ and for all $X \in]X_0,A[$,
\begin{equation}\label{estimAn'}
a_n' (X,\lambda,k) = O\left((A-X)^{2n-1}\right).
\end{equation}
\end{enumerate}
The same estimates hold for $a_n^+(X,\lambda,k)$.
\end{lemma}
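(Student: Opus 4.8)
The plan is to establish both estimates by induction on $n$, working directly from the recursive integral representation (\ref{anX}) of $a_n(X,\lambda,k)$ (and its analogue for $a_n^+$), after first rewriting (\ref{anX}) in a streamlined form. Introduce the unit-modulus function $\phi(X) = e^{i(\lambda h(X) - C(h(X),k))}$. Since $C'(x,k) = c(x,k)$, a direct check shows that all phase factors in (\ref{anX}) collapse, so that $a_0(X,\lambda,k) = \phi(X)$ and, for $n\ge 1$,
$$
a_n(X,\lambda,k) = \phi(X)\int_X^A \phi(Y)^{-2}\Big(\int_Y^A \phi(T)\, a_{n-1}(T,\lambda,k)\, dT\Big)\, dY .
$$
For part (1), since $|\phi|\equiv 1$, the bound follows at once by induction: assuming $|a_{n-1}(T,\lambda,k)|\le (A-T)^{2n-2}/(2n-2)!$, one gets
$$
|a_n(X,\lambda,k)| \le \int_X^A\!\int_Y^A \frac{(A-T)^{2n-2}}{(2n-2)!}\, dT\, dY = \frac{(A-X)^{2n}}{(2n)!}.
$$
(This is simply the Liouville transform of the estimate on $m_{L1}^n$ in Lemma \ref{ML-Analytic}, via $\int_{h(X)}^{+\infty} a(t)\, dt = A-X$.)

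For part (2), differentiate the displayed recursion with respect to $X$. Using $\phi'(X)/\phi(X) = i\, h'(X)\big(\lambda - c(h(X),k)\big)$ one obtains
$$
a_n'(X,\lambda,k) = \frac{\phi'(X)}{\phi(X)}\, a_n(X,\lambda,k) - \phi(X)^{-1}\int_X^A \phi(T)\, a_{n-1}(T,\lambda,k)\, dT .
$$
On $(X_0,A)$ one has $h'(X) = O((A-X)^{-1})$ by (\ref{h}), while $c(h(X),k)$ stays bounded (it converges to $\Omega_+(k)$ as $X\to A$, and is continuous on the compact part), so $\phi'(X)/\phi(X) = O((A-X)^{-1})$ uniformly on $(X_0,A)$. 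Combining with part (1) makes the first term $O((A-X)^{2n-1})$, and the second term is dominated by $\int_X^A |a_{n-1}(T,\lambda,k)|\, dT \le (A-X)^{2n-1}/(2n-1)!$, hence also $O((A-X)^{2n-1})$. The case $n=0$ is the base case, $a_0'(X,\lambda,k) = \phi'(X) = (\phi'/\phi)\,\phi = O((A-X)^{-1})$.

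Finally, the estimates for $a_n^+(X,\lambda,k)$ follow verbatim from the same two steps: these coefficients satisfy exactly the recursion (\ref{anX}) with the diffeomorphism $h$ replaced by the explicit $h_+$ of (\ref{newdiffeo}) and the function $c(h(X),k)$ replaced by the constant $\Omega_+(k)$, so the analogue of $\phi$ is $\phi_+(X) = e^{i(\lambda-\Omega_+(k)) h_+(X)}$ (times a unimodular constant), again with $|\phi_+|\equiv 1$; since $h_+'(X) = -1/(\kappa_+(A-X)) = O((A-X)^{-1})$, both the inductive bound and the differentiation argument apply unchanged.

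\textbf{Expected main difficulty.} There is no substantial obstacle: the lemma is a routine Gronwall-type perturbative estimate. The only points requiring a little care are justifying differentiation under the integral sign in the recursion and bookkeeping the phase factors so that the logarithmic derivative $\phi'/\phi$ is seen to blow up no worse than $(A-X)^{-1}$, together with checking that the $O$-constants are uniform on all of $(X_0,A)$ rather than merely asymptotic as $X\to A$ — which follows from continuity of the relevant quantities on compact subsets of $(0,A)$.
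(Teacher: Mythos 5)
Your proof is correct and follows essentially the same route as the paper: part (1) by the same iterated double-integral induction, and part (2) by differentiating the recursion to get $a_n' = (\phi'/\phi)\,a_n - \phi^{-1}\int_X^A \phi\, a_{n-1}\,dT$ (which is exactly the paper's formula, written with your unimodular phase $\phi$), then using $h'(X)=O((A-X)^{-1})$ on $(X_0,A)$ together with part (1). The treatment of $a_n^+$ via $h_+$ and $\Omega_+(k)$ also matches the paper.
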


\begin{proof}
The first point is clear by a simple induction. Let us prove the second one. For $n=0$, we observe that
\begin{equation}
a_0 '(X,\lambda,k) = i (\lambda-c(h(X),k))h'(X)\  a_0 (X,\lambda, k).
\end{equation}
Using $c(h(X),k)=O(1)$ and the estimate
\begin{equation} \label{h'(X)}
 {\displaystyle{h'(X) = O\left( \frac{1}{A-X}\right)}}, \quad \forall X \in ]X_0,A[,
\end{equation}
we obtain $ a_0 '(X,\lambda,k)=O((A-X)^{-1})$. For $n\geq1$, we have
\begin{eqnarray}
a_n' (X,\lambda,k) &=& i(\lambda - c(h(X),k)) h'(X)\ a_n (X\lambda,k) \nonumber \\
                   & &  - e^{iC(h(X),k)}\ e^{-i\lambda h(X)}\ \int_X^A e^{i\lambda h(T)}\ e^{-iC(h(T),k)}\ a_{n-1} (T,\lambda,k) \ dT.
\end{eqnarray}
Thus, the result follows from (\ref{estimAn}) and (\ref{h'(X)}) by induction. For $a_n^+(X,\lambda,k)$,
the proof is identical.
\end{proof}

Now, we want to control the difference $\fun- \funp$. To do this, we set
\begin{equation}\label{error}
e_n(X,\lambda,k) =a_n(X, \lambda,k) - a_n^+(X, \lambda,k),
\end{equation}
and thus, we have
\begin{equation}\label{difference}
\fun- \funp = \sum_{n=0}^{+\infty} \ e_n(X,\lambda,k) \  z^{2n}.
\end{equation}
In the next lemma, we show that $a_n(X,\lambda,k), \ a_n^+(X,\lambda,k)$ and $e_n(X,\lambda,k)$ satisfy second order differential equations.
\begin{lemma}\label{eqdiffs} \hfill
\begin{enumerate}
\item For $n\geq 1$, $a_n(X,\lambda,k)$ and $a_n^+(X,\lambda,k)$ satisfy on $]0,A[$,
\begin{eqnarray}
a_n''(X, \lambda,k) + Q(X,k) \ a_n (X, \lambda,k) & = & a_{n-1}(X, \lambda,k), \label{eqdiffAn}\\
{a_n^+}''(X, \lambda,k) + Q_+(X,k) \ a_n^+ (X, \lambda,k) &=& a_{n-1}^+(X, \lambda,k), \label{eqdiffAn+}
\end{eqnarray}
\item For $n\geq 1$, $e_n(X,\lambda,k)$ satisfies on $]0,A[$,
\begin{equation}\label{eqdiffen}
{e_n}''(X, \lambda,k) + Q_+ (X,k)   \ e_n (X, \lambda,k) =
e_{n-1}(X, \lambda,k) - q_+(X,k) \ a_n (X, \lambda,k)
\end{equation}
where
\begin{equation}\label{qzero}
q_+(X,k) = Q(X,k)- Q_+(X,k).
\end{equation}
\end{enumerate}
\end{lemma}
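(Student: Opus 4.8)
The statement to prove is Lemma \ref{eqdiffs}, which asserts that the coefficients $a_n(X,\lambda,k)$, $a_n^+(X,\lambda,k)$, and their differences $e_n(X,\lambda,k)$ satisfy the inhomogeneous second-order ODEs \eqref{eqdiffAn}, \eqref{eqdiffAn+}, \eqref{eqdiffen}. The natural approach is to exploit that the full modified Jost function $\fun$ solves the Sturm-Liouville equation \eqref{SL1}, namely $y'' + Q(X,k) y = z^2 y$, established in Lemma \ref{Sturm}, and that $\fun$ admits the power-series expansion \eqref{funX} in the variable $z^2$ from Lemma \ref{JostX}. The idea is simply to substitute the series into the differential equation and match powers of $z$. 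First I would justify that termwise differentiation is legitimate: by the estimates \eqref{estimAn} and \eqref{estimAn'} in Lemma \ref{Estimations} (together with an analogous bound on $a_n''$ obtained either directly from the integral formula \eqref{anX} or a posteriori from the recursion once \eqref{eqdiffAn} is in hand), the series $\sum a_n(X,\lambda,k) z^{2n}$ and its formal term-by-term $X$-derivatives converge absolutely and uniformly on compact subsets of $]0,A[\times\C$, so $\fun$ is smooth in $X$ and $\partial_X^2 \fun = \sum_n a_n''(X,\lambda,k) z^{2n}$.

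Substituting the expansion into \eqref{SL1} gives
$$
\sum_{n=0}^{\infty} a_n''(X,\lambda,k) z^{2n} + Q(X,k) \sum_{n=0}^{\infty} a_n(X,\lambda,k) z^{2n} = z^2 \sum_{n=0}^{\infty} a_n(X,\lambda,k) z^{2n} = \sum_{n=1}^{\infty} a_{n-1}(X,\lambda,k) z^{2n}.
$$
Since the left- and right-hand sides are entire functions of $z$ that agree for all $z$, their Taylor coefficients at $z=0$ coincide; comparing the coefficient of $z^{2n}$ for each $n\geq 1$ yields exactly $a_n''(X,\lambda,k) + Q(X,k) a_n(X,\lambda,k) = a_{n-1}(X,\lambda,k)$, which is \eqref{eqdiffAn}. (For $n=0$ one recovers the homogeneous relation $a_0'' + Q a_0 = 0$, consistent with $a_0$ being a solution of the free equation; this is not needed for the statement but is a useful sanity check.) The identity \eqref{eqdiffAn+} for $a_n^+$ is proved in exactly the same way, using that $\funp$ satisfies \eqref{SL1p}, i.e. $y'' + Q_+(X,k) y = z^2 y$ (as noted in the Remark following the definition of $\funp$), together with the series \eqref{funplus}; alternatively one can verify \eqref{eqdiffAn+} directly by differentiating the closed-form expression in Lemma \ref{Jostplus} twice, but appealing to the Sturm-Liouville equation for $\funp$ is cleaner.

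For the third part, set $e_n = a_n - a_n^+$ and subtract \eqref{eqdiffAn+} from \eqref{eqdiffAn}:
$$
e_n''(X,\lambda,k) + Q(X,k) a_n(X,\lambda,k) - Q_+(X,k) a_n^+(X,\lambda,k) = e_{n-1}(X,\lambda,k).
$$
Now rewrite the middle term by inserting and removing $Q_+(X,k) a_n(X,\lambda,k)$:
$$
Q(X,k) a_n - Q_+(X,k) a_n^+ = Q_+(X,k)(a_n - a_n^+) + (Q(X,k) - Q_+(X,k)) a_n = Q_+(X,k) e_n + q_+(X,k) a_n,
$$
where $q_+(X,k) = Q(X,k) - Q_+(X,k)$ as in \eqref{qzero}. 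Substituting back gives $e_n'' + Q_+(X,k) e_n = e_{n-1} - q_+(X,k) a_n$, which is \eqref{eqdiffen}. I do not anticipate a serious obstacle here: the only genuinely non-formal point is the justification of termwise differentiation of the series twice, and for this the estimate \eqref{estimAn} alone gives uniform convergence of $\sum a_n z^{2n}$, while the recursion \eqref{anX} (differentiated once, as already done in the proof of Lemma \ref{Estimations}, and once more) supplies the corresponding bounds on $a_n'$ and $a_n''$ of the form $O((A-X)^{2n-2})$ on $]X_0,A[$, which suffices. Everything else is bookkeeping with power series.
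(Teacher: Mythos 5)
Your proof is correct and follows essentially the same route as the paper: the paper's own proof is a one-line remark that (\ref{eqdiffAn}) "follows directly" from substituting the series (\ref{funX}) into the Sturm--Liouville equation (\ref{SL1}) and that (\ref{eqdiffen}) is obtained by subtracting (\ref{eqdiffAn+}) from (\ref{eqdiffAn}), which is exactly your argument carried out in detail. Your extra care about justifying termwise differentiation via the estimates of Lemma \ref{Estimations} is a reasonable addition that the paper leaves implicit.
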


\begin{proof}
Since ${\displaystyle{\fun= \sum_{n=0}^{+\infty} \ a_n(X,\lambda,k) \  z^{2n}}}$, (\ref{eqdiffAn}) follows directly
from (\ref{SL1}). The proof of (\ref{eqdiffAn+}) is identical. At last, (\ref{eqdiffen}) is a direct consequence
of (\ref{eqdiffAn}) and (\ref{eqdiffAn+}).
\end{proof}

Now we show that the equation (\ref{eqdiffen}) can be rewritten as an integral equation which will be useful to estimate the error term $e_n$.

\begin{lemma}\label{equationerreur}
For $n\geq 1$ and for all $X \in ]X_0,A[$, $e_n(X,\lambda,k)$ satisfies  the integral equation
\begin{eqnarray}\label{eqintegrale}
e_n (X,\lambda,k) &=& e^{i\lambda^+(k) h_+ (X)} \ \int_X^A \int_Y^A e^{-2i\lambda^+(k) h_+ (Y)}\  e^{i\lambda^+(k) h_+ (T)} \nonumber \\
                  & & \hspace{2 cm} [ e_{n-1}(T,\lambda,k) -q_+ (T,k)\ a_n (T,\lambda,k) ] \  dT\ dY.
\end{eqnarray}

\end{lemma}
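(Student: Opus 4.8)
The plan is to derive the integral equation \eqref{eqintegrale} directly from the second order differential equation \eqref{eqdiffen} satisfied by $e_n$, by constructing a suitable Green's function for the operator $\frac{d^2}{dX^2} + Q_+(X,k)$ and applying a variation of parameters argument, taking care of the boundary conditions at $X=A$.

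First I would recall from \eqref{SL1p} that the homogeneous equation $y'' + Q_+(X,k) y = z^2 y$ with $z = 0$ is $y'' + \frac{\omega_+(k)}{(A-X)^2}y = 0$, an Euler (equidimensional) equation whose two independent solutions are $(A-X)^{\nu_+(k)}$ and $(A-X)^{1-\nu_+(k)}$, where $\nu_+(k) = \frac12 - i\frac{\lambda^+(k)}{\kappa_+}$. Using $h_+(X) = \frac{1}{\kappa_+}\log(A-X) + C_+$, these can be rewritten (up to multiplicative constants) as $e^{i\lambda^+(k) h_+(X)}$ and $(A-X) e^{-i\lambda^+(k) h_+(X)}$. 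A direct computation of their Wronskian (analogous to Lemma \ref{wronskien}, or just using Abel's formula) shows it is a nonzero constant. The candidate right-hand side of \eqref{eqintegrale}, namely
$$
  e^{i\lambda^+(k) h_+(X)} \int_X^A \int_Y^A e^{-2i\lambda^+(k) h_+(Y)} e^{i\lambda^+(k) h_+(T)} F(T)\, dT\, dY,
$$
with $F(T) = e_{n-1}(T,\lambda,k) - q_+(T,k) a_n(T,\lambda,k)$, is precisely the particular solution of $y'' + Q_+(X,k) y = F(X)$ obtained by this Green's function that decays appropriately as $X \to A$. I would verify this by differentiating the double integral twice in $X$: the first derivative picks up one boundary term that vanishes, the second derivative produces the source term $F(X)$ together with a term that reassembles $Q_+(X,k)$ times the integral expression, using the explicit form $Q_+(X,k) = \omega_+(k)/(A-X)^2$ and the relation $e^{i\lambda^+(k) h_+(X)}$ solves the homogeneous equation. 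This shows that the right-hand side of \eqref{eqintegrale} and $e_n$ satisfy the same inhomogeneous ODE \eqref{eqdiffen} on $(X_0, A)$.

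Next I would pin down the boundary behaviour near $X = A$ to conclude the two solutions coincide. From Lemma \ref{Estimations} and \eqref{estimAn}, $|a_n(X,\lambda,k)| = O((A-X)^{2n})$ and similarly for $a_n^+$, hence $e_n(X,\lambda,k) = O((A-X)^{2n})$, while $q_+(X,k) \in L^\infty(X_0,A)$ by \eqref{omega}. Inductively (the base case $e_0 \equiv 0$ since $a_0 = a_0^+$ — wait, one must check: $a_0(X,\lambda,k) = e^{-iC(h(X),k)}e^{i\lambda h(X)}$ and $a_0^+(X,\lambda,k) = e^{-i\beta(k)}e^{i\lambda^+(k)h_+(X)}$, which are genuinely different, so $e_0 \not\equiv 0$; I would instead establish the boundary decay $e_n(X,\lambda,k) = O((A-X)^{2n-\epsilon})$ or simply $O((A-X))$ for all $n \geq 1$ directly from \eqref{estimAn} applied to both $a_n$ and $a_n^+$, which suffices). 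Given that $e_n$ and the double-integral expression are both particular solutions of \eqref{eqdiffen} with the same decay rate as $X \to A$, their difference is a homogeneous solution, a linear combination of $e^{i\lambda^+(k)h_+(X)}$ (which stays bounded, in fact $|e^{i\lambda^+(k)h_+(X)}| \sim (A-X)^{1/2}$... more precisely behaves like $(A-X)^{\mathrm{Re}\,\nu_+} = (A-X)^{1/2}$) and $(A-X)e^{-i\lambda^+(k)h_+(X)} \sim (A-X)^{1/2}$ as well — both of modulus $\sim (A-X)^{1/2}$, which does \emph{not} vanish faster than the decay we control for $e_n$.

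The main obstacle, therefore, is exactly this uniqueness/matching argument: a naive comparison of decay rates near $X=A$ is not by itself enough to kill the homogeneous part, since both fundamental solutions behave like $(A-X)^{1/2}$ up to oscillation whereas $e_n$ only decays like a fixed positive power. To resolve it I would instead argue as follows: differentiate the defining series $f_1(X,\lambda,k,z) = \sum a_n(X,\lambda,k)z^{2n}$ and $f_1^+(X,\lambda,k,z) = \sum a_n^+(X,\lambda,k)z^{2n}$, which by Lemma \ref{JostX}, \eqref{funplus} and the absolutely convergent estimates of Lemma \ref{Estimations} can be differentiated term by term; $f_1$ and $f_1^+$ both satisfy $y'' + Q(X,k)y = z^2 y$ resp. $y'' + Q_+(X,k) y = z^2 y$ and one reads off from the construction (equations \eqref{anX} and its $+$-analogue) that each $a_n$, $a_n^+$ is \emph{by definition} given by the iterated double integral with kernel $e^{i\lambda^+(k)h_+(X)}\int_X^A\int_Y^A e^{-2i\lambda^+(k)h_+(Y)}e^{i\lambda^+(k)h_+(T)}(\cdot)\,dT\,dY$ acting on $a_{n-1}$ (for the $+$ case) — so \eqref{eqintegrale} is obtained by \emph{subtracting} the exact integral recursion for $a_n^+$ from the integral recursion for $a_n$ after expressing the latter's kernel as the $+$-kernel plus an error governed by $q_+$. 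Concretely: rewrite the recursion \eqref{anX} for $a_n$ using $C(h(X),k) = C^+(h^+(X),k) + (\text{correction})$ and $h = h_+ + (\text{correction})$, expand, and the leading term reproduces the $+$-kernel applied to $a_n$ — no, applied to $a_{n-1}$; the correction terms, after using the ODEs \eqref{eqdiffAn}–\eqref{eqdiffAn+} to convert differences of kernels into the potential difference $q_+$, collect into exactly the $-q_+(T,k)a_n(T,\lambda,k)$ source. This bookkeeping is routine but delicate, and it is the only genuinely technical step; once it is done, \eqref{eqintegrale} follows, and it will then be used to propagate the estimate $|e_n(X,\lambda,k)| \leq C(A-X)^{2n-\epsilon}/\,$(appropriate factorial) needed for the asymptotics of $f_1 - f_1^+$ as $z \to +\infty$.
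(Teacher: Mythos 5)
Your first route (verify that the right-hand side of (\ref{eqintegrale}) is a particular solution of (\ref{eqdiffen}) and then kill the homogeneous part by matching decay at $X=A$) is in fact viable, but you abandon it because of a computational error. The indicial equation for $y''+\omega_+(k)(A-X)^{-2}y=0$ with $\omega_+(k)=(\lambda^+(k)/\kappa_+)^2+i\lambda^+(k)/\kappa_+$ has roots $i\lambda^+(k)/\kappa_+$ and $1-i\lambda^+(k)/\kappa_+$, so the two homogeneous solutions are $(A-X)^{i\lambda^+(k)/\kappa_+}$ (of modulus comparable to a nonzero constant) and $(A-X)^{1-i\lambda^+(k)/\kappa_+}$ (of modulus comparable to $A-X$) --- not two solutions both of modulus $(A-X)^{1/2}$ as you claim. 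Since for $n\geq 1$ both $e_n$ (by Lemma \ref{Estimations} applied to $a_n$ and $a_n^+$) and the candidate double integral are $O((A-X)^{2n})=O((A-X)^{2})$ near $X=A$, whereas any nonzero homogeneous solution decays at most like $A-X$, the difference must vanish. So the ``main obstacle'' you identify is not there; what remains in your write-up is a fallback argument (re-expanding the recursion (\ref{anX}) around the $+$-kernel and claiming the corrections ``collect into exactly'' the $-q_+a_n$ source) that is asserted rather than proved, and as it stands neither route is completed. That is a genuine gap.

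For comparison, the paper's proof is shorter and avoids any uniqueness argument for the ODE: it substitutes (\ref{eqdiffen}), i.e. $e_{n-1}-q_+a_n=e_n''+Q_+e_n$, into the right-hand side of (\ref{eqintegrale}), integrates by parts twice in $T$ (using that $e^{i\lambda^+(k)h_+(T)}$ solves the homogeneous equation, which cancels the $Q_+e_n$ term), and checks that the boundary contributions at $T=A$ and $Y=A$ vanish by the decay estimates of Lemma \ref{Estimations}, the surviving boundary terms reassembling into $e_n(X)$. If you repair the exponent computation, your Green's-function argument becomes a correct, if slightly longer, alternative; the direct integration by parts is the cleaner path.
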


\begin{proof}
We denote by $f_n (X,\lambda,k)$ the (R.H.S) of (\ref{eqintegrale}). Using (\ref{eqdiffen}), we have :
$$
f_n (X,\lambda,k) = e^{i\lambda^+(k) h_+ (X)} \ \int_X^A \int_Y^A e^{-2i\lambda^+(k) h_+ (Y)}\  e^{i\lambda^+(k) h_+ (T)} \ {e_n}''(T, \lambda,k)\  dT\ dY
\hspace{3cm}
$$
$$
  + \ e^{i\lambda^+(k) h_+ (X)} \ \int_X^A \int_Y^A e^{-2i\lambda^+(k) h_+ (Y)}\  e^{i\lambda h_+ (T)} \
    Q_+ (T,k)   \ e_n (T, \lambda,k)     \  dT\ dY.
$$
Integrating by part twice the first integral and using Lemma \ref{Estimations} yield (\ref{eqintegrale}).
\end{proof}

In the next lemma, for $k$ fixed, we estimate $e_n(X,\lambda,k)$ and its derivative.

\begin{lemma}\label{estimationerreur}
Let $X_0 \in (0,A)$ be fixed. There exists a constant $C>0$ such that for all $n \geq 0$ and for all $X \in ]X_0, A[$,
\begin{equation}\label{estimen}
\mid e_n(X,\lambda,k) \mid \ \leq \ C \ (n+1) \ \frac {(A-X)^{2n+2}}{(2n+2)!}
\end{equation}
\begin{equation}\label{estimend}
\mid {e_n}'(X,\lambda,k) \mid \ \leq \ C \ (n+1)   \  \frac {(A-X)^{2n+1}}{(2n+1)!}
\end{equation}
\end{lemma}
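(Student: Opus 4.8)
\textbf{Proof plan for Lemma \ref{estimationerreur}.}

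The plan is to prove the two estimates \eqref{estimen}--\eqref{estimend} simultaneously by induction on $n$, using the integral equation \eqref{eqintegrale} of Lemma \ref{equationerreur} as the main engine. The base case $n=0$ is where we must extract a genuine gain: here $e_0(X,\lambda,k) = a_0(X,\lambda,k) - a_0^+(X,\lambda,k)$, and using the explicit formulas for $a_0$ and $a_0^+$ together with Lemma \ref{estdiffeo} (which controls $h(X)-h_+(X)$ and its first two derivatives by $C(A-X)^{2-j}$) and the asymptotics \eqref{cX1}--\eqref{cX2} for $c(h(X),k)$, one checks that $e_0(X,\lambda,k) = O((A-X)^2)$ and $e_0'(X,\lambda,k) = O((A-X))$ on $(X_0,A)$; this gives \eqref{estimen}--\eqref{estimend} for $n=0$ with a suitable $C$. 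The key point in the base case is that the "defect" between $a_0$ and $a_0^+$ is quadratically small near $X=A$, which is exactly the quadratic smallness of $q_+(X,k) = Q(X,k)-Q_+(X,k) \in L^\infty(X_0,A)$ proved in \eqref{omega}.

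For the inductive step, assume \eqref{estimen}--\eqref{estimend} hold up to $n-1$. First I would estimate $|e_n|$: plug the inductive bound on $e_{n-1}$ and the bound $|a_n(T,\lambda,k)| \le (A-T)^{2n}/(2n)!$ from Lemma \ref{Estimations} into the double integral \eqref{eqintegrale}. Since $h_+$ is explicit (logarithmic) the phase factors $e^{\pm i\lambda^+(k) h_+}$ contribute only bounded factors of the form $(A-Y)^{\mp i\lambda^+(k)/\kappa_+}$, so after taking moduli the estimate reduces to controlling
$$
\int_X^A \int_Y^A \Big[ C\,n\,\frac{(A-T)^{2n}}{(2n)!} + \|q_+\|_\infty \frac{(A-T)^{2n}}{(2n)!} \Big] \, dT \, dY,
$$
and the elementary identity $\int_X^A\int_Y^A (A-T)^{2n}\,dT\,dY = (A-X)^{2n+2}/\big((2n+1)(2n+2)\big)$ converts a bound of size $c\,(n+1)(A-T)^{2n}/(2n)!$ into one of size $c\,(n+1)(A-X)^{2n+2}/(2n+2)!$, which is exactly \eqref{estimen} provided the constant $C$ is chosen once and for all large enough (the extra factor from $n+1 \to n+2$ is absorbed because we carry the explicit linear factor $(n+1)$ in the Ansatz). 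For the derivative estimate \eqref{estimend}, differentiate \eqref{eqintegrale} in $X$: one term is $i\lambda^+(k) h_+'(X)$ times $e_n(X,\lambda,k)$, and since $h_+'(X) = O((A-X)^{-1})$ this term is $O\big((n+1)(A-X)^{2n+1}/(2n+2)!\big)$ which is even better than needed; the other term is $-e^{i\lambda^+(k)h_+(X)}\int_X^A e^{-i\lambda^+(k)h_+(Y)}[\,e_{n-1}(Y,\lambda,k)-q_+(Y,k)a_n(Y,\lambda,k)\,]\,dY$ (after doing the inner $Y$-integral at $Y=X$ in the usual way), and bounding this single integral by the inductive hypotheses gives the $(A-X)^{2n+1}/(2n+1)!$ behaviour with the linear factor $(n+1)$.

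The main obstacle I anticipate is bookkeeping the constant $C$ uniformly in $n$: the recursion naively loses a factor each step (from $q_+ a_n$ and from the $n+1 \to n+2$ shift), and one must verify that carrying the explicit polynomial weight $(n+1)$ in the Ansatz, combined with the genuine gain $\tfrac{1}{(2n+1)(2n+2)}$ from the double integration, really does close the induction with a \emph{single} constant $C$ independent of $n$ (and of $X \in (X_0,A)$, the dependence on $X_0$ being allowed). A secondary technical point is to make sure the phase factors $e^{\pm i\lambda^+(k)h_+}$ and the factors $(A-Y)^{\pm i\lambda^+(k)/\kappa_+}$ are handled as genuinely bounded quantities on $(X_0,A)$ (they have modulus $1$ up to the fixed constant $(-\kappa_+/a_+)^{\pm i\lambda^+(k)/\kappa_+}$), so that no hidden growth sneaks in; this is routine but must be stated. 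Once \eqref{estimen}--\eqref{estimend} are established, one concludes as in \cite{DN3}, Lemma 4.9, that the series $\sum_n e_n(X,\lambda,k)z^{2n}$ defining $f_1 - f_1^+$ and its term-by-term derivative converge and that $f_1 - f_1^+$ is a small perturbation of $f_1^+$ for large real $z$, which is what feeds the asymptotics of the Jost functions in the next subsection.
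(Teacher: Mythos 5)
Your proposal is correct and follows essentially the same route as the paper: induction on $n$, with the base case $e_0=a_0-a_0^+=O((A-X)^2)$ extracted from Lemma \ref{estdiffeo} and the expansion of $C(h(X),k)$, and the inductive step obtained by inserting the bounds on $e_{n-1}$ and $q_+a_n$ into the integral equation \eqref{eqintegrale} so that $Cn+C=C(n+1)$ closes the induction with a single constant, the derivative estimate being handled by differentiating \eqref{eqintegrale}. Only a cosmetic remark: the quadratic smallness of $e_0$ comes from $h-h_+=O((A-X)^2)$ and $C(h(X),k)-\beta(k)-\Omega_+(k)h(X)=O((A-X)^2)$ rather than from $q_+\in L^\infty$ (which is merely boundedness), but you cite the right ingredients in the preceding sentence, so this does not affect the argument.
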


\begin{proof}
We prove the lemma by induction. For $n=0$, we have
\begin{eqnarray}
e_0(X,\lambda,k) & = &  a_0(X,\lambda,k)- a_0^+(X,\lambda,k) \\
                 & = &  e^{-iC(h(X),k)} \ e^{i\lambda h(X)}- e^{-i \beta(k)}\ e^{i\lambda^+(k) h_+(X)}.
\end{eqnarray}
Using (\ref{Cxk}),
\begin{eqnarray}
C(h(X),k) & = &  \beta(k)  + \Omega_+(k) h(X)- \int_X^A [c(h(T),k)-\Omega_+(k)] \ \frac{1}{a(h(T))} \ dT, \\
          & = &  \beta(k) + \Omega_+(k) h(X) +O ((A-X)^2).
          \end{eqnarray}
It follows from Lemma \ref{estdiffeo} that, for a suitable constant $C>0$,
$$
\mid e_0(X,\lambda,k) \mid \leq \frac{C}{2} \ (A-X)^2, \quad \forall X \in (X_0,A).
$$
Assuming that the property is true for $n-1$, we have by  (\ref{estimAn}) and (\ref{eqintegrale}),
$$
\mid e_n(X,\lambda,k)\mid \leq \int_X^A \int_Y^A \left( C\ n \ \frac{(A-T)^{2n}}{(2n)!} + C \ \frac{(A-T)^{2n}}{(2n)!} \right) \ dT\ dY,
\quad \forall X \in (X_0,A),
$$
where we have supposed that ${\displaystyle{C \ \geq \ \mid\mid q_+(.,k) \mid\mid_{L^\infty (X_0,A)}}}$. So,
$$
\mid e_n(X,\lambda,k) \mid \leq C \ (n+1) \frac {(A-X)^{2n+2}}{(2n+2)!}, \quad \forall X \in (X_0,A).
$$
We prove (\ref{estimend}) similarly.
\end{proof}

Now, we can establish the main result of this section.

\begin{prop}\label{asymptoticfun}\hfill
\begin{enumerate}
\item There exists $C>0$ such that for $j=0,1$, for all $X \in ]X_0,A[$ and all $z >0$,
\begin{equation}\label{differenceAsymp}
\mid f_1^{(j)}(X,\lambda,k, z) - f_1^{+ (j)} (X,\lambda,k, z) \mid \ \leq \ C \ (A-X) \ z^{j-1} \ e^{z(A-X)}.
\end{equation}
\item For fixed $X_0<X_1$ with $X_j \in ]0, A[$, $\forall j=0,1$, $\forall X\in]X_0,X_1[$, we have the following asymptotics,
when $z \rightarrow +\infty$,
\begin{equation} \label{asymflun}
 f_1^{(j)}(X,\lambda,k, z)= (-1)^j \frac{2^{-\nu_+(k)}}{\sqrt{2\pi}} \left( -\frac{\kappa_+}{a_+} \right)^{i\frac{\lambda^+(k)}{\kappa_+}}  \Gamma(1-\nu_+(k)) z^{j-i\frac{\lambda^+(k)}{\kappa_+}}  e^{z(A-X)}  \Big(1+O(\frac{1}{z})\Big).
\end{equation}
\end{enumerate}
\end{prop}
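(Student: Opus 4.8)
The plan is to prove the two assertions in sequence, deducing the second (the precise $z\to+\infty$ asymptotics of $f_1^{(j)}$) from the first (the error estimate comparing $f_1^{(j)}$ with $f_1^{+(j)}$) together with Corollary \ref{asjostp}(2). First I would establish \eqref{differenceAsymp}. Recall that by \eqref{difference} we have $f_1(X,\lambda,k,z)-f_1^+(X,\lambda,k,z)=\sum_{n\ge 0} e_n(X,\lambda,k)\,z^{2n}$, and that Lemma \ref{estimationerreur} gives the bound $|e_n(X,\lambda,k)|\le C(n+1)(A-X)^{2n+2}/(2n+2)!$ uniformly on $]X_0,A[$. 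Summing this series term by term, and using that $\sum_{n\ge 0}(n+1)\frac{t^{2n+2}}{(2n+2)!}\le C\,t\,\frac{e^t}{z}\cdot z(A-X)$ — more precisely comparing with $\frac{d}{dt}\cosh t$ and $\cosh t$ — yields $|f_1(X,\lambda,k,z)-f_1^+(X,\lambda,k,z)|\le C(A-X)\,z^{-1}e^{z(A-X)}$ for $z>0$, which is the case $j=0$. For the derivative ($j=1$) I would differentiate the series in $X$ and use \eqref{estimend} from Lemma \ref{estimationerreur}, namely $|e_n'(X,\lambda,k)|\le C(n+1)(A-X)^{2n+1}/(2n+1)!$; summing gives a bound of the form $C\,e^{z(A-X)}$, i.e. $C(A-X)z^{0}e^{z(A-X)}$, matching \eqref{differenceAsymp} with $j=1$. (Here I must be a little careful: the naive termwise estimate of $\sum (n+1)\frac{(A-X)^{2n+1}}{(2n+1)!}z^{2n}$ gives $z^{-1}e^{z(A-X)}$ times a factor growing like $z(A-X)$, so one keeps the extra factor $(A-X)$ and loses one power of $z$ relative to $f_1^{+(1)}$ itself, which is exactly what the statement claims.)

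Having \eqref{differenceAsymp}, the second assertion follows by adding and subtracting $f_1^{+(j)}$. By Corollary \ref{asjostp}(2), for $X$ in a fixed compact subinterval $]X_0,X_1[\subset]0,A[$ and $z\to+\infty$,
$$
 f_1^{+(j)}(X,\lambda,k,z)=(-1)^j\frac{2^{-\nu_+(k)}}{\sqrt{2\pi}}\Big(-\frac{\kappa_+}{a_+}\Big)^{i\frac{\lambda^+(k)}{\kappa_+}}\Gamma(1-\nu_+(k))\,z^{j-i\frac{\lambda^+(k)}{\kappa_+}}e^{z(A-X)}\Big(1+O(\tfrac1z)\Big).
$$
Since on $]X_0,X_1[$ the quantity $A-X$ is bounded below and above by positive constants, the leading term above has modulus comparable to $z^{j}e^{z(A-X)}$, whereas by \eqref{differenceAsymp} the error $f_1^{(j)}-f_1^{+(j)}$ is $O\big((A-X)z^{j-1}e^{z(A-X)}\big)=O\big(z^{-1}\cdot z^{j}e^{z(A-X)}\big)$. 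Hence the error is a relative $O(1/z)$ correction, and it can be absorbed into the $(1+O(1/z))$ factor, giving \eqref{asymflun}.

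The main obstacle I anticipate is bookkeeping in the summation of the error series: one must carry out the termwise summation of $\sum_n e_n z^{2n}$ (and of its $X$-derivative) with enough care to land on precisely $C(A-X)z^{j-1}e^{z(A-X)}$ rather than on a weaker bound, and one must ensure the constant $C$ is uniform in $X\in]X_0,A[$ — this is where Lemma \ref{estimationerreur} (itself proved by induction using Lemma \ref{estdiffeo} and $\|q_+(\cdot,k)\|_{L^\infty(X_0,A)}<\infty$) is used in full strength. A secondary point to verify is the legitimacy of differentiating the series \eqref{difference} term by term in $X$, which follows from the uniform convergence on compact subsets of $]0,A[$ guaranteed by \eqref{estimen}–\eqref{estimend}. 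Everything else — the passage from \eqref{differenceAsymp} to \eqref{asymflun} — is then immediate from the explicit asymptotics of the modified Bessel function already recorded in Corollary \ref{asjostp}. Finally I would remark that entirely analogous arguments, exchanging the roles of $+\infty$/$-\infty$ and of $h^+$/$h^-$, give the corresponding asymptotics for $f_j$ and $g_j$ for the remaining indices $j$, which are needed in the sequel; these I would state without repeating the proof.
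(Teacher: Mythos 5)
Your proposal is correct and follows essentially the same route as the paper: summing the error series $\sum_n e_n z^{2n}$ (and its $X$-derivative) using the bounds of Lemma \ref{estimationerreur}, comparing with $\sinh(z(A-X))$ resp. $\cosh(z(A-X))$ to obtain \eqref{differenceAsymp}, and then absorbing the resulting $O(z^{j-1}e^{z(A-X)})$ error into the $(1+O(1/z))$ factor of Corollary \ref{asjostp} on a compact subinterval where $A-X$ is bounded below. The paper writes out only the $j=0$ summation explicitly and declares $j=1$ similar, so your extra bookkeeping for the derivative case is a faithful filling-in of the same argument rather than a deviation.
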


\begin{proof}
The proof is identical to \cite{DN3}, Proposition 4.12, but for the convenience of the reader, we repeat the argument. We only prove (\ref{differenceAsymp}) in the case $j=0$ since the case $j=1$ is similar. By  Lemma \ref{estimationerreur} and for $z>0$, we have
\begin{eqnarray*}
\mid \fun -\funp \mid \ & \leq & \ \sum_{n=0}^{+\infty} \mid e_n (x,\lambda,k)\mid \ z^{2n}\\
                        & \leq & \ \sum_{n=0}^{+\infty} C\ (n+1) \ \frac {(A-X)^{2n+2}}{(2n+2)!}\ z^{2n}\\
                        & \leq & \ \frac{C}{z} \ \ (A-X) \ \sum_{n=0}^{+\infty}  \frac{(n+1)}{2n+2} \ \frac {(z(A-X))^{2n+1}}{(2n+1)!} \\
                        & \leq & \ \frac{C}{2z} \ (A-X) \ \sinh(z(A-X)) \\
                        & \leq & \ \frac{C}{z} \ (A-X) \  e^{z(A-X)}.
\end{eqnarray*}
Now, since $z$ is real, (\ref{asymflun}) follows from Corollary \ref{asjostp} and (\ref{differenceAsymp}).
\end{proof}

In order to calculate the asymptotics of $f_2(X,\lambda,k,z)$, we follow the same method as for $f_1(X,\lambda,k,z)$. We thus only give the final results omitting the details. First, we construct $f_2^+(X,\lambda,k,z)$ which approximates $f_2(X,\lambda,k, z)$ as in (\ref{differenceAsymp}). We obtain

\begin{equation}\label{jostplusdeux}
 f_2^+(X,\lambda,k,z) = -i\ (-\frac{\kappa_+}{a_+})^{-i\frac{\lambda^+(k)}{\kappa_+}} \ \Gamma(1-\mu_+(k)) \ \sqrt{A-X} \ (\frac{z}{2})^{\mu_+(k)} \
 I_{1-\mu_+(k)} (z(A-X)),
\end{equation}
where
\begin{equation}\label{muplus}
\mu_+(k) = \frac{1}{2} +i \ \frac{\lambda^+(k)}{\kappa_+}\ .
\end{equation}
Then, using the well-known asymptotics for the modified Bessel functions \cite{Le}, we deduce

\begin{prop}\label{asymptoticfdeux} For fixed $X_0<X_1$ with $X_j \in ]0, A[$, $\forall j=0,1$, $\forall X\in]X_0,X_1[$,
we have the following asymptotics,
when $z \rightarrow +\infty$,
\begin{equation} \label{asymfldeux}
 f_2^{(j)}(X,\lambda,k, z)= (-1)^{j+1}\ i\ \frac{2^{-\mu_+(k)}}{\sqrt{2\pi}} \left( -\frac{\kappa_+}{a_+} \right)^{-i\frac{\lambda^+(k)}{\kappa_+}} \ \Gamma(1-\mu_+(k))
\ z^{j+i\frac{\lambda^+(k)}{\kappa_+}} \ e^{z(A-X)} \ \Big(1+O(\frac{1}{z})\Big).
\end{equation}
\end{prop}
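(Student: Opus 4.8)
The plan is to mirror exactly the perturbative argument that was carried out for $f_1$ in Proposition \ref{asymptoticfun}, transplanting it to $f_2$. The statement to be proved concerns $f_2^{(j)}(X,\lambda,k,z)$ for $j=0,1$, and the proof will rest on two inputs that we must first establish: (a) the explicit closed form (\ref{jostplusdeux}) for the approximating function $f_2^+(X,\lambda,k,z)$ in terms of a modified Bessel function $I_{1-\mu_+(k)}$, and (b) an error estimate of the type $|f_2^{(j)}(X,\lambda,k,z) - f_2^{+(j)}(X,\lambda,k,z)| \leq C(A-X) z^{j-1} e^{z(A-X)}$ valid uniformly on $]X_0,A[$ for $z>0$. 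Granting (a) and (b), the asymptotic formula (\ref{asymfldeux}) follows immediately from the well-known large-argument asymptotics of modified Bessel functions (as recorded in \cite{Le}), exactly as Corollary \ref{asjostp} was deduced for $f_1^+$ and then combined with the error bound in Proposition \ref{asymptoticfun}; the factor $2^{-\mu_+(k)}/\sqrt{2\pi}$, the power $z^{j+i\lambda^+(k)/\kappa_+}$, and the sign $(-1)^{j+1}i$ all come out of matching $I_{1-\mu_+(k)}(z(A-X)) \sim \frac{1}{\sqrt{2\pi}} (z(A-X))^{-1/2} e^{z(A-X)}$ against the prefactors in (\ref{jostplusdeux}).

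To establish (a), I would start from the series expansion of $h_{L2}(x,\lambda,k,z)$ coming from Lemma \ref{ML-Analytic}(ii) and (\ref{fj-mj2}), apply the Liouville transformation $X=g(x)$ to write $f_2(X,\lambda,k,z)=\sum_{n\geq 0} b_n(X,\lambda,k) z^{2n+1}$ with explicit iterated-integral coefficients $b_n$, and then define $f_2^+$ by replacing $h(X)$ by $h_+(X)$ and $C(h(X),k)$ by $C^+(h_+(X),k) = \beta(k)+\Omega_+(k)h_+(X)$ — the same substitutions used for $f_1^+$. Because $h_+$ is explicit (a logarithm, by (\ref{newdiffeo})), the resulting coefficients $b_n^+$ can be computed by induction using the functional equation $\Gamma(z+1)=z\Gamma(z)$, yielding a Gamma-ratio $1/(2^{2n}n!\,\Gamma(n+2-\mu_+(k)))$ that is precisely the coefficient pattern of $I_{1-\mu_+(k)}$ as defined in (\ref{Besselmod}) with $\nu$ replaced by $\mu_+(k)-1$; summing the series gives (\ref{jostplusdeux}). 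One checks along the way that $f_2^+$ solves the singular Sturm-Liouville equation $y''+Q_+(X,k)y=z^2 y$, which is the Bessel equation after the change of variable, confirming the Bessel identification.

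For (b), the scheme is the analogue of Lemmata \ref{Estimations}--\ref{estimationerreur}: one proves by induction the bounds $|b_n(X,\lambda,k)| \leq (A-X)^{2n+1}/(2n+1)!$ and $|b_n'(X,\lambda,k)| = O((A-X)^{2n})$ on $]X_0,A[$, writes the error $\tilde e_n = b_n - b_n^+$ as the solution of the integral equation obtained from the difference of the two second-order recursions (using $Q = Q_+ + q_+$ with $q_+\in L^\infty(X_0,A)$ by (\ref{omega})), and deduces $|\tilde e_n(X,\lambda,k)| \leq C(n+1)(A-X)^{2n+3}/(2n+3)!$ together with the corresponding derivative bound. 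Summing against $z^{2n+1}$ then gives $|f_2 - f_2^+| \leq \frac{C}{z}(A-X) e^{z(A-X)}$ for $z>0$, and similarly for the first derivative. The only genuinely new bookkeeping relative to the $f_1$ case is tracking the half-integer shift $\mu_+(k) = 1 - \nu_+(k) = \tfrac12 + i\lambda^+(k)/\kappa_+$ versus $\nu_+(k)$ and the extra power of $z$ coming from the odd (rather than even) parity of $m_{L2}$ in $z$; the main obstacle, such as it is, is purely combinatorial — making sure the index shift in the Bessel-function order ($1-\mu_+(k)$ rather than $-\nu_+(k)$) is correctly produced by the $b_n^+$ recursion and correctly matched to the tabulated Bessel asymptotics. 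Since all of this is routine and strictly parallel to \cite{DN3}, I would, as the authors do, state the result and indicate that the proof follows the pattern of Proposition \ref{asymptoticfun} with the details omitted.
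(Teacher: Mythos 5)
Your proposal is correct and follows exactly the route the paper takes (and explicitly omits): construct the approximant $f_2^+$ from the odd-in-$z$ series for $m_{L2}$ via the Liouville transformation with $h\to h_+$ and $C\to C^+$, identify the summed series with the modified Bessel function $I_{1-\mu_+(k)}$ as in (\ref{jostplusdeux}), prove the error bound $|f_2^{(j)}-f_2^{+(j)}|\leq C(A-X)z^{j-1}e^{z(A-X)}$ by the same induction as Lemmata \ref{Estimations}--\ref{estimationerreur}, and conclude from the large-argument Bessel asymptotics. Your bookkeeping of the order shift $\mu_+(k)=1-\nu_+(k)$, the odd parity of the series, and the resulting bounds $|b_n|\leq (A-X)^{2n+1}/(2n+1)!$ and $|\tilde e_n|\leq C(n+1)(A-X)^{2n+3}/(2n+3)!$ is all consistent with the stated result.
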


In order to obtain the asymptotics of the scattering data, we need to calculate the asymptotics of the Jost functions $g_j(X,\lambda,k,z)$. Since the procedure is the same as the one for the $f_j(X,\lambda,k,z)$, we give without proof the main steps to obtain the asymptotics of $g_j(X,\lambda,k,z), \ j=1,2$, when $z \rightarrow +\infty$. Since $g_j (X,\lambda,k,z)$ satisfies (\ref{SL1}) with a boundary condition at $X=0$, we work with an other diffeomorphism, denoted by $h_-(X)$, in order to construct the functions $g_j^- (X,\lambda,k,z)$ that approximate $g_j (X,\lambda,k,z)$. This new diffeomorphism is defined as follows
\begin{equation}\label{diffeomoins}
h_-(X) = \frac{1}{\kappa_-} \ \log X + C_-,
\end{equation}
where
\begin{equation}\label{constantemoins}
C_- =\frac{1}{\kappa_-}\ \log \left(\frac{\kappa_-}{a_-}\right).
\end{equation}
Also, recalling that
\begin{equation}
C(x,k)= \int_{-\infty}^x [c(t,k)-\Omega_- (k)] \ dt + \Omega_- (k)x,
\end{equation}
we replace $C(h(X),k)$ appearing in the definition of the modified Jost functions $g_j (X,\lambda,k,z)$ by $C^- (h^- (X),k)= \Omega_- (k)h^- (X) $.

As previously, we can calculate $g_j^- (X,\lambda,k,z)$ explicitely and we obtain the following equalities
$$
g_1^- (X,\lambda,k,z) =  \left( \frac{\kappa_-}{a_-} \right)^{i\frac{\lambda^-(k)}{\kappa_-}} \ \sqrt{X} \ \Gamma(1-\nu_-(k)) \ \left( \frac{z}{2} \right)^{\nu_-(k)} \ I_{-\nu_-(k) } (zX),
$$
where
\begin{eqnarray}\label{numoins}
\lambda^-(k) & = & \lambda-\Omega_- (k), \\
\nu_-(k)     & = & \frac{1}{2} -i \frac{\lambda^-(k)}{\kappa_-}.
\end{eqnarray}
Similarly, we have
$$
g_2^- (X,\lambda,k,z) = i\ \left( \frac{\kappa_-}{a_-} \right)^{-i\frac{\lambda^-(k)}{\kappa_-}} \ \sqrt{X} \
\Gamma(1-\mu_-(k)) \ \left( \frac{z}{2} \right)^{\mu_-(k)} \ I_{1-\mu_-(k) } (zX),
$$
where
\begin{equation}\label{mumoins}
\mu_-(k) = \frac{1}{2} +i \frac{\lambda^-(k)}{\kappa_-}.
\end{equation}

The $g_j^- (X,\lambda,k,z)$ are perturbations of the $g_j(X,\lambda,k,z)$. Precisely, we have for fixed $k$,

\begin{lemma}\label{gdeuxmoins}
For $X_1 \in ]0,A[$ fixed, there exists $C>0$ such that for $p=0,1$, for all $X \in ]0, X_1[$ and for all $z >0$,
\begin{equation}\label{diff}
\mid g_j^{(p)}(X,\lambda,k, z) - g_j^{- (p)} (X,\lambda,k, z) \mid \ \leq \ C \ X \ z^{p-1} \ e^{zX}, \quad j =1,2.
\end{equation}
\end{lemma}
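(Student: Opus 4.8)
The statement is the exact analogue, at the event horizon $X = 0$, of the estimate \eqref{differenceAsymp} proved in Proposition \ref{asymptoticfun} for $f_1$ at the cosmological horizon $X = A$. The plan is therefore to mimic, step by step, the perturbative argument already carried out for $f_1$ and $f_2$, but working near $X=0$ with the diffeomorphism $h_-$ defined in \eqref{diffeomoins} in place of $h_+$. First I would record the power series expansion of $g_j(X,\lambda,k,z)$ in $z$: from Lemma \ref{ML-Analytic}(i)--(ii) applied to the Faddeev functions $m_{Rj}$ (rather than $m_{Lj}$) and the Liouville transformation, one writes $g_1(X,\lambda,k,z) = \sum_{n\ge 0} b_n(X,\lambda,k) z^{2n}$ and $g_2(X,\lambda,k,z) = \sum_{n\ge 0} b_n^{(2)}(X,\lambda,k) z^{2n+1}$, with integral recursion formulae obtained by the same change of variables that produced \eqref{funX}--\eqref{anX}, except that the integrals now run over $(0,X)$ because the Jost solution $g_j$ is normalized at $X=0$. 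The estimates $|b_n(X,\lambda,k)| \le X^{2n}/(2n)!$ and $b_n'(X,\lambda,k) = O(X^{2n-1})$ on $(0,X_1)$ follow by the same trivial inductions as in Lemma \ref{Estimations}.

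Next I would introduce the comparison functions $g_j^-(X,\lambda,k,z)$, defined by the same series but with $h$ replaced by $h_-$ and $C(h(X),k)$ replaced by $C^-(h_-(X),k) = \Omega_-(k)h_-(X)$; their coefficients $b_n^-$ can be summed explicitly in terms of $\Gamma$ and the modified Bessel functions $I_{-\nu_-(k)}$ and $I_{1-\mu_-(k)}$, exactly as in Lemma \ref{Jostplus} and Corollary \ref{asjostp}, giving the closed forms for $g_1^-$ and $g_2^-$ already quoted just above the statement. The key analytic input is that $Q(X,k) = Q_-(X,k) + q_-(X,k)$ with $q_-(\cdot,k) \in L^\infty(0,X_1)$: this is proved as in the $Q_+$ case, using the asymptotics \eqref{Asymp-a}--\eqref{Asymp-c} as $x \to -\infty$ (so $e^{\kappa_- x} = O(X)$ as $X \to 0$, since $\kappa_- > 0$ and $X = g(x) \sim \frac{a_-}{\kappa_-}e^{\kappa_- x}$) together with the event-horizon analogue of Lemma \ref{estdiffeo}, namely $|h^{(p)}(X) - h_-^{(p)}(X)| \le C\, X^{2-p}$ for $p=0,1,2$ on $(0,X_1)$. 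Setting $e_n(X,\lambda,k) = b_n(X,\lambda,k) - b_n^-(X,\lambda,k)$, one shows as in Lemmata \ref{eqdiffs}, \ref{equationerreur}, \ref{estimationerreur} that $e_n$ satisfies the integral equation
\begin{equation}
  e_n(X,\lambda,k) = e^{i\lambda^-(k) h_-(X)} \int_0^X \int_0^Y e^{-2i\lambda^-(k)h_-(Y)} e^{i\lambda^-(k)h_-(T)} \bigl[ e_{n-1}(T,\lambda,k) - q_-(T,k) b_n(T,\lambda,k) \bigr]\, dT\, dY,
\end{equation}
whence by induction $|e_n(X,\lambda,k)| \le C (n+1) X^{2n+2}/(2n+2)!$ and $|e_n'(X,\lambda,k)| \le C(n+1) X^{2n+1}/(2n+1)!$ on $(0,X_1)$.

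Summing over $n$ then gives, for $z > 0$,
\begin{equation}
  | g_j(X,\lambda,k,z) - g_j^-(X,\lambda,k,z) | \le \sum_{n\ge 0} C(n+1)\frac{X^{2n+2}}{(2n+2)!} z^{2n} \le \frac{C}{2z} X \sinh(zX) \le \frac{C}{z} X\, e^{zX},
\end{equation}
and the same bound with an extra factor $z$ on the right for the $X$-derivative, which is precisely \eqref{diff}. I expect no genuine obstacle here: the whole argument is a faithful transcription of Proposition \ref{asymptoticfun} with the roles of the two horizons interchanged, the only points requiring a little care being (i) getting the orientation of the iterated integrals right (they are over $(0,X)$, anchored at $X=0$, because $g_j$ is the Jost solution normalized at the event horizon), and (ii) checking that the $L^\infty$ control of $q_-(\cdot,k)$ really does hold up to $X=0$, which reduces to the behaviour of $a$ and $c$ as $x\to -\infty$ given in \eqref{Asymp-a}--\eqref{Asymp-c} and the diffeomorphism estimate $|h^{(p)}(X)-h_-^{(p)}(X)| \le C X^{2-p}$. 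Once \eqref{diff} is established, the asymptotics of $g_1^{(p)}$ and $g_2^{(p)}$ as $z\to+\infty$ follow from the known large-argument asymptotics of the modified Bessel functions $I_{-\nu_-(k)}$ and $I_{1-\mu_-(k)}$, exactly as in \eqref{asymflun} and \eqref{asymfldeux}.
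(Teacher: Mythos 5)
Your proposal is correct and follows exactly the route the paper intends: the paper states this lemma without proof, remarking only that "the procedure is the same as the one for the $f_j$", and your plan is a faithful transcription of Proposition \ref{asymptoticfun} with the horizons interchanged — series for $g_j$ via the Faddeev functions $m_{Rj}$ with iterated integrals anchored at $X=0$, comparison functions built from $h_-$ and $C^-(h_-(X),k)=\Omega_-(k)h_-(X)$, the decomposition $Q=Q_-+q_-$ with $q_-\in L^\infty(0,X_1)$, the integral equation for the errors $e_n$, and the summation yielding the factor $\frac{C}{z}X e^{zX}$. The two points you flag as needing care (orientation of the integrals over $(0,X)$, and the event-horizon analogue $|h^{(p)}(X)-h_-^{(p)}(X)|\le C X^{2-p}$ ensuring the boundedness of $q_-$ up to $X=0$) are indeed the only places where the argument differs from the $f_j$ case, and both are handled correctly.
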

\noindent Then, using the asymptotics of the modified Bessel functions \cite{Le}, we obtain

\begin{prop}\label{asymptoticgj}
For fixed $X_0<X_1$ with $X_j \in ]0, A[$, $\forall j=0,1$, $\forall X\in]X_0,X_1[$, we have the following asymptotics,
when $z \rightarrow +\infty$,
\begin{equation} \label{asymgun}
 g_1^{(j)}(X,\lambda,k, z)= \frac{2^{-\nu_-(k)}}{\sqrt{2\pi}}\ \left( \frac{\kappa_-}{a_-} \right)^{i\frac{\lambda^-(k)}{\kappa_-}} \ \Gamma(1-\nu_-(k))
\ z^{j-i\frac{\lambda^-(k)}{\kappa_-}} \ e^{zX} \ \Big(1+O(\frac{1}{z})\Big),
\end{equation}
\begin{equation} \label{asymgdeux}
 g_2^{(j)}(X,\lambda,k, z)= i \ \frac{2^{-\mu_-(k)}}{\sqrt{2\pi}}\ \left( \frac{\kappa_-}{a_-} \right)^{-i\frac{\lambda^-(k)}{\kappa_-}} \ \Gamma(1-\mu_-(k)) z^{j+i\frac{\lambda^-(k)}{\kappa_-}} \ e^{zX} \ \Big(1+O(\frac{1}{z})\Big).
\end{equation}
\end{prop}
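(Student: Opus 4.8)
The plan is to carry out, near the event horizon $X=0$, the exact analogue of the argument already used at the cosmological horizon in Propositions~\ref{asymptoticfun} and~\ref{asymptoticfdeux}. Recall that $g_1(X,\lambda,k,z)$ and $g_2(X,\lambda,k,z)$ satisfy the Sturm--Liouville equation~(\ref{SL1}) on $]0,A[$, but now with boundary behaviour dictated at $X=0$ by the normalisation~(\ref{FR}) of $F_R$ at $x=-\infty$. First I would use the explicit formulae for the comparison functions $g_1^-$ and $g_2^-$ displayed just above: built from the diffeomorphism $h_-$ of~(\ref{diffeomoins}) and the linearised phase $C^-(h^-(X),k)=\Omega_-(k)h^-(X)$, they are exact solutions of $y''+\frac{\omega_-(k)}{X^2}\,y=z^2 y$, with $\omega_-(k)=(\lambda^-(k)/\kappa_-)^2+i\,\lambda^-(k)/\kappa_-$, hence expressible through the modified Bessel functions $I_{-\nu_-(k)}(zX)$ and $I_{1-\mu_-(k)}(zX)$.

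Second, I would insert the classical large-argument asymptotics $I_\sigma(\xi)=\frac{e^{\xi}}{\sqrt{2\pi\xi}}\bigl(1+O(1/\xi)\bigr)$ as $\xi\to+\infty$, together with the matching expansion for $I_\sigma'$, both taken from \cite{Le}, evaluated at $\xi=zX$ with $X$ ranging over a fixed compact subinterval $]X_0,X_1[\,\subset\,]0,A[$. Tracking the prefactors $(\kappa_-/a_-)^{\pm i\lambda^-(k)/\kappa_-}$, $\Gamma(1-\nu_-(k))$, $\Gamma(1-\mu_-(k))$ and the powers $(z/2)^{\nu_-(k)}$, $(z/2)^{\mu_-(k)}$ coming from the explicit formulae, using the cancellations $\sqrt{X}/\sqrt{X}=1$ and $z^{\nu_-(k)-1/2}=z^{-i\lambda^-(k)/\kappa_-}$, one obtains exactly (\ref{asymgun})--(\ref{asymgdeux}) with $g_1$, $g_2$ replaced by $g_1^-$, $g_2^-$, for derivatives of order $j=0,1$. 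Note that the sign factor $(-1)^j$ present in the $f$-case is absent here, because $\partial_X e^{zX}=+z\,e^{zX}$ whereas $\partial_X e^{z(A-X)}=-z\,e^{z(A-X)}$.

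Third, I would transfer these asymptotics from $g_i^-$ to $g_i$ by invoking Lemma~\ref{gdeuxmoins}: for real $z>0$ and $X\in\,]X_0,X_1[$ one has $|g_i^{(p)}(X,\lambda,k,z)-g_i^{-(p)}(X,\lambda,k,z)|\le C\,X\,z^{p-1}\,e^{zX}$, which is smaller by a factor $O(1/z)$ than the leading term $\sim z^{p}e^{zX}$ of $g_i^{-(p)}$ (recall $|z^{-i\lambda^-(k)/\kappa_-}|=1$ for $z>0$ real); absorbing it into the error term then yields the $\bigl(1+O(1/z)\bigr)$ form of (\ref{asymgun})--(\ref{asymgdeux}), and the proposition follows.

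The only genuinely delicate ingredient is Lemma~\ref{gdeuxmoins} itself, which is the $X=0$ counterpart of the error analysis of Lemmata~\ref{Estimations} and~\ref{estimationerreur}. One writes $g_i-g_i^-=\sum_{n\ge0}e_n^-(X,\lambda,k)z^{2n}$, checks that $q_-(X,k):=Q(X,k)-Q_-(X,k)$ lies in $L^\infty(0,X_1)$ — the event-horizon analogue of Lemma~\ref{estdiffeo} and of the bound~(\ref{omega}) — derives for $e_n^-$ a Volterra integral equation based at $X=0$ mirroring~(\ref{eqintegrale}), and propagates by induction the estimates $|e_n^-(X,\lambda,k)|\le C(n+1)\frac{X^{2n+2}}{(2n+2)!}$ and $|{e_n^-}'(X,\lambda,k)|\le C(n+1)\frac{X^{2n+1}}{(2n+1)!}$; summing these against $z^{2n}$ gives $\frac{C}{z}X\sinh(zX)\le\frac{C}{z}X\,e^{zX}$, which is precisely~(\ref{diff}). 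With this granted, the details of the present proposition can be omitted, exactly as was done for the $f_j$'s.
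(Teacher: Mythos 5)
Your proposal is correct and follows exactly the route the paper intends: the paper states Proposition \ref{asymptoticgj} without proof, remarking only that "the procedure is the same as the one for the $f_j$", i.e.\ one builds the comparison solutions $g_j^-$ from the diffeomorphism $h_-$ of (\ref{diffeomoins}), identifies them with modified Bessel functions $I_{-\nu_-(k)}(zX)$ and $I_{1-\mu_-(k)}(zX)$, inserts the large-argument asymptotics from \cite{Le}, and transfers to $g_j$ via the error bound of Lemma \ref{gdeuxmoins}. Your bookkeeping of the prefactors, the cancellation $z^{\nu_-(k)}/\sqrt{z}=z^{-i\lambda^-(k)/\kappa_-}$, the absence of the $(-1)^j$ sign (since $\partial_X e^{zX}=+z\,e^{zX}$), and the sketch of the Volterra-type error analysis behind Lemma \ref{gdeuxmoins} all match the paper's argument.
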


\subsection{Asymptotics of the scattering data.}

In this section, we put together all the previous results and calculate the asymptotics of the scttering data $a_{Lj}(\lambda,k,z), \ j=1,...,4$ when $z \rightarrow +\infty$.

First, we recall that for $j = 1,2,3,4$
\begin{eqnarray}\label{relation}
f_j(X,\lambda,k,z ) & = &   e^{-iC(h(X),k)} \  f_{Lj}(X,\lambda,k,z ),  \label{relation1}\\
g_j(X,\lambda,k,z ) & = &   e^{-iC(h(X),k)} \  f_{Rj}(X,\lambda,k,z ).  \label{relation2}
\end{eqnarray}
Second, we recall that for all $x \in \R$,
\begin{equation}\label{rappel}
F_L (x,\lambda, z) = F_R (x,\lambda, z)\ A_L(\lambda, z).
\end{equation}
Calculating (\ref{rappel}) components by components, it follows from (\ref{relation1}) and (\ref{relation2}) that (in the variable $X$)
$$
f_1(X,\lambda,k,z) = \alun \ g_1(X,\lambda,k,z) + \alt \ g_2(X,\lambda,k,z)
$$
$$
f_2(X,\lambda,k,z) = \ald \ g_1(X,\lambda,k,z) + \alq \ g_2(X,\lambda,k,z)
$$
So, by Lemma \ref{wronskien}, we obtain for $z\not=0$ :
\begin{eqnarray*}
\alun &=& \frac{1}{iz} \ W(f_1,g_2)\ ,\quad \ald = \frac{1}{iz} \ W(f_2,g_2),\\
\alt &=& - \frac{1}{iz} \ W(f_1,g_1)\ ,\quad \alq = - \frac{1}{iz} \ W(f_2,g_1).
\end{eqnarray*}
The following theorem is then an easy consequence of Propositions \ref{asymptoticfun}, \ref{asymptoticfdeux}, \ref{asymptoticgj} and of the preceding formulae

\begin{theorem}\label{asymptoticsal}
When $z \rightarrow + \infty$, we have
\begin{eqnarray*}
\alun & \sim & \ \frac{e^{-i\beta(k)}}{2\pi}\ \left(-\frac{\kappa_+}{a_+}\right)^{i\frac{\lambda^+(k)}{\kappa_+}}
\left(\frac{\kappa_-}{a_-}\right)^{-i\frac{\lambda^-(k)}{\kappa_-}} \nonumber \\
      &     & \Gamma\left(\frac{1}{2}-i\frac{\lambda^-(k)}{\kappa_-}\right) \Gamma\left(\frac{1}{2}+i\frac{\lambda^+(k)}{\kappa_+}\right)
      \left(\frac{z}{2}\right)^{i\left(\frac{\lambda^-(k)}{\kappa_-} - \frac{\lambda^+(k)}{\kappa_+}\right)}  e^{zA}, \\
\ald & \sim & \ -i\ \frac{e^{i\beta(k)}}{2\pi}\ \left(-\frac{\kappa_+}{a_+}\right)^{-i\frac{\lambda^+(k)}{\kappa_+}}
\left(\frac{\kappa_-}{a_-}\right)^{-i\frac{\lambda^-(k)}{\kappa_-}} \nonumber \\
      &     & \Gamma\left(\frac{1}{2}-i\frac{\lambda^-(k)}{\kappa_-}\right) \Gamma\left(\frac{1}{2}-i\frac{\lambda^+(k)}{\kappa_+}\right)
      \left(\frac{z}{2}\right)^{i\left(\frac{\lambda^-(k)}{\kappa_-} + \frac{\lambda^+(k)}{\kappa_+}\right)}  e^{zA} \\
\alt & \sim & \ i\ \frac{e^{-i\beta(k)}}{2\pi}\ \left(-\frac{\kappa_+}{a_+}\right)^{i\frac{\lambda^+(k)}{\kappa_+}}
\left(\frac{\kappa_-}{a_-}\right)^{i\frac{\lambda^-(k)}{\kappa_-}} \nonumber \\
      &     & \Gamma\left(\frac{1}{2}+i\frac{\lambda^-(k)}{\kappa_-}\right) \Gamma\left(\frac{1}{2}+i\frac{\lambda^+(k)}{\kappa_+}\right)
      \left(\frac{z}{2}\right)^{-i\left(\frac{\lambda^-(k)}{\kappa_-} + \frac{\lambda^+(k)}{\kappa_+}\right)}  e^{zA} \\
\alq & \sim & \ \frac{e^{i\beta(k)}}{2\pi}\ \left(-\frac{\kappa_+}{a_+}\right)^{-i\frac{\lambda^+(k)}{\kappa_+}}
\left(\frac{\kappa_-}{a_-}\right)^{i\frac{\lambda^-(k)}{\kappa_-}} \nonumber \\
      &     & \Gamma\left(\frac{1}{2}+i\frac{\lambda^-(k)}{\kappa_-}\right) \Gamma\left(\frac{1}{2}-i\frac{\lambda^+(k)}{\kappa_+}\right)
      \left(\frac{z}{2}\right)^{-i\left(\frac{\lambda^-(k)}{\kappa_-} - \frac{\lambda^+(k)}{\kappa_+}\right)}  e^{zA}
\end{eqnarray*}
\end{theorem}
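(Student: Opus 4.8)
The plan is to compute the large-$z$ asymptotics of each $a_{Lj}(\lambda,k,z)$ from the Wronskian representation
$$
  a_{L1} = \frac{1}{iz} W(f_1,g_2), \quad a_{L2} = \frac{1}{iz} W(f_2,g_2), \quad a_{L3} = -\frac{1}{iz} W(f_1,g_1), \quad a_{L4} = -\frac{1}{iz} W(f_2,g_1),
$$
which follows from Lemma \ref{wronskien} together with the change-of-basis identity $F_L = F_R A_L$ expressed in the Liouville variable $X$ via \eqref{relation1}--\eqref{relation2} (the exponential prefactors $e^{\pm iC(h(X),k)}$ cancel when one forms the Wronskians, since $f_j$ and $g_j$ carry the same prefactor for a given parity class of the index). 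Since the Wronskian is independent of $X$, I would evaluate it at any convenient interior point $X \in \,]X_0,X_1[\,$, substitute the asymptotic expansions of $f_1, f_2$ from Propositions \ref{asymptoticfun} and \ref{asymptoticfdeux}, and of $g_1, g_2$ from Proposition \ref{asymptoticgj}, and read off the leading term.

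The key computational step is the following. Each $f_i^{(j)}(X,\lambda,k,z)$ and $g_i^{(j)}(X,\lambda,k,z)$ is, to leading order, a constant (depending on $\lambda,k$ through Gamma functions and the surface-gravity/coefficient ratios) times $z^{j + \epsilon}$ times $e^{z(A-X)}$ (for the $f$'s) or $e^{zX}$ (for the $g$'s), where $\epsilon \in \{\pm i\lambda^{\pm}(k)/\kappa_{\pm}\}$ is the relevant exponent. Forming, say, $W(f_1,g_2) = f_1 g_2' - f_1' g_2$, the two exponential factors multiply to $e^{zA}$ (the $X$-dependence cancels, as it must by $X$-independence of the Wronskian), and the powers of $z$ combine: the leading terms $f_1 g_2' $ and $f_1' g_2$ each carry one factor of $z$ from a derivative, but the $(-1)^j$ signs in the derivative asymptotics make the two contributions \emph{add} rather than cancel, producing a single factor $2z \cdot(\ldots)$; after dividing by $iz$ one net power $z^0$ survives, multiplied by the residual exponents $z^{i(\lambda^-(k)/\kappa_- - \lambda^+(k)/\kappa_+)}$, $e^{zA}$, and the product of the two relevant Gamma-function/prefactor constants. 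I would then carefully bookkeep the constants: $f_1$ contributes $\frac{2^{-\nu_+(k)}}{\sqrt{2\pi}}(-\kappa_+/a_+)^{i\lambda^+(k)/\kappa_+}\Gamma(1-\nu_+(k))$ with $1-\nu_+(k) = \frac12 + i\lambda^+(k)/\kappa_+$, $g_2$ contributes $i\frac{2^{-\mu_-(k)}}{\sqrt{2\pi}}(\kappa_-/a_-)^{-i\lambda^-(k)/\kappa_-}\Gamma(1-\mu_-(k))$ with $1-\mu_-(k) = \frac12 - i\lambda^-(k)/\kappa_-$, the $\beta(k)$ phase coming from the $e^{-i\beta(k)}$ already present in $f_1^+$; a similar count (swapping $f_1 \leftrightarrow f_2$, $g_1 \leftrightarrow g_2$ and the corresponding $\nu \leftrightarrow \mu$, $\kappa_+ \leftrightarrow \kappa_-$ roles) gives the other three coefficients, and the $i$-powers and $2^{-\nu}2^{-\mu} = 2^{-1}$ simplifications combine with the two $1/\sqrt{2\pi}$ factors to give the overall $\frac{1}{2\pi}$ and the $(z/2)^{\pm i(\ldots)}$ powers displayed in the statement.

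The main obstacle is not conceptual but bookkeeping: one must track with total precision the half-dozen multiplicative constants and complex exponents attached to $f_1,f_2,g_1,g_2$, ensure the cancellation of the $e^{\pm iC(h(X),k)}$ prefactors and of the $X$-dependence is handled correctly (it is automatic from $X$-independence of the Wronskian, but one must see that the stated answer is prefactor-free), and verify the sign combinatorics in $W(f,g) = fg' - f'g$ so that the leading $z^1$ terms reinforce. The error estimates $1+O(1/z)$ in Propositions \ref{asymptoticfun}, \ref{asymptoticfdeux}, \ref{asymptoticgj} propagate through the Wronskian to give $1+O(1/z)$ in each $a_{Lj}$; I would note this but not belabor it, since the statement only asserts the leading asymptotic equivalence $\sim$. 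The proof is thus a direct consequence of the Jost-function asymptotics already established, exactly as in \cite{DN3}, and I would write it as such.
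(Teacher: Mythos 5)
Your proposal is correct and follows exactly the paper's route: the paper derives the Wronskian representations $a_{L1}=\frac{1}{iz}W(f_1,g_2)$, $a_{L2}=\frac{1}{iz}W(f_2,g_2)$, $a_{L3}=-\frac{1}{iz}W(f_1,g_1)$, $a_{L4}=-\frac{1}{iz}W(f_2,g_1)$ from $F_L=F_R\,A_L$ together with Lemma \ref{wronskien}, and then states the theorem as an immediate consequence of Propositions \ref{asymptoticfun}, \ref{asymptoticfdeux} and \ref{asymptoticgj}. Your bookkeeping of the reinforcing sign in $fg'-f'g$, the cancellation of the $X$-dependence into $e^{zA}$, and the recombination of the $2^{-\nu_\pm},2^{-\mu_\pm}$ factors with the residual $z$-powers into $(z/2)^{\pm i(\cdots)}$ reproduces the stated constants correctly.
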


From the asymptotics of the $a_{Lj}$s' when $z \to +\infty$, we can also obtain the asymptotics of the simplified scattering coefficients $\hat{T}(\lambda,k,z), \ \hat{R}(\lambda,k,z), \hat{L}(\lambda,k,z)$ using (\ref{SR-SM2}). Further, we finally obtain the asymptotics of the true scattering coefficients using (\ref{Link-SM}) and given by

\begin{theorem} \label{asymptoticssc}
When $z \rightarrow + \infty$, we have
\begin{eqnarray*}
T(\lambda,k,z) & \sim & \ \frac{2 \pi  \left(-\frac{\kappa_+}{a_+}\right)^{-i\frac{\lambda^+(k)}{\kappa_+}}
\left(\frac{\kappa_-}{a_-}\right)^{i\frac{\lambda^-(k)}{\kappa_-}}}{ \Gamma\left(\frac{1}{2}-i\frac{\lambda^-(k)}{\kappa_-}\right) \Gamma\left(\frac{1}{2}+i\frac{\lambda^+(k)}{\kappa_+}\right)}
      \left(\frac{z}{2}\right)^{-i\left(\frac{\lambda^-(k)}{\kappa_-} - \frac{\lambda^+(k)}{\kappa_+}\right)}  e^{-zA}, \\
R(\lambda,k,z) & \sim & \ i \  \left(-\frac{\kappa_+}{a_+}\right)^{-2i\frac{\lambda^+(k)}{\kappa_+}} \frac{ \Gamma\left(\frac{1}{2}-i\frac{\lambda^+(k)}{\kappa_+}\right)}{\Gamma\left(\frac{1}{2}+i\frac{\lambda^+(k)}{\kappa_+}\right)}
      \left(\frac{z}{2}\right)^{2i \frac{\lambda^+(k)}{\kappa_+}}, \\
L(\lambda,k,z) & \sim & \ i \  \left(-\frac{\kappa_-}{a_-}\right)^{2i\frac{\lambda^-(k)}{\kappa_-}} \frac{ \Gamma\left(\frac{1}{2}+i\frac{\lambda^-(k)}{\kappa_-}\right)}{\Gamma\left(\frac{1}{2}-i\frac{\lambda^-(k)}{\kappa_-}\right)}
      \left(\frac{z}{2}\right)^{-2i \frac{\lambda^-(k)}{\kappa_-}}.
\end{eqnarray*}
\end{theorem}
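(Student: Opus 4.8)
The plan is to derive the large-$z$ asymptotics of the \emph{true} scattering coefficients $T(\lambda,k,z)$, $R(\lambda,k,z)$, $L(\lambda,k,z)$ directly from the asymptotics of the matrix entries $a_{Lj}(\lambda,k,z)$ obtained in Theorem \ref{asymptoticsal}. First I would recall the defining formulas: from (\ref{SR-SM2}) the simplified coefficients are $\hat{T}(\lambda,k,z) = 1/a_{L1}(\lambda,k,z)$, $\hat{R}(\lambda,k,z) = -a_{L2}(\lambda,k,z)/a_{L1}(\lambda,k,z)$, $\hat{L}(\lambda,k,z) = a_{L3}(\lambda,k,z)/a_{L1}(\lambda,k,z)$; and from (\ref{Link-SM}) the physical coefficients differ only by the fixed phase factors $e^{-i\beta(k)}$, $e^{-2i(\beta(k)+K)}$, $e^{2iK}$. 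Hence all three asymptotics reduce to forming quotients of the four asymptotic expressions in Theorem \ref{asymptoticsal} and simplifying.

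The key computational steps are as follows. For $T$: using $T(\lambda,k,z) = e^{-i\beta(k)}/a_{L1}(\lambda,k,z)$ and the asymptotics of $a_{L1}$, the prefactor $e^{-i\beta(k)}$ cancels the $e^{-i\beta(k)}$ inside $a_{L1}$, the two Gamma factors and the $(-\kappa_+/a_+)^{i\lambda^+(k)/\kappa_+}(\kappa_-/a_-)^{-i\lambda^-(k)/\kappa_-}$ factors invert, the power $(z/2)^{i(\lambda^-(k)/\kappa_- - \lambda^+(k)/\kappa_+)}$ inverts to $(z/2)^{-i(\lambda^-(k)/\kappa_- - \lambda^+(k)/\kappa_+)}$, and $e^{zA}$ becomes $e^{-zA}$, yielding exactly the claimed formula with the overall factor $2\pi$. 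For $R$: using $R(\lambda,k,z) = e^{-2i(\beta(k)+K)}\hat{R}(\lambda,k,z) = -e^{-2i(\beta(k)+K)}a_{L2}/a_{L1}$, I would form the ratio $a_{L2}/a_{L1}$; the exponential growth $e^{zA}$ cancels between numerator and denominator, the $e^{i\beta(k)}$ in $a_{L2}$ combined with $e^{-i\beta(k)}$ in $a_{L1}$ gives $e^{2i\beta(k)}$ which is then killed by the $e^{-2i\beta(k)}$ prefactor (the $e^{-2iK}$ must also be absorbed, which I would check carefully against the precise conventions—see the obstacle paragraph), the factor $(\kappa_-/a_-)^{-i\lambda^-(k)/\kappa_-}$ cancels, the two $\kappa_+$-factors combine to $(-\kappa_+/a_+)^{-2i\lambda^+(k)/\kappa_+}$, the $\kappa_-$-Gamma factors cancel leaving $\Gamma(\tfrac12 - i\lambda^+(k)/\kappa_+)/\Gamma(\tfrac12 + i\lambda^+(k)/\kappa_+)$, the powers of $z/2$ combine to $(z/2)^{2i\lambda^+(k)/\kappa_+}$, and the two factors $-i$ and $-1$ produce the final $i$. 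The computation for $L(\lambda,k,z) = e^{2iK}a_{L3}/a_{L1}$ is entirely symmetric under the interchange $+\leftrightarrow -$, $\lambda^+\leftrightarrow\lambda^-$, and I would simply state it mirrors the $R$ case.

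The main obstacle I anticipate is bookkeeping the phase factors, \emph{not} the growth rates: one must verify that the fixed constants $e^{\pm i\beta(k)}$, $e^{\pm iK}$ from (\ref{Link-SM}) cancel precisely against the $e^{\pm i\beta(k)}$ and any $K$-dependent contributions hidden in the $a_{Lj}$ asymptotics (recall $C(x,k)$ in (\ref{Pot-C}) contains the constant $K$, which propagates into the Jost functions and hence into $a_{L2}, a_{L3}$). In fact the cleanest route is to observe that $T(\lambda,k,z)$, and the combinations $R(\lambda,k,z)e^{2i\lambda^+(k)c}$-type quantities, are \emph{independent} of both $K$ and the Regge--Wheeler constant $c$ up to the stated phases—so one may as well set $K=0$ in (\ref{Pot-C}), in which case $\beta(k)$ in (\ref{Betak}) is exactly the constant appearing in Corollary \ref{asjostp} via (\ref{Cplus}), and the cancellations are transparent. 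A secondary point to be careful about is that Theorem \ref{asymptoticsal} gives leading-order equivalents ($\sim$), and when dividing $a_{L2}/a_{L1}$ one gets $(\text{leading})/(\text{leading})\cdot(1+O(1/z))$, so the conclusion is again an equivalence $\sim$; I would note that the error terms $O(1/z)$ in Propositions \ref{asymptoticfun}, \ref{asymptoticfdeux}, \ref{asymptoticgj} are uniform enough on compact $X$-intervals to justify passing to the Wronskian ratios. Once these phase identifications are pinned down, the proof is a direct substitution and the three stated asymptotics follow.
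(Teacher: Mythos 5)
Your proposal is correct and follows exactly the route the paper takes: Theorem \ref{asymptoticssc} is obtained there precisely by substituting the asymptotics of the $a_{Lj}(\lambda,k,z)$ from Theorem \ref{asymptoticsal} into the quotients (\ref{SR-SM2}) and then applying the phase relation (\ref{Link-SM}). Your remark about pinning down the constant $K$ (effectively taking $K=0$ so that $\beta(k)$ in (\ref{Betak}) matches the constant entering Corollary \ref{asjostp} via (\ref{Cplus})) is a point the paper passes over silently, and handling it explicitly only strengthens the argument.
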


\begin{remark}
  Thanks to Theorems \ref{Main-Frobenius} and \ref{asymptoticssc}, the following parameters
  $$
    A = \int_\R a(x)  dx, \quad \frac{\lambda^-(k)}{\kappa_-} = \frac{\lambda - \Omega_-(k)}{\kappa_-} = \frac{\lambda}{\kappa_-} - \frac{aEk}{(r_-^2 + a^2)\kappa_-} - \frac{qQr_-}{(r_-^2 + a^2) \kappa_-},
  $$
  $$
     \frac{\lambda^+(k)}{\kappa_+} = \frac{\lambda - \Omega_+(k)}{\kappa_+} = \frac{\lambda}{\kappa_+} - \frac{aEk}{(r_+^2 + a^2)\kappa_+} - \frac{qQr_+}{(r_+^2 + a^2) \kappa_+},
  $$
  are uniquely determined from the knowledge of the scattering operators $T_k(\lambda), \ R_k(\lambda), \ L_k(\lambda)$. If we suppose that these scattering operators are known for two different $k$ and $\lambda$, we can thus determine uniquely the following parameters
  $$
    A, \quad \kappa_\pm, \quad r_\pm,
  $$
  which clearly have an important physical meaning.
\end{remark}

\subsection{Derivatives of the scattering data with respect to $z$.} \label{AL-Strictly-Increasing}

The previous theorem shows that for all $j=1,...,4$, $\exists C_j >0$ such that $|a_{Lj}(\lambda, k, z)|^2 \sim C_j \ e^{2zA}$ when $z \rightarrow +\infty$. It is reasonable to think that $\frac{d}{dz} (|a_{Lj}(\lambda, k, z)|^2) \sim 2A C_j \ e^{2zA}$. This would be true if we could take the derivatives with respect to z of the asymptotics of $a_{Lj}(\lambda,k,z)$ in Theorem \ref{asymptoticsal}. In this Section, we prove that this is indeed the case. In consequence, it follows that $z \rightarrow |a_{Lj}(\lambda, k, z)|$ is a strictly increasing function for large real $z$. We emphasize that this result is one of the crucial ingredient in the proof of Proposition \ref{Consequence1} and more generally in the proof of our inverse problem.

In what follows, we shall use the notation $\dot{f} = \frac{d}{dz} f$ and prove the Lemma

\begin{lemma} \label{AL1-Increasing}
There exists a real $z_0$ large enough such that, for $z \geq z_0, \  z \rightarrow \mid a_{Lj}(\lambda, k, z)\mid$ is a strictly increasing function.
\end{lemma}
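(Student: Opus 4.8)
The plan is to reduce the statement about $|a_{Lj}(\lambda,k,z)|$ to a statement about the derivative $\dot a_{Lj}(\lambda,k,z)$, and then to obtain precise asymptotics of $\dot a_{Lj}$ exactly mirroring those of $a_{Lj}$ in Theorem \ref{asymptoticsal}. Concretely, since
$$
  \frac{d}{dz}|a_{Lj}(\lambda,k,z)|^2 = 2\,\mathrm{Re}\left( \overline{a_{Lj}(\lambda,k,z)}\, \dot a_{Lj}(\lambda,k,z) \right),
$$
it suffices to show that for large real $z$ this quantity is strictly positive; and this will follow if $\dot a_{Lj}(\lambda,k,z) \sim A\, a_{Lj}(\lambda,k,z)$ as $z \to +\infty$ (up to the oscillating prefactors), because then $\overline{a_{Lj}} \dot a_{Lj} \sim A |a_{Lj}|^2 \sim A C_j e^{2zA} > 0$. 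So the whole problem is to legitimate differentiating the asymptotic expansions of Theorem \ref{asymptoticsal} term by term in $z$.

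First I would recall, from the Wronskian formulae at the end of Section \ref{AsymptoticsSD}, that each $a_{Lj}(\lambda,k,z)$ is expressed through Wronskians $W(f_i,g_m)(X,\lambda,k,z)$ evaluated at a fixed point $X \in (0,A)$ (or equivalently a fixed $x \in \R$), divided by $iz$. Hence $\dot a_{Lj}$ is a combination of $\dot f_i$, $\dot f_i'$, $\dot g_m$, $\dot g_m'$ at that fixed point, together with a harmless $-a_{Lj}/z$ term. Thus the task becomes: obtain the large-$z$ asymptotics of $\dot f_1, \dot f_2, \dot g_1, \dot g_2$ and their $X$-derivatives. For this I would differentiate the series representations (Lemma \ref{JostX}, $f_1 = \sum_n a_n(X,\lambda,k) z^{2n}$, and the analogous series for $f_2$, $g_1$, $g_2$) with respect to $z$, getting $\dot f_1 = \sum_n 2n\, a_n(X,\lambda,k) z^{2n-1}$, and similarly for the comparison functions $f_1^+ = \sum_n a_n^+ z^{2n}$. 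The estimates of Lemma \ref{Estimations} give $|a_n(X,\lambda,k)| \le (A-X)^{2n}/(2n)!$, so the differentiated series converges and satisfies
$$
  |\dot f_1(X,\lambda,k,z) - \dot f_1^+(X,\lambda,k,z)| \le \sum_{n\ge 1} 2n\,|e_n(X,\lambda,k)|\,z^{2n-1},
$$
and by the error bound $|e_n| \le C(n+1)(A-X)^{2n+2}/(2n+2)!$ of Lemma \ref{estimationerreur}, this is $O\big((A-X) e^{z(A-X)}\big)$ — one power of $z$ better than $\dot f_1^+$ itself, exactly as in Proposition \ref{asymptoticfun}. Meanwhile $\dot f_1^+$ can be computed exactly because $f_1^+$ is given explicitly in terms of a modified Bessel function $I_{-\nu_+(k)}(z(A-X))$ (Corollary \ref{asjostp}); using the classical asymptotics of $I_\nu$ and $I_\nu'$ (e.g. \cite{Le}) one gets $\dot f_1^+(X,\lambda,k,z) \sim (A-X) f_1^+(X,\lambda,k,z) \cdot (1 + O(1/z))$ at leading order, with the same oscillating prefactor. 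The same argument applies verbatim to $f_2, g_1, g_2$ (using $h_-$ and the corresponding Bessel representations).

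Assembling these into the Wronskian formulae, and using that the leading exponential factor $e^{zA}$ of each $a_{Lj}$ comes with coefficient $A$ after differentiation (the $(A-X)$ from $\dot f_i$ plus the $X$-part from $\dot g_m$, or vice versa, adding up to $A$, since the Wronskian is evaluated at a fixed interior point and the $X$-dependence of the exponentials is $e^{z(A-X)}e^{zX}=e^{zA}$), I conclude $\dot a_{Lj}(\lambda,k,z) \sim A\, a_{Lj}(\lambda,k,z)$ up to the common oscillating prefactor, whence $\mathrm{Re}(\overline{a_{Lj}}\dot a_{Lj}) \sim A|a_{Lj}|^2 \sim A C_j e^{2zA} > 0$ for $z \ge z_0$ large enough. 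This gives $\frac{d}{dz}|a_{Lj}(\lambda,k,z)|^2 > 0$, hence the strict monotonicity. The main obstacle is the bookkeeping in the Wronskian step: one must check carefully that the oscillating prefactors (the powers $(z/2)^{\pm i \lambda^\pm(k)/\kappa_\pm}$ and the Gamma-function constants) cancel in the ratio $\dot a_{Lj}/a_{Lj}$ and do not spoil the positivity of the real part — i.e. that no $O(1)$ phase survives at leading order — and that the subleading $O(1/z)$ corrections from each factor are genuinely uniform on the relevant range of $X$, which is what Propositions \ref{asymptoticfun}, \ref{asymptoticfdeux}, \ref{asymptoticgj} and their $j=1$ (derivative) cases provide. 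Once that is in place, the conclusion of Lemma \ref{AL1-Increasing} is immediate.
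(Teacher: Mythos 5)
Your proposal is correct and follows essentially the same route as the paper: reduce to the sign of $2\,\mathrm{Re}(\dot a_{Lj}\overline{a_{Lj}})$, differentiate the series $\sum_n a_n z^{2n}$ and the comparison series term by term, control the difference via the bounds on $e_n$ from Lemma \ref{estimationerreur}, and read off $\dot f_j^{\pm}, \dot g_j^{-}$ from the explicit modified Bessel representations to conclude that the derivatives of the Jost functions have exactly the differentiated asymptotics, whence $\dot a_{Lj}\sim A\,a_{Lj}$. The only cosmetic difference is that you route the bookkeeping through the Wronskian formulae $a_{Lj}=\frac{1}{iz}W(\cdot,\cdot)$ (hence needing the $X$-derivative asymptotics, which Propositions \ref{asymptoticfun}--\ref{asymptoticgj} supply), while the paper expands $a_{L1}=f_1g_4-f_3g_2$ directly; both are valid.
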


\begin{proof}
For instance, let us prove the lemma in the case $j=1$. Clearly, it suffices to show  that
\begin{equation}\label{derivpositive}
\frac{d}{dz} \left( |a_{L1}(\lambda, k, z)|^2 \right)= 2 Re \  (\dot{a}_{L1}(\lambda,k,z) \overline{a_{L1}(\lambda,k,z)})>0 \ \ ,\ \ z \gg 1.
\end{equation}
By (\ref{rappel}), one has :
\begin{equation}
a_{L1}(\lambda, k, z) = f_1(X,\lambda,k,z)g_4(X,\lambda,k,z)-f_3(X,\lambda,k,z)g_2(X,\lambda,k,z).
\end{equation}
So,
\begin{eqnarray}
\dot{a}_{L1}(\lambda, k, z) &=& \dot{f}_1(X,\lambda,k,z)g_4(X,\lambda,k,z)+ f_1(X,\lambda,k,z)\dot{g} _4(X,\lambda,k,z) \nonumber \\
&- & \dot{f}_3(X,\lambda,k,z)g_2(X,\lambda,k,z)- f_3(X,\lambda,k,z)\dot{g} _2(X,\lambda,k,z).
\end{eqnarray}
Recalling that
\begin{eqnarray}
g_4(X,\lambda,k,z) &=& \overline{g_1(X,\lambda,k,z)}, \\
f_3(X,\lambda,k,z) &=& \overline{f_2(X,\lambda,k,z)},
\end{eqnarray}
we only have to study the asymptotics of $\dot{f}_j(X,\lambda,k,z)$ and $\dot{g}_j(X,\lambda,k,z)$ for $j=1,2$ when  $z \to +\infty$. For instance, let us study $\dot{f}_1(X,\lambda,k,z)$ since the other cases are similar. First, (\ref{difference}) implies
\begin{equation}\label{derivdifference}
\dot{f}_1(X,\lambda,k,z) - \dot{f}_1^+(X,\lambda,k,z) = \sum_{n=1}^{+\infty} \ 2n \ e_n(X,\lambda,k) \  z^{2n-1}.
\end{equation}
Using (\ref{estimen}), we easily obtain in the same way  as in Proposition \ref{asymptoticfun},  the following estimate for $X \in ]X_0,A[$,
\begin{equation}\label{estimederiv}
\mid  \dot{f}_1(X,\lambda,k,z) - \dot{f}_1^+(X,\lambda,k,z) \mid \leq  \frac {C (A-X)^2}{z} \ e^{z(A-X)}.
\end{equation}
Now, let us  recall that
\begin{equation}
\funp = e^{-i\beta(k)} \ (-\frac{\kappa_+}{a_+})^{i\frac{\lambda^+(k)}{\kappa_+}} \ \Gamma(1-\nu_+ (k)) \ \sqrt{A-X} \ (\frac{z}{2})^{\nu_+ (k)}
\ I_{-\nu_+ (k)} (z(A-X)).
\end{equation}
So, using (see \cite{Le}, Eq. (5.7.9)), the well-known asymptotics for the modified Bessel functions as well as the relation
\begin{equation}
\frac{d}{dw} \left( w^{\nu} I_{-\nu}(w) \right) = w^{\nu} I_{-\nu+1}(w),
\end{equation}
and finally (\ref{estimederiv}), we obtain in a similar way as in (\ref{asymflun}),
\begin{equation} \label{derivasymflun}
 \dot{f}_1(X,\lambda,k, z)= \frac{2^{-\nu_+(k)}}{\sqrt{2\pi}}\ (-\frac{\kappa_+}{a_+})^{i\frac{\lambda^+(k)}{\kappa_+}} \ \Gamma(1-\nu_+(k))
\ z^{-i\frac{\lambda^+(k)}{\kappa_+}} \ (A-X) e^{z(A-X)} \ \Big(1+O(\frac{1}{z})\Big).
\end{equation}
This means that $\dot{f}_1(X,\lambda,k, z)$ has precisely the expected asymptotics, that is the asymptotics we would obtain taking the derivative (with respect to $z$) of the asymptotics (\ref{asymflun}). Similarly, the asymptotics of $\dot{f}_2(X,\lambda,k, z)$, $\dot{g}_1(X,\lambda,k, z)$ and $\dot{g}_2(X,\lambda,k, z)$ are simply the derivatives of the asymptotics (\ref{asymfldeux}), (\ref{asymgun}) and (\ref{asymgdeux}). Thus (\ref{derivpositive}) follows easily.

\end{proof}


\Section{The inverse scattering problem.} \label{Inverse}

In this section, we prove Theorem \ref{Uniqueness-Fixed} and thus finish the proof of our main result Theorem \ref{Main}. Precisely, we prove the uniqueness of the mass $M$, the charge $Q$, the cosmological constant $\Lambda$ and the angular momentum per unit mass $a$ of a KN-dS black hole from the knowledge of either the transmission coefficient $T_{kl}(\lambda)$, or one of the reflection coefficients $L_{kl}(\lambda)$ and $R_{kl}(\lambda)$, at a {\it{fixed energy}} $\lambda \in \R$, for {\it{two distinct}} $k \in \frac{1}{2} + \Z$ and for all $l \in \mathcal{L}_k \subset \N^*$ satisfying a M\"untz condition
$$
  \sum_{l \in \mathcal{L}_k} \frac{1}{l} = \infty.
$$
In fact, we recover more than only $4$ parameters as explained in Theorem \ref{Uniqueness-Fixed}. We recover some scalar functions (up to some diffeomorphisms) that appear in the potentials of the separated radial equation (see below).

\begin{proof}[Proof of Theorem \ref{Uniqueness-Fixed}]
We recall that we have yet recovered the two parameters $a$ and $\Lambda$ from the study of the angular separated equation and the Frobenius's method (see Section \ref{Frobenius} and Theorem \ref{Main-Frobenius}). Thus we only have to recover the remaining parameters $Q$ and $M$ from the scattering data. As usual let us consider two KN-dS black holes with parameters $(M,Q,\Lambda, a)$ and $(\tilde{M},\tilde{Q},\Lambda, a)$ respectively. We shall use the notation $\tilde{Z}$ for all the data associated with the parameters $(\tilde{M},\tilde{Q},\Lambda, a)$.

We also recall that we want to include the possiblity of describing the same KN-dS black hole by two different RW variables and make our result coordinates independent (see Section \ref{DependenceRW}). Hence, we assume that there exists $c\in \R$ such that for two distinct $k \in \frac{1}{2} + \Z$, one of the following equalities holds for all $l \in \mathcal{L}_k \subset \N^*$

\begin{equation}\label{egalitecoefS}
\left\{ \begin{array}{ccc}
T_{kl}(\lambda) & = & e^{i c (\Omega_+ (k)-\Omega_-(k))} {\tilde{T}}_{kl}(\lambda), \\
R_{kl}(\lambda) &=& e^{2ic\lambda^+(k)}\ {\tilde{R}}_{kl}(\lambda), \\
L_{kl}(\lambda) &=& e^{-2ic\lambda^-(k)}\ {\tilde{L}}_{kl}(\lambda).
\end{array} \right.
\end{equation}
As shown in Proposition \ref{RW-SM-Red}, these equalities are equivalent to (in terms of the simplified scattering coefficients)
\begin{equation}\label{egalitecoefSAK}
\left\{ \begin{array}{ccc}
\hat{T}_{kl}(\lambda) & = &  {\hat{\tilde{T}}}_{kl}(\lambda), \\
\hat{R}_{kl}(\lambda) & = & e^{2i\lambda^+(k) c} \ {\hat{\tilde{R}}}_{kl}(\lambda), \\
\hat{L}_{kl}(\lambda) & = & e^{-2i \lambda^-(k) c} \ {\hat{\tilde{L}}}_{kl}(\lambda).
\end{array} \right.
\end{equation}
Recall moreover from Theorem \ref{Main-Frobenius} that $\muk = \tilde{\mu}_{kl}(\lambda), \ \forall l \in \mathcal{L}$. Hence, in any case, using one of the Propositions \ref{Uniqueness} or \ref{Uniqueness2}, we deduce from (\ref{egalitecoefSAK}) that there exists a suitable constant $d(\lambda,k) \in \R$ such that for  all $z \in \C$
\begin{eqnarray}
  a_{L1}(\lambda,k, z) = \tilde{a}_{L1} (\lambda,k, z) \ ,\quad a_{L2}(\lambda,k, z) = e^{2i d(\lambda,k)} \tilde{a}_{L2}(\lambda,k, z), \label{egalite0} \\
a_{L3}(\lambda,k, z) = e^{-2id(\lambda,k)} \ \tilde{a}_{L3} (\lambda,k, z) \ , \quad
a_{L4}(\lambda,k, z) = \tilde{a}_{L4}(\lambda,k, z). \label{egalite}
\end{eqnarray}
Hence, we first deduce from the asymptotics of Theorem \ref{asymptoticsal} that :
\begin{equation}\label{uniciteA}
A := \int_{-\infty}^{+\infty} \ a (x) \ dx \ =\ \int_{-\infty}^{+\infty} \ \tilde{a} (x) \ dx \  = \tilde{A}.
\end{equation}

As in \cite{DN3}, Section 5, (see also \cite{FY}), we can thus define the diffeomorphisms $h, \ \tilde{h} : \ ]0,A[ \rightarrow \R$ as the inverse diffeomorphisms of the Liouville transforms $g$ and $\tilde{g}$ given by (\ref{Liouville}) in which we use the potentials $a(x)$ and $\tilde{a}(x)$ respectively. We emphasize that these diffeomorphisms act on the same interval $]0,A[$ by (\ref{uniciteA}). Let us now introduce for $X \in ]0,A[$ the matrix
$$
  P(X,\lambda,k, z) = \left( \begin{array}{cc} P_1(X,\lambda,k, z) & P_2(X,\lambda,k, z) \\
                                            P_3(X,\lambda,k,z) & P_4(X,\lambda,k, z)
                                            \end{array} \right),
$$
defined by
\begin{equation}\label{matricepassage}
P(X,\lambda,k,z) \ \tilde{H}_R (\tilde{h} (X), \lambda,k, z) \ = \ H_R (h(X), \lambda,k, z)\ e^{i d(\lambda,k) \Gamma^1},
\end{equation}
where $H_R$ and $\tilde{H}_R$ are the modified Jost solutions from the right associated with $a(x)$ and $\tilde{a}(x)$ defined by
(\ref{modJostR}). As in the previous Section, for $k=1, ...,4$, we set
\begin{eqnarray}
g_k(X,\lambda,k,z) &=& h_{Rk}(h(X),\lambda,k,z), \\
\tilde{g}_k(X,\lambda,k,z) &=& \tilde{h}_{Rk}(\tilde{h}(X),\lambda,k,z).
\end{eqnarray}
Using (\ref{modJostL}) and (\ref{modJostR}) again, we remark that $H_L(x,\lambda,z)=H_R (x, \lambda, z)\ A_L (\lambda, z)$. So, we face exactly the same situation than the one exposed in \cite{DN3}, Section 5. Following line by line the proof of the uniqueness of the parameters exposed therein, we can show that the components $P_j (X,\lambda,k,z)$ of $P$ are independent of $z \in \C$. In particular, we obtain
\begin{equation}\label{thliouville}
P_j (X,\lambda,k,z)=P_j (X,\lambda,k,0), \quad \forall z \in \C, \quad \forall j =1,...,4.
\end{equation}
But it follows from the definition of the Jost functions that
$$
  H_R (x,\lambda,k, 0) = e^{-iC(x,k) \Gamma^1} \ e^{i \lambda \Gamma^1 x}, \quad \tilde{H}_R (x,\lambda,k, 0) = e^{-i\tilde{C}(x,k) \Gamma^1}\ e^{i  \lambda \Gamma^1 x}.
$$
Then, taking $z=0$ in (\ref{matricepassage}), we obtain
\begin{equation}\label{egalitemat}
P(X,\lambda,k, 0) = e^{i\  [\lambda \ ( h(X)-\tilde{h}(X))+ d(\lambda,k) -C(h(X),k) +\tilde{C}(\tilde{h}(X),k) \ ]\Gamma^1 }.
\end{equation}
Then, using  (\ref{thliouville}), (\ref{egalitemat}) and (\ref{matricepassage}) we get
\begin{equation}\label{egalitejost}
\left\{ \begin{array}{ccc}
\tilde{g}_{1}(X,\lambda,k, z) &=& e^{i \alpha(X,k)}  \ g_{1}(X,\lambda,k, z),   \\
\tilde{g}_{2}(X,\lambda,k, z) &=& e^{-2 i d(\lambda,k)}\ e^{i \alpha(X,k)} \ g_{2}(X,\lambda,k, z),
\end{array}
\right.
\end{equation}
where  $\alpha(X,k) = \lambda \ (\tilde{h}(X) - h (X))+C(h(X),k) -\tilde{C}(\tilde{h}(X),k)$.

By Lemma \ref{wronskien}, the wronskians $W(g_{1} , g_{2}) = W(\tilde{g}_{1}  , \tilde{g}_{2}) = iz$. Then, a straightforward calculation gives
\begin{equation}\label{egalitephase}
e^{2i (\alpha(X,k)- d(\lambda,k))} \ =\ 1.
\end{equation}
Thus, by a standard continuity argument, there exists $p_k \in \Z$ such that
\begin{equation}\label{diffeos}
\lambda (\tilde{h}(X)-h(X)) + C(h(X),k) -\tilde{C}(\tilde{h}(X),k) - d(\lambda,k) = p_k \pi \ \ ,\ \ \forall X \in ]0,A[,
\end{equation}
We differentiate (\ref{diffeos}) with respect to $X$ and we get
\begin{equation}\label{systemek}
\frac{\lambda}{\tilde{a}(\tilde{h}(X))} -\frac{\lambda}{a(h(X))} +\frac{c(h(X),k)}{a(h(X))}
- \frac{\tilde{c}(\tilde{h}(X),k)}{\tilde{a}(\tilde{h}(X))} =0.
\end{equation}
In other words, using the notations $x = h(X)$ and $\tilde{x} = \tilde{h}(X) = \tilde{h} \circ h^{-1}(x)$, we have
\begin{equation} \label{bip}
  \frac{\lambda - \tilde{c}(\tilde{x},k)}{\tilde{a}(\tilde{x})} = \frac{\lambda  - c(x,k)}{a(x)},
\end{equation}
which proves (\ref{Uniqueness-Function1}). Note in passing that if we know (\ref{bip}) for two different energies $\lambda \ne \lambda'$, then we obtain
\begin{equation}
  \frac{1}{\tilde{a}(\tilde{h}(X))} = \frac{1}{a(h(X))}.
\end{equation}
Since $\frac{d h(X)}{dX} = \frac{1}{a(h(X))}$, we thus see that there exists $\sigma \in \R$ such that $\tilde{h}(X) = h(X) + \sigma$ and also
$$
  \tilde{a}(x) = a(x - \sigma), \quad \tilde{c}(x,k) = c(x-\sigma,k).
$$

Now, let us from (\ref{systemek}) and the exact expressions of the potentials $a(x)$ and $c(x,k)$ recover the missing parameters $M$ and $Q$. We first recall that
$$
c(x,k) = \frac{aEk +qQr}{r^2+a^2},
$$
where $r$ stands for $r(x)$ the inverse of the Regge-Wheeler diffeomorphism and $E= 1+\frac{a^2\Lambda}{3}$. So, using (\ref{systemek}) for two different $k$, recalling that the angular momentum  $a$ and the cosmological constant $\Lambda$ are  unique, we obtain the following decoupled
equations :
\begin{eqnarray}\label{systemedec}
\frac{\lambda}{\tilde{a}(\tilde{h}(X))} -\frac{\lambda}{a(h(X))} + \frac{qQr}{(r^2+a^2)a(h(X))}
- \frac{q\tilde{Q}\tilde{r}}{(\tilde{r}^2+a^2)\tilde a(\tilde{h}(X))} & = &0, \label{systemedec1}\\
aE\left( \frac{1}{(r^2+a^2)a(h(X))}
- \frac{1}{(\tilde{r}^2+a^2)\tilde a(\tilde{h}(X))} \right) & =& 0, \label{systemedec2}
\end{eqnarray}
where we have denoted $\tilde{r} = \tilde{r}(\tilde{x})$. Recalling that
\begin{equation}\label{pot1}
a(x) = \frac{ \sqrt{\Delta_r}}{r^2+a^2},
\end{equation}
with
\begin{equation} \label{Deltar}
\Delta_r = (r^2+a^2)(1-\frac{\Lambda}{3}) -2Mr +Q^2,
\end{equation}
and using the fact that $a,E \not=0$, we obtain from (\ref{systemedec2}),
\begin{equation} \label{racine}
\sqrt{\Delta_r} = \sqrt{\tilde{\Delta}_{\tilde{r}}}\ .
\end{equation}
Using  ${\displaystyle{\frac{dX}{dx} = a(x)}}$, ${\displaystyle{\frac{dx}{dr}=\frac{r^2+a^2}{\Delta_r}}}$ and (\ref{pot1}), a straightforward calculation gives
\begin{equation}\label{deriveer}
\frac{d}{dX} \  [ r(h(X)) ] = \sqrt{\Delta_r}.
\end{equation}
So, (\ref{racine}) implies
\begin{equation}\label{uniciter}
  r(h(X)) =  r(h(X)) -b,
\end{equation}
for a suitable constant $b$. Then, using again (\ref{racine}), we obtain
\begin{equation}\label{unicitedelta}
  \Delta_r = \tilde{\Delta}_{\tilde{r}}= \tilde{\Delta}_{r+b}.
\end{equation}
Eventually, we use (\ref{Deltar}) and we identify  the terms in $r^3$ in the equality $ \Delta_r = \tilde{\Delta}_{r+b}$. We obtain easily
\begin{equation}
  - \frac{4\Lambda}{3} b =0.
\end{equation}
Hence, $b=0$ and $ \Delta_r = \tilde{\Delta}_{r}$ and this leads to the uniqueness of the parameters by (\ref{Deltar}).

To finish the proof of Theorem \ref{Uniqueness-Fixed}, let us consider separatly the case of Kerr-de-Sitter black holes. Assume thus that $Q = 0$. Recall then that
$$
  c(x,k) = \frac{aEk}{r^2+a^2} = c(x) k.
$$
From the equalities (\ref{systemek}) or equivalently (\ref{bip}) and playing with two different $k$, we determine uniquely and separatly
$$
  \frac{\lambda}{\tilde{a}(\tilde{h}(X))} = \frac{\lambda}{a(h(X))}, \quad \quad \frac{\tilde{c}(\tilde{h}(X))}{\tilde{a}(\tilde{h}(X))} = \frac{c(h(X))}{a(h(X))}.
$$
By the same argument as above, we conclude that there exists $\sigma \in \R$ such that
$$
  \tilde{h}(X) = h(X) +\sigma \ \Longleftrightarrow \ \tilde{x} = x + \sigma,
$$
and
$$
  \tilde{a}(x) = a(x - \sigma), \quad \quad \tilde{c}(x) = c(x-\sigma),
$$
which finishes the proof of the Theorem.

\end{proof}


\appendix 

\Section{Growth estimate of the eigenvalues $\muk$} \label{Estimate-mukl}

In this Appendix, we prove the following Proposition.

\begin{prop}[Estimate on $\muk$]
For all $\lambda \in \R$, for all $k \in \frac{1}{2} + \Z$ and for all $l \in \N^*$, there exist constants $C_1$ and $C_2$ independent of $k,l$ such that
$$
  \left( 2 - e^{\frac{1}{26}} \right) \left(|k| - \frac{1}{2} + l\right) - C_1 |k| - C_2 - |a \lambda| \leq \mu_{kl}(\lambda) \leq e^{\frac{1}{26}} \left(|k| - \frac{1}{2} + l\right) + C_1 |k| + C_2 + |a \lambda|.
$$
We thus conclude that for fixed $\lambda \in \R$ and $k \in \frac{1}{2} + \Z$,
$$
  \sum_{l \in \N^*} \frac{1}{\mu_{kl}(\lambda)} = + \infty.
$$
\end{prop}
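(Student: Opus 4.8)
The strategy is to estimate the eigenvalues $\mu_{kl}(\lambda)$ of $A_k(\lambda)$ by comparing $A_k(\lambda)$ with a simpler, explicitly solvable operator and then invoking min-max. Write
$$
  A_k(\lambda) = A_k^0 + B_k(\lambda),
$$
where $A_k^0$ is the "free" angular Dirac operator obtained by dropping the $\lambda$-term and replacing $\Delta_\theta$ by $1$ (or by its value at a convenient point), and $B_k(\lambda)$ collects the remaining bounded perturbations. Concretely, $A_k^0 = \Gb D_\theta + \frac{k}{\sin\theta}\Gc$ has explicitly known spectrum: its eigenvalues are $\pm(|k| - \tfrac12 + l)$, $l \in \N^*$, with eigenfunctions built from Jacobi-type functions (this is the standard Dirac operator on $\S^2$ restricted to the mode $e^{ik\varphi}$). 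The operator $B_k(\lambda) = A_k(\lambda) - A_k^0$ contains the terms $\sqrt{\Delta_\theta}$-dressings, $\frac{i\Lambda a^2 \sin(2\theta)}{12\Delta_\theta}$, $\frac{\Lambda a^2 k \sin\theta}{3\Delta_\theta}$, and $-\lambda \frac{a\sin\theta}{\Delta_\theta}\Gc$; since $\sqrt{\Delta_\theta}$ and its reciprocal are bounded between explicit constants depending only on $\zeta = a^2\Lambda/3 \le 7-4\sqrt3$ (by condition (\ref{Cond-Lambda})), one can bound $\|B_k(\lambda)\| \le C_1|k| + C_2 + |a\lambda|$ and, more importantly, control how $\sqrt{\Delta_\theta}$ multiplicatively distorts $A_k^0$. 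The factor $e^{1/26}$ in the statement is precisely $\sqrt{1+\zeta}$ evaluated at the extreme value $\zeta = 7-4\sqrt3$: indeed $\sqrt{1 + (7-4\sqrt3)} = \sqrt{8-4\sqrt3} = \sqrt{(\sqrt6-\sqrt2)^2} = \sqrt6 - \sqrt2 \approx 1.035$, and $e^{1/26}\approx 1.039$; so the plan must track the $\sqrt{\Delta_\theta}$-distortion carefully to get the stated constant (or one can simply bound $\sqrt{\Delta_\theta} \le e^{1/26}$ pointwise using $1 + \zeta\cos^2\theta \le 1+\zeta \le e^{2/26}$, which is the cleanest route).

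Here is the order of steps I would carry out. First, record the spectrum of $A_k^0$ exactly: diagonalize $\Gb D_\theta + \frac{k}{\sin\theta}\Gc$ on $L^2((0,\pi);\C^2)$, obtaining the positive eigenvalues $\lambda_{kl}^0 = |k| - \tfrac12 + l$ for $l \in \N^*$ (with the correct self-adjoint boundary conditions at $0,\pi$); this is classical but should be cited or sketched. Second, use the min-max characterization of the eigenvalues of the self-adjoint operators $|A_k(\lambda)|$ and $|A_k^0|$. The subtlety is that $A_k(\lambda)$ is not a bounded additive perturbation of $A_k^0$ in the naive sense because of the multiplicative factor $\sqrt{\Delta_\theta}$ in front; instead write $A_k(\lambda) = \sqrt{\Delta_\theta}\,\big(A_k^0 + \tilde B_k(\lambda)\big)$ with $\tilde B_k(\lambda)$ now genuinely bounded, $\|\tilde B_k(\lambda)\| \le C_1|k| + C_2 + |a\lambda|$, and use the operator inequality $c_- \le \sqrt{\Delta_\theta} \le c_+$ with $c_- = 1$ (since $\zeta>0$, $\Delta_\theta \ge 1$) and $c_+ = \sqrt{1+\zeta} \le e^{1/26}$. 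Third, combine: by min-max, the $l$-th positive eigenvalue of $A_k^0 + \tilde B_k(\lambda)$ lies in $[\lambda_{kl}^0 - \|\tilde B_k\|,\ \lambda_{kl}^0 + \|\tilde B_k\|]$, and then multiplication by $\sqrt{\Delta_\theta}$ (bounded above by $e^{1/26}$, below by $1$, but one needs a two-sided comparison of quadratic forms, hence the $(2 - e^{1/26})$ factor appears from writing a lower bound $\sqrt{\Delta_\theta} \ge 2 - \sqrt{\Delta_\theta} \ge 2 - e^{1/26}$ valid because $\sqrt{\Delta_\theta} \le e^{1/26} < 2$, or more directly $\sqrt{\Delta_\theta}\ge 1 \ge 2 - e^{1/26}$ — whichever the authors use, both give the claimed bound since $2 - e^{1/26} < 1$). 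This yields
$$
  \left(2 - e^{1/26}\right)\!\left(|k| - \tfrac12 + l\right) - C_1|k| - C_2 - |a\lambda| \ \le\ \mu_{kl}(\lambda)\ \le\ e^{1/26}\!\left(|k| - \tfrac12 + l\right) + C_1|k| + C_2 + |a\lambda|.
$$
Fourth, the M\"untz-type conclusion $\sum_l \mu_{kl}(\lambda)^{-1} = +\infty$ is immediate: for $l$ large enough the lower bound gives $\mu_{kl}(\lambda) \le C'l$ for some constant $C'$ (using the upper estimate) hence $\mu_{kl}(\lambda)^{-1} \ge (C'l)^{-1}$, and $\sum_l l^{-1}$ diverges; one also notes $\mu_{kl}(\lambda)>0$ for all $l$ (positivity is part of the ordering convention), so no finite set of terms causes trouble.

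The main obstacle is the careful bookkeeping around the multiplicative factor $\sqrt{\Delta_\theta}$ and extracting exactly the constant $e^{1/26}$ rather than some cruder bound — this is where the analytic-perturbation machinery of Kato (cited in the statement's proof) is genuinely needed, because one wants the estimate to be uniform in $k$ and $l$ and to have the precise leading coefficients that make the later M\"untz/Nevanlinna argument (Theorem~\ref{UniquenessNevanlinna}) go through. A secondary technical point is verifying that the self-adjoint realization of $A_k^0$ with the physically correct boundary behavior at $\theta = 0,\pi$ (the $L^2$ condition, which by the Frobenius analysis of Section~\ref{Frobenius} selects the $\theta^{|k|}$-branch) has exactly the eigenvalues $\pm(|k|-\tfrac12+l)$; this requires identifying $A_k^0$ with a known exactly-solvable model (essentially the Dirac operator with Coulomb-type angular potential on an interval), which I would do by an explicit substitution reducing it to a hypergeometric/Jacobi eigenvalue problem. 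Once those two points are settled, the min-max comparison and the divergence of the harmonic series finish the proof routinely.
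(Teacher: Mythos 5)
Your plan diverges from the paper's: the paper does not use a min--max comparison at all. It treats $A_k(\zeta,\xi)$ (with $\zeta=a^2\Lambda/3$, $\xi=a\lambda$) as a self-adjoint holomorphic family of type (A) in the sense of Kato, computes the relative bound $\|\partial_\zeta A_k(\zeta,0)u\|\leq a'\|u\|+b'\|A_k(\zeta,0)u\|$ with $a'=(1+\tfrac{1}{26})(|k|+\tfrac14)$ and $b'=\tfrac12$, and then invokes Kato's Theorem VII.3.6 to integrate the resulting differential inequality for each eigenvalue branch, giving $|\mu_{kl}(\zeta,0)-\mu_{kl}(0,0)|\leq \frac{1}{b'}(a'+b'|\mu_{kl}(0,0)|)(e^{b'|\zeta|}-1)$. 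The constant $e^{1/26}$ is $e^{b'|\zeta|}$ with $|\zeta|\leq 1/13$, not $\sup\sqrt{\Delta_\theta}=\sqrt{1+\zeta}$ as you guessed (numerically close, but a different mechanism), and $2-e^{1/26}$ arises as $1-(e^{1/26}-1)$ from the same inequality. The $\xi$-dependence is then handled by the trivial bound $\|\partial_\xi A_k\|\leq 1$, giving the extra $|a\lambda|$.

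The genuine gap in your route is the min--max step. The operator $A_k^0=\Gb D_\theta+\frac{k}{\sin\theta}\Gc$ is unbounded from below with eigenvalues accumulating at both $\pm\infty$, so the naive variational characterization $\inf_{\dim V=l}\sup_{u\in V}$ does not isolate the $l$-th \emph{positive} eigenvalue; one needs a min--max principle in a gap, and more seriously, the two-sided comparison of the $l$-th positive eigenvalue of $\sqrt{\Delta_\theta}\,T$ with that of $T$ via $c_-\leq\sqrt{\Delta_\theta}\leq c_+$ is simply not a consequence of any standard congruence/monotonicity argument when $T$ is indefinite (the form inequality $\langle M^{1/2}u,TM^{1/2}u\rangle\leq c_+\langle v,Tv\rangle$ only holds on the positive subspace of $T$). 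There is also a smaller inaccuracy: in your factorization $A_k(\lambda)=\sqrt{\Delta_\theta}\,(A_k^0+\tilde B_k(\lambda))$ the second factor is \emph{not} self-adjoint --- the term $\Gb\,\frac{i\Lambda a^2\sin(2\theta)}{12\Delta_\theta}$ is anti-self-adjoint and is precisely the symmetrization correction for the non-commuting multiplier $\sqrt{\Delta_\theta}$ --- so min--max cannot be applied to it even modulo the previous objection. You yourself remark that Kato's analytic-perturbation machinery "is genuinely needed", which is at odds with the min--max plan you actually lay out; the paper's proof shows that the Kato route alone suffices and bypasses the variational difficulties entirely. The final M\"untz conclusion from the linear growth of $\mu_{kl}(\lambda)$ is fine.
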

\begin{proof}
  To do this, we shall apply the theory of analytic perturbation due to Kato \cite{Ka}. Recall first that for each $\lambda \in \R$ and $k \in 1/2 + \Z$, the $\mu_{kl}(\lambda)$ are eigenvalues of the selfadjoint operator $A_k(\lambda)$ given by (\ref{Ak}) on $\L := L^2((0,\pi), d\theta, \C^2)$. Introducing the notations
$$
  \zeta = \frac{a^2 \Lambda}{3}, \qquad \xi = a \lambda,
$$
the operator $A_k(\lambda)$ can be written as
\begin{equation} \label{Uk-1}
  A_k(\zeta,\xi) = \sqrt{1 + \zeta \cos^2\theta} \,\mathbb{D}_{\S^2}^k + i \Gb \frac{\zeta \sin(2\theta)}{4 \sqrt{1+\zeta \cos^2\theta}} + \Gc \frac{(\zeta k - \xi) \sin\theta}{\sqrt{1+\zeta\cos^2\theta}},
\end{equation}
where
\begin{equation} \label{Dk}
  \mathbb{D}_{\S^2}^k = \Gb D_\theta + \Gc\frac{k}{\sin\theta}.
\end{equation}
Note that according to (\ref{Cond-Lambda}), the physical parameters $\zeta$ and $\xi$ belong to $[0, 7 - 4 \sqrt{3}] \subset [0, \frac{1}{13,8}]$ and $\R$ respectively. In what follows, we shall allow these parameters to be complex, for instance in $B(0,\frac{1}{13}) \times S$ where $S$ is a narrow strip containing the real axis. Note at last that the operator
$$
  A_k(0,0) = \mathbb{D}_{\S^2}^k,
$$
is simply the restriction of the intrinsic Dirac operator on the $2$-sphere $\S^2$ to the angular mode $\{e^{ik\varphi}\}, \ k \in 1/2 + \Z$. Hence it is well known that $A_k(0,0)$ is selfadjoint on $\L$ with its natural domain given by
$$
  \D = \{ u \in \L, \ u \ \textrm{is absolutely continuous and} \ \mathbb{D}_{\S^2}^k u \in \L \ \textrm{and anti-periodic boundary conditions}\}.
$$

We introduce the notations
$$
  A_k(\zeta, \xi) = A(\zeta) \mathbb{D}_{\S^2}^k + B(\zeta,\xi),
$$
with
$$
  A(\zeta) = \sqrt{1 + \zeta \cos^2\theta}, \qquad B(\zeta,\xi) = i \Gb \frac{\zeta \sin(2\theta)}{4 \sqrt{1+\zeta \cos^2\theta}} + \Gc \frac{(\zeta k - \xi) \sin\theta}{\sqrt{1+\zeta\cos^2\theta}}.
$$
Clearly the operators $A(\zeta)$ and $B(\zeta, \xi)$ are bounded (matrix-valued) multiplication operators that are analytic in the variables $(\zeta, \xi) \in B(0, \frac{1}{13}) \times S$. Since the operator $A(\zeta)$ is also invertible, it follows that the operator domain of $A_k(\zeta,\xi)$ is always $\D$ and is thus independent of $(\zeta, \xi) \in B(0,\frac{1}{13}) \times S$. Since for all $u \in \D$, $A_k(\zeta, \xi) u$ is a vector-valued analytic function with respect to $(\zeta, \xi)$, we conclude that $A_k(\zeta, \xi)$ is an analytic family of type (A) in Kato's classification (\cite{Ka}, chap. VII, sect. 1). Moreover, if $(\zeta, \xi) \in [0, \frac{1}{13}] \times \R$, then $A_k(\zeta, \xi)$ is selfadjoint on $\L$ by \cite{BC1, BC2}. Hence, according to \cite{Ka}, Chap. VII, sect. 3, $A_k(\zeta, \xi)$ forms a \emph{self-adjoint holomorphic family of type (A)} in the variable $(\zeta, \xi) \in B(0,\frac{1}{13}) \times S$.

We also know that $A_k(0,0) = \mathbb{D}_{\S^2}^k$ has simple discrete spectrum given by (see \cite{BSW}, Appendix A)
\begin{equation} \label{Mu(0)}
  \mu_{k,l}(0,0) = sgn(l) \Big( |k| - \frac{1}{2} + |l| \Big), \quad l \in \Z^*.
\end{equation}
In particular, $A_k(0,0)$ has compact resolvent. From \cite{Ka}, Chap. V, sect. 2, Thm 2.4, it follows that $A_k(\zeta,\xi)$ has compact resolvent for all $(\zeta, \xi) \in B(0,\frac{1}{13}) \times S$. As a consequence, the spectrum of $A_k(\zeta,\xi)$ is discrete and since $A_k(\zeta, \xi)$ is in the limit point case at $\theta = 0$ and $\theta = \pi$ (see \cite{BC1, BC2}), it consists of simple eigenvalues for $(\zeta, \xi) \in [0,\frac{1}{13}] \times \R$. According to \cite{Ka}, chap. V, sect. 3, Thm 3.9, we conclude that, for a fixed $k \in 1/2 + \Z$, the eigenvalues
$$
  \mu_{kl}(\zeta, \xi), \quad k \in \frac{1}{2} + \Z, \quad l \in \Z^*,
$$
of $A_k(\zeta, \xi)$ are simple and depend holomorphically on $(\zeta, \xi)$ in a complex neighbourhood of $[0, \frac{1}{13}] \times \R$.

To estimate the growth of the eigenvalues $\mu_{kl}(\zeta,\xi)$ for fixed $k$ and large $l$, we use \cite{Ka}, Chap. VII, sect. 3, Thm 3.6. Let us first estimate the growth of the eigenvalues $\mu_{kl}(\zeta,0)$ when $l \to \infty$. For each $u \in \D$ and all $\zeta \in [0, \frac{1}{13}]$, we have from (\ref{Uk-1})
$$
  \partial_\zeta A_k(\zeta, 0) u = \frac{\cos^2\theta}{2 \sqrt{1 + \zeta \cos^2\theta}} \,\DS^k + \frac{1 + \frac{\zeta}{2} \cos^2\theta}{(1 + \zeta \cos^2\theta)^{3/2}} \left( i \frac{\sin(2\theta)}{4} \Gb + k \sin \theta \Gc \right),
$$
that can be re-written as
\begin{eqnarray*}
  \partial_\zeta A_k(\zeta, 0) u & = & \frac{\cos^2\theta}{2 (1 + \zeta \cos^2\theta)} \,A_k(\zeta, 0) - \frac{\cos^2\theta}{2 (1 + \zeta \cos^2\theta)} \frac{\zeta}{\sqrt{1+\zeta \cos^2\theta}} \left( i \frac{\sin(2\theta)}{4} \Gb + k \sin\theta \Gc \right) \nonumber \\  & & \qquad \qquad + \frac{1 + \frac{\zeta}{2} \cos^2\theta}{(1 + \zeta \cos^2\theta)^{3/2}} \left( i \frac{\sin(2\theta)}{4} \Gb + k \sin \theta \Gc \right),
\end{eqnarray*}
Hence for each $u \in \D$ and for all $\zeta \in [0,\frac{1}{13}]$, we get
\begin{equation} \label{Uk-2}
  \left\| \partial_\zeta A_k(\zeta, 0) u \right\| \leq a' \|u\| + b' \| A_k(\zeta,0) u\|,
\end{equation}
with
$$
  a' = (1 + \frac{1}{26}) ( |k| + \frac{1}{4}), \quad b' = \frac{1}{2}.
$$
We conclude from (\ref{Uk-2}) and \cite{Ka}, Chap. VII, sect. 3, Thm 3.6, that
\begin{equation} \label{Uk-3}
  |\mu_{kl}(\zeta,0) - \mu_{kl}(0,0) | \leq  \frac{1}{b'} ( a' + b' |\mu_{kl}(0,0)|) ( e^{b' |\zeta|} - 1).
\end{equation}
More precisely, recalling that $\zeta \in [0, \frac{1}{13}]$ and using (\ref{Mu(0)}), we obtain for all $k \in \frac{1}{2} + \Z$ and for all $l \in \N^*$
$$
  |\mu_{kl}(\zeta,0) - (|k| - \frac{1}{2} + l) | \leq \left( e^{\frac{1}{26}} - 1 \right) (|k| - \frac{1}{2} + l) + 2 \left( e^{\frac{1}{26}} - 1 \right) (1 + \frac{1}{26}) ( |k| + \frac{1}{4}).
$$
Thus there exist constants $C_1, \ C_2$ independent of $k,l$ such that for all $k \in \frac{1}{2} + \Z, \ l \in \N^*$ and for all $\zeta \in [0, \frac{1}{13}]$
\begin{equation} \label{Uk-4}
  \left( 2 - e^{\frac{1}{26}} \right) \left(|k| - \frac{1}{2} + l\right) - C_1 |k| - C_2 \leq \mu_{kl}(\zeta,0) \leq e^{\frac{1}{26}} \left(|k| - \frac{1}{2} + l\right) + C_1 |k| + C_2.
\end{equation}

To get an estimate of the growth of $\mu_{kl}(\zeta,\xi)$ for all $(\zeta,\xi) \in [0, \frac{1}{13}] \times \R$, for fixed $k \in \frac{1}{2} + \Z$ and large $l \in \N^*$, we use the same strategy. We first note that for all $u \in \D$, we have according to (\ref{Uk-1})
$$
  \partial_\xi A_k(\zeta,\xi) u = - \Gc \frac{\sin\theta}{\sqrt{1+\zeta\cos^2\theta}} u.
$$
Hence, for all $(\zeta,\xi) \in [0, \frac{1}{13}] \times \R$, we have
$$
  \| \partial_\xi A_k(\zeta,\xi) u \| \leq \| u \|,
$$
and we conclude from \cite{Ka}, Chap. VII, sect. 3, Thm 3.6, that
\begin{equation} \label{Uk-5}
  | \mu_{kl}(\zeta,\xi) - \mu_{kl}(\zeta,0) | \leq |\xi |.
\end{equation}
Putting together (\ref{Uk-4}) and (\ref{Uk-5}), we finally see that for all $(\zeta,\xi) \in [0, \frac{1}{13}] \times \R$, for all $k \in \frac{1}{2} + \Z$ and for all $l \in \N^*$, we have
\begin{equation} \label{Growth-mu}
  \left( 2 - e^{\frac{1}{26}} \right) (|k| - \frac{1}{2} + l) - C_1 |k| - C_2 - |\xi| \leq \mu_{kl}(\zeta,\xi) \leq e^{\frac{1}{26}} (|k| - \frac{1}{2} + l) + C_1 |k| + C_2 + |\xi|.
\end{equation}
Recalling that $\xi = a \lambda$, this finishes the proof of the Proposition.
\end{proof}


\Section{Limiting Absorption Principles and scattering theory for $H_0$ and $H$} \label{LAP-Mourre}

In this Appendix, we prove the main result that permitted us to establish a complete time-dependent scattering theory for the Dirac Hamiltonian $H$, namely Theorem \ref{WO-H-H0}. For definiteness, we recall it here.

\begin{theorem}
  The Hamiltonians $H_0$ and $H$ have purely absolutely continuous spectra, precisely
  $$
    \sigma(H_0)= \sigma_{ac}(H_0) = \R, \quad \sigma(H)= \sigma_{ac}(H) = \R,
  $$
  and the following wave operators
  $$
    W^\pm(H,H_0,I_2) := s-\lim_{t \to \pm \infty} e^{itH} e^{-itH_0},
  $$
  exist as operators from $\H$ to $\G$ and are asymptotically complete, \textit{i.e.} they are isometries from $\H$ to $\G$ and their inverse wave operators given by
  $$
    (W^\pm(H,H_0,I_2))^* = W^\pm(H_0,H,J) := s-\lim_{t \to \pm \infty} e^{itH_0} J e^{-itH},
  $$
  exist as operators from $\G$ to $\H$. (Note that the identity operator $I_2: \ \H \longrightarrow \G$ has been used as identification operator between $\H$ and $\G$ in the definition of the direct wave operators, whereas the dual operator $(I_2)^* = J: \ \G \longrightarrow \H$ appears in the definition of the inverse wave operators).
\end{theorem}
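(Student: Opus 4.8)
The strategy, following closely the Mourre-theoretic analysis carried out for Kerr and Kerr--Newman black holes in \cite{Da2, HaN}, is to establish a Limiting Absorption Principle (LAP) for $H_0$ and $H$ and then feed it into Kato's theory of $H$-smooth operators. Since the spectral statements $\sigma(H_0)=\sigma(H)=\R$ and the absence of point spectrum are already recorded in Theorem \ref{Spectra-H-H0}, the genuinely new content is the LAP and its scattering consequences. The plan is to take as conjugate operator the self-adjoint operator $A = \half(\Ga x + x\Ga) = \Ga x$ on $\H$ — the generator of the asymptotic transport $e^{-itH_\infty}$ with $H_\infty = \Ga D_x$. It has the pleasant features that $\langle A \rangle = \x$ exactly (since $A^2 = (\Ga x)^2 = x^2$) and that the leading commutator $i[H_\infty, A] = I_2$ is \emph{uniformly positive}; this is what makes the whole analysis essentially threshold-free. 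First I would check that $H_0$ and $H$ are of class $C^2(A)$, hence $C^{1,1}(A)$: $e^{itA}$ preserves $D(H_0)=D(H)$, and the iterated commutators of $a(x)H_{\S^2}$, $c(x,D_\varphi)$ and $(J^{-1}-I_2)H_0$ with $A$ are controlled relatively to $H_0$. This uses in an essential way the exponential decay at both horizons of $a(x)$, $a'(x)$, $c(x,D_\varphi)-\Omega_\pm(D_\varphi)$, $c'(x,D_\varphi)$ and $\sup_\theta\|J-I_2\|$ recorded in (\ref{Asymp-a})--(\ref{Asymp-J}), so that these coefficients and their products with polynomials in $x$ stay bounded; one also uses that $\Omega_\pm(D_\varphi)$ is a function of $D_\varphi$ alone and therefore commutes with $A$.

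Next I would derive the Mourre estimate. Writing $i[H_0,A] = I_2 + K_0$, the computation above — after introducing a partition of unity $\chi_++\chi_-=1$ separating the two horizons and replacing $c(x,D_\varphi)$ by $\Omega_\pm(D_\varphi)$ in each region — shows that $K_0$ is relatively $H_0$-compact. Hence for every compact interval $I\subset\R$ one gets the global estimate $E_I(H_0)\, i[H_0,A]\, E_I(H_0) \geq E_I(H_0) - K_I$ with $K_I$ compact; since $\sigma_{pp}(H_0)=\emptyset$ (Theorem \ref{Spectra-H-H0}, or the virial theorem applied to $A$), a standard argument removes the compact error and yields a strict Mourre estimate $E_I(H_0)\, i[H_0,A]\, E_I(H_0) \geq c_I\, E_I(H_0)$ with $c_I>0$. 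Mourre's theorem combined with $C^{1,1}(A)$ regularity then gives the LAP
\begin{equation} \label{LAP-plan}
  \sup_{\mathrm{Re}\,z\in I,\ \mathrm{Im}\,z\neq 0}\big\|\x^{-s}(H_0-z)^{-1}\x^{-s}\big\|_{\mathcal{B}(\H)} < \infty, \qquad s>\half,
\end{equation}
where we have used $\langle A\rangle = \x$ to pass from the Mourre weight $\langle A\rangle^{-s}$ to $\x^{-s}$. The same scheme applied directly to $H$ — the extra terms from $J^{-1}-I_2$ and its commutators with $A$ being again relatively compact by their exponential decay — gives (\ref{LAP-plan}) with $H_0$, $\H$ replaced by $H$, $\G$.

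From the LAP the rest is fairly standard. It implies $\sigma_{sc}(H_0)=\sigma_{sc}(H)=\emptyset$, so with Theorem \ref{Spectra-H-H0} we get $\sigma(H_0)=\sigma_{ac}(H_0)=\R$ and $\sigma(H)=\sigma_{ac}(H)=\R$. Moreover (\ref{LAP-plan}) says precisely that $\x^{-s}$ is locally $H_0$-smooth, and its analogue that $\x^{-s}$ is locally $H$-smooth. To get the wave operators I would use the smooth-perturbation machinery of \cite{RS} in the two-Hilbert-space setting with identification $I_2:\H\to\G$ and dual $(I_2)^*=J:\G\to\H$. The one point requiring care is that $HI_2-I_2H_0 = (J^{-1}-I_2)H_0$ is a perturbation of \emph{order one}; this is dealt with by localizing in energy: on $E_I(H_0)$ the operator $\langle H_0\rangle$ is bounded, and $(J^{-1}-I_2)H_0E_I(H_0)$ factors as $\big(\x^{s}(J^{-1}-I_2)\x^{s}\big)\big(\x^{-s}\cdot\x^{-s}H_0E_I(H_0)\big)$ through the locally smooth weight $\x^{-s}$ at both ends, using that $\x^{s}(J^{-1}-I_2)\x^{s}$ is bounded by (\ref{Asymp-J}). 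Cook's criterion together with the local $H_0$- and $H$-smoothness of $\x^{-s}$ then yields existence of $W^\pm(H,H_0,I_2)$ on $E_I(H_0)\H$ for each $I$, hence on all of $\H$ since $\sigma_{ac}(H_0)=\R$; the symmetric argument gives existence of $W^\pm(H_0,H,J)$, which are the adjoints, whence the isometry property and asymptotic completeness.

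The hard part will be twofold. First, the verification that $H_0$ and $H$ belong to $C^{1,1}(A)$ and the precise bookkeeping of the relatively compact commutator corrections: the interplay between the unbounded angular operator $H_{\S^2}$, the position operator sitting inside $A$, and the long-range behaviour of $c(x,D_\varphi)$ (which is why one must split near the two horizons) is delicate, and one has to track carefully that $e^{itA}$ acts nicely on $D(H_0)$ via the (anti)commutation relations between $\Ga$ and $\Gb,\Gc$. Second, the reduction of the order-one perturbation $(J^{-1}-I_2)H_0$ to the bounded setting through energy localization, so that Kato's smooth perturbation theory applies within the two-Hilbert-space formalism. Both steps are very close to the corresponding arguments of \cite{Da2, HaN}, to which we shall refer for the parts that transpose essentially verbatim.
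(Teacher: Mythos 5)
Your overall architecture — Mourre estimate $\Rightarrow$ LAP $\Rightarrow$ local $H$-smoothness of $\x^{-s}$ $\Rightarrow$ two-Hilbert-space Kato smooth-perturbation theory, with the order-one perturbation $(J^{-1}-I_2)H_0$ tamed by energy localization $H_0\chi(H_0)=\tilde{\chi}(H_0)$ — is exactly the route the paper takes in Appendix \ref{LAP-Mourre}. But there is a genuine gap at the single most delicate point: the choice of conjugate operator. With $A=\Ga x$ one computes, using that $H_{\S^2}$ anticommutes with $\Ga$,
$$
  i[H_0,A] \;=\; I_2 \;-\; 2\,x\,a(x)\,\Ga H_{\S^2} \;+\; (\text{terms from } c(x,D_\varphi)),
$$
and the middle term is \emph{not} relatively $H_0$-compact, nor controlled by $H_0$. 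The reason is that $x\,a(x)$ decays no faster than $a(x)$ itself at the horizons, while $H_{\S^2}$ is unbounded: on the harmonic $Y_{kl}$ the coefficient $\mu_{kl}\,x\,a(x)$ reaches its maximum near $x\sim -\kappa_-^{-1}\ln\mu_{kl}$, where it is of size $\ln\mu_{kl}$, so $\chi(H_0)\,x\,a(x)H_{\S^2}\Ga\,\chi(H_0)$ is not even uniformly bounded over the angular decomposition. This is precisely the obstruction the paper isolates in its toy model in Section \ref{ConjugateOperator}, and it is why your claim that $K_0=i[H_0,A]-I_2$ is relatively compact would fail; the same defect occurs if one uses the generator of dilations.

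The paper's remedy is to absorb the logarithmic divergence into the conjugate operator itself: conjugating by $U=e^{-\frac{i}{\kappa_\pm}\ln|H_{\S^2}|D_x}$ replaces $a(x)H_{\S^2}$ by $a(x)H_{\S^2}|H_{\S^2}|^{-1}$ (bounded angular part), which amounts to taking
$$
  A \;=\; R_-(x,H_{\S^2})\,\Ga \;+\; R_+(x,H_{\S^2})\,\Ga, \qquad
  R_\pm(x,H_{\S^2})=\bigl(x+\kappa_\pm^{-1}\ln|H_{\S^2}|\bigr)\,j_\pm^2\Bigl(\tfrac{x+\kappa_\pm^{-1}\ln|H_{\S^2}|}{S}\Bigr),
$$
two logarithmically shifted position operators glued at the event and cosmological horizons by cutoffs $j_\pm$ and a large parameter $S$. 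With this $A$ the commutator correction becomes $x\,a(x)$ times a \emph{bounded} operator and is genuinely compact, the Mourre estimate holds near every energy (no threshold at $0$), and your subsequent steps go through essentially as you describe: one loses the exact identity $\l A\r=\x$, but the paper's Lemma on $|R_\pm(x,\mu_{kl})|\leq C\x$ shows $\x^{-s}\l A\r^{s}$ is bounded, which is all that is needed to convert the LAP weight $\l A\r^{-s}$ into the $\x^{-s}$-smoothness used in the factorization $H-H_0=K^*K_0$ with $K_0=e^{\frac{c}{2}|x|}(J^{-1}-I_2)H_0$ and $K=e^{-\frac{c}{2}|x|}$.
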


The results of this Theorem are quite similar to the results obtained in \cite{HaN, Da2} in the setting of Kerr and Kerr-Newman black holes ($\Lambda = 0$ in our model). In particular, the absence of pure point spectrum for $H$ is a consequence of the separation of variables and the integrability  of the potentials $a(x)$ and $c(x,k)$ at both horizons. This has been proved rigorously in this setting in \cite{BC2}. Also the absence of singular continuous spectrum for $H$ and the construction of wave operators for massless Dirac fields near the event horizon of a Kerr or Kerr-Newman black holes given in \cite{HaN, Da2} can be used almost without changes in the case of Kerr-Newman-de-Sitter black holes at both the event and the cosmological horizons. Therefore, in what follows, we shall only sketch the essential steps of the proof and refer to \cite{HaN, Da2} for some of the technical details.

Our strategy is first to prove limiting absorption principles (LAP) for the Hamiltonians $H_0$ and $H$ by means of a Mourre theory. Second we prove the existence and asymptotic completeness of the wave operators corresponding to the pair of Hamiltonians $(H, H_0)$ using the theory of $H$-smooth operators due to Kato and exposed for instance in \cite{RS}.

\subsection{Abstract Mourre theory}

The principle of Mourre theory for a selfadjoint operator $H$ on a Hilbert space $\H$ is to find a selfadjoint operator $A$ on $\H$ so that the pair $(H,A)$ satisfies the following assumptions (see \cite{M}). Let $I \subset \R$ an open interval.
\begin{description}
  \item [(M1)] $e^{-itA} D(H) \subset D(H)$.
  \item [(M2)] $i[H,A]$ defined as a quadratic form on $D(H) \cap
  D(A)$ extends to an element of $\B(D(H),\H)$.
  \item [(M3)] $[A,[A,H]]$ well defined as a quadratic form on
  $D(H) \cap D(A)$ by \textbf{(ii)}, extends to an element of
  $\B(D(H),D(H)^{*})$.
  \item [(M4)] There exists a strictly positive constant $\e$  and a compact operator $K$ such that
  \begin{equation}\label{MourreEstimate}
    \mathbf{1}_I(H) i[H,A] \mathbf{1}_I(H) \geq \e \mathbf{1}_I(H) + \mathbf{1}_I(H) K \mathbf{1}_I(H).
  \end{equation}
\end{description}

The fundamental assumption here is the Mourre estimate (M4). Its meaning is that we must find an observable $A$ which essentially increases along the evolution $e^{-itH}$. The other conditions are more technical and turn out to be difficult to check directly in the case where $A$ and $H$ are unbounded selfadjoint operators having no explicitely known domains. We give below some other criteria to verify them. If the pair $(H,A)$ satisfy these assumptions then we say that $A$ is a \emph{conjugate operator} for $H$ on $I$. The existence of a conjugate operator provides important informations on the spectrum of $H$. Precisely, we have (see for example \cite{ABG1, M})

\begin{theorem} \label{MO}
  Let $H,A$ two selfadjoint operators on $\H$. Assume that $A$ is a conjugate operator for $H$ on the interval $I$. Then $H$ has no singular continuous spectrum in $I$. Furthermore, the number of eigenvalues of $H$ in $I$ is finite (counting multiplicity).
\end{theorem}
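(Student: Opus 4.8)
The plan is to follow the classical positive-commutator method of Mourre, in the refined $C^1(A)$ form of Amrein--Boutet de Monvel--Georgescu \cite{ABG1} (see also \cite{M}). The heart of the argument is to upgrade the Mourre estimate (M4) into a \emph{limiting absorption principle} (LAP) for $H$ on any compact subinterval of $I$ that avoids the eigenvalues of $H$, and then to read off the two spectral conclusions. First I would dispose of the eigenvalues: the Virial theorem — which under (M1)--(M3) forces any eigenvector $\psi$ of $H$ with eigenvalue in $I$ to satisfy $\langle \psi, i[H,A]\psi\rangle = 0$ — combined with (\ref{MourreEstimate}) shows that such eigenvalues are isolated and of finite multiplicity. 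Moreover, if there were infinitely many of them in $I$ (counting multiplicity), choosing an orthonormal sequence $\psi_n$ of associated eigenvectors gives $\psi_n \rightharpoonup 0$, so $\langle \psi_n, K\psi_n\rangle \to 0$ by compactness of $K$, and since $\mathbf{1}_I(H)\psi_n = \psi_n$ the Virial identity together with (M4) yields $0 \geq \e - o(1)$, a contradiction. Hence the eigenvalues of $H$ in $I$ form a finite set, and removing a small neighbourhood of it one obtains on a compact interval $J\subset I$ a \emph{strict} Mourre estimate $\mathbf{1}_J(H)\, i[H,A]\, \mathbf{1}_J(H) \geq \e'\, \mathbf{1}_J(H)$ with no compact remainder.

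Next comes the LAP proper. Fix $s>1/2$ and, for $z=\lambda+i\mu$ with $\lambda\in J$ and $0<\mu\le 1$, consider the regularized resolvents $F_\delta(z) = \langle A\rangle^{-s}\,\bigl(H - z - i\delta\, i[H,A]\bigr)^{-1}\,\langle A\rangle^{-s}$ for small $\delta>0$. Differentiating in $\delta$ and inserting the strict Mourre estimate localized in energy by functional calculus — using (M2)--(M3) to control the commutators of $\mathbf{1}_J(H)$ with $A$ — yields a differential inequality of the schematic form $\|\partial_\delta F_\delta(z)\| \le C\bigl(1 + \|F_\delta(z)\|\bigr)^2$, with $C$ independent of $z$ and $\delta$. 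Integrating this Gronwall-type bound from $\delta$ of order one down to $\delta=0$ produces a uniform estimate $\sup_{\lambda\in J,\,0<\mu\le1} \|\langle A\rangle^{-s}(H-\lambda-i\mu)^{-1}\langle A\rangle^{-s}\| < \infty$, and a companion equicontinuity estimate shows that the boundary values $(H-\lambda-i0)^{-1}$ exist, in the norm of bounded operators after conjugation by $\langle A\rangle^{-s}$, and depend (Hölder-)continuously on $\lambda\in J$.

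The LAP then gives both conclusions. Since $\mu\mapsto \langle u,(H-\lambda-i\mu)^{-1}u\rangle$ is uniformly bounded for $u$ in the dense set $\langle A\rangle^{-s}\H$ and $\lambda\in J$, the Stieltjes inversion formula shows the spectral measure $\langle u,\mathbf{1}_{(\cdot)}(H)u\rangle$ is absolutely continuous on $J$; hence $\mathbf{1}_J(H)\H$ carries no singular continuous spectrum, and since $J$ is an arbitrary compact subinterval of $I$ missing the finitely many eigenvalues, $\sigma_{sc}(H)\cap I = \emptyset$. Together with the eigenvalue count established in the first step, this is exactly the statement of the theorem.

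The main obstacle is not the abstract scheme, which is standard, but the \emph{regularization needed to make the commutators rigorous} when, as in our applications, $A$ and $H$ are unbounded with no jointly known core: one must justify the Virial identity and the differential inequality for $F_\delta$ through a double approximation (cutting $H$ off by bounded functions, replacing $A$ by $e^{itA}$-regularizations or bounded functions of $A$), and verify that (M1)--(M3) are stable under these manipulations. In the concrete application to the Dirac Hamiltonians $H_0$ and $H$ this is precisely the point where the constructions of \cite{HaN, Da2} are used with only cosmetic changes, the sole new feature being that the conjugate operator $A$ must be designed so that the commutator $i[H,A]$ is positive near \emph{both} asymptotic regions $\{x=\pm\infty\}$ simultaneously.
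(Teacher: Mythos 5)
Your sketch is the classical positive-commutator argument of Mourre in its $C^1(A)$ refinement (Virial theorem plus compactness of $K$ for the finiteness of eigenvalues, differential inequality for the regularized resolvent yielding the LAP, Stieltjes inversion for the absence of singular continuous spectrum), which is exactly the route of the references \cite{ABG1, M} that the paper itself invokes for this statement without reproducing the proof. The proposal is correct and takes essentially the same approach as the paper.
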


Furthermore, a Limiting Absorption Principle holds \cite{ABG1, RS, Y}. Precisely, assume that $A$ is a conjugate operator for $H$ on an interval $I$. By shrinking the length of the interval $I$ and using the compactness of $K$ in the Mourre estimate (\ref{MourreEstimate}), there exists a positive constant $\delta$ such that for small enough $I$
\begin{equation}\label{MO-Est}
  \mathbf{1}_I(H) i[H,A] \mathbf{1}_I(H) \geq \delta \mathbf{1}_I(H).
\end{equation}

\begin{theorem}[Limiting Absorption Principle] \label{LAP}
  Let $H,A$ two selfadjoint operators on $\H$. Assume that $A$ is a conjugate operator for $H$ on the interval $I$ and that the Mourre estimate (\ref{MO-Est}) holds. Then for all closed interval $J \subset I$ and for $s > \frac{1}{2}$, there exists a constant $C$ such that
  \begin{equation} \label{LAP-MO}
    \sup_{\lambda \in J} \| \l A \r^{-s} (H - \lambda)^{-1} \l A \r^{-s} \| \leq C < \infty.
  \end{equation}
\end{theorem}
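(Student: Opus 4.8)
The statement is the classical Mourre limiting absorption principle, and the plan is to reproduce the now-standard positive-commutator argument, in the dissipative regularization form used by Perry, Sigal and Simon and by Jensen, Mourre and Perry, relying only on the properties packaged in the notion of conjugate operator: (M1)--(M4) together with the strict Mourre estimate (\ref{MO-Est}), which is assumed to hold throughout $I$. Here $(H-\lambda)^{-1}$ in (\ref{LAP-MO}) is read as $(H-\lambda\mp i\mu)^{-1}$ with $\mu>0$, the point being a bound uniform as $\mu\to 0^+$ and in $\mathrm{Re}(\lambda)$ running over $J$; the existence of the boundary values $(H-\lambda\mp i0)^{-1}$ in operator norm between the weighted spaces $D(\l A\r^s)$ then follows by a routine density argument. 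An abstract result of this kind may of course simply be quoted from \cite{M, ABG1} (see also \cite{RS, Y}); we only record the shape of the argument here, since the model-dependent content of Theorem \ref{WO-H-H0} lies in the construction of the conjugate operators and the verification of (M1)--(M4) for the concrete pairs $(H_0,A_0)$ and $(H,A)$, which proceeds as in \cite{HaN, Da2}.

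The first step is a localization and a covering reduction. Fix $\lambda_0\in J$, a compact interval $\Delta$ with $\lambda_0\in\Delta\subset I$, and a function $\varphi\in C_0^\infty(\R)$ with $0\le\varphi\le 1$, $\varphi\equiv 1$ on a neighbourhood of $\lambda_0$, and $\mathrm{supp}\,\varphi\subset\Delta$. Since $\varphi(H)$ has range in $D(H)$ and $\varphi(H)=\varphi(H)\mathbf{1}_I(H)$, the operator $B:=\varphi(H)\,i[H,A]\,\varphi(H)$ is bounded (by (M2) and the compact support of $\varphi$) and satisfies $B\ge\delta\,\varphi(H)^2$ by (\ref{MO-Est}); note that (\ref{MO-Est}) also rules out eigenvalues of $H$ in $I$ via the Virial theorem, so no eigenvalue correction is required. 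Writing $(H-z)^{-1}=\varphi(H)(H-z)^{-1}\varphi(H)+\varphi(H)(H-z)^{-1}(1-\varphi(H))+(1-\varphi(H))(H-z)^{-1}\varphi(H)+(1-\varphi(H))(H-z)^{-1}(1-\varphi(H))$, the three terms carrying a factor $1-\varphi(H)$ are uniformly bounded for $\mathrm{Re}(z)$ in a small neighbourhood of $\lambda_0$ and all $\mathrm{Im}(z)\ne 0$, because on that neighbourhood $1-\varphi$ vanishes and hence the spectral support of the relevant factor is bounded away from $z$. Thus it is enough to bound $\l A\r^{-s}\,\varphi(H)(H-z)^{-1}\varphi(H)\,\l A\r^{-s}$ uniformly for $\mathrm{Re}(z)$ near $\lambda_0$, and then to cover the compact set $J$ by finitely many such neighbourhoods.

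For this localized piece I would run Mourre's differential inequality. For $\epsilon\in(0,1]$ and $z=\lambda+i\mu$, $\mu>0$, the operator $H-i\epsilon B$ is maximal dissipative, so $G_\epsilon(z):=\varphi(H)\bigl(H-z-i\epsilon B\bigr)^{-1}\varphi(H)$ is well defined and bounded, and one works with $F_\epsilon(z):=\l A\r^{-s}G_\epsilon(z)\l A\r^{-s}$. Differentiating $G_\epsilon$ in $\epsilon$ and using the lower bound $B\ge\delta\,\varphi(H)^2$ yields an a priori bound on $\|F_\epsilon(z)\|$ uniform in $\epsilon\in(0,1]$ and in $\mu$; differentiating $F_\epsilon$ in $\mu$, commuting the resolvents past $\l A\r^{-s}$ and past $A$ by means of $[A,(H-w)^{-1}]=(H-w)^{-1}[H,A](H-w)^{-1}$ (understood as a quadratic-form identity, legitimate thanks to (M1)--(M3)) and invoking (\ref{MO-Est}) again to absorb the leading term, one obtains a closed differential inequality for $\mu\mapsto\|F_\epsilon(\lambda+i\mu)\|$ whose right-hand side has an $L^1$-in-$\mu$ coefficient. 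A Gronwall-type argument then produces a bound uniform down to $\mu=0$; letting first $\mu\to 0^+$ and then $\epsilon\to 0^+$ transfers it to $\l A\r^{-s}\varphi(H)(H-\lambda\mp i0)^{-1}\varphi(H)\l A\r^{-s}$, and combined with the previous step this gives (\ref{LAP-MO}).

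The main obstacle is precisely this last step: one must make the commutator expansions rigorous so that every error term produced in moving $\l A\r^{-s}$ through resolvents is again of the controllable form $\l A\r^{-s}(\cdots)\l A\r^{-s}$ --- which is exactly where the threshold $s>1/2$, the $H$-boundedness of $i[H,A]$ from (M2), and, decisively, the second commutator hypothesis (M3) are used --- and one must verify that the regularized resolvents $G_\epsilon(z)$ obey the same calculus uniformly as $\mu,\epsilon\to 0$. These points are routine but delicate; as observed above they may be circumvented by citation, the genuinely new work for our problem being the construction of the conjugate operators in the Kerr--Newman--de Sitter setting, carried out as in \cite{HaN, Da2}.
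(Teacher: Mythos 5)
Your primary route --- quoting the abstract Mourre limiting absorption principle from \cite{M, ABG1} (see also \cite{RS, Y}) and reserving the real work for the construction of the conjugate operator in the concrete model --- is exactly what the paper does: Theorem \ref{LAP} is stated there without proof, as a citation. One small caveat on your appended sketch: in the classical Perry--Sigal--Simon/Mourre argument the differential inequality is taken in the regularization parameter $\epsilon$ (integrating $\frac{d}{d\epsilon}F_\epsilon$ from $\epsilon_1$ down to $0$ to get a bound uniform in $\mu>0$, then letting $\epsilon\to 0$ at fixed $\mu$ and finally $\mu\to 0^+$), not in $\mu$; a differential inequality in $\mu$ would involve $(H-z)^{-2}$ and does not close. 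Since you explicitly defer to the literature for the rigorous version, this does not affect the correctness of the proof as proposed.
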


Before we find a conjugate operators for the Hamiltonians $H_0$ and $H$, let us give some precisions concerning the conditions (M1), (M2) and (M3) of Mourre's theory. One of the difficulties in Mourre theory consists in working with commutators $i[H,A]$ (see (M2)) between unbounded selfadjoint operators. We have to be careful to define correctly such quantities since $D(H)$ and $D(A)$ can be unknown or have an intersection which is not even dense in $\H$. Similarly, the assumption (M1) is not easy to prove since the action of $e^{isA}$ may also be unknown. Therefore, it is useful to have different criterion.

Let us first give a definition. For a selfadjoint operator $A$, we say that another selfadjoint operator $H$ belongs to $C^{k}(A), \ k \in \N,$ if and only if $\exists z \in \C \setminus \sigma(H), s \longrightarrow e^{isA}(H-z)^{-1}e^{-isA}$ belongs to $C^{k}(\R_{s};\mathcal{B}(\H))$ for the strong topology of $\B(\H)$. It has been shown in \cite{ABG1} that we can replace the assumptions (M1) and (M2) by the unique assumption $H \in C^1(A)$ without changing the conclusions of Theorem \ref{MO}. Roughly speaking, this condition allows that the following equality
\begin{displaymath}
  [A,(z-H)^{-1}] = (z-H)^{-1} [A,H] (z-H)^{-1},
\end{displaymath}
makes sense on $\H$. Also from \cite{ABG1}, $H \in C^1(A)$ is equivalent to
\begin{eqnarray*}
  \mathbf{(ABG1)} &  \exists z \in \C \setminus \sigma(H), \ (H-z)^{-1}D(A) \subset D(A), \ (H-\overline{z})^{-1}D(A) \subset D(A),
  \\
  \mathbf{(ABG2)} & |(Hu,Au)-(Au,Hu)| \leq C(\|Hu\|^{2}+\|u\|^{2}),
  \quad \forall u \in D(H) \cap D(A),
\end{eqnarray*}
which is close to both conditions (M1) and (M2) and are easier to check.

At last, we mention that the condition $H \in C^1(A)$ together with the condition (M2) imply the condition (M1) thanks to a result due to Gérard and Georgescu \cite{GG}
\begin{lemma}[Georgescu, Gérard] \label{GeorgescuGerard}
  Let $H$ and $A$ two self-adjoint operators such that $H \in  C^{1}(A)$ and $i[H,A] \in \B(D(H),\H)$ then $e^{isA} D(H) \subset D(H)$ for all $s \in \R$.
\end{lemma}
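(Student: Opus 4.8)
The plan is to establish the domain invariance $e^{isA}D(H)\subseteq D(H)$ by a regularization-plus-Gronwall argument, reducing the assertion to a bound that is \emph{uniform in the regularization parameter} and then upgrading it through the closedness of $H$. First I would introduce the regularized resolvents $R_\nu=i\nu(i\nu-H)^{-1}$ for real $\nu$ with $|\nu|$ large. These satisfy $\|R_\nu\|\le 1$, converge strongly to the identity, map $\H$ into $D(H)$ with $HR_\nu$ bounded, commute with $H$, and—crucially—leave $D(A)$ invariant, since $H\in C^1(A)$ gives $(i\nu-H)^{-1}D(A)\subseteq D(A)$ by $\mathbf{(ABG1)}$ and $HR_\nu=-I\cdot i\nu+i\nu R_\nu$ is a bounded function of $H$ built from such resolvents. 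The whole statement then reduces to the following claim: for each $u\in D(H)$, the family $HR_\nu e^{isA}u$ stays bounded in $\H$ uniformly in $\nu$, locally uniformly in $s$. Indeed, once this is known, $R_\nu e^{isA}u\to e^{isA}u$ in $\H$ while $HR_\nu e^{isA}u$ is bounded; extracting a weakly convergent subsequence and using that the graph of the closed operator $H$ is weakly closed forces $e^{isA}u\in D(H)$.

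The heart of the argument is a differential inequality. I would work first with $u\in D(H)\cap D(A)$ and set $\psi_\nu(s)=HR_\nu e^{isA}u$, which takes values in $D(A)$ and is $C^1$ in $s$ with $\psi_\nu'(s)=iHR_\nu A e^{isA}u$. Commuting $A$ through the bounded operator $HR_\nu$ via the exact identity
\begin{equation}
  [HR_\nu,A]=R_\nu\,[H,A]\,R_\nu,
\end{equation}
in which $[H,A]$ denotes the bounded extension in $\B(D(H),\H)$ supplied by hypothesis, and passing to $\chi_\nu(s)=e^{-isA}\psi_\nu(s)$ cancels the unbounded term $iA\psi_\nu(s)$ and leaves $\chi_\nu'(s)=i\,e^{-isA}R_\nu[H,A]R_\nu e^{isA}u$. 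I would then bound the right-hand side by estimating the innermost vector $R_\nu e^{isA}u$ in the \emph{graph norm} of $H$, which reintroduces exactly $\|\psi_\nu(s)\|=\|\chi_\nu(s)\|$, to obtain $\|\chi_\nu'(s)\|\le C(\|u\|+\|\chi_\nu(s)\|)$ with $C$ independent of $\nu$. Since $\|\chi_\nu(0)\|=\|HR_\nu u\|=\|R_\nu Hu\|\le\|Hu\|$, Gronwall's lemma yields a bound $\|HR_\nu e^{isA}u\|\le K(s)\,\|u\|_{D(H)}$ that is uniform in $\nu$ on compact $s$-intervals, with $K(s)$ continuous.

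To pass from the core $D(H)\cap D(A)$ to all of $D(H)$, I would note that $D(H)\cap D(A)$ is a core for $H$: given $u\in D(H)$, write $f=(H+i)u$, approximate $f$ by $f_n\in D(A)$ (dense), and set $u_n=(H+i)^{-1}f_n$, which lies in $D(H)\cap D(A)$ by $\mathbf{(ABG1)}$ and converges to $u$ in graph norm. The uniform Gronwall bound therefore upgrades to $\|HR_\nu e^{isA}(H+i)^{-1}\|\le K(s)\,C'$ on a dense set, hence uniformly; consequently $\|HR_\nu e^{isA}u\|\le K(s)\,C'\,\|(H+i)u\|$ for every $u\in D(H)$, and the closedness step above concludes the proof for all $u\in D(H)$ and all $s\in\R$.

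The main obstacle is the uniformity in $\nu$. A naive operator-norm bound on $[HR_\nu,A]$ grows like $|\nu|$ (because $\|HR_\nu\|\sim|\nu|$), which would destroy the Gronwall scheme; the decisive point is the precise cancellation recorded in the identity above combined with the trick of estimating the inner factor $R_\nu e^{isA}u$ in the graph norm, thereby feeding the Gronwall unknown back into the inequality rather than into a divergent constant. The remaining technical care—justifying the $s$-differentiations, the stability of $D(A)$ under $R_\nu$ and $HR_\nu$, and the commutator identity as an equality of \emph{bounded} operators—rests entirely on the $C^1(A)$ framework, i.e. on $\mathbf{(ABG1)}$ together with the bounded form extension of $i[H,A]$ guaranteed by $\mathbf{(ABG2)}$ and the hypothesis $i[H,A]\in\B(D(H),\H)$. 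Everything else is soft functional analysis.
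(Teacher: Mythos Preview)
The paper does not supply its own proof of this lemma; it merely quotes the result and attributes it to Georgescu and G\'erard \cite{GG}. Your argument is a correct and essentially standard reconstruction of that proof: the regularization by $R_\nu=i\nu(i\nu-H)^{-1}$, the commutator identity $[HR_\nu,A]=R_\nu[H,A]R_\nu$, the Gronwall bound on $\chi_\nu(s)=e^{-isA}HR_\nu e^{isA}u$, and the closure argument via weak compactness are exactly the ingredients one expects, and the care you take with the density step (showing $D(H)\cap D(A)$ is a core for $H$ via $\mathbf{(ABG1)}$) and with the uniformity in $\nu$ is appropriate.
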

%
%


\subsection{Conjugate operators for $H_0$ and $H$} \label{ConjugateOperator}

How can we choose a conjugate operator? For Schr\"odinger or Dirac operators in flat spacetime, the usual generator of dilations is a good choice although not at all the only one (see \cite{ABG1}). In the particular case of Dirac operators for instance, there exist other choices (see \cite{BMP}, \cite{Da1}, \cite{Da2}, \cite{GM}) which could be used. For the Dirac Hamiltonian $H_0$ appearing in our KN-dS model, precisely
$$
  H_0 = \Ga D_x + a(x) H_{\S^2} + c(x,D_\varphi),
$$
the situation is in some sense similar to that of Dirac operators in flat spacetime thanks to the possibility of decomposing the problem into the generalized spherical harmonics $Y_{kl} := Y_{kl}(0)$ (see Theorem \ref{Spectrum-A}). We are led to study a countable family of one-dimensional Dirac operators
$$
  H_0^{kl} = \Ga D_x + \mu_{kl} a(x) \Gb + c(x,k),
$$
parametrized by $\mu_{kl} = \mu_{kl}(0) \in \R^+$ with $k \in \frac{1}{2} + \Z$ and $l \in \N^*$. Here the operators $a(x) H_{\S^2}$ and $c(x,D_\varphi)$ can be thus treated as mere potentials that are exponentially decreasing on $\R$. The existence of locally conjugate operators for each one-dimensional Dirac operators $H_0^{kl}$ is easy to establish. For instance, the conjugate operator
$$
  A = \Ga x,
$$
already used in a similar situation in \cite{Da1} is particularly adapted to this type of Dirac operators. On the other hand, although we can consider $H$ as a short-range perturbation of $H_0$ of order $1$, recall that
$$
  H = J^{-1} H_0, \quad J^{-1} = \frac{1}{1 - (a(x) b(\theta))^2} \left( I_2 - a(x) b(\theta) \Gc \right),
$$
and by (\ref{Pot-J})
$$
  \sup_\theta \| J^{-1} - I_2 \| = O(e^{-c|x|}), \quad x \to \pm \infty,
$$
this perturbation breaks the "symmetries" of $H_0$ and thus, we cannot use a decomposition into the generalized spherical harmonics $Y_{kl}$. Instead the full operator $H_{\S^2}$ must be conserved in the course of the calculations, or equivalently, we must be able to control the one-dimensional Mourre estimates relatively to the indices $k,l$.

To be more specific, recall that the evolution described by $H$ can be understood as an evolution on a Riemanniann manifold (given here by $\Sigma = \R_x \times \S^2$) having two ends: the cosmological and event horizons $\{x = \pm \infty\}$. At both horizons, the metric is exponentially large or, in other words, the geometry of the two ends is asymptotically hyperbolic. The choice of a conjugate operator turns out to be more complicated in this kind of situation but is by now well studied. For instance, analogous situations have been treated before, first by Froese and Hislop \cite{FH} in the case of a second-order-elliptic Hamiltonian, then by De Bièvre, Hislop and Sigal \cite{DHS} for the wave equation, more recently, by H\"afner and
Nicolas \cite{HaN} for a massless Dirac equation in a Kerr background and the first author \cite{Da2} for a massive Dirac equation in Kerr-Newman black hole. For the wave equation, we also mention more recent results due to Bouclet \cite{Bou} and Isozaki, Kurylev \cite{IK} who also study the corresponding inverse scattering problem. We follow here the presentation given in \cite{HaN, Da2} in the case of Dirac operators.

Let us study a toy model of the situation at the black hole event horizon and consider the operator
$$
  H_0 = \Ga D_x + e^{\kappa_- x} H_{\S^2} + \Omega_-(D_\varphi),
$$
acting on $L^2(\R_- \times \S^2; \C^2)$. In this toy model, we are only interested indeed in what happens in a neighbourhood of $x = -\infty$. Hence we simply replaced the potentials $a(x)$ and $c(x,D_\varphi)$ by their asymptotics at $x = - \infty$ in the expression og $H_0$.
Let us try to use the operator $A = \Ga x$ as a conjugate operator for $H_0$. This very simple operator is indeed well adapted to the case of massless Dirac operator. Since $H_{\S^2}$ anti-commutes with $\Ga$, we get
\begin{displaymath}
  i[H_0,A] = 1 + 2 x e^{\kappa_- x} H_{\S^2} \Ga.
\end{displaymath}
The first term is a positive constant (what we want) but the second term has no sign and is not controlled by $H_0$. The origin of the problem comes from the fact that the term $x e^{\kappa_- x}$ does not decay faster than $e^{\kappa_- x}$ when $x \to -\infty$. Note that the same problem happens if we use the generator of dilations instead of $A$. Therefore, the Mourre estimate has no chance to hold if we proceed in this manner. Instead, we introduce the unitary transformation
\begin{displaymath}
  U = e^{- \frac{i}{\kappa_-} \ln |H_{\S^2}| D_x},
\end{displaymath}
and we try to find a conjugate operator for
\begin{displaymath}
  \hat{H_0} = U^* H_0 U = \Ga D_x + e^{\kappa_- x} \frac{H_{\S^2}}{|H_{\S^2}|}.
\end{displaymath}
This is equivalent to the initial problem since $U$ is unitary. Now, if we compute the commutator between $\hat{H}_0$ and $A$, we obtain
\begin{displaymath}
  i[\hat{H}_0, A] = 1 + 2 x e^{\kappa_- x} \frac{H_{\S^2}}{|H_{\S^2}|} \Ga.
\end{displaymath}
Since $\frac{H_{\S^2}}{|H_{\S^2}|}$ is bounded, the term $\chi(\hat{H}_0) x e^{\kappa_- x} \frac{H_{\S^2}}{|H_{\S^2}|} \Ga \chi(\hat{H}_0)$ is now compact for any $\chi \in C_0^\infty(\R)$ by standard compactness criterion. Therefore, we can use
$$
  U A U^* = \Ga (x + \kappa_-^{-1} \ln |H_{\S^2}|),
$$
as conjugate operator for the toy Hamiltonian $H_0$ on  $L^2(\R^- \times \S^2; \C^2)$. The conjugate operator for the true Hamiltonians $H_0$ and $H$ we propose below, closely follows this procedure. Roughly speaking, we glue together two locally conjugate operators of the above type, each one corresponding to a different end: the event and cosmological horizons. The technical difficulties we shall avoid here lie of course in this "gluing together" operation. \\

In order to separate the problems at the event horizon and at the cosmological horizon, we define two cut-off functions $j_\pm \in C^\infty(\R)$ satisfying
\begin{eqnarray*}
  j_-(x) = 1, \ \textrm{for} \ x \leq \half, \quad j_-(x) = 0, \ \textrm{for} \ x \geq 1, \\
  j_+(x) = 1, \ \textrm{for} \ x \geq -\frac{1}{2}, \quad j_+(x) = 0, \ \textrm{for} \ x \leq -1.
\end{eqnarray*}
Now, for $S \geq 1$, we set
$$
  R_-(x,H_{\S^2}) = (x + \kappa_-^{-1} \ln |H_{\S^2}|) \, j_-^2 \Big( \frac{x + \kappa_-^{-1} \ln |H_{\S^2}|}{S} \Big),
$$
and
$$
  R_+(x,H_{\S^2}) = (x + \kappa_+^{-1} \ln |H_{\S^2}|) \, j_+^2 \Big( \frac{x + \kappa_+^{-1} \ln |H_{\S^2}|}{S} \Big).
$$
The addition of a parameter $S$ is part of the technical difficulties mentioned above. Following our previous discussion, we define the conjugate operators $A$ by

$$
  A = R_-(x,H_{\S^2}) \Ga + R_+(x,H_{\S^2}) \Ga.
$$

Before showing the Mourre estimate for the couple of operators $(H_0, A)$, let us give some of the important properties of the conjugate operator $A$ proved in \cite{Da2}. Note first that on each generalized spherical harmonics $Y_{kl}$, $A$ reduces to the operator of multiplication by
$$
  A = R_-(x,\mu_{kl}) \Ga + R_+(x,\mu_{kl}) \Ga,
$$
on $L^2(\R; \C^2)$ where $\mu_{kl}$ denotes the positive eigenvalues of the angular operator $A_{\S^2}(0)$. Thanks to this, we can prove easily

\begin{lemma} \label{DA}
  For all $S \geq 1$ and $i =1,2$,
  \begin{eqnarray*}
    |R_\pm(x,\mu_{kl})| \leq C \l x \r, \ \textrm{uniformly in k,l}, \\
    |R_\pm^{(i)}(x,\mu_{kl})| \leq C, \ \textrm{uniformly in k,l}.
  \end{eqnarray*}
  As a consequence, the domains of the operators $R_\pm(x,H_{\S^2})$ and thus $A$ contain $D(\l x \r)$.

  Moreover, if $j_1, j_2 \in C^\infty(\R)$ satisfy $j_1(x) = 1$ for $x \leq 1$ and $j_1(x) = 0$ for $x \geq \frac{3}{2}$, and $j_2(x) = 1$ for $x \geq -1$ and $j_2(x) = 0$ for $x \leq -\frac{3}{2}$ then
  \begin{eqnarray*}
    R^{(i)}_-(x, H_{\S^2}) & = & R^{(i)}_-(x,H_{\S^2}) j_1^2(\frac{x}{S}), \ i= 0,1,2, \\
    R^{(i)}_+(x, H_{\S^2}) & = & R^{(i)}_+(x,H_{\S^2}) j_2^2(\frac{x}{S}), \ i= 0,1,2, \\
  \end{eqnarray*}
\end{lemma}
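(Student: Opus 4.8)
\textbf{Proof plan for Lemma \ref{DA}.}

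The statement is a purely computational lemma about the multiplication operators $R_\pm(x,\mu_{kl})$, so the plan is to verify each assertion by direct estimates on the smooth functions appearing in their definitions, taking care that all constants are uniform in $(k,l)$. First I would record the key structural feature: on the generalized spherical harmonic $Y_{kl}$ the angular operator $|H_{\S^2}|$ acts as multiplication by $\mu_{kl} > 0$, so that $R_\pm(x,\mu_{kl}) = (x + \kappa_\pm^{-1}\ln\mu_{kl})\, j_\pm^2\!\big(\tfrac{x + \kappa_\pm^{-1}\ln\mu_{kl}}{S}\big)$. Setting $y_\pm = x + \kappa_\pm^{-1}\ln\mu_{kl}$, this is simply $\phi_\pm(y_\pm)$ where $\phi_\pm(y) = y\, j_\pm^2(y/S)$. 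The crucial point is that $\phi_\pm$ is a \emph{fixed} function (for fixed $S$), independent of $k,l$: all the $(k,l)$-dependence has been absorbed into the translation $x\mapsto y_\pm$. Hence any bound of the form $|\phi_\pm(y)| \le C\langle y\rangle$ or $|\phi_\pm'(y)|,|\phi_\pm''(y)| \le C$ is automatically uniform in $(k,l)$.

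For the first set of estimates: since $j_\pm$ is supported so that $j_\pm(y/S) = 0$ once $|y|/S$ exceeds a fixed constant (namely $y/S \le 1$ for $j_-$ and $y/S \ge -1$ for $j_+$), the function $\phi_\pm(y) = y\,j_\pm^2(y/S)$ is bounded; more precisely $|\phi_\pm(y)| \le C_S\langle y\rangle$ trivially (indeed $\phi_\pm$ is even bounded, but the $\langle y\rangle$ bound is all that is claimed, and translation-invariance of $\langle\cdot\rangle$ up to constants gives $\langle y_\pm\rangle \le C\langle x\rangle$ only if $|\kappa_\pm^{-1}\ln\mu_{kl}|$ is bounded — here one must instead simply note $|R_\pm(x,\mu_{kl})| \le C$ uniformly and then $C \le C\langle x\rangle$). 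The derivative estimates follow from $\phi_\pm'(y) = j_\pm^2(y/S) + \tfrac{2y}{S} j_\pm(y/S) j_\pm'(y/S)$ and a similar formula for $\phi_\pm''$; each term is a product of bounded functions (again using the compact support in $y/S$ to control the factors $y/S$), so $|\phi_\pm'|, |\phi_\pm''| \le C_S$, uniformly in $(k,l)$ since $\phi_\pm$ does not depend on them. Because $A = \sum_\pm R_\pm(x,H_{\S^2})\Ga$ has all its matrix entries bounded by $C\langle x\rangle$, the operator is bounded from $D(\langle x\rangle)$ to $\H$, i.e. $D(\langle x\rangle) \subset D(A)$, and similarly for each $R_\pm(x,H_{\S^2})$.

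For the factorization identities, the point is that the support of $j_\pm(\cdot/S)$ (and of its derivatives, which are supported inside the support of $j_\pm$ itself) in the variable $y_\pm = x + \kappa_\pm^{-1}\ln\mu_{kl}$ is contained in a half-line of the form $\{y_\pm \le S\}$ for the $-$ case and $\{y_\pm \ge -S\}$ for the $+$ case. I would then argue that, since $\mu_{kl} \ge \mu_{k1}(0) > 0$ for all $(k,l)$ — and in fact by Proposition \ref{Growth-mukl} one has a uniform lower bound, so that $\kappa_-^{-1}\ln\mu_{kl}$ is bounded below and $\kappa_+^{-1}\ln\mu_{kl}$ is bounded above (recall $\kappa_- > 0$, $\kappa_+ < 0$) — the region $\{y_- \le S\}$ in $x$ is contained in $\{x \le \tfrac{3}{2}S\}$ for $S$ large, hence in the region where $j_1(x/S) = 1$; symmetrically $\{y_+ \ge -S\}$ lies in $\{x \ge -\tfrac{3}{2}S\}$ where $j_2(x/S) = 1$. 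Multiplying $R_\pm^{(i)}(x,H_{\S^2})$ by $j_1^2(x/S)$ (resp. $j_2^2(x/S)$) therefore changes nothing on the support and is identically zero off it, giving the claimed identities.

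The main obstacle — and the only place requiring genuine care rather than bookkeeping — is ensuring the uniformity in $(k,l)$ of the constants and of the support localization when $\mu_{kl}$ grows with $l$. The resolution is exactly the observation that the growth of $\mu_{kl}$ only \emph{shifts} the window in which $j_\pm(\cdot/S)$ is nonzero (and, with the one-sided bounds $\kappa_-^{-1}\ln\mu_{kl} \ge$ const, $\kappa_+^{-1}\ln\mu_{kl} \le$ const coming from Proposition \ref{Growth-mukl}, shifts it in a controlled direction), never widens it; combined with the fact that $\phi_\pm$ itself is $(k,l)$-independent, all estimates are uniform. I expect this lemma to be straightforward modulo this one point, which is handled precisely as in \cite{Da2}.
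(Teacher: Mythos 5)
Your overall reduction --- writing $R_\pm(x,\mu_{kl}) = \phi_\pm(y_\pm)$ with $\phi_\pm(y) = y\,j_\pm^2(y/S)$ a fixed function and $y_\pm = x + \kappa_\pm^{-1}\ln\mu_{kl}$, so that all the $(k,l)$-dependence sits in a translation --- is the right one (the paper itself only cites \cite{Da2}, Lemma 4.4, where essentially this computation is carried out). But there is a genuine error in the step you use for the first estimate: you assert that $j_\pm(\cdot/S)$ vanishes once $|y|/S$ exceeds a fixed constant, hence that $\phi_\pm$ is bounded and $|R_\pm(x,\mu_{kl})|\le C$ uniformly. This is false: $j_-$ equals $1$ on the whole half-line $y/S\le \tfrac12$, so $\phi_-(y)=y$ for $y\le S/2$ and $\phi_-$ is unbounded below; symmetrically $\phi_+$ is unbounded above. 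Since you had correctly just observed that $\langle y_\pm\rangle\le C\langle x\rangle$ fails uniformly in $(k,l)$ because $\ln\mu_{kl}$ is unbounded, your fallback ``simply note $|R_\pm(x,\mu_{kl})|\le C$'' leaves the first displayed inequality unproved.

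The estimate is nevertheless true, and the repair uses exactly the one-sidedness you invoke later for the support identities: since $\kappa_->0$ and $\mu_{kl}\ge 1$ (as for $\DS$, whose positive eigenvalues are $|k|-\tfrac12+l\ge 1$; for $H_{\S^2}$ one needs this lower bound, or at least $\mu_{kl}\ge c>0$ with a harmless adjustment of constants), the shift $\kappa_-^{-1}\ln\mu_{kl}$ is nonnegative. On the support $\{y_-\le S\}$ of $\phi_-$ one then has either $0\le y_-\le S$, or $y_-<0$, in which case $x\le y_-<0$ and $|y_-|\le|x|$; hence $|R_-(x,\mu_{kl})|\le\max(S,1)\,\langle x\rangle$ uniformly, and similarly for $R_+$ using $\kappa_+<0$. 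The derivative bounds are fine as you state them, because $j_\pm'$ (unlike $j_\pm$) genuinely has compact support, so the factors $y/S$ are controlled where they appear. For the factorization identities the same nonnegativity gives $\{y_-\le S\}\subset\{x\le S\}$, which is precisely the set where $j_1(x/S)=1$; note that your intermediate claim ``contained in $\{x\le\tfrac32 S\}$ for $S$ large, hence in the region where $j_1(x/S)=1$'' is not right, since $j_1(x/S)=1$ only for $x\le S$, and the lemma asserts the identities for every $S\ge1$, not just for $S$ large.
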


\begin{proof} The proof is almost the same as the one given in \cite{Da2}, Lemma 4.4. It suffices to replace the Dirac operator $\mathbb{D}_{\S^2}$ on $\S^2$ in \cite{Da2} by our angular Dirac operator $H_{\S^2}$ and also the angular momenta $k = l + \frac{1}{2}$ of $\mathbb{D}_{\S^2}$ in \cite{Da2} by $\mu_{kl}$.

%
%

\end{proof}

\begin{lemma} \label{C1AH0}
  $H_0 \in C^1(A)$. Moreover, the commutator $i[H_0,A]$ belongs to $\B(D(H_0), \H)$. Also, the double commutator $[i[H_0,A],A]$ extends to a bounded operator in $\B(D(H_0), \H)$. Consequently, assumptions (M1),(M2) and (M3) of Mourre theory are satisfied.
\end{lemma}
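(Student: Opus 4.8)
The plan is to verify the three technical hypotheses (M1), (M2), (M3) of Mourre theory for the pair $(H_0, A)$ by reducing, wherever possible, to the one-dimensional situation. The key observation is that $H_0$ commutes with the angular operator $H_{\S^2} = A_{\S^2}(0)$ and with $D_\varphi$, so after decomposing $\H$ onto the generalized spherical harmonics $Y_{kl} = Y_{kl}(0)$ from Theorem \ref{Spectrum-A}, both $H_0$ and $A$ become multiplication-type operators on $L^2(\R; \C^2)$, namely $H_0^{kl} = \Ga D_x + \mu_{kl} a(x) \Gb + c(x,k)$ and $A = (R_-(x,\mu_{kl}) + R_+(x,\mu_{kl})) \Ga$. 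The strategy is thus: first carry out the commutator computations on each $\H_{kl}(0)$, then check that the resulting bounds are \emph{uniform} in $(k,l)$ using Lemma \ref{DA}, so that they reassemble into genuine operator bounds in $\B(D(H_0), \H)$ on the full space.

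First I would establish $H_0 \in C^1(A)$ by verifying the criteria (ABG1) and (ABG2). For (ABG2), a direct computation on each harmonic gives, using $\Ga^2 = I_2$ and the anticommutation of $\Ga$ with $\Gb$,
$$
  i[H_0^{kl}, A] = R_-'(x,\mu_{kl}) + R_+'(x,\mu_{kl}) + 2\mu_{kl} a(x) \big( R_-(x,\mu_{kl}) + R_+(x,\mu_{kl}) \big) \Gb \Ga + (\textrm{term from } c),
$$
where one must be careful that the $c(x,k)$ part contributes a bounded term since $c'$ is exponentially decaying and $R_\pm$ grows only linearly. By Lemma \ref{DA}, $R_\pm' $ is bounded uniformly in $k,l$, and the crucial point is that $\mu_{kl} a(x) R_\pm(x,\mu_{kl})$ stays bounded uniformly in $(k,l)$: this is exactly the effect of the logarithmic shift $\kappa_\pm^{-1}\ln|H_{\S^2}|$ inside $R_\pm$, which converts $\mu_{kl} e^{\kappa_\pm x}$-type growth into a bounded quantity on the support of the cutoff. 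Hence $i[H_0,A]$ defines a bounded form relative to $H_0$, proving (ABG2), and (ABG1) follows from the resolvent identity together with the fact that $D(\l x\r) \subset D(A)$ is left invariant by suitable spectral cutoffs of $H_0$. This gives (M2); combined with Lemma \ref{GeorgescuGerard} (Georgescu--G\'erard), $H_0 \in C^1(A)$ and $i[H_0,A] \in \B(D(H_0),\H)$ together yield (M1), i.e. $e^{isA} D(H_0) \subset D(H_0)$.

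For (M3), I would compute the double commutator $[i[H_0,A],A]$, again fibre by fibre. Since $A$ acts by multiplication on each harmonic, $[i[H_0^{kl},A],A]$ only picks up terms where a derivative of $R_\pm$ is hit, or where the $\Ga D_x$ part acts once more; the result is a linear combination of $R_\pm', R_\pm''$ and products of these with $\mu_{kl} a(x) R_\pm(x,\mu_{kl})$ and with exponentially decaying potentials. By Lemma \ref{DA}, $R_\pm$ and its first two derivatives are controlled ($R_\pm$ linearly, $R_\pm', R_\pm''$ boundedly, all uniformly in $(k,l)$), and the products with $\mu_{kl} a(x)$ are again tamed by the logarithmic shift as above. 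So the double commutator extends to an element of $\B(D(H_0), \H) \subset \B(D(H_0), D(H_0)^*)$, which is (M3). The main obstacle, and the part requiring genuine care rather than routine bookkeeping, is precisely this uniformity in the angular momenta $(k,l)$: one must show that all the estimates obtained on the reduced one-dimensional operators hold with constants independent of $\mu_{kl}$, so that the fibrewise bounds truly assemble into bounded operators on the full Hilbert space. This is where the specific form of the conjugate operator $A$ — with its $\kappa_\pm^{-1}\ln|H_{\S^2}|$ correction gluing together the two asymptotically hyperbolic ends — does all the work, exactly as in the toy model discussed before the statement; the proof closely mirrors \cite{Da2}, Lemma 4.5, with $\mathbb{D}_{\S^2}$ and its eigenvalues $l+\tfrac12$ replaced throughout by $H_{\S^2}$ and $\mu_{kl}$.
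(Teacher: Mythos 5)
Your proposal is correct and follows essentially the same route as the paper, which simply defers these technical verifications to \cite{Da2}, Lemmata 4.5--4.8: fibrewise commutator computations on the harmonics $Y_{kl}$, with the uniformity in $(k,l)$ supplied by Lemma \ref{DA} and by the logarithmic shift in $R_\pm$ taming the products $\mu_{kl}\, a(x) R_\pm(x,\mu_{kl})$. One minor inaccuracy: since $A$ is a multiplication operator in $x$ (times a function of $H_{\S^2}$, which commutes with $D_\varphi$), the commutator $[c(x,D_\varphi),A]$ vanishes identically, so no term involving $c'$ appears at all --- such a term would only arise for a conjugate operator containing $D_x$, like the generator of dilations.
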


\begin{proof}
  We refer to \cite{Da2}, lemmata 4.5, 4.6, 4.7 and 4.8 for these technical results.
\end{proof}

\begin{lemma} \label{MO-Local}
  Set $A_\pm = R_\pm(x,H_{\S^2}) \Ga$. Let $\lambda_0 \in \R$. Then there exists a function $\chi \in C_0^\infty(\R)$ with $supp\,\chi$ containing $\lambda_0$, a strictly positive constant $\e$ and compact operators generically denoted by $K$ such that for $S \geq 1$ large enough
  $$
    \chi(H_0) i[H_0,A_-] \chi(H_0) \geq \e \,\chi(H_0) j_1^2(\frac{x}{S}) \chi(H_0) + \,\chi(H_0) K \chi(H_0),
  $$
  $$
    \chi(H_0) i[H_0,A_+] \chi(H_0) \geq \e \,\chi(H_0) j_2^2(\frac{x}{S}) \chi(H_0) + \,\chi(H_0) K \chi(H_0),
  $$
where the functions $j_1, \ j_2$ are given the ones given in Lemma \ref{DA}.
\end{lemma}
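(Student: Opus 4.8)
The statement to be proved is a pair of local Mourre estimates for the free Hamiltonian $H_0$, one adapted to each end $\{x=-\infty\}$ (via $A_-$) and one to $\{x=+\infty\}$ (via $A_+$). The plan is to follow closely the strategy of \cite{Da2}, Lemma 4.9, adapting it to the present KN-dS situation where the angular Dirac operator $\mathbb{D}_{\S^2}$ is replaced by $H_{\S^2} = A_{\S^2}(0)$ and the role of the flat asymptotic phase $\Omega_-(D_\varphi)$ (resp. $\Omega_+(D_\varphi)$) must be accounted for. First I would compute the commutator $i[H_0,A_-]$ explicitly as a quadratic form on $D(\langle x\rangle)\cap D(H_0)$. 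Writing $A_- = R_-(x,H_{\S^2})\Ga$ and using that $H_{\S^2}$ anticommutes with $\Ga$ (so $\Ga$ commutes with $|H_{\S^2}|$ and with $R_-(x,H_{\S^2})$ up to the explicit $x$-dependence) together with $(\Ga)^2 = I_2$, one finds a main term of the shape $2\,R'_-(x,H_{\S^2})$ coming from $i[\Ga D_x, R_-\Ga]$, plus a remainder term produced by $i[a(x)H_{\S^2} + c(x,D_\varphi),\,R_-\Ga]$ which, thanks to the exponential decay of $a(x)$ and $a'(x)$ and of $c(x,k)-\Omega_-(k)$ and $c'(x,k)$ near $x=-\infty$ (see (\ref{Asymp-a})--(\ref{Asymp-c})) and to Lemma \ref{DA}, is supported away from $x=-\infty$ modulo exponentially small errors.

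Next I would invoke the guiding computation from the toy model: on each generalized spherical harmonic $Y_{kl}$, after the (formal) conjugation by $U=e^{-\frac{i}{\kappa_-}\ln|H_{\S^2}|D_x}$, the derivative $R'_-(x,H_{\S^2})$ equals $j_-^2\big(\tfrac{x+\kappa_-^{-1}\ln|H_{\S^2}|}{S}\big)$ plus a term of size $O\big(\tfrac{1}{S}\,|x+\kappa_-^{-1}\ln|H_{\S^2}|\,j_-'j_-|\big)$. The key point is that the first summand is a nonnegative bounded operator which is $\geq \varepsilon\, j_1^2(x/S)$ on a suitable energy window once $S$ is large (because on $\mathrm{supp}\,\chi(H_0)$ one controls the angular part, so $\ln|H_{\S^2}|/S$ is dominated), and that the second summand together with the low-angular-frequency contributions (where $|H_{\S^2}|$ is bounded) are compact once localized by $\chi(H_0)$: this is exactly the standard compactness criterion, since $\chi(H_0)$ is $\langle x\rangle^{-s}$-smoothing in a suitable sense and the offending operators are relatively compact perturbations decaying in $x$. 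Then I would replace all heuristic $U$-conjugations by honest estimates: one writes $R'_-(x,H_{\S^2}) = \Phi_-(x,H_{\S^2}) + (\text{error})$ with $\Phi_-\geq 0$, bounds the error in norm by $C/S$ plus a term localized on $\{|H_{\S^2}|\leq C_S\}$ which contributes a compact operator after multiplication by $\chi(H_0)$, and sandwiches with $\chi(H_0)$ to absorb the $C/S$ term into the positive part for $S$ large. The argument for $A_+$ near $x=+\infty$ is verbatim the same with $\kappa_-$ replaced by $\kappa_+$ (recall $\kappa_+<0$), $j_-$ by $j_+$, $j_1$ by $j_2$, and $\Omega_-$ by $\Omega_+$.

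The main obstacle I anticipate is the careful bookkeeping of the error terms in the commutator $i[H_0,A_-]$: one must show that (i) the cross term from $i[a(x)H_{\S^2},R_-\Ga]$, which a priori couples the unbounded operator $H_{\S^2}$ with $x e^{\kappa_-x}$, is genuinely bounded and $H_0$-compact after $\chi(H_0)$-localization — this uses both the exponential decay and the fact that $R_-$ already contains the logarithmic shift $\kappa_-^{-1}\ln|H_{\S^2}|$ that exactly compensates the growth, as in the toy model; and (ii) the $S$-dependent terms $\tfrac{1}{S}(x+\kappa_-^{-1}\ln|H_{\S^2}|)\,(j_-'j_-)(\cdots)$ are uniformly bounded by $C/S$, which is where Lemma \ref{DA} (the estimates $|R_\pm^{(i)}(x,\mu_{kl})|\leq C$ uniformly in $k,l$) is essential. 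All of this is done in \cite{Da2} in the Kerr-Newman case and transfers to the KN-dS case without structural change, the only new input being the exponential decay at the \emph{cosmological} horizon $\{x=+\infty\}$ which holds by (\ref{Asymp-a})--(\ref{Asymp-J}) since $\kappa_+<0$. Once Lemma \ref{MO-Local} is established, gluing the two local estimates (using $j_1^2(x/S)+j_2^2(x/S)\geq c>0$ on $\R$ for $S$ large, modulo compact terms) yields the global Mourre estimate (M4) for $H_0$, hence Theorem \ref{LAP} applies and gives the LAP for $H_0$; the same machinery, treating $H$ as an order-one short-range perturbation of $H_0$ as in \cite{Da2}, then gives the LAP for $H$ and ultimately Theorem \ref{WO-H-H0}.
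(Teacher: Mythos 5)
Your plan is essentially the paper's own proof: the paper disposes of Lemma \ref{MO-Local} with the single line ``We refer to \cite{Da2}, Lemma 4.11,'' and your sketch is a faithful unpacking of that cited argument adapted to the KN-dS setting (replace $\mathbb{D}_{\S^2}$ by $H_{\S^2}$ and the angular momenta $l+\tfrac{1}{2}$ by $\mu_{kl}$, with the new input at $x=+\infty$ being $\kappa_+<0$), including the correct identification of the delicate point — controlling the region where $j_-^2\big(\tfrac{x+\kappa_-^{-1}\ln|H_{\S^2}|}{S}\big)$ and $j_1^2(x/S)$ disagree via the energy localization $\chi(H_0)$. (The only nit: the kinetic term contributes $R_-'$, not $2R_-'$; the factor $2$ in the toy model comes from the anticommutator term, but this is immaterial to the estimate.)
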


\begin{proof}
  We refer to \cite{Da2}, Lemma 4.11.
\end{proof}

We can put together the previous results in order to obtain a Mourre estimate for $(H_0,A)$. Precisely, we can prove

\begin{prop} \label{MourreEstimateH0}
  Let $\lambda_0 \in \R$. Then there exists a function $\chi \in C_0^\infty(\R)$ with $supp\,\chi$ containing $\lambda_0$, a strictly
  positive constant $\e$ and a compact operator $K$ such that
  \begin{equation} \label{MO-H0}
    \chi(H_0) i[H_0,A] \chi(H_0) \geq \e \,\chi^2(H_0) + \,\chi(H_0) K \chi(H_0),
  \end{equation}
for $S \geq 1$ large enough.
\end{prop}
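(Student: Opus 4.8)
The plan is to glue together the two local Mourre estimates obtained in Lemma \ref{MO-Local} into a single global estimate for the pair $(H_0, A)$, exploiting the fact that $A = A_- + A_+$ and that the cut-offs $j_1(x/S)$ and $j_2(x/S)$ form (up to rescaling) a partition of the relevant region when $S$ is large. First I would fix $\lambda_0 \in \R$ and choose, via Lemma \ref{MO-Local}, a single $\chi \in C_0^\infty(\R)$ with $\lambda_0 \in \mathrm{supp}\,\chi$ that works simultaneously for $A_-$ and $A_+$ (shrinking the support if necessary so that both estimates hold with a common $\e > 0$). Summing the two inequalities gives
\begin{equation} \label{plan-sum}
  \chi(H_0) i[H_0,A] \chi(H_0) \geq \e \,\chi(H_0) \Big( j_1^2(\tfrac{x}{S}) + j_2^2(\tfrac{x}{S}) \Big) \chi(H_0) + \chi(H_0) K \chi(H_0),
\end{equation}
for compact $K$ (the sum of two compacts). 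The key elementary observation is that, by the definitions of $j_1$ and $j_2$ in Lemma \ref{DA}, one has $j_1^2(y) + j_2^2(y) \geq c_0 > 0$ for all $y \in \R$: indeed $j_1 \equiv 1$ on $(-\infty,1]$ and $j_2 \equiv 1$ on $[-1,\infty)$, so at every point at least one of them equals $1$, hence $j_1^2 + j_2^2 \geq 1$ everywhere; with $y = x/S$ this is a lower bound independent of $S$.

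Next I would convert the lower bound $j_1^2(x/S) + j_2^2(x/S) \geq 1$ into the desired estimate on $\chi^2(H_0)$. Write $j_1^2(\tfrac{x}{S}) + j_2^2(\tfrac{x}{S}) = 1 + r_S(x)$ where $r_S \geq 0$; then
\begin{equation} \label{plan-split}
  \chi(H_0) \Big( j_1^2(\tfrac{x}{S}) + j_2^2(\tfrac{x}{S}) \Big) \chi(H_0) = \chi^2(H_0) + \chi(H_0) r_S(x) \chi(H_0) \geq \chi^2(H_0),
\end{equation}
since $r_S \geq 0$ makes the middle term a nonnegative operator. Plugging \eqref{plan-split} into \eqref{plan-sum} yields exactly \eqref{MO-H0} with the same $\e$ and the same compact $K$, for every $S \geq 1$ large enough that Lemma \ref{MO-Local} applies. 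The technical bookkeeping to be careful about is that Lemma \ref{MO-Local} already requires $S$ large, and that the commutators $i[H_0,A_\pm]$ are individually well-defined and belong to $\B(D(H_0),\H)$ (Lemma \ref{C1AH0}), so that $i[H_0,A] = i[H_0,A_-] + i[H_0,A_+]$ makes sense as an element of $\B(D(H_0),\H)$ and the sum of the two operator inequalities is legitimate on $D(H_0)$ (then extended by density).

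The main obstacle — or rather the only point requiring genuine care — is ensuring that the $\chi$, the constant $\e$, and the compact operator can be chosen uniformly for both $A_-$ and $A_+$ at once: a priori Lemma \ref{MO-Local} produces possibly different data for the two local estimates. This is handled by taking $\chi$ supported in a sufficiently small neighbourhood of $\lambda_0$ contained in the intersection of the two admissible supports, replacing $\e$ by the minimum of the two constants, and noting that $\chi(H_0) j_i^2(x/S)\chi(H_0)$ are the same operators appearing on both sides so no mismatch arises there. Once \eqref{MO-H0} is established, Theorem \ref{MO} gives the absence of singular continuous spectrum and local finiteness of eigenvalues of $H_0$ near $\lambda_0$; since $\lambda_0$ is arbitrary and the absence of pure point spectrum is already known from the separation of variables (Theorem \ref{Spectra-H-H0}), this upgrades to $\sigma(H_0) = \sigma_{ac}(H_0) = \R$, and Theorem \ref{LAP} delivers the Limiting Absorption Principle for $H_0$. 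The entirely analogous argument for $H$ (using that $H$ is a short-range perturbation of order $1$ of $H_0$, with $\sup_\theta\|J^{-1} - I_2\|$ exponentially decaying, so that the extra commutator terms are relatively compact and do not spoil \eqref{MO-H0}) then yields the LAP for $H$, and Theorem \ref{WO-H-H0} follows from Kato's $H$-smoothness theory as in \cite{HaN,Da2}.
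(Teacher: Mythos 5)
Your proposal is correct and follows essentially the same route as the paper: sum the two local estimates of Lemma \ref{MO-Local} (which already supplies a common $\chi$, $\e$ and compact remainders for $A_-$ and $A_+$) to obtain the lower bound $\e\,\chi(H_0)\big(j_1^2(\tfrac{x}{S})+j_2^2(\tfrac{x}{S})\big)\chi(H_0)+\chi(H_0)K\chi(H_0)$. The only difference is in the final step, and it is harmless: you use the pointwise bound $j_1^2+j_2^2\geq 1$ (valid since at every point at least one of $j_1,j_2$ equals $1$) together with the positivity of $\chi(H_0)\big(j_1^2+j_2^2-1\big)(\tfrac{x}{S})\chi(H_0)$, whereas the paper writes $j_1^2+j_2^2=1-(1-j_1^2-j_2^2)$ with $1-j_1^2-j_2^2$ smooth and compactly supported, and absorbs $\chi(H_0)(1-j_1^2-j_2^2)(\tfrac{x}{S})\chi(H_0)$ into the compact remainder via the criterion (\ref{CC}); both arguments are valid and yield (\ref{MO-H0}).
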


\begin{proof}
  Using the same notations as in Lemma \ref{MO-Local} and the fact that $A = A_- + A_+$, we have
  $$
    \chi(H_0) i[H_0,A] \chi(H_0) \geq \e \,\chi(H_0) \left( j_1^2(\frac{x}{S}) + j_2^2 (\frac{x}{S}) \right) \chi(H_0) + \,\chi(H_0) K \chi(H_0).
  $$
  We set $j_0^2 = 1 - j_1^2 - j_2^2$. Then, from the defining properties of $j_1, \ j_2$, $j_0 \in C_0^\infty(\R)$. Hence,
  $$
    \chi(H_0) i[H_0,A] \chi(H_0) \geq \e \, \chi^2(H_0) + \ \chi(H_0) j_0^2(\frac{x}{S}) \chi(H_0) + \, \chi(H_0) K \chi(H_0).
  $$
  We conclude using the standard compactness criterion
  \begin{equation} \label{CC}
    f(x) g(H_0) \ \textrm{compact if} \ f \in C_\infty(\R), \ g \in C_\infty(\R).
  \end{equation}
\end{proof}

From the Mourre estimate for $(H_0,A)$, we obtain easily a Mourre estimate for $(H,A)$. Recall first that $H$ is selfadjoint on $\G = L^2(\R \times \S^2; \C^2)$ equipped with the scalar product $(.,J.)_\H$. According to Theorem \ref{Spectra-H-H0}, the norms $\|. \|_\H$ and $\|.\|_\G$ are equivalent and the domains $D(H_0)$ and $D(H)$ coincide. Hence, it is immediate that the pair $(H,A)$ satisfies the assumptions (M1)-(M3) of Mourre theory. Let us check that the Mourre estimate of Proposition \ref{MourreEstimateH0} still holds. Precisely, we prove

\begin{prop} \label{MourreEstimateH}
  Let $\lambda_0 \in \R$. Then there exists a function $\chi \in C_0^\infty(\R)$ with $supp\,\chi$ containing $\lambda_0$, a strictly
  positive constant $\e$ and a compact operator $K$ such that
  \begin{equation} \label{MO-H}
    \chi(H) i[H,A] \chi(H) \geq \e \,\chi^2(H) + \,\chi(H) K \chi(H),
  \end{equation}
for $S \geq 1$ large enough.
\end{prop}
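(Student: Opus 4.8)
The plan is to transfer the Mourre estimate (\ref{MO-H0}) for the pair $(H_0,A)$ to the pair $(H,A)$, exploiting that $H = J^{-1}H_0$ is a short-range perturbation \emph{of order one} of $H_0$. Write $B = J^{-1} - I_2$, a bounded matrix-valued multiplication operator with $\sup_\theta \|B(x,\cdot)\|_\infty = O(e^{-\kappa|x|})$ as $x \to \pm\infty$ by (\ref{Asymp-J}) and (\ref{Jmoinsun}), and set $W := H - H_0 = B H_0$. First I would record the soft facts: by Theorem \ref{Spectra-H-H0} the domains $D(H) = D(H_0)$ coincide and $\|\cdot\|_\H$, $\|\cdot\|_\G$ are equivalent, while the verification that $(H,A)$ satisfies (M1)--(M3) is identical to the one carried out for $(H_0,A)$ in Lemma \ref{C1AH0} (following \cite{Da2}); one only has to keep track of the inner product $(\cdot,J\cdot)_\H$, which, since $J$ is bounded, positive and boundedly invertible, affects neither the $C^1$-regularity nor the class of $H_0$-bounded commutators.

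The analytic core is a compactness argument. I would first show that $\chi(H) - \chi(H_0)$ is compact for any $\chi \in C_0^\infty(\R)$: from $(H-z)^{-1} - (H_0-z)^{-1} = -(H-z)^{-1} B\,(I_2 + z(H_0-z)^{-1})$ and the fact that $B$ vanishes at the horizons while $(H_0-z)^{-1} = f_z(H_0)$ with $f_z \in C_\infty(\R)$, the compactness criterion (\ref{CC}) gives that $B(H_0-z)^{-1}$, hence the resolvent difference, is compact; the Helffer--Sjöstrand formula (or Stone--Weierstrass) then yields compactness of $\chi(H) - \chi(H_0)$. Consequently one may replace each occurrence of $\chi(H)$ by $\chi(H_0)$ in (\ref{MO-H}) at the cost of compact errors, and likewise $\chi^2(H) = \chi^2(H_0)$ modulo compacts, so that it suffices to bound $\chi(H_0)\, i[H,A]\, \chi(H_0)$ from below.

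Next I would expand $i[H,A] = i[H_0,A] + i[W,A]$ with $i[W,A] = iB[H_0,A] + i[B,A]H_0$ and show that $\chi(H_0)\, i[W,A]\, \chi(H_0)$ is compact. For the first term this is immediate: $[H_0,A]\chi(H_0)$ is bounded by Lemma \ref{C1AH0}, $B$ decays, and $\chi(H_0)B$ is compact by (\ref{CC}). For the second term one computes $[B,A]$ using $A = R_-(x,H_{\S^2})\Ga + R_+(x,H_{\S^2})\Ga$: the pieces of $[B, R_\pm(x,H_{\S^2})\Ga]$ are (i) commutators of the $\theta$-dependent symbol $B$ with the bounded operator $\kappa_\pm^{-1}\ln|H_{\S^2}|$, which are bounded, and (ii) terms coming from $[\Gc,\Ga] = 2\Gc\Ga$ acting on the $x$-multiplication part of $R_\pm$, which produce a factor $x\,B(x,\theta) = O(x e^{-\kappa|x|})$ that still decays; multiplying by $H_0$ and sandwiching by $\chi(H_0)$, each remainder has the form $\chi(H_0)\,(\text{decaying in }x)\,\psi(H_0)$ with $\psi \in C_\infty(\R)$ (using $H_0\chi(H_0) \in C_0(\R)$ of $H_0$), hence compact by (\ref{CC}). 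Combining this with the previous step and Proposition \ref{MourreEstimateH0},
$$
  \chi(H)\, i[H,A]\, \chi(H) = \chi(H_0)\, i[H_0,A]\, \chi(H_0) + K_1 \geq \e\,\chi^2(H_0) + K_2 = \e\,\chi^2(H) + K,
$$
with $K_1,K_2,K$ compact, which is (\ref{MO-H}); the inequality has to be read in the $\G$-metric, but since $J$ is positive, bounded and boundedly invertible this merely changes the value of $\e$, exactly as in \cite{Da2}.

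The main obstacle is this last step: although $A$ grows like $x$, one must show that the exponential decay of $B = J^{-1} - I_2$ and of the error terms generated when commuting the non--spherically--symmetric multiplication operator $J^{-1}$ through the conjugate operator $A$ — itself built out of functions of the angular operator $H_{\S^2}$ — beats that growth, so that every remainder is genuinely compact after localization in energy by $\chi(H_0)$. This is precisely where one uses that $J^{-1}-I_2$ is a perturbation of order one decaying at the same exponential rate as the potentials $a(x)$, $c(x,D_\varphi)$, rather than only a potential of order zero; beyond this point the argument is a routine adaptation of \cite{HaN,Da2}.
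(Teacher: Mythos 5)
Your proof follows essentially the same route as the paper's: compactness of $\chi(H)-\chi(H_0)$ via the resolvent difference and the Helffer--Sj\"ostrand formula, expansion of $i[H,A]$ as $i[H_0,A]$ plus the two commutator terms involving $J^{-1}-I_2$, compactness of those terms by the criterion (\ref{CC}), and transfer of the Mourre estimate from Proposition \ref{MourreEstimateH0}. One small inaccuracy: $\ln|H_{\S^2}|$ is not a bounded operator (the spectrum of $H_{\S^2}$ is unbounded), although the commutator of the multiplication operator $J^{-1}-I_2$ with it is indeed controlled; this does not affect the structure of the argument, which the paper itself delegates to the compactness criterion and to \cite{Da2}.
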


\begin{proof}
  We use first the following result proved in \cite{Da2}, Corollary 3.2. For all $\chi \in C_0^\infty(\R)$, the operator $\chi(H) - \chi(H_0)$ is compact. This is indeed a consequence of the Helffer-Sjostrand formula and the fact that
\begin{equation} \label{bu1}
  H = H_0 + (J^{-1}- I_2) H_0 = H_0 + O(e^{-c|x|}) H_0,
\end{equation}
according to (\ref{Pot-J}). Now we get
$$
  \chi(H) i[H,A] \chi(H)  = \chi(H_0) i[H,A] \chi(H_0)  + \chi(H) K \chi(H),
$$
for a compact operator $K$. Using (\ref{bu1}), we then obtain
\begin{eqnarray} \label{bu2}
  \chi(H) i[H,A] \chi(H)  & = & \chi(H_0) i[H_0,A] \chi(H_0)  + \chi(H_0) i[(J^{-1} - I_2),A] H_0 \chi(H_0) \\
                          &   & \quad + \chi(H_0) (J^{-1} - I_2) i[H_0,A] \chi(H_0) + \chi(H) K \chi(H). \nonumber
\end{eqnarray}
The last three terms in (\ref{bu2}) are compact by the compacness criterion (\ref{CC}). Hence, we conclude from Proposition \ref{MourreEstimateH0} that
$$
    \chi(H) i[H,A] \chi(H) \geq \e \,\chi^2(H) + \,\chi(H) K \chi(H),
  $$
for $S \geq 1$ large enough.

\end{proof}

\begin{remark}
  We emphasize here that the Mourre estimates (\ref{MO-H0}) and (\ref{MO-H}) are obtained in small neighbourhoods of any given energy $\lambda \in \R$ with no exception. In particular, the choice of the conjugate operator $A$ avoids to make appear a threshold at the energy $0$. This wouldn't be the case if we had used a conjugate operator constructed around the generator of dilations as in \cite{HaN}.
\end{remark}

Combining Theorem \ref{MO} with the absence of pure point spectrum for $H_0$ and $H$, we thus have proved the first assertion in Theorem \ref{WO-H-H0}. Using Theorem \ref{LAP}, we also have proved the following LAPs

\begin{prop} \label{LAP-H-H0}
  For all $\lambda \in \R$ and for $s > \frac{1}{2}$, there exists a small enough interval $I$ containing $\lambda$ such that
  \begin{equation} \label{LAP-H}
    \sup_{\lambda \in I} \| \l A \r^{-s} (H_0 - \lambda)^{-1} \l A \r^{-s} \| < \infty, \quad \sup_{\lambda \in I} \| \l A \r^{-s} (H - \lambda)^{-1} \l A \r^{-s} \| < \infty.
  \end{equation}
\end{prop}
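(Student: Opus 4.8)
The final statement is Proposition \ref{LAP-H-H0}, the pair of Limiting Absorption Principles for $H_0$ and $H$ on small energy intervals. The whole point of the preceding Appendix is precisely to make this proposition almost immediate once the Mourre machinery is in place, so the proof proposal is short.

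The plan is to apply the abstract Limiting Absorption Principle, Theorem \ref{LAP}, with the conjugate operator $A = R_-(x,H_{\S^2})\Ga + R_+(x,H_{\S^2})\Ga$ constructed in Subsection \ref{ConjugateOperator}. First I would recall that by Lemma \ref{C1AH0} (and its analogue for $H$, obtained using $H = H_0 + O(e^{-c|x|})H_0$ together with the equivalence of the norms $\|\cdot\|_\H$ and $\|\cdot\|_\G$ from Theorem \ref{Spectra-H-H0}), both pairs $(H_0,A)$ and $(H,A)$ satisfy the conditions (M1), (M2), (M3) of Mourre theory, and in particular $H_0, H \in C^1(A)$. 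Second, I would invoke the Mourre estimates already proved: Proposition \ref{MourreEstimateH0} gives, for any $\lambda_0 \in \R$, a function $\chi \in C_0^\infty(\R)$ with $\lambda_0 \in \mathrm{supp}\,\chi$, a constant $\e > 0$ and a compact operator $K$ with
$$
  \chi(H_0) i[H_0,A] \chi(H_0) \geq \e\,\chi^2(H_0) + \chi(H_0) K \chi(H_0),
$$
and Proposition \ref{MourreEstimateH} gives the identical statement for $H$. Since $H_0$ and $H$ have no pure point spectrum (Theorem \ref{Spectra-H-H0}), the compact term can be absorbed on a sufficiently small subinterval $I \ni \lambda_0$: choosing $\chi \equiv 1$ near $I$ and shrinking $I$ so that $\|\chi(H) K \chi(H) \mathbf{1}_I(H)\|$ is small (using that $\mathbf{1}_I(H) \to 0$ strongly as $|I| \to 0$ because $H$ has purely continuous spectrum), one upgrades the Mourre estimate to the strict form $\mathbf{1}_I(H) i[H,A] \mathbf{1}_I(H) \geq \delta\,\mathbf{1}_I(H)$ of (\ref{MO-Est}), and similarly for $H_0$.

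Once the strict Mourre estimate holds on $I$, Theorem \ref{LAP} applies verbatim and yields, for every closed $J \subset I$ and every $s > \tfrac12$,
$$
  \sup_{\lambda \in J} \| \l A \r^{-s} (H_0 - \lambda)^{-1} \l A \r^{-s} \| < \infty, \qquad \sup_{\lambda \in J} \| \l A \r^{-s} (H - \lambda)^{-1} \l A \r^{-s} \| < \infty,
$$
which is exactly the assertion of Proposition \ref{LAP-H-H0} (after renaming $J$ as a small interval $I$ containing the given $\lambda$). The only genuinely substantive ingredients — the construction of $A$ adapted to the two asymptotically hyperbolic ends, the verification of the $C^1(A)$ regularity and the boundedness of the first and second commutators, and the proof of the localized Mourre estimates — are all established earlier in the Appendix, so nothing new is needed here.

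The main obstacle is not in this final deduction at all but lies upstream, in Lemma \ref{C1AH0} and Propositions \ref{MourreEstimateH0}–\ref{MourreEstimateH}: namely checking that $A$, built by gluing the two shifted dilation-type generators $R_\pm(x,H_{\S^2})\Ga$ via the cut-offs $j_\pm(\cdot/S)$, is a genuine conjugate operator, with commutators controlled \emph{uniformly} in the angular indices $k,l$ (via Lemma \ref{DA}), and that the parameter $S \geq 1$ can be taken large enough to kill the unwanted error terms. These points are handled by importing, essentially line by line, the corresponding arguments of \cite{Da2} (and \cite{HaN}) for Dirac fields on Kerr–Newman backgrounds, the only changes being to replace the intrinsic Dirac operator $\mathbb{D}_{\S^2}$ there by the angular Dirac operator $H_{\S^2}$ here and the angular momenta $l + \tfrac12$ by the eigenvalues $\mu_{kl}$ of $A_{\S^2}(0)$. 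Given those, the passage to Proposition \ref{LAP-H-H0} is, as indicated above, a direct application of the abstract theory.
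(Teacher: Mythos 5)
Your proposal is correct and follows exactly the route the paper takes: the paper derives Proposition \ref{LAP-H-H0} directly from the Mourre estimates of Propositions \ref{MourreEstimateH0} and \ref{MourreEstimateH}, the regularity conditions of Lemma \ref{C1AH0}, and the abstract Theorem \ref{LAP}, after shrinking the interval and using compactness of $K$ together with the absence of point spectrum to pass to the strict estimate (\ref{MO-Est}). Your identification of the genuinely substantive work as lying upstream (the construction of $A$ and the uniform-in-$(k,l)$ commutator bounds imported from \cite{Da2}) also matches the paper's presentation.
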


\begin{remark}
We finish this Section with a technical result used in Section \ref{Link}. Precisely we prove a LAP at energy $0$ for the operator $L_0 = H_0 - \lambda J$ when $\lambda \in \R$ is fixed.

\begin{prop} \label{LAP-L0}
  Let $\lambda \in \R$ fixed. Then, for $s > \frac{1}{2}$,
  $$
    \| \l A \r^{-s} (L_0 - 0)^{-1} \l A \r^{-s} \| < \infty.
  $$
\end{prop}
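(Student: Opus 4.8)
The strategy is to deduce this limiting absorption principle for $L_0 = H_0 - \lambda J$ at energy $0$ from the already-established LAP for $H$ at energy $\lambda$ (Proposition \ref{LAP-H-H0}), using the structural identity $H = J^{-1} H_0$. First I would observe that, since $J$ is a bounded invertible matrix-valued multiplication operator with bounded inverse (by $\eta = \sup_{x,\theta}\alpha(x,\theta) < 1$), we have the algebraic factorization
$$
  L_0 - 0 = H_0 - \lambda J = J(J^{-1}H_0 - \lambda) = J(H - \lambda),
$$
as operators on $D(H_0) = D(H)$. Formally inverting gives $(L_0)^{-1} = (H-\lambda)^{-1} J^{-1}$, so that
$$
  \l A \r^{-s} (L_0 - i0)^{-1} \l A \r^{-s} = \l A \r^{-s} (H - \lambda - i0)^{-1} J^{-1} \l A \r^{-s}.
$$
The point is then to move $J^{-1}$ past the weight $\l A \r^{-s}$: write $J^{-1}\l A\r^{-s} = \l A\r^{-s} \cdot \l A\r^{s} J^{-1} \l A\r^{-s}$, and the claim reduces to showing that the commutator-type operator $\l A\r^{s} J^{-1} \l A\r^{-s}$ is bounded on $\H$ (equivalently on $\G$), after which the boundedness of $\l A\r^{-s}(H-\lambda-i0)^{-1}\l A\r^{-s}$ from Proposition \ref{LAP-H-H0} finishes the argument. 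One must be a little careful: the correct reading is to approximate, $(L_0 - z)^{-1}$ for $\mathrm{Im}\, z \neq 0$, verify the factorization at the level of bounded operators, sandwich with weights, and then take the boundary limit $z \to 0$ uniformly.

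The key technical lemma is therefore: \emph{the operator $\l A\r^{s} J^{-1} \l A\r^{-s}$ extends to a bounded operator on $\H$ for $s > 1/2$ (indeed for any $s \geq 0$).} Here I would exploit that $J^{-1} = (1-\alpha^2)(I_2 - \alpha\Gc)$ with $\alpha(x,\theta) = a(x)b(\theta)$ exponentially decaying in $x$ at both horizons (by \eqref{Asymp-J}), so $J^{-1} - I_2$ is a smooth, exponentially decaying matrix potential. Since $A = R_-(x,H_{\S^2})\Ga + R_+(x,H_{\S^2})\Ga$ and, by Lemma \ref{DA}, $R_\pm(x,H_{\S^2})$ grows only linearly in $|x|$ while being localized where $x + \kappa_\pm^{-1}\ln|H_{\S^2}|$ is bounded, the multiplication operator $J^{-1} - I_2$ maps $D(\l A\r^{-s})$-type spaces into themselves with the required norm control: intuitively, multiplying by a function exponentially decaying in $x$ against a weight that is at most polynomial in $\l A \r$ (hence, on each angular mode, polynomial in $\l x \r$ shifted by $\ln \mu_{kl}$) is bounded. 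Concretely, decomposing onto the generalized spherical harmonics $Y_{kl}(0)$ reduces the claim to a family of one-dimensional statements: on $L^2(\R;\C^2)$, the operator $\l \Ga(x+\kappa_\pm^{-1}\ln\mu_{kl})\r^{s}\, (J^{-1}(x)-I_2)\, \l\Ga(x+\kappa_\pm^{-1}\ln\mu_{kl})\r^{-s}$ is bounded uniformly in $k,l$, which follows from the exponential decay of $\sup_\theta|\alpha(x,\theta)|$ dominating any power of $\l x + \mathrm{const}\r$, together with the uniform estimates of Lemma \ref{DA}.

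The main obstacle I anticipate is the rigorous handling of the operator $\l A\r^{s} J^{-1} \l A\r^{-s}$, because $A$ is not a simple multiplication operator: the factors $R_\pm(x,H_{\S^2})$ couple the radial variable to the (non-trivial) angular operator $H_{\S^2}$ through the logarithmic shift $\kappa_\pm^{-1}\ln|H_{\S^2}|$. One cannot naively commute $J^{-1}$ — which depends on both $x$ and $\theta$ through $\alpha(x,\theta) = a(x)b(\theta)$ — past functions of $H_{\S^2}$. The clean way around this is to use the unitary conjugation $U = \exp(-i\kappa_-^{-1}\ln|H_{\S^2}| D_x)$ (and its cosmological-horizon analogue) already introduced in Appendix \ref{LAP-Mourre}, under which $A$ becomes, near each horizon, essentially $\Ga x$ times a cutoff; the conjugated potential $U^* (J^{-1}-I_2) U$ remains exponentially decaying in $x$ (it is translated by a function of $H_{\S^2}$, but the exponential envelope $e^{\kappa_\pm x}$ absorbs the logarithmic shift, producing an extra bounded factor $|H_{\S^2}|^{\mp 1}$ which is harmless since $|H_{\S^2}| \geq \mu_{k1}(0) > 0$). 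After this reduction, boundedness of $\l \Ga x\r^{s}(U^*(J^{-1}-I_2)U)\l\Ga x\r^{-s}$ is elementary. Once the lemma is in place, the proof is a short chain of identities plus an application of Proposition \ref{LAP-H-H0}, with the boundary value $z\to 0$ justified by the uniform bounds, exactly as in the standard $H$-smoothness machinery of \cite{RS}.
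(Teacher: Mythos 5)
Your proposal takes a genuinely different route from the paper, and it contains a gap at its central step. The paper's own proof is a one-line perturbation argument: write $L_0 = (H_0 - \lambda) + \lambda(I_2 - J)$, note that $I_2 - J = O(e^{-c|x|})$, conclude that $A$ is still a conjugate operator for the \emph{self-adjoint} operator $L_0$ at energy $0$ (the exponentially decaying perturbation contributes only compact terms to the Mourre estimate, by the compactness criterion (\ref{CC})), and then apply the abstract Theorem \ref{LAP} directly to the pair $(L_0, A)$. No factorization through $H$ and no interchange of boundary limits is needed.

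The gap in your argument is the claimed factorization off the real axis. You assert that for $\mathrm{Im}\, z \neq 0$ one can ``verify the factorization at the level of bounded operators'' and then let $z \to 0$; but $L_0 - z = J(H - \lambda) - z = J\bigl(H - \lambda - zJ^{-1}\bigr)$, so
$$
(L_0 - z)^{-1} = \bigl(H - \lambda - zJ^{-1}\bigr)^{-1} J^{-1} \ \neq \ (H - \lambda - z)^{-1} J^{-1},
$$
since $J^{-1}$ is not a scalar. The quantity needed in Section \ref{Link} is the boundary value of $(L_0 - i\nu)^{-1}$ as $\nu \to 0^+$, and Proposition \ref{LAP-H-H0} controls $(H - \lambda - i\nu)^{-1}$, not $(H - \lambda - i\nu J^{-1})^{-1}$. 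Bridging the two requires an additional argument: one must use that $J^{-1} \geq c > 0$ makes $H - \lambda - i\nu J^{-1}$ maximal dissipative (this is exactly the role of $H(\nu)$ and of the vanishing term $I_2(\mu,\nu)$ in Section \ref{Link}, via the resolvent identity and Royer's lemma), and show that the discrepancy between the two resolvents vanishes in the weighted norm. Your proposal does not supply this step, and as written the boundary-limit interchange fails. By contrast, your auxiliary lemma on $\l A \r^{s} J^{-1} \l A \r^{-s}$ is correct and in fact simpler than you fear: $J^{-1} - I_2$ is a multiplication operator in $(x,\theta)$, hence commutes with $\l x \r^{s}$, $\l x \r^{s}(J^{-1}-I_2)$ is bounded by exponential decay, and $\l A \r^{s} \l x \r^{-s}$ is bounded by Lemma \ref{DA}; no conjugation by $U$ is required. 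But that lemma cannot rescue the factorization step, so the proof as proposed is incomplete, whereas the paper's direct Mourre argument for $(L_0, A)$ avoids the issue entirely.
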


\begin{proof}
  Let us write $L_0 = H_0 - \lambda + \lambda (I_2 - J)$. Recalling that $I_2 - J = O(e^{-c|x|})$ by (\ref{Pot-J}), we see easily that the operator $A$ is still a conjugate operator for $L_0$. Using the standard compactness criterion (\ref{CC}), we deduce the result from Theorem \ref{LAP}.
\end{proof}

\end{remark}

\subsection{$H$-smooth theory and wave operators}

Let us recall the definition of locally $H$-smooth operator. We say that a bounded operator $K$ is $H$-smooth on an interval $I$ if there exists a constant $C$ such that
\begin{equation} \label{Hsmooth}
  \int_\R \left\| K e^{-itH} \mathbf{1}_I(H) u \right\|^2 dt \leq C \| u \|^2.
\end{equation}
It is well known that LAPs such as the ones given in Proposition \ref{LAP-H-H0} entail $H_0$ and $H$-smoothness for the operator $\l A \r^{-s}, \ s > \frac{1}{2}$. Precisely, we have (see \cite{RS, Y})

\begin{prop}[$H$-smoothness] \label{A-Hsmooth}
  Let $(H,A)$ be selfadjoint operators on $\H$ such that the estimate (\ref{LAP-H}) holds on an interval $I$. Then the operator $\l A \r^{-s}, \ s > \frac{1}{2}$ is $H_0$ and $H$-smooth on $I$.
\end{prop}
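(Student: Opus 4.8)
The statement to prove is Proposition \ref{A-Hsmooth}: if $(H,A)$ are selfadjoint operators on a Hilbert space $\H$ satisfying the limiting absorption principle estimate
$$
  \sup_{\lambda \in I} \| \l A \r^{-s} (H - \lambda \mp i0)^{-1} \l A \r^{-s} \| < \infty, \quad s > \tfrac{1}{2},
$$
on an interval $I$, then $\l A \r^{-s}$ is $H$-smooth (and $H_0$-smooth) on $I$.

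The plan is to follow the classical argument of Kato relating $H$-smoothness to uniform boundedness of the sandwiched resolvent, as presented in \cite{RS} (Theorem XIII.25 and the surrounding discussion) and \cite{Y}. First I would recall the abstract criterion: for a closed operator $K$ and a selfadjoint operator $H$, the operator $K$ is $H$-smooth on $I$ if and only if
$$
  M := \sup_{\mu \in I, \ \varepsilon \ne 0} \| K \, \mathrm{Im}\big( (H - \mu - i\varepsilon)^{-1} \big) K^* \| < \infty,
$$
and in that case the smoothness constant $C$ in (\ref{Hsmooth}) is controlled by $M$ (up to a factor $\pi^{-1}$ or $2\pi^{-1}$ depending on normalization). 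Applying this with $K = \l A \r^{-s}$ and $K^* = \l A \r^{-s}$, the hypothesis (\ref{LAP-H}) gives precisely the finiteness of $M$ on $I$, since
$$
  \mathrm{Im}\big( (H - \mu - i\varepsilon)^{-1} \big) = \frac{1}{2i}\Big( (H-\mu-i\varepsilon)^{-1} - (H-\mu+i\varepsilon)^{-1} \Big),
$$
and both boundary values are uniformly bounded after conjugation by $\l A \r^{-s}$.

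The main steps, in order, would be: (1) state the Kato criterion for local $H$-smoothness in the form involving the imaginary part of the resolvent, referring to \cite{RS} for the proof (which itself rests on Plancherel's theorem applied to the Fourier transform in $t$ of $K e^{-itH}\mathbf{1}_I(H)u$, together with Stone's formula expressing the spectral projection as a boundary value of the resolvent); (2) observe that the estimate (\ref{LAP-H}) for $H$, valid on the small interval $I$ produced in Proposition \ref{LAP-H-H0}, implies that the quantity $M$ above is finite on $I$; (3) conclude that $\l A \r^{-s}$ is $H$-smooth on $I$; (4) repeat verbatim the same argument with $H$ replaced by $H_0$, using the second estimate in (\ref{LAP-H}), to get $H_0$-smoothness. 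Since the excerpt's Proposition \ref{A-Hsmooth} only asks for the implication "LAP $\Rightarrow$ smoothness," essentially all the analytic content is imported from \cite{RS, Y} and the proof is a short deduction.

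There is no serious obstacle here: the statement is a standard black-box consequence of the limiting absorption principle, and the only thing to be careful about is bookkeeping — namely that the LAP is stated with the weight $\l A \r^{-s}$ rather than the more customary $\l x \r^{-s}$, but this causes no difficulty since the criterion is stated abstractly for any closed operator $K$. The one point worth a sentence of comment is that $I$ must be one of the small intervals from Proposition \ref{LAP-H-H0}; to obtain smoothness on an arbitrary compact interval one covers it by finitely many such small intervals and adds the smoothness constants. I would note this and then move on, since the genuinely hard analysis — the Mourre estimate and the LAP itself — has already been carried out in the preceding subsections.
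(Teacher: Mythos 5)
Your proposal is correct and follows the same route as the paper: the paper states this proposition without proof, citing \cite{RS, Y} as a standard consequence of the limiting absorption principle, and your deduction via Kato's criterion (uniform boundedness of $\l A \r^{-s}\,\mathrm{Im}\,(H-\mu-i\varepsilon)^{-1}\,\l A \r^{-s}$ implying local $H$-smoothness) is exactly the standard argument those references supply. Your remark about covering a compact interval by finitely many small intervals from Proposition \ref{LAP-H-H0} is a sensible bookkeeping point that the paper handles implicitly in the same way in its Step 1.
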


There is a deep connection between $H_0$ and $H$-smoothness and the existence and asymptotic completeness of wave operators for the pair $(H_0,H)$. We shall use the following result from \cite{Y}

\begin{theorem}[Wave Operators by the $H$-smooth method] \label{WO-Hsmooth}
  Let $H_0, H$ be selfadjoint operators on Hilbert spaces $\H$ and $\G$ respectively. Let $P: \ \H \longrightarrow \G$ be a bounded identification operator. Assume that $H - H_0 = K^* K_0$ where $K_0$ and $K$ are $H_0$ and $H$-smooth operators on an interval $J$. Then the wave operators
  $$
    W^\pm(H,H_0,J \mathbf{1}_J(H_0)) = s-\lim_{t \to \pm \infty} e^{itH} J e^{-itH_0} \mathbf{1}_J(H_0),
  $$
  and
  $$
    W^\pm(H_0,H,J^* \mathbf{1}_J(H)) = s-\lim_{t \to \pm \infty} e^{itH_0} J^* e^{-itH} \mathbf{1}_J(H),
  $$
  exist.
\end{theorem}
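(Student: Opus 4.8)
The plan is to establish the existence of these limits by Cook's method, converting the problem into two time-integrated smoothness bounds that are precisely what the hypotheses on $K_0$ and $K$ provide. Throughout I read the factorization hypothesis in the two-space sense, namely $HJ-JH_0 = K^{*}K_0$ on the common core (this is the only reading consistent with $H$ and $H_0$ acting on different spaces, and $J$ denotes the bounded identification operator figuring in the displayed wave operators).

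First I would reduce to a Cauchy criterion. For $u\in D(H_0)$ set $W(t)=e^{itH}J e^{-itH_0}\mathbf{1}_J(H_0)u$; since $\mathbf{1}_J(H_0)$ commutes with $H_0$ and $J$ maps $D(H_0)$ into $D(H)$, the map $t\mapsto W(t)$ is strongly differentiable and, using $HJ-JH_0=K^{*}K_0$, one gets $\frac{d}{dt}W(t)=i\,e^{itH}K^{*}K_0\,e^{-itH_0}\mathbf{1}_J(H_0)u$. Hence for $t_1\le t_2$,
\[
  W(t_2)-W(t_1)=i\int_{t_1}^{t_2} e^{itH}K^{*}K_0\,e^{-itH_0}\mathbf{1}_J(H_0)u\,dt .
\]
It suffices to show the right-hand side tends to $0$ in $\G$ as $t_1,t_2\to+\infty$ (resp. $-\infty$); then $\{W(t)\}$ is Cauchy, and the uniform bound $\|W(t)\|\le\|J\|\,\|u\|$ extends the limit from the dense set $D(H_0)$ to all of $\H$. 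The reverse wave operators $W^{\pm}(H_0,H,J^{*}\mathbf{1}_J(H))$ follow by the symmetric argument, since taking adjoints gives $J^{*}H-H_0J^{*}=(HJ-JH_0)^{*}=K_0^{*}K$, the analogous factorization with the roles of $(H,K)$ and $(H_0,K_0)$ interchanged.

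Second, I would estimate the integral by duality and Cauchy--Schwarz in $t$. Writing $(e^{itH}K^{*}K_0 e^{-itH_0}\mathbf{1}_J(H_0)u,\,v)=(K_0 e^{-itH_0}\mathbf{1}_J(H_0)u,\,K e^{-itH}v)$ one obtains
\[
  \big|\big(W(t_2)-W(t_1),v\big)\big|\le\Big(\int_{t_1}^{t_2}\!\|K_0 e^{-itH_0}\mathbf{1}_J(H_0)u\|^2 dt\Big)^{1/2}\Big(\int_{t_1}^{t_2}\!\|K e^{-itH}v\|^2 dt\Big)^{1/2}.
\]
By $H_0$-smoothness of $K_0$ on $J$, $\int_{\R}\|K_0 e^{-itH_0}\mathbf{1}_J(H_0)u\|^2 dt\le C\|u\|^2<\infty$, so the first factor is the tail of a convergent integral and tends to $0$. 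Testing against $\mathbf{1}_J(H)v$ and using $H$-smoothness of $K$ on $J$ (equivalently, global $H$-smoothness of $K\mathbf{1}_J(H)$) bounds the second factor by $C^{1/2}\|v\|$, which shows that the spectrally localized component $\mathbf{1}_J(H)\big(W(t_2)-W(t_1)\big)$ converges to $0$.

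The remaining, genuinely technical point — which I expect to be the main obstacle — is to control the complementary component $(1-\mathbf{1}_J(H))\big(W(t_2)-W(t_1)\big)$, since local $H$-smoothness yields no bound on $\int\|K e^{-itH}v\|^2$ for $v$ spectrally supported off $J$. Here I would insert a smooth energy cut-off $\chi$ equal to $1$ on $J$ and supported in a slightly larger interval, and show that $(1-\chi(H))K^{*}K_0\,\mathbf{1}_J(H_0)$ produces an integrable-in-$t$ error: the phases of $e^{-itH_0}\mathbf{1}_J(H_0)u$ and $e^{itH}(1-\chi(H))$ occupy disjoint spectral regions, so a standard commutator/non-stationary-phase estimate (the two-space local smoothness machinery of \cite{Y}) makes this contribution negligible. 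Granting this step, the full difference $W(t_2)-W(t_1)\to0$, establishing existence; the intertwining relation $e^{isH}W^{\pm}=W^{\pm}e^{isH_0}$, confirming that the limits are indeed wave operators, then follows in the usual way by passing to the limit in $e^{i(t+s)H}J e^{-i(t+s)H_0}\mathbf{1}_J(H_0)$.
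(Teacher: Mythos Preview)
The paper does not give a proof of this statement; it is quoted as a standard result from Yafaev~\cite{Y}. Your strategy via Cook's method and the duality/Cauchy--Schwarz estimate is the right one and is correct up to the point you flag.

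The gap you identify is genuine, but your proposed fix is not. Local $H$-smoothness of $K$ on $J$ only yields, by duality, a bound on $\mathbf{1}_J(H)\int e^{isH}K^*w(s)\,ds$, hence on $\mathbf{1}_J(H)\big(W(t_2)-W(t_1)\big)$; it says nothing about the complementary piece. Your suggestion to control $(1-\chi(H))K^*K_0\,\mathbf{1}_J(H_0)$ by a ``non-stationary-phase / disjoint spectral support'' argument cannot work in this abstract setting: there is no phase, no pseudodifferential structure, and the projectors $\mathbf{1}_J(H)$ and $\mathbf{1}_J(H_0)$ have no a priori relation through the bounded operator $K^*K_0$. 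The correct routes are: either strengthen one hypothesis to \emph{global} smoothness (e.g.\ $K$ globally $H$-smooth), so that the adjoint characterization of smoothness gives $\big\|\int e^{isH}K^*w(s)\,ds\big\|\le C\|w\|_{L^2}$ and your Cook argument closes immediately; or, as in Yafaev, prove the genuinely local statement via the \emph{stationary} representation of the wave operators, where $\mathbf{1}_J(H_0)$ is written by Stone's formula as a spectral integral over $\lambda\in J$, and local smoothness on $J$ is exactly the resolvent bound $\sup_{\lambda\in J}\|K(H-\lambda\mp i0)^{-1}K^*\|<\infty$ (and similarly for $K_0,H_0$) that is needed---so no complementary part ever arises. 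In the paper's application this subtlety is harmless, since $K_0$ and $K$ are shown to be locally smooth on every compact interval and the full wave operator is then assembled by density.
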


Let us use the Theorem \ref{WO-Hsmooth} and the Proposition \ref{A-Hsmooth} to prove the existence and asymptotic completeness of the wave operators in Theorem \ref{WO-H-H0}. We do this in three steps. \\

\noindent \textbf{Step 1}. We show that for $s > \frac{1}{2}$, the operator $\l x \r^{-s}$ is $H_0$ and $H$-smooth on any compact interval $I \subset \R$. We only give the proof for $H_0$ since the proof is identical for $H$. We have to show that
\begin{equation} \label{ba1}
  \int_\R \left\| \l x \r^{-s} e^{-itH_0} \mathbf{1}_I(H_0) u \right\|^2 dt \leq C \| u \|^2.
\end{equation}
Using the compactness of $I$, we can suppose that the length of $I$ is as small as we want in (\ref{ba1}). Let $\chi \in C_0^\infty(\R)$ such that $\chi = 1$ on $I$ and $supp\,\chi$ is small. It is thus enough to show
\begin{equation} \label{ba2}
  \int_\R \left\| \l x \r^{-s} e^{-itH_0} \chi(H_0) u \right\|^2 dt \leq C \| u \|^2.
\end{equation}
From Lemma \ref{DA}, we see easily that for all $s > 0$ the operator $ \l x \r^{-s} \l A \r^{s}$ is a bounded operator on $\H$. Let us fix an $s$ such that $s > \frac{1}{2}$ and use Proposition \ref{A-Hsmooth}. We get (the constants $C$ differ from line to line in what follows)
\begin{eqnarray*}
  \int_\R \left\| \l x \r^{-s} e^{-itH_0} \chi(H_0) u \right\|^2 dt & \leq & \int_\R \left\| \l x \r^{-s} \l A \r^s \l A \r^{-s} e^{-itH_0} \chi(H_0) u \right\|^2 dt, \\
  & \leq & C \int_\R \left\| \l A \r^{-s} e^{-itH_0} \chi(H_0) u \right\|^2 dt, \\
  & \leq & C \|u \|^2.
\end{eqnarray*}
This proves (\ref{ba2}) and thus our claim. \\

\noindent \textbf{Step 2}. By definition, we have $ H - H_0 = (J^{-1} - I_2) H_0$. By (\ref{Pot-J}), we see that $J^{-1} - I_2  = O(e^{-c|x|})$ where the positive constant $c$ is given by $c = \min (\kappa_-, - \kappa_+) > 0$. We thus define
$$
  K_0 = e^{\frac{c}{2}|x|} (J^{-1} - I_2) H_0, \quad K = K^* = e^{-\frac{c}{2}|x|},
$$
in such a way that
$$
  H - H_0 = K^* K_0.
$$
We now prove that the operators $K_0$ and $K$ are $H_0$ and $H$-smooth respectively on any compact intervals $I$. We only give the proof for $K_0$ since the proof for $K$ is identical. As above, it suffices to consider a cut-off function $\chi \in C_0^\infty(\R)$ such that $supp\,\chi$ is small enough and $\chi = 1$ on $I$. Let $s > \frac{1}{2}$. We have
\begin{equation} \label{ba3}
  \int_\R \left\| K_0 e^{-itH_0} \chi(H_0) u \right\|^2 dt = \int_\R \left\| e^{\frac{c}{2}|x|} (J^{-1} - I_2) \l x \r^{s} \l x \r^{-s} e^{-itH_0} H_0 \chi(H_0) u \right\|^2 dt.
\end{equation}
We denote $\tilde{chi}(H_0) = H_0 \chi(H_0)$. Then the function $\tilde{\chi} \in C_0^\infty(\R)$ and $supp\, \tilde{\chi}$ is still small. Using (\ref{Pot-J}), we also have
$$
  \| e^{\frac{c}{2}|x|} (J^{-1} - I_2) \l x \r^{s} \| \leq C.
$$
Hence, from (\ref{ba2}), we obtain
\begin{eqnarray*}
  \int_\R \left\| K_0 e^{-itH_0} \chi(H_0) u \right\|^2 dt & \leq & C \int_\R \left\| \l x \r^{-s} e^{-itH_0} \tilde{\chi}(H_0) u \right\|^2 dt, \\
  & \leq & C \| u \|^2,
\end{eqnarray*}
which proves the assertion. \\

\noindent \textbf{Step 3. (Conclusion)}: The existence of the wave operators in Theorem \ref{WO-H-H0}
$$
  W^\pm(H,H_0,I_2) := s-\lim_{t \to \pm \infty} e^{itH} e^{-itH_0},
$$
$$
  (W^\pm(H,H_0,I_2))^* = W^\pm(H_0,H,J) := s-\lim_{t \to \pm \infty} e^{itH_0} J e^{-itH},
$$
follows then from Steps 1 and 2, Theorem \ref{WO-Hsmooth} (which gives the local existence on an interval of energy $I$) and finally, an easy density argument (for the global existence).

\end{document}